\newif\iffullversion
\fullversiontrue
% For arXiv
\pdfoutput=1

\newif\ifdraft
\newif\ifanonymous

\drafttrue
%\anonymoustrue

%\iffullversion\else
%\anonymoustrue
%\fi

\documentclass{article}
\usepackage[utf8]{inputenc}
\usepackage[T1]{fontenc}
\usepackage{amsmath,amssymb,paralist,stmaryrd,mathpartir,mathtools,tabularx,xcolor,centernot,tabto}
\usepackage{tocloft}

\usepackage[backend=bibtex,maxbibnames=100]{biblatex}

\usepackage{makeidx}\makeindex
\usepackage[font=footnotesize,labelfont=bf]{caption}
\usepackage{graphicx}
\usepackage[section]{placeins}
\usepackage{idxlayout}
\usepackage{yfonts}
\usepackage{listings}
\usepackage{longtable}
\usepackage{declmath}
\usepackage{version}
\usepackage{xr}
\usepackage{relsize}
\usepackage[framemethod=tikz]{mdframed}
% don't load geometry if preview is to be loaeded
% (the ifx-statement detects \PassOptionsToPackage{preview}{...})
\expandafter\ifx\csname opt@preview.sty\endcsname\relax
\usepackage[a4paper,margin=1in]{geometry}
\fi
%\usepackage[users=Q]{uchanges}
%\ChangesSetUserColor[Dominique]{Q}{red}
%\ChangesSetUserColor[Reviewer]{R}{invalid}
\usepackage{circuits}
\usepackage[pdfborderstyle={/S/U/W 0.3}]{hyperref}
\usepackage[delaytext,envcountsame,multiautoref]{latexhacks}

\usetikzlibrary{decorations.pathreplacing}

% \iffullversion
% \newcommand\fullshort[2]{#1}
% \includeversion{fullversion}
% \excludeversion{shortversion}
% \else
% \newcommand\fullshort[2]{#2}
% \excludeversion{fullversion}
% \includeversion{shortversion}
% \fi

% \newcommand\fullonly[1]{\fullshort{#1}{}}
% \newcommand\shortonly[1]{\fullshort{}{#1}}

% \newcommand\fullonlypar{\fullonly\par}

\newcommand\mytilde{\raise.17ex\hbox{$\scriptstyle\sim$}}

\makeatletter

%\hyphenation{proof-script}

\input{macros}

\sloppy

\DeclareUnicodeCharacter{212D}{\textswab C}
\DeclareUnicodeCharacter{1D529}{\textswab l}
\DeclareUnicodeCharacter{1D51E}{\textswab a}
\DeclareUnicodeCharacter{2260}{\ensuremath{\neq}}
\DeclareUnicodeCharacter{27E6}{\ensuremath{\llbracket}}
\DeclareUnicodeCharacter{27E7}{\ensuremath{\rrbracket}}
\DeclareUnicodeCharacter{2297}{\ensuremath{\otimes}}
\DeclareUnicodeCharacter{22C5}{\ensuremath{\cdot}}
\DeclareUnicodeCharacter{2264}{\ensuremath{\leq}}
\DeclareUnicodeCharacter{2293}{\ensuremath{\sqcap}}

\hyphenation{Isa-belle}
\hyphenation{Isa-belle/HOL}

\lstset{
  basicstyle=\small\ttfamily,
  literate={~} \mytilde {1},
  breaklines=true,
  escapeinside=\^^A\^^B,
}

\makeatletter
\DeclareCaptionFormat{myformat}{#1#2#3\leavevmode\leaders\hrule height .1pt\hfill\kern\z@}
\captionsetup[figure]{format=myformat}

\newcommand\lemmaautorefname{Lemma}

\mdfdefinestyle{theoremstyle}{%
  linecolor=black,
  backgroundcolor=black!10!white,
  frametitlebelowskip=0pt,
  innerleftmargin=5pt,
  innerrightmargin=5pt,
}

\mdtheorem[style=theoremstyle]{definition}{Definition}
\mdtheorem[style=theoremstyle]{lemma}{Lemma}
\mdtheorem[style=theoremstyle]{corollary}{Corollary}

\newtheorem{theorem}{Theorem}
\newtheorem{conjecture}{Conjecture}

\newcounter{claimstep}
\newtheorem{claim}{Claim}[claimstep]

\AtBeginDocument{
  \ifx\PreviewMacro\undefined\else
  \PreviewEnvironment*{figure}
  \PreviewMacro*\footnote
  \PreviewMacro[!]\boardpic
  \PreviewMacro*\lstinputlisting
  \PreviewMacro[{[]{}{}{}}]\RULE
  \fi}

\bibliography{language2}

\newcommand\PROOFREAD[1]{\TODOQ{proofread\ifx!#1!\else #1\fi}}

\ifanonymous
\newcommand\anonymouserror{{\Huge\bfseries ANONYMOUSXXX}\error}
\newcommand\nonanonymous[1]{}
\else
\newcommand\anonymouserror{}
\newcommand\nonanonymous[1]{#1}
\fi

\newcommand\giturl[1]{\anonymouserror
  \url{https://raw.githubusercontent.com/dominique-unruh/qrhl-tool/master/#1}}

\newenvironment{citemize}{\begin{itemize}}{\end{itemize}}

\allowdisplaybreaks

%%% Local Variables:
%%% mode: latex
%%% TeX-master: "qrhl"
%%% End:

\begin{document}

\title{Quantum Relational Hoare Logic}
\ifanonymous
\author{}
\else
\author{Dominique Unruh\\\small University of Tartu}
\fi

\maketitle

\thispagestyle{empty}

\begin{abstract}
  We present a logic for reasoning about pairs of
  interactive quantum programs -- quantum relational Hoare logic
  (qRHL). This logic follows the spirit of probabilistic relational
  Hoare logic (Barthe et al.~2009) and allows us to formulate how the
  outputs of two quantum programs relate given the relationship of
  their inputs. Probabilistic RHL was used extensively for
  computer-verified security proofs of classical cryptographic
  protocols. Since pRHL is not suitable for analyzing quantum
  cryptography, we present qRHL as a replacement, suitable for the
  security analysis of post-quantum cryptography and quantum
  protocols.  The design of qRHL poses some challenges unique to the
  quantum setting, e.g., the definition of equality on quantum
  registers. Finally, we implemented a tool for verifying proofs in
  qRHL and developed several example security proofs in it.
\end{abstract}

\tableofcontents

\section{Introduction}
\label{sec:intro}

Handwritten security proofs of cryptographic protocols are inherently
error prone -- mistakes may and most likely will happen, and they can
stay unnoticed for years. (The most striking examples are probably the
OAEP construction \cite{BeRo_94} whose proof went through a number of
fixes \cite{JC:Shoup02,C:FOPS01,JC:FOPS04} until it was finally
formally proven in \cite{RSA:BGLZ11} after years of industrial use, and the
PRF/PRP switching lemma which was a standard textbook example for many
years before it was shown that the standard proof is flawed
\cite{EC:BelRog06}.) This undermines the purpose of said proofs. In
order to enable high trust in our cryptographic protocols,
computer-aided verification is the method of choice. Methods for the
security analysis of protocols have a long history in the formal
methods community, starting with the seminal paper by Dolev and Yao
\cite{dolev-yao}. However, most methods were designed using so-called
symbolic abstractions that abstract away from the details of the
cryptographic algorithms by treating messages as terms with
well-defined adversarial deduction rules. There is no formal
justification for this (except in cases where ``computational
soundness'' proofs exist, see \cite{AbRo_02} and follow-up work). To
avoid this heuristic, or for reasoning about the basic cryptographic
constructions themselves (e.g., when designing an encryption scheme),
we need verification methods that allow us to reason about
cryptographic protocols in a fine-grained way, treating computations
as what they are: time-bounded probabilistic computations (we call
this the ``computational model'').

In recent years, a number of logics and tools have been presented for
reasoning about cryptographic protocols in the computational
model. For example, the CryptoVerif tool \cite{cryptoverif,cryptoverif-web} employs a
heuristic search using special rewriting rules to simplify a protocol
into a trivial one. The CertiCrypt tool \cite{certicrypt,certicrypt-web} and its
easier to use EasyCrypt successor \cite{easycrypt,fosad-easycrypt} allow us to reason
about pairs of programs (that represent executions of cryptographic
protocols) using  ``probabilistic relational Hoare
logic'' (pRHL). Proofs in the EasyCrypt tool are developed manually,
using backwards reasoning with a set of special tactics for reasoning
about pRHL judgments.

All those tools only consider classical cryptography. In recent years,
interest in quantum cryptography, both in research and in industry,
has surged. Quantum cryptography is concerned with the development of
protocols that are secure even against attacks using quantum computers,
to ensure that we can securely use cryptography in the future
(post-quantum cryptography). And quantum cryptography is concerned
with the design of quantum protocols that make active use of quantum
mechanics in their communication, such as quantum key distribution
protocols (an area pioneered by \cite{Bennett:1984:Quantum}).
While these are two different concerns involving different technologies,
  there is a large overlap at least in the analysis and proof techniques.

Can we formalize quantum cryptographic proofs using the existing
tools? For quantum protocols, the answer is clearly no, because the
tools do not even allow us to encode the quantum operations performed
in the protocol. Yet, when it comes to post-quantum security, the
protocols themselves are purely classical and can be modeled in tools
such as EasyCrypt, only the adversaries (which stay abstract anyway)
cannot. Thus, even though existing tools were not designed with
post-quantum security in mind, it is conceivable that the logics
underlying those tools happen to be sound in the quantum setting.  The
logic might happen to only contain deduction rules that also happen to
be valid with respect to quantum adversaries. Unfortunately, at least
for the EasyCrypt tool, we show that this is not the case. We proved  in EasyCrypt
the security of a simple protocol (an execution of the CHSH game
\cite{chsh}) which we know not to be secure in the
quantum setting. (For more details, see below in \autoref{sec:intro.qrhl}.)

Thus the question arises: can we design tools for the verification of
quantum cryptography? (Both for post-quantum security proofs, and for
actual quantum protocols.) What underlying logics should we use? In
this paper, we set out to answer that question.

Our main contribution is the development of a logic for relating
quantum programs, quantum relational Hoare logic (qRHL). qRHL
generalizes the probabilistic logic pRHL used by EasyCrypt and
CertiCrypt, and allows us to reason about the relationship between
quantum programs. We have implemented a theorem prover for qRHL
programs, and have analyzed quantum teleportation and a simple
post-quantum secure protocol with it, to demonstrate the applicability
of qRHL to the formal verification of quantum cryptography.

We now describe the contributions in more detail. For this, we first
review classical RHL. (Followed by an overview over quantum RHL,
  of our tools and experiments, related work, and the outline for the
  rest of the paper.)

\subsection{Classical RHL}

\paragraph{Deterministic RHL.} The simplest case of RHL is deterministic
RHL \cite{benton04relational}. Deterministic RHL allows us to relate two
deterministic programs. In our context, a program is always a program
in some imperative programming language, operating on a finite set of
variables. Consider two programs $\bc_1$ and $\bc_2$, operating on
variables $\xx,\yy,\dots$  An RHL judgment is an expression
$\rhl A{\bc_1}{\bc_2}B$ where $A$ and $B$ are Boolean predicates involving the
variables from $\bc_1$ and $\bc_2$ (tagged with index $1$ and $2$,
respectively). For example $A$ might be $\xx_1=\xx_2$, and $B$ might
be $\xx_1>\xx_2$. Then $\rhl A{\bc_1}{\bc_2}B$ has the following
meaning: If the initial memories $m_1,m_2$ of $\bc_1$ and $\bc_2$ satisfy the predicate $A$, then the memories $m_1',m_2'$ after execution of $\bc_1$ and $\bc_2$, respectively, satisfy $B$.
\begin{definition}[Deterministic RHL, informal]\label{def:drhl.informal}
  $\rhl A{\bc_1}{\bc_2}B$ holds iff:
  For any memories $m_1,m_2$ such that $\denotee A{m_1m_2}=\true$, we
  have $\denotee B{m_1'm_2'}=\true$ where
  $m_1':=\denotc{\bc_1}(m_1)$ and $m_2':=\denotc{\bc_2}(m_2)$.
\end{definition}
In this definition, a memory $m_i$ is simply an assignment from
variables $\xx,\yy,\dots$ to their contents.  And
$\denotee{A}{\memuni{m_1}{m_2}}$ is the evaluation of the predicate
$A$ with the contents of memory $m_1$ assigned to the variables
$\xx_1,\yy_1,\dots$ and the contents of $m_2$ to variables
$\xx_2,\yy_2,\dots$ (That is, in a predicate $A$, we add indices
$1,2$ to the variables $\xx,\yy$ to distinguish the variables of
$\bc_1$ and $\bc_2$.)  And $\denotc{\bc_1}(m_1)$ represents the
content of the memory after running $\bc_1$ on initial memory
$m_1$. (That is, $\denotc{\bc_1}$ is the denotational semantics of a
deterministic programming language.)

For example,
$\rhl{\xx_1=\xx_2}{\assign\xx{\xx+1}}{\assign\xx{\xx-1}}{\xx_1>\xx_2}$
says that if the variable $\xx$ has the same value in memory $m_1$ and
$m_2$, and we increase $\xx$ in $m_1$ and decrease $\xx$ in $m_2$
(resulting in memories $m_1',m_2'$, respectively), then $\xx$ in $m_1'$
is greater than $\xx$ in $m_2'$.

Given this definition of RHL, we can then prove a suitable collection
of sound rules such as, e.g.,
{\smaller\inferrule{A\implies
  A'\\\rhl{A'}{\bc_1}{\bc_2}{B}}{\rhl{A}{\bc_1}{\bc_2}{B}}}.
These rules can then be used to prove complex relationships between
deterministic programs.

\paragraph{Probabilistic RHL.} Deterministic RHL as described in the
previous paragraph can reason only about deterministic programs. At
first glance, it may seem that it is easy to generalize RHL to the
probabilistic case (pRHL). However, there are some subtleties in the
definition of pRHL. Consider the program $\bc$ that assigns a
uniformly random bit to $\xx$. In this case, is the following pRHL
judgment true? $\rhl{\true}\bc\bc{\xx_1=\xx_2}$. That is, if we have
two programs that pick a random bit, should pRHL say that after the
execution, the bit is the same? At the first glace, one might expect
that the answer should be ``no'' because $\xx_1$ and $\xx_2$ are
chosen independently and thus equal only with probability
$\frac12$. However, if we define pRHL that way, we cannot express the
fact that the two programs $\bc$ do the same. We might argue that
$\xx_1=\xx_2$ should just mean that $\xx$ is chosen according to the
same distribution in both programs. The formalization of pRHL from
\cite{certicrypt} takes exactly this approach. They define pRHL as follows:
\begin{definition}[Probabilistic RHL, informal]\label{def:prhl.informal}
  $\rhl A{\bc_1}{\bc_2}B$ holds iff: For any memories $m_1,m_2$ such
  that $\denotee A{m_1m_2}=\true$, there exists a distribution $\mu'$
  on pairs of memories such that: $\denotee B{m_1'm_2'}=\true$ for all
  $(m_1',m_2')\in\suppd\mu'$, and $\mu'_1=\denotc{\bc_1}(m_1)$ and
  $\mu'_2=\denotc{\bc_2}(m_2)$.
\end{definition}
Here $\suppd\mu'$ is the support of $\mu'$, i.e., the set of all
$(m_1',m_2')$ that have non-zero probability according to $\mu'$. And
$\denotc{\bc_1}(m_1)$ is the probability distribution of the memory
after executing $\bc_1$ with initial memory $m_1$. And $\mu'_1$ and
$\mu'_2$ refer to the marginal distributions of $\mu'$ that return the
first and second memory, respectively.

What does this definition say, intuitively? It requires that there is
some hypothetical joint distribution $\mu'$ on pairs of memories (that
satisfy the postcondition $B$) such that the final memories of $\bc_1$
and $\bc_2$ are distributed according to the marginals of $\mu'$. In
other words, it requires that there is some way of choosing the
randomness of $\bc_1$ and $\bc_2$ in a joint way such that $B$ becomes
true. For example, when $\bc$ picks a uniformly random bit $\xx$, we
can chose $\mu'$ as the uniform distribution on $\{(0,0),(1,1)\}$
(where we write $0$ and $1$ for the memory that assigns $0$ or $1$,
respectively, to $\xx$). Then the two marginal distributions are the
uniform distribution, thus $\mu'_1=\denotc{\bc}(m_1)$ and
$\mu'_2=\denotc{\bc}(m_2)$, and we have that all
$(m_1,m_2)\in\suppd\mu'$ satisfy $B:=(\xx_1=\xx_2)$. This implies that
$\rhl{\true}\bc\bc{\xx_1=\xx_2}$. A curiosity is that the following
judgment is also true: $\rhl{\true}\bc\bc{\xx_1\neq\xx_2}$ (choose
$\mu'$ as the uniform distribution on $\{(0,1),(1,0)\}$). In
particular, the following rule does \emph{not} hold for pRHL (but it
does hold for deterministic RHL): {\smaller\inferrule{\rhl
  A{\bc_1}{\bc_2}{B_1}\\\rhl A{\bc_1}{\bc_2}{B_2}}{\rhl
  A{\bc_1}{\bc_2}{B_1\land B_2}}}.  Reference \cite{certicrypt} derives a number of
useful rules for pRHL. Using these rules, one can derive complex
relationships between probabilistic programs. For example, the
successful EasyCrypt tool \cite{easycrypt,fosad-easycrypt} uses these to prove the security of
various cryptographic schemes. 

pRHL is well-suited for formalizing cryptographic security proofs
because those proofs usually proceed by transforming the initial
protocol (formulated as a program, called a ``game'') in a sequence of
small steps to a trivial protocol, and the relationship between any
two of those games can be analyzed using pRHL.

\subsection{Quantum RHL}
\label{sec:intro.qrhl}

\paragraph{The need for a new logic.}
To illustrate that a new logic is needed, even if we consider only
classical protocols (but quantum adversaries, i.e., post-quantum
cryptography), we analyzed the CHSH game \cite{chsh,cleve04nonlocal}
in EasyCrypt. In this game, two potentially malicious parties Alice
and Bob may share a state but are not allowed to communicate. Alice
and Bob get uniformly random bits $x,y$,
respectively, and have to respond with bits $a,b$,
respectively. They win iff $a\oplus b=x\cdot y$.
In the classical setting, the probability of winning is at most
$\frac34$,
while in the quantum setting, it is known that the maximum winning
probability is $\cos^2(\pi/8)\approx 0.85$
\cite{tsirelson,cleve04nonlocal}.  The CHSH game can be seen as a
particularly simple case of a multi-prover proof system, see
\cite{cleve04nonlocal}.

In EasyCrypt, we show that no adversary (consisting of Alice and Bob)
can win the CHSH game with probability greater than $\frac34$.
Since a quantum adversary would be able to get a larger success
probability, this shows that the logic employed by EasyCrypt indeed
assumes that adversaries are classical. And if we use
EasyCrypt to reason about quantum adversaries, it will give unsound
results.

We stress that this is not a flaw in EasyCrypt. It was never claimed
that EasyCrypt models quantum adversaries. However, since the logic
used by EasyCrypt is not known to be complete, it could have been that
all rules hold ``by accident'' also in the quantum setting. The CHSH
example shows that this is not the case.

The code of the EasyCrypt development can be found at \cite{chsh-ec}.

In addition, we performed a similar proof (also of the the CHSH game)
in the CryptHOL framework \cite{crypthol}, with analogous results. The
resulting Isabelle theory
can be found at \cite{Chsh.thy}.

We would like to address one common fallacy in this
context: to prove post-quantum security, we could just use existing
tools (and proofs) and strengthen them simply by limiting the hardness
assumptions to quantum-safe hardness assumptions (e.g., we do not use
the hardness of factoring numbers, but, for example, the hardness of
certain lattice problems as our underlying assumption).
Unfortunately, this does not work.  The underlying argument ``if $X$
is proven secure classically based on assumption $Y$,
and $Y$
is quantum-safe, then $X$
is post-quantum secure'' is a common fallacy.  Examples of protocols
with classical security proofs that are insecure against quantum
adversaries, even when instantiated with quantum hard assumptions, can
be found, e.g., in \cite{qpok-imposs}. Also, the CHSH game above
constitutes a degenerate counterexample to this fallacy: It can be
seen as a multiplayer cryptographic protocol\footnote{Where security
  means that the adversary cannot win with probability greater $3/4$.}
that needs no assumptions, thus ``hardening'' the assumptions would
not change anything, so the above fallacy would tell us that CHSH
itself is quantum-secure. Which it is not.

\paragraph{Quantum programs.} In order to develop a quantum variant of
pRHL, we first need to fix the language used for describing quantum
programs. We make the following design choices: Our language is a
simple imperative programming language (similar to the language pWhile
underlying pRHL) with both classical and quantum variables.  We model
only classical control (i.e., conditions in if- and while-statements
depend only on classical variables). This is sufficient for our use
case of modeling quantum cryptography, since protocols in that field
are typically described by semi-formal pseudocode with classical
control, and with mixed classical and quantum operations.   The language supports classical
random sampling,\footnote{Although random sampling can be encoded by a
  combination of unitaries and measurements, random sampling is so
  common in cryptography that we prefer to include it explicitly.}
application of unitaries, projective measurements. For simplicity, we
do not model procedure calls. However, we expect that adding
procedures would not lead to any difficulties. Since we do not include any
advanced language features, the semantics of the language are easily
defined: The denotational semantics $\denotc\bc$ of a quantum program
$\bc$ is described by a function that maps quantum states (the state
of the memory before execution) to quantum states (the state of the
memory after execution). Here quantum states are described by density
operators (mixed states) since the language contains probabilistic
elements (random sampling, measurement outcomes); density operators
intuitively model probability distributions of quantum states.
The details of the language and its semantics are given in 
\autoref{sec:lang}.

\paragraph{Quantum RHL -- predicates.} To extend pRHL to the quantum setting, we
need to give meaning to judgments of the form $\rhl A\bc\bd B$ where
$\bc,\bd$ are quantum programs. For this, we need to answer two
questions: How are the predicates (pre- and post-conditions) $A,B$
formalized (i.e., what mathematical structure describes $A,B$)? And
what is the semantics of the overall qRHL judgment $\rhl A\bc\bd B$?

In classical RHL (both the deterministic and pRHL), predicates are
Boolean predicates on pairs of memories $m_1,m_2$ (cf.~the notation introduced
after \autoref{def:drhl.informal}). Equivalently, we can see a
predicate~$A$ as a set of pairs of memories, and a distribution~$\mu'$ over
pairs of memories (cf.~\autoref{def:prhl.informal}) satisfies~$A$ iff
$\suppd\mu'\subseteq A$.

For formalizing quantum predicates $A$, there are two natural
choices:
\begin{compactenum}[(a)]
\item\label{mixed-set} $A$
  could be represented as a set of density
  operators (mixed states)\footnote{%
    Density operators are a formalism for
      describing quantum states where a quantum state is represented
      by a positive Hermitian operator. This formalism allows us to model
      probabilistic mixtures of quantum states (roughly speaking, probability distributions
      over quantum states). See, e.g., \cite[Chapter 2.4]{nielsenchuang-10year}.
    } (possibly subject to some natural closure
  properties, e.g., closure under multiplication with a scalar). And
  then $\rho$ satisfies $A$ iff $\rho\in A$.
\item\label{pure-set} $A$
  could be represented as a set of pure states\footnote{%
    Pure states are states that are represented as vectors in a
      Hilbert space. This more elementary formalism cannot represent
      probability distributions. Pure states can be seen as special
      cases of density operators (of rank 1).
      See, e.g., \cite[Chapter 2.2]{nielsenchuang-10year}.} (possibly subject to some
  natural closure properties, e.g., a subspace of the Hilbert space of
  pure quantum states). Then $\rho$
  satisfies $A$
  iff $\suppo\rho\subseteq A$.
  Here $\suppo\rho$
  is the support of $\rho$.
  (Intuitively, $\rho$
  is a probability distribution over states in $\suppo\rho$.
  For a formal definition see the preliminaries.)
\end{compactenum}

Obviously, approach \eqref{mixed-set} is the more general
approach. However, in this work, we will model predicates as subspaces
of pure states, following approach \eqref{pure-set}.
The reason for this is that we would like two properties to be satisfied:
\begin{compactitem}
\item If $\rho_1$ and $\rho_2$ both satisfy $A$, then $p_1\rho_1+p_2\rho_2$ satisfies $A$.

  For example, consider a program $\bc$
  that runs $\bc_1$
  with probability $p_1$,
  and $\bc_2$
  with probability $p_2$.
  Assume that $\bc_1$
  and $\bc_2$
  both have final states $\rho_1,\rho_2$
  that satisfy $A$.
  Then we would expect that $\bc$
  also has a final state that satisfies $A$.
  This is guaranteed if $p_1\rho_1+p_2\rho_2$
  satisfies $A$.
\item If $p_1\rho_1+p_2\rho_2$
  satisfies $A$ (for $p_1,p_2>0$), then $\rho_1$ and $\rho_2$ both satisfy $A$.

  For example, consider a program $\langif b\bc\bd$,
  and an initial state $\rho=p_1\rho_1+p_2\rho_2$
  where $\rho_1$
  satisfies the condition $b$,
  and $\rho_2$
  satisfies $\lnot b$.
  Let $A$
  be some predicate that $\rho$
  satisfies.  (E.g., the $\rho_1,\rho_2$
  could differ in the value of a classical variable that the predicate
  $A$
  does not even talk about.)  Then we would expect that after
  branching depending on $b$,
  at the beginning of the execution of $\bc$
  and $\bd$,
  respectively, $A$
  is still satisfied. That is, $\rho_1$
  and $\rho_2$
  still satisfy $A$.
  (This is the basis of, for example, the \refrule{If1} in our logic.)
\end{compactitem}
The only sets that satisfy both conditions are those represented by a
subspace (at least in the finite dimensional case, see
\autoref{sec:pred.subspace} for a proof).

That is, a predicate $A$ is a subspace of the Hilbert space of
quantum states, and each $\psi\in A$ is a pure quantum state over two
memories (i.e., a superposition of states $\basis{}{m_1,m_2}$ where
$m_1,m_2$ are classical memories that assign classical values to all
variables of $\bc$ and $\bd$, respectively).

The precise definition of predicates will look somewhat different
because of a special treatment of classical variables in our
predicates. This is, however, just for notational convenience and will
be explained in \autoref{sec:predicates}.

\medskip

Note that our reasoning above introduced one implicit assumption:
Namely that we want to model predicates for which ``$\rho$
satisfies $A$''
is either true or false.  Instead, it would be possible to have
predicates that are fulfilled to a certain amount (say between $0$
and $1$).
For example, the quantum predicates modeled by D'Hondt and Panangaden
\cite{dhondt06weakest} are Hermitian operators, and a state $\rho$
satisfies a predicate $A$
to the amount $\tr A\rho$.
Hoare judgments then encode the fact that the postcondition is
satisfied at least as much as the precondition. We have opted not to
pursue this approach in this paper (in particular because the
resulting predicates are harder to understand, and we would also
deviate more strongly from the design choices made in pRHL and
EasyCrypt), but it would be interesting future work to explore such
predicates in the relational setting.

\medskip

To actually work with predicates, we introduce some syntactic
sugar. For example, $\CL{b}$
is the space of all states where the classical variables satisfy the
predicate $b$.
(E.g., $\CL{\xx_1=\xx_2+1}$
relates two classical variables.) This allows us to express all
classical predicates with minimal overhead. And $A\cap B$
is simply the intersection of two predicates, which is the analogue to
classical conjunction $\land$.

\paragraph{Quantum RHL -- judgments.} We now explain our definition
of qRHL judgments. The first attempt would be to generalize the
classical definition (\autoref{def:prhl.informal}) to the quantum case
by replacing classical objects by their quantum counterparts. For
example, the natural quantum analogue for a probability distribution
over some set $X$ is a density operator on a Hilbert space with basis
$X$. To get this analogue, we first restate \autoref{def:prhl.informal}
in an equivalent way:
\begin{definition}[Probabilistic RHL, informal]\label{def:prhl.informal2}
  $\rhl A{\bc_1}{\bc_2}B$ holds iff: For any distribution $\mu$ on
  pairs of memories with $\suppd\mu\subseteq A$, there exists a
  distribution $\mu'$ on pairs of memories with
  $\suppd\mu'\subseteq B$ such that $\mu'_1=\denotc{\bc_1}(\mu_1)$ and
  $\mu'_2=\denotc{\bc_2}(\mu_2)$.
\end{definition}
Here -- in slight abuse of notation -- we consider the predicate $A$ as the
set of all pairs of memories that satisfy this predicate.  And
$\mu_1,\mu_2,\mu'_1,\mu'_2$ are the first and second marginals of the
distributions $\mu$ and $\mu'$, respectively.  And we lift the
denotational semantics $\denotc{\cdot}$ to distributions in a natural
way: $\denotc{\bc_1}(\mu_1)$ is the distribution of the memory after
the execution of $\bc_1$ if the memory was distributed according to
$\mu_1$ before the distribution.

In comparison with 
\autoref{def:prhl.informal},
instead of deterministic initial
memories $m_1,m_2$, we allow distributions $\mu$ of pairs of initial
memories. It is easy to see that this definition is equivalent to
\autoref{def:prhl.informal}. However, the new definition has the
advantage of being more symmetric (we use distributions both for the
input and the output), and that it is more obvious how to transfer it
to the quantum case.

If we restate this definition in the quantum case in the natural way,
we get:
\begin{definition}[Quantum RHL, first try, informal]\label{def:qrhl.first.informal}
  \symbolindexmark{\rhlfirst}%
  Let $A,B$ be quantum predicates (represented as subspaces, see above).

  $\rhlfirst A{\bc_1}{\bc_2}B$ holds iff: For any density operator $\rho$ over
  pairs of memories with $\suppo\rho\subseteq A$, there exists a
  density operator $\rho'$ over pairs of memories with
  $\suppo\rho'\subseteq B$ such that $\rho'_1=\denotc{\bc_1}(\rho_1)$ and
  $\rho'_2=\denotc{\bc_2}(\rho_2)$.
\end{definition}
Here we replace the distribution $\mu$ by a density operator
$\rho$. That is, we consider a quantum system where we have two
memories, and the system can be in arbitrary superpositions and
probabilistic mixtures of classical assignments to the variables in
these memories. (Formally, a memory is a function mapping variables
to values, and we consider a quantum system with basis
$\basis{}{m_1,m_2}$, where $m_1,m_2$ range over all such functions.)

Then $\rho_1$ denotes the density operator resulting from tracing out
the second subsystem (i.e., the result of ``destroying'' the second
memory), and analogously for $\rho_2,\rho_1',\rho_2'$.

And $\suppo\rho$
is the support of $\rho$,
i.e., it contains all pure states that the mixed state $\rho$
is a mixture of. (Formally, the support also contains linear
combinations of those pure states, but that does not make a
difference in the present case, since $A$
and $B$
are closed under linear combinations anyway.)  Note that a state
$\rho$
with marginals $\rho_1,\rho_2$ is called a \emph{quantum coupling}%
\index{quantum coupling}%
\index{coupling!quantum} of $\rho_1,\rho_2$
(see, e.g., \cite{strassen}). So the definition could be rephrased
compactly as: ``If the initial states of $\bc,\bd$
have a quantum coupling with support in $A$,
then the final states have a quantum coupling with support in $B$.''
Similarly, our other variants of qRHL can be straightforwardly stated
in terms of quantum couplings.

At first glance, \autoref{def:qrhl.first.informal} seems to be a
reasonable analogue of pRHL in the quantum setting. At least, it seems
to convey the same intuition. However, when we try to derive a
suitable set of rules for qRHL judgments, we run into trouble. For
example, we were not able to prove the following frame rule:
\[
  \RULEX{Frame\textnormal{ (simplified)}}{\rhlfirst{A}\bc\bd{B}\\
  \text{variables in $R$ are disjoint from variables in $\bc,\bd,A,B$}}
  {\rhlfirst{A\cap R}\bc\bd{B\cap R}}
\]
This rule is crucial for compositional reasoning: It tells us that any
predicate that is independent of the programs under consideration
(e.g., one that is only important for a program executed before or after)
will be preserved. (Note that we were not able to \emph{prove} that
the frame rule does not hold for \autoref{def:qrhl.first.informal},
yet we conjecture that it does not.)

There are a number of natural variations of
\autoref{def:qrhl.first.informal} that can be tried, but for most of
them, we either failed to prove the frame rule, or we run into
problems when trying to prove rules involving predicates that state
the equality of quantum variables (see below). We discuss the various
unsuccessful approaches in \autoref{sec:alt.def}, since we believe that it is
important to understand why we chose the definition in this paper.

The approach that turned out to work (in the sense that we
can derive a useful set of rules), is to use the
\autoref{def:qrhl.first.informal}, with the only difference that we
quantify over \emph{separable} states $\rho$ and $\rho'$ only. That
is, we only consider states where the memories of the two programs are
not entangled.
(We stress that we do not restrict the entanglement \emph{within} the programs,
only \emph{between} the programs. In particular, there are no restrictions on algorithms
that use entanglement.)
That is, the definition of qRHL is:
\begin{definition}[Quantum RHL, informal]\label{def:qrhl.informal}
  Let $A,B$ be quantum predicates (represented as subspaces, see above).

  $\rhl A{\bc_1}{\bc_2}B$ holds iff: For any \emph{separable} density operator $\rho$ over
  pairs of memories with $\suppo\rho\subseteq A$, there exists a
  \emph{separable} density operator $\rho'$ over pairs of memories with
  $\suppo\rho'\subseteq B$ such that $\rho'_1=\denotc{\bc_1}(\rho_1)$ and
  $\rho'_2=\denotc{\bc_2}(\rho_2)$.
\end{definition}
Note that the only difference to \autoref{def:qrhl.first.informal} is
the occurrence of the word ``separable'' (twice).

With this definition (stated formally in \autoref{sec:qrhl},
\autoref{def:rhl}), we get a suitable set of rules for reasoning about
pairs of programs (frame rule, rules for predicates involving
equality, rules for reasoning about the different statements in
quantum programs, case distinction, sequential composition, etc.)

\paragraph{Relationship to classical pRHL.}
By
considering programs that contain no quantum instructions or quantum
variables, and by considering only pre-/postconditions of the form
$\CL{a}$,
we get a definition of a classical relational Hoare logic as a special
case. As it turns out, all three variants of qRHL (\autoref{def:qrhl.first.informal},
\autoref{def:qrhl.informal}, and a third approach discussed in \autoref{sec:alt.def}) specialize to the
same variant of pRHL, namely the original pRHL
(as in \autoref{def:prhl.informal}). (Shown in \autoref{sec:rel.classical}.)

\paragraph{On coupling proofs.} \cite{barthe15relational} and
\cite{barthe17coupling} elaborate on the connection between pRHL and
so-called probabilistic couplings. It is often asked whether couplings
can also be used as the basis for qRHL. The approach of
\cite{barthe15relational} uses the fact that pRHL \emph{by definition}
guarantees that there exists a probabilistic coupling for the output
distributions of the programs that are related via a pRHL judgment,
thus pRHL can be used to derive the existence of couplings with
certain properties. This observation carries over immediately to qRHL
because qRHL by definition implies the existence of quantum
couplings. (In fact, the use cases from \cite{barthe15relational} can
be expressed in qRHL as easily. An interesting question is, of course,
whether interesting purely quantum use cases can be found where this
method is fruitful.)
So in that sense, the answer is ``yes'', qRHL is based on couplings.

On the other hand \cite{barthe17coupling} defines
a \emph{new} logic in which an explicit product program is constructed
that produces the coupling. (This is then shown to be equivalent to
the original pRHL, but having explicit access to this program has
various advantages detailed in \cite{barthe17coupling}.)
Unfortunately, this approach most likely does not carry over to the
quantum setting. In fact, one of the qRHL variants
(\autoref{def:qrhl.uniform}) that we explored is the quantum
equivalent of this approach. We show that this approach is
incompatible with the notion of a quantum equality (see
below). Details are given in \autoref{sec:alt.def}.

\paragraph{Quantum equality.} Using the approach from the previous
paragraphs, we have a working definition of qRHL judgments, and a
formalism for describing pre- and postconditions as subspaces. We can
express arbitrary conditions on classical variables (in particular, we
can express that $\xx_1=\xx_2$ for variables $\xx_1,\xx_2$ of the left
and right program, respectively). However, we cannot yet conveniently
relate quantum variables in the left and the right program. That is, we cannot
express judgments such as, for example:
\[
\rhl{\qq_1\quanteq\qq_2}\bc\Skip{\qq'_1\quanteq\qq_2}
\]
where $\bc$ swaps $\qq_1$ and $\qq_1'$. Here $\quanteq$ denotes some
(yet to be formalized) equality notion on quantum states.

Note that even if our goal is to use qRHL only for security proofs in
post-quantum cryptography we need a notion of quantum equality: In
this setting, all protocols contain only classical variables, but the
adversary will contain quantum variables. In this context, typical
predicates (almost ubiquitous in classical EasyCrypt proofs) are of
the form:
$A\land \yy_1=\yy_2$ where
$\yy_1,\yy_2$ refers to the global state of the adversary in the
left/right program, and $A$ is some predicate on program variables.
In the post-quantum setting, $\yy_1,\yy_2$ will be
quantum variables, and we need to express quantum equality.

So, how can quantum equality be defined? In other words, what is the
formal meaning of $\qq_1\quanteq\qq_2$ as a quantum predicate where
$\qq_1,\qq_2$ are quantum variables (or, in the general case,
sequences of quantum variables)? If we do not limit ourselves to
predicates that are represented by subspaces (i.e., if a predicate can
be represented as an arbitrary set of density operators), several
natural equality notions exist:
\begin{compactenum}[(a)]
\item $\rho$ satisfies $\qq_1\quanteq\qq_2$ iff
  $\rho_{\qq_1}=\rho_{\qq_2}$ where $\rho_{\qq_1},\rho_{\qq_2}$ are
  the density operators resulting from tracing out all registers
  except $\qq_1,\qq_2$, respectively.
\item\label{swap.rho} $\rho$ satisfies $\qq_1\quanteq\qq_2$ iff $\rho$ is invariant
  under swapping $\qq_1$ and $\qq_2$. That is $\rho=U\rho U^\dagger$
  where $U$ is the unitary that swaps $\qq_1$ and $\qq_2$.
\item\label{swap.psi} $\rho$
  satisfies $\qq_1\quanteq \qq_2$
  iff $\rho$
  is a mixture of states that are fixed under swapping $\qq_1$
  and $\qq_2$.
  That is, for all $\psi\in\suppo\rho$,
  $U\psi=\psi$.\footnote{This
    is not the same as \eqref{swap.rho}, even though it looks
    similar. For example, if
    $\psi:=\frac1{\sqrt2}\basis{\qq_1\qq_2}{01}-\frac1{\sqrt2}\basis{\qq_1\qq_2}{10}$,
    and $\rho:=\proj\psi:=\adj\psi\psi$
    is the corresponding density operator, then
    $U\rho U^\dagger=\rho$,
    hence $\rho$
    satisfies $\qq_1\quanteq \qq_2$
    according to \eqref{swap.rho}, but $U\psi=-\psi\neq\psi$
    and $\psi\in\suppo\rho$,
    so $\rho$
    does not satisfy $\qq_1\quanteq \qq_2$
    according to \eqref{swap.psi}.  }
\end{compactenum}
Out of these three, only \eqref{swap.psi} is a predicate in our sense
(i.e., described by a subspace).  In fact,  we conjecture that
\eqref{swap.psi} is the \emph{only} notion of equality of quantum
variables that can be represented as a subspace and that has the
following property (we give partial formal evidence for this
in \autoref{sec:qeq.unique}):
\[
  \pb\rhl
  {\CL{X_1=X_2}\cap (Q_1\quanteq Q_2)}
  \bc\bc
  {\CL{X_1=X_2}\cap (Q_1\quanteq Q_2)}
\]

Here $X,Q$
are the classical and quantum variables of $\bc$,
respectively. (And $X_1,X_2,Q_1,Q_2$
are those variables indexed with $1$
and $2$,
respectively).  This rule says that, if all variables of $\bc$ on the left and
right are the same before running $\bc$
on both sides, then they are the same afterwards, too. This
is an essential property a notion of equality should satisfy.

Thus we get the following notion of quantum equality (formulated slightly
more generally than in~\eqref{swap.psi}):
\begin{definition}[Quantum equality, informal]
  The predicate
  $(\qq^{(1)}_1\qq_1^{(2)}\dots\ \quanteq\
  \qq^{(1)}_2\qq_2^{(2)}\dots)$ is the subspace $\{\psi:U\psi=\psi\}$
  where $U$
  is the unitary that swaps registers $\qq^{(1)}_1\qq_1^{(2)}\dots$
  with the registers $\qq^{(1)}_2\qq_2^{(2)}\dots$
\end{definition}
An example for a state satisfying $\qq_1\quanteq\qq_2$
would be $\psi_1\otimes\psi_2\otimes\phi$
where $\psi_1$
and $\psi_2$ are the same state on $\qq_1$ and $\qq_2$, respectively.

This quantum equality allows us to express, for example, relations of
the form
$(\text{``some predicate on classical variables''}\cap \qq_1\quanteq\qq_2)$
which will be enough for many proofs in post-quantum cryptography.
(We give an example of such a proof in \autoref{sec:enc.example}.)
More generally, however, we might wish for more expressivity. For
example, we might wish to express that if $\qq_1\quanteq \qq_2$,
and then we apply the Hadamard $H$
to $\qq_1$,
then afterwards $\qq_1$
equals ``$H$
applied to $\qq_2$''.
We generalize the definition of $\quanteq$
to cover such cases, too. Intuitively,
$(U_1\ \qq^{(1)}_1\qq_1^{(2)}\dots\ \quanteq\ U_2\
\qq^{(1)}_2\qq_2^{(2)}\dots)$ means that the variables
$\qq^{(1)}_1\qq_1^{(2)}\dots$,
when we apply the unitary $U_1$
to them, equal the variables $\qq^{(1)}_2\qq_2^{(2)}$
when we apply the unitary $U_2$
to them. We defer the details of the definition to
\autoref{sec:quantum.eq}.
We only mention that, for example, $U_1^{-1}\psi_1\otimes U_2^{-1}\psi_2\otimes\phi$
satisfies the predicate $U_1\qq_1\quanteq U_2\qq_2$.

\paragraph{Incompleteness.} While we prove the
soundness of all rules in this paper, we do not strive to achieve
completeness. (Incompleteness of pRHL was noted in
\cite{barthe17coupling} and carries over to our setting.)  In fact, we
cannot expect completeness (except for restricted classes of programs)
since equivalence even of programs is not semidecidable.

\subsection{Tool \& experiments}

We develop an interactive theorem prover for reasoning about qRHL.  The tool consists of
three main components: a ProofGeneral \cite{proofgeneral} frontend,
the core tool written in Scala, and an Isabelle/HOL \cite{isabelle}
backend for solving subgoals in the ambient logic (i.e., subgoals that
do not contain qRHL judgments). The ProofGeneral frontend eases the
interactive development of proofs. The Isabelle backend can 
also be accessed directly (via accompanying Isabelle theory files) to prove
verification conditions that are beyond the power of our core tool.

We stress that although we use Isabelle/HOL as a backend, this does
not mean that our tool is an LCF-style theorem prover (i.e., one that
breaks down all proofs to elementary mathematical proof steps).  All
tactics in the tool, and many of the simplification rules in
Isabelle are axiomatized (and backed by the proofs in this
paper).

We developed a number of examples in our tool:
\begin{citemize}
\item A security proof that the encryption scheme defined by
  $\mathrm{enc}(k,m):=m\oplus G(k)$
  is ROR-OT-CPA secure.\footnote{IND-OT-CPA is indistinguishability
      under chosen plaintext attacks (e.g., \cite[Def. 3.22]{katz2014introduction}) for one-time
      encryption.  ROR-OT-CPA is a different formulation of the
      same property.}
  (Where $G$ is a pseudorandom generator.)
\item A security proof that the same encryption scheme is IND-OT-CPA
  secure. These two proofs are typical examples of reasoning in
  post-quantum security where quantum variables occur only inside
  the adversary, and where we maintain the invariant $Q_1\quanteq Q_2$
  where $Q_1,Q_2$ are all quantum variables.
\item A simple example illustrating the interplay of quantum equality
  and the applications of unitaries.
\item A proof that quantum teleportation works.  That is, we show
  $\rhl{\qq_1=\qq_2}{\bc_{\mathit{teleport}}}\Skip{\qq'_1=\qq_2}$
  where $\bc_{\mathit{teleport}}$
  is the program that teleports a qubit from $\qq$
  to $\qq'$.
  This proof involves reasoning about quantum equality, measurements,
  unitary operations, initializations, as well as interaction between
  our tool and Isabelle/HOL for non-trivial subgoals.
\end{citemize}

See the manual included with the tool for a description of the architecture of the
tool, documentation of its use, and details on the examples.
The source code is published on GitHub \cite{github-source}, 
a binary distribution is available at \cite{qrhl-binary}.
Note that the tool currently (version 0.3) only supports the rules \rulerefx{Seq},
\rulerefx{Equal}, \rulerefx{Conseq}, \rulerefx{QrhlElimEq}, \rulerefx{Skip},
\rulerefx{Assign1}\textsc{/2},
\rulerefx{Sample1}\textsc{/2},
\rulerefx{JointSample},
\rulerefx{Case},
\rulerefx{QApply1}\textsc{/2},
\rulerefx{QInit1}\textsc{/2},
\rulerefx{Measure1}\textsc{/2}.

\paragraph{On advanced cryptographic proofs.}
 The simple examples presented above are, of course, only
preliminary indications of the usability of qRHL for quantum
cryptography. The study of complex quantum protocols (e.g., quantum
key distribution) is out of the scope of this first work. For post-quantum
cryptography (i.e., the security of classical protocols against
quantum adversaries), our examples studying an
encryption scheme indicate that derivations of qRHL judgments
involving classical programs (but quantum adversaries) are very
similar to those that would be performed in EasyCrypt (except, of
course, that the EasyCrypt tool is more mature). See the example in \autoref{sec:enc.example}.
An exception to this
are post-quantum security proofs which need to explicitly reason about
quantum properties (even in the pen-and-paper proof). Typically, these
are proofs involving rewinding (e.g., \cite{watrous-qzk,qpok}) or
quantum random oracles (e.g.,
\cite{boneh11quantumro,qro-nizk,zhandry:quantum.ibe}). As far as we know, proofs
involving rewinding have not even been modeled classically in
EasyCrypt, so it would be premature to speculate on how difficult
these would be in qRHL. Proof involving the quantum random oracle
model are currently very important, in particular in light of the NIST
competition for post-quantum cryptography.
Quantum random oracles are
challenging because the adversary has superposition access to the
oracle. \emph{Modeling} the quantum random oracle is easy in our
setting,\footnote{If $\mathbf h$
  is a classical variable of type $X\to Y$,
  then a random oracle can modeled by the program
  $\sample{\mathbf h}{\mathtt{uniform}}$
  (where the expression $\mathtt{uniform}$
  is the uniform distribution on $X\to Y$)
  for initialization, and $\Qapply{\xx,\yy}{U_{\mathbf h}}$
  (where $U_f$
  is the unitary $\basis{}{x,y}\to\basis{}{x,y\oplus f(x)}$)
  for queries.} the challenge lies in reasoning about it. So-called
history free reductions \cite{boneh11quantumro} (in which the random oracle is
replaced by a differently chosen but equally distributed function)
should be relatively easy, because in those cases the
distribution of the random oracle after replacement is the same as
before, so no quantum-specific reasoning is needed.
(We checked this by implementing a very small such proof in our tool, \texttt{random-oracle.qrhl},
where the adversary cannot distinguish between those two cases given one query to the random oracle.)
However, most post-quantum security proofs in the quantum random
oracle model need more advanced techniques, such as, e.g.,
semi-constant distributions \cite{zhandry:quantum.ibe}, one-way-to-hiding lemmas
\cite{qtc-jacm}, oracle indistinguishability \cite{zhandry12random}, etc. Whether
(and how) those techniques can be incorporated into qRHL-based proofs,
or whether the logic will need to be extended to accommodate them,
will be part of our future work.

\subsection{Related work}
Frameworks for verification of cryptographic protocols (in the
so-called computational model) include CryptoVerif \cite{cryptoverif},
CertiCrypt \cite{certicrypt}, EasyCrypt \cite{easycrypt}, FCF
\cite{FCF}, CryptHOL \cite{crypthol}, and Verypto
\cite{verypto}. However, all of these only target classical
cryptography and do not support quantum programs/protocols.

Hoare logics and weakest precondition calculi for quantum programs
have been presented by D'Hondt and Panangaden \cite{dhondt06weakest},
Chadha, Mateus and Sernadas \cite{chadha06reasoning}, Feng, Duan, Ji,
and Ying \cite{feng07proof}, Ying \cite{ying12floyd},
and Kakutani \cite{kakutani09logic}. However, these calculi do not allow us to reason about
the relation between different programs.

Those quantum Hoare logics fall into two flavours:
\cite{dhondt06weakest,feng07proof,ying12floyd} take a semantic
approach in which a pre-/postcondition is an arbitrary subspace (or
more generally, Hermitian operator to model expectations), while
\cite{chadha06reasoning,kakutani09logic} represent pre-/postcondition
as terms of a specific (fixed) structure. Our work follows the
semantic approach.

In independent work, Zhou, Ying, Yu, and Ying \cite{strassen} study
quantum couplings, and also mention (in passing) the applicability of
quantum couplings to quantum relational Hoare logics. Using their
notions of couplings, the resulting definition would probably be very
similar to \autoref{def:qrhl.first.informal} (which had problems with
the frame rule, see above). Unfortunately, their results do not seem
to give a method for proving the frame rule, and their Quantum
Strassen Theorem does not seem to work for quantum couplings with
separable states (as in \autoref{def:qrhl.informal}).

Existing approaches for computer-aided verification of the equivalence of quantum programs
focus either on the explicit calculation of the quantum state
resulting from an execution of the programs, e.g.,
\cite{ardeshir13equivalence, ardeshir14verification}, or on
bisimilarity of process calculi, e.g., \cite{kubota12application,
  kubota13automated, feng15toward}. Explicit calculation will work
only for finite-dimensional (and relatively small) systems, and is thus
inapplicable to computational cryptography (which has necessarily an
exponentially large state space).  Bisimilarity can show the exact
equivalence between two programs, but does not seem to scale well for
modeling computational cryptographic proofs (even in the classical
setting, we are not aware of any nontrivial computational security proof based
purely on bisimilarity).

Several papers verify the security of the BB84 quantum key
distribution (QKD) protocol. \cite{feng15toward, tavala11verification}
show security against a very simple, specific intercept-resend attack
(that measures all qubits in a random basis),
\cite{kubota12application, kubota13automated} verify one step of the
Shor-Preskill proof \cite{shor-preskill} for QKD. It is not clear
whether these approaches can scale to the full proof of QKD.

\subsection{Organization}
\autoref{sec:prelim} presents notation and elementary
definitions. \autoref{sec:lang} gives syntax and semantics of quantum
programs. \autoref{sec:predicates} defines what a quantum predicate is
and discusses constructions of quantum predicates (in particular
quantum equality, \autoref{sec:quantum.eq}).  \autoref{sec:qrhl}
defines quantum relational Hoare logic and derives reasoning rules
for qRHL. \autoref{sec:example} shows example derivations in our logic.
\autoref{sec:rule-proofs} contains the proofs of all rules.
\autoref{sec:alt.def} discusses some alternative approaches to
defining qRHL and their problems.  

An extended abstract of this paper appeared at POPL 2019
\cite{qrhl-popl}.

\section{Preliminaries}
\label{sec:prelim}

\paragraph{A note on notation.}
Many of the proofs in this paper are
notationally quite involved. This is partly due to the fact that there
are a number of isomorphic spaces (e.g., states over variables, states
over variables with index 1), and the isomorphisms need to be
explicitly specified to make the proofs logically consistent (there is
too much ambiguity to leave them implicit).  Similarly, some of the
rules also contain variable renamings and isomorphisms.  We recommend,
upon first reading, to ignore all these isomorphisms (specifically
$\idx1$,
$\idx2$,
$\Urename{\dots}$,
$\Erename{\dots}$),
i.e., assume that they are the identity.  Without these, the lemmas and proofs,
while not technically correct, will still contain the relevant ideas.

\paragraph{Basics.}
\symbolindexmark\setR$\setR$
are the reals, and \symbolindexmark\setC$\setC$
the complex numbers. \symbolindexmark\setRpos$\setRpos$
are the reals $\geq 0$.
\symbolindexmark\bool\symbolindexmark\true\symbolindexmark\false$\bool=\{\true,\false\}$
are the Booleans. \symbolindexmark\im$\im f$
is the image of a function (or operator). \symbolindexmark\dom$\dom f$
is the domain of $f$.
\symbolindexmark\upd$\upd fxy$
denote the function update.  Formally $\bigl(\upd fxy\bigr)(x)=y$
and $\bigl(\upd fxy\bigr)(x')=f(x')$
for $x\neq x$.
\symbolindexmark\restrict$\restrict fM$
is the restriction of $f$
to the set $M$.
\symbolindexmark\id$\id$ denotes the identity function.
\symbolindexmark\powerset$\powerset M$ is the powerset of $M$.
$\circ$ can refer both to the composition of functions or of relations
(i.e., \symbolindexmark\circrel$R\circrel R':=\{(x,z):\exists y.(x,y)\in R\land (y,z)\in R'\}$).

\paragraph{Variables.}
A \emph{program variable}%
\index{program variable}%
\index{variable!program} $\xx$
(short: variable) is an identifier annotated with a
\symbolindexmark{\typev}set $\typev \xx\neq\varnothing$,
and with a flag that determined whether the variable is
\emph{quantum}%
\index{quantum variable}%
\index{variable!quantum}%
\index{program variable!quantum} or \emph{classical}.%
\index{classical variable}%
\index{variable!classical}%
\index{program variable!classical} (In our semantics, for classical variables $\xx$ the type
$\typev\xx$
will be the set of all values a classical variable can store. Quantum
variables $\qq$ can store superpositions of values in $\typev\qq$.)

We will usually denote classical variables
with \symbolindexmark\xx\symbolindexmark\yy$\xx,\yy$
and quantum variables with \symbolindexmark\qq$\qq$.
Given a set $V$
of variables, we write \symbolindexmark\cl$\cl V$
for the classical variables in $V$
and \symbolindexmark\qu$\qu V$
for the quantum variables in $V$.

Given a set $V$ of variables, we write \symbolindexmark\types$\types V$ for the set of all
functions $f$ on $V$ with $f(\xx)\in \typev\xx$ for all $\xx\in V$. (I.e.,
the dependent product $\types V=\prod_{\xx\in V}\typev\xx$.)

Intuitively, $\types V$
is the set of all memories that assign a classical value to each variable in $V$.

For sets $X,Y$ of variables, we call $\sigma:X\to Y$ a \index{variable renaming}%
\index{renaming!variable}%
\emph{variable renaming} iff $\sigma$
is a bijection, and for all $\xx\in X$,
$\typev{\sigma(\xx)}=\typev \xx$
and $\sigma(\xx)$
is classical iff $\xx$
is classical.

Given a list $V=(\xx_1,\dots,\xx_n)$
of variables, \symbolindexmark\typel$\typel V:=\typev{\xx_1}\times\dots\times\typev{\xx_n}$.
Note that if $V$
is a list with distinct elements, and $V'$
is the set of those elements, then $\typel V$
and $\types{V'}$
are still not the same set, but their elements can be identified
canonically. Roughly speaking, for a list $V$,
the components of $m\in\typel V$
are indexed by natural numbers (and are therefore independent of the
names of the variables in $V$),
while for a set $V$,
the components of $m\in\types V$ are indexed by variable names.

For $m_1\in\types X$
and $m_2\in\types Y$
with disjoint $X,Y$,
let \symbolindexmark\memuni$\memuni{m_1m_2}$
be the union of the functions $m_1$ and $m_2$.
I.e., $(\memuni{m_1m_2})(\xx)=m_1(\xx)$
for $\xx\in X$ and $(\memuni{m_1m_2})(\xx)=m_2(\xx)$ for $\xx\in Y$.

Given disjoint sets $X,Y$
of variables, we write $XY$ for the union (instead of $X\cup Y$).

\paragraph{Distributions.}
Given a set $X$
(possibly uncountable), let $\ellone X$
denote the set of all functions $\mu:X\to\setRpos$.
%with $\sum_{x\in X}\mu(x)\leq 1$.
We call the elements of $\ellone X$
\index{distribution}\emph{distributions}.\footnote{That is, we do not
  consider distributions over arbitrary measurable spaces, only over
  discrete ones.}
% The \emph{weight}%
% \index{weight!distribution}%
% \index{distribution!weight}
% of a distribution $\mu$ is $\sum_{x\in X}\mu(x)$.
% We call $\mu$%
% \index{total!distribution}\index{distribution!total}\emph{total} if it has weight $1$.
The set of distributions with
$\sum_{x\in X}\mu(x)\leq1$ is denoted \symbolindexmark\distr$\distr{X}$
(subprobability distributions).%
\index{subprobability distribution}%
\index{distribution!subprobability}
We call a distribution
\index{total!distribution}\index{distribution!total}\emph{total} if
$\sum_{x\in X}\mu(x)=1$.
For a set $V$
of variables, let
\symbolindexmark\ellonev$\ellonev V:=\ellone{\types V}$ and
\symbolindexmark\distrv$\distrv V:=\distr{\types V}$.

For $\mu\in\ellone{X\times Y}$, let \symbolindexmark\marginal$\marginal1\mu$ be the first marginal of $\mu$, that is,
$\marginal1\mu(m_1):=\sum_{m_2\in Y}\mu(m_1,m_2)$.
And let $\marginal2\mu$ be the second marginal of $\mu$, that is,
 $\marginal2\mu(m_2):=\sum_{m_1\in X}\mu(m_1,m_2)$.

The support \symbolindexmark\suppd$\suppd\mu$
of a distribution $\mu$
is defined as $\suppd\mu:=\{x:\mu(x)>0\}$.
We write \symbolindexmark\pointdistr$\pointdistr a$
for the distribution with $\pointdistr a(a)=1$
and $\pointdistr a(b)=0$ for $a\neq b$.

% \paragraph{Relations.}
% We call a relation $R$
% \index{one-to-one!relation}\emph{one-to-one} iff for any $x$,
% there exists at most one $y$
% with $(x,y)\in R$,
% and for any $y$,
% there exists at most one $x$
% with $(x,y)\in R$. Let \symbolindexmark\leftim$\leftim R:=\{x:\exists y. (x,y)\in R\}$.
% Let \symbolindexmark\rightim$\rightim R:=\{y:\exists x. (x,y)\in R\}$.

\paragraph{Hilbert spaces.}
We assume an understanding of Hilbert spaces and of different
operators on Hilbert spaces. In the following, we only fix notation
and ambiguous terminology. We refer the reader to a standard textbook
on functional analysis, e.g.,
\cite{conway97functional}, for an introduction into the concepts used
here. A reader interested only in the finite dimensional case (i.e.,
where all variables are defined over finite domains) will find the
fact helpful that in that case, the sets $\bounded X$
and $\tracecl X$
defined below are simply the finite $\abs X\times \abs X$
matrices (with rows and columns indexed by $X$),
and $\adj A$
is the conjugate transpose of the matrix $A$.
Furthermore, to interpret the results in this paper, an understanding
of the basics of quantum computing is required, see a
textbook such as \cite{nielsenchuang-10year}, especially Chapters 1, 2, 4, and 8.

\medskip

The word \index{subspace}\emph{subspace} always refers to a topologically
closed subspace, and the word \index{subspace}\emph{basis} refers to a Hilbert basis.
The word \index{operator}\emph{operator} always refers to linear operators.

Given a set $X$
(possibly uncountable), let
$\elltwo X\subseteq\setC^X$\symbolindexmark{\elltwo}
be the Hilbert space of all functions $\psi:X\to\setC$
such that \symbolindexmark\norm$\norm\psi^2:=\sum_{x\in X}\abs{\psi(x)}^2$
exists.
(That is, $\elltwo X$ represents pure quantum states.)
Let \symbolindexmark\basis$\basis{}{x}:X\to\setC$
denote the function defined by  $\basis{}x(x):=1$,
$\basis{}{y}(x):=0$ for $y\neq x$. The vectors $\basis{}{x}$
with $x\in X$
form a basis of $\elltwo X$, the \index{basis!computational}%
\index{computational basis}\emph{computational basis}.  For
$M\subseteq \elltwo X$,
\index{span}\symbolindexmark\SPAN$\SPAN M$ denotes the smallest subspace of $\elltwo X$ containing $M$.
For subspaces $M_i$ of $\elltwo X$, we define $\sum_i M_i:=\SPAN\bigcup_i M_i$.
For a subspace $S\subseteq\elltwo X$, \symbolindexmark\orth$\orth S$ denotes the orthogonal complement.

For a set $V$
of variables, let \symbolindexmark\elltwov$\elltwov V:=\elltwo{\types V}$.
Intuitively, elements of $\elltwov V$
can be thought of as superpositions of different memories with
variables $V$, i.e., as the set of all pure quantum states
that a memory with variables $V$ can be in.
For added clarity, we write \symbolindexmark\basis$\basis Vm$
instead of $\basis{}m$
for the elements of the computational basis of $\elltwov V$.
We write \symbolindexmark\idv$\idv V$ for the identity on $\elltwov V$.

In this notation, we can define the tensor product on Hilbert spaces of
the form $\elltwov V$
as follows: For disjoint $V_1,V_2$, let \symbolindexmark\tensor$\elltwov {V_1}\tensor\elltwov {V_2}:=\elltwov{V_1V_2}$.
And  $\psi\in\elltwov{V_1}$, $\phi\in\elltwov{V_2}$, define $\psi\tensor\phi\in\elltwov{V_1V_2}$
by $(\psi\tensor\phi)(\memuni{m_1m_2}):=\psi(m_1)\phi(m_2)$.

Note that this definition of the tensor product is isomorphic to the
usual definition. However, our definition ``labels'' the factors with
variables (i.e., each variable labels a quantum register). In
particular, $\tensor$ as defined here is commutative and associative.

\paragraph{Operator spaces.}
Let \symbolindexmark\bounded$\bounded{X,Y}$
denote the set of bounded operators from $\elltwo X$
to $\elltwo Y$.
For a bounded operator $A$,
let \symbolindexmark\adj$\adj A$
denote the \index{adjoint}\emph{adjoint} of $A$.
Let $\bounded X:=\bounded{X,X}$.
For a set $V$
of variables, let \symbolindexmark\boundedv$\boundedv{V,W}:=\bounded{\types V,\types W}$,
i.e., the bounded operators from $\elltwov V$ to $\elltwov{W}$. And $\boundedv{V}:=\boundedv{V,V}$.

We write \symbolindexmark\idv$\idv V$
for the identity on $\boundedv V$.
(Note: we use the same symbol as for the identity on $\elltwov V$.)

We define \symbolindexmark\tensor$\boundedv{V_1,W_1}\tensor\boundedv{V_2,W_2}:=\boundedv{V_1V_2,W_1W_2}$,
and for $A\in\boundedv{V_1,W_1},B\in\boundedv{V_2,W_2}$, we define
$A\tensor B\in\boundedv{V_1V_2,W_1W_2}$ as
the unique bounded operator satisfying
$(A\tensor B)(\psi\tensor\phi):=A\psi\tensor B\phi$.

We say $A\leq B$ iff $B-A$ is self-adjoint and positive.

We define \symbolindexmark\boundedleq$\boundedleq{X,Y}$ as the bounded operators with operator norm $\leq 1$. And $\boundedleq{X}:=\boundedleq{X,X}$.

Let \symbolindexmark\iso$\iso{X,Y}\subseteq\bounded{X,Y}$
denote the set of all isometries from $\elltwo X$
to $\elltwo Y$,
that is, operators $U$
with $\adj UU=\id$.
(Note: we do not require $U\adj U=\id$,
i.e., we do not assume $U$ to be a unitary.) 
Let \symbolindexmark\isov$\isov{V,W}:=\iso{\types V,\types W}$.
Furthermore, $\iso X:=\iso{X,X}$ and $\isov V:=\isov{V,V}$.

Let \symbolindexmark\uni$\uni{X,Y}\subseteq\iso{X,Y}$ denote the set of all unitaries
from $\elltwo X$
to $\elltwo Y$,
that is, operators $U$
with $\adj UU=\id$ and $U\adj U=\id$.
Let \symbolindexmark\univ$\univ{V,W}:=\uni{\types V,\types W}$.
Furthermore, $\uni X:=\uni{X,X}$ and $\univ V:=\univ{V,V}$.

Let \symbolindexmark\tracecl$\tracecl{X}\subseteq\bounded{X}$
denote the set of all \index{trace-class}\emph{trace-class} operators,\footnote{Trace-class operators are operators for which
    the trace exists. In the finite dimensional case, every operator is trace-class.} and let
\symbolindexmark\tr$\tr A$
denote the \emph{trace}\index{trace} of $A$.
Let \symbolindexmark\tracepos$\tracepos X\subseteq\tracecl X$
denote the positive self-adjoint trace-class operators. (I.e.,
$A\in\tracepos X$ iff there exists $B\in\tracecl X$ with $A=\adj BB$.)
We abbreviate
\symbolindexmark\traceclv$\traceclv V:=\tracecl{\types V}$,
\symbolindexmark\traceposv$\traceposv V:=\tracepos{\types V}$.
(That is, $\traceposv V$ is the set of all \emph{mixed} quantum states
  that a memory with variables $V$ can be in.)

When we write $\sum_{i\in I}\rho_i$
for some $\rho_i\in\tracepos X$,
we do not assume a finite or countable set $I$.
$\sum_i\rho_i$
is defined as the least upper bound of $\sum_{i\in J}\rho_i$
where $J$ ranges over all finite subsets of $I$.
$\sum_i\rho_i\in\tracepos X$ exists iff $\sum_i\tr\rho_i$ exists.

A \index{projector}\emph{projector} is an operator $P\in\bounded X$
with $P^2=P=\adj P$.
(I.e., projectors always are orthogonal projectors.)  For
$\psi\in\elltwo X$,
let \symbolindexmark\proj$\proj\psi:=\psi\adj\psi$.
(Then $\proj\psi$ is a projector onto $\SPANO\psi$.)

A \emph{cq-operator}\index{cq-operator}\index{operator!cq-} is an operator
$\rho\in\traceclv V$ that can be written as
$\rho=\sum_m \proj{\basis{\cl V}{m}}\tensor\rho_m$ with
$\rho_m\in\traceclv{\qu V}$. (``cq'' stands for
``classical-quantum''.) Here $m$ ranges over $\types{\cl V}$.
(Intuitively, this means that the classical part of the memory is $m$
with probaility $\tr\rho_m$, and in this case, the quantum part of the
memory is in state $\rho_m/\tr\rho_m$.)  We write
\symbolindexmark{\traceclcq}$\traceclcq V\subseteq\traceposv V$ for the
set of all cq-operators.
And \symbolindexmark{\traceposcq}$\traceposcq V := \traceclcq V\cap\traceposv V$.

We call an operator $\rho\in\traceposv{V_1V_2}$
\emph{$(V_1,V_2)$-separable}\index{separable}
iff it can be written as $\sum_i\rho_i\tensor\rho_i'$
for some $\rho_i\in\traceposv{V_1}$
and $\rho_i'\in\traceposv{V_2}$.
We usually write just ``separable'' if $V_1,V_2$
are clear from the context.

For $\rho\in\tracepos X$,
the support \symbolindexmark\suppo$\suppo\rho$
is defined as follows: Let $P$
be the smallest projector such that $P\rho P=\rho$.
Then $\suppo\rho:=\im P$.
In particular, when $\rho=\sum_i\proj{\psi_i}$
for some $\psi_i\in\elltwo X$
(any $\rho\in\tracepos X$
can be decomposed in this way), we have
$\suppo\rho=\SPAN\{\psi_i\}_i$.

A \emph{superoperator}\index{superoperator} from $V$ to $W$ is a completely positive
trace-decreasing\footnote{I.e., $\tr\calE(\rho)\leq\tr\rho$.} linear map \symbolindexmark\calE$\calE:\traceclv V\to\traceclv W$.
A
\emph{cq-superoperator}\index{cq-superoperator}\index{superoperator!cq-} 
is a completely positive
trace-decreasing linear map $\calE:\traceclcq V\to\traceclcq W$.
A (cq-)\hskip0pt superoperator \emph{on} $V$ is a 
(cq-)\hskip0pt superoperator from $V$ to $V$.

% We call a superoperator or cq-superoperator
% \emph{total} on $A$\index{superoperator!total}\index{total!superoperator} iff
% it is trace-preserving, i.e., $\tr\calE(A)=\tr A$
% for all positive $A$.

For (cq-)superoperators $\calE$
from $V$
to $V'$ and $\calE'$ from $W$ to $W'$, let \symbolindexmark\tensor%
$\calE\otimes\calE'$
be the unique (cq-)\hskip0pt superoperator from $VW$ to $V'W'$ with
$(\calE\otimes\calE')(A\otimes B)=\calE(A)\otimes\calE'(B)$.

The trace $\tr:\traceclv V\to\setC=\traceclv{\varnothing}$
is in fact a superoperator (and a cq-superoperator). We can define the
\index{partial trace}\index{trace!partial}\emph{partial trace}
\symbolindexmark\partr$\partr{V}{W}:\traceclv{VW}\to\traceclv{V}$
as $\partr VW=\idv V\tensor\tr$
(where $\tr$
is the trace on $\traceclv W$). (Note: usually, the partial trace is written simply
$\tr_W$,
i.e., only the subsystem $W$  that
is removed is mentioned. However, for additional clarity, we
also mention the retained subsystem $V$
 in the symbol $\partr VW$.)
Note that $\partr VW$ is a superoperator and a cq-superoperator.

Let \symbolindexmark\Meas$\Meas DE$
denote the set of all projective measurements on $\elltwo E$
with outcomes in $D$.
Formally, $\Meas DE$
is the set of functions $M:D\to \bounded E$
such that $M(z)$
is a projector for all $z\in D$,
and $\sum_z M(z)\leq\id$.
We call a measurement $M$
\index{total!measurement}\index{measurement!total}\emph{total} iff
$\sum_z M(z)=\id$.

\paragraph{Variable-related canonical isomorphisms.}
For a variable renaming $\sigma:X\to Y$,
let \symbolindexmark{\Urename}$\Urename\sigma:\elltwov X\to\elltwov Y$
be the unitary mapping $\basis X m$
to $\basis Y{m\circ\sigma^{-1}}$.
We define the cq-superoperator $\Erename\sigma$
from $X$
to $Y$
by
\symbolindexmark{\Erename}$\Erename\sigma(\rho):=\Urename\sigma\rho\adj{\Urename\sigma}$.

For a list of distinct quantum variables $Q:=\qq_1,\dots,\qq_n$,
let the unitary \symbolindexmark\Uvarnames$\Uvarnames Q\in\uni{\typel Q,\types Q}$
be the canonical isomorphism from
$\elltwo{\typel Q}$
to $\elltwov Q$.
Formally, $\Uvarnames Q\basis{}{(q_1,\dots,q_n)}=\basis Qm$
where $m\in\types Q$ is given by $m(\qq_i):=q_i$.

\paragraph{Further notation.}
The following notation will be defined later in this paper (in
addition to the above and the syntax of programs in \autoref{sec:syn.prog}):
$\spaceat A\psi$
in \autoref{def:spaceat}.  $\denotee em$
(semantics of expressions) in \autoref{sec:expr}.
$\denotc\bc(\rho)$
(semantics of programs) in \autoref{sec:qsemantics}.  $\fv(e)$
(free variables of an expression) in \autoref{sec:expr}.  $\fv(\bc)$
(free variables of a program) in \autoref{sec:syn.prog}.
$\idx1,\idx2$
(add index $1$
or $2$
to each variable in an expression or program) in \autoref{sec:qrhl}.
$\lift AQ$ (operator/subspace $A$ lifted to variables $Q$) in \autoref{def:lift}.
$\CL{e}$
(classical predicates) in \autoref{def:cla}.  $\quanteq$
(quantum equality) in \multiautoref{def:quanteq,def:quanteq.simple}.  $\prafter e\bc\rho$
(probability of event $e$
after running $\bc$)
in \autoref{def:prafter}.  $\typee e$
(the type of an expression) in \autoref{sec:expr}.  $\restricte e\rho$
(restriction to state satisfying $e$)
in \autoref{sec:qsemantics}.  $\subst e\sigma$
(variable renaming applied to expression) and
$\substi e{e_1/\xx_1,\dots}$
(substitution applied to expression) in \autoref{sec:expr}.

\paragraph{Auxiliary lemmas.}

\begin{lemma}[Schmidt decomposition]\label{lemma:schmidt}
  Let $\psi\in\elltwov{XY}$.
  Then $\psi$
  can be decomposed as $\psi=\sum_i\lambda_i\psi_i^X\tensor\psi_i^Y$
  for some $\lambda_i>0$,
  orthonormal $\psi_i^X\in\elltwov X$,
  and orthonormal $\psi^Y_i\in\elltwov Y$.
\end{lemma}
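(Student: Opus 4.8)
The plan is to establish the Schmidt decomposition by reducing it to a standard spectral/singular-value argument on the reduced density operator. The statement is the classical Schmidt decomposition theorem, phrased in the paper's variable-labelled tensor notation $\elltwov{XY}=\elltwov X\tensor\elltwov Y$. Since the paper allows infinite-dimensional (separable) Hilbert spaces, I would either cite a functional-analysis textbook (e.g. \cite{conway97functional}) for the general case, or give the self-contained argument below, which works verbatim in finite dimensions and extends to the trace-class setting with minor care.

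\medskip

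First I would associate to $\psi\in\elltwov{XY}$ its partial trace onto the $X$-system. Concretely, write $\psi=\sum_{m,m'}c_{m,m'}\,\basis Xm\tensor\basis Y{m'}$ in the computational basis, and consider the operator $\rho^X:=\partr X Y(\proj\psi)\in\traceposv X$, which is positive, self-adjoint, and trace-class with $\tr\rho^X=\norm\psi^2$. Being positive self-adjoint trace-class, $\rho^X$ admits a spectral decomposition $\rho^X=\sum_i\mu_i\proj{\psi_i^X}$ with $\mu_i>0$ and orthonormal eigenvectors $\psi_i^X\in\elltwov X$ (this is the spectral theorem for compact positive operators). Set $\lambda_i:=\sqrt{\mu_i}>0$.

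\medskip

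Next I would define the candidate right-hand vectors. For each $i$, put $\psi_i^Y:=\tfrac1{\lambda_i}\,(\adj{\psi_i^X}\tensor\idv Y)\psi\in\elltwov Y$, i.e. contract $\psi$ against the bra $\adj{\psi_i^X}$ on the $X$-factor. The key computation is then to verify two things: that $\braces{\psi_i^Y}$ is orthonormal, and that $\psi=\sum_i\lambda_i\,\psi_i^X\tensor\psi_i^Y$. Orthonormality follows by computing $\langle\psi_i^Y,\psi_j^Y\rangle=\tfrac1{\lambda_i\lambda_j}\langle\psi_i^X,\rho^X\psi_j^X\rangle=\tfrac{\mu_j}{\lambda_i\lambda_j}\langle\psi_i^X,\psi_j^X\rangle=\delta_{ij}$, using that the $\psi_i^X$ are eigenvectors of $\rho^X$. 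For the decomposition itself, I would extend $\braces{\psi_i^X}$ to an orthonormal basis of $\elltwov X$ and expand $\psi=\sum_i\psi_i^X\tensor\phi_i$ where $\phi_i:=(\adj{\psi_i^X}\tensor\idv Y)\psi$; one checks $\phi_i=\lambda_i\psi_i^Y$ for the indices in the spectral decomposition and $\phi_i=0$ for the others (because the corresponding eigenvalue $\mu_i$ of $\rho^X$ is zero, forcing $\norm{\phi_i}^2=\mu_i=0$).

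\medskip

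\textbf{The main obstacle} is bookkeeping rather than conceptual difficulty. The one genuine subtlety in the infinite-dimensional case is guaranteeing that the index set is at most countable and that the sum $\sum_i\lambda_i\psi_i^X\tensor\psi_i^Y$ converges in norm to $\psi$; this is handled by the trace-class property of $\rho^X$, which forces $\sum_i\mu_i=\norm\psi^2<\infty$ and hence only countably many nonzero eigenvalues, with $\sum_i\lambda_i^2<\infty$ ensuring $\ell^2$-convergence. A secondary nuisance is that the paper's tensor notation identifies $\elltwov{XY}$ with $\elltwov X\tensor\elltwov Y$ via memory concatenation $\memuni{m_1m_2}$, so I would have to be careful that the contraction $(\adj{\psi_i^X}\tensor\idv Y)$ is the well-defined partial inner product in this labelled convention; this is routine given the definitions of $\tensor$ fixed in the preliminaries.
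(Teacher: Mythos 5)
Your proposal is correct, but it takes a different route from the paper: the paper gives no argument at all, simply noting that the lemma is well known for separable Hilbert spaces and citing an external reference (\cite{mo-schmidt}) for the non-separable case, whereas you supply the standard self-contained proof via the reduced density operator $\rho^X=\partr XY(\proj\psi)$ and the spectral theorem for positive compact operators. Your argument is the textbook one and is sound; what it buys over the paper's citation is an actual proof, at the cost of a page of bookkeeping the paper chose to avoid. One small correction to your framing: you describe the setting as ``infinite-dimensional (separable) Hilbert spaces,'' but the paper explicitly allows $\types{XY}$ to be uncountable, in which case $\elltwov{XY}$ is \emph{not} separable --- this is exactly why the paper bothers to cite a reference for the non-separable case rather than calling the lemma standard. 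Fortunately your argument survives this: a positive trace-class operator on an arbitrary Hilbert space is compact, hence has at most countably many nonzero eigenvalues with finite-dimensional eigenspaces, so the spectral decomposition $\rho^X=\sum_i\mu_i\proj{\psi_i^X}$ with countable index set still holds, and the rest of your computation (orthonormality of the $\psi_i^Y$, vanishing of the components orthogonal to the range of $\rho^X$, and $\ell^2$-convergence from $\sum_i\mu_i=\norm\psi^2<\infty$) goes through verbatim. You should just state the compactness argument in that generality rather than implicitly assuming a countable basis.
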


\begin{proof} This lemma is well-known for separable Hilbert spaces (i.e., if
  $\types{XY}$ is countable).
  A generalization to the non-separable case is presented, e.g., in \cite{mo-schmidt}.
\end{proof}

\begin{lemma}\label{lemma:tensor.match}
  Assume $X,Y\subseteq V$,
  $X\cap Y=\varnothing$.
  Fix
  $\psi\in\elltwov{X},\psi'\in\elltwov{V\setminus
    X},\phi\in\elltwov{Y},\phi'\in\elltwov{V\setminus Y}$.  Assume
  $\psi\otimes\psi'=\phi\otimes\phi'\neq 0$.
  Then there exist $\eta\in\elltwov{V\setminus XY}$
  such that $\psi'=\phi\otimes\eta$ and $\phi'=\psi\otimes\eta$.
\end{lemma}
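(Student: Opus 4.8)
The plan is to work directly with the pointwise definition of the tensor product fixed in the preliminaries, rather than with any abstract characterisation. Write $W := V\setminus XY$, so that $V = X\sqcup Y\sqcup W$, and hence $V\setminus X = YW$ and $V\setminus Y = XW$ (using $X\cap Y=\varnothing$ and $X,Y\subseteq V$). For a memory $m\in\types V$ with restrictions $a\in\types X$, $b\in\types Y$, $w\in\types W$, the hypothesis $\psi\otimes\psi'=\phi\otimes\phi'$ unfolds — via $(\psi\otimes\psi')(\memuni{m_1m_2})=\psi(m_1)\psi'(m_2)$ — into the scalar identity
\begin{equation*}
  \psi(a)\,\psi'(\memuni{bw}) = \phi(b)\,\phi'(\memuni{aw})
  \qquad\text{for all } a,b,w.
  \tag{$*$}
\end{equation*}
Since $\psi\otimes\psi'\neq0$, none of the four functions $\psi,\psi',\phi,\phi'$ is identically zero, so I may fix $a_0,b_0$ with $\psi(a_0)\neq0$ and $\phi(b_0)\neq0$.

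Next I would define $\eta:\types W\to\setC$ pointwise by $\eta(w):=\psi'(\memuni{b_0w})/\phi(b_0)$ and read off the two desired identities from $(*)$. Putting $b:=b_0$ in $(*)$ gives $\phi'(\memuni{aw})=\psi(a)\,\psi'(\memuni{b_0w})/\phi(b_0)=\psi(a)\eta(w)$, which is exactly $\phi'=\psi\otimes\eta$ pointwise. Feeding this back into $(*)$ for arbitrary $b$ yields $\psi(a)\psi'(\memuni{bw})=\phi(b)\psi(a)\eta(w)$, and setting $a:=a_0$ and dividing by $\psi(a_0)\neq0$ gives $\psi'(\memuni{bw})=\phi(b)\eta(w)$, i.e.\ $\psi'=\phi\otimes\eta$ pointwise. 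At this stage both equations hold as equalities of functions, and the only remaining content is to promote them to equalities of genuine Hilbert-space vectors.

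The one place needing care — and the main obstacle in the non-finite-dimensional case — is verifying that $\eta$ actually lies in $\elltwov W$, i.e.\ that it is square-summable; a priori $\eta$ is only a function. This I would obtain from $\psi'\in\elltwov{YW}$: since all summands are nonnegative, Tonelli's theorem for series licenses the factorisation
\[
  \norm{\psi'}^2 = \sum_{b,w}\abs{\psi'(\memuni{bw})}^2
  = \sum_{b,w}\abs{\phi(b)}^2\abs{\eta(w)}^2
  = \norm\phi^2\sum_{w}\abs{\eta(w)}^2 .
\]
As $\norm{\psi'}<\infty$ and $\norm\phi>0$, this forces $\sum_w\abs{\eta(w)}^2=\norm{\psi'}^2/\norm\phi^2<\infty$, so $\eta\in\elltwov W$. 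Consequently $\phi\otimes\eta\in\elltwov{YW}$ and $\psi\otimes\eta\in\elltwov{XW}$ are well-defined vectors, and the pointwise identities above say precisely that $\psi'=\phi\otimes\eta$ and $\phi'=\psi\otimes\eta$, completing the argument.

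A basis-free variant is also available: one would define $\eta$ by partially contracting the vector $\Psi:=\psi\otimes\psi'=\phi\otimes\phi'$ against $\psi$ on its $X$-factor and against $\phi$ on its $Y$-factor (normalised by $\norm\psi^2\norm\phi^2$), and then expand $\Psi$ through each of its two factorisations to read off both identities directly. This route sidesteps the summability computation but relies on partial-inner-product machinery, so I would prefer the coordinate argument above, which uses only the definition of $\otimes$ already fixed in the preliminaries.
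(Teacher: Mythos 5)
Your proof is correct, but it takes a genuinely different route from the paper's. The paper argues at the level of density operators: it computes $\partr{X}{V\setminus X}\proj{\psi\otimes\psi'}$ in two ways to conclude that $\partr{X}{V\setminus XY}\proj{\phi'}$ is a scalar multiple of the rank-one projector $\proj\psi$, infers from this that $\phi'=\psi\otimes\eta$ for some $\eta$ (a step that implicitly rests on the Schmidt decomposition / purity of the reduced state), and then obtains $\psi'=\phi\otimes\eta$ by cancelling $\psi\neq0$ from $\psi\otimes\psi'=\phi\otimes\psi\otimes\eta$. You instead work directly with the pointwise definition $(\psi\otimes\phi)(\memuni{m_1m_2})=\psi(m_1)\phi(m_2)$, construct $\eta$ explicitly as a slice $\eta(w)=\psi'(\memuni{b_0w})/\phi(b_0)$, read off both identities from the scalar equation $(*)$, and then verify $\eta\in\elltwov{V\setminus XY}$ by Tonelli. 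Your route is more elementary and self-contained — it produces $\eta$ concretely and avoids the partial-trace machinery and the unstated Schmidt-type deduction — at the cost of a coordinate computation; the paper's version is shorter and reuses operators already introduced, but hides more in its "hence $\phi'=\psi\otimes\eta$" step. (A small remark: the square-summability of $\eta$ also follows immediately from $\sum_w\abs{\psi'(\memuni{b_0w})}^2\leq\norm{\psi'}^2<\infty$, without routing through the full factorisation of $\norm{\psi'}^2$.)
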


\begin{proof}
  Since $\psi\otimes\psi'=\phi\otimes\phi'\neq 0$, we have $\psi,\psi',\phi,\phi'\neq 0$.
  Let $\alpha:=\norm{\psi'}^2\neq 0$ and $\beta:=\norm{\phi}^2\neq 0$.
  Then
  \[
    \alpha\, \proj\psi = \partr{X}{V\setminus X}\proj{\psi\otimes\psi'}
    = \partr{X}{V\setminus X}\proj{\phi\otimes\phi'}
    = \partr{X}{V\setminus XY}
    \partr{V\setminus Y}{Y}
    \proj{\phi\otimes\phi'}
    = \partr{X}{V\setminus XY}
    \beta\, \proj{\phi'}.
  \]
  Hence $\phi'=\psi\otimes\eta$ for some $\eta\in\elltwov{V\setminus XY}$.
  Thus $\psi\otimes\psi'=\phi\otimes\phi'=\phi\otimes\psi\otimes\eta$. Since $\psi\neq0$, this implies $\psi'=\phi\otimes\eta$.
\end{proof}

\section{Quantum programs}
\label{sec:lang}

\subsection{Expressions}
\label{sec:expr}

In the context of this paper, an \index{expression}\emph{expression} $e$
is -- informally speaking -- an arbitrary mathematical formula, possibly containing free
\emph{classical} variables. (See \autoref{sec:prelim} for the definition of
variables.) We denote by \symbolindexmark\fv$\fv(e)$
the set of free variables occurring in the expression. For
$m\in\types X$
with $X\supseteq\fv(e)$,
let \symbolindexmark\denotee$\denotee{e}m$
denote the result of substituting every $\xx\in\fv(e)$
in $e$
by $m(\xx)$
and then evaluating the resulting expression (which has no free
variables any more).  Intuitively, this corresponds to evaluating the
expression $e$
on a memory $m$
that contains the content of the variables.
With every
expression~$e$,
we associate a set \symbolindexmark\typee$\typee e$
such that $\denotee em\in\typee e$
for all $m$.

Formally, an expression simply consists of a finite set
$\fv(e)\subseteq\cl V$,
a set $\typee e$,
and a function $\denotee e{}:\types V\to\typee e$
(with its argument written as subscript) such that
$\restrict m{\fv(e)}=\restrict{m'}{\fv(e)}\implies \denotee
em=\denotee e{m'}$.\footnote{This formalization is  more
    suitable for modeling expressions in a formal logical system
    (and thus for implementing in a theorem
    prover such as Isabelle or Coq). In this paper, we nevertheless write expressions
    as formulas with free variables because this leads to more
    readable formulas. E.g., we can simply write $e+f$
    if $e,f$
    are expressions, instead of having to write
    $\lambda m. (e(m)+f(m))$
    as we would have to if expressions were functions.}

  Note that expressions never contain quantum variables,\footnote{%
    At least not as free variables. Expressions may contain, e.g., the
    names of operators that are parameterized by quantum variables. For
    example, $\Uvarnames\qq\cdot\xx$
    (where $\typev\xx=\elltwo{\typev\qq}$) would be a valid expression with $\fv(e)=\{\xx\}$.
  } nor are they probabilistic.

% A word on notation: when we write, e.g., $g:=e+f$
% we mean the expression $g$
% with $\denotee{g}m = \denotee{e}m\odot\denotee fm$
% for all $m$.
% In particular, when $e$,
% $f$
% are written as formulas in some concrete syntax, then $g$
% is written $e+f$ in that syntax.

Given a variable renaming $\sigma$
and an expression $e$,
we write \symbolindexmark\subst$\subst e\sigma$
for the result of replacing free variables~$\xx$
in $e$
by $\sigma(\xx)$.
Formally, $\denotee{\subst e\sigma}m := \denotee{e}{m\circ\sigma}$.

And \symbolindexmark\substi$\substi e{e_1/\xx_1,\dots,e_n/\xx_n}$
denotes the result of simultaneously replacing all $\xx_i$
by $e_i$ in $e$. Formally,
\[
  \denotee{\substi e{e_1/\xx_1,\dots,e_n/\xx_n}}m
:=\denotee e{m(\xx_1:=\denotee{e_1}m,\dots,\xx_n:=\denotee{e_n}m)}.
\]

For $e$ with $\typee e=\bool$, we say that ``$e$ holds''%
\index{expressions holds}%
\index{holds!expression}
iff $\denotee{e}m=\true$ for all $m$.

\subsection{Syntax of programs}
\label{sec:syn.prog}

Programs in our language are constructed according to the following
grammar.  We will typically denote programs with
\symbolindexmark\bc$\bc$ or \symbolindexmark\bd$\bd$.

\symbolindexmark{\Qinit}%
\symbolindexmark{\Qapply}%
\symbolindexmark{\Qmeasure}%
\symbolindexmark{\assign}%
\symbolindexmark{\sample}%
\symbolindexmark{\Skip}%
\symbolindexmark{\while}%
\begin{align*}
\bc,\bd := {} &\Skip && \text{(no operation)} \\
& \assign \xx e && \text{(classical assignment)} \\
& \sample \xx e && \text{(classical sampling)}\\
& \langif{e}{\bc}{\bd} && \text{(conditional)}\\
& \while{e}{\bc} && \text{(loop)}\\
& \seq{\bc}{\bd} && \text{(sequential composition)} \\
& \Qinit{\qq_1\dots \qq_n}{e} && \text{(initialization of quantum registers)}\\
& \Qapply{e}{\qq_1\dots \qq_n} && \text{(quantum application)} \\
              & \Qmeasure{\xx}{\qq_1\dots \qq_n}{e}
                     && \text{(measurement)}
\end{align*}
In the sampling statement, $e$
evaluates to a distribution.  In the initialization of quantum
registers, $e$
evaluates to a pure quantum state, $\qq_1\dots\qq_n$
are jointly initialized to that state. In the quantum application, $e$
evaluates to an isometry that is applied to
$\qq_1\dots\qq_n$.
In the measurement, $e$
evaluates to a projective measurement, the outcome is
stored in $\xx$.
(Recall that an expression $e$
  can be an arbitrarily complex mathematical formula in the classical
  variables. So, e.g., an expression that describes an isometry could
  be something as simple as just $H$
  (here $H$ denotes the Hadamard transform),
  or something more complex such as, e.g., $H^{\xx}$, meaning $H$ is applied if $\xx=1$.)

A program is \emph{well-typed}%
\index{well-typed (program)}%
\index{program!well-typed} according to the following rules:
\begin{compactitem}
\item $\assign\xx e$ is well-typed iff $\typee e\subseteq\typev\xx$,
\item $\sample \xx e$ is well-typed iff $\typee e\subseteq\distr{\typev\xx}$,
\item $\langif{e}{\bc}{\bd}$ is well-typed iff $\typee e\subseteq\bool$ and $\bc,\bd$ are well-typed.
\item $\while{e}{\bc}$ is well-typed iff $\typee e\subseteq\bool$ and $\bc$ is well-typed.
\item $\seq{\bc}{\bd}$ is well-typed iff $\bc$ and $\bd$ are well-typed.
\item $\Qinit{Q}{e}$
  is well-typed iff
  $\typee e\subseteq\elltwo{\typel Q}$, and
   $\norm\psi = 1$
  for all $\psi\in\typee e$. 
\item $\Qapply{e}Q$ is well-typed iff $\typee e\subseteq\iso{\typel Q}$. 
\item $\Qmeasure{\xx}Q{e}$
  is well-typed iff
  $\typee e\subseteq\Meas {\typev\xx}{\elltwo{\typel Q}}$.
\end{compactitem}
In this paper, we will only consider well-typed programs. That is,
``program'' implicitly means ``well-typed program'', and all
derivation rules hold under the implicit assumption that the programs
in premises and conclusions are well-typed.

The set of all free variables in a program $\bc$
is denoted \symbolindexmark\fv$\fv(\bc)$
and consists of the classical variables $\fv(e)$
for all expressions $e$
occurring in $\bc$,
the classical variables $\xx$ in subterms
$\assign \xx e$, $\sample\xx e$, $\Qmeasure\xx Qe$,
and all quantum variables $Q$
in subterms $\Qinit{Q}{e}$,
$\Qapply{e}Q$, $\Qmeasure{\xx}Q{e}$.

\subsection{Classical semantics}
\label{sec:classical.sem}
\index{semantics!classical}
\index{classical!semantics}

We call a program \emph{classical}%
\index{classical!program}%
\index{program!(classical)} iff it does not contain the constructions
$\Qinit Q{e}$, $\Qapply{e}Q$, or
$\Qmeasure{\xx}Q{e}$.
For classical programs, we can define classical semantics as follows.

Fix some set $V$
of variables relative to which the semantics will be defined (in the
remainder of the paper, that set will always be clear from the
context, and it will always be called either $V$, $V_1$, or $V_2$).

Let $S$
be the set of all functions $\types V\to \setR$
such that $\sum_a\abs{ f(x)}\leq\infty$.
(Note that $\distrv V\subseteq S$.
In fact, $S$
is the smallest real Hilbert space containing $\distrv V$.)
We define \symbolindexmark\denotcl$\denotcl\bc:S\to S$ as the linear function satisfying the following rules:
\begin{align*}
  \denotcl{\Skip}(\mu) &:= \mu \\
  \denotcl{\assign\xx e}(\delta_m) &:= \pointdistr{\upd m\xx{\denotee{e}m}} \\
  \denotcl{\sample\xx e}(\pointdistr m) &:=
    \sum_{z\in\typev\xx}{\denotee{e}m}(z) \cdot \pointdistr{\upd m\xx z} \\
  \denotcl{\langif{e}{\bc_1}{\bc_2}}(\mu) &:=
                                            \denotcl{\bc_1}\pb\paren{\restricte e(\mu)} +
                                            \denotcl{\bc_2}\pb\paren{\restricte {\lnot e}(\mu)} \\
  \denotcl{\while e\bc}(\mu) &:=\sum_{i=0}^\infty
                               \restricte {\lnot e}\bigl((\denotcl\bc\circ\restricte e)^i(\mu)\bigr) \\
  \denotcl{\seq{\bc_1}{\bc_2}} &:= \denotcl{\bc_2}\circ\denotcl{\bc_1} \\
\end{align*}

Here \symbolindexmark\restricte$\restricte e(\mu)$
denotes the distribution $\mu$
over memories, restricted to those memories that make the expression
$e$
true. Formally, $\restricte e(\mu)=\nu$
with $\nu(m):=\mu(m)$
if $\denotee em=\true$ and $\nu(m):=0$ otherwise.

\begin{comment}
  \medskip\noindent
  The semantics of the language are given by the denotation
  $\denotc\bc:\traceclcq V\to\traceclcq V$
  of a program for some fixed set of variables.
That is, if the memory of a program is in state $\rho\in\traceclcq V$, then after running $\bc$ it is in state $\denotc\bc(\rho)$.
  The semantics are
  standard and straightforward (no subtleties such as quantum control
  etc. are involved), we defer them to \autoref{app:progs.wt.sem}.
\end{comment}

\subsection{Quantum semantics}
\label{sec:qsemantics}
\index{semantics!quantum}
\index{quantum!semantics}

Fix some set $V$
of variables relative to which the semantics will be defined (in the
remainder of the paper, that set will always be clear from the
context, and it will always be called either $V$, $V_1$, or $V_2$).

Given a program $\bc$
(with $\fv(\bc)\subseteq V$), we define its semantics
$\denotc\bc$\symbolindexmark{\denotc}
as a cq-superoperator on $\traceclcq V$.
 In the following, let
$\rho\in\traceclcq V$, $m\in\types{\cl V}$, and
$\rho_m\in\traceposv{\qu V}$. Note that specifying $\denotc\bc$ on
operators $\pointstate{\cl V}m\tensor\rho_m$ specifies $\denotc\bc$
on all $\rho\in\traceclcq V$, since $\rho$ can be written as an
infinite sum of $\pointstate{\cl V}m\tensor\rho_m$ (by definition of
$\traceclcq V$ and using the fact that any operator in
$\traceclv{\qu V}$ can be written as linear combination of four
$\rho_a\in\traceposv{\qu V}$).

\symbolindexmark\langif
\symbolindexmark\while
\symbolindexmark\seq
\symbolindexmark\assign
\symbolindexmark\sample
\symbolindexmark\Qinit
\symbolindexmark\Qapply
\symbolindexmark\Qmeasure
  \begin{align*}
  \denotc\Skip(\rho) &:= \rho \\
  \denotc{\assign\xx e}\pb\paren{\pointstate{\cl V}m\tensor\rho_m} &:=
                                                    \pointstate{\cl V}{\upd m\xx{ \denotee{e}{m}}}\tensor \rho_m \\
  \denotc{\sample\xx e}\pb\paren{\pointstate{\cl V}m\tensor\rho_m} &:=
                                                          \sum_{z\in\typev\xx}{\denotee{e}m}(z) \cdot \pb\pointstate{\cl V}{\upd m\xx z}\tensor \rho_m
  \\
    \denotc{\langif{e}{\bc}{\bd}}(\rho) &:=
    \denotc{\bc}( \restricte {e}(\rho) ) +
    \denotc{\bd}( \restricte {\lnot e}(\rho) )
                                           \\
  \denotc{\while e\bc}(\rho)
  &:=\sum_{i=0}^\infty
  \restricte {\lnot e}\bigl((\denotc\bc\circ\restricte {e})^i(\rho)\bigr)
     \\
    \denotc{\seq{\bc_1}{\bc_2}}&:=\denotc{\bc_2}\circ\denotc{\bc_1}
                                 \\
  \denotc{\Qinit Q{e}}\pb\paren{\pointstate{\cl V}m\tensor\rho_m}
                     &:= \pb\pointstate{\cl V}m \tensor
                       \partr{\qu V\setminus Q}{Q}{\rho_m}\otimes\pb\proj{
                       \Uvarnames Q\denotee em}
  \\
  \denotc{\Qapply{e}Q}\pb\paren{\pointstate{\cl V}m\tensor\rho_m}
                     &:= \pb\pointstate{\cl V}m \tensor U\rho_m\adj U\\
  &\qquad\qquad                     \text{where }
                       U := \Uvarnames Q\denotee{e}m \adj{\Uvarnames Q}\otimes\idv{\qu V\setminus Q}
  \\
  \denotc{\Qmeasure{\xx}{Q}e}\pb\paren{\pointstate{\cl V}m\tensor\rho_m}
                     &:=
                       \sum_{z\in\typev\xx} \!\!\!   \pb
                       \pointstate{\cl V}{\upd m\xx z} \tensor
                       P_z\rho_m P_z\\
  &\qquad\qquad                     \text{where }
                       P_z := \Uvarnames Q\denotee{e}m(z)\adj{\Uvarnames Q} \tensor \idv{\qu V\setminus Q}
\end{align*}

Here \symbolindexmark\restricte$\restricte e(\rho)$ is the cq-density operator $\rho$ restricted to the
parts where the expression $e$ holds. Formally, $\restricte e$ is the cq-superoperator on $V$ such that
\[\restricte e(\pointstate{\cl{V}}{m} \tensor \rho_m) :=
  \begin{cases}
    \pointstate{\cl{V}}m \tensor \rho_m & (\denotee{e}m=\true) \\
    0 & (\text{otherwise})
  \end{cases}
\]

\begin{definition}\label{def:prafter}
  Fix a program $\bc$, an expression $e$ with $\typee e=\bool$, and some $\rho\in\traceposcq V$.
  Then \symbolindexmark\prafter$\prafter e\bc\rho:=\sum_{m\text{ s.t.\ }\denotee em=\true}\tr\rho_m$ where
  $\denotc\bc(\rho)=:\sum_m\proj{\basis{\cl{V}}m}\otimes\rho_m$ with $\rho_m\in\traceposv{\qu V}$.
\end{definition}

That is, $\prafter e\bc\rho$
denotes the probability that $e$
(a condition on the classical variables) holds after running the
program $\bc$ on initial state $\rho$.

\begin{definition}
  For a set $X$
  of variables, we call $\bc$
  \index{program!local}\index{local!program}\emph{$X$-local} iff
  $\denotc\bc=\calE\otimes\idv{V\setminus X}$ for some cq-superoperator $\calE$ on $X$.
\end{definition}

\begin{lemma}\label{lemma:fv.local}
  If $\fv(\bc)\subseteq X$, then $\bc$ is $X$-local.
\end{lemma}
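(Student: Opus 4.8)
The plan is to proceed by structural induction on the program $\bc$. We may assume $X\subseteq V$ (otherwise replace $X$ by $X\cap V$, which still contains $\fv(\bc)$ since $\fv(\bc)\subseteq V$), so that $V$ decomposes as $X$ together with $V\setminus X$ and $\calE\otimes\idv{V\setminus X}$ is a well-defined cq-superoperator on $V$. Throughout I would use the identification $\traceclcq V\cong\traceclcq X\otimes\traceclcq{V\setminus X}$ together with the fact (from the preliminaries) that $\calE\otimes\idv{V\setminus X}$ is the unique cq-superoperator acting as $\calE$ on the $X$-factor and as the identity on the $V\setminus X$-factor.

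Before the induction I would record two closure properties of $X$-local maps, both immediate from $(\calE_1\otimes\idv{V\setminus X})\circ(\calE_2\otimes\idv{V\setminus X})=(\calE_1\circ\calE_2)\otimes\idv{V\setminus X}$ and its additive analogue $\sum_i(\calF_i\otimes\idv{V\setminus X})=(\sum_i\calF_i)\otimes\idv{V\setminus X}$: namely that $X$-locality is preserved under composition, and under (possibly infinite) sums provided the total sum is again trace-decreasing. I would also establish the auxiliary fact that the restriction superoperator $\restricte e$ is $X$-local whenever $\fv(e)\subseteq\cl X$: since $\denotee e m$ depends only on $m|_{\fv(e)}$ and $\fv(e)\subseteq\cl X$, the predicate $\denotee e m=\true$ factors through $m|_{\cl X}$, so $\restricte e$ keeps or kills each classical branch using only the $X$-coordinates and acts as the identity on $V\setminus X$.

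For the base cases I would exhibit the factorization explicitly on operators of the form $\pointstate{\cl V}m\tensor\rho_m$. For $\Skip$ this is trivial, with $\calE=\idv X$. For $\assign\xx e$ and $\sample\xx e$ the hypothesis $\fv(\bc)\subseteq X$ gives $\xx\in\cl X$ and $\fv(e)\subseteq\cl X$, so the updated classical value (or sampling distribution) $\denotee e m$ depends only on $m|_{\cl X}$ and is written into a coordinate in $X$; since $\rho_m$ is passed through unchanged, the map touches only the $X$-part of the classical memory and leaves $\qu{V\setminus X}$ and $\cl{V\setminus X}$ untouched. For $\Qinit Q e$, $\Qapply e Q$, and $\Qmeasure\xx Q e$ the hypothesis gives $Q\subseteq\qu X$ and $\fv(e)\subseteq\cl X$ (and $\xx\in\cl X$ in the measurement case), and in each case the quantum action is localized to the registers $Q$: the conjugations of application and measurement act as the identity outside $Q$, while the initialization discards $Q$ and re-prepares it in $\proj{\Uvarnames Q\denotee e m}$, leaving all other registers fixed. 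Since $Q\subseteq\qu X$, each of these acts as the identity on $\qu{V\setminus X}$, and its dependence on the classical memory runs only through $m|_{\cl X}$; hence each equals $\calE\otimes\idv{V\setminus X}$.

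For the inductive cases, $\seq{\bc_1}{\bc_2}$ follows from closure under composition, since $\fv(\bc_i)\subseteq\fv(\bc)\subseteq X$ gives $X$-locality of each $\denotc{\bc_i}$ by induction. For $\langif e{\bc_1}{\bc_2}$, write $\denotc\bc(\rho)=\denotc{\bc_1}(\restricte e\rho)+\denotc{\bc_2}(\restricte{\lnot e}\rho)$; here $\restricte e,\restricte{\lnot e}$ are $X$-local by the auxiliary fact (as $\fv(e),\fv(\lnot e)\subseteq\cl X$), and composing with the inductively $X$-local $\denotc{\bc_i}$ and summing two terms preserves $X$-locality. The main obstacle is the loop $\while e\bc$, whose denotation is the infinite sum $\sum_{i=0}^\infty\restricte{\lnot e}\bigl((\denotc\bc\circ\restricte e)^i(\rho)\bigr)$: each summand is a finite composition of $X$-local maps, hence $X$-local, say equal to $\calF_i\otimes\idv{V\setminus X}$, but I must argue that the tensor factorization survives the limit. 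For this I would use that $\denotc{\while e\bc}$ is the least upper bound of its partial sums and is itself a cq-superoperator (in particular trace-decreasing), and then apply the additive closure property together with the continuity of $-\otimes\idv{V\setminus X}$ with respect to this monotone limit, yielding $\denotc{\while e\bc}=(\sum_i\calF_i)\otimes\idv{V\setminus X}$ with $\sum_i\calF_i$ a well-defined cq-superoperator on $X$. Verifying that the limit genuinely commutes with the factorization is the one delicate point, which I would discharge by evaluating both sides on arbitrary product operators $\sigma\otimes\tau$ with $\sigma\in\traceclcq X$ and $\tau\in\traceclcq{V\setminus X}$ and invoking the monotone-convergence definition of the sum.
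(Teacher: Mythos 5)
Your proposal is correct and follows the same route as the paper, which simply states ``by induction over the structure of $\bc$'' and leaves all details implicit; your write-up supplies exactly the case analysis and closure properties (composition, sums, the $X$-locality of $\restricte e$, and the monotone limit for \textbf{while}) that such an induction requires.
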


\begin{proof}
  By induction over the structure of $\bc$.
\end{proof}

\paragraph{Relation to classical semantics.}
For \emph{classical} variables $V$ and a distribution $\mu\in\distrv V$, \symbolindexmark{\qlift}\label{page:def:qlift}let
$\qlift\mu:=\sum_m\mu(m)\,\pb\pointstate Vm\in\traceposcq V$. Note that
$\qlift\mu$ is a bijection between $\ellonev V$ and $\traceposcq V$ (since $\qu V=\varnothing$).

\begin{lemma}[Relation to classical semantics]\label{lemma:classical.sem}
  Let $V$ be classical variables, and
  let $\bc$ be a classical program
  with $\fv(\bc)\subseteq V$.

  Then for all $\mu\in\distrv V$, we have
  \[
  \denotc\bc\bigl(\qlift\mu\bigr) =\pb \qlift{\denotcl\bc(\mu)}.
  \]
\end{lemma}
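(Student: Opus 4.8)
The plan is to prove the identity by structural induction on $\bc$, exploiting two facts that make the classical and quantum recursion schemes line up term by term. It is cleanest to prove the statement not just for $\mu\in\distrv V$ but for all $\mu\in\ellonev V$, so that the intermediate restricted and iterated states stay within scope. The first fact I would use is that $\qlift{\cdot}$ is additive and $\setRpos$-homogeneous (immediate from $\qlift\mu=\sum_m\mu(m)\pointstate Vm$), and moreover commutes with countable sums of positive summands, since $\sum_i\rho_i$ is by definition the least upper bound of its finite partial sums and $\qlift{\cdot}$ preserves such suprema. The second, which I would isolate as a small claim, is that lifting commutes with restriction: for every Boolean expression $e$,
\[
\restricte e(\qlift\mu) = \qlift{\restricte e(\mu)}.
\]
This is a one-line check, since both sides equal $\sum_{m:\denotee em=\true}\mu(m)\pointstate Vm$ by the linear definition of $\restricte e$ on cq-operators. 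A further key observation is that, because $V$ consists only of classical variables, $\qu V=\varnothing$, so each quantum part $\rho_m$ lives in $\traceposv\varnothing\cong\setRpos$ and is just the scalar $\mu(m)$; consequently the quantum-semantic clauses for $\assign$ and $\sample$ collapse exactly onto their classical counterparts once the trivial tensor factor $\rho_m$ is dropped.

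With these in hand the induction is mostly bookkeeping. For $\Skip$ both sides are $\qlift\mu$. For $\assign\xx e$ and $\sample\xx e$ I would first verify the identity on a point mass $\pointdistr m$ (equivalently on the basis operator $\pointstate Vm$), where it reduces directly to the defining clauses, and then extend to arbitrary $\mu$ by additivity of the three maps $\denotc\bc$, $\denotcl\bc$, and $\qlift{\cdot}$. For $\langif e{\bc_1}{\bc_2}$ I would push $\qlift{\cdot}$ through the two restrictions using the commutation claim, apply the inductive hypothesis to $\bc_1$ and $\bc_2$, and recombine using additivity of $\qlift{\cdot}$; the sequential case $\seq{\bc_1}{\bc_2}$ is simply two uses of the inductive hypothesis composed.

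The one case needing real care is the loop $\while e\bc$. Here I would first show, by an inner induction on $i$, that $(\denotc\bc\circ\restricte e)^i(\qlift\mu)=\qlift{(\denotcl\bc\circ\restricte e)^i(\mu)}$, each step combining the commutation claim with the outer hypothesis for $\bc$. Applying $\restricte{\lnot e}$ termwise and summing over $i$ then yields the result, but only after commuting $\qlift{\cdot}$ past the infinite sum $\sum_{i=0}^\infty$. This interchange, which relies on the compatibility of $\qlift{\cdot}$ with suprema of increasing sequences of positive operators, is the main technical obstacle I expect; everything else follows mechanically from the parallel definitions of $\denotc{\cdot}$ and $\denotcl{\cdot}$.
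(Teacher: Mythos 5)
Your proposal is correct and follows essentially the same route as the paper, which simply states the proof as ``induction over the structure of a classical $\bc$'' with each case ``verified by elementary computation.'' The details you supply --- additivity of $\qlift{\cdot}$, the commutation $\restricte e(\qlift\mu)=\qlift{\restricte e(\mu)}$, and the interchange of $\qlift{\cdot}$ with the supremum defining the while-loop sum --- are exactly the bookkeeping the paper leaves implicit, and they all go through.
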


\begin{proof}
  By induction over the structure of a classical $\bc$. For each case in the induction,
  the statement
  $\denotc\bc\bigl(\qlift\mu\bigr) = \pb\qlift{\denotcl\bc(\mu)}$
  is verified by elementary computation.
\end{proof}

\section{Quantum predicates}
\label{sec:predicates}

\subsection{Defining predicates}

In this section, we discuss what a predicate is, i.e., how predicates
are mathematically represented.

In the introduction, we already gave a definition of
predicates. Namely, a predicate $A$
over variables $V$
is a subspace of $\elltwov{V}$,
and a cq-density operator $\rho\in\traceposcq{V}$
satisfies $A$
iff $\suppo\rho\subseteq A$.
Let us call a predicate according to this definition a \emph{simple
  predicate}.

While there is nothing wrong per se with simple predicates, it turns
out that we can optimize the representation of predicates somewhat.
For a simple predicate $A$,
let $A_m:=\{\psi\in\elltwov{\qu V}:\basis{\cl V}m\otimes\psi\in
A\}$. For any cq-density operator $\rho=\sum_m\proj{\basis{\cl V}m}\otimes\rho_m$, we then have
\begin{multline*}
  \rho\text{ satisfies }A
  \iff
  \suppo\rho\subseteq A
  \iff
  \forall m.\ \{\basis{\cl V}m\}\otimes\suppo\rho_m\subseteq A
  \iff
  \forall m.\ \rho_m\subseteq A_m.
\end{multline*}
This implies that a simple predicate $A$
is essentially described by the subspaces
$A_m\subseteq\elltwov{\qu V}$.\footnote{We
  say ``essentially'' here, because $A$
  is not uniquely determined by the $A_m$.
  But the set of cq-density operators that satisfy $A$
  are determined by the $A_m$.}

Thus, we will define predicates on $V$
not as subspaces of $\elltwov{V}$,
but as families of subspaces of $\elltwo{\qu V}$,
indexed by $\types{\cl V}$.

And in fact, a family indexed by $\types{\cl V}$
is most conveniently described as an expression (in the sense of
\autoref{sec:expr}) with free variables in $\cl V$. Namely,
if we represent a predicate as an expression $A$ with $\typee A\subseteq\elltwov{\qu V}$, we get the subspaces $A_m$ from above as~$\denotee Am$.

This leads  to the following definition:

\begin{definition}[Quantum predicates]\label{def:pred}
A (quantum) \emph{predicate}\index{predicate} $A$
over $V$
is an expression with $\fv(A)\subseteq\cl V$
and
$\typee A\subseteq\{S:S\text{ is a subspace of }\elltwov{\qu
  V}\}$.
\end{definition}

\begin{definition}[Satisfying predicates]\label{def:satisfy}
A state $\rho\in\traceposcq V$
\emph{satisfies}\index{satisfy!predicate}%
\index{predicate!satisfy} $A$ iff $\suppo\rho_m\subseteq \denotee Am$ for all $m$,
where $\rho=:\sum_{m\in\types{\cl V}}\pb\proj{\basis{\cl V}m}\tensor\rho_m$.
\end{definition}

This definition of predicates will make notation much more convenient
because it will be possible to express the properties of the quantum
variables in terms of the values of the classical variables.

  \paragraph{Examples of predicates.} We give some examples of
  predicates to illustrate the concepts involved. We only explain them
  intuitively here, the necessary definitions to understand them
  formally will be given in the following sections. Basically, an
  expression is a subspace of possible quantum states of the program's memory. In many
  cases, we want to say that a specific variable satisfies a
  predicate. E.g., we want to express that the state of variable $\qq$
  is in subspace $P$
  (where $P$
  is a subspace of $\elltwov\qq$).
  We will write this as $\lift P\qq$
  (and we will formally define this as a subspace later). I.e.,
  $\lift P\qq$
  means intuitively $\qq\in P$.
  Similarly, we can write $\lift{\SPAN\{\psi\}}\qq$
  to denote that $\qq$
  has state $\psi$.
  Since we can refer to classical variables in predicates, we can write, e.g.,
  $\lift{\SPAN\{\basis{}{\xx}\}}\qq$
  to denote that $\qq$
  has state $\basis{}{\xx}$ where $\xx$ is a classical variable.
  Or more advanced, 
  $\lift{\SPAN\{H^{\xx}\basis{}{0}\}}\qq$, where we apply $H$ to $\basis{}0$ iff $\xx=1$.

  We can combine predicates with logical connectives. In this paper we
  will use $+$
  to represent disjunction and $\cap$
  to represent conjunction (Birkhoff-von Neumann quantum
  logic).\footnote{But we stress that our definition of predicates
    allows us to use any other connectives, too, as long as the result is a subspace.}
  For example, 
  $\lift{\SPAN\{\basis{}{0}\}}\qq\cap
  \lift{\SPAN\{\basis{}{1}\}}\rr$ means that $\qq$ has state $\basis{}{0}$ and $\rr$ has state $\basis{}{1}$.

  Moreover, we often need to express facts about the classical
  variables alone. We have syntactic sugar $\CL{\dots}$
  for that. For example, $\CL{\xx=1}$
  means that the classical variable $\xx$
  has value $1$.
  We can write arbitrary formulas inside $\CL{\dots}$
  as long as they involve only classical variables. E.g.,
  $\CL{\forall z\in\xx.\ z\leq\yy}$
  would say that the $\xx$
  (a variable of type set) is upper-bounded by $\yy$.
  Note that predicates of the form $\CL\dots$
  can be combined with non-classical predicates using $+$ and $\cap$.

  Finally, we can even express that two quantum variables have the same
  content (formalized below) using expressions such as $\qq\quanteq\rr$.

  We stress that predicates are not restricted to what can be built
  from the above constructions. Any mathematical expression that
  describes a subspace is permissible. The above are merely a useful
  selection of common cases.

The following subsections will introduce various definitions and facts
about predicates. They are necessary for understanding the rules
defined later, but not for understanding the definition of qRHL
itself. A reader interested primarily in the latter may skip ahead to
\autoref{sec:qrhl} and come back later.

\subsection{Operations on predicates}

We now define a few elementary operations on predicates.

First, any operation that can be performed on subspaces is meaningful
on predicates, too. For example, the intersection $A\cap B$
is a the quantum analogue to the conjunction of classical predicates,
the sum $A+B$
is a quantum analogue to the disjunction of classical predicates, and
the orthogonal complement $\orth A$
is a quantum analogous to the negation of a classical predicate
(cf.~\autoref{lemma:cl.simps} below).

A word on notation: 
when we write $C:=A\odot B$
for some operation $\odot$,
we mean the predicate/expression $C$
with $\denotee Cm = \denotee{A}m\odot\denotee Bm$
for all $m$.
In particular, when $A$,
$B$
are written as formulas in some concrete syntax (with classical
variables as free variables), then $C$
is simply the formula written $A\odot B$ in that syntax.

\begin{lemma}\label{lemma:and.sat}
  A cq-density operator $\rho$ satisfies $A\cap B$ iff $\rho$ satisfies $A$ and $\rho$ satisfies $B$.
\end{lemma}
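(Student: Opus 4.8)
The plan is to unfold \autoref{def:satisfy} on both sides and reduce the claim to an elementary fact about subspaces, making essential use of the fact that the cq-decomposition of $\rho$ is common to all three satisfaction conditions. First I would fix the decomposition $\rho=\sum_{m\in\types{\cl V}}\proj{\basis{\cl V}m}\tensor\rho_m$ with $\rho_m\in\traceposv{\qu V}$, which is the one referenced in \autoref{def:satisfy}. Since this family $(\rho_m)_m$ is determined by $\rho$ alone and does not depend on which predicate we test against, the very same $\rho_m$ govern satisfaction of $A$, of $B$, and of $A\cap B$. By \autoref{def:satisfy}, $\rho$ satisfies $A$ iff $\suppo\rho_m\subseteq\denotee Am$ for all $m$, and analogously for $B$ and for $A\cap B$.

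Next I would invoke the notational convention for predicate operations stated just before this lemma: the predicate $A\cap B$ is the expression with $\denotee{A\cap B}m=\denotee Am\cap\denotee Bm$ for every $m$. Hence $\rho$ satisfies $A\cap B$ iff $\suppo\rho_m\subseteq\denotee Am\cap\denotee Bm$ for all $m$.

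The remaining step is purely set-theoretic: for any set $S$ and any subspaces $U,W$ one has $S\subseteq U\cap W$ iff $S\subseteq U$ and $S\subseteq W$. Applying this with $S=\suppo\rho_m$, $U=\denotee Am$, $W=\denotee Bm$, and then distributing the universal quantifier over $m$ across the conjunction, gives
\[
  \forall m.\ \suppo\rho_m\subseteq\denotee{A\cap B}m
  \quad\Longleftrightarrow\quad
  \bigl(\forall m.\ \suppo\rho_m\subseteq\denotee Am\bigr)
  \ \wedge\
  \bigl(\forall m.\ \suppo\rho_m\subseteq\denotee Bm\bigr),
\]
which by \autoref{def:satisfy} is exactly the statement that $\rho$ satisfies $A\cap B$ iff $\rho$ satisfies $A$ and $\rho$ satisfies $B$. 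There is no genuine obstacle here; the only point that deserves care is the observation that the index family $(\rho_m)_m$ is shared by all three satisfaction conditions, so that the equivalence can be carried out index by index and the distribution of $\forall m$ over the conjunction is justified.
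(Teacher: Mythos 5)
Your proof is correct and is precisely the argument the paper compresses into ``Immediate from the definition'': unfold \autoref{def:satisfy}, use the pointwise convention $\denotee{A\cap B}m=\denotee Am\cap\denotee Bm$, and apply $S\subseteq U\cap W\iff S\subseteq U\land S\subseteq W$ for each $m$. No differences in approach.
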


\begin{proof}
  Immediate from the definition.
\end{proof}

(Note that the analogue for $A+B$ does not hold.)

We can easily specify relations between predicates. A predicate $A$
implies another predicate $B$
iff $A\subseteq B$
for all assignments to the free variables of $A$ and $B$. That is:

\begin{definition}
  We say $A\subseteq B$ iff $\denotee Am\subseteq \denotee Bm$ for all $m$.
\end{definition}

\begin{lemma}\label{lemma:leq.sat}
  If $A\subseteq B$, and $\rho$ satisfies $A$, then $\rho$ satisfies $B$.
\end{lemma}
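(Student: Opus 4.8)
The plan is to unfold both definitions and observe that the statement reduces to transitivity of subspace inclusion, applied separately for each classical memory $m$. Write $\rho=\sum_{m\in\types{\cl V}}\proj{\basis{\cl V}m}\tensor\rho_m$ with $\rho_m\in\traceposv{\qu V}$, as in \autoref{def:satisfy}. Since $\rho$ satisfies $A$, the definition of satisfaction gives $\suppo\rho_m\subseteq\denotee Am$ for all $m$.

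Next I would invoke the hypothesis $A\subseteq B$, which by definition means $\denotee Am\subseteq\denotee Bm$ for every $m$. Chaining the two inclusions yields $\suppo\rho_m\subseteq\denotee Am\subseteq\denotee Bm$, hence $\suppo\rho_m\subseteq\denotee Bm$ for all $m$. This is precisely the condition in \autoref{def:satisfy} for $\rho$ to satisfy $B$, so we are done.

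There is no real obstacle here: the only content is that both $A\subseteq B$ and the satisfaction relation are defined pointwise over the classical memories $m$, so the desired conclusion follows by transitivity of $\subseteq$ on subspaces, with no need to manipulate the cq-decomposition beyond naming the components $\rho_m$. The proof is essentially immediate from the definitions.
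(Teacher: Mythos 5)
Your proof is correct and matches the paper's, which simply states the result is immediate from the definitions; you have just written out the pointwise chaining $\suppo\rho_m\subseteq\denotee Am\subseteq\denotee Bm$ explicitly. Nothing further is needed.
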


\begin{proof}
  Immediate from the definition.
\end{proof}

\medskip

Often we need to specify which quantum and classical variables a given
predicate refers to:
\begin{definition}[Locality]
  Fix variables $X\subseteq V$.
  A predicate $A$ on $V$
  is \index{local!predicate}\index{predicate!local}\emph{$X$-local}
  iff $\fv(A)\subseteq\cl{X}$,
  and for all $m\in\types{\cl X}$,
  $\denotee Am=S_m\tensor\elltwov{\qu V\setminus \qu X}$
  for some  subspace $S_m$ of $\elltwov{\qu X}$.
\end{definition}
Intuitively, a predicate is $X$-local
if we only need to look at classical and quantum variables in $X$
to decide whether the predicate is satisfied.

\medskip

\paragraph{Lifting of operators/subspaces.}
Often, we will need to express predicates like ``the state of
variables $\qq\in V$
is in subspace $S$''
where $\qq$
has type $T$
and $S\subseteq\elltwo{T}$.
Formally, this can be encoded as the predicate
$\Uvarnames \qq S\otimes\elltwov{\qu V\setminus \qq}$
(which is a subspace of $\elltwov{\qu V}$)
where $\Uvarnames \qq$
is the isomorphism $\elltwo{T}\to\elltwov{\qq}$
described in the preliminaries.  Since this notation is cumbersome, we
abbreviate it as $\lift S\qq$.

For example, if $\qq$
has type $\bit$,
then $\lift{\SPAN\{\basis{}0\}}{\qq}$
is the subspace spanned by all $\basis Vm$
with $m(\qq)=0$.
Informally, this is the predicate ``$\qq$ is in state $\basis{}0$.''

Similarly, we will often need to apply an operator
$A\in\bounded{T}$
to the (subsystem containing the) quantum variable $\qq$.
This is done by applying
$\Uvarnames \qq A\adj{\Uvarnames \qq}\otimes\idv{\qu V\setminus
  \qq}\in\boundedv{\qu V}$ to the overall
system. (Cf.~also the definition of the denotational
semantics of quantum application and measurements where similar
constructions occur.)
We abbreviate this as $\lift A\qq$.

For example, if $H$
is the Hadamard gate, and $\qq$
is of type $\bit$,
then applying $\lift H\qq$
to a state (in $\elltwov{\qu V}$) is the same as applying $H$
to the subsystem corresponding to variable $\qq$.

The following definition generalizes this to lists of variables:
\begin{definition}[Lifting]\label{def:lift}
  Fix a list $Q\subseteq\qu V$.
  Let $S\subseteq\elltwo{\typel Q}$
  be a subspace, and $A\in\bounded{\typel Q}$. Then:\symbolindexmark\lift
  \begin{equation*}
    \lift SQ  := \Uvarnames QS\otimes\elltwov{\qu
      V\setminus Q}
    \qquad\text{and}\qquad
    \lift AQ  := \Uvarnames QA\adj{\Uvarnames
               Q}\otimes\idv{\qu V\setminus Q}.
           \end{equation*}
\end{definition}
Here $V$ is a global set of variables, implicitly understood. For example, in the context of qRHL judgments (\autoref{sec:qrhl}),
this set will usually be $V_1V_2$, the union of the sets of all variables indexed with $1$ and $2$, respectively.

\paragraph{``Division'' of subspaces.}
Finally, we introduce one technical definition that will be needed to
express the precondition of quantum initialization statements (\ruleref{QInit1} below):
\begin{definition}\label{def:spaceat}
  Fix variables $W\subseteq V$,
  let $A\subseteq\elltwov{V}$
  be a subspace, and let $\psi\in\elltwov{W}$.
  Then
  \symbolindexmark\spaceat$\spaceat{A}{\psi}\subseteq\elltwov{V\setminus W}$ is defined by
    \[
      \phi\in\spaceat{A}{\psi}
      \qquad
      \text{iff}
      \qquad
      \phi\tensor\psi\in A.
    \]
  
\end{definition}
The notation $\spaceat{}{}$
is motivated by the fact that if $A=B\tensor\SPANO{\psi}$,
then $\spaceat A\psi=B$,
so $\spaceat{}{}$
is in a sense the division operation corresponding to the tensor
product $\tensor$.

The following simplification rule is useful for simplifying subgoals arising
from \ruleref{QInit1} (in combination with \ruleref{Conseq}):
\begin{lemma}\label{lemma:spacediv.leq}
  Let $A,B\subseteq\elltwov{\qu V}$ be subspaces. Let $Q\subseteq \qu V$ and $\psi\in\elltwov Q$.
  Assume that $A$ is $(\qu V\setminus Q)$-local.
  Then
  \begin{equation}
    \label{eq:spacediv.leq}
    A \subseteq (\spaceat{B}{\psi}) \tensor \elltwov Q
    \quad\iff\quad
    A \cap (\SPAN\{\psi\}\tensor \elltwov{\qu V\setminus Q}) \subseteq B.
  \end{equation}
\end{lemma}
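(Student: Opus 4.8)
The plan is to unfold the locality hypothesis and reduce both sides of \eqref{eq:spacediv.leq} to the single condition $S\subseteq\spaceat B\psi$. Write $W:=\qu V\setminus Q$, so that $\elltwov{\qu V}=\elltwov W\tensor\elltwov Q$. Since the word ``subspace'' means \emph{closed} subspace, unfolding $(\qu V\setminus Q)$-locality of $A$ lets me write $A=S\tensor\elltwov Q$ for some closed subspace $S\subseteq\elltwov W$. The case $\psi=0$ is degenerate and I would dispose of it first: then $\spaceat B\psi=\elltwov W$ (because $0\in B$ always) and $\SPAN\{\psi\}=\{0\}$, so the left-hand side reads $A\subseteq\elltwov{\qu V}$ and the right-hand side reads $\{0\}\subseteq B$, both trivially true. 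Hence I may assume $\psi\neq0$.

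The key technical ingredient is the elementary fact that for a closed subspace $S\subseteq\elltwov W$, a vector $\xi\in\elltwov W$, and a \emph{nonzero} $\eta\in\elltwov Q$, we have $\xi\tensor\eta\in S\tensor\elltwov Q$ iff $\xi\in S$. I would prove this via the orthogonal projector: the projector onto $S\tensor\elltwov Q$ is $P_S\tensor\idv Q$, where $P_S$ projects onto $S$; hence $\xi\tensor\eta\in S\tensor\elltwov Q$ iff $(P_S\tensor\idv Q)(\xi\tensor\eta)=\xi\tensor\eta$, i.e.\ $(P_S\xi-\xi)\tensor\eta=0$, which for $\eta\neq0$ forces $P_S\xi=\xi$, i.e.\ $\xi\in\im P_S=S$.

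Using this fact twice reduces the lemma to the common condition. For the left-hand side, $S\tensor\elltwov Q\subseteq(\spaceat B\psi)\tensor\elltwov Q$ holds iff $S\subseteq\spaceat B\psi$: the forward direction picks any $\xi\in S$ and a nonzero $\eta\in\elltwov Q$ and applies the membership fact to obtain $\xi\in\spaceat B\psi$, while the backward direction is monotonicity of $\tensor$. For the right-hand side, I first compute the intersection. Since $\tensor$ is commutative, $\SPAN\{\psi\}\tensor\elltwov{\qu V\setminus Q}=\elltwov W\tensor\SPAN\{\psi\}$, and I claim
\[
  (S\tensor\elltwov Q)\cap(\elltwov W\tensor\SPAN\{\psi\})=S\tensor\SPAN\{\psi\}.
\]
The inclusion $\supseteq$ is clear; for $\subseteq$, any element of $\elltwov W\tensor\SPAN\{\psi\}$ has the form $\xi\tensor\psi$, and membership in $S\tensor\elltwov Q$ then forces $\xi\in S$ by the same projector argument (using $\psi\neq0$). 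Finally, because $B$ is closed and $S\tensor\SPAN\{\psi\}$ is the closed span of $\{\xi\tensor\psi:\xi\in S\}$, the inclusion $S\tensor\SPAN\{\psi\}\subseteq B$ is equivalent to $\xi\tensor\psi\in B$ for all $\xi\in S$, which is exactly $S\subseteq\spaceat B\psi$ by the definition of $\spaceat{}{}$. Thus both sides of \eqref{eq:spacediv.leq} are equivalent to $S\subseteq\spaceat B\psi$, proving the lemma.

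I expect the main obstacle to be the careful handling of the topological closures in the tensor products: the projector identity $P_{S\tensor\elltwov Q}=P_S\tensor\idv Q$ and the reduction of the subspace inclusion $S\tensor\SPAN\{\psi\}\subseteq B$ to a statement about the elementary tensors $\xi\tensor\psi$ both rely on $S$ and $B$ being closed subspaces rather than mere algebraic spans. Everything else — commutativity and associativity of the variable-labelled tensor, and monotonicity of $\tensor$ — is bookkeeping that follows directly from the preliminaries.
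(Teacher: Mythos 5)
Your proof is correct and rests on the same ingredients as the paper's: the locality decomposition $A=S\tensor\elltwov Q$, the separate treatment of $\psi=0$, the cancellation of a nonzero $\psi$ from an elementary tensor, and the definition of $\spaceat{}{}$; the paper merely proves the two implications element-wise (decomposing a general $\phi\in A$ as $\sum_i\phi_{Ai}\otimes\phi_{Qi}$ in the backward direction) instead of factoring both sides through the common condition $S\subseteq\spaceat{B}{\psi}$ as you do. The one step worth making explicit is that applying your projector argument to $(\spaceat{B}{\psi})\tensor\elltwov Q$ requires $\spaceat{B}{\psi}$ to be a \emph{closed} subspace, which holds because for $\psi\neq0$ it is the preimage of the closed subspace $B$ under the bounded map $\phi\mapsto\phi\tensor\psi$.
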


\begin{proof}
  Without loss of generality we can assume that $\psi\neq0$,
  since for $\psi=0$
  both sides of \eqref{eq:spacediv.leq} are vacuously true.

  \medskip
  
  We first show the ``$\Longrightarrow$''
  direction. So assume that the lhs of \eqref{eq:spacediv.leq} holds, and
  fix
  $\phi\in A \cap (\SPAN\{\psi\}\tensor
  \elltwov{\qu V\setminus Q})$. We need to show $\phi\in B$.
  Since
  $\phi\in \SPAN\{\psi\}\tensor \elltwov{\qu V\setminus Q}$,
  we can decompose $\phi$
  as $\phi=\phi'\tensor\psi$
  with $\phi'\in\elltwov{\qu V\setminus Q}$.
  Since $\phi\in A$,
  from the lhs of \eqref{eq:spacediv.leq} we get
  $\phi'\tensor\psi=\phi\in (\spaceat{B}{\psi})
  \tensor \elltwov Q$.  Since $\psi\neq0$,
  this implies $\phi'\in \spaceat{B}{\psi}$.
  By definition of $\spaceat{}{}$,
  this means $\phi=\phi'\otimes\psi\in B$.
  This shows the ``$\Longrightarrow$'' direction.

  \medskip

  We show the ``$\Longleftarrow$''
  direction. So assume that the rhs of \eqref{eq:spacediv.leq} holds,
  and fix $\phi\in A$.
  We need to show
  $\phi\in (\spaceat{B}{\psi}) \tensor \elltwov Q$.
  Since $A$
  is $(\qu V\setminus Q)$-local,
  we can decompose $A$
  as $A=A'\otimes\elltwov Q$
  for some $A'\subseteq\elltwov{\qu V\setminus Q}$.
  Thus $\phi\in A$
  can be decomposed as $\phi=\sum_i\phi_{Ai}\otimes\phi_{Qi}$
  with $\phi_{Ai}\in A'$ and $\phi_{Qi}\in\elltwov{Q}$.

  For any $i$,
  we have: Since $\phi_{Ai}\in A'$,
  $\phi_{Ai}\otimes\psi\in A$.
  And
  $\phi_{Ai}\otimes\psi\in \SPAN\{\psi\}\otimes\elltwov{\qu V\setminus Q}$.  Thus the rhs of
  \eqref{eq:spacediv.leq} implies that
  $\phi_{Ai}\otimes\psi\in B$.
  By definition of $\spaceat{}{}$,
  this implies $\phi_{Ai}\in \spaceat B{\psi}$.
  Thus
  $\phi_{Ai}\otimes\phi_{Qi}\in (\spaceat{B}{\psi})
  \tensor \elltwov Q$.

  Since this holds for any $i$,
  and $ (\spaceat{B}{\psi}) \tensor \elltwov Q$
  is a subspace, we also have
  $\phi=\sum_i\phi_{Ai}\otimes\phi_{Qi}\in (\spaceat{B}{\psi}) \tensor \elltwov Q$.

  This shows the ``$\Longleftarrow$'' direction.
\end{proof}

\paragraph{Total programs.}
Sometimes we need to specify that a program is total, i.e., that it
terminates with probability $1$.
In case the totality holds only for certain initial states, the
following definition can be used:
\begin{definition}[Total programs]\label{def:prog.total}
  A program $\bc$
  is \index{total!program, on $A$}\emph{total on $A$}
  if for every $\rho\in\traceclcq{V}$
  that satisfies $A$, we have $\tr\denotc{\bc}(\rho)=\tr\rho$.
\end{definition}

\subsection{Classical predicates}

In many cases, we need predicates that only talk about classical
variables.  The following definition allows us to do so:
\begin{definition}[Classical predicates]\label{def:cla}
  We define \symbolindexmark\CL$\CL{b}$ for $b\in\bool$ as follows:
  $\CL{\true}:= \elltwov{\qu V}$
  and $\CL\false:=0$ (the trivial subspace).
  (Here we assume that the set $V$ of variables is clear from the context.)
\end{definition}

\begin{comment}
  For a Boolean expression $e$,
  $\CL{e}$
  basically encodes the classical predicate $e$
  on classical variables. We defer some additional lemmas formalizing
  this correspondence to \autoref{app:predicates}.
\end{comment}

The next lemma explains why $\CL\dots$
allows us to encode classical predicates:
\begin{lemma}\label{lemma:cl.rho}
  Fix an expression $e$  with $\typee e\subseteq\bool$ and $\fv(e)\subseteq \cl V$.
  Then $\CL e$ is a predicate. And for any 
  $\rho=\sum_m\proj{\basis{\cl V}m}\tensor\rho_m\in\traceposcq V$, we have:
  \par
  $\rho$ satisfies $\CL e$ iff $\denotee em=\true$ for all $m$ with $\rho_m\neq 0$.
\end{lemma}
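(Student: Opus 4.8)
The plan is to unfold the two relevant definitions --- $\CL{\cdot}$ (Definition~\ref{def:cla}) and predicate satisfaction (Definition~\ref{def:satisfy}) --- after which both parts of the claim follow essentially by bookkeeping, with only one genuinely non-formal fact required.

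First I would verify that $\CL e$ is a predicate in the sense of Definition~\ref{def:pred}. By the notational convention for pointwise-defined expressions, $\denotee{\CL e}m = \CL{\denotee em}$, which by Definition~\ref{def:cla} equals $\elltwov{\qu V}$ when $\denotee em = \true$ and equals $0$ when $\denotee em = \false$. Since $\typee e\subseteq\bool$, these are the only two cases, and both $\elltwov{\qu V}$ and $0$ are subspaces of $\elltwov{\qu V}$; hence $\typee{\CL e}\subseteq\{S : S\text{ is a subspace of }\elltwov{\qu V}\}$. Together with $\fv(\CL e)=\fv(e)\subseteq\cl V$ (which holds by assumption), this establishes that $\CL e$ meets Definition~\ref{def:pred}.

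For the characterization I would apply Definition~\ref{def:satisfy}: writing $\rho=\sum_m\proj{\basis{\cl V}m}\tensor\rho_m$ with each $\rho_m\in\traceposv{\qu V}$ (positivity of the blocks follows from $\rho\in\traceposcq V$ and the mutual orthogonality of the projectors $\proj{\basis{\cl V}m}$), the state $\rho$ satisfies $\CL e$ iff $\suppo{\rho_m}\subseteq\denotee{\CL e}m$ for all $m$. I would then split on the value of $\denotee em$. When $\denotee em=\true$ the right-hand side is the whole space $\elltwov{\qu V}$, so the inclusion is vacuous. When $\denotee em=\false$ the right-hand side is $0$, so the inclusion reads $\suppo{\rho_m}=0$. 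Thus $\rho$ satisfies $\CL e$ iff $\rho_m=0$ for every $m$ with $\denotee em=\false$, which is exactly the contrapositive of the stated condition ``$\denotee em=\true$ for all $m$ with $\rho_m\neq 0$'' (using once more that $\denotee em\in\{\true,\false\}$).

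The only step beyond definition-chasing is the equivalence $\suppo{\rho_m}=0\iff\rho_m=0$ for a positive trace-class operator $\rho_m$, and this is where I expect the (very mild) work to lie. It follows immediately from the definition of support via the smallest projector $P$ with $P\rho_m P=\rho_m$: if $\rho_m=0$ then $P=0$ is admissible, so $\suppo{\rho_m}=\im 0=0$, whereas if $\suppo{\rho_m}=0$ then $P=0$ and hence $\rho_m=P\rho_m P=0$.
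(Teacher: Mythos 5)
Your proposal is correct and follows essentially the same route as the paper, whose proof is simply the observation that $\suppo\rho_m\subseteq\CL\false=0$ iff $\rho_m=0$ while $\suppo\rho_m\subseteq\CL\true=\elltwov{\qu V}$ holds trivially; you have merely spelled out the definition-unfolding and the case split that the paper leaves implicit.
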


\begin{proof}
  Immediate from the facts that $\suppo\rho_m\in\CL\false=0$
  iff $\rho_m=0$,
  and that $\suppo\rho_m\in\CL{\true}=\elltwov{\qu V}$
   for all $\rho_m$.
\end{proof}

Predicates of the form $\CL\dots$
also illustrate why intersection and sum of subspaces are the
analogues to conjunction and disjunction:
\begin{lemma}\label{lemma:cl.simps}
  For expressions $e$,
  $f$ with $\typee e,\typee f\subseteq\bool$, and expressions $g$, we have:
  \begin{compactenum}[(i)]
    \item\label{and} $\CL e\cap\CL f=\CL{e\land f}$.
    \item\label{or} $\CL e+\CL f=\CL{e\lor f}$.
    \item\label{not} $\orth{\CL e}=\CL{\lnot e}$.
    \item\label{forall} $\bigcap_{z\in g} \CL{e} = \CL{\forall z\in g.\ e}$.
      (Here we assume that $\typee g\subseteq\typee z$.
      $z$ may occur free in $e$.)
  \end{compactenum}
\end{lemma}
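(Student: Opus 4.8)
The plan is to reduce each of the four identities to a pointwise statement about subspaces and then dispatch it by a finite case distinction. Recall that by the notational convention introduced just before \autoref{lemma:and.sat}, an equality $C = A\odot B$ of predicates means $\denotee{C}{m}=\denotee{A}{m}\odot\denotee{B}{m}$ for every $m$; equality of two predicates therefore amounts to equality of their denotations at every $m$. So in each part it suffices to fix an arbitrary $m\in\types{\cl V}$ and verify the equality of the two subspaces of $\elltwov{\qu V}$ obtained by evaluating both sides at $m$. The crucial simplification is that, by \autoref{def:cla}, $\denotee{\CL e}m$ takes only the two extreme values of the subspace lattice: it equals the top element $\elltwov{\qu V}$ when $\denotee e m=\true$, and the bottom element $0$ when $\denotee e m=\false$. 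I will use only the elementary lattice facts that, for any subspace $S\subseteq\elltwov{\qu V}$, one has $\elltwov{\qu V}\cap S=S$, $0\cap S=0$, $\elltwov{\qu V}+S=\elltwov{\qu V}$ (recall $A+B:=\SPAN(A\cup B)$), and $0+S=S$, together with $\orth{(\elltwov{\qu V})}=0$ and $\orth 0=\elltwov{\qu V}$.

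With this setup, parts \eqref{and}--\eqref{not} are immediate. For \eqref{and} I fix $m$ and split into the four cases given by the truth values of $\denotee e m$ and $\denotee f m$; in each case $\denotee{\CL e}m\cap\denotee{\CL f}m$ evaluates to $\elltwov{\qu V}$ exactly when both are $\true$, which is precisely when $\denotee{e\land f}m=\true$, and to $0$ otherwise. Part \eqref{or} is the dual computation with $\cap$ replaced by $+$: the sum is $\elltwov{\qu V}$ as soon as at least one summand is, matching $\denotee{e\lor f}m=\true$. Part \eqref{not} needs only the two cases for $\denotee e m$, using $\orth{(\elltwov{\qu V})}=0$ and $\orth 0=\elltwov{\qu V}$ to match the value of $\denotee{\lnot e}m$.

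Part \eqref{forall} is the only one requiring slightly more care, since the index variable $z$ is bound by the quantifier and ranges over the set $\denotee g m$. After fixing $m$, I would read $\denotee{\bigcap_{z\in g}\CL e}{m}$ as $\bigcap_{a\in\denotee g m}\denotee{\CL e}{m(z:=a)}$, where $m(z:=a)$ is $m$ extended by binding $z$ to $a$; each factor is $\elltwov{\qu V}$ if $e$ evaluates to $\true$ under this binding and $0$ otherwise. An intersection of a family of subspaces each drawn from $\{0,\elltwov{\qu V}\}$ equals $\elltwov{\qu V}$ precisely when every factor is $\elltwov{\qu V}$, i.e.\ when $e$ holds for all $a\in\denotee g m$, which is exactly the condition $\denotee{\forall z\in g.\ e}m=\true$; otherwise the intersection is $0$. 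The one edge case worth flagging --- and the only place where the argument is more than bookkeeping --- is $\denotee g m=\varnothing$: there the empty intersection of subspaces is by convention the top element $\elltwov{\qu V}$, which agrees with the vacuous truth of $\forall z\in\varnothing.\ e$. I expect this empty-index convention, rather than any genuine lattice-theoretic content, to be the only subtle point in the whole lemma.
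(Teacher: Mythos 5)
Your proof is correct and follows essentially the same route as the paper's: fix $m$, use that $\denotee{\CL e}m\in\{0,\elltwov{\qu V}\}$, and do a finite case split on truth values for (i)--(iii) and on whether some $a\in\denotee gm$ falsifies $e$ for (iv). Your explicit note about the empty-intersection convention is a small amount of extra care the paper's proof leaves implicit, but it is the same argument.
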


\begin{proof}
  To prove \eqref{and},
  we need to show $\denotee{\CL e\cap\CL f}m=\denotee{\CL{e\land f}}m$
  which follows by checking all four cases for
  $\denotee em,\denotee fm$. Analogously for \eqref{or} and \eqref{not}.
  
  For \eqref{forall}, we need to show
  $\denotee{\bigcap_{z\in g} \CL{e}}m = \denotee{\CL{\forall z\in g.\
      e}}m$.  We distinguish two cases. If there exists an
  $a\in\denotee gm$
  with $\denotee{e\{a/z\}}m=\false$,
  then both sides of the equation will be $0$.
  And otherwise, both sides will be $\elltwov{V}$.
\end{proof}

We can also use classical predicates to conveniently define what it
means for a program to have only readonly access to certain variables:
\begin{definition}[Readonly variables]
  Let $X$
  be a set of classical variables. A program is
  \index{readonly}\emph{$X$-readonly}
  iff for all $\xx\in X$,
  all $z\in\typev\xx$,
  all $\rho$
  that satisfy $\CL{\xx=z}$,
  we have that $\denotc\bc(\rho)$ satisfies $\CL{\xx=z}$.
\end{definition}
Of course, in most cases one will use a simple sufficient syntactic
  criterion: if no variable from $X$
  occurs on the lhs of an assignent, sampling, or measurement, then
  the program is $X$-readonly.

\subsection{Quantum equality}
\label{sec:quantum.eq}

One predicate that is very specific to the setting of qRHL is that of
quantum equality.  As discussed in the introduction
(\autoref{sec:intro.qrhl}), we need a predicate $Q_1\quanteq Q_2$ 
that encodes the intuitive fact that the quantum variables $Q_1$
contain the same state as the quantum variables $Q_2$.
If we want $Q_1\quanteq Q_2$
to be a predicate in our sense (i.e., a subspace), there seems to be only one
possible definition, namely, $Q_1\quanteq Q_2$
is the space of all quantum states that are invariant under swapping
the content of $Q_1$
and $Q_2$.
(The fact that this is the only possibility is a conjecture only,
but we give some formal evidence for this in
\autoref{sec:qeq.unique}.)

How do we write the unitary $U$
that swaps $Q_1$
and $Q_2$?
We cannot simply define $U$
by $U(\psi_1\otimes\psi_2)=\psi_2\otimes\psi_1$
for $\psi_1\in\elltwov{Q_1},\psi_2\in\elltwov{Q_2}$,
since $\elltwov{Q_1}$
and $\elltwov{Q_2}$
are different (but isomorphic) spaces.  The canonical isomorphism from
$\elltwov{Q_1}$
to $\elltwov{Q_2}$
is $\Uvarnames{Q_2}\adj{\Uvarnames{Q_1}}$,
and the canonical isomorphism from $\elltwov{Q_2}$
to $\elltwov{Q_1}$
is $\Uvarnames{Q_1}\adj{\Uvarnames{Q_2}}$.
We can then define the swap by
$U(\psi_1\otimes\psi_2)=\Uvarnames{Q_1}\adj{\Uvarnames{Q_2}}\psi_2\otimes\Uvarnames{Q_2}\adj{\Uvarnames{Q_1}}\psi_1$.
And since the tensor product is commutative (in our formalization
which uses variables to identify the different factors), we can write
this as:
$U(\psi_1\otimes\psi_2)=\Uvarnames{Q_2}\adj{\Uvarnames{Q_1}}\psi_1
\otimes \Uvarnames{Q_1}\adj{\Uvarnames{Q_2}}\psi_2$.  That is, the
swap of $Q_1$
and $Q_2$
is simply
$U=\Uvarnames{Q_2}\adj{\Uvarnames{Q_1}} \otimes
\Uvarnames{Q_1}\adj{\Uvarnames{Q_2}}$.  If $Q_1Q_2$
is not the set of all quantum variables $V$, we additionally need to
express that $U$ does not change the other variables, and thus we get
\[
  U := \underbrace{\Uvarnames{Q_2}\adj{\Uvarnames{Q_1}}}_{{}\in\univ{Q_1,Q_2}}
  \tensor
  \underbrace{\Uvarnames{Q_1}\adj{\Uvarnames{Q_2}}}_{{}\in\univ{Q_2,Q_1}}
  \tensor
  \underbrace{\idv{\qu V\setminus Q_1Q_2}}_{\hskip-1cm{}\in\univ{\qu V\setminus Q_1Q_2}\hskip-1cm}
  \in\univ{\qu V}.
\]

We can now define $(Q_1\quanteq Q_2)\subseteq \elltwov{\qu V}$ as the subspace fixed by $U$
(i.e., the set of all $\psi\in\elltwov{\qu V}$ with $U\psi=\psi$).

Before we do so (\autoref{def:quanteq.simple} below), however, let us
generalize the concept of quantum equality somewhat.  Consider the
following classical equality: $f(\xx_1)=g(\xx_2)$.
This means that if we apply $f$
to $\xx_1$,
we get the same value as when applying $g$
to $\xx_2$.
We would like to express something like this also using quantum
equality. Namely, we wish to express that $Q_1$,
when applying an isometry $U_1$,
has the same content as $Q_2$
when applying an isometry $U_2$.
(The quantum equality $Q_1\quanteq Q_2$
we discussed so far is a special case of this with $U_1=U_2=\id$.)
To model this, we define the swap somewhat differently, namely,
instead of $\Uvarnames{Q_1}$
(which simply maps $\psi_1\in\elltwov{Q_1}$
to $\elltwo{\typel{Q_1}}$),
we use the morphism $U_1\Uvarnames{Q_1}$
(which maps $\psi_1\in\elltwov{Q_1}$
to $\elltwo{\typel{Q_1}}$
and then applies $U_1$),
and instead of $\Uvarnames{Q_2}$ we use $U_2\Uvarnames{Q_2}$.
Thus we get the following operation $U'\in\boundedv{\qu V}$ instead of~$U$:
\[
  U' :=
  \underbrace{\Uvarnames{Q_2}\adj{U_2}U_1\adj{\Uvarnames{Q_1}}}_{{}\in\boundedv{Q_1,Q_2}}
  \tensor
  \underbrace{\Uvarnames{Q_1}\adj{U_1}U_2\adj{\Uvarnames{Q_2}}}_{{}\in\boundedv{Q_2,Q_1}}
  \tensor
  \underbrace{\idv{\qu V\setminus Q_1Q_2}}_{\hskip-1cm{}\in\boundedv{\qu V\setminus Q_1Q_2}\hskip-1cm}
  \in\boundedv{\qu V},
\]
and we define
$U_1Q_1\quanteq U_2Q_2$ to consist of all $\psi$ fixed by $U'$:
\begin{definition}[Quantum equality]\label{def:quanteq}
  Let $V$
  be a set of variables, and $Q_1,Q_2\subseteq\qu V$
  be disjoint lists of distinct quantum variables.  Let $Z$
  be a set.
  Let $U_1\in\bounded{\typel{Q_1},Z}$
  and $U_2\in\bounded{\typel{Q_2},Z}$.
  \par
  Then \symbolindexmark\quanteq
  $(U_1Q_1\quanteq U_2Q_2)\subseteq \elltwov{\qu V}$
  is defined as the subspace fixed by
  \[
    \underbrace{\Uvarnames{Q_2}\adj{U_2}U_1\adj{\Uvarnames{Q_1}}}_{{}\in\boundedv{Q_1,Q_2}}
    \tensor
    \underbrace{\Uvarnames{Q_1}\adj{U_1}U_2\adj{\Uvarnames{Q_2}}}_{{}\in\boundedv{Q_2,Q_1}}
    \tensor
    \underbrace{\idv{\qu V\setminus Q_1Q_2}}_{\hskip-1cm{}\in\boundedv{\qu V\setminus Q_1Q_2}\hskip-1cm}
    \in\boundedv{\qu V}.
  \]
\end{definition}
Note that we allow $U_1,U_2$
to be arbitary bounded operators (instead of isometries).  While we
cannot give an intuitive meaning to $U_1Q_1\quanteq U_2Q_2$
for non-isometries $U_1,U_2$,
it turns out to be convenient to allow non-isometries $U_1,U_2$
in the definition because some simplification rules for
$U_1Q_1\quanteq U_2Q_2$
can then be stated without extra premises (e.g., \autoref{lemma:qeq.move}), and
nothing is lost by the additional generality.

\medskip

From the general definition of $U_1Q_1\quanteq U_2Q_2$,
we can recover the special case $Q_1\quanteq Q_2$ that we started with by setting $U_1:=U_2:=\id$:
\begin{definition}\label{def:quanteq.simple}
  Let $Q_1,Q_2$
  be disjoint lists of distinct quantum variables
  with $\typel{Q_1}=\typel{Q_2}$.
  Then $(Q_1\quanteq Q_2):=(\id\,Q_1\quanteq \id\,Q_2)$
  where $\id$ is the identity on $\typel{Q_1}$.
\end{definition}

The following lemma gives a characterization of the quantum equality
on separable states. This characterization will make the proofs of
rules involving quantum equality much simpler.
\begin{lemma}\label{lemma:quanteq}
  Let $Q_1\subseteq\qu{ V_1}$,
  $Q_2\subseteq \qu{ V_2}$,
  $U_1\in\boundedleq{\typel{Q_1},Z}$,
  and $U_2\in\boundedleq{\typel{Q_2},Z}$ for some set $Z$.
  Let $\psi_1\in\elltwov{\qu{V_1}}$, $\psi_2\in\elltwov{\qu{V_2}}$ be normalized.

  If $\psi_1\tensor\psi_2\in(U_1Q_1\quanteq U_2Q_2)$
  then there exist normalized $\psi_1^Q\in\elltwov{Q_1}$,
  $\psi_1^Y\in\elltwov{\qu{V_1}\setminus Q_1}$,
  $\psi_2^Q\in\elltwov{Q_2}$,
  $\psi_2^Y\in\elltwov{\qu{V_2}\setminus Q_2}$
  such that $U_1\adj{\Uvarnames{Q_1}}\psi^Q_1= U_2\adj{\Uvarnames{Q_2}}\psi^Q_2$
  and $\psi_1=\psi^Q_1\tensor\psi^Y_1$
  and $\psi_2=\psi^Q_2\tensor\psi^Y_2$.

  If $U_1, U_2$ are isometries, the converse holds as well.
\end{lemma}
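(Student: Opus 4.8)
The plan is to reduce everything to the bipartitions $Q_1\mid Y_1$ (with $Y_1:=\qu{V_1}\setminus Q_1$) and $Q_2\mid Y_2$ (with $Y_2:=\qu{V_2}\setminus Q_2$), and to rewrite the defining operator of the quantum equality in a symmetric form. Writing $f_i:=U_i\adj{\Uvarnames{Q_i}}\colon\elltwov{Q_i}\to\elltwo Z$, one has $\adj{f_1}=\Uvarnames{Q_1}\adj{U_1}$ and $\adj{f_2}=\Uvarnames{Q_2}\adj{U_2}$, so the operator from \autoref{def:quanteq} equals $U'=S\tensor\idv{Y_1Y_2}$ with $S:=\adj{f_2}f_1\tensor\adj{f_1}f_2\in\boundedv{Q_1Q_2}$, acting by $S(a\tensor c)=(\adj{f_1}f_2\,c)\tensor(\adj{f_2}f_1\,a)$ for $a\in\elltwov{Q_1}$, $c\in\elltwov{Q_2}$. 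Membership $\psi_1\tensor\psi_2\in(U_1Q_1\quanteq U_2Q_2)$ is then exactly the fixed-point equation $U'(\psi_1\tensor\psi_2)=\psi_1\tensor\psi_2$. I also record that $\norm{f_1},\norm{f_2}\le1$, since $U_1,U_2$ have operator norm at most $1$ and the $\Uvarnames{}$ are unitary.

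For the forward direction I would first apply \autoref{lemma:schmidt} to write $\psi_1=\sum_i\lambda_i a_i\tensor b_i$ and $\psi_2=\sum_j\mu_j c_j\tensor d_j$ (Schmidt over $Q_1\mid Y_1$ and $Q_2\mid Y_2$, with $\lambda_i,\mu_j>0$, the $a_i,c_j$ orthonormal in $\elltwov{Q_i}$ and the $b_i,d_j$ orthonormal in $\elltwov{Y_i}$). Expanding $\psi_1\tensor\psi_2=\sum_{i,j}\lambda_i\mu_j\,(a_i\tensor c_j)\tensor(b_i\tensor d_j)$ and applying $U'$ moves $S$ onto the $Q_1Q_2$-factor while fixing each $b_i\tensor d_j$. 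As $\{b_i\tensor d_j\}_{i,j}$ is orthonormal in $\elltwov{Y_1Y_2}$, pairing the fixed-point equation with $b_{i}\tensor d_{j}$ on the $Y_1Y_2$-part and dividing by $\lambda_i\mu_j>0$ yields, for every $i,j$, the pointwise identity $(\adj{f_1}f_2\,c_j)\tensor(\adj{f_2}f_1\,a_i)=a_i\tensor c_j$. The right-hand side is a nonzero simple tensor, so both left factors are nonzero; in particular $\adj{f_1}f_2\,c_j$ is a nonzero scalar multiple of $a_i$ for every $i$. Since the $a_i$ are orthonormal, at most one index $i$ can occur, and symmetrically at most one $j$; hence both Schmidt sums have a single term and (by normalization) $\psi_1=\psi_1^Q\tensor\psi_1^Y$, $\psi_2=\psi_2^Q\tensor\psi_2^Y$ with all four factors normalized.

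It remains to arrange $f_1\psi_1^Q=f_2\psi_2^Q$. From the single-term identity $(\adj{f_1}f_2\,\psi_2^Q)\tensor(\adj{f_2}f_1\,\psi_1^Q)=\psi_1^Q\tensor\psi_2^Q\ne0$ I obtain scalars with $\adj{f_1}f_2\,\psi_2^Q=\alpha\psi_1^Q$ and $\adj{f_2}f_1\,\psi_1^Q=\beta\psi_2^Q$, $\alpha\beta=1$. Taking inner products gives $\alpha=\langle f_1\psi_1^Q,f_2\psi_2^Q\rangle$ and $\beta=\overline\alpha$, so $\abs\alpha^2=\alpha\beta=1$. Then Cauchy--Schwarz with $\norm{f_1\psi_1^Q},\norm{f_2\psi_2^Q}\le1$ gives $\abs\alpha=\abs{\langle f_1\psi_1^Q,f_2\psi_2^Q\rangle}\le\norm{f_1\psi_1^Q}\,\norm{f_2\psi_2^Q}\le1$ with equality throughout, forcing both images to be unit vectors that are parallel; explicitly $f_2\psi_2^Q=\alpha\,f_1\psi_1^Q$. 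Finally I absorb this unit phase by the gauge freedom of a product state: replacing $\psi_1^Q$ by $\alpha\psi_1^Q$ and $\psi_1^Y$ by $\overline\alpha\psi_1^Y$ leaves $\psi_1$ and all norms unchanged while making $f_1\psi_1^Q=f_2\psi_2^Q$, as required.

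The converse is a short computation under the isometry hypothesis. If $U_1,U_2$ are isometries then $\adj{f_i}f_i=\Uvarnames{Q_i}(\adj{U_i}U_i)\adj{\Uvarnames{Q_i}}=\idv{Q_i}$, so from $f_1\psi_1^Q=f_2\psi_2^Q$ I read off $\adj{f_1}f_2\,\psi_2^Q=\adj{f_1}f_1\,\psi_1^Q=\psi_1^Q$ and $\adj{f_2}f_1\,\psi_1^Q=\psi_2^Q$; hence $S(\psi_1^Q\tensor\psi_2^Q)=\psi_1^Q\tensor\psi_2^Q$ and $U'(\psi_1\tensor\psi_2)=\psi_1\tensor\psi_2$. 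I expect the only genuine obstacle to lie in the forward direction: first the orthonormality argument collapsing both Schmidt ranks to one, and then the phase bookkeeping, where the bound $\norm{U_i}\le1$ is exactly what upgrades the a priori unit-modulus scalar into true equality of the two images, up to the gauge phase that the product structure lets us discard.
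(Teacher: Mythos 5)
Your proof is correct and follows the same global strategy as the paper's: Schmidt-decompose $\psi_1$ and $\psi_2$ across $Q_i\mid Y_i$, use orthonormality of the $Y_1Y_2$-factors to extract the pointwise fixed-point identities $(\adj{f_1}f_2\,c_j)\tensor(\adj{f_2}f_1\,a_i)=a_i\tensor c_j$, collapse both Schmidt ranks to one, and absorb a unit phase into the product structure; the converse is the same direct computation. The one place you genuinely diverge is the step upgrading ``$\adj{f_2}f_1\psi_1^Q$ is proportional to $\psi_2^Q$'' to actual equality of the images $f_1\psi_1^Q=f_2\psi_2^Q$ up to phase: the paper argues via the positive contractions $P_i=\Hat U_i\adj{\Hat U_i}$, showing $\norm{\Hat U_1\psi_1^Q}=\norm{P_2\Hat U_1\psi_1^Q}$ and deducing that $\Hat U_1\psi_1^Q$ is a $1$-eigenvector of $P_2$, whereas you identify $\alpha=\langle f_1\psi_1^Q,f_2\psi_2^Q\rangle$, combine $\abs\alpha=1$ with $\norm{f_1\psi_1^Q},\norm{f_2\psi_2^Q}\le1$, and invoke the equality case of Cauchy--Schwarz. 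Both arguments use the hypothesis $\norm{U_i}\le1$ in the same essential way, but yours is shorter and makes the role of that hypothesis more transparent. (Your gauge transformation acts on $\psi_1^Q,\psi_1^Y$ rather than on $\psi_2^Q,\psi_2^Y$ as in the paper; this is immaterial.)
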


Note that this lemma assumes that $Q_1$
  and $Q_2$
  are in different subsystems that are not entangled (in states
  $\psi_1,\psi_2$,
  respectively); this will be the case in qRHL judgments as defined
  in \autoref{sec:qrhl} below. The lemma shows that in that case,
  $U_1Q_1\quanteq U_2Q_2$
  implies that the variables $Q_1$ and $Q_2$ are not entangled
  with any other variables.

\begin{proof}
  Throughout this proof, let $\Hat U_1:=U_1\adj{\Uvarnames{Q_1}}$
  and $\Hat U_2:=U_2\adj{\Uvarnames{Q_2}}$.
  Since $U_1,U_2$
  have operator norm $\leq1$, and $\Uvarnames{Q_1},\Uvarnames{Q_2}$ are unitaries, we have that $\Hat U_1,\Hat U_2$ have operator norm $\leq1$, namely $\Hat U_1\in\boundedleq{\types{Q_1},Z}$, $\Hat U_2\in\boundedleq{\types{Q_2},Z}$.
  
\medskip

  We first show the $\Rightarrow$-direction.
  
  Let $\psi_1=\sum_i \lambda_{1i}\, \psi_{1i}^Q\tensor \psi_{1i}^Y$
  be a Schmidt decomposition of $\psi_1$,
  i.e., $\lambda_{1i}>0$,
  and $\psi_{1i}^Q\in\elltwov{Q_1}$,
  $\psi_{1i}^Y\in\elltwov{\qu{V_1}\setminus Q_1}$
  are orthonormal. (Such a decomposition exists by
  \autoref{lemma:schmidt}.) Analogously, let
  $\psi_2=\sum_j \lambda_{2j}\, \psi_{2j}^Q\tensor \psi_{2j}^Y$
  be a Schmidt decomposition of $\psi_2$.
  We have $\sum\lambda_{1i}^2=1$
  and $\sum\lambda_{2j}^2=1$
  because $\psi_1$ and $\psi_2$ are normalized.

  Let $W\subseteq\elltwov{Q_1Q_2}$
  be the subspace fixed by
  $\adj{\Hat U_2}\Hat U_1\tensor\adj{\Hat U_1}\Hat U_2
  = \Uvarnames{Q_2}\adj{U_2}U_1\adj{\Uvarnames{Q_1}}
  \tensor
  \Uvarnames{Q_1}\adj{U_1}U_2\adj{\Uvarnames{Q_2}}
  $.
  Then $W\tensor\elltwov{\qu{V_1}\qu{V_2}\setminus Q_1Q_2}= (U_1Q_1\quanteq U_2Q_2)$ by definition of $\quanteq$.

  Since
  \[
    \sum_{i,j}\lambda_{1i}\lambda_{2j}\,
    \psi^Q_{1i}\tensor\psi^Q_{2j}\tensor\psi^Y_{1i}\tensor\psi^Y_{2j}
    =
    \psi_1\tensor\psi_2 \in (U_1Q_1\quanteq U_2Q_2) =
    W\tensor\elltwov{\qu{V_1}\qu{V_2}\setminus Q_1Q_2},
  \]
  and since all $\psi^Y_{1i}\tensor\psi^Y_{2j}\in\elltwov{\qu{V_1}\qu{V_2}\setminus Q_1Q_2}$
  are orthogonal and non-zero, it follows that all
  $\lambda_{1i}\lambda_{2j}\, \psi^Q_{1i}\tensor\psi^Q_{2j}\in W$.
  Since $\lambda_{1i}\lambda_{2j}>0$,
  we have $\psi^Q_{1i}\tensor\psi^Q_{2j}\in W$ for all $i,j$.

  This implies
  \[
    \adj{\Hat U_2}\Hat U_1\psi^Q_{1i} \tensor \adj{\Hat U_1}\Hat U_2\psi^Q_{2j}
    =
    (\adj{\Hat U_2}\Hat U_1\tensor\adj{\Hat U_1}\Hat U_2)(\psi^Q_{1i}\tensor\psi^Q_{2j})
    \starrel=
    \psi^Q_{1i}\tensor\psi^Q_{2j}
  \]
  where $(*)$
  uses $\psi^Q_{1i}\tensor\psi^Q_{2j}\in W$.
  Note that in the lhs, the first factor is in $\elltwov{Q_2}$,
  while in the rhs, the second factor is in $\elltwov{Q_2}$.
  (Recall that the tensor product is commutative in our formalism.) It
  follows that $\adj{\Hat U_2}\Hat U_1\psi^Q_{1i}=\alpha_{ij} \psi^Q_{2j}$ and $\adj{\Hat U_1}\Hat U_2\psi^Q_{2j}=\beta_{ij} \psi^Q_{1i}$
  for some $\alpha_{ij},\beta_{ij}\in \setC$.
  We have
  \[
    \abs{\alpha_{ij}}=\norm{\alpha_{ij} \psi^Q_{2j}}=\norm{\adj{\Hat U_2}\Hat U_1\psi^Q_{1i}}
    \starrel\leq
    \norm{\psi^Q_{1i}} = 1.
  \]
  Here $(*)$ uses that the operator norm of $\Hat U_1,\adj{\Hat U_2}$ is $\leq1$.
  Furthermore,
  \[
     \psi^Q_{1i}\tensor\psi^Q_{2j}
    =\adj{\Hat U_2}\Hat U_1\psi^Q_{1i} \tensor \adj{\Hat U_1}\Hat U_2\psi^Q_{2j}
    = \alpha_{ij}\psi^Q_{2j}\tensor\beta_{ij}\psi^Q_{1i}
    =\alpha_{ij}\beta_{ij}\cdot   \psi^Q_{1i}\tensor\psi^Q_{2j}.
  \]
  It follows that $\alpha_{ij}\beta_{ij}=1$.
 With $\abs{\alpha_{ij}}\leq1$, this implies $\abs{\alpha_{ij}}=1$ and $\beta_{ij}=1/\alpha_{ij}=\adj{\alpha_{ij}}$.

  Since $\adj{\Hat U_2}\Hat U_1\psi^Q_{1i}=\alpha_{ij} \psi^Q_{2j}$
  for all $j$
  and fixed $i$
  (there exists at least one index $i$
  since $\psi_1\neq0$),
  it follows that all $\psi^Q_{2j}$
  are colinear. Since they are also orthonormal, there can be only one index
  $j$.
  Let $\hat\psi^Q_2:=\psi^Q_{2j}$
  and $\hat\psi^Y_2:=\psi^Y_{2j}$.
  Analogously, there can be only one index $i$.
  Let $\psi^Q_1:=\psi^Q_{1i}$
  and $\psi^Y_1:=\psi^Y_{1i}$.
  Let $\alpha:=\alpha_{ij}$
  and $\beta:=\beta_{ij}=\adj\alpha$.
  
  Since
  $\psi^Q_{1i},\psi^Y_{1i},\psi^Q_{2j},\psi^Y_{2j}$
  are normalized, $\psi^Q_1$
  and $\psi^Y_1$ and $\hat\psi^Q_2$ and $\hat\psi^Y_2$ are all normalized.
And $\lambda_1=\sum\lambda_{1i}=1$ and $\lambda_2=\sum\lambda_{2j}=1$.

  We then have
  $\psi_1=\lambda_1\psi_1^Q\tensor\psi_1^Y=
\psi^Q_1\tensor\psi^Y_1$
  and $\psi_2=\lambda_2\hat\psi_2^Q\tensor\hat\psi_2^Y=
\hat\psi^Q_2\tensor\hat\psi^Y_2$
  and $\adj{\Hat U_2}\Hat U_1\psi^Q_1=\alpha\hat\psi^Q_2$ and
  $\adj{\Hat U_1}\Hat U_2\hat\psi^Q_2=\beta\psi^Q_1$.

  Let $P_1:=\Hat U_1\adj{\Hat U_1}$,
  $P_2:=\Hat U_2\adj{\Hat U_2}$.
 $P_1, P_2$ are positive operators with operator norm $\leq1$.
  Then
  \[
    \pb\norm{\Hat U_1\psi_1^Q}
    = \pb\norm{\Hat U_1 \adj{\Hat U_1}\Hat U_2\hat\psi^Q_2}
    = \pb\norm{ \Hat U_1 \adj{\Hat U_1}\ \Hat U_2\adj{\Hat U_2}\ \Hat U_1\psi^Q_1}
    = \pb\norm{ P_1P_2 \Hat U_1\psi^Q_1 }
    \leq \pb\norm{ P_2 \Hat U_1\psi^Q_1 }
\leq\pb\norm{\Hat U_1\psi_1^Q}.
  \]
Thus $\norm{\Hat U_1\psi_1^Q}=\norm{P_2\Hat U_1\psi_1^Q}$. Since the operator norm of $P_2$ is $\leq1$, this implies that $\Hat U_1\psi_1^Q$ is a linear combination of eigenvectors with eigenvalues of absolute value $1$. Since $P_2$ is positive, the only such eigenvalue is $1$. Thus $\Hat U_1\psi_1^Q$ is an eigenvector with eigenvalue $1$. Hence $P_2 \Hat U_1\psi_1^Q= \Hat U_1\psi_1^Q$. 
  Thus
  \begin{equation*}
    \alpha \Hat U_2\hat\psi_2^Q
    =
    \Hat U_2\adj{\Hat U_2}\Hat U_1\psi^Q_1
    =
    P_2\Hat U_1\psi^Q_1
    =
    \Hat U_1\psi^Q_1.
  \end{equation*}
  Let $\psi^Q_2:=\alpha\hat\psi^Q_2$
  and $\psi^Y_2:=\alpha^{-1}\hat\psi^Y_2$. (Since $\abs\alpha=1$, these are also normalized.)
  Then $\psi_1=\psi^Q_1\tensor\psi^Y_1$
  and $\psi_2=\psi^Q_2\tensor\psi^Y_2$ and $U_1\adj{\Uvarnames{Q_1}}\psi^Q_1= \Hat U_1\psi^Q_1=\alpha\Hat U_2 \hat\psi_2^Q=
\Hat U_2\psi^Q_2=U_2\adj{\Uvarnames{Q_2}}\psi^Q_2$.
  
  This shows the $\Rightarrow$-direction.

  \medskip

  We show the $\Leftarrow$-direction (the converse). In that case,
  $U_1,U_2$ are isometries, and thus $\Hat U_1,\Hat U_2$ as defined at
  the beginning of this proof are isometries as well. We have:
  \begin{align*}
    &\pb\paren{\adj{\Hat U_2}\Hat U_1\tensor\adj{\Hat U_1}\Hat U_2\tensor\idv{\qu{V_1}\qu{V_2}\setminus Q_1Q_2}}
    (\psi_1\tensor\psi_2)
    =
    \adj{\Hat U_2}\Hat U_1\psi^Q_1 \tensor
    \adj{\Hat U_1}\Hat U_2\psi^Q_2 \tensor
      \psi^Y_1\tensor\psi^Y_2 \\
    &\starrel=
      \adj{\Hat U_2}\Hat U_2\psi^Q_2 \tensor
      \adj{\Hat U_1}\Hat U_1\psi^Q_1 \tensor
      \psi^Y_1\tensor\psi^Y_2
      \starstarrel=
\psi^Q_2 \tensor
      \psi^Q_1 \tensor
      \psi^Y_1\tensor\psi^Y_2
    = \psi_1\tensor\psi_2.
  \end{align*}
  Here $(*)$ uses that $\Hat U_1\psi_1^Q=\Hat U_2\psi_2^Q$ by
  assumption of the lemma.  And $(**)$ uses that $\Hat U_1$ and
  $\Hat U_2$ are isometries, and hence $\adj{\Hat U_1}\Hat U_1=\id$ and
  $\adj{\Hat U_2}\Hat U_2=\id$.

  Thus $\psi_1\tensor\psi_2$
  is invariant under
  $\adj{\Hat U_2}\Hat U_1\tensor\adj{\Hat U_1}\Hat U_2\otimes\idv{\qu{V_1}\qu{V_2}\setminus Q_1Q_2}
  = \Uvarnames{Q_2}\adj{U_2}U_1\adj{\Uvarnames{Q_1}}
  \tensor
  \Uvarnames{Q_1}\adj{U_1}U_2\adj{\Uvarnames{Q_2}}\otimes\idv{\qu{V_1}\qu{V_2}\setminus Q_1Q_2}
  $.
By definition of $U_1Q_1\quanteq U_2Q_2$,
  this implies $\psi_1\tensor\psi_2\in( U_1Q_1\quanteq U_2Q_2)$.

  This shows the $\Leftarrow$-direction.
\end{proof}

For the special case $Q_1\quanteq Q_2$, the lemma can be additionally simplified:
\begin{corollary}\label{coro:quanteq}
  Let $Q_1=(\qq_1,\dots,\qq_n)\subseteq\qu{V_1}$
  and $Q_2=(\qq'_1,\dots,\qq'_n)\subseteq\qu{V_2}$
  be disjoint lists of distinct variables with $\typel{Q_1}=\typel{Q_2}$.
  Let $\psi_1\in\elltwov{\qu{V_1}}$,
  $\psi_2\in\elltwov{\qu{V_2}}$ be normalized.
  Let $\sigma:Q_2\to Q_1$
  be the variable renaming with $\sigma(\qq'_i)=\qq_i$.

  Then $\psi_1\tensor\psi_2\in(Q_1\quanteq Q_2)$
  iff there exist normalized $\psi_1^Q\in\elltwov{Q_1}$,
  $\psi_1^Y\in\elltwov{\qu{V_1}\setminus Q_1}$,
  $\psi_2^Q\in\elltwov{Q_2}$,
  $\psi_2^Y\in\elltwov{\qu{V_2}\setminus Q_2}$
  such that $\psi^Q_1=\Urename\sigma\psi^Q_2$
  and $\psi_1=\psi^Q_1\tensor\psi^Y_1$
  and $\psi_2=\psi^Q_2\tensor\psi^Y_2$.
\end{corollary}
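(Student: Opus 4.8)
The plan is to derive this corollary directly from \autoref{lemma:quanteq} by specializing to $U_1=U_2=\id$ (as in \autoref{def:quanteq.simple}) and then identifying the resulting matching condition $\adj{\Uvarnames{Q_1}}\psi_1^Q = \adj{\Uvarnames{Q_2}}\psi_2^Q$ with the stated condition $\psi_1^Q = \Urename\sigma\psi_2^Q$.

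First I would invoke \autoref{lemma:quanteq} with $Z:=\typel{Q_1}=\typel{Q_2}$ and $U_1:=U_2:=\id$, the identity on $\typel{Q_1}$. Since the identity is an isometry (indeed a unitary), both directions of \autoref{lemma:quanteq} apply, so $\psi_1\tensor\psi_2\in(Q_1\quanteq Q_2)$ holds iff there exist normalized $\psi_1^Q,\psi_1^Y,\psi_2^Q,\psi_2^Y$ of the stated types with $\psi_1=\psi_1^Q\tensor\psi_1^Y$, $\psi_2=\psi_2^Q\tensor\psi_2^Y$, and $\id\,\adj{\Uvarnames{Q_1}}\psi_1^Q = \id\,\adj{\Uvarnames{Q_2}}\psi_2^Q$, i.e.\ $\adj{\Uvarnames{Q_1}}\psi_1^Q = \adj{\Uvarnames{Q_2}}\psi_2^Q$. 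The decomposition statements $\psi_1=\psi_1^Q\tensor\psi_1^Y$ and $\psi_2=\psi_2^Q\tensor\psi_2^Y$ already coincide verbatim with those in the corollary, so only the equality condition needs to be transformed.

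It remains to show that $\adj{\Uvarnames{Q_1}}\psi_1^Q = \adj{\Uvarnames{Q_2}}\psi_2^Q$ is equivalent to $\psi_1^Q = \Urename\sigma\psi_2^Q$. Since $\Uvarnames{Q_1}$ is unitary, applying it to both sides shows the former is equivalent to $\psi_1^Q = \Uvarnames{Q_1}\adj{\Uvarnames{Q_2}}\psi_2^Q$, so the whole corollary reduces to the operator identity $\Uvarnames{Q_1}\adj{\Uvarnames{Q_2}} = \Urename\sigma$. I would verify this on the computational basis: for $m\in\types{Q_2}$, the definition of $\Uvarnames{Q_2}$ gives $\adj{\Uvarnames{Q_2}}\basis{Q_2}m = \basis{}{(m(\qq'_1),\dots,m(\qq'_n))}$, and then $\Uvarnames{Q_1}$ maps this to $\basis{Q_1}{m'}$ with $m'(\qq_i)=m(\qq'_i)$. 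On the other hand $\Urename\sigma\basis{Q_2}m = \basis{Q_1}{m\circ\sigma^{-1}}$, and since $\sigma^{-1}(\qq_i)=\qq'_i$ we get $(m\circ\sigma^{-1})(\qq_i)=m(\qq'_i)=m'(\qq_i)$, so the two agree on every basis vector.

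The computation is entirely routine; the only thing requiring care is the index/name bookkeeping in the definitions of $\Uvarnames{}$ and $\Urename{}$, exactly the kind of isomorphism tracking the paper warns about in its note on notation. Concretely, the one point to get right is that $\adj{\Uvarnames{Q_2}}$ reads off the contents of $Q_2$ in the order $\qq'_1,\dots,\qq'_n$ into a tuple, $\Uvarnames{Q_1}$ writes a tuple back into $Q_1$ in the order $\qq_1,\dots,\qq_n$, and this ``read-then-write'' composition is precisely the renaming $\sigma(\qq'_i)=\qq_i$, with no residual phase or permutation to account for. Once $\Uvarnames{Q_1}\adj{\Uvarnames{Q_2}} = \Urename\sigma$ is established, the corollary follows immediately.
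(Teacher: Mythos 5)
Your proposal is correct and follows essentially the same route as the paper: specialize \autoref{lemma:quanteq} to $U_1=U_2=\id$ and then rewrite the condition $\adj{\Uvarnames{Q_1}}\psi_1^Q=\adj{\Uvarnames{Q_2}}\psi_2^Q$ using the identity $\Uvarnames{Q_1}\adj{\Uvarnames{Q_2}}=\Urename\sigma$. The paper simply asserts that identity, whereas you verify it on the computational basis, which is a welcome (if routine) extra detail.
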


\begin{proof}
  By definition, $(Q_1\quanteq Q_2)$
  is the same as $(\id\,Q_1\quanteq \id\,Q_2)$
  where $\id$
  is the identity on $\elltwo{\typel{Q_1}}=\elltwo{\typel{Q_2}}$.

  Thus by 
  \autoref{lemma:quanteq}, 
  $\psi_1\tensor\psi_2\in(Q_1\quanteq Q_2)$
  iff there are 
  normalized  $\psi_1^Q\in\elltwov{Q_1}$,
  $\psi_1^Y\in\elltwov{\qu{V_1}\setminus Q_1}$,
  $\psi_2^Q\in\elltwov{Q_2}$,
  $\psi_2^Y\in\elltwov{\qu{V_2}\setminus Q_2}$
  such that $\id\,\adj{\Uvarnames{Q_1}}\psi^Q_1=\id\,\adj{\Uvarnames{Q_2}}\psi^Q_2$
  and $\psi_1=\psi^Q_1\tensor\psi^Y_1$
  and $\psi_2=\psi^Q_2\tensor\psi^Y_2$.

  Furthermore,  $\id\,\adj{\Uvarnames{Q_1}}\psi^Q_1=\id\,\adj{\Uvarnames{Q_2}}\psi^Q_2$
  iff $\psi^Q_1=\Uvarnames{Q_1}\adj{\Uvarnames{Q_2}}\psi^Q_2$
  iff $\psi^Q_1=\Urename{\sigma}\psi^Q_2$
  (since $\Uvarnames{Q_1}\adj{\Uvarnames{Q_2}}=\Urename{\sigma}$).
\end{proof}

\paragraph{Simplification rules.} We prove three simplification rules
that are useful for rewriting predicates involving quantum equalities:

\begin{lemma}\label{lemma:qeq.move}
  For lists $Q_1$ and $Q_2$ of quantum variables,
  and operators $U_1\in\bounded{\typel{Q_1},Y}$
  and $U_2\in\bounded{\typel{Q_2},Z}$
  and $A\in\bounded{Y,Z}$, we have
  \[
    (AU_1)Q_1\quanteq U_2Q_2
    \qquad=\qquad
    U_1Q_1\quanteq(\adj AU_2)Q_2.
  \]
\end{lemma}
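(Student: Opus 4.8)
The plan is to unfold \autoref{def:quanteq} on both sides of the claimed identity and to check that the two operators whose fixed subspaces define $(AU_1)Q_1\quanteq U_2Q_2$ and $U_1Q_1\quanteq(\adj AU_2)Q_2$ are literally equal. Since each quantum equality is \emph{by definition} the subspace of $\elltwov{\qu V}$ fixed by the corresponding operator, equality of the two operators immediately yields equality of the two subspaces, and no separate argument about fixed-point sets is required.

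Carrying this out, the operator defining the left-hand side is obtained from \autoref{def:quanteq} by taking $AU_1$ as the first operator and $U_2$ as the second, namely
\[
\Uvarnames{Q_2}\adj{U_2}(AU_1)\adj{\Uvarnames{Q_1}}
\tensor
\Uvarnames{Q_1}\adj{(AU_1)}U_2\adj{\Uvarnames{Q_2}}
\tensor
\idv{\qu V\setminus Q_1Q_2}.
\]
Applying $\adj{(AU_1)}=\adj{U_1}\adj A$ in the second factor, this equals
\[
\Uvarnames{Q_2}\adj{U_2}AU_1\adj{\Uvarnames{Q_1}}
\tensor
\Uvarnames{Q_1}\adj{U_1}\adj AU_2\adj{\Uvarnames{Q_2}}
\tensor
\idv{\qu V\setminus Q_1Q_2}.
\]

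For the right-hand side I take $U_1$ as the first operator and $\adj AU_2$ as the second; the defining operator is then
\[
\Uvarnames{Q_2}\adj{(\adj AU_2)}U_1\adj{\Uvarnames{Q_1}}
\tensor
\Uvarnames{Q_1}\adj{U_1}(\adj AU_2)\adj{\Uvarnames{Q_2}}
\tensor
\idv{\qu V\setminus Q_1Q_2}.
\]
Using $\adj{(\adj AU_2)}=\adj{U_2}A$ (by $\adj{\adj A}=A$) in the first factor, this reduces to exactly the operator displayed above for the left-hand side. Hence the two fixed subspaces coincide, which proves the lemma.

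There is no genuine obstacle here beyond careful bookkeeping of adjoints; the only point worth verifying is that the types compose correctly. On the left the two operators share the common target space $Z$ (since $AU_1\in\bounded{\typel{Q_1},Z}$ and $U_2\in\bounded{\typel{Q_2},Z}$), whereas on the right they share $Y$ (since $U_1\in\bounded{\typel{Q_1},Y}$ and $\adj AU_2\in\bounded{\typel{Q_2},Y}$); in both cases this common space is contracted internally within each tensor factor, so the resulting operators in $\boundedv{\qu V}$ are well-defined and, by the computation above, agree.
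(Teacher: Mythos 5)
Your proof is correct and follows essentially the same route as the paper: unfold \autoref{def:quanteq} on both sides, observe that $\adj{(AU_1)}=\adj{U_1}\adj A$ and $\adj{(\adj AU_2)}=\adj{U_2}A$ make the two defining operators literally identical, and conclude that their fixed subspaces agree. Your additional check that the types compose correctly is a nice touch but adds nothing the paper's argument lacks.
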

This is especially useful for canceling out terms, e.g., we get
$(AQ_1\quanteq AQ_2) = (\id Q_1\quanteq \adj AAQ_2) = (Q_1\quanteq Q_2)$
for isometries $A$.

\begin{proof}
  By definition,
  $
  (AU_1)Q_1\quanteq U_2Q_2$ is the set of states fixed by 
  \[
    S :=
    {\Uvarnames{Q_2}\adj{U_2}(AU_1)\adj{\Uvarnames{Q_1}}}
    \tensor
    {\Uvarnames{Q_1}\adj{(AU_1)}U_2\adj{\Uvarnames{Q_2}}}
    \tensor
    {\idv{\qu A\setminus Q_1Q_2}}.
  \]
and $    U_1Q_1\quanteq(\adj AU_2)Q_2$ the set of states fixed by
    \[
    T :=
    {\Uvarnames{Q_2}\adj{(\adj AU_2)}U_1\adj{\Uvarnames{Q_1}}}
    \tensor
    {\Uvarnames{Q_1}\adj{U_1}(\adj AU_2)\adj{\Uvarnames{Q_2}}}
    \tensor
    {\idv{\qu A\setminus Q_1Q_2}}.
  \]
Since $S=T$, 
we have $
\pb\paren{(AU_1)Q_1\quanteq U_2Q_2}
    =
\pb\paren{U_1Q_1\quanteq(\adj AU_2)Q_2}
  $.  
\end{proof}

\begin{lemma}\label{lemma:qeq.inside}
  For lists $Q_1$ and $Q_2$ of quantum variables,
  and operators $U_1\in\bounded{\typel{Q_1},Z}$
  and $U_2\in\bounded{\typel{Q_2},Z}$
  and $A_1\in\uni{\typel{Q_1}}$
  and $A_2\in\uni{\typel{Q_2}}$, we have
  \begin{equation}
    (\lift{A_1}{Q_1})
     \cdot (U_1Q_1\quanteq U_2Q_2)
    \qquad=\qquad
    (U_1\adj{A_1})Q_1\quanteq U_2Q_2
    \label{eq:v1.eq}
  \end{equation}
  and
  \begin{equation}
    (\lift{A_2}{Q_2})
    \cdot (U_1Q_1\quanteq U_2Q_2)
    \qquad=\qquad
    U_1Q_1\quanteq (U_2\adj{A_2})Q_2.
    \label{eq:v2.eq}
  \end{equation}
\end{lemma}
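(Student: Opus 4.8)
The plan is to reduce both equations to an identity between the operators that \emph{define} the two quantum equalities. Recall from \autoref{def:quanteq} that $(U_1 Q_1 \quanteq U_2 Q_2)$ is by definition the subspace fixed by an explicit operator $M\in\boundedv{\qu V}$. The elementary fact I would record first is: if $B$ is a unitary on $\elltwov{\qu V}$ and $M$ is any bounded operator, then the image $B\cdot\{\psi : M\psi=\psi\}$ of the fixed subspace of $M$ equals the fixed subspace $\{\phi : BM\adj B\,\phi=\phi\}$ of the conjugated operator $BM\adj B$. This is immediate in both directions: if $M\psi=\psi$ then $BM\adj B(B\psi)=BM\psi=B\psi$, and if $BM\adj B\phi=\phi$ then $\adj B\phi$ is fixed by $M$. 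Since the hypothesis gives $A_1\in\uni{\typel{Q_1}}$, the operator $\lift{A_1}{Q_1}=\Uvarnames{Q_1}A_1\adj{\Uvarnames{Q_1}}\tensor\idv{\qu V\setminus Q_1}$ is unitary, so this fact applies with $B=\lift{A_1}{Q_1}$.

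With this reduction, \eqref{eq:v1.eq} becomes the purely algebraic claim that $\lift{A_1}{Q_1}\,M\,\adj{\lift{A_1}{Q_1}}$ equals the operator $M'$ that \autoref{def:quanteq} assigns to $(U_1\adj{A_1})Q_1\quanteq U_2 Q_2$. I would verify this by direct computation. The operator $M$ has two nontrivial tensor factors, one in $\boundedv{Q_1,Q_2}$ of the form $\Uvarnames{Q_2}\adj{U_2}U_1\adj{\Uvarnames{Q_1}}$ and one in $\boundedv{Q_2,Q_1}$ of the form $\Uvarnames{Q_1}\adj{U_1}U_2\adj{\Uvarnames{Q_2}}$, together with $\idv{\qu V\setminus Q_1 Q_2}$. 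Since $\lift{A_1}{Q_1}$ is supported on the $Q_1$-factor only, conjugating $M$ by it leaves the $\qu V\setminus Q_1 Q_2$ part untouched and, using $\adj{\Uvarnames{Q_1}}\Uvarnames{Q_1}=\id$, turns its two nontrivial factors into $\Uvarnames{Q_2}\adj{U_2}U_1\adj{A_1}\adj{\Uvarnames{Q_1}}$ and $\Uvarnames{Q_1}A_1\adj{U_1}U_2\adj{\Uvarnames{Q_2}}$. Comparing with \autoref{def:quanteq} applied to $U_1\adj{A_1}$, and using $\adj{(U_1\adj{A_1})}=A_1\adj{U_1}$, these are exactly the two factors of $M'$; hence $\lift{A_1}{Q_1}\,M\,\adj{\lift{A_1}{Q_1}}=M'$, and taking fixed subspaces yields \eqref{eq:v1.eq}.

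Equation~\eqref{eq:v2.eq} follows by the symmetric computation with $B=\lift{A_2}{Q_2}$, which is unitary since $A_2\in\uni{\typel{Q_2}}$. Conjugating $M$ by it now modifies the $Q_2$-factor: the two nontrivial factors become $\Uvarnames{Q_2}A_2\adj{U_2}U_1\adj{\Uvarnames{Q_1}}$ and $\Uvarnames{Q_1}\adj{U_1}U_2\adj{A_2}\adj{\Uvarnames{Q_2}}$, which are precisely the factors that \autoref{def:quanteq} assigns to $U_1 Q_1\quanteq(U_2\adj{A_2})Q_2$ (again via $\adj{(U_2\adj{A_2})}=A_2\adj{U_2}$).

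The only genuinely delicate point is the bookkeeping forced by the variable-labelled tensor product: because $\tensor$ here is commutative and $M$'s two nontrivial factors run $Q_1\to Q_2$ and $Q_2\to Q_1$ respectively, one must keep straight which of them the $Q_1$-part (resp.\ $Q_2$-part) of $\lift{A_i}{Q_i}$ multiplies on the left and which on the right. I expect this to be the sole source of potential ordering errors; once the convention is pinned down the verification is a one-line cancellation using only unitarity of the $\Uvarnames{Q_i}$ and the adjoint identity $\adj{(U\adj A)}=A\adj U$.
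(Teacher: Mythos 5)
Your proposal is correct, and the underlying algebra is the same as the paper's, but you organize it differently in a way that is worth noting. The paper proves only the inclusion $(\lift{A_1}{Q_1})\cdot(U_1Q_1\quanteq U_2Q_2)\subseteq (U_1\adj{A_1})Q_1\quanteq U_2Q_2$ by a direct element-wise computation (showing $T\psi=\psi$ for $\psi=(\lift{A_1}{Q_1})\phi$ with $S\phi=\phi$, which is in essence the operator identity $T\,(\lift{A_1}{Q_1})=(\lift{A_1}{Q_1})\,S$), and then obtains the reverse inclusion by a bootstrap: it re-applies the already-proved inclusion with $\adj{A_1}$ in place of $A_1$ and cancels $\lift{A_1}{Q_1}\cdot\lift{\adj{A_1}}{Q_1}$. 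You instead package both directions into the single general fact that for unitary $B$ the image $B\cdot\{\psi:M\psi=\psi\}$ equals $\{\phi:BM\adj B\phi=\phi\}$, and then verify $BM\adj B=M'$ once. Your route is slightly cleaner (one computation, no self-application trick, and the general conjugation fact is reusable), at the cost of having to be explicit about how the labelled tensor factors of $B$ and $\adj B$ attach to the two ``crossed'' factors $\boundedv{Q_1,Q_2}$ and $\boundedv{Q_2,Q_1}$ of $M$ — which you correctly identify as the only delicate point, and which you resolve correctly: $\adj B$ composes on the right of the factor with \emph{domain} $\elltwov{Q_1}$ and $B$ on the left of the factor with \emph{codomain} $\elltwov{Q_1}$, yielding exactly $\Uvarnames{Q_2}\adj{U_2}U_1\adj{A_1}\adj{\Uvarnames{Q_1}}$ and $\Uvarnames{Q_1}A_1\adj{U_1}U_2\adj{\Uvarnames{Q_2}}$ as required. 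Both proofs use unitarity of $A_1$ in an essential way, consistent with the paper's remark that the isometry case is open.
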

Recall that $\lift{}{}$ is the lifting from \autoref{def:lift}.

Predicates of this form occur, e.g., in the preconditions of
\ruleref{QApply1} below. ($\Uvarnames{Q_1}A_1\adj{\Uvarnames{Q_1}}$
is the operator $A_1$,
applied to quantum variables $Q_1$.)
With this rule, we can collect the effect of several unitary quantum
operations inside the quantum equality.

Notice that we require that $A_1,A_2$
are unitary.  We leave it as an open problem to generalize this rule
suitably to isometries.

\begin{proof}
  We first show
  \begin{equation}
    (\lift{A_1}{Q_1}) \cdot (U_1Q_1\quanteq U_2Q_2)
    \qquad\subseteq\qquad
    (U_1\adj{A_1})Q_1\quanteq U_2Q_2\label{eq:v1.leq}
  \end{equation}
  Fix some
  $\psi$ contained in the lhs of \eqref{eq:v1.leq}.  Then $\psi$
  can be written as
  $\psi= (\lift{A_1}{Q_1})\phi$
  for some $\phi\in (U_1Q_1\quanteq U_2Q_2)$.
  Let
  \begin{align*}
    S&:={\Uvarnames{Q_2}\adj{U_2}U_1\adj{\Uvarnames{Q_1}}}
       \tensor
       {\Uvarnames{Q_1}\adj{U_1}U_2\adj{\Uvarnames{Q_2}}}
       \tensor
       {\idv{\qu V\setminus Q_1Q_2}},\\
    T &:= {\Uvarnames{Q_2}\adj{U_2}U_1\adj{A_1}\adj{\Uvarnames{Q_1}}}
        \tensor
        {\Uvarnames{Q_1}A_1\adj{U_1}U_2\adj{\Uvarnames{Q_2}}}
        \tensor
        {\idv{\qu V\setminus Q_1Q_2}}
  \end{align*}
  Since $ (U_1Q_1\quanteq U_2Q_2)$ is the subspace fixed by $S$, we have $\phi=S\phi$.
  Thus
  \begin{align*}
    T\psi &=
            T (\lift{A_1}{Q_1})\phi =
            T \pb\paren{\Uvarnames{Q_1}A_1\adj{\Uvarnames{Q_1}}  \otimes\idv{V\setminus Q_1}}\phi \\
          &= (\Uvarnames{Q_2}\adj{U_2}U_1\adj{A_1}\adj{\Uvarnames{Q_1}}\Uvarnames{Q_1}A_1\adj{\Uvarnames{Q_1}}
            \tensor
            {\Uvarnames{Q_1}A_1\adj{U_1}U_2\adj{\Uvarnames{Q_2}}}
            \tensor
            {\idv{\qu V\setminus Q_1Q_2}})\phi
    \\
          &\starrel= (\Uvarnames{Q_2}\adj{U_2}U_1\adj{\Uvarnames{Q_1}}
            \tensor
            {\Uvarnames{Q_1}A_1\adj{U_1}U_2\adj{\Uvarnames{Q_2}}}
            \tensor
            {\idv{\qu V\setminus Q_1Q_2}})\phi
    \\
          &= (\idv{Q_2}
            \tensor
            \Uvarnames{Q_1}A_1\adj{\Uvarnames{Q_1}}
            \tensor
            {\idv{\qu V\setminus Q_1Q_2}})
            S\phi \\
          &= (\lift{A_1}{Q_1})S\phi
            = (\lift{A_1}{Q_1})\phi
           = \psi.
  \end{align*}
  Here $(*)$
  uses that $A_1$
  is a unitary, and hence $\adj{A_1}A_1=\id$.
  Since $ (U_1\adj{A_1})Q_1\quanteq U_2Q_2$
  is the subspace fixed by $T$, we have that $\psi$ is in the rhs of \eqref{eq:v1.leq}.
  Thus \eqref{eq:v1.leq} holds.

  \medskip

  Furthermore,
  \begin{align}
    (U_1\adj{A_1})Q_1\quanteq U_2Q_2
    &=
      (\lift{A_1}{Q_1})\cdot
      (\lift{\adj{A_1}}{Q_1})\cdot
      \cdot
      \pb\paren{(U_1\adj{A_1})Q_1\quanteq U_2Q_2}
    \notag\\
    &\eqrefrel{eq:v1.leq}\subseteq
      (\lift{A_1}{Q_1})\cdot
      \pb\paren{(U_1\adj{A_1}\adj{(\adj{A_1})})Q_1\quanteq U_2Q_2}
    \notag\\
    &=
      (\lift{A_1}{Q_1})\cdot
      \pb\paren{U_1Q_1\quanteq U_2Q_2}.
      \label{eq:v1.geq}
  \end{align}
  Here \eqref{eq:v1.leq} is applicable because $\adj{A_1}$
  is also unitary, and we showed \eqref{eq:v1.leq} for all bounded
  $U_1$ and unitary~$A_1$.

  From \eqref{eq:v1.leq} and \eqref{eq:v1.geq}, the equation
  \eqref{eq:v1.eq} follows.
  And \eqref{eq:v2.eq} follows by symmetry of $\quanteq$.  
\end{proof}

\begin{lemma}\label{lemma:qeq.span}
  Fix disjoint lists $Q_1$ and $Q_2$ of distinct quantum variables,
  and operators $U_1\in\bounded{\typel{Q_1},Z}$
  and $U_2\in\bounded{\typel{Q_2},Z}$
  and a vector $\psi\in\elltwov{\typel{Q_1}}$.
  Assume $\adj{U_1}U_2\adj{U_2}U_1\psi=\psi$ (e.g., if $U_1,\adj{U_2}$ are isometries)
  Then we have
  \begin{equation}
    (U_1Q_1\quanteq U_2Q_2) \cap
    \pb\paren{\lift{\SPAN\{\psi\}}{Q_1}}
    %\pb\paren{\SPAN\{\Uvarnames{Q_1}\psi\} \otimes \elltwov{\qu V\setminus Q_1}}
    %\\
    =
    \pb\paren{\lift{\SPAN\{\psi\}}{Q_1}}
    %\pb\paren{\SPAN\{\Uvarnames{Q_1}\psi\} \otimes \elltwov{\qu V\setminus Q_1}}
    \cap
    \pb\paren{\lift{\SPAN\{\adj{U_2}{U_1}\psi\}}{Q_2}}
    %\pb\paren{\SPAN\{\Uvarnames{Q_2}\adj{U_2}{U_1}\psi\} \otimes \elltwov{\qu V\setminus Q_2}}.
    \label{eq:qeq.span1}
  \end{equation}
%  and
%  \begin{multline}
  %   (U_1Q_1\quanteq U_2Q_2) \cap \pb\paren{\SPAN\{\Uvarnames{Q_2}\psi_2\} \otimes \elltwov{V\setminus Q_2}} \\
  %   =\qquad
  %   \pb\paren{\SPAN\{\Uvarnames{Q_1}\adj{U_2}{U_1}\psi_1\} \otimes \elltwov{V\setminus Q_1}}
  %   \cap
  %   \pb\paren{\SPAN\{\Uvarnames{Q_2}\psi_1\} \otimes \elltwov{V\setminus Q_2}}
  %   \label{eq:qeq.span2}
  % \end{multline}
\end{lemma}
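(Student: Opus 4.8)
The plan is to pass to the abstract tensor factorization $\elltwov{\qu V} = \elltwov{Q_1}\tensor\elltwov{Q_2}\tensor\elltwov R$ with $R := \qu V\setminus Q_1Q_2$, where all three subspaces in \eqref{eq:qeq.span1} take a transparent product form. Following the notation of \autoref{lemma:quanteq} I would write $\Hat U_1 := U_1\adj{\Uvarnames{Q_1}}$ and $\Hat U_2 := U_2\adj{\Uvarnames{Q_2}}$, and set $\phi_1 := \Uvarnames{Q_1}\psi\in\elltwov{Q_1}$ and $\phi_2 := \adj{\Hat U_2}\Hat U_1\phi_1 = \Uvarnames{Q_2}\adj{U_2}U_1\psi\in\elltwov{Q_2}$. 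With these, \autoref{def:lift} gives $\lift{\SPAN\{\psi\}}{Q_1} = \SPAN\{\phi_1\}\tensor\elltwov{Q_2}\tensor\elltwov R$ and $\lift{\SPAN\{\adj{U_2}U_1\psi\}}{Q_2} = \elltwov{Q_1}\tensor\SPAN\{\phi_2\}\tensor\elltwov R$, so the right-hand side of \eqref{eq:qeq.span1} is exactly $\SPAN\{\phi_1\}\tensor\SPAN\{\phi_2\}\tensor\elltwov R$; and by \autoref{def:quanteq} the space $(U_1Q_1\quanteq U_2Q_2)$ is the fixed-point space of $U' := \adj{\Hat U_2}\Hat U_1\tensor\adj{\Hat U_1}\Hat U_2\tensor\idv R$, whose first factor maps $\elltwov{Q_1}\to\elltwov{Q_2}$ and whose second maps $\elltwov{Q_2}\to\elltwov{Q_1}$. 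A short computation, unwinding $\adj{\Uvarnames{Q_i}}\Uvarnames{Q_i} = \id$ and $U_2\adj{U_2} = \Hat U_2\adj{\Hat U_2}$, shows the hypothesis $\adj{U_1}U_2\adj{U_2}U_1\psi = \psi$ is equivalent to $\adj{\Hat U_1}\Hat U_2\phi_2 = \phi_1$, which is the only form of it I will use.

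For the inclusion ``$\supseteq$'' I would take a generator $\chi := \phi_1\tensor\phi_2\tensor\eta$ with $\eta\in\elltwov R$. It lies in $\lift{\SPAN\{\psi\}}{Q_1}$ since $\phi_2\tensor\eta\in\elltwov{Q_2}\tensor\elltwov R = \elltwov{\qu V\setminus Q_1}$. To see $U'\chi = \chi$, I compute $U'\chi = (\adj{\Hat U_2}\Hat U_1\phi_1)\tensor(\adj{\Hat U_1}\Hat U_2\phi_2)\tensor\eta = \phi_2\tensor\phi_1\tensor\eta$, using $\adj{\Hat U_2}\Hat U_1\phi_1 = \phi_2$ (definition of $\phi_2$) and $\adj{\Hat U_1}\Hat U_2\phi_2 = \phi_1$ (the translated hypothesis); since $\tensor$ is commutative with factors labelled by variables, this equals $\phi_1\tensor\phi_2\tensor\eta = \chi$. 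Hence $\chi$ lies in both intersected spaces, and ``$\supseteq$'' follows.

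For ``$\subseteq$'', let $\chi$ be in the left-hand side; I may assume $\chi\neq0$. From $\chi\in\lift{\SPAN\{\psi\}}{Q_1}$ I write $\chi = \phi_1\tensor\xi$ with $\xi\in\elltwov{\qu V\setminus Q_1}$, and from $\chi\in(U_1Q_1\quanteq U_2Q_2)$ I have $U'\chi = \chi$. Expanding $\xi = \sum_a e_a\tensor\zeta_a$ over an orthonormal basis $(e_a)$ of $\elltwov{Q_2}$ (with $\zeta_a\in\elltwov R$) and applying $U'$ termwise, the common $Q_2$-factor $\adj{\Hat U_2}\Hat U_1\phi_1 = \phi_2$ factors out, giving $U'\chi = \phi_2\tensor\omega$ where $\omega := \sum_a(\adj{\Hat U_1}\Hat U_2 e_a)\tensor\zeta_a\in\elltwov{Q_1}\tensor\elltwov R$. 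Combined with $\chi = \phi_1\tensor\xi$ this yields $\phi_1\tensor\xi = \phi_2\tensor\omega\neq0$, which is precisely the hypothesis of \autoref{lemma:tensor.match} applied over ambient variables $\qu V$ with $X := Q_2$ and $Y := Q_1$ (so $\qu V\setminus X = Q_1 R$, $\qu V\setminus Y = Q_2 R$, $\qu V\setminus XY = R$). That lemma then provides $\eta\in\elltwov R$ with $\xi = \phi_2\tensor\eta$, whence $\chi = \phi_1\tensor\phi_2\tensor\eta$ lies in $\SPAN\{\phi_1\}\tensor\SPAN\{\phi_2\}\tensor\elltwov R$, the right-hand side. (When $\chi = 0$ the claim is trivial.)

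I expect the main obstacle to be this ``$\subseteq$'' direction: one must recognize that applying the swap-like operator $U'$ to a fixed state forces its $Q_2$-marginal to collapse onto the single vector $\phi_2$, so that the two product representations $\phi_1\tensor\xi$ and $\phi_2\tensor\omega$ of $\chi$ can be reconciled through \autoref{lemma:tensor.match}. Keeping the variable-labelled (commutative) tensor bookkeeping straight — in particular tracking which factor of $U'$ lands in $Q_1$ versus $Q_2$, and the placement of the adjoints — is where the care is needed; notably, the hypothesis $\adj{U_1}U_2\adj{U_2}U_1\psi = \psi$ is required only for the easy inclusion ``$\supseteq$''.
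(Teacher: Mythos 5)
Your proof is correct and follows essentially the same route as the paper's: both inclusions hinge on \autoref{lemma:tensor.match} to reconcile two product decompositions of a vector in the intersection, and the hypothesis $\adj{U_1}U_2\adj{U_2}U_1\psi=\psi$ enters only when verifying that the product vectors $\phi_1\tensor\phi_2\tensor\eta$ are fixed by the swap operator. The one step you elide is the justification that the right-hand side equals $\SPAN\{\phi_1\}\tensor\SPAN\{\phi_2\}\tensor\elltwov{\qu V\setminus Q_1Q_2}$ (needed so that it suffices to check generators in your ``$\supseteq$'' direction) --- the paper obtains exactly this by one more application of \autoref{lemma:tensor.match}, which is the same argument you already use for ``$\subseteq$''.
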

This lemma tells us (in the special case $U_1=U_2=\id$)
that the following two statements are equivalent:
\begin{compactitem}
\item $Q_1$ and $Q_2$ have the same content, and $Q_1$ contains $\psi$.
\item $Q_1$ contains $\psi$ and $Q_2$ contains $\psi$.
\end{compactitem}

\begin{proof}
  We first show that the lhs of \eqref{eq:qeq.span1} is included in the rhs.
  W.l.o.g., we can assume $\psi\neq 0$ (otherwise the lhs and rhs are trivially $0$).
  Fix some nonzero $\phi\in(U_1Q_1\quanteq U_2Q_2) \cap
    \pb\paren{\lift{\SPAN\{\psi\}}{Q_1}}
%  \pb\paren{\SPAN\{\Uvarnames{Q_1}\psi\} \otimes \elltwov{\qu V\setminus Q_1}}
  $.

  Since
  $\phi\in
  {\lift{\SPAN\{\psi\}}{Q_1}}
%  \SPAN\{\Uvarnames{Q_1}\psi\}
%  \otimes \elltwov{\qu V\setminus Q_1}
  $, there is a $\phi'\in\elltwov{\qu V\setminus Q_1}$
such that $\phi=\Uvarnames{Q_1}\psi\otimes\phi'$ (by \autoref{def:lift}).
And since $0\neq \phi\in (U_1Q_1\quanteq U_2Q_2)$, we have
\begin{multline*}
  \overbrace{\Uvarnames{Q_2}\adj{U_2}U_1\psi}^{\in\elltwov{Q_2}}
  \otimes  
  \pB\paren{
    {\Uvarnames{Q_1}\adj{U_1}U_2\adj{\Uvarnames{Q_2}}}
    \tensor
    {\idv{\qu V\setminus Q_1Q_2}}}
  \phi'\\[-12pt]
=
  \pB\paren{\Uvarnames{Q_2}\adj{U_2}U_1\adj{\Uvarnames{Q_1}}
  \tensor
  {\Uvarnames{Q_1}\adj{U_1}U_2\adj{\Uvarnames{Q_2}}}
  \tensor
  {\idv{\qu V\setminus Q_1Q_2}}}
\overbrace{\paren{\Uvarnames{Q_1}\psi\otimes\phi'}}^{{}=\phi}
=
{\Uvarnames{Q_1}\psi}\otimes\phi' \neq0.
\end{multline*}
By \autoref{lemma:tensor.match}, this implies that there is an
$\eta\in \qu V\setminus Q_1Q_2$ such that
 $ \phi' = {\Uvarnames{Q_2}\adj{U_2}U_1\psi} \otimes \eta$.
 Thus
 \begin{multline*}
   \phi=\Uvarnames{Q_1}\psi\otimes 
   {\Uvarnames{Q_2}\adj{U_2}U_1\psi} \otimes \eta
   \\
  \in
      \pb\paren{\SPAN\{\Uvarnames{Q_1}\psi\} \otimes \elltwov{\qu V\setminus Q_1}}
    \cap
    \pb\paren{\SPAN\{\Uvarnames{Q_2}\adj{U_2}{U_1}\psi\} \otimes \elltwov{\qu V\setminus Q_2}}
    \\
    =
    \pb\paren{\lift{\SPAN\{\psi\}}{Q_1}}
    \cap
    \pb\paren{\lift{\SPAN\{\adj{U_2}{U_1}\psi\}}{Q_2}}
    .
  \end{multline*}
  Hence the lhs of \eqref{eq:qeq.span1} is contained in the rhs.

  \medskip

  We now show that the rhs of \eqref{eq:qeq.span1} is contained in the
  lhs.  Fix some nonzero
  $\gamma\in
    \pb\paren{\lift{\SPAN\{\psi\}}{Q_1}}
    \cap
    \pb\paren{\lift{\SPAN\{\adj{U_2}{U_1}\psi\}}{Q_2}}
%
%  \pb\paren{\SPAN\{\Uvarnames{Q_1}\psi\} \otimes
%    \elltwov{\qu V\setminus Q_1}} \cap
%  \pb\paren{\SPAN\{\Uvarnames{Q_2}\adj{U_2}{U_1}\psi\} \otimes
%  \elltwov{\qu V\setminus Q_2}}
    $.  Since
    $\gamma\in
    \lift{\SPAN\{\psi\}}{Q_1}
    %\pb\paren{\SPAN\{\Uvarnames{Q_1}\psi\} \otimes
    %  \elltwov{\qu V\setminus Q_1}}
    $, we can write
  $\gamma=\Uvarnames{Q_1}\psi\otimes\psi'$
  for some $\psi'\in\elltwov{\qu V\setminus Q_1}$.
  Since
  $\gamma\in
  \lift{\SPAN\{\adj{U_2}{U_1}\psi\}}{Q_2}
  %\pb\paren{\SPAN\{\Uvarnames{Q_2}\adj{U_2}{U_1}\psi\}
  % \otimes \elltwov{\qu V\setminus Q_2}}
  $, we can write
  $\gamma=\Uvarnames{Q_2}\adj{U_2}{U_1}\psi\otimes\phi'$ for some $\phi'\in\elltwov{\qu V\setminus Q_2}$.
  By \autoref{lemma:tensor.match}, this implies that
  $\psi'= \Uvarnames{Q_2}\adj{U_2}{U_1}\psi\otimes\eta$
  for some $\eta\in\elltwov{\qu V\setminus Q_1Q_2}$. Hence $\gamma=\Uvarnames{Q_1}\psi\otimes\Uvarnames{Q_2}\adj{U_2}{U_1}\psi\otimes\eta$. Thus
  \begin{multline*}
    \pB\paren{{\Uvarnames{Q_2}\adj{U_2}U_1\adj{\Uvarnames{Q_1}}}
      \tensor
      {\Uvarnames{Q_1}\adj{U_1}U_2\adj{\Uvarnames{Q_2}}}
      \tensor
      {\idv{\qu V\setminus Q_1Q_2}}}\gamma\\
    =
    \paren{\Uvarnames{Q_2}\adj{U_2}U_1\adj{\Uvarnames{Q_1}}\cdot
    \Uvarnames{Q_1}\psi}
    \otimes
    \paren{\Uvarnames{Q_1}\adj{U_1}U_2\adj{\Uvarnames{Q_2}}\cdot \Uvarnames{Q_2}\adj{U_2}{U_1}\psi}
    \otimes
    \eta\\
    \starrel=
    \Uvarnames{Q_2}\adj{U_2}U_1\psi
    \otimes
    \Uvarnames{Q_1}\psi
    \otimes
    \eta
    =
    \gamma.
  \end{multline*}
  Here $(*)$ uses the assumption
  $\adj{U_1}U_2\adj{U_2}U_1\psi=\psi$.
  Thus $\gamma\in(U_1Q_1\quanteq U_2Q_2)$.
  Since also
  $\gamma\in
\lift{\SPAN\{\psi\}}{Q_1}
  %{\SPAN\{\Uvarnames{Q_1}\psi\} \otimes
  %  \elltwov{\qu V\setminus Q_1}}
  $, we have that $\gamma$ is in the lhs of \eqref{eq:qeq.span1}.
  Thus the rhs is a subset of the lhs.

  \medskip
  
  Summarizing, the lhs and rhs of \eqref{eq:qeq.span1} are subsets of each other, hence \eqref{eq:qeq.span1} holds.
%
%  \medskip
%
%  \eqref{eq:qeq.span2} follows from \eqref{eq:qeq.span1} by symmetry of $\quanteq$.
\end{proof}

\begin{lemma}\label{lemma:quanteqaddstate}
  Fix disjoint lists $Q_1$ and $Q_2$ and $R$ of distinct quantum variables,
  and operators $U_1\in\bounded{\typel{Q_1},Z}$
  and $U_2\in\bounded{\typel{Q_2},Z}$
  and a unit vector $\psi\in\elltwo{\typel{R}}$.
  Then we have
  \begin{equation}
    (U_1Q_1\quanteq U_2Q_2) \cap
    \pb\paren{\lift{\SPAN\{\psi\}}R}
%    \pb\paren{\SPAN\{\Uvarnames{R}\psi\} \tensor \elltwov{\qu V\setminus R}}
    =
    \pb\paren{(U_1\otimes'\id)(Q_1R) \quanteq (U_2\otimes'\psi)Q_2}.
    \label{eq:cap.span}
  \end{equation}
  In this equation, $\otimes'$ refers to the usual positional (unlabeled) tensor product, not to the labeled tensor product
  defined in \autoref{sec:prelim}. And on the rhs, $\psi$ is interpreted as a bounded operator from $\setC$ to $\elltwo{\typel R}$.
\end{lemma}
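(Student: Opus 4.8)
The plan is to reduce the claimed subspace identity to a single operator identity plus an elementary fact about fixed subspaces, rather than proving the two inclusions by hand. Let $S\in\boundedv{\qu V}$ be the operator from \autoref{def:quanteq} whose fixed subspace is $(U_1Q_1\quanteq U_2Q_2)$, i.e.
\[
  S = \Uvarnames{Q_2}\adj{U_2}U_1\adj{\Uvarnames{Q_1}}\tensor\Uvarnames{Q_1}\adj{U_1}U_2\adj{\Uvarnames{Q_2}}\tensor\idv{\qu V\setminus Q_1Q_2},
\]
and let $T\in\boundedv{\qu V}$ be the corresponding operator for the rhs equality, obtained from \autoref{def:quanteq} with $Q_1$ replaced by the list $Q_1R$, with $U_1$ replaced by $U_1\otimes'\id$ and $U_2$ by $U_2\otimes'\psi$. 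Finally let $P:=\lift{\proj\psi}R$; since $\psi$ is a unit vector, $P$ is the orthogonal projector onto $\lift{\SPAN\{\psi\}}R$ by \autoref{def:lift}, with $\hat\psi:=\Uvarnames R\psi$. Then the rhs of \eqref{eq:cap.span} is $\mathrm{Fix}(T)$ and the lhs is $\mathrm{Fix}(S)\cap\im P$, so it suffices to prove $T=S\cdot P$ and then $\mathrm{Fix}(SP)=\mathrm{Fix}(S)\cap\im P$.

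The core computation is the operator identity $T=S\cdot P$. First I would simplify the operators inside $T$: using $\adj{(U_1\otimes'\id)}=\adj{U_1}\otimes'\id$ and $\adj{(U_2\otimes'\psi)}=\adj{U_2}\otimes'\adj\psi$ (where $\adj\psi$ is the contraction $\langle\psi,\cdot\rangle$, because $\psi$ is read as an operator $\setC\to\elltwo{\typel R}$), one gets $\adj{(U_2\otimes'\psi)}(U_1\otimes'\id)=(\adj{U_2}U_1)\otimes'\adj\psi$ and $\adj{(U_1\otimes'\id)}(U_2\otimes'\psi)=(\adj{U_1}U_2)\otimes'\psi$. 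Substituting these into $T$ and using the factorization $\Uvarnames{Q_1R}=\Uvarnames{Q_1}\tensor\Uvarnames R$ (valid on basis vectors, hence everywhere), the first tensor factor of $T$ becomes $\Uvarnames{Q_2}\adj{U_2}U_1\adj{\Uvarnames{Q_1}}\tensor\adj{\hat\psi}$ and the second becomes $\Uvarnames{Q_1}\adj{U_1}U_2\adj{\Uvarnames{Q_2}}\tensor\hat\psi$. Evaluating both $T$ and $S\cdot P$ on a product vector $\xi\tensor\zeta\tensor\omega\tensor\eta\in\elltwov{Q_1}\tensor\elltwov R\tensor\elltwov{Q_2}\tensor\elltwov{\qu V\setminus Q_1Q_2R}$ then yields the same vector $\langle\hat\psi,\zeta\rangle\,(\Uvarnames{Q_1}\adj{U_1}U_2\adj{\Uvarnames{Q_2}}\omega)\tensor\hat\psi\tensor(\Uvarnames{Q_2}\adj{U_2}U_1\adj{\Uvarnames{Q_1}}\xi)\tensor\eta$ in both cases, which establishes $T=S\cdot P$ by linearity and density of product vectors.

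The remaining ingredient is the general fact that whenever $P$ is an orthogonal projector acting nontrivially only on the $R$-factor and $S$ acts as the identity on the $R$-factor — both of which hold here, as $R$ is disjoint from $Q_1,Q_2$ so that $R\subseteq\qu V\setminus Q_1Q_2$ — one has $\{\phi:SP\phi=\phi\}=\{\phi:S\phi=\phi\}\cap\im P$. The inclusion $\supseteq$ is immediate. For $\subseteq$ I would decompose $\elltwov{\qu V}=\elltwov R\tensor\elltwov W$ with $W:=\qu V\setminus R$, expand $\phi=\sum_k e_k\tensor\chi_k$ in an orthonormal basis $\{e_k\}$ of $\elltwov R$ diagonalizing $P$, and compare coefficients in $SP\phi=\phi$: the components with $Pe_k=0$ must vanish and the surviving ones satisfy $S\chi_k=\chi_k$, so $\phi=P\phi$ and $S\phi=\phi$. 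Combining everything gives $\mathrm{Fix}(T)=\mathrm{Fix}(SP)=\mathrm{Fix}(S)\cap\im P=(U_1Q_1\quanteq U_2Q_2)\cap\lift{\SPAN\{\psi\}}R$, which is exactly \eqref{eq:cap.span}.

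I expect the main obstacle to be the bookkeeping in the identity $T=S\cdot P$: one must keep the canonical isomorphisms $\Uvarnames{Q_1R},\Uvarnames{Q_1},\Uvarnames R$ straight and respect the distinction between the positional tensor product $\otimes'$ on the spaces $\elltwo{\typel{\cdot}}$ and the labeled tensor product $\tensor$ on the spaces $\elltwov{\cdot}$, in particular verifying $\Uvarnames{Q_1R}=\Uvarnames{Q_1}\tensor\Uvarnames R$ and that reading $\psi$ as an operator $\setC\to\elltwo{\typel R}$ turns $\adj\psi$ into the contraction against $\hat\psi$. Once these identifications are fixed, the two displayed evaluations coincide term by term. (An alternative, more in the style of \autoref{lemma:qeq.span}, would prove both subspace inclusions directly using \autoref{lemma:tensor.match}; I prefer the operator-identity route because it isolates all quantum-equality-specific reasoning into the single identity $T=S\cdot P$.)
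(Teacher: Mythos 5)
Your proposal is correct, and the key computation is the same one the paper performs: your evaluation of the two tensor factors of $T$ on product vectors is exactly the paper's equations \eqref{eq:part1R} and \eqref{eq:part2} inside its \autoref{claim:same.qeq}. The difference is organizational, and it is a genuine streamlining. The paper only establishes the identity $T\phi=S\phi$ for $\phi$ already lying in $\im P$ (its Claim~1), and must therefore argue the two inclusions separately: for lhs${}\subseteq{}$rhs it combines Claim~1 with $S\phi=\phi$, while for rhs${}\subseteq{}$lhs it needs an additional explicit computation (the derivation culminating in \eqref{eq:phi.in.span}) to show that a $T$-fixed vector necessarily has $\Uvarnames R\psi$ in its $R$-slot, i.e.\ lies in $\im P$, before Claim~1 becomes applicable. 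You instead prove the full operator identity $T=SP$ (which holds because evaluating $T$ on a product vector with an arbitrary $R$-component $\zeta$ produces the factor $\langle\Uvarnames R\psi,\zeta\rangle$ and re-inserts $\Uvarnames R\psi$, which is precisely what $P$ followed by $S=\idv{R}\tensor S'$ does), and then dispose of both inclusions at once via the fixed-point lemma $\{\phi:SP\phi=\phi\}=\{\phi:S\phi=\phi\}\cap\im P$. That lemma is valid in the generality you state — with $S=\idv R\tensor S'$ and $P=\lift{\proj\psi}R=\proj{\Uvarnames R\psi}\tensor\idv{\qu V\setminus R}$ one gets $SP=\proj{\Uvarnames R\psi}\tensor S'$, and the coefficient comparison in an orthonormal basis of $\elltwov R$ adapted to $P$ forces both $P\phi=\phi$ and $S'$-invariance of the surviving component — and it absorbs the paper's extra rhs-direction computation. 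What your route buys is a cleaner separation of concerns (one bookkeeping-heavy operator identity, one elementary Hilbert-space fact); what the paper's route buys is that it never needs the identity $T=SP$ off the subspace $\im P$. Both are sound; yours is arguably the more economical write-up.
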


This rule allows us to rewrite an equality between $Q_1$
and $Q_2$
into an equality between $Q_1R$
and $Q_2$
if we know the content of $R$.
It is the quantum analogue of the following obvious classical fact:
\[
  f(\xx_1)=g(\xx_2) \land \yy_1=y
  \qquad\iff\qquad
  (f\times\id)(\xx_1,\yy_1)
  =
  (\lambda x.\,(x,y))(\xx_2).
\]

\begin{proof}
  We first show the following fact that we need for both directions of the proof:
  \begin{claim}\label{claim:same.qeq}
    If $\phi= \sum_i\phi_{1i}\otimes\phi_{2i}\psi\otimes\phi_{3i}\otimes\Uvarnames R\psi$ for some
    $\phi_{1i}\in\elltwov{Q_1}$, $\phi_{2i}\in\elltwov{Q_2}$, $\phi_{3i}\in\elltwov{\qu V\setminus Q_1Q_2R}$, then
    \begin{multline*}
      \pB\paren{\Uvarnames{Q_2}\adj{(U_2\otimes'\psi)}(U_1\otimes'\id)\adj{\Uvarnames{Q_1R}}
        \tensor
        \Uvarnames{Q_1R}\adj{(U_1\otimes'\id)}(U_2\otimes'\psi)\adj{\Uvarnames{Q_2}}
        \tensor
        \idv{\qu V\setminus Q_1Q_2R}} \phi
      \\
      =
      \pB\paren{
        \Uvarnames{Q_2}\adj{U_2}U_1\adj{\Uvarnames{Q_1}}
        \tensor
        \Uvarnames{Q_1}\adj{U_1}U_2\adj{\Uvarnames{Q_2}}\phi_{2i}
        \tensor
        \idv{\qu V\setminus Q_1Q_2}}
      \phi
    \end{multline*}
  \end{claim}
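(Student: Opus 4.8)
The plan is to verify the claimed operator identity by reducing both sides to the same explicit vector, exploiting that both operators factor through the labeled tensor decomposition $\elltwov{\qu V} = \elltwov{Q_1}\tensor\elltwov R\tensor\elltwov{Q_2}\tensor\elltwov{\qu V\setminus Q_1Q_2R}$ and act factor-by-factor. Since both operators are linear and $\phi$ is a finite sum of product states, it suffices to treat a single summand $\phi_{1i}\tensor\phi_{2i}\tensor\phi_{3i}\tensor\Uvarnames R\psi$; the right-hand operator is the swap operator defining $U_1Q_1\quanteq U_2Q_2$, lifted to $\qu V$ (so it is the identity on $R$ and on $\qu V\setminus Q_1Q_2R$). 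Throughout I would use the compatibility $\Uvarnames{Q_1R}(a\otimes' b) = \Uvarnames{Q_1}a\tensor\Uvarnames R b$ intertwining the positional tensor $\otimes'$ with the labeled tensor $\tensor$.

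First I would unwind the two ``half-swap'' factors of the left-hand operator. The first factor lies in $\boundedv{Q_1R,Q_2}$; using $\adj{\Uvarnames{Q_1R}}(\phi_{1i}\tensor\Uvarnames R\psi) = \adj{\Uvarnames{Q_1}}\phi_{1i}\otimes'\psi$ and that $\psi$ acts as the operator $\setC\to\elltwo{\typel R}$, I would compute
\[
  \Uvarnames{Q_2}\adj{(U_2\otimes'\psi)}(U_1\otimes'\id)\adj{\Uvarnames{Q_1R}}\,(\phi_{1i}\tensor\Uvarnames R\psi)
  = \Uvarnames{Q_2}\adj{U_2}U_1\adj{\Uvarnames{Q_1}}\phi_{1i},
\]
where the $R$-register is absorbed through $\adj\psi\psi = \norm\psi^2 = 1$ (here $\psi$ being a unit vector is essential). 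Symmetrically, the second factor in $\boundedv{Q_2,Q_1R}$ satisfies
\[
  \Uvarnames{Q_1R}\adj{(U_1\otimes'\id)}(U_2\otimes'\psi)\adj{\Uvarnames{Q_2}}\,\phi_{2i}
  = \Uvarnames{Q_1}\adj{U_1}U_2\adj{\Uvarnames{Q_2}}\phi_{2i}\tensor\Uvarnames R\psi,
\]
so this factor reproduces the right-hand $Q_1$-factor while reinserting $\Uvarnames R\psi$ into the $R$-register.

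Combining these with commutativity and associativity of the labeled tensor product, the left-hand operator sends the summand to
\[
  \Uvarnames{Q_2}\adj{U_2}U_1\adj{\Uvarnames{Q_1}}\phi_{1i}
  \tensor \Uvarnames{Q_1}\adj{U_1}U_2\adj{\Uvarnames{Q_2}}\phi_{2i}
  \tensor \Uvarnames R\psi \tensor \phi_{3i},
\]
and the right-hand operator, being the identity on $R$ and on $\qu V\setminus Q_1Q_2R$, sends it to the same vector. Summing over $i$ yields the claim. The only real obstacle is bookkeeping: one must keep the positional tensor $\otimes'$ (inside $\elltwo{\typel{Q_1R}}$ and in $U_1\otimes'\id$, $U_2\otimes'\psi$) strictly separate from the variable-labeled tensor $\tensor$, and track that $\Uvarnames{Q_1R}$ intertwines the two as above; once this dictionary is fixed, every step is a routine adjoint computation.
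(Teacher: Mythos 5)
Your proposal is correct and follows essentially the same route as the paper's proof: it isolates the same two ``half-swap'' computations (absorbing the $R$-register via $\adj\psi\psi=\norm\psi^2=1$ on one side and reinserting $\Uvarnames R\psi$ on the other), then combines them summand-by-summand using linearity and the labeled tensor structure. No gaps.
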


  \begin{claimproof}
    We have for all $i$:
    \begin{align}
      &\pB\paren{\Uvarnames{Q_2}\adj{(U_2\otimes'\psi)}(U_1\otimes'\id)\adj{\Uvarnames{Q_1R}}}
        (\phi_{1i}\otimes\Uvarnames R\psi)
        \notag\\
      &=
        \Uvarnames{Q_2}\adj{(U_2\otimes'\psi)}(U_1\otimes'\id)
        \pb\paren{\adj{\Uvarnames{Q_1}}\phi_{1i}\otimes'\adj{\Uvarnames{R}}\Uvarnames R\psi}
        \notag\\
      &=
        \Uvarnames{Q_2}
        \pb\paren{\adj{U_2}U_1\adj{\Uvarnames{Q_1}}\phi_{1i}\otimes'
        \underbrace{\adj\psi\adj{\Uvarnames{R}}\Uvarnames R\psi}_{=\norm\psi^2=1}}
        =
        \Uvarnames{Q_2}
        \adj{U_2}U_1\adj{\Uvarnames{Q_1}}\phi_{1i}.
        \label{eq:part1R}
    \end{align}
    And we have for all $i$:
    \begin{align}
      &\pB\paren{\Uvarnames{Q_1R}\adj{(U_1\otimes'\id)}(U_2\otimes'\psi)\adj{\Uvarnames{Q_2}}}\phi_{2i}
        \notag\\
      &=
        \Uvarnames{Q_1R}\adj{(U_1\otimes'\id)}
        (U_2\otimes'\psi)(\adj{\Uvarnames{Q_2}}\phi_{2i}\otimes' \underbrace1_{\in\setC})
        \notag\\[-7pt]
      &=
        \Uvarnames{Q_1R}
        (\adj{U_1}U_2\adj{\Uvarnames{Q_2}}\phi_{2i}\otimes' \psi)
        \notag\\
      &=
        \Uvarnames{Q_1}\adj{U_1}U_2\adj{\Uvarnames{Q_2}}\phi_{2i}\tensor \Uvarnames R \psi
        \label{eq:part2}
    \end{align}
    Thus we have
    \begin{align*}
      &\pB\paren{\Uvarnames{Q_2}\adj{(U_2\otimes'\psi)}(U_1\otimes'\id)\adj{\Uvarnames{Q_1R}}
        \tensor
        \Uvarnames{Q_1R}\adj{(U_1\otimes'\id)}(U_2\otimes'\psi)\adj{\Uvarnames{Q_2}}
        \tensor
        \idv{\qu V\setminus Q_1Q_2R}} \phi
      \\
      &=\sum_i
        \pB\paren{\Uvarnames{Q_2}\adj{(U_2\otimes'\psi)}(U_1\otimes'\id)\adj{\Uvarnames{Q_1R}}}
        (\phi_{1i}\tensor\Uvarnames R\psi) \\[-3pt]
      &\hskip2in \tensor
        \pB\paren{\Uvarnames{Q_1R}\adj{(U_1\otimes'\id)}(U_2\otimes'\psi)\adj{\Uvarnames{Q_2}}}
        \phi_{2i} \tensor
        \phi_{3i}
      \\
      &\txtrel{\eqref{eq:part1R},\eqref{eq:part2}}=\ \sum_i
        \Uvarnames{Q_2}\adj{U_2}U_1\adj{\Uvarnames{Q_1}}\phi_{1i}
        \tensor
        \Uvarnames{Q_1}\adj{U_1}U_2\adj{\Uvarnames{Q_2}}\phi_{2i}\tensor \Uvarnames R \psi
        \tensor
        \phi_{3i}
      \\
      &=
        \pB\paren{
        \Uvarnames{Q_2}\adj{U_2}U_1\adj{\Uvarnames{Q_1}}
        \tensor
        \Uvarnames{Q_1}\adj{U_1}U_2\adj{\Uvarnames{Q_2}}
        \tensor
        \idv{\qu V\setminus Q_1Q_2}}
        \phi.
        \mathQED
    \end{align*}
  \end{claimproof}
  
  We first show that the lhs of \eqref{eq:cap.span} is contained in
  the rhs.  For this, fix some non-zero $\phi$
  in the lhs.  Since
  $\phi\in
  \lift{\SPAN\{\psi\}}R
  %{\SPAN\{\Uvarnames{R}\psi\} \tensor \elltwov{\qu V\setminus R}}
  $, we can decompose $\phi$
  as $\phi= \Uvarnames{R}\psi \tensor \phi'$ for some $\phi'\in\elltwov{\qu V\setminus R}$.
  % And since $\phi\in     (U_1Q_1\quanteq U_2Q_2)$, we have
  % \[
  %   \phi
  %   =
  %   \pb\paren{{\Uvarnames{Q_2}\adj{U_2}U_1\adj{\Uvarnames{Q_1}}}
  %   \tensor
  %   {\Uvarnames{Q_1}\adj{U_1}U_2\adj{\Uvarnames{Q_2}}}
  %   \tensor
  %   {\idv{\qu V\setminus Q_1Q_2}}}
  %   \phi
  % \]
  % and thus
  % \begin{align*}
  %   \Uvarnames{R}\psi \tensor \phi'
  %   &=
  %     \pb\paren{{\Uvarnames{Q_2}\adj{U_2}U_1\adj{\Uvarnames{Q_1}}}
  %     \tensor
  %     {\Uvarnames{Q_1}\adj{U_1}U_2\adj{\Uvarnames{Q_2}}}
  %     \tensor
  %     {\idv{\qu V\setminus Q_1Q_2}}}
  %     ( \Uvarnames{R}\psi \tensor \phi') \\
  %   &=
  %     \pb\paren{{\Uvarnames{Q_2}\adj{U_2}U_1\adj{\Uvarnames{Q_1}}}
  %     \tensor
  %     {\Uvarnames{Q_1}\adj{U_1}U_2\adj{\Uvarnames{Q_2}}}
  %     \tensor
  %     {\idv{\qu V\setminus Q_1Q_2R}}}
  %     \phi'
  %     \
  %     \tensor
  %     \
  %     \Uvarnames{R}\psi.
  % \end{align*}
  % Hence
  % \[
  %   \phi'
  %   =
  %   \pb\paren{{\Uvarnames{Q_2}\adj{U_2}U_1\adj{\Uvarnames{Q_1}}}
  %     \tensor
  %     {\Uvarnames{Q_1}\adj{U_1}U_2\adj{\Uvarnames{Q_2}}}
  %     \tensor
  %     {\idv{\qu V\setminus Q_1Q_2R}}}
  %   \phi'.
  % \]
  We can decompose $\phi'$ as $\phi'=\sum_i\phi_{1i}\otimes\phi_{2i}\otimes\phi_{3i}$ with
  $\phi_{1i}\in\elltwov{Q_1}$, $\phi_{2i}\in\elltwov{Q_2}$, $\phi_{3i}\in\elltwov{\qu V\setminus Q_1Q_2R}$.
Then we have
  \begin{multline*}
    \pB\paren{\Uvarnames{Q_2}\adj{(U_2\otimes'\psi)}(U_1\otimes'\id)\adj{\Uvarnames{Q_1R}}
      \tensor
      \Uvarnames{Q_1R}\adj{(U_1\otimes'\id)}(U_2\otimes'\psi)\adj{\Uvarnames{Q_2}}
      \tensor
      \idv{\qu V\setminus Q_1Q_2R}} \phi
      \\
    \txtrel{\autoref{claim:same.qeq}}=\quad
      \pB\paren{
      \Uvarnames{Q_2}\adj{U_2}U_1\adj{\Uvarnames{Q_1}}
      \tensor
      \Uvarnames{Q_1}\adj{U_1}U_2\adj{\Uvarnames{Q_2}}\phi_{2i}
      \tensor
      \idv{\qu V\setminus Q_1Q_2}}
      \phi
      \ \starrel=\
      \phi
  \end{multline*}
  where $(*)$
  follows from $\phi\in (U_1Q_1\quanteq U_2Q_2)$
  (since $\phi$
  is in the lhs of \eqref{eq:cap.span}).  Thus
  $\phi\in \pb\paren{(U_1\otimes'\id)(Q_1R) \quanteq
    (U_2\otimes'\psi)Q_2}$ which is the rhs of \eqref{eq:cap.span}.
  Thus the lhs of \eqref{eq:cap.span} is contained in the rhs.

  \medskip

  We now prove that the rhs of \eqref{eq:cap.span} is contained in the
  lhs.  Fix some $\phi$
  contained in the rhs.  We can decompose $\phi$
  as $\phi=\sum_i\phi_{1i}\otimes\phi_{2i}\otimes\phi_{3i}$
  with $\phi_{1i}\in\elltwov{Q_1R}$,
  $\phi_{2i}\in\elltwov{Q_2}$,
  $\phi_{3i}\in\elltwov{\qu V\setminus Q_1Q_2R}$.

  Then
  \begin{align}
    \phi
    &\starrel=
      \pB\paren{\Uvarnames{Q_2}\adj{(U_2\otimes'\psi)}(U_1\otimes'\id)\adj{\Uvarnames{Q_1R}}
      \tensor
      \Uvarnames{Q_1R}\adj{(U_1\otimes'\id)}(U_2\otimes'\psi)\adj{\Uvarnames{Q_2}}
      \tensor
      \idv{\qu V\setminus Q_1Q_2R}} \phi
    \notag\\
    &= \sum_i
      \underbrace{\Uvarnames{Q_2}\adj{(U_2\otimes'\psi)}(U_1\otimes'\id)\adj{\Uvarnames{Q_1R}}\phi_{1i}}
         _{=:\hat\phi_{2i}\in\elltwov{Q_2}}
      {}\tensor
      \Uvarnames{Q_1R}\adj{(U_1\otimes'\id)}(U_2\otimes'\psi)\adj{\Uvarnames{Q_2}} \phi_{2i}
      \tensor
      \phi_{3i}
      \notag\\
    &=
      \sum_i
      \hat\phi_{2i} \tensor
      \Uvarnames{Q_1R}\adj{(U_1\otimes'\id)}(U_2\otimes'\psi)
      \pb\paren{\adj{\Uvarnames{Q_2}} \phi_{2i}\otimes'\underbrace1_{\in\setC}}
      \tensor \phi_{3i}
    \notag\\
    &=
      \sum_i
      \hat\phi_{2i} \tensor
      \Uvarnames{Q_1R}
      \pb\paren{\adj{U_1}U_2\adj{\Uvarnames{Q_2}} \phi_{2i}\otimes'\psi}
      \tensor \phi_{3i}
    \notag\\&
      =
      \sum_i
      \hat\phi_{2i} \tensor
      \underbrace{\Uvarnames{Q_1}\adj{U_1}U_2\adj{\Uvarnames{Q_2}} \phi_{2i}}_{=:\hat\phi_{1i}\in\elltwov{Q_1}}
      {} \tensor\Uvarnames R\psi
      \tensor \phi_{3i}
    \notag\\&
        =
        \pB\paren{\sum_i \hat\phi_{1i}\tensor\hat\phi_{2i}\tensor\phi_{3i}}
        {} \tensor \Uvarnames R\psi
        \in
              \pb\paren{\SPAN\{\Uvarnames{R}\psi\} \tensor \elltwov{\qu V\setminus R}}
              = \lift{\SPAN\{\psi\}}R
              .
              \label{eq:phi.in.span}
  \end{align}
  Here $(*)$ holds because $\phi$ is in the rhs of \eqref{eq:cap.span}.

Moreover,  we have
  \begin{align*}
    \phi &\starrel=
    \pB\paren{\Uvarnames{Q_2}\adj{(U_2\otimes'\psi)}(U_1\otimes'\id)\adj{\Uvarnames{Q_1R}}
    \tensor
    \Uvarnames{Q_1R}\adj{(U_1\otimes'\id)}(U_2\otimes'\psi)\adj{\Uvarnames{Q_2}}
    \tensor
    \idv{\qu V\setminus Q_1Q_2R}} \phi
    \\
    &\txtrel{\autoref{claim:same.qeq}}=\ \
    \pB\paren{
    \Uvarnames{Q_2}\adj{U_2}U_1\adj{\Uvarnames{Q_1}}
    \tensor
    \Uvarnames{Q_1}\adj{U_1}U_2\adj{\Uvarnames{Q_2}}\phi_{2i}
    \tensor
    \idv{\qu V\setminus Q_1Q_2}}
    \phi.
  \end{align*}
  Here $(*)$
  follows since $\phi$
  is in the rhs of \eqref{eq:cap.span}.
  And 
  \autoref{claim:same.qeq}
  applies since
  $\phi= \sum_i \hat\phi_{1i}\tensor\hat\phi_{2i}\tensor\phi_{3i}
   \tensor \Uvarnames R\psi$ by \eqref{eq:phi.in.span}.
  Thus
  $\phi\in(U_1Q_1\quanteq U_2Q_2)$.
  Since also
  $\phi\in
  \lift{\SPAN\{\psi\}}R
  %{\SPAN\{\Uvarnames{R}\psi\} \tensor \elltwov{\qu V\setminus R}}
  $ by \eqref{eq:phi.in.span},
  we have that $\phi$ is in the lhs of \eqref{eq:cap.span}.
  Thus the rhs of \eqref{eq:cap.span} is contained in the lhs.

\medskip

  We have shown that the lhs is included in the rhs and that the rhs
  is included in the rhs. Hence \eqref{eq:cap.span} holds.
\end{proof}

\section{Quantum Relational Hoare Logic (qRHL)}
\label{sec:qrhl}

We now formally introduce our quantum Relational Hoare Logic qRHL. See
the introduction (especially the discussion before
\autoref{def:qrhl.informal}) for a motivation of the definition.

\paragraph{Tagged variables.}
In the following, there will always be a fixed set $V$
of program variables, and programs mentioned in qRHL judgments will
always have free variables in $V$.
In addition, we will use two disjoint copies of the set $V$,
called $V_1$
and $V_2$,
consisting of the variables from $V$
tagged with~$1$
or~$2$,
respectively. These tagged variables will be necessary to distinguish
between the variables of the left and right program in a qRHL judgment
(e.g., in the pre- and postcondition of a qRHL judgment).

Formally, $V_1,V_2$
are disjoint sets with variable renamings
\symbolindexmark{\idx}$\idx i:V\to V_i$
for $i=1,2$.
(See the preliminaries for the formal definition of variable
renamings.)
We will write $\xx_i$
for $\idx i(\xx)$.

For an expression $e$
with $\fv(e)\subseteq V$,
we write $\idx i e$
for the expression $e$
with every \emph{classical} variable $\xx$
replaced by $\xx_i$.
Formally, $\denotee{\idx ie}m:=\denotee{e}{m\circ(\idx i\!)^{-1}}$.

For a program $\bc$
with $\fv(\bc)\subseteq V$,
we write $\idx i \bc$
for the program $\bc$
with every classical or quantum variable $\xx$
in statements replaced by $\xx_i$, and every expression $e$ replaced by $\idx ie$.

\paragraph{The definition.}
We now have all the necessary language to make \autoref{def:qrhl.informal} formal:

\begin{definition}[Quantum Relational Hoare judgments]\label{def:rhl}%
  \symbolindexmark{\rhl}%
  Let $\bc_1,\bc_2$ be programs with $\fv(\bc_1),\fv(\bc_2)\subseteq V$.
  Let  $A,B$ be predicates over $V_1V_2$.
  
  Then 
  $\rhl A{\bc_1}{\bc_2}B$ holds iff
  for all $(V_1,V_2)$-separable $\rho\in\traceposcq{V_1V_2}$
  that satisfy $A$,
  we have that there exists a $(V_1,V_2)$-separable
  $\rho'\in\traceposcq{V_1V_2}$ that satisfies $B$ such that
  $\partr{V_1}{V_2}\rho' = \denotc{\idx 1\bc_1}\pb\paren{\partr{V_1}{V_2}\rho}$ and
  $\partr{V_2}{V_1}\rho' = \denotc{\idx 2\bc_2}\pb\paren{\partr{V_2}{V_1}\rho}$.
\end{definition}
(Recall in particular \autoref{def:pred} for the definition of
``predicates'', and \autoref{def:satisfy} for the definition of
``satisfy'', \autoref{sec:qsemantics} for the semantics $\denotc\cdot$,
and the preliminaries for the remaining notation.)

The following is a technical lemma that gives an alternative
characterization of qRHL judgments in terms of deterministic
initial values for the classical variables, and pure initial states
for the quantum variables. This definition will simplify many proofs.
\begin{lemma}[qRHL for pure states]\label{lemma:pure}
  $\rhl A\bc\bd B$ holds iff:

  For all $m_1\in\types{\cl{V_1}}$,
  $m_2\in\types{\cl{V_2}}$ and all normalized $\psi_1\in\elltwov{\qu{V_1}}$,
  $\psi_2\in\elltwov{\qu{V_2}}$ such that $\psi_1\tensor \psi_2\in \denotee A{\memuni{m_1m_2}}$, there exists a $(V_1,V_2)$-separable
  $\rho'\in\traceposcq{V_1V_2}$ such that:

  $\rho'$ satisfies $B$, and
  $\partr{V_1}{V_2}\rho'=\denotc{\idx 1\bc}\pb\paren{\pointstate{\cl{V_1}}{m_1}\tensor\proj{\psi_1}}$, and
  $\partr{V_2}{V_1}\rho'=\denotc{\idx 2\bd}\pb\paren{\pointstate{\cl{V_2}}{m_2}\tensor\proj{\psi_2}}$.
\end{lemma}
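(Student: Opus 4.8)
The plan is to prove the two directions of the ``iff'' separately: the forward implication (from $\rhl A\bc\bd B$ to the pure-state condition) is essentially immediate, while the converse requires a decomposition argument.

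For the forward direction, I fix $m_1,m_2,\psi_1,\psi_2$ as in the lemma, with $\psi_1\tensor\psi_2\in\denotee A{\memuni{m_1m_2}}$, and instantiate \autoref{def:rhl} at the specific state
\[
  \rho := \pointstate{\cl{V_1}}{m_1}\tensor\proj{\psi_1}\tensor\pointstate{\cl{V_2}}{m_2}\tensor\proj{\psi_2}.
\]
This $\rho$ is manifestly $(V_1,V_2)$-separable and lies in $\traceposcq{V_1V_2}$; its only nonzero classical branch sits at $(m_1,m_2)$, where the quantum part is $\proj{\psi_1\tensor\psi_2}$ with support $\SPAN\{\psi_1\tensor\psi_2\}\subseteq\denotee A{\memuni{m_1m_2}}$, so $\rho$ satisfies $A$ by \autoref{def:satisfy}. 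Since $\psi_2$ is normalized, $\partr{V_1}{V_2}\rho=\pointstate{\cl{V_1}}{m_1}\tensor\proj{\psi_1}$ and symmetrically for the other marginal, so the state $\rho'$ provided by \autoref{def:rhl} is exactly the witness the lemma demands.

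For the converse, the heart of the argument is a decomposition. Given an arbitrary $(V_1,V_2)$-separable $\rho\in\traceposcq{V_1V_2}$ satisfying $A$, I write $\rho=\sum_{m_1,m_2}\pointstate{\cl{V_1}}{m_1}\tensor\pointstate{\cl{V_2}}{m_2}\tensor\rho_{m_1m_2}$. I first observe that separability descends to each classical branch: extracting $\rho_{m_1m_2}$ from $\rho$ is a \emph{product} local operation (projecting the classical register of $V_1$ onto $\basis{}{m_1}$ and that of $V_2$ onto $\basis{}{m_2}$), hence maps a separable operator to a separable one, so $\rho_{m_1m_2}=\sum_j\gamma_j\tensor\delta_j$ with $\gamma_j\in\traceposv{\qu{V_1}}$ and $\delta_j\in\traceposv{\qu{V_2}}$. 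Decomposing $\gamma_j=\sum_k\proj{\phi_{jk}}$ and $\delta_j=\sum_l\proj{\chi_{jl}}$ gives $\rho_{m_1m_2}=\sum_{j,k,l}\proj{\phi_{jk}\tensor\chi_{jl}}$. By the support fact from the preliminaries, $\suppo\rho_{m_1m_2}=\SPAN\{\phi_{jk}\tensor\chi_{jl}\}$, and since $\rho$ satisfies $A$ this span is contained in $\denotee A{\memuni{m_1m_2}}$; in particular each vector $\phi_{jk}\tensor\chi_{jl}$ lies in $\denotee A{\memuni{m_1m_2}}$. Abbreviating the multi-index $(m_1,m_2,j,k,l)$ by $i$, this shows $\rho=\sum_i w_i\,\rho^{(i)}$ where each $\rho^{(i)}=\pointstate{\cl{V_1}}{m_1}\tensor\proj{\hat\phi}\tensor\pointstate{\cl{V_2}}{m_2}\tensor\proj{\hat\chi}$ is a pure product state with normalized factors $\hat\phi,\hat\chi$ meeting the hypothesis of the lemma, and $w_i=\norm{\phi_{jk}}^2\norm{\chi_{jl}}^2$.

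Finally, I apply the pure-state hypothesis to each summand $\rho^{(i)}$ to obtain a $(V_1,V_2)$-separable $\sigma^{(i)}$ satisfying $B$ whose marginals are $\denotc{\idx1\bc}(\partr{V_1}{V_2}\rho^{(i)})$ and $\denotc{\idx2\bd}(\partr{V_2}{V_1}\rho^{(i)})$, and set $\rho':=\sum_i w_i\,\sigma^{(i)}$. Three properties then need checking: $\rho'$ is separable as a nonnegative combination of separable operators; $\rho'$ satisfies $B$ because, per classical branch, $\suppo\big((\rho')_m\big)\subseteq\sum_i\suppo\big((\sigma^{(i)})_m\big)\subseteq\denotee Bm$, using that $\denotee Bm$ is a subspace; and the marginal identities hold by linearity and continuity of $\partr{V_1}{V_2}$ and of $\denotc{\idx1\bc}$, giving $\partr{V_1}{V_2}\rho'=\sum_i w_i\denotc{\idx1\bc}(\partr{V_1}{V_2}\rho^{(i)})=\denotc{\idx1\bc}(\partr{V_1}{V_2}\rho)$, and symmetrically for $\bd$. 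I expect the main obstacle to be this decomposition together with the bookkeeping of infinite (indeed possibly uncountable) sums: I would phrase all the sums via the least-upper-bound notion of $\sum_i$ on positive trace-class operators from the preliminaries, and use continuity of superoperators and partial traces to commute $\denotc\cdot$ past the sum.
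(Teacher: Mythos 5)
Your proof is correct and follows essentially the same route as the paper's: the same one-point separable state for the forward direction, and for the converse the same decomposition into classical branches and then into normalized pure product states, followed by recombination of the per-summand witnesses by linearity. The only detail you gesture at rather than carry out --- existence of $\rho'=\sum_i w_i\,\sigma^{(i)}$ as a trace-class operator --- is exactly where the paper uses that $\denotc{\idx 1\bc}$ is trace-decreasing to bound $\sum_i w_i\tr\sigma^{(i)}$ by $\tr\rho<\infty$.
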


\begin{proof}
  First, we show the $\Rightarrow$-direction.
  Assume $\rhl A\bc\bd B$
  and fix $m_1,m_2,\psi_1,\psi_2$
  as in the statement of the lemma.  Since
  $\psi_1\tensor\psi_2\in\denotee A{\memuni{m_1m_2}}$,
  we have that $\rho:=\pb\proj{\basis{\cl{V_1}}{m_1}\tensor \basis{\cl{V_2}}{m_2} \tensor \psi_1\tensor\psi_2}$ satisfies $A$.
  Note that $\rho\in\traceposcq{V_1V_2}$ and $\rho$ is separable.
  Since  $\rhl A\bc\bd B$, there is a separable $\rho'$ that satisfies $B$ and such that
  \begin{align*}
    \partr{V_1}{V_2}\rho' &= \denotc{\idx1\bc}\pb\paren{\partr{V_1}{V_2}\rho} = \denotc{\idx1\bc}\pb\paren{\proj{\basis{\cl{V_1}}{m_1}}\tensor\proj{\psi_1}},
   \\
    \partr{V_2}{V_1}\rho' &= \denotc{\idx2\bd}\pb\paren{\partr{V_2}{V_1}\rho} = \denotc{\idx2\bd}\pb\paren{\proj{\basis{\cl{V_2}}{m_2}}\tensor\proj{\psi_2}}.
  \end{align*}
  This shows the  $\Rightarrow$-direction.

  \medskip
  
  Now we show the $\Leftarrow$-direction. Assume that for all $m_1,m_2,\psi_1,\psi_2$
  as in the statement of the lemma, there exists a separable state
  $\rho'$
  such that: $\rho'$
  satisfies $B$,
  and
  $\partr{V_1}{V_2}\rho'=\denotc{\idx1\bc}\pb\paren{\pointstate{\cl{V_1}}{m_1}\tensor\proj{\psi_1}}$,
  and
  $\partr{V_2}{V_1}\rho'=\denotc{\idx2\bd}\pb\paren{\pointstate{\cl{V_2}}{m_2}\tensor\proj{\psi_2}}$.
  We need to show that $\rhl A\bc\bd B$.
  
  Fix a state $\rho\in\traceposcq{V_1V_2}$
  that satisfies $A$.
  We can decompose $\rho$
  as
  \[
    \rho=\sum_{m_1\in\types{\cl{V_1}},m_2\in\types{\cl{V_2}}}
\pb  \pointstate{\cl{V_1}\cl{V_2}}{\memuni{m_1m_2}}\tensor
  \rho_{\memuni{m_1m_2}}.
  \]
  Since $\rho$ satisfies $A$, we have $\suppo\rho_{\memuni{m_1m_2}}\subseteq \denotee A{\memuni{m_1m_2}}$.
  Furthermore, since $\rho$ is separable, the states $\rho_{\memuni{m_1m_2}}$
  can be decomposed as
  $\rho_{\memuni{m_1m_2}}=\sum_i \lambda_{\memuni{m_1m_2}i} \proj{\psi^{(1)}_{\memuni{m_1m_2}i}\tensor \psi^{(2)}_{\memuni{m_1m_2}i}}
  $
  for some normalized $\psi^{(1)}_{\memuni{m_1m_2}i}\in\elltwov{\qu{V_1}}$,  $\psi^{(2)}_{\memuni{m_1m_2}i}\in\elltwov{\qu{V_2}}$ and $\lambda_{\memuni{m_1m_2}i}\geq0$.
  Since  $\suppo\rho_{\memuni{m_1m_2}}\subseteq \denotee A{\memuni{m_1m_2}}$, 
  $\psi^{(1)}_{\memuni{m_1m_2}i}\tensor \psi^{(2)}_{\memuni{m_1m_2}i}\subseteq \denotee A{\memuni{m_1m_2}}$.

  By assumption, this implies the existence of separable  states
  $\rho'_{\memuni{m_1m_2}i}$ that satisfy $B$, and such that:
  \begin{align}
    \partr{V_1}{V_2}\rho'_{\memuni{m_1m_2}i} &= \denotc{\idx1\bc}\pb\paren{\pointstate{\cl{V_1}}{m_1}\tensor\proj{\psi^{(1)}_{\memuni{m_1m_2}i}}}, \label{eq:x1}\\
    \partr{V_2}{V_1}\rho'_{\memuni{m_1m_2}i} &= \denotc{\idx2\bd}\pb\paren{\pointstate{\cl{V_2}}{m_2}\tensor\proj{\psi^{(2)}_{\memuni{m_1m_2}i}}}. \label{eq:x2}
  \end{align}
  Let $\rho':=\sum_{\memuni{m_1m_2}i}\lambda_{\memuni{m_1m_2}i}\,\rho'_{\memuni{m_1m_2}i}$.
  
  From \eqref{eq:x1}, it follows that
  \[
    \tr\rho'_{\memuni{m_1m_2}i} = \tr\,
  \denotc{\idx1\bc}\pb\paren{\pointstate{\cl{V_1}}{m_1}\tensor\proj{\psi^{(1)}_{\memuni{m_1m_2}i}}}
  \starrel\leq \tr
  \pb\paren{\pointstate{\cl{V_1}}{m_1}\tensor\proj{\psi^{(1)}_{\memuni{m_1m_2}i}}} =
  1.
\]
(Here $(*)$
  holds because $\denotc{\idx1\bc}$
  is a cq-superoperator and thus trace-decreasing.) Thus
  $\sum_{\memuni{m_1m_2}i}\tr\lambda_{\memuni{m_1m_2}i}\,\rho'_{\memuni{m_1m_2}i}\leq
  \sum_{\memuni{m_1m_2}i}\lambda_{\memuni{m_1m_2}i} = \tr\rho \leq \infty$. Thus
  $\rho'=\sum_{\memuni{m_1m_2}i}\lambda_{\memuni{m_1m_2}i}\,\rho'_{\memuni{m_1m_2}i}\in\traceposcq{V_1V_2}$
  exists.  Since all $\rho'_{\memuni{m_1m_2}i}$
  are separable, $\rho'$ is separable. Since all $\rho'_{\memuni{m_1m_2}i}$
  satisfy $B$, $\rho'$ satisfies $B$.

  Furthermore, 
  \begin{align}
    \partr{V_1}{V_2}\rho'
    &= \sum_{\memuni{m_1m_2}i} \lambda_{\memuni{m_1m_2}i}\,\partr{V_1}{V_2}
    \rho'_{\memuni{m_1m_2}i}
    \eqrefrel{eq:x1}=
    \sum_{\memuni{m_1m_2}i} \lambda_{\memuni{m_1m_2}i}\,
    \denotc{\idx1\bc}\pb\paren{\pointstate{\cl{V_1}}{m_1}\tensor\proj{\psi^{(1)}_{\memuni{m_1m_2}i}}}
    \notag\\
 &
   =
   \denotc{\idx1\bc} \pB\paren{ \sum_{m_1m_2i} \lambda_{\scriptscriptstyle  m_1m_2i}\,
   \pb\pointstate{\cl{V_1}}{m_1}\tensor\pb\proj{\psi^{(1)}_{\memuni{m_1m_2}i}}}
    \notag\\
 &
   =
   \denotc{\idx1\bc} \pB\paren{ \partr{V_1}{V_2} \sum_{m_1m_2} \pB\paren{
   \pb\pointstate{\cl{V_1}\cl{V_2}}{\memuni{m_1m_2}}\tensor 
   \underbrace{\sum_i\lambda_{\scriptscriptstyle  m_1m_2i}\,\pb\proj{\psi^{(1)}_{\memuni{m_1m_2}i}\tensor \psi^{(2)}_{\memuni{m_1m_2}i}}}_{{}=\rho_{m_1m_2}}
   }}
    \notag\\
 &
   =
   \denotc{\idx1\bc} \pb\paren{ \partr{V_1}{V_2} \rho}.
   \label{eq:y1}
  \end{align}
And analogously, we get 
  \begin{equation}
    \partr{V_2}{V_1}\rho'
    = 
    \denotc{\idx2\bd} \pb\paren{ \partr{V_2}{V_1} \rho}.
   \label{eq:y2}
  \end{equation}
  
  Summarizing, for any separable $\rho$
  that satisfies $A$,
  there is a separable $\rho'$
  that satisfies $B$
  and such that \eqref{eq:y1} and \eqref{eq:y2} hold. Thus
  $\rhl A\bc\bd B$.  This shows the $\Leftarrow$-direction.
\end{proof}

\subsection{Sound rules for qRHL.}
From the definition of qRHL judgments, we can derive a number of
reasoning rules for qRHL. Our reasoning rules are sound (each rule
follows from the definition of qRHL judgments), but we make no claim
that they are complete. (This situation is similar to that of pRHL,
where no complete set of rules has been proposed either. Instead, the
set of rules is chosen to cover all use cases that occur in common
cryptographic proofs.)

We group the rules in three categories: In
\autoref{fig:rules.general}, we list general reasoning rules.  These
rules are not specific to individual statements (such as assignment or
measurement), but instead describe structural properties of qRHL (such
as the frame rule, or case distinction, etc.) In
\autoref{fig:rules.stmts}, we list the rules specific to the classical
statements in our language (e.g., assignment, sampling, if-statement).
And in \autoref{fig:rules.quantum}, we list rules specific to the
quantum statements (e.g., measurement, quantum initialization).  We
will shortly discuss those rules.
The proofs of the rules are are
given in \autoref{sec:rule-proofs}.

  \begin{theorem}
    The rules in Figures~\ref{fig:rules.general},
    \ref{fig:rules.stmts}, and \ref{fig:rules.quantum} are sound.
  \end{theorem}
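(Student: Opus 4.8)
The plan is to prove soundness of each rule in \autoref{fig:rules.general}, \autoref{fig:rules.stmts}, and \autoref{fig:rules.quantum} separately, since the theorem is just the conjunction of these individual claims. Each proof will unfold \autoref{def:rhl}, and the central simplification throughout will be \autoref{lemma:pure}: rather than quantifying over arbitrary $(V_1,V_2)$-separable cq-states $\rho$ satisfying the precondition $A$, it suffices to consider deterministic classical memories $m_1,m_2$ together with pure normalized quantum states $\psi_1,\psi_2$ with $\psi_1\tensor\psi_2\in\denotee A{\memuni{m_1m_2}}$, and to exhibit a single separable witness $\rho'$ that satisfies $B$ and whose marginals are $\denotc{\idx1\bc}\bigl(\pointstate{\cl{V_1}}{m_1}\tensor\proj{\psi_1}\bigr)$ and $\denotc{\idx2\bd}\bigl(\pointstate{\cl{V_2}}{m_2}\tensor\proj{\psi_2}\bigr)$. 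This reduction is what makes the quantum rules tractable, because the semantics of \autoref{sec:qsemantics} act transparently on pointstates $\pointstate{\cl V}m\tensor\proj\psi$.

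For the structural rules of \autoref{fig:rules.general}, most cases are nearly immediate from the definition. The rule \ruleref{Skip} follows because $\denotc\Skip$ is the identity, so the input coupling itself serves as the witness; \ruleref{Conseq} follows from \autoref{lemma:leq.sat}, since weakening the postcondition and strengthening the precondition preserves the existence of a witness; and \ruleref{Case} follows by partitioning the pure initial states according to the value of the case expression and applying the hypotheses on each piece. The rule \ruleref{Seq} is proved by threading witnesses: given the intermediate predicate, the witness $\rho'$ produced for the first judgment is a valid separable input state for the second, and chaining the two couplings yields the composite witness. Here the crucial structural fact is that separability and the cq-property are preserved under composition, which holds because each $\denotc\cdot$ is a cq-superoperator.

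For the classical-statement rules of \autoref{fig:rules.stmts}, the workhorse is \autoref{lemma:classical.sem}, identifying $\denotc\bc$ on lifted distributions $\qlift\mu$ with the classical semantics $\denotcl\bc$; the rules for assignment, sampling, and the conditional then reduce to facts about $\denotcl\cdot$, with $\rho'$ built diagonally on the classical parts, and \ruleref{JointSample} additionally fixing a coupling of the two sampling distributions. For the quantum-statement rules of \autoref{fig:rules.quantum}, I would take a pure input $\pointstate{\cl{V_i}}{m_i}\tensor\proj{\psi_i}$, push it through the explicit semantics, and take $\rho'$ to be the (pure, hence separable) tensor product of the two resulting states. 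Verifying that $\rho'$ satisfies $B$ is where the predicate algebra of \autoref{sec:quantum.eq} enters: \autoref{lemma:qeq.inside}, \autoref{lemma:qeq.span}, and \autoref{lemma:spacediv.leq} are exactly what convert the pre- and postconditions of \ruleref{QApply1}, \ruleref{QInit1}, and \ruleref{Measure1} into the form they take after the operator, isometry, or measurement has acted on the registers $Q$.

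The main obstacle will be the frame rule and any rule whose conditions mention the quantum equality $\quanteq$. For the frame rule, the restriction of \autoref{def:rhl} to separable states is precisely what makes it provable: given the witness for $\rhl A\bc\bd B$, one must tensor in a state on the frame variables $R$ while preserving both separability and the two marginal conditions, and argue that the support remains inside the framed postcondition. For rules that maintain $\quanteq$, the characterization of quantum equality on product states in \autoref{lemma:quanteq} and \autoref{coro:quanteq} is indispensable: from $\psi_1\tensor\psi_2\in(Q_1\quanteq Q_2)$ it extracts matching factorizations $\psi_i=\psi_i^Q\tensor\psi_i^Y$ with $\Urename\sigma\psi_2^Q=\psi_1^Q$, which lets one run the identical program on each side and reassemble a product witness that again lies in the equality subspace. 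The delicate bookkeeping will be tracking the canonical isomorphisms $\Urename\sigma$, $\Uvarnames Q$, and $\idx i$ across these reassemblies, and confirming that each constructed $\rho'$ is genuinely separable and cq.
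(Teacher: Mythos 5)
Your overall strategy is the paper's: each rule is proved separately, \autoref{lemma:pure} is indeed the reduction that drives almost every case, and the witnesses are built explicitly from pure product inputs, with \autoref{lemma:quanteq} and \autoref{coro:quanteq} doing the work wherever $\quanteq$ appears. Two specific steps in your plan would fail as written, however.

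First, the classical-statement rules of \autoref{fig:rules.stmts} cannot be routed through \autoref{lemma:classical.sem}. That lemma (and the map $\qlift\mu$) is only defined when $\qu V=\varnothing$, whereas the programs in \ruleref{Assign1}, \ruleref{Sample1}, \ruleref{JointSample}, \ruleref{If1}, etc.\ run over a memory that still carries quantum registers, and the pre- and postconditions are genuine quantum predicates. The paper's proofs instead compute directly with the cq-semantics of \autoref{sec:qsemantics} on states of the form $\pointstate{\cl V}m\tensor\rho_m$, carrying the quantum part $\rho_m$ along untouched. The repair is routine, but it is a different computation, not an appeal to $\denotcl\cdot$. (A smaller misattribution: \autoref{lemma:qeq.inside}, \autoref{lemma:qeq.span}, and \autoref{lemma:spacediv.leq} are user-facing simplification rules used in \autoref{sec:example}; the soundness proofs of \ruleref{QApply1}, \ruleref{QInit1}, and \ruleref{Measure1} work directly from the definitions of $\im$, $\spaceat{}{}$, and orthogonal projectors.)

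Second, your frame-rule sketch has the mechanism backwards. Both marginals of any admissible witness $\rho'$ for $\rhl{A\cap R}\bc\bd{B\cap R}$ are completely determined --- they must equal $\denotc{\idx1\bc}$ and $\denotc{\idx2\bd}$ applied to the input marginals over \emph{all} of $V_1$ and $V_2$, frame variables included --- so there is nothing left to ``tensor in.'' What the paper actually does is take the witness supplied by the premise $\rhl A\bc\bd B$ and prove that this same state already satisfies $R$: because $\bc$ and $\bd$ are local away from (and readonly on) the variables of $R$, the reduced states of $\rho'$ on those variables coincide with those of the input, and a Schmidt decomposition of the pure product input pins their supports inside $R$. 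This is where separability is genuinely exploited, and it is the one rule whose proof is not a direct witness construction. Relatedly, your template does not cover \ruleref{While1} and \ruleref{JointWhile}, which need infinite sums, the totality hypothesis, and a non-linear choice function in place of a superoperator, nor \ruleref{Trans}, which is by far the most involved proof in the paper and requires stitching together two couplings through an intermediate program via the composed relations $a_{\bc\bd}\circexp a_{\bd\be}$.
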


(We prove this for each rule individually, see the lemmas
    referenced in the figures.)

\begin{figure}[t]
  \begin{ruleblock}
    \RULE{Sym}{
      \pb\rhl{\Urename{\restrict\sigma{\qu{V_1}\qu{V_2}}}\,(\subst A\sigma)}\bd\bc
      {\Urename{\restrict\sigma{\qu{V_1}\qu{V_2}}}\,(\subst B\sigma)}\\
      \sigma:=\idx 1\circ\idx 2^{-1} \cup \idx 2\circ\idx 1^{-1}
    }{
      \rhl A\bc\bd B
    }
    \RULE{Conseq}{A\subseteq A'\\B'\subseteq B
      \\\rhl{A'}\bc\bd {B'}}{\rhl{A}\bc\bd B}
    \RULE{Seq}{
      \rhl{A}{\bc_1}{\bc_2}{B}
      \\
      \rhl{B}{\bd_1}{\bd_2}{C}
    }{
      \rhl{A}{\seq{\bc_1}{\bd_1}}{\seq{\bc_2}{\bd_2}}{C}
    }
    \RULE{Case}{
      \forall z\in\typee e.\
      \rhl{\CL{e=z}\cap A}\bc\bd{B}
    }{
      \rhl{A}\bc\bd{B}
    }
    \RULE{Equal}{
      \text{$X$ is a list of classical variables} \\
      \text{$Y$ is a list of distinct quantum variables} \\
      \bc\text{ is $XY$-local} \\
      X_i:=\idx iX\\
      Y_i:=\idx iY
    }{
      \pb\rhl
      {\CL{X_1=X_2}\cap (Y_1\quanteq Y_2)}
      \bc\bc
      {\CL{X_1=X_2}\cap (Y_1\quanteq Y_2)}
    }
    \RULE{Frame}{
      X_i':=\idx iX_i \\
      Y_i':=\idx iY_i \\
      \bc\text{ is $X_1$-local}\\
      \bd\text{ is $X_2$-local}\\
      R\text{ is $Y_1'Y_2'$-local}\\
      \text{$X_1\cap Y_1$ and $X_2\cap Y_2$ are classical}\\
      \text{$\bc$ is $(X_1\cap Y_1)$-readonly}
      \\
      \text{$\bd$ is $(X_2\cap Y_2)$-readonly}\\
      \rhl{A}\bc\bd B }{ \rhl{A\cap R}\bc\bd{B\cap R} }
    \RULE{QrhlElim}{
      \rho\in\traceposcq{V_1V_2}\text{ is separable}\\
      \rho\text{ satisfies }A \\
      \Erename{\idx1}(\rho_1)=\partr{V_1}{V_2}\rho \\
      \Erename{\idx2}(\rho_2)=\partr{V_2}{V_1}\rho \\
      \pb\rhl A\bc\bd {\CL{\idx1e \Rightarrow \idx2f}}
    }{
      \pb\prafter e\bc{\rho_1}
      \leq
      \pb\prafter f\bd{\rho_2}
      \rlap{\hskip1.5cm\parbox{4cm}{\footnotesize (also holds for
          $=,\Leftrightarrow$
          and\\\strut\,
          $\geq,\Leftarrow$
          instead of $\leq,\Rightarrow$)}}
    }
    \RULE{QrhlElimEq}{
      \rho\in\traceposcq{V}\\
      \rho\text{ satisfies }A\\
      X\text{ is a list of classical variables}\\
      Y\text{ is a list of distinct quantum variables}\\
      \text{$\bc,\bd$ are $XY$-local} \\
      A\text{ is $\cl VY$-local}\\
     X_i:=\idx iX\\
      Y_i:=\idx iY\\
      A_i:=\Urename{\restrictno{\idx i\!}{\qu{V}}}(\idx i A) \tensor \elltwov{\qu{V_{3-i}}}\\
      \pb\rhl {\CL{X_1=X_2}\cap (Y_1\quanteq Y_2)\cap A_1\cap A_2}{\bc}{\bd} {\CL{\idx1e \Rightarrow \idx2f}}
    }{
      \pb\prafter e\bc{\rho}
      \leq
      \prafter f\bd{\rho}
      \newcommand\alsoholds{\footnotesize (also holds for
      $=,\Leftrightarrow$
      and\\\strut\,
      $\geq,\Leftarrow$
      instead of $\leq,\Rightarrow$)}
    \rlap{\hskip23mm\parbox{4cm}{\alsoholds}}%
  }
  \RULE{TransSimple}{
    X_{pi}:=\idx i\cl{\fv(p)},\ Q_{pi}:=\idx i\qu{\fv(p)}\text{ for }p=\bc,\bd,\be,\ i=1,2
    \\
    \rhl{\CL{X_{c1}=X_{d2}}\cap (Q_{c1}\quanteq Q_{d2})}\bc\bd
    {\CL{X_{c1}=X_{d2}}\cap (Q_{c1}\quanteq Q_{d2})}
    \\
    \rhl{\CL{X_{d1}=X_{e2}}\cap (Q_{d1}\quanteq Q_{e2})}\bd\be
    {\CL{X_{d1}=X_{e2}}\cap (Q_{d1}\quanteq Q_{e2})}
  }{
    \rhl{\CL{X_{c1}=X_{e2}}\cap (Q_{c1}\quanteq Q_{e2})}\bc\be
    {\CL{X_{c1}=X_{e2}}\cap (Q_{c1}\quanteq Q_{e2})}
  }
  \RULE{Trans}{
    Q_{\bc},Q_{\bd},R_{\bd},Q_{\be}\text{ are lists of quantum variables}
    \\
    \qu{\fv(\bc)} \subseteq Q_{\bc}
    \\
    \qu{\fv(\be)} \subseteq Q_{\be}
    \\
    \typee{a_{\bc\bd}},\typee{a_{\bd\be}},\typee{b_{\bc\bd}},\typee{b_{\bd\be}}\subseteq\bool
    \\
    \typee{u_p}\subseteq\iso{\typel{Q_p},Z'}\text{ for }p=\bc,\be
    \\
    \typee{u_{\bd}}\subseteq\uni{\typel{Q_{\bd}},Z'}
    \\
    \typee{v_p}\subseteq\iso{\typel{Q_p},Z'}\text{ for }p=\bc,\be
    \\
    \typee{v_{\bd}}\subseteq\boundedleq{\typel{R_{\bd}},Z'}
    \\
    Q_{pi}:=\idx i Q_p, u_{pi}:=\idx i u_p, v_{pi}:=\idx i v_p\text{ for }i=1,2,\ p=\bc,\bd,\be
    \\
    R_{\bd i}:=\idx i R_{\bd}\text{ for }i=1,2
    \\
    (b_{\bc\bd} \implies (\xx^{(1)}_1,\dots,\xx^{(n)}_1)=\idx2 e_{\bc})
    \text{ holds for some expression }e_{\bc}\text{ and }\{\xx^{(1)},\dots,\xx^{(n)}\}:=\cl{\fv(\bc)}
    \\
    (b_{\bd\be} \implies (\yy^{(1)}_2,\dots,\yy^{(m)}_2)=\idx1 e_{\be})
    \text{ holds for some expression }e_{\be}\text{ and }\{\yy^{(1)},\dots,\yy^{(m)}\}:=\cl{\fv(\bd)}
    \\
    \pB\rhl{\CL{a_{\bc\bd}} \cap \paren{u_{\bc1}Q_{\bc1} \quanteq u_{\bd2}Q_{\bd2}}}\bc\bd{
      \CL{b_{\bc\bd}} \cap \paren{v_{\bc1}Q_{\bc1} \quanteq v_{\bd2}R_{\bd2}}} 
    \\
    \pB\rhl{\CL{a_{\bd\be}} \cap \paren{u_{\bd1}Q_{\bd1} \quanteq u_{\be2}Q_{\be2}}}\bd\be{
      \CL{b_{\bd\be}} \cap \paren{v_{\bd1}R_{\bd1} \quanteq v_{\be2}Q_{\be2}}} 
  }{
    \pB\rhl{\CL{a_{\bc\bd}\circexp a_{\bd\be}} \cap \paren{u_{\bc1}Q_{\bc1} \quanteq u_{\be2}Q_{\be2}}}\bc\be{
      \CL{b_{\bc\bd}\circexp b_{\bd\be}} \cap \paren{v_{\bc1}Q_{\bc1} \quanteq v_{\be2}Q_{\be2}}} 
  }
\end{ruleblock}
\caption{Rules for qRHL (general rules).  Proofs are in
  \autoref{sec:proofs-general}.
    % Rules \rulerefx{QrhlElim} and
    % \rulerefx{QrhlElimEq} also hold with $=,\Leftrightarrow$
    % and $\geq,\Leftarrow$
    % instead of $\leq,\Rightarrow$.
  }
  \label{fig:rules.general}
\end{figure}

\newenvironment{rulesdescitem}{\begin{compactitem}}{\end{compactitem}}

\paragraph{General rules.}
\begin{rulesdescitem}
\item \Ruleref{Sym} states that qRHL is symmetric, i.e., we can exchange the
  left and the right program. However, in that case, the indices of the
  variables in the pre- and postconditions need to be adjusted (e.g.,
  $\xx_1$
  has to become $\xx_2$,
  and $\qq_2$
  has to become $\qq_1$).
  The variable renaming $\sigma$
  from the rule maps $\xx_1$
  to $\xx_2$
  and vice-versa, for classical and quantum variables. Then $A\sigma$
  is the predicate $A$
  with its \emph{classical} variables renamed. ($A\sigma$
  does not rename the quantum variables, because a predicate $A$
  is by definition an expression with free \emph{classical} variables, see
  \autoref{def:pred}.) To rename the quantum variables as well
  (i.e., to apply the renaming ${\restrict\sigma{\qu{V_1}\qu{V_2}}}$
  to the quantum variables), we need to apply the unitary
  $\Urename{\restrict\sigma{\qu{V_1}\qu{V_2}}}$
  that swaps $\elltwov{\qu{V_1}}$
  and $\elltwov{\qu{V_2}}$.
  Thus, to rename both quantum and classical variables in $A$
  we write $\Urename{\restrict\sigma{\qu{V_1}\qu{V_2}}}(A\sigma)$.

\item 
  \Ruleref{Conseq} allows us to weaken the precondition and strengthen
  the postcondition of a qRHL judgment (when doing backwards reasoning).
\item 
  \Ruleref{Seq} allows us to reason about the sequential composition of
  programs by introducing a predicate that has to hold in the middle of
  the execution of the two programs. 
\item
  \Ruleref{Case} allows us to strengthen a precondition by assuming that
  a certain classical variable or expression has a fixed but arbitrary
  value. For example, we can transform a qRHL judgment
  $\rhl{\CL{\xx_1=\xx_2}}\bc\bd{\CL{\xx_1=\xx_2}}$
  into $\rhl{\CL{\xx_1=\xx_2=z}}\bc\bd{\CL{\xx_1=\xx_2}}$.
  Then we can, e.g., perform case distinctions over $z$
  in the ambient logic (and apply different rules and proof steps
  depending on the value of $z$) when proving the latter judgment.

  This rule is similar to the classical proof technique of coin fixing:
  In coin fixing, we replace an adversary that starts out with a random
  randomness tape $\xx$
  by an adversary which has a randomness tape initialized to a fixed
  (but arbitrary) value $z$.

  \Ruleref{Case} is also used in the derivation of other rules. The proofs of \ruleref{If1}
  and \ruleref{JointIf} use \ruleref{Case} to make a case distinction
  between taking the then- and the else-branch of a conditional.
\item \Ruleref{Equal} states that, if all classical and quantum
  variables of $\bc$
  are the same on the left and right before execution, they are the
  same after execution.
  \par
  This rule allows us to reason about unknown code $\bc$ (adversary invocations).
\item 
\Ruleref{Frame} formalizes the fact that if some predicate holds
before execution of the programs, and the predicate refers only to
variables not written in the programs, then it also holds after the
execution. For example, if we know $\rhl A\bc\bd B$,
then we can derive $\rhl{A\cap R}\bc\bd{B\cap R}$,
assuming that $R$
does not refer to any variables occurring in $\bc,\bd$.
(Actually, it is sufficient to assume that $R$
only refers to classical variables that \emph{are not written} by $\bc,\bd$,
and only to quantum variables that \emph{do not occur}\footnote{While
  we allow read-only classical variables of $\bc,\bd$ to occur in $R$,
  we cannot make a comparable exception for quantum variables.
  Measuring a quantum variable modifies it, so there is no obvious concept of
  a read-only quantum variable.}
  in $\bc,\bd$.
See the formal statement of the rule.) 

This rule is useful for compositional analysis.  If $\bc,\bd$
are subprograms of larger programs, we can first analyze $\bc,\bd$
(and derive some judgment $\rhl A\bc\bd B$)
and then use \ruleref{Frame} to show that additional invariants $R$
occurring in the analysis of the larger programs are preserved.

An important special case is where $\bc$
is an unspecified program (modeling an adversary). Say we know that
$\bc$
only accesses variables $\xx$
and $\qq$.
Then by combining \ruleref{Frame} with \ruleref{Equal}, we get
$\rhl{\CL{\xx_1=\xx_2}\cap(\qq_1\quanteq\qq_2)\cap
  R}\bc\bc{\CL{\xx_1=\xx_2}\cap(\qq_1\quanteq\qq_2)\cap R}$
for $R$
that does not refer to $\xx,\qq$.
(Formally, $R$
is $Y$-local
with $\xx,\qq\notin Y$.)
This means that when analyzing a pair $\bd_1,\bd_2$
of programs that both invoke $\bc$,
any invariant $R$
between $\bd_1$
and $\bd_2$
is preserved by the call to the adversary $\bc$
(assuming that $\CL{\xx_1=\xx_2}\cap(\qq_1\quanteq\qq_2)$ holds
before the call to $\bc$, i.e., that the adversary's variables before
invocation have the same state on both sides).

Since typically, we do not know anything about an adversary except
which variables it may or may not access, this is the only way to
reason about adversaries.

\item \Ruleref{QrhlElim} allows us to deduce statements about
  probabilities from qRHL judgments. As the final goal of a proof in
  qRHL is to make statements about, e.g., attack probabilities, it is
  necessary to translate the judgments into statements about
  probabilities of certain events in programs. The \ruleref{QrhlElim}
  allows us to show statements of the form
  $\prafter e\bc{\rho_1}\leq\prafter f\bd{\rho_2}$
  (or analogous equalities) by proving qRHL judgments of the form
  $\rhl\dots\bc\bd{\CL{\idx1e\Rightarrow \idx2f}}$.
  That is, to show that $e$
  is less likely to occur in $\bc$
  (with initial state $\rho_1$)
  than $f$
  is likely to occur in $\bd$
  (with initial state $\rho_2$),
  we need to show that $\bc,\bd$
  can be related in such a way that $e$ implies $f$ after execution.

  Obviously, we cannot expect this to hold for any $\rho_1,\rho_2$,
  so we use the precondition $A$
  to describe the relationship between $\rho_1$
  and $\rho_2$.
  Due to its general treatment of the relationship between $\rho_1$
  and $\rho_2$,
  the rule is probably not easy to apply. For the most important
  special case where $\rho_1=\rho_2$, it is better to use the next rule.

\item
\Ruleref{QrhlElimEq}
is a special case of \ruleref{QrhlElim}
for comparing two programs with the same initial state $\rho$.
To show that $\prafter e\bc{\rho}\leq\prafter f\bd{\rho}$,
we need to show
  $\rhl{\mathit{Eqs}}\bc\bd{\CL{\idx1e\Rightarrow\idx2f}}$
  where $\mathit{Eqs}$
  is the predicate that states equality between all variables of $\bc$
  and $\bd$.
  (And analogously for $\prafter e\bc{\rho}=\prafter f\bd{\rho}$.)

  If additionally, we know that the initial predicate $\rho$
  satisfies a certain predicate $A$,
  we can strengthen the precondition accordingly, see the formal
  statement of the rule. 
  
  An application of \ruleref{QrhlElimEq} is probably the last step (or
  first when performing backwards reasoning) in most qRHL-based proofs,
  since in a game-based proof, we usually compare the probabilities of
  events that hold in different games (programs) when running them
  with the same initial state. (And in most cases, we will have
  $A=\CL{\true}$
  because all relevant variables would be initialized by the
  programs.)
  \item \Ruleref{Trans}: If $\bc$
  stands in a certain relationship to $\bd$,
  and $\bd$
  stands in a certain relationship to $\be$,
  then we expect to be able to conclude what the relationship between
  $\bd$
  and $\be$
  is (transitivity).  \Ruleref{TransSimple} is a simple version of the
  transitivity, where the relationship between $\bc$
  and $\bd$
  and $\be$
  is that all classical and quantum variables are equal in pre- and
  postcondition. In many cases, equality is probably too restrictive,
  but we included this simplified rule for illustrative purposes.
  A more powerful rule is \ruleref{Trans}
  which allows us to use more complex relationships between $\bc$,
  $\bd$
  namely predicates of the form
  $\CL{a_{cd}}\cap(u_cQ_c\quanteq u_dQ_d)$
  (and analogous for $\bd,\be$),
  subject to a number of technical conditions detailed in rule.  That
  is, the relationship between the classical variables can be
  expressed by an expression $a$,
  and the quantum variables can be related by unitaries $u_1,u_2$
  (which in turn may depend on classical variables).  Then the pre-
  and postcondition relating $\bc$ and $\be$ (in the conclusion of the rule) are what one would expect: the
  classical condition becomes $a_{cd}\circexp a_{de}$,
  which is the composition of $a_{cd}, a_{de}$,
  interpreted as relations on classical memories (see below for a
  definition). And the condition on the quantum variables becomes
  $u_cQ_c\quanteq u_dQ_d$
  (that is, quantum equality is transitive in a sense).

  We do not know whether the present formulation of transitivity is
  the most general one. Finding a more general one (or less technical) is an
  interesting open question.

  \medskip
  
  Formally, \symbolindexmark\circexp\pagelabel{page:def:circexp}$e \circexp f$
  is defined as the expression such that for two memories
  $m_1,m_3\in\elltwov{\cl V}$,
  $\denotee{e\circexp
    f}{m_1\circ\idx1^{-1}\,m_3\circ\idx2^{-1}}=\true$ iff there exists
  $m_2\in\elltwov{\cl{V}}$
  such that
  $\denotee{e}{m_1\circ\idx1^{-1}\,m_2\circ\idx2^{-1}}=\true$
  and $\denotee{f}{m_2\circ\idx1^{-1}\,m_3\circ\idx2^{-1}}=\true$.
\end{rulesdescitem}

\begin{figure*}[t]
  \begin{ruleblock}
    \RULE{Skip}{ }{\rhl{A}\Skip\Skip{A}}
    \RULE{Assign1}{ }{
      \pb\rhl{
        \substi B{\idx 1e/\xx_1}
      }{\assign{\xx} e}{\Skip}{B}
    }
    \RULE{Sample1}{
      A:=\pB\paren{\CL{e'\text{ is total}} \cap 
        \bigcap\nolimits_{z\in\suppd e'}\substi B{z/\xx_1}}\\
      e' := \idx1e
    }{
      \rhl{
        A
      }{
        \sample{\xx} e
      }{\Skip} B
    }
    \RULE{JointSample}{
      A
      :=
      \pB\paren{\pb\CL{\marginal1f = \idx1e_1
          \land
          \marginal2f = \idx2e_2}
        \cap
        \bigcap_{(z_1,z_2)\in\suppd f}
        \substi B{z_1/\xx_1,z_2/\yy_2}} \\
      \typee f\subseteq\distr{\typev\xx\times\typev\yy}
    }{
      \rhl{A}
      {\sample{\xx}{e_1}}
      {\sample{\yy}{e_2}}
      {B}
    }
    \RULE{If1}{
      \pb\rhl{\CL{\idx1e}\cap A}\bc\Skip B
      \\
      \pb\rhl{\CL{\lnot\idx1e}\cap A}\bd\Skip B
    }{
      \rhl{A}{\langif e\bc\bd}{\Skip}B
    }
    \RULE{JointIf}{
      A \subseteq \CL{\idx1e_1=\idx2e_2}\\
      \pb\rhl{\CL{\idx1e_1\land\idx2e_2}\cap A}{\bc_1}{\bc_2}B
      \\
      \pb\rhl{\CL{\lnot\idx1e_1\land\lnot\idx2e_2}\cap A}{\bd_1}{\bd_2}B\\
    }{
      \rhl{A}{\langif{e_1}{\bc_1}{\bd_1}}{\langif{e_2}{\bc_2}{\bd_2}}B
    }
    \RULE{While1}{
      \pb\rhl{\CL{\idx1e}\cap A}\bc\Skip A
      \\
      A\subseteq A_1\otimes\elltwov{\qu{V_2}}
      \\
      (\while{\idx1 e}{\idx1\bc})\text{ is total on }A_1
      \quad\text{\small(see Def.~\ref{def:prog.total})}
    }{
      \pb \rhl{A}{\while e\bc}\Skip {\CL{\lnot \idx1e}\cap A}
    }
    % `
    \RULE{JointWhile}{
      A\subseteq \CL{\idx1e_1=\idx2e_2}\\
      \rhl{\CL{\idx1e_1\land\idx2e_2}\cap A}\bc\bd A
    }{
      \rhl{A}{\while{e_1}\bc}{\while{e_2}{\bd}}{\CL{\lnot \idx1e_1\land\lnot \idx2e_2}\cap A}
    }
  \end{ruleblock}
  \caption{Rules for qRHL (related to individual classical statements).  Proofs are in \autoref{sec:proofs-classical}.
    For the rules \rulerefx{Assign1}, \rulerefx{Sample1},
    \rulerefx{If1}, and \rulerefx{While1}, there is also an analogous
    symmetric rule that we do not list explicitly.}
  \label{fig:rules.stmts}
\end{figure*}

\paragraph{Rules for classical statements.}
For each classical
statement of the language (skip, assignment, sampling, if, and while)
we provide one or several rules (\autoref{fig:rules.stmts}). Rules ending with \textsf1 operate
only on the left program and assume that the right program is $\Skip$.
The rules can be applied to a larger program by combining them with
\ruleref{Seq} (as is automatically done by the tactics in our
tool).  The symmetric variants
of those rules (with names ending in \textsf2) are omitted for brevity
but can be derived using \ruleref{Sym}. For some statements, we have
joint rules. E.g., joint sampling for showing qRHL judgments
with a sampling on both sides. These rules are used to prove
qRHL judgments where the two programs are ``in sync'' (e.g., we can
show that two programs $\sample{\xx} e$
and $\sample{\xx}e$
will satisfy postcondition $\xx_1=\xx_2$,
i.e., the two samplings, though random, can be assumed to pick the
same value). The rules concerning classical statements are very
similar to those in pRHL \cite{certicrypt}, except for
\ruleref{JointSample} which we have somewhat generalized:
\begin{itemize}
\item \Ruleref{JointSample} proves qRHL judgments of the form
  $\rhl A{\sample \xx{e_1}}{\sample\yy{e_2}}B$.
  In the simplest case, $e_1,e_2$
  represent the same distribution, and we show the postcondition
  $B:=\CL{\xx_1=\yy_2}$.
  That is, sampling $\xx$
  and $\yy$
  in the two programs proceeds ``in sync''. However, in more general
  cases, we might have different distributions $e_1,e_2$,
  and we might want to show a different relationship between $\xx_1$
  and $\yy_2$.
  In this case, we need to provide a joint distribution $f$
  as a witness, such that $e_1,e_2$
  are the marginals of $f$,
  and $B$
  holds for any $(\xx_1,\yy_2)$
  chosen according to $f$.
  ($f$
  is an expression and thus may itself depend on classical program
  variables, if needed.)
  
  For example, say $e_1$
  is the uniform distribution on $\{0,\dots,n-1\}$
  and $e_2$
  is the uniform distribution on $\{0,\dots,2n-1\}$,
  and $B=\CL{\xx_1\leq\yy_2}$.
  Then we can chose $f:=\pb\paren{\lambda y.\,(\floor{y/2},y)}(e_2)$
  (where we use the notation $g(e)$
  to denote the distribution resulting from applying $g$
  to the values sampled by $e$).
  This $f$
  has the marginals $e_1,e_2$,
  and $(\xx_1,\yy_2)$
  in the support of $f$ always satisfy $\xx_1\leq \yy_2$.

  In contrast, the joint sampling rule \textsc{R-Rand} from \cite{certicrypt} required
  use to provide a bijection $h$
  between the supports of $e_1$
  and $e_2$.
  This is a weaker rule. It cannot be applied to preceding example,
  for example (since there is no bijection between $\{0,\dots,n-1\}$
  and $\{0,\dots,2n-1\}$).
  One can recover the behavior of \textsc{R-Rand} from \cite{certicrypt} as a
  special case of our \ruleref{JointSample} by setting
  $f:=\pb\paren{\lambda x. (x,h(x))}(e_1)$ in our rule.
\end{itemize}

\begin{figure*}[t]
  \begin{ruleblock}
    \RULE{QInit1}{
      Q' := \idx1Q\\
      e' := \Uvarnames{Q'}\idx1e
    }{
      \pb\rhl{(\spaceat A{e'})\otimes\elltwov{Q'}}
      {\Qinit{Q}{e}}\Skip A
    }
    \RULE{QApply1}{
      %Q':=\idx1Q\\
      e':=\lift{\idx1e}{\idx1Q}\\ %\Uvarnames{ Q'}(\idx1 e)\adj{\Uvarnames{ Q'}} \tensor  \idv{\qu{V_1}\qu{V_1}\setminus Q'}\\
    }{
      \rhl{\adj{e'}\cdot(B\cap \im e')}{\Qapply{e}{Q}}\Skip{B}
    }
    \RULE{Measure1}{
      A:=\Bigl(\CL{\idx1e\text{ is a total measurement}}\cap\bigcap\nolimits_{z\in\typev{\xx_1}} \bigl((\substi B{z/\xx_1}\cap \im e'_z) + \orth{(\im e'_z)} \bigr)\Bigr)
      \\
      %Q' :=\idx1 Q\\
      % e'_z := \Uvarnames{Q'}\pb\paren{\idx1e(z)}\adj{\Uvarnames{Q'}} \tensor \idv{\qu{V_1}\qu{V_2}\setminus Q'}
      e'_z := \lift{\idx1e(z)}{\idx1Q}
    }{
      \rhl{A}{\Qmeasure{\xx}{Q}{e}}\Skip B
    }
    \RULE{JointMeasureSimple}{
      \typev\xx=\typev\yy\\
      \typel{Q_1}=\typel{Q_1}\\
      Q_1' := \idx1 Q_1\\
      Q_2' := \idx2 Q_2\\
      e'_{1z} := \lift{\idx1e_1(z)}{Q_1'}\\ %\Uvarnames{Q_1'}\pb\paren{\idx1e_1(z)}\adj{\Uvarnames{Q_1'}} \tensor \idv{\qu{V_1}\qu{V_2}\setminus Q_1'}\\
      e'_{2z} := \lift{\idx2e_2(z)}{Q_2'}\\ %\Uvarnames{Q_2'}\pb\paren{\idx2e_2(z)}\adj{\Uvarnames{Q_2'}} \tensor \idv{\qu{V_1}\qu{V_2}\setminus Q_2'}\\
      \\
      A := \CL{\idx1e_1=\idx2e_2}\cap(Q_1'\quanteq Q_2')\cap \bigcap_{z\in\typev\xx}
      (\substi B{z/\xx_1,z/\yy_2}
      \cap \im e'_{1z} \cap \im e'_{2z})
      + \orth{(\im e'_{1z})} + \orth{(\im e'_{2z})}
    }{
      \rhl{A}
      {\Qmeasure{\xx}{Q_1}{e_1}}{\Qmeasure{\yy}{Q_2}{e_2}}{B}
    }
    \RULE{JointMeasure}{
      \typee f\subseteq\powerset{\typev\xx\times\typev\yy}\\
      \typee{u_1}\subseteq\iso{\typel{Q_1},Z}\\
      \typee{u_2}\subseteq\iso{\typel{Q_2},Z}\\
      Q_1' := \idx1Q_1\\
      Q_2' := \idx2Q_2\\
      e'_{1x} := \lift{\idx1e_1(x)}{Q_1'}\\ %      \Uvarnames{Q_1'}\pb\paren{\idx1e_1(x)}\adj{\Uvarnames{Q_1'}} \tensor \idv{\qu{V_1}\qu{V_2}\setminus Q_1'}\\
      e'_{2y} := \lift{\idx2e_2(y)}{Q_2'}\\ % \Uvarnames{Q_2'}\pb\paren{\idx2e_2(y)}\adj{\Uvarnames{Q_2'}} \tensor \idv{\qu{V_1}\qu{V_2}\setminus Q_2'}\\
      % u'_1 := u_1\adj{\Uvarnames{Q_1'}}\\
      % u'_2 := u_2\adj{\Uvarnames{Q_2'}}\\
      C_f := \pB\CL{
        \pb\paren{\forall x.\idx1e_1(x)\neq0\implies \pb\abs{\{y:(x,y)\in f\}}=1}
        \land
        \pb\paren{\forall y.\idx2e_2(y)\neq0\implies \pb\abs{\{x:(x,y)\in f\}}=1}
      }
      \\
      C_e := \pB\CL{\forall(x,y)\in f.\ u_1\pb\paren{\idx1e_1(x)}\adj{u_1}=u_2\pb\paren{\idx2e_2(y)}\adj{u_2}}
      \\
      A := \bigcap_{(x,y)\in f}
      (\substi B{x/\xx_1,y/\yy_2}
      \cap \im e'_{1x} \cap \im e'_{2y})
      + \orth{(\im e'_{1x})} + \orth{(\im e'_{2y})}
    }{
      \pb\rhl{C_f\cap C_e\cap A\cap(u_1Q_1'\quanteq u_2Q_2')}
      {\Qmeasure{\xx}{Q_1}{e_1}}{\Qmeasure{\yy}{Q_2}{e_2}}{B}
    }
  \end{ruleblock}
  \caption{Rules for qRHL (related to individual quantum statements).
    Proofs are in \autoref{sec:proofs-quantum}.
    For the rules \rulerefx{Measure1}, \rulerefx{QApply1},
    and \rulerefx{QInit1}, there is also an analogous
    symmetric rule that we do not list explicitly.}
  \label{fig:rules.quantum}
\end{figure*}

\paragraph{Rules for quantum statements.}
Like with the classical
statements, we provide one or several rules for each quantum statement
(one-sided ones, and joint ones, see \autoref{fig:rules.quantum}). Again, for the one-sided rules we
only present the ones for the left side (ending with \texttt{1}), the
others are analogous and can be derived using \ruleref{Sym}.
\begin{rulesdescitem}
\item \Ruleref{QInit1} proves qRHL judgments that have a
  quantum initialization of the form $\Qinit Q e$
  on the left side. We assume just a single $\Skip$
  statement on the right side, more complex programs can be first
  decomposed using \ruleref{Seq}.\footnote{%
    Here we implicitly use the (trivial to prove) associativity of $\seq{}{}$ and
    the fact that that $\seq\Skip\bc=\bc=\seq\bc\Skip$ (w.r.t.~to the denotation of programs).
    } The precondition
  $(\spaceat A{e'})\otimes \elltwov{Q'}$
  (with $\spaceat{}{}$
  as in \autoref{def:spaceat}) consists of those states that, when we
  replace the content of variables $Q'$
  by the state described by the expression $e'$,
  we have a state in $A$.
  
  Here $Q'$
  and $e'$
  are derived from $Q$
  and $e$
  by certain natural isomorphisms: $Q'$
  is $\idx1Q$,
  i.e., all variables are indexed with $1$.
  (To distinguish left and right variables in the pre- and
  postcondition.) And $e'$
  is $\Uvarnames{Q'}\idx 1e$,
  i.e., we index the classical variables in $e$
  with $1$,
  and then map the state returned by $e$
  from $\elltwo{\typee e}$
  (not labeled with quantum variables) into $\elltwov{Q'}$
  (labeled with quantum variables).  Similar natural conversions occur
  in the following rules as well, we will ignore them in our informal
  discussions.

  Note that there is no joint rule for quantum initializations. That
  is because initialization does not make any probabilistic choices
  (like measurements do), hence a joint rule would not be any
  different from simply applying rules \rulerefx{QInit1}
  and~\textsc{QInit2} consecutively. (This is analogous to the classical situation
  where \ruleref{Assign1} has no joint variant either.)
\item \Ruleref{QApply1} proves qRHL judgments that have a unitary
  quantum operation of the form $\Qapply eQ$
  on the left side. (As with \ruleref{QInit1}, we assume $\Skip$
  on the right side, and there is an analogous omitted rule
  \textsc{QApply2} for the right side.)

  If $e$
  (and thus $e'$ which is lifted using \autoref{def:lift})
  is unitary, then the precondition can be $\adj{e'}\cdot B$,
  because after applying $e$
  on $Q$ in a state in $\adj{e'}\cdot B$,
  we get a state in $e'\cdot\adj{e'}\cdot B=B$.
  But since we also allow isometries $e$
  in quantum applications, we need to restrict $B$
  to those states that are in the image $\im e'$ of $e'$, leading to the precondition
  $\adj{e'}\cdot (B\cap \im e')$.

  As with quantum initialization, we have no joint rule for quantum
  application since we can simply apply rules \rulerefx{QApply1} and
  \textsc{QApply2} and get the same result.
\item \Ruleref{Measure1} proves qRHL judgments that have a measurement
  of the form $\Qmeasure \xx Qe$
  on the left side. (As with \ruleref{QInit1} and \ruleref{QApply1},
  we assume $\Skip$
  on the right side, and there is an analogous omitted rule
  \textsc{QMeasure2} for the right side.)

  To satisfy the postcondition $B$, we need the following things to hold in the precondition:
  \begin{compactitem}
  \item $e$
    needs to be a total measurement. Otherwise, the probability that
    the left program terminates might be $<1$,
    and the right program terminates with probability $=1$,
    and then $\rhl A {\Qmeasure \xx Qe}\Skip B$
    cannot hold, no matter what the postcondition $B$ is.
  \item For any possible outcome $z$,
    we need that the post-measurement state after outcome $z$
    is in $B\{z/\xx_1\}$ (i.e., $B$ with $\xx_1$ set to be the outcome $z$).
    This is the case if the initial state lies in the complement of
    the image $\im e_z'$
    (then the measurement will not pass), or if it lies in
    $B\cap\im e_z'$
    (then it will pass and stay in $B$),
    or if it is a sum of states satisfying those two conditions.
    Here $e_z'$ is the projector corresponding to outcome $z$.
    Thus,
    for every outcome $z$,
    we have the term
    $(\substi B{z/\xx_1}\cap \im e'_z) + \orth{(\im e'_z)}$
    in the precondition $A$.
  \end{compactitem}
\item \Ruleref{JointMeasureSimple} allows us to analyze two
  measurements that are performed ``in sync''.

  For example, consider a judgment of the following form:
  \[
    \rhl{A'}{\Qmeasure\xx\qq e}{\Qmeasure\yy\qq e}{\CL{\xx_1=\yy_2}}
    \]
  where $e:=M$
  is a measurement in the diagonal basis, and $\qq$
  is a qubit variable, and
  $A':=\SPAN\{\basis{\qq_1\qq_2}{00}\}\otimes\elltwov{V_1V_2\setminus
    \qq_1\qq_2}$ is a precondition that ensures that $\qq_1$
  and $\qq_2$
  are both initialized as $\basis{}0$.
  Then the measurement on each side would yield a uniformly random bit
  $\xx_1$
  and $\yy_2$,
  respectively, and -- analogous to the case where we just sample a
  uniformly random bit on both sides -- we would expect to be able to
  ``match'' the equally-distributed random choices and show the
  postcondition $\CL{\xx_1=\yy_2}$.

  However, by applying rules \rulerefx{Measure1}
  and~\textsc{Measure2}, we will not be able to prove this since they
  treat the measurement outcome as non-deterministic (i.e., they
  ignore the distribution of the outcomes).

Thus we need \ruleref{JointMeasureSimple} for this case. This rule
  requires the following things to hold in the precondition:
  \begin{compactitem}
  \item For any possible outcome $z$,
    we need that the post-measurement state after both measurements
    produce outcome $z$
    is in $B\{z/\xx_1,z/\yy_2\}$
    (i.e., $B$ with $\xx_1,\yy_2$ set to the same outcome $z$).
    Analogous to \ruleref{Measure1} (but generalized to the application of two measurements),
    this means we need to satisfy in the precondition:
    \[
        (\substi B{z/\xx_1,z/\yy_2}
          \cap \im e'_{1z} \cap \im e'_{2z})
          + \orth{(\im e'_{1z})} + \orth{(\im e'_{2z})}
    \]
    for every $z$.
    Here $e_{1z}'$ is the projector for outcome $z$ on variables $Q_1'$ and $e_{2z}'$ 
    the projector for outcome $z$ on $Q_2'$.
    
    In our example, this would become (we omit the
    ${}\otimes\elltwov{\dots}$ subterms for readability):
    \begin{align*}
      &
        \pb\paren{\CL{z=z} \cap \SPAN\{H\basis{\qq_1}{z}\} \cap \SPAN\{H\basis{\qq_2}{z}\}}
        + \orth{\SPAN\{H\basis{\qq_1}{z}\}} + \orth{\SPAN\{H\basis{\qq_2}{z}\}} \\
      &=
        \SPAN\{
        H\basis{\qq_1}{z}
        \otimes
        H\basis{\qq_2}{z}
        \}
        + \orth{\SPAN\{H\basis{\qq_1}{z}\}} + \orth{\SPAN\{H\basis{\qq_2}{z}\}}
        = \elltwov{\qu{V_1}\qu{V_2}}
    \end{align*}%
    where $H$ is the Hadamard matrix $\frac1{\sqrt2}\footnotesize
    \begin{pmatrix}
      1&1\\1&-1
    \end{pmatrix}$. (That this simplifies to
    $\elltwov{\qu{V_1}\qu{V_2}}$ is not unexpected: since in this example, the postcondition 
    $B$
    does not refer to $\qq_1,\qq_2$, the post-measurement state will always be in
    $B$, no matter what the initial state is.)
  \item The measurement on the left and right side need to be the
    same. In the general case, this leads to the condition
    $\CL{\idx1e_1=\idx2e_2}$.
    In our example, this trivializes to $\CL{M=M}=\elltwov{V_1V_2}$.
  \item Finally, in order for the two measurements to give the same
    outcome with the same probability, we need the measured state to
    be the same on both sides. This is captured by the quantum
    equality $Q_1'\quanteq Q_2'$. In our example: $\qq_1\quanteq \qq_2$.
  \end{compactitem}
  So, in our example, the precondition $A$
  given by \ruleref{JointMeasureSimple} is simply
  $A=(\qq_1\quanteq\qq_2)$, and we have
  \[
    \rhl{\qq_1\quanteq\qq_2}{\Qmeasure\xx\qq e\penalty0}{\Qmeasure\yy\qq e}{\CL{\xx_1=\yy_2}}.
  \]
  Since $A'\subseteq\paren{\qq_1\quanteq\qq_2}$
  (because $\basis{\qq_1\qq_2}{00}$
  is invariant under swapping $\qq_1$
  and $\qq_2$),
  we also have
  $\rhl{A'}{\Qmeasure\xx\qq e}{\Qmeasure\yy\qq e}{\CL{\xx_1=\yy_2}}$
  as desired.

  \medskip

  Note that in contrast to \ruleref{Measure1}, we do not require the
  measurements to be total.
\item \Ruleref{JointMeasure} is the general form of
  \ruleref{JointMeasureSimple} which does not require the measurements
  on the left and the right to be the same measurement.

  Instead, we provide a relation $f$
  that pairs up outcomes on the left and right hand side, and provide
  basis transforms that transform $e_1$
  into $e_2$
  and vice versa. More precisely, we provide isometries $u_1$
  and $u_2$,
  such that for all outcomes $(x,y)\in f$,
  $u_1e_1(x)\adj{u_1}=u_2e_2(y)\adj{u_2}$ where $e_1(x),e_2(y)$ are the projectors corresponding
  to outcomes $x,y$ on the left/right side, respectively.
  That is, applying the basis transform $u_1$
  to the first measurement and $u_2$
  to the second gives the same measurement, assuming we pair up the
  projectors $e_1(x),e_2(y)$
  according to $f$.
  ($f,u_1,u_2$
  are all expressions and may depend on classical program variables.)

  Notice that the precondition will contain the quantum equality
  $u_1Q_1\quanteq u_2Q_2$
  instead of simply $Q_1\quanteq Q_2$
  because the measured states need to be equal \emph{up to the basis
    transforms $u_1,u_2$}.
\end{rulesdescitem}

\paragraph{Adversary rule.}
In cryptographic proofs, we will often need to compare two programs
that contain unknown pieces of code, and need to show that an invariant is preserved.
(Usually, this code will be the code of the adversary.)
The simplest case would be if both sides of a qRHL judgment are the same,
i.e., the judgment is of the form $\rhl A\bc\bc A$.
For such cases, the \ruleref{Equal} rule (possibly combined with \rulerefx{Frame}) can be used.
However, often the adversary might call back into other code, e.g.,
when calling an encryption algorithm/oracle.
In that case, the left and right program might not be exactly the same
(e.g., because we are changing the encryption algorithm).
Thus we need a rule for judgments of the form $\rhl{A}\bc{\bc'}A$
where $\bc$ and $\bc'$ are almost the same, except for some known differences.
More formally, let $C$ be a context with multiple holes.\footnote{\label{footnote:context}%
  Formally, a \emph{multi-hole context}%
  \index{multi-hole context}%
  \index{context!multi-hole}
  $C$ matches the grammar $C::=\Box_i \mid \seq CC \mid \langif e C C \mid \while e C \mid \bc$
  where $i=1,\dots,n$, and $\bc$ is an arbitrary program, and $e$ an arbitrary expression.
  Then $C[\bc_1,\dots,\bc_n]$ is the program resulting from replacing $\Box_i$ by $\bc_i$.}
Then we want to prove judgments of the form  $\rhl{A}{C[\bc_1,\dots,\bc_n]}{C[\bc'_1,\dots,\bc'_n]}A$,
without knowing anything about $C$ (except which free variables it has).
We can, however, assume that we know $\bc_i,\bc'_i$ since those programs represent known algorithms.
Thus, we want to reduce  $\rhl{A}{C[\bc_1,\dots,\bc_n]}{C[\bc'_1,\dots,\bc'_n]}A$
to qRHL judgments about $\bc_i,\bc'_i$.
This can be done by induction over the structure of $C$.
The following general rule summarizes the result of such an induction:
\[
\RULE{Adversary}{
  \text{$X,\Tilde X$ are lists of classical variables} \\
  \text{$Y,Z,W$ are lists of distinct quantum variables} \\\\
  X_i:=\idx i X \\
  \Tilde X_i:=\idx i\Tilde X \\
  Y_i:=\idx i Y \\
  W_i:=\idx i W \\
  Y_1Y_2\cap Z=\varnothing \\ Y\cap W=\varnothing \\\\
  \text{$C$ is a context with multiple holes (cf.~\autoref{footnote:context})} \\\\
  \text{C is $XY$-local} \\
  \text{$R$ is $\Tilde X_1\Tilde X_2Z$-local} \\
  \text{$C$ is $(X\cap\Tilde X)$-readonly} \\\\
  A := \CL{X_1=X_2} \cap \paren{Y_1W_1\quanteq Y_2W_2} \cap R \\
  \forall i.\ \rhl{A}{\bc_i}{\bc'_i}{A}
}{
  \rhl{A}{C[\bc_1,\dots,\bc_n]}{C[\bc'_1,\dots,\bc'_n]}{A}
}
\]
In this rule, we used two new definitions: ``$C$ is $V$-readonly''
means that all programs $\bc$ in $C$
(according to the grammar in \autoref{footnote:context}) are $V$-readonly. 
And ``$C$ is $V$-local'' means that all programs $\bc$ in $C$ are $V$-local,
and all expressions $e$ in $C$ satisfy $\fv(e)\subseteq V$.
\Ruleref{Adversary} is a derived rule.
That is, it can be proven by using only rules \rulerefx{Equal}, \rulerefx{Seq},
\rulerefx{Conseq}, \rulerefx{JointIf}, \rulerefx{JointWhile}, \rulerefx{Frame},
without recourse to the semantics of qRHL judgments or the language.

\subsection{Relationship to classical pRHL}
\label{sec:rel.classical}

In this section, we show that the existing classical logic pRHL
\cite{certicrypt} is a special case of our new logic qRHL. Namely, if
no quantum variables are used, qRHL and pRHL coincide. In fact, all
three variants of qRHL (\autoref{def:rhl}, \autoref{def:qrhl.first},
\autoref{def:qrhl.uniform}) are equivalent in that case.

We recap the definition of pRHL, formulated with out notation and with
respect to our semantics (see \autoref{sec:classical.sem}). An
informal definition was given in \autoref{def:prhl.informal}.
\begin{definition}[Probabilistic Relational Hoare Logic (pRHL)]\label{def:prhl}
  Let $\bc_1,\bc_2$
  be classical programs (in the sense of \autoref{sec:classical.sem}).
  Assume that $\qu V=\varnothing$
  (and thus also $\qu{V_1},\qu{V_2}=\varnothing$).
  Let $a,b$ be expressions of type $\bool$.

  Then \symbolindexmark\prhl$\prhl a{\bc_1}{\bc_2}b$
  holds iff for all $m_1\in\types{V_1}$,
  $m_2\in\types{V_2}$
  with $\denotee a{m_1m_2}=\true$,
  there exists a distribution $\mu'\in\distrv{V_1V_2}$
  such that $\denotcl{\bc_1}(\pointdistr{m_1})=\marginal1{\mu'}$
  and $\denotcl{\bc_2}(\pointdistr{m_2})=\marginal2{\mu'}$
  and $\denotee{b}{m_1'm_2'}=\true$ for all $m_1'm_2'\in\suppd{\mu'}$.
\end{definition}
(Recall that $\denotcl{\bc_1}(\pointdistr{m_1})$
is the distribution of the memory of $\bc_1$
after execution if the initial memory was $m_1$.)

The relationship between the four relational Hoare logics in the
classical case are summarized by the following theorem:
\begin{theorem}
  Let $\bc_1,\bc_2$
  be classical programs (in the sense of \autoref{sec:classical.sem}).
  Assume that $\qu V=\varnothing$
  (and thus also $\qu{V_1},\qu{V_2}=\varnothing$).
  Let $a,b$ be expressions of type $\bool$.
  Then the following four statements are equivalent:
  \begin{compactenum}[(i)]
  \item\label{item:prhl} $\prhl a{\bc_1}{\bc_2}b$
    \tabto{2in}{(\autoref{def:prhl}, informally \autoref{def:prhl.informal})}
  \item\label{item:qrhl} $\rhl {\CL a}{\bc_1}{\bc_2}{\CL b}$
    \tabto{2in}{(\autoref{def:rhl}, informally \autoref{def:qrhl.informal})}
  \item\label{item:first} $\rhlfirst {\CL a}{\bc_1}{\bc_2}{\CL b}$
    \tabto{2in}{(\autoref{def:qrhl.first}, informally \autoref{def:qrhl.first.informal})}
  \item\label{item:uniform} $\rhlunif {\CL a}{\bc_1}{\bc_2}{\CL b}$
    \tabto{2in}{(\autoref{def:qrhl.uniform})}
  \end{compactenum}
\end{theorem}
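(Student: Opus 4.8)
The plan is to prove all four equivalences through a single faithful dictionary between the classical world of subdistributions and the quantum world of cq-operators, exploiting the fact that when $\qu V=\varnothing$ everything degenerates. Since $\qu{V_1}=\qu{V_2}=\varnothing$, the space $\elltwov{\qu{V_1V_2}}$ is one-dimensional, so every $\rho\in\traceposcq{V_1V_2}$ is diagonal in the computational basis, i.e.\ $\rho=\qlift\mu=\sum_m\mu(m)\,\pointstate{V_1V_2}m$ for a unique $\mu\in\distrv{V_1V_2}$, and $\qlift$ is a bijection onto $\traceposcq{V_1V_2}$. Three facts make the dictionary work: (a) every such $\rho$ is automatically $(V_1,V_2)$-separable, because $\pointstate{V_1V_2}{\memuni{m_1m_2}}=\pointstate{V_1}{m_1}\otimes\pointstate{V_2}{m_2}$; (b) the partial trace $\partr{V_1}{V_2}$ corresponds under $\qlift$ to the marginal $\marginal1{\cdot}$, and likewise for $\partr{V_2}{V_1}$; and (c) by \autoref{lemma:classical.sem} (applied with ambient variables $V_1V_2$) the quantum semantics satisfies $\denotc{\idx i\bc}(\qlift\mu)=\qlift{\denotcl{\idx i\bc}(\mu)}$, so the program action agrees on both sides.

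Given this, the equivalence of (ii) and (iii) is immediate: the only difference between \autoref{def:rhl} and \autoref{def:qrhl.first} is the word ``separable'', but by fact~(a) every cq-operator is separable here, so the restriction is vacuous. For (i)$\,\Leftrightarrow\,$(ii) I would first invoke \autoref{lemma:pure} to restate $\rhl{\CL a}{\bc_1}{\bc_2}{\CL b}$ in terms of deterministic classical initial memories $m_1,m_2$ and normalized pure initial states $\psi_1,\psi_2$; since $\elltwov{\qu{V_i}}=\setC$, these $\psi_i$ are just the scalar $1$, so $\pointstate{\cl{V_i}}{m_i}\otimes\proj{\psi_i}=\qlift{\pointdistr{m_i}}$, and the hypothesis $\psi_1\otimes\psi_2\in\denotee{\CL a}{\memuni{m_1m_2}}$ reduces (by \autoref{def:cla}, since $\psi_1\otimes\psi_2=1\neq0$) to $\denotee a{\memuni{m_1m_2}}=\true$. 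Writing the required witness as $\rho'=\qlift{\mu'}$ and pushing it through the dictionary turns the two partial-trace conditions into $\marginal1{\mu'}=\denotcl{\idx1\bc_1}(\pointdistr{m_1})$ and $\marginal2{\mu'}=\denotcl{\idx2\bc_2}(\pointdistr{m_2})$, while ``$\rho'$ satisfies $\CL b$'' becomes $\suppd{\mu'}\subseteq\{\memuni{m_1'm_2'}:\denotee b{\memuni{m_1'm_2'}}=\true\}$ by \autoref{lemma:cl.rho}. This is exactly \autoref{def:prhl}, so (i) and (ii) coincide.

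Finally, for (iv) I would unfold the definition of $\rhlunif$ from \autoref{sec:alt.def}; its witness is again a joint (coupling) object, and once the separability restriction collapses by fact~(a) and the support/marginal translations are applied as above, its existence is equivalent to the existence of the pRHL witness $\mu'$, giving (iv)$\,\Leftrightarrow\,$(i) and closing the cycle. The conceptually clarifying step is fact~(a), that separability is free once there are no quantum variables, since this is what forces all four notions to agree; the main obstacle is instead the routine but delicate bookkeeping of the tagging isomorphisms $\idx1,\idx2$ (and the attendant $\Urename{\cdot}$, $\Erename{\cdot}$) needed to reconcile the programs $\bc_1,\bc_2$, whose free variables lie in $V$, with memories and predicates living over the disjoint tagged copies $V_1,V_2$, together with checking that the support and marginal translations are faithful in both directions.
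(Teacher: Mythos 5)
Your treatment of (i), (ii), and (iii) matches the paper's proof: the observation that every $\rho\in\traceposcq{V_1V_2}$ is diagonal (hence automatically separable) when $\qu{V_1}\qu{V_2}=\varnothing$ disposes of (ii)$\iff$(iii), and the dictionary $\qlift{\cdot}$ together with \autoref{lemma:pure}, \autoref{lemma:cl.rho}, and \autoref{lemma:classical.sem} gives (i)$\iff$(ii) exactly as in the paper.

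The weak point is (iv). You describe the witness of $\rhlunif{\cdot}{\cdot}{\cdot}{\cdot}$ as ``again a joint (coupling) object'' whose existence reduces to that of the pRHL coupling $\mu'$ once separability collapses. But the witness in \autoref{def:qrhl.uniform} is not a per-input coupling: it is a \emph{single} cq-superoperator $\calE$ --- a completely positive, trace-decreasing \emph{linear} map --- that must produce the correct output state uniformly for \emph{every} input $\rho$ satisfying $\CL a$, whereas pRHL only provides, for each pair $(m_1,m_2)$ with $\denotee a{m_1m_2}=\true$, \emph{some} coupling $\mu'_{m_1m_2}$ chosen independently. The direction (iv)$\implies$(i) is indeed easy (take $\rho':=\calE(\rho)$ and you land in (iii)), but the direction (i)$\implies$(iv) requires an explicit uniformization step: one defines $\calE(\pointstate{V_1V_2}{m_1m_2}):=\qlift{\mu'_{m_1m_2}}$ on the point-mass basis (and $\mu'_{m_1m_2}:=0$ when $\denotee a{m_1m_2}=\false$) and extends by linearity, which determines a well-defined cq-superoperator only because in the classical case $\traceclcq{V_1V_2}$ is spanned by that basis; one must then verify the marginal and postcondition requirements for this $\calE$. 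This is precisely the step that \emph{fails} in the genuinely quantum setting (it is why \autoref{def:qrhl.uniform} admits no quantum equality, per \autoref{sec:alt.def}), so it cannot be absorbed into ``routine support/marginal translation.'' Your argument would be complete once this construction is spelled out.
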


\begin{proof}
  First, note that if $\qu{V_1}\qu{V_2}=\varnothing$,
  then the states $\rho\in\traceposcq{V_1V_2}$
  are just the diagonal operators
  $\rho=\sum_{m_1m_2\in\types{V_1V_2}}\pointstate{V_1V_2}{m_1m_2}\cdot
  p_{m_1m_2}$ with $p_{m_1m_2}\geq0$
  and $\sum_{m_1m_2}p_{m_1m_2}<\infty$.
  (This is because a state
  $\rho_{m_1m_2}\in\traceposv{\qu{V_1}\qu{V_2}}=\traceposv{\varnothing}=\setR$
  is just a scalar $p_{m_1m_2}$.)

  In particular, any $\rho\in\traceposcq{V_1V_2}$
  is separable. Thus \autoref{def:rhl} and \autoref{def:qrhl.first}
  are trivially equivalent in this case, and we have
  $\eqref{item:qrhl}\iff\eqref{item:first}$.

  Furthermore, \autoref{def:qrhl.uniform} immediately implies
  \autoref{def:qrhl.first} since in
  \autoref{def:qrhl.first} we can choose $\rho':=\calE(\rho)$
  where $\calE$
  is the witness from \autoref{def:qrhl.uniform}. Thus
  $\eqref{item:uniform}\implies\eqref{item:first}$.

  \medskip

  We now show $\eqref{item:qrhl}\iff\eqref{item:prhl}$.
  By \autoref{lemma:pure}, $\rhl {\CL a}{\bc_1}{\bc_2}{\CL b}$ is equivalent to:

  \begin{quote}
    \noindent\llap{$(*)$\ }%
    For all $m_1$,
    $m_2$
    and all normalized $\psi_1\in\elltwov{\qu{V_1}}$,
    $\psi_2\in\elltwov{\qu{V_2}}$
    such that $\psi_1\tensor \psi_2\in \denotee{\CL a}{\memuni{m_1m_2}}$,
    there exists a $(V_1,V_2)$-separable
    $\rho'\in\traceposcq{V_1V_2}$ such that:
    
    $\rho'$
    satisfies $\CL{b}$,
    and
    $\partr{V_1}{V_2}\rho'=\denotc{\idx
      1{\bc_1}}\pb\paren{\pointstate{\cl{V_1}}{m_1}\tensor\proj{\psi_1}}$,
    and
    $\partr{V_2}{V_1}\rho'=\denotc{\idx
      2{\bc_2}}\pb\paren{\pointstate{\cl{V_2}}{m_2}\tensor\proj{\psi_2}}$.
  \end{quote}

  For $m_1,m_2$
  with $\denotee{a}{m_1m_2}=\false$,
  $\denotee{\CL a}{\memuni{m_1m_2}}=0$
  and thus there are no normalized
  $\psi_1\tensor \psi_2\in \denotee{\CL a}{\memuni{m_1m_2}}$.
  And for $m_1,m_2$
  with $\denotee{a}{m_1m_2}=\true$,
  normalized
  $\psi_1\tensor \psi_2\in \denotee{\CL
    a}{\memuni{m_1m_2}}=\elltwov{\qu{V_1}\qu{V_2}}=\elltwov\varnothing=\setC$
  are simply scalars with $\abs{\psi_1}=\abs{\psi_2}=1$ and hence
  $\proj{\psi_1}=\proj{\psi_2}=1\in\setC$.
  Thus $(*)$ is equivalent to:
\begin{quote}
    \noindent\llap{$(**)$\ }%
    For all $m_1$,
    $m_2$ with  $\denotee{a}{m_1m_2}=\true$,
    there exists a $(V_1,V_2)$-separable
    $\rho'\in\traceposcq{V_1V_2}$ such that:
    
    $\rho'$
    satisfies $\CL{b}$,
    and
    $\partr{V_1}{V_2}\rho'=\denotc{\idx
      1{\bc_1}}\pb\paren{\pointstate{\cl{V_1}}{m_1}}$,
    and
    $\partr{V_2}{V_1}\rho'=\denotc{\idx
      2{\bc_2}}\pb\paren{\pointstate{\cl{V_2}}{m_2}}$.
  \end{quote}

  As mentioned above, $\rho'\in\traceposcq{V_1V_2}$
  is always separable. And by definition, $\rho'$
  satisfies $\CL b$
  iff $\rho'$
  is of the form
  $\rho'=\sum_{\substack{m_1m_2\ \text{s.t.}\\\denotee
      b{m_1m_2}=\true}}\pointstate{V_1V_2}{m_1m_2}\cdot \mu'(m_1m_2)
  =\qlift{\mu'}$
  with $\mu'\in\ellonev{V_1V_2}$ and $\denotee b{m_1m_2}=\true$ for all $m_1m_2\in\suppd\mu'$.
  ($\qlift{\cdot}$ was defined just before \autoref{lemma:classical.sem}.)
  Furthermore, since $\denotc{\idx i\bc_i}$
  are trace-reducing, we have that $\rho'$
  can only have $\tr\leq 1$, hence $\sum \mu'(m_1m_2)\leq 1$, i.e., $\mu'\in \distrv{V_1V_2}$.

  Thus $(**)$ is equivalent to:
\begin{quote}
    \noindent\llap{$(*\mathord**)$\ }%
    For all $m_1$,
    $m_2$ with  $\denotee{a}{m_1m_2}=\true$,
    there exists a $\mu'\in\distrv{V_1V_2}$ such that:
    
    $\denotee b{m_1'm_2'}=\true$ for all $m_1'm_2'\in\suppd\mu'$
    and
    $\partr{V_1}{V_2}\qlift{\mu'}=\denotc{\idx
      1{\bc_1}}\pb\paren{\pointstate{\cl{V_1}}{m_1}}$,
    and
    $\partr{V_2}{V_1}\qlift{\mu'}=\denotc{\idx
      2{\bc_2}}\pb\paren{\pointstate{\cl{V_2}}{m_2}}$.
  \end{quote}
  
  We have $\partr{V_1}{V_2}\qlift{\mu'}
  =
  \qlift{\marginal1{\mu'}}$, and
  by \autoref{lemma:classical.sem},  we have
  \[
    \denotc{\idx1{\bc_1}}\pb\paren{\pointstate{\cl{V_1}}{m_1}}
    =
    \denotc{\idx1{\bc_1}}\pb\paren{\qlift{\pointdistr{m_1}}}
    =
    \pb\qlift{\denotcl{\idx1{\bc_1}}\paren{\pointdistr{m_1}}}.
  \]
  Thus
  \begin{align*}
    \partr{V_1}{V_2}\qlift{\mu'}=\denotc{\idx
    1{\bc_1}}\pb\paren{\pointstate{\cl{V_1}}{m_1}}
    &\iff
    \qlift{\marginal1{\mu'}}
    =
    \pb\qlift{\denotcl{\idx1{\bc_1}}\paren{\pointdistr{m_1}}} \\
    &\iff
      \marginal1{\mu'}
      =
      \denotcl{\idx1{\bc_1}}\paren{\pointdistr{m_1}}.
  \end{align*}
  and analogously
  \[
    \partr{V_2}{V_1}\qlift{\mu'}=\denotc{\idx
    2{\bc_2}}\pb\paren{\pointstate{\cl{V_2}}{m_2}}
  \iff
      \marginal2{\mu'}
      =
      \denotcl{\idx2{\bc_2}}\paren{\pointdistr{m_2}}.
  \]
  Hence $(*\mathord**)$ is equivalent to:
\begin{quote}
    For all $m_1$,
    $m_2$ with  $\denotee{a}{m_1m_2}=\true$,
    there exists a $\mu'\in\distrv{V_1V_2}$ such that:
    
    $\denotee b{m_1'm_2'}=\true$ for all $m_1'm_2'\in\suppd\mu'$
    and
    $\marginal1{\mu'}
      =
      \denotcl{\idx1{\bc_1}}\paren{\pointdistr{m_1}}$
and
    $\marginal2{\mu'}
      =
      \denotcl{\idx2{\bc_2}}\paren{\pointdistr{m_2}}$.
  \end{quote}
  And this is $\prhl{a}{\bc_1}{\bc_2}b$ by definition.

  Thus $\rhl {\CL a}{\bc_1}{\bc_2}{\CL b}$ is equivalent to 
  $\prhl{a}{\bc_1}{\bc_2}b$, hence we have
  shown  $\eqref{item:qrhl}\iff\eqref{item:prhl}$.

  \medskip

  We now show $\eqref{item:prhl}\implies\eqref{item:uniform}$.
  Thus, assume that $\prhl a{\bc_1}{\bc_2}b$
  holds. We want to show $\rhlunif {\CL a}{\bc_1}{\bc_2}{\CL b}$.
  For this, we first construct a witness $\calE$
  for $\rhlunif {\CL a}{\bc_1}{\bc_2}{\CL b}$.
  Since $\prhl a{\bc_1}{\bc_2}b$,
  for every $m_1\in\types{V_1}$,
  $m_2\in\types{V_2}$
  with $\denotee a{m_1m_2}=\true$,
  there is a $\mu'_{m_1m_2}\in\distrv{V_1V_2}$
  such that
  \begin{equation}\label{eq:class.marg}
    \denotcl{\idx1\bc_1}(\pointdistr{m_1})=\marginal1{\mu'_{m_1m_2}}
    \qquad\text{and}\qquad
    \denotcl{\idx2\bc_2}(\pointdistr{m_2})=\marginal2{\mu'_{m_1m_2}}
  \end{equation}
  and
  \begin{equation}\label{eq:support}
    \denotee{b}{m_1'm_2'}=\true
    \quad\text{for all}
    \quad m_1'm_2'\in\suppd{\mu'_{m_1m_2}}.
  \end{equation}
  For $m_1,m_2$ with $\denotee a{m_1m_2}=\false$, we choose $\mu'_{m_1m_2}:=0$.
  Then \eqref{eq:support} holds for all $m_1,m_2$. (Not just for those with 
  $\denotee a{m_1m_2}=\true$.)
  
  Let $\calE$ be the cq-superoperator on $V_1V_2$ with
  \begin{equation}\label{eq:E.def1}
    \calE(\pointstate{V_1V_2}{m_1m_2})
    :=
    \qlift{\mu'_{m_1m_2}}
  \end{equation}
  (This extends linearly to all of $\traceposcq{V_1V_2}$
  which is the set of diagonal positive operators and thus uniquely
  defines a cq-superoperator.)

  By \eqref{eq:support} and by definition of $\qlift\cdot$,
  for all $m_1m_2$,
  $\qlift{\mu'_{m_1m_2}}$
  is a linear combination of states of the form
  $\pointstate{V_1V_2}{m_1'm_2'}$
  with $\denotee{b}{m_1'm_2'}=\true$.
  That implies that $\qlift{\mu'_{m_1m_2}}$
  satisfies $\CL b$.
  Since for all $\rho$,
  $\calE(\rho)$
  is a linear combination of states $\qlift{\mu'_{m_1m_2}}$,
  it follows that $\calE(\rho)$ satisfies $\CL b$ as well.

  Fix some $m_1,m_2$ with $\denotee a{m_1m_2}=\true$. Then
  \begin{multline*}
    \partr{V_1}{V_2}\,\calE(\pointstate{V_1V_2}{m_1m_2})
    \eqrefrel{eq:E.def1}=
      \partr{V_1}{V_2}\,\qlift{\mu'_{m_1m_2}} 
    = \pb\qlift{\marginal1{\mu'_{m_1m_2}}} 
    \eqrefrel{eq:class.marg}=
      \pb\qlift{\denotcl{\idx1\bc_1}(\pointdistr{m_1})} \\
    \txtrel{\def\lemmaautorefname{Lem.\!}\autoref{lemma:classical.sem}}=
      \quad
      \denotc{\idx1\bc_1}\pb\paren{\qlift{\pointdistr{m_1}}} 
    =
      \denotc{\idx1\bc_1}\pb\paren{\pointstate{V_1}{m_1}} 
    =
      \denotc{\idx1\bc_1}\pb\paren{\partr{V_1}{V_2}\pointstate{V_1V_2}{m_1m_2}}.
    \end{multline*}
  Thus
  \begin{equation}\label{eq:rho.bc1}
    \partr{V_1}{V_2}\,\calE(\rho)
    = \denotc{\idx1\bc_1}\pb\paren{\partr{V_1}{V_2}\rho}
  \end{equation}
  holds for any $\rho$
  of the form $\pointstate{V_1V_2}{m_1m_2}$ with $\denotee a{m_1m_2}=\true$,
  and thus, by linearity, for every $\rho\in\traceposcq{V_1V_2}$ that satisfies $\CL a$.
Analogously,
\begin{equation}\label{eq:rho.bc2}
    \partr{V_2}{V_1}\,\calE(\rho)
    = \denotc{\idx2\bc_2}\pb\paren{\partr{V_2}{V_1}\rho}
  \end{equation}
  for every $\rho\in\traceposcq{V_1V_2}$ that satisfies $\CL a$.

  From the fact that $\calE(\rho)$
  satisfies $\CL b$
  for all $\rho$,
  and \eqref{eq:rho.bc1} and \eqref{eq:rho.bc2}, it follows that
  $\calE$
  is a witness for $\rhlunif {\CL a}{\bc_1}{\bc_2}{\CL b}$.
  This implies that $\rhlunif {\CL a}{\bc_1}{\bc_2}{\CL b}$
  holds. Hence we have shown
  $\eqref{item:prhl}\implies\eqref{item:uniform}$.

  \medskip

  Altogether, we showed
  $\eqref{item:uniform}\implies\eqref{item:first}\iff\eqref{item:qrhl}\iff\eqref{item:prhl}\implies
  \eqref{item:uniform}$.
  Hence the four statements from the lemma are equivalent.
\end{proof}

\subsection{Alternative definitions of qRHL}
\label{sec:alt.def}

In this section, we discuss some alternative attempts at defining
qRHL. These attempts have not been successful but we believe that
including them in this paper may help other researchers to avoid
repeating those attempts.

\paragraph{Non-separable definition.}
The first definition we studied was the following:
\begin{definition}[qRHL, non-separable definition]\label{def:qrhl.first}%
  \symbolindexmark{\rhlfirst}%
  Let $\bc_1,\bc_2$ be programs with $\fv(\bc_1),\fv(\bc_2)\subseteq V$.
  Let  $A,B$ be predicates over $V_1V_2$.
  
  Then 
  $\rhlfirst A{\bc_1}{\bc_2}B$ holds iff
  for all $\rho\in\traceposcq{V_1V_2}$
  that satisfy $A$,
  we have that there exists a
  $\rho'\in\traceposcq{V_1V_2}$ that satisfies $B$ such that
  $\partr{V_1}{V_2}\rho' = \denotc{\idx 1\bc_1}\pb\paren{\partr{V_1}{V_2}\rho}$ and
  $\partr{V_2}{V_1}\rho' = \denotc{\idx 2\bc_2}\pb\paren{\partr{V_2}{V_1}\rho}$.
\end{definition}
This definition arose by direct analogy to the definition of pRHL.
The only difference between this definition and \autoref{def:rhl}  (the
definition of qRHL that we actually use) is that \autoref{def:rhl}
restricts $\rho$
and $\rho'$
to separable states. We already described this definition informally in the
introduction as \autoref{def:qrhl.first.informal}.

When analyzing this definition, we got stuck trying to prove a frame rule such as the following (other rules posed no major problems):
\[
  \inferrule{
  A,B,\bc,\bd\text{ are $X$-local}\\
  R\text{ is $Y$-local}\\
  X\cap Y=\varnothing\\
  \rhlfirst A\bc\bd B
}{
  \rhlfirst{A\cap R}\bc\bd {B\cap R}
}
\]
(For to keep notation simple, we often omit the indices $1$
and $2$
from variable names and applications of the functions $\idx1,\idx2$
during this discussion.)  The reason why it is difficult to prove this
rule is the following property of $\rhlfirst A\bc\bd B$:
When proving this judgment, we start with some initial joint state
$\rho$
that satisfies $A$,
compute the marginals $\rho_1:=\partr{V_1}{V_2}\rho$
and $\rho_2:=\partr{V_2}{V_1}\rho$,
compute the final states $\rho_1':=\denotc\bc(\rho_1)$,
$\rho_2':=\denotc\bd(\rho_2)$,
and finally we recombine $\rho_1'$
and $\rho_2'$
into a state $\rho$
that satisfies $B$.
This recombination leaves us a lot of freedom (all that is required is
that $\rho_1',\rho'_2$
are the marginals of $\rho$).
In particular, $\rho'$
may decompose into $\rho_1',\rho_2'$
in a very different way than $\rho$
decomposes into $\rho_1,\rho_2$.

The following example illustrates this: Let $A:=\CL{\xx_1=\xx_2}$
and $B:=\CL{\xx_1\neq \xx_2}$
and $\bc,\bd:=\Skip$.
Consider the initial state
$\rho:=\frac12\proj{\basis{\xx_1\xx_2}{00}}+\frac12\proj{\basis{\xx_1\xx_2}{11}}$,
i.e., a state where $\xx_1$
and $\xx_2$
are uniformly random but equal.  $\rho$
satisfies $A$.
Then
$\rho_1=\rho_2=\frac12\proj{\basis{\xx}0}+\frac12\proj{\basis{\xx}1}$,
i.e., a uniformly random bit.  Since $\bc,\bd=\Skip$,
$\rho_1'=\rho_2'=\rho_1=\rho_2$.
It would seem natural to then pick $\rho':=\rho$,
but we have another choice:
$\rho':=
\frac12\proj{\basis{\xx_1\xx_2}{01}}+\frac12\proj{\basis{\xx_1\xx_2}{10}}$. This
$\rho'$
also has marginals $\rho_1',\rho_2'$,
and $\rho'$
satisfies $B$.
Of course, this does not show that $\rhlfirst A\bc\bd B$
holds, since there other initial $\rho$
that satisfy $A$
for which this reasoning does not apply.  But it illustrates that
there could be many different ways how the final state $\rho'$
is constructed.  And, what is worse, it may be that for each initial
$\rho$
satisfying $A$,
the way how the corresponding $\rho'$
is constructed is very different. (In other words, the mapping
$\rho\mapsto\rho'$ is not, e.g., a continuous function.)

What does this have to do with the frame rule?

If we want to show $\rhlfirst{A\cap R}\bc\bd {B\cap R}$,
we start with a state $\rho$
that satisfies $A\cap R$,
and that has marginals $\rho_1$
and $\rho_2$,
which are transformed by $\bc,\bd$
into $\rho'_1,\rho'_2$,
respectively. Now, since $\rho$
satisfies $A$,
and $\rhlfirst A\bc\bd B$
holds by assumption, we know that $\rho_1',\rho_2'$
can be recombined into some $\rho'$
that satisfies $B$.
We can also show that $\rho_1',\rho_2'$
can be recombined into some $\rho''$
that satisfies $R$.\footnote{We
  skip the proof here, but at least for total $\bc,\bd$,
  it is easy to see that $\rho'':=(\denotc\bc\otimes\denotc\bd)(\rho)$
  satisfies $R$
  if $\rho$
  satisfies $R$.}
However, $\rho'$
and $\rho''$
are not the same state. It might be that any way to recombine
$\rho_1',\rho_2'$
into some state that satisfies $B$
will necessarily have to combine $\rho_1',\rho_2'$
in a way that violates the predicate $R$.
While we were not able to find a concrete example where the frame rule
breaks down for \autoref{def:qrhl.first}, we were also not able to
come up with a proof that solves the problem of recombining
$\rho_1',\rho_2'$ in a way that both $B$ and $R$ are satisfied.

\paragraph{Uniform definition.}
The difficulties with the frame rule seem to hint that the problem
with \autoref{def:qrhl.first} is the fact that $\rho'$
can be constructed from $\rho_1',\rho_2'$
in very arbitrary ways, leading to a possibly not even continuous map
$\rho\mapsto\rho'$ (and such a map is unlikely to have a natural operational meaning).

Our second attempt was therefore to formulate a definition that forces
$\rho'$
to depend in a more uniform way on $\rho$.
More specifically, we require $\rho\mapsto\rho'$
to be a cq-superoperator (which we will call the ``witness'').

We get the following definition:
\begin{definition}[qRHL, uniform definition]\label{def:qrhl.uniform}%
  \symbolindexmark{\rhlunif}%
  $\rhlunif A{\bc_1}{\bc_2}B$
  holds iff there exists a cq-superoperator $\calE$
  on $V_1V_2$
  (the witness\index{witness}) such that for all
  $\rho\in\traceposcq{V_1V_2}$ that satisfy $A$, we have:
  \begin{compactitem}
  \item $\rho':=\calE(\rho)$ satisfies $B$,
  \item $\partr{V_1}{V_2}\rho' = \denotc{\idx 1\bc_1}\pb\paren{\partr{V_1}{V_2}\rho}$,
  \item $\partr{V_2}{V_1}\rho' = \denotc{\idx 2\bc_2}\pb\paren{\partr{V_2}{V_1}\rho}$.
  \end{compactitem}
\end{definition}
With this definition, the frame rule holds: One first proves that a
witness for $\rhlunif A\bc\bd B$
(where $A,B,\bc,\bd$
are $X$-local)
can without loss of generality be chosen to be
$X$-local.\footnote{For a witness $\calE$ for $\rhlunif A\bc\bd B$, let
  $\calE'(\rho):=
  \partr{XZ}{Z'}
  \Erename\sigma \circ (\calE\otimes\id_{Z'}) \circ \Erename\sigma
  (\rho\otimes\sigma_0) $ where $Z:=V_1V_2\setminus X$,
  $Z'$
  is a fresh copy of the variables $Z$,
  $\sigma$
  is the variable renaming that exchanges $Z$
  and $Z'$,
  and $\sigma_0\in\traceposcq{Z'}$
  is an arbitrary operator with $\tr\sigma_0=1$.

  Then $\calE'$
  is a witness for $\rhlunif A\bc\bd B$
  as well, and $\calE'$
  is $X$-local.
  (Assuming that $A,\bc,\bd,B$
  are $X$-local.)}
Since $\calE$
is a witness for $\rhlunif A\bc\bd B$,
for any $\rho$
that satisfies $A\cap B$
(and thus $A$),
we have that $\rho':=\calE(\rho)$
satisfies $B$.
And
$\partr{V_1}{V_2}\rho' = \denotc{\idx
  1\bc}\pb\paren{\partr{V_1}{V_2}\rho}$ and
$\partr{V_2}{V_1}\rho' = \denotc{\idx
  2\bd}\pb\paren{\partr{V_2}{V_1}\rho}$.  Furthermore, since $\calE$
is $X$-local
and $R$
is $Y$-local
with $X\cap Y=\varnothing$,
and since $\rho$
satisfies $R$,
we have that $\rho'=\calE(\rho)$
satisfies $R$
as well. Thus $\rho'$
satisfies $B\cap R$.
This implies that $\calE$
is in fact already a witness for $\rhlunif{A\cap R}\bc\bd{B\cap R}$.
The frame rule follows. (We omit a detailed proof because we do not
use this version of qRHL in the end, anyway.)

So, what is the problem with \autoref{def:qrhl.uniform} then?  The
problem is to find a notion of quantum equality. As discussed in the
introduction, we need some notion of equality of quantum
variables. Besides other rules, such an equality notion should satisfy
the following rule:
\[
  \inferrule{
    \bc\text{ is $Q$-local} \\
  }{
    \rhlunif
    {Q_1\quanteq Q_2}
    \bc\bc
    {Q_1\quanteq Q_2}
  }
\]
In other words, if the left and right variables are equal before
execution, and we execute the same program on both sides, then the
variables should be equal afterwards.

Unfortunately, for \autoref{def:qrhl.uniform}, no such equality notion
exists (at least if we want the equality notion to be a quantum
predicate that can be represented as a subspace)!

The problem is the following: say we are given an initial state
$\rho=\rho_1\otimes\rho_2$
satisfying $\qq_1\quanteq\qq_2$.
Then $\rho_1=\rho_2$.
Say we execute on both side a program $\bc$
that effectively applies some projector $P$
to the variable $\qq_1$
(e.g., because $\bc$
is a non-total projective measurement). Then the final state will be
$\rho_1':=P\rho_1 P$
and $\rho_2':=P\rho_2 P$
on the left and right side, respectively. We need to find a state
$\rho'$
such that $\rho_1'$
and $\rho_2'$
are the marginals of $\rho'$.
At first glance, it seems the best choice is
$\rho':=\rho'_1\otimes\rho'_2$.
But then $\partr{V_1}{V_2}\rho'=\rho'_1\cdot \tr\rho'_2$
which does in general not equal $\rho_1'$
because $\rho_2'$
may not have trace $1$
(namely, if $P$
corresponds to a measurement that fails with non-zero probability).
In order to get the correct marginals, we need to define
$\rho':=\frac{\rho_1'\otimes\rho_2'}{\tr\rho_1'}=\frac{\rho_1'\otimes\rho_2'}{\tr\rho_2'}$.
(In the formal impossibility proof, we show that at least in some cases, this is
indeed the only possible choice of $\rho'$.)
But now the mapping $\rho\mapsto\rho'$
cannot be realized by a cq-superoperator any more (or any linear
function), hence there is no witness for
$\rhlunif{\qq_1\quanteq\qq_2}\bc\bc{\qq_2\quanteq\qq_2}$.
(This holds not only for the definition of $\quanteq$
given in \autoref{def:quanteq.simple}, but for any equality definition that can
be represented as a subspace.)

The formal impossibility result is given in \autoref{sec:qeq.uniform}.

One may wonder whether this impossibility result hinges on the fact
that we used a program $\bc$
that was not total (because it maps $\rho_1$
to $P\rho_1P$).
However, similar problems occur also with total programs, e.g., if
$\bc$
measures with a projective measurement $M=\{P,1-P\}$
and stores the result in a classical variable.

\medskip

Because of the problem with equality, we abandon the uniform
definition of qRHL (\autoref{def:qrhl.uniform}) in favor of
\autoref{def:rhl} which is closer in spirit to
\autoref{def:qrhl.first} (except that it requires separable states
$\rho,\rho'$).
In some sense, this is unfortunate because
\autoref{def:qrhl.uniform} feels somewhat more natural: it does not
allow us to pick the state $\rho'$
in arbitrary dependency of $\rho$,
but instead requires that dependency to be natural in a physical
sense.  It is an interesting open question whether we can find some
definition in the spirit of \autoref{def:qrhl.uniform} but which has a
quantum equality.

As a side effect, note the problem with \autoref{def:qrhl.uniform} also shows that the
approach of \cite{barthe17coupling} is unlikely to work in the quantum
setting (at least not when translated straightforwardly to the quantum
setting). In \cite{barthe17coupling}, a logic $\times$qRHL
is developed in which an explicit product program is used as a witness
for a pRHL judgment. Those explicit product programs from $\times$qRHL
are exactly the classical special cases of the witness in
\autoref{def:qrhl.uniform}.

\section{Examples}
\label{sec:example}

\subsection{EPR pairs}

We present a small example of a derivation in our logic.  What we show
is that, if we initialize two variables with an EPR pair, it does not
matter whether we apply a Hadamard transform to the first or to the second variable.

The two programs we analyze are thus:
\begin{equation*}
  \bc\ :=\ \seq{\Qinit{\qq\rr}{\mathrm{EPR}}}{\Qapply H{\qq}},
  \qquad
  \bd\ :=\ \seq{\Qinit{\qq\rr}{\mathrm{EPR}}}{\Qapply H{\rr}}.
\end{equation*}
Here $\mathrm{EPR}:=1/\sqrt2 \pb\basis{}{(0,0)} + 1/\sqrt2 \pb\basis{}{(1,1)}$ is an EPR pair, and $H:=
\frac1{\sqrt2}\scriptscriptstyle
\begin{pmatrix}
  1&1\\1&-1
\end{pmatrix}$ is the Hadamard transform.

We want to show that both programs return the same quantum state, i.e., we
want to show
\begin{equation}
  \label{eq:example}
  \rhl{\CL{\true}}\bc\bd{\qq_1\rr_1\quanteq \qq_2\rr_2}.
\end{equation}

We refer to the first/second statement of $\bc$ as $\bc_1,\bc_2$. Analogously for $\bd$.
Let $I_0:=\paren{\qq_1\rr_1\quanteq \qq_2\rr_2}$.

By \ruleref{QApply1}, we have
$\rhl{I_1}{\bc_2}{\Skip}{I_0}$ with
\[
  I_1 := \adj{\paren{\lift H{\qq_1}}} \cdot \pb\paren{I_0 \cap \im (\lift H{\qq_1})}
  \starrel= \pb\paren{ (H\otimes\id) \qq_1\rr_1\quanteq \qq_2\rr_2 }.
\]
Here $(*)$
follows by the fact that $\im (\lift H{\qq_1})$
is the full space (since $H$
is unitary),
and \autoref{lemma:qeq.inside}.

By rule \textsc{QApply2}, we have
$\rhl{I_2}{\Skip}{\bd_2}{I_1}$ with
\[
  I_2 := \adj{\paren{\lift H{\rr_2}}} \cdot \pb\paren{I_1 \cap \im (\lift H{\rr_2})}
  \starrel= \pb\paren{ (H\otimes\id) \qq_1\rr_1\quanteq (\id\otimes H) \qq_2\rr_2}.
\]
Here $(*)$ is simplified analogously as the previous step.

By \ruleref{QInit1}, we have $\rhl{I_3}{\bc_1}\Skip{I_2}$ with
\[
  I_3 := \spaceat{I_2}{\Uvarnames{\qq_1\rr_1}\mathrm{EPR}}\otimes\elltwov{\qq_1\rr_1}.
\]

By rule \textsc{QInit2}, we have $\rhl{I_4}{\Skip}{\bd_1}{I_3}$ with
\[
  I_4 := \spaceat{I_3}{\Uvarnames{\qq_2\rr_2}\mathrm{EPR}}\otimes\elltwov{\qq_2\rr_2}.
\]

Using three applications of \ruleref{Seq},\footnote{%
  Here we implicitly use the (trivial to prove) fact that that
  $\seq\Skip\bc=\bc=\seq\bc\Skip$
  (w.r.t.~to the denotation of programs).} those three qRHL judgments
imply $\rhl{I_4}\bc\bd{I_0}$.

In order to show $\rhl{\CL{\true}}\bc\bd{I_0}$
using \rulerefx{Conseq}, we need to show that $\CL{\true}\subseteq I_4$.

We have
\begin{align*}
  & \CL{\true}\subseteq I_4
  \\
  \txtrel{\autoref{lemma:spacediv.leq}}\iff
  &\CL{\true} \cap \SPAN\{\Uvarnames{\qq_2\rr_2}\mathrm{EPR}\}\otimes\elltwov{\qu{V_1}\qu{V_2}\setminus\qq_2\rr_2}
    \subseteq
    I_3
  \\
  \iff
  &
    \SPAN\{\lift{\mathrm{EPR}}{\qq_2\rr_2}\}
    \subseteq
    I_3
  \\
  \txtrel{\autoref{lemma:spacediv.leq}}\iff
  &
    \SPAN\{\lift{\mathrm{EPR}}{\qq_2\rr_2}\}
    \cap \SPAN\{\Uvarnames{\qq_1\rr_1}\mathrm{EPR}\}\otimes\elltwov{\qu{V_1}\qu{V_2}\setminus\qq_1\rr_1}
    \subseteq I_2
  \\
  \iff
  &
    \SPAN\{\lift{\mathrm{EPR}}{\qq_2\rr_2}\}
    \cap \SPAN\{\lift{\mathrm{EPR}}{\qq_1\rr_1}\}
    \subseteq I_2
  \\
    \iff
  &
    \forall x.\ 
    \Uvarnames{\qq_2\rr_2}{\mathrm{EPR}}
    \otimes
    \Uvarnames{\qq_1\rr_1}{\mathrm{EPR}}
    \otimes
    \basis{\qu{V_1}\qu{V_2}\setminus\qq_1\rr_1\qq_2\rr_2}{x}
    \in
    I_2 = \pb\paren{(H\otimes\id) \qq_1\rr_1\quanteq (\id\otimes H) \qq_2\rr_2}
  \\
  \txtrel{\autoref{lemma:quanteq}}\iff
  &
    (H\otimes\id) \mathrm{EPR}
    =
    (\id\otimes H) \mathrm{EPR}.
\end{align*}
The last line can be verified by elementary computation. Hence
$\CL{\true}\subseteq I_4$
is true.  (In fact, we can also prove the statement
$\CL{\true}\subseteq I_4$
directly by explicit computation, but the involved matrices are too
unwieldy for a pen-and-paper proof. However, in our tool such a
computation can be performed automatically.)

From $\CL{\true}\subseteq I_4$
and $\rhl{I_4}\bc\bd{I_0}$,
by \ruleref{Conseq}, we get \eqref{eq:example} as desired.

We have formalized this proof in our tool in the contributed file \texttt{epr.qrhl}.

\subsection{Measurements}

We extend the previous example to include a measurement of the two
quantum registers in the computational basis. The fact that the quantum
registers between the two programs are equal (in the sense of $\quanteq$)
before the measurements will imply that the outcomes are the same. That
is, with $\bc,\bd$ as in the previous section, we define
\begin{equation*}
  \bc'\ :=\ \seq\bc{\underbrace{\Qmeasure\xx{\qq\rr}M_\mathit{class}}_{=:\be}},
  \qquad
  \bd'\ :=\ \seq\bd{\Qmeasure\xx{\qq\rr}M_\mathit{class}}.
\end{equation*}
Here $\xx$
has type $\typev\xx=\bool\times\bool$,
i.e., two bits, and $M_\mathit{class}$
is a measurement in the computational basis on two qubits. (Formally, 
$M_\mathit{class}\pb\paren{(a,b)}:=\pb\proj{\basis{}{(a,b)}}$.)

And we want to show:
\begin{equation}
  \label{eq:example.measure}
  \rhl{\CL{\true}}{\bc'}{\bd'}{\CL{\xx_1=\xx_2}}.
\end{equation}

By \ruleref{JointMeasureSimple} (with $Q_1:=Q_2:=\qq\rr$, $e_1:=e_2:=M_\mathit{class}$), we have
\[
  \rhl
  {I_1}
  \be\be
  {\CL{\xx_1=\xx_2}}
\]
with
\[
  I_1
  :=
  {\CL{M_\mathit{class}=M_\mathit{class}}
    \cap \paren{\qq_1\rr_1\quanteq\qq_2\rr_2}
    \cap
    \bigcap_{a,b\in\bit}
    \pb \paren{\CL{z=z}\cap A_{ab}^1 \cap A_{ab}^2}
    + \orth{(A_{ab}^1)}+ \orth{(A_{ab}^2)}}.
\]
and $A_{ab}^i:=\im\proj{\basis{\qq_i\rr_i}{(a,b)}}$.

Since $M_\mathit{class}=M_\mathit{class}$ and $z=z$ are trivially true, the above simplifies to 
\[
  I_1
  =
   \paren{\qq_1\rr_1\quanteq\qq_2\rr_2}
    \cap
    \bigcap_{a,b\in\bit}
    \paren{ A_{ab}^1 \cap A_{ab}^2}
    + \orth{(A_{ab}^1)}+ \orth{(A_{ab}^2)}.
\]
Furthermore,
$A_{ab}^1\cap A_{ab}^2 + \orth{(A_{ab}^1)}+
\orth{(A_{ab}^2)}=\elltwov{V_1V_2}$. This is shown by noting that
every vector $\psi$
is a linear combination of vectors in $A_{ab}^1\cap A_{ab}^2$,
$\orth{(A_{ab}^1)}$,
$\orth{(A_{ab}^2)}$
and then verifying the equation for each of those three cases. Thus
the definition of $I_1$
simplifies to $I_1=\paren{\qq_1\rr_1\quanteq\qq_2\rr_2}$.
Hence $\rhl{\qq_1\rr_1\quanteq\qq_2\rr_2}\be\be{\CL{\xx_1=\xx_2}}$.
With \eqref{eq:example} and \ruleref{Seq}, we get
$\rhl{\CL{\true}}{\seq\bc\be}{\seq\bd\be}{\CL{\xx_1=\xx_2}}$.
Using the definitions of $\bc',\bd',\be$,
we obtain our goal \eqref{eq:example.measure}.

We have formalized this proof in our tool in the contributed file \texttt{epr-measure.qrhl}.

\subsection{Post-quantum cryptography}
\label{sec:enc.example}

\newcommand\kk{\mathbf k}
\newcommand\cc{\mathbf c}
\newcommand\mm{\mathbf m}
\newcommand\sss{\mathbf s}
\newcommand\bb{\mathbf b}
\newcommand\calU{\mathcal U}
\newcommand\enc{\mathsf{enc}}
\newcommand\same[1]{\underline{#1}}
\newcommand\qsame{\underline Q}

Our second example proof is the ROR-OT-CPA security of a simple
one-time encryption scheme. This example illustrates how post-quantum
cryptographic proofs may end up being very similar to classical pRHL proofs,
except for a bit of additional syntactic clutter.

\paragraph{The setting.} The encryption scheme is defined by
\begin{align*}
  &\mathrm{enc}:K\times M\to M,\qquad
    \mathrm{enc}(k,m) := G(k)\oplus m \\
  &\mathrm{dec}:K\times M\to M,\qquad
    \mathrm{dec}(k,c) := G(k)\oplus c
\end{align*}
where $G:K\to M$
is a pseudorandom generator, $k$
is the key, and $m$
is the message (plaintext). 

The \index{ROR-OT-CPA}ROR-OT-CPA security notion says,
informally: The adversary cannot distinguish between an encryption
of~$m$
and an encryption of a random message, even if the adversary
itself chooses~$m$. More formally:
\begin{definition}[ROR-OT-CPA advantage]\label{def:roradv}
  For a stateful adversary $A_1,A_2$, let \symbolindexmark\AdvROR
  \begin{align*}
    \AdvROR^{A_1A_2} :=
    \Bigl\lvert
    &\Pr\bigl[b=1:
      \sample k\calU,\, m\leftarrow A_1(),\, c\leftarrow\mathrm{enc}(k,m),\, b\leftarrow A_2(c)\bigr]
    \\    -
    &\Pr\bigl[b=1:
      \sample k\calU,\, m\leftarrow A_1(),\, \sample r\calU,\, c\leftarrow\mathrm{enc}(k,r),\, b\leftarrow A_2(c)\bigr]
      \Bigr\rvert      
  \end{align*}
  where $\calU$
  is the uniform distribution (on the domain of the respective
  variable), and the notation $\Pr[e:G]$
  denotes the probability that $e$
  holds after executing the instructions in $G$.

  We call $\AdvROR^{A_1A_2}$ the \emph{ROR-OT-CPA advantage}%
  \index{advantage!ROR-OT-CPA}%
  \index{ROR-OT-CPA!advantage}
  of $A_1,A_2$.
\end{definition}

Analogously, we define pseudorandomness of $G:K\to M$ by defining the \emph{PRG advantage}%
\index{PRG advantage}%
\index{advantage!PRG} of $G$:
\begin{definition}[PRG advantage]\label{def:prg.adv}
  For an adversary $B$, let \symbolindexmark\AdvPRG
  \begin{equation*}
    \AdvPRG^{B} :=
    \Bigl\lvert
    \Pr\bigl[b=1:
    \sample s\calU,\, r \leftarrow G(s),\, b\leftarrow B(r)\bigr]
    -
    \Pr\bigl[b=1:
    \sample r\calU,\, b\leftarrow B(r)\bigr]
    \Bigr\rvert      .
  \end{equation*}
\end{definition}

What we want to show is the following well-known fact: ``If $G$
is pseudorandom, then $\mathrm{enc}$
is ROR-OT-CPA.''  In other words, if $\AdvROR^A$
is big for some efficient adversary $A$,
then $\AdvPRG^B$
is big for some related efficient adversary $B$.
In the present case, we can even show the stronger
result $\AdvROR^A=\AdvPRG^B$ for some $B$.
This is shown by proving that for a suitable $B$,
the lhs of $\AdvROR^A$
equals the lhs of $\AdvPRG^B$,
and the rhs of $\AdvROR^A$
equals the rhs of $\AdvPRG^B$.
Since our formalism does not include procedures with parameters and
return values, we first need to rewrite the games (program fragments)
in the definitions of $\AdvROR^A$
and $\AdvPRG^B$.
This is done by defining $A_1$
to be an arbitrary adversary that can access the output variable
$\mm$,
and $A_2$ to be one
that can access the input/output variables $\cc,\bb$.
In addition, they can keep an internal state, thus $A_1$
and $A_2$
have access to some additional classical variables $C$
and (very important for the post-quantum setting) quantum variables
$Q$.
That is, $A_1$
has free variables $\mm,C,Q$,
and $A_2$
has free variables $\cc,\bb,C,Q$.
We can then write the lhs of $\AdvROR$
and $\AdvPRG$ as the following programs:
\begin{align*}
  G_1 &\quad:=\quad \sample \kk \calU;\ A_1;\ \assign\cc{\enc(\kk,\mm)};\ A_2 \\
  G_2 &\quad:=\quad \sample \sss \calU;\ \assign\rr{G(\sss)};\ B
\end{align*}
where $\kk,\rr,\cc,\sss,\mm,\bb$ are classical variables.
(We omit the proof of the equality of the right hand sides from this example.)
Furthermore, we need to chose an efficient adversary $B$:
\begin{equation*}
  B \quad:=\quad A_1;\assign\cc{\rr\oplus \mm}; A_2
\end{equation*}
We now need to show
$ \prafter {\bb=1}{G_1}\rho = \prafter {\bb=1}{G_2}\rho$
for arbitrary initial state $\rho$.
(See \autoref{def:prafter} for the notation
$\prafter {\mathord\cdot}{\mathord\cdot}\rho$.)
For this, we first show $\rhl{I}{G_1}{G_2}{\CL{\bb_1=\bb_2}}$
for a suitable precondition $A$.
To make the following calculation more compact, let
$\same{\xx\yy\dots}$
mean $\xx_1=\xx_2\land\yy_1=\yy_2\land\dots$
for classical $\xx,\yy,\dots$,
and let $\qsame$
mean $Q_1\quanteq Q_2$.
We perform a backward reasoning, starting with the desired
post-condition, and working out way backward through the statements in
$G_1,G_2$
(sometimes individually, sometimes pairwise). Note that the
precondition of each line matches the postcondition of the next one.
\begin{align*}
  & \rhl
    {\CL{\same{\bb\cc C}}\cap \qsame}
    {A_2}
    {A_2}
    {\CL{\same\bb}}
  && \rulerefx{Equal},\rulerefx{Conseq}
  \\
  & \rhl
    {\CL{\enc(\kk_1,\mm_1)=\cc_2\land\same{\bb C}}\cap \qsame}
    {\assign\cc{\enc(\kk,\mm)}}
    {\Skip}
    {\CL{\same{\bb\cc C}}\cap \qsame}
  && \rulerefx{Assign1}
  \\
  & \rhl
    {\CL{\enc(\kk_1,\!\mm_1)\!=\!\rr_2\!\oplus\!\mm_2\land\same{\bb C}}\cap \qsame}
    \Skip
    {\assign{\cc\!}{\!\rr\!\oplus\! \mm}}
    {\CL{\enc(\kk_1,\!\mm_1)\!=\!\cc_2\land\same{\bb C}}\cap \qsame}
    \hskip-1cm
  && \hskip1cm \textsc{Assign2}
  \\
  &
    \braces{\CL{G(\kk_1)=\rr_2\land\same{\mm\bb C}}\cap \qsame}
    \ \subseteq \ 
    \braces{\CL{\enc(\kk_1,\mm_1)=\rr_2\!\oplus\!\mm_2\land\same{\bb C}}\cap \qsame}
  && \text{(def.~of $\enc$)}
  \\
  & \rhl
    {\CL{G(\kk_1)=\rr_2\land\same{\mm\bb C}}\cap \qsame}
    {A_1}
    {A_1}
    {\CL{G(\kk_1)=\rr_2\land\same{\mm\bb C}}\cap \qsame}
  && \rulerefx{Equal},\rulerefx{Frame}
  \\
  & \rhl
    {\CL{G(\kk_1)=G(\sss_2)\land\same{\mm\bb C}}\cap \qsame}
    \Skip
    {\assign\rr{G(\sss)}}
    {\CL{G(\kk_1)=\rr_2\land\same{\mm\bb C}}\cap \qsame}
  && \textsc{Assign2}
  \\
  & \rhl
    {A}
    {\sample \kk \calU}
    {\sample \sss \calU}
    {\CL{G(\kk_1)=G(\sss_2)\land\same{\mm\bb C}}\cap \qsame}
  && \rulerefx{JointSample}
\end{align*}
In the last step (\rulerefx{JointSample}), we define $\mu$
to be the uniform distribution on pairs $(z,z)$. Using $f:=\mu$ in \ruleref{JointSample}, we get the precondition
\begin{align*}
  A &:=
      \CL{\marginal1\mu=\calU \land \marginal2\mu=\calU}
      \cap
      \pB\bigcap_{(z_1,z_2)\in\suppd \mu}
      \paren{
      \CL{G(z_1)=G(z_2)\land\same{\mm\bb C}}\cap \qsame
      } \\
    &=
      \CL{
      \forall(z_1,z_2)\in\suppd \mu.\
      G(z_1)=G(z_2)\land\same{\mm\bb C}}\cap \qsame
      =
      \CL{\same{\mm\bb C}}\cap \qsame
      \supseteq
      \CL{\same{\kk\rr\cc\sss\mm\bb C}}\cap \qsame
\end{align*}
Here the first expression after $:=$
is what \ruleref{JointSample} gives us, the next equality holds since
both marginals of of $\mu$
are the uniform distribution, and by \autoref{lemma:cl.simps}. The
last equality is elementary first-order logic (using the fact that
$\suppd\mu$ contains only pairs $(z,z)$).

Combining all those judgments using rules \rulerefx{Seq} and~\rulerefx{Conseq}, we get
\begin{equation*}
  \rhl
  {\CL{\same{\kk\rr\cc\sss\mm\bb C}}\cap \qsame}
  {G_1}{G_2}
  {\CL{\same\bb}}.
\end{equation*}
By \ruleref{QrhlElimEq}, this implies
\begin{equation*}
  \prafter {\bb=1}{G_1}\rho
  =
  \prafter {\bb=1}{G_2}\rho
\end{equation*}
as desired.

Notice how each predicate in the above calculation is of the form
$\CL{e}\cap(Q_1\quanteq Q_2)$
for some classical expression $e$,
and that the $(Q_1\quanteq Q_2)$-part
of the predicates is not touched by the rules.\footnote{\Ruleref{JointSample}
    gave us a more complex precondition but we can easily see that the precondition
    of \rulerefx{JointSample} can always be simplified to something of
    the form $\CL{\dots}\cap \qsame$ with steps analogous to what we
    did here, provided the postcondition is of the form
    $\CL{\dots}\cap \qsame$. In fact, we could formulate a
    special case of \rulerefx{JointSample} that directly
    has pre- and postcondition of the form $\CL{\dots}\cap \qsame$.}
This means that, as
long as we treat the quantum adversary as a black box, and no other
quantum operations are performed in our games, the reasoning in qRHL
will be almost identical to that in pRHL. This is good news because it
means that we need to deal with the extra complexity of quantum
mechanics only in those cases where we would have to deal with them in
a pen-and-paper proof as well.

We have formalized this proof in our tool in the contributed file \texttt{prg-enc-rorcpa.qrhl}.

\paragraph{Acknowledgments.}
We thank Gilles Barthe, Tore Vincent Carstens, François Dupressoir,
Benjamin Gregoire, Yangjia Li,
Pierre-Yves Strub for valuable discussions.  This
work was supported by institutional research funding IUT2-1 of the
Estonian Ministry of Education and Research, by the Estonian Centre of
Exellence in IT (EXCITE) funded by ERDF, and by the Air Force Office
of Scientific Research through the project ``Verification of quantum
cryptography'' (AOARD Grant FA2386-17-1-4022).  We also used Sage
\cite{sage} for calculations and experiments, and the Sage Cluster
funded by National Science Foundation Grant No.~DMS-0821725.

\appendix

\section{Proofs of rules}
\label{sec:rule-proofs}

\subsection{General rules}
\label{sec:proofs-general}

\begin{lemma}[Symmetry]\label{rule-lemma:Sym}
  \Ruleref{Sym} holds.
\end{lemma}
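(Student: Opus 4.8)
The plan is to realize the symmetry explicitly by conjugating with the variable-swap superoperator $\Erename\sigma$ on $V_1V_2$, and to check that this conjugation intertwines every ingredient of \autoref{def:rhl}: separability, predicate satisfaction, marginals, and program semantics. The crucial structural observations are that $\sigma$ is an involution (so $\Erename\sigma$ is its own inverse), that $\sigma$ respects the classical/quantum split, and that $\sigma$ restricts to mutually inverse renamings $\sigma_{12}:=\restrict\sigma{V_1}:V_1\to V_2$ and $\sigma_{21}:=\restrict\sigma{V_2}:V_2\to V_1$. Write $\sigma^q:=\restrict\sigma{\qu{V_1}\qu{V_2}}$ for the quantum part.

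First I would record four equivariance facts. (i) $\Erename\sigma$ maps $\traceposcq{V_1V_2}$ to itself and preserves $(V_1,V_2)$-separability, since it merely relabels the two factors of each summand $\rho_i\tensor\rho_i'$. (ii) For any predicate $A$ over $V_1V_2$ and any $\rho$, ``$\rho$ satisfies $A$'' is equivalent to ``$\Erename\sigma(\rho)$ satisfies $\Urename{\sigma^q}(\subst A\sigma)$'': decomposing $\rho=\sum_m\proj{\basis{\cl{V_1V_2}}m}\tensor\rho_m$ gives $\Erename\sigma(\rho)=\sum_m\proj{\basis{\cl{V_1V_2}}{m\circ\sigma}}\tensor\Urename{\sigma^q}\rho_m\adj{\Urename{\sigma^q}}$, and since $\suppo(\Urename{\sigma^q}\rho_m\adj{\Urename{\sigma^q}})=\Urename{\sigma^q}\suppo\rho_m$ while $\denotee{\Urename{\sigma^q}(\subst A\sigma)}{m}=\Urename{\sigma^q}\denotee A{m\circ\sigma}$, the inclusion of supports transports across the unitary $\Urename{\sigma^q}$ after reindexing $m\mapsto m\circ\sigma$. (iii) The swap exchanges the marginals up to relabeling: $\partr{V_1}{V_2}\Erename\sigma(\rho)=\Erename{\sigma_{21}}(\partr{V_2}{V_1}\rho)$ and $\partr{V_2}{V_1}\Erename\sigma(\rho)=\Erename{\sigma_{12}}(\partr{V_1}{V_2}\rho)$, which I would check on separable summands $A_i\tensor B_i$ using that relabeling preserves the trace. (iv) The semantics is equivariant under renaming: since $\idx1\bc=\subst{(\idx2\bc)}{\sigma_{21}}$, we have $\denotc{\idx1\bc}=\Erename{\sigma_{21}}\circ\denotc{\idx2\bc}\circ\Erename{\sigma_{12}}$, and likewise for $\bd$.

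With these in hand the argument is a short diagram chase. Given a separable $\rho\in\traceposcq{V_1V_2}$ satisfying $A$, set $\tilde\rho:=\Erename\sigma(\rho)$; by (i) and (ii) it is separable and satisfies $\Urename{\sigma^q}(\subst A\sigma)$. The premise $\rhl{\Urename{\sigma^q}(\subst A\sigma)}\bd\bc{\Urename{\sigma^q}(\subst B\sigma)}$ then yields a separable $\tilde\rho'$ satisfying $\Urename{\sigma^q}(\subst B\sigma)$ with $\partr{V_1}{V_2}\tilde\rho'=\denotc{\idx1\bd}(\partr{V_1}{V_2}\tilde\rho)$ and $\partr{V_2}{V_1}\tilde\rho'=\denotc{\idx2\bc}(\partr{V_2}{V_1}\tilde\rho)$. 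I would set $\rho':=\Erename\sigma(\tilde\rho')$; as $\sigma$ is an involution, (ii) applied to $B$ shows $\rho'$ satisfies $B$, and (i) gives separability. For the marginals I would compute $\partr{V_1}{V_2}\rho'=\Erename{\sigma_{21}}(\partr{V_2}{V_1}\tilde\rho')=\Erename{\sigma_{21}}\denotc{\idx2\bc}(\partr{V_2}{V_1}\tilde\rho)=\Erename{\sigma_{21}}\denotc{\idx2\bc}\Erename{\sigma_{12}}(\partr{V_1}{V_2}\rho)=\denotc{\idx1\bc}(\partr{V_1}{V_2}\rho)$, using (iii), the premise, (iii) on $\tilde\rho$, and finally (iv); the symmetric computation gives $\partr{V_2}{V_1}\rho'=\denotc{\idx2\bd}(\partr{V_2}{V_1}\rho)$. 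This is exactly what \autoref{def:rhl} requires for $\rhl A\bc\bd B$.

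The main obstacle is purely the bookkeeping of the renaming isomorphisms: keeping the directions of $\Erename{\sigma_{12}}$ versus $\Erename{\sigma_{21}}$ straight in facts (iii) and (iv), and justifying the semantics-equivariance (iv) from the definitions of $\idx i$ and $\denotc\cdot$ (intuitively ``rename, run, rename back $=$ run''). Conceptually everything is forced, and following the paper's advice to read the isomorphisms as identities makes the chase transparent: it is just ``swap, apply the hypothesis, swap back''.
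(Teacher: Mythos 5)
Your proposal is correct and follows essentially the same route as the paper's proof: the paper likewise proves a renaming claim (your fact (ii)), passes $\rho$ through $\Erename\sigma$, applies the premise, conjugates back, and uses $\sigma\circ\sigma=\id$ together with the identities $\denotc{\idx1\bc}=\Erename{\restrict\sigma{V_2}}\circ\denotc{\idx2\bc}\circ\Erename{\restrict\sigma{V_1}}$ and the marginal-swap relation to conclude. The only cosmetic difference is that the paper states the satisfaction-transport claim as a one-directional implication and recovers the converse from the involution, exactly as you note.
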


\begin{proof} We first show an auxiliary fact:
  \begin{claim}\label{claim:rename.ass}
    If $\rho\in\traceposcq{V_1V_2}$
    satisfies a predicate $A$,
    and $\sigma:V_1V_2\to V_1V_2$
    is a variable renaming, then $\Erename\sigma(\rho)$
    satisfies $A^*:=\Urename{\restrict\sigma{\qu{V_1}\qu{V_2}}}\ \subst (A\sigma)$.
  \end{claim}

  \begin{claimproof}
    Since $\rho$
    is a cq-operator, we can write $\rho$
    as
    $\rho=\sum_{m\in\types{\cl{V_1}\cl{V_2}}}
    \pointstate{\cl{V_1}\cl{V_2}}{m} \otimes \rho_m$.  Since $\rho$
    satisfies $A$,
    it follows that $\suppo\rho_m\subseteq \denotee{A}{m}$
    for all $m$.
    Let $\rho'_m:=\Erename{\restrict\sigma{\qu{V_1}\qu{V_2}}}(\rho_m)$.
    Then
    \begin{align*}
      \suppo\rho'_m
%      &= \suppo\Erename{\restrict\sigma{\qu{V_1}\qu{V_2}}}(\rho_m)\\
         &=      \suppo\Urename{\restrict\sigma{\qu{V_1}\qu{V_2}}}\rho_m
      \adj{\Urename{\restrict\sigma{\qu{V_1}\qu{V_2}}}}\\
      &=\Urename{\restrict\sigma{\qu{V_1}\qu{V_2}}}\suppo\rho_m
        \subseteq \Urename{\restrict\sigma{\qu{V_1}\qu{V_2}}}\denotee Am \\
      &=  \Urename{\restrict\sigma{\qu{V_1}\qu{V_2}}}\denotee{\subst A\sigma}{m\circ
        (\restrict{\sigma}{\cl{V_1}\cl{V_2}})^{-1}} \\
      &=
        \denotee{A^*}{m\circ(\restrict{\sigma}{\cl{V_1}\cl{V_2}})^{-1}}
        .
    \end{align*}
    Furthermore,
    \begin{align*}
      \Erename\sigma(\rho)
      &=
        \sum_m \Erename{\restrict\sigma{\cl{V_1}\cl{V_2}}}\pb\paren{\pointstate{{\cl{V_1}\cl{V_2}}}m}
        \tensor
        \Erename{\restrict\sigma{\qu{V_1}\qu{V_2}}}(\rho_m)\\
      &=
        \sum_m \pB\proj{\Urename{\restrict\sigma{\cl{V_1}\cl{V_2}}}\basis{{\cl{V_1}\cl{V_2}}}m}
        \tensor
        \rho'_m\\
      &=
        \sum_m \pB\proj{\pb\basis{{\cl{V_1}\cl{V_2}}}
        {m\circ(\restrict{\sigma}{\cl{V_1}\cl{V_2}})^{-1}}}
        \tensor
        \rho'_m.
    \end{align*}
    Since
    $\suppo\rho'_m\in
    \denotee{A^*}{m\circ(\restrict{\sigma}{\cl{V_1}\cl{V_2}})^{-1}}$,
    this implies that
    $\rho''_m := \pb\proj{\basis{{\cl{V_1}\cl{V_2}}}
      {m\circ(\restrict{\sigma}{\cl{V_1}\cl{V_2}})^{-1}}} \tensor
    \rho'_m$ satisfies $A^*$,
    and hence $\Erename\sigma(\rho)=\sum_m\rho''_m$ satisfies $A^*$.
  \end{claimproof}

  Let
  $A^*:= \Urename{\restrict\sigma{\qu{V_1}\qu{V_2}}}\ \subst (A\sigma)$
  and
  $B^*:= \Urename{\restrict\sigma{\qu{V_1}\qu{V_2}}}\ \subst (B\sigma)$.
  
  Fix a separable $\rho\in\traceposcq{V_1V_2}$
  that satisfies $A$.
  Then $\hat\rho:=\Erename\sigma(\rho)$
  satisfies
  $A^*$
  by \autoref{claim:rename.ass}. Since $\rho$
  is separable, so is $\hat\rho$.
  We have $\rhl{A^*}\bd\bc{B^*}$
  from the premises of \ruleref{Sym}.  Thus there
  exists a separable $\hat\rho'$ that satisfies $B^*$ and such that
  \begin{equation}
    \denotc{\idx1\bd}\pb\paren{\partr{V_1}{V_2}\hat\rho} =
    \partr{V_1}{V_2}\hat\rho'
    \qquad\text{and}\qquad
    \denotc{\idx2\bc}\pb\paren{\partr{V_2}{V_1}\hat\rho} =
    \partr{V_2}{V_1}\hat\rho'.
    \label{eq:sym.c.d}
  \end{equation}
  Let $\rho':=\Erename\sigma(\hat\rho')$. Then
  \begin{align*}
    \denotc{\idx1\bc}\pb\paren{\partr{V_1}{V_2}\rho}
    &=
    \Erename{\restrict\sigma{V_2}}\circ\denotc{\idx2\bc}\circ\Erename{\restrict\sigma{V_1}}\pb\paren{\partr{V_1}{V_2}\rho}\\
    &=
    \Erename{\restrict\sigma{V_2}}\circ\denotc{\idx2\bc}\pb\paren{\partr{V_2}{V_1}\Erename{\sigma}(\rho)}\\
    &=
      \Erename{\restrict\sigma{V_2}}\circ\denotc{\idx2\bc}\pb\paren{\partr{V_2}{V_1}\hat\rho}\\
    &
      \eqrefrel{eq:sym.c.d}=
      \Erename{\restrict\sigma{V_2}}\pb\paren{\partr{V_2}{V_1}\hat\rho'}
      =
      \partr{V_1}{V_2}\,\Erename{\sigma}(\hat\rho')
      =
      \partr{V_1}{V_2}\rho'.
  \end{align*}
  Analogously, we get
  \begin{equation*}
    \denotc{\idx2\bd}\pb\paren{\partr{V_2}{V_1}\rho}
    =
    \partr{V_2}{V_1}\rho'.
  \end{equation*}    
  Furthermore, since $\hat\rho'$
  is separable, $\rho'$
  is separable. And since $\hat\rho'$
  satisfies $B^*$,
  by \autoref{claim:rename.ass}, $\rho'$
  satisfies
  $ \Urename{\restrict\sigma{\qu{V_1}\qu{V_2}}}\,(\subst{B^*}\sigma) =
  \Urename{\restrict\sigma{\qu{V_1}\qu{V_2}}}
  \Urename{\restrict\sigma{\qu{V_1}\qu{V_2}}}\
  \pb\paren{\subst{B}{(\sigma\circ\sigma)}} = B.  $
  (We use that $\sigma\circ\sigma=\id$.)

  Since this holds for any separable $\rho$ that satisfies $A$, we have $\rhl A\bc\bd B$.
\end{proof}

\begin{lemma}[Frame rule]\label{rule-lemma:Frame}
  \Ruleref{Frame} holds.
\end{lemma}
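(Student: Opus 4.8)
The plan is to prove the frame rule through \autoref{lemma:pure}, which lets me restrict attention to \emph{product, pure} initial states and is exactly where separability does its work. To establish $\rhl{A\cap R}\bc\bd{B\cap R}$ it suffices to fix $m_1\in\types{\cl{V_1}}$, $m_2\in\types{\cl{V_2}}$ and normalized $\psi_1\in\elltwov{\qu{V_1}}$, $\psi_2\in\elltwov{\qu{V_2}}$ with $\psi_1\tensor\psi_2\in\denotee{(A\cap R)}{\memuni{m_1m_2}}$, and to produce a separable $\rho'$ satisfying $B\cap R$ whose marginals are $\tau_1:=\denotc{\idx1\bc}\pb\paren{\pointstate{\cl{V_1}}{m_1}\tensor\proj{\psi_1}}$ and $\tau_2:=\denotc{\idx2\bd}\pb\paren{\pointstate{\cl{V_2}}{m_2}\tensor\proj{\psi_2}}$. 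Since $A\cap R\subseteq A$ (\autoref{lemma:and.sat}), the hypothesis $\rhl A\bc\bd B$ together with \autoref{lemma:pure} already yields a separable $\sigma'$ satisfying $B$ with exactly these marginals. I would simply \emph{take} $\rho':=\sigma'$; the entire remaining task is to show that this $\sigma'$ also satisfies $R$.

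To that end I would first record two consequences of the locality and read-only hypotheses. Because $X_1\cap Y_1$ is classical, the registers $\qu{X_1'}$ and $\qu{Y_1'}$ are disjoint, so $\qu{Y_1'}\subseteq\qu{V_1}\setminus\qu{X_1'}$; as $\bc$ is $X_1$-local, $\denotc{\idx1\bc}=\calE\tensor\idv{V_1\setminus X_1'}$ acts on a subsystem disjoint from $\qu{Y_1'}$. Expanding $\calE$ in Kraus operators $K_i$, the maps $K_i\tensor\id$ preserve any $\qu{Y_1'}$-local support condition, so the $\qu{Y_1'}$-marginal support of $\tau_1$ is contained in that of $\pointstate{\cl{V_1}}{m_1}\tensor\proj{\psi_1}$; moreover $(X_1\cap Y_1)$-readonlyness (and locality on the remaining variables) pins every classical $\cl{Y_1'}$-value occurring in $\tau_1$ to its value in $m_1$. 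The symmetric statements hold for $\bd$, $\tau_2$, $\qu{Y_2'}$ and $m_2$.

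Next I would Schmidt-decompose (\autoref{lemma:schmidt}) $\psi_i=\sum_k\mu_{ik}\,\chi_{ik}\tensor\xi_{ik}$ across $\qu{Y_i'}$ and $\qu{V_i}\setminus\qu{Y_i'}$, so that the $\qu{Y_i'}$-support of $\proj{\psi_i}$ is $T_i:=\SPAN\{\chi_{ik}\}_k$. Writing $R$ locally as $\denotee R{\memuni{m_1m_2}}=S\tensor\elltwov{\qu{V_1V_2}\setminus\qu{Y_1'Y_2'}}$, the membership $\psi_1\tensor\psi_2\in\denotee R{\memuni{m_1m_2}}$ expands to $\sum_{k,l}\mu_{1k}\mu_{2l}\,(\chi_{1k}\tensor\chi_{2l})\tensor(\xi_{1k}\tensor\xi_{2l})\in S\tensor\elltwov{\cdots}$ with the $\xi_{1k}\tensor\xi_{2l}$ orthonormal; the orthogonality argument from the proof of \autoref{lemma:quanteq} then forces $\chi_{1k}\tensor\chi_{2l}\in S$ for all $k,l$, i.e. $T_1\tensor T_2=\SPAN\{\chi_{1k}\tensor\chi_{2l}\}\subseteq S$. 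Finally, $\rho'=\sigma'$ has $V_1$- and $V_2$-marginals $\tau_1,\tau_2$, so (by the standard fact that a bipartite state's support lies in the tensor of its marginal supports) its $\qu{Y_1'Y_2'}$-marginal support lies in $T_1\tensor T_2\subseteq S$, while its classical $\cl{Y_1'Y_2'}$-values equal those of $\memuni{m_1m_2}$ by the read-only/identity fact. Since $\fv(R)\subseteq\cl{Y_1'Y_2'}$, this gives $\suppo\rho'_{m'}\subseteq\denotee R{m'}$ in every classical branch $m'$, so $\rho'$ satisfies $R$, and together with $B$ it satisfies $B\cap R$ (\autoref{lemma:and.sat}).

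The step I expect to be the real obstacle — and the reason the statement is true at all — is showing that the \emph{single} premise-witness $\sigma'$ serves simultaneously for $B$ and for $R$. This goes through only because \autoref{lemma:pure} reduces us to a \emph{product} initial state $\psi_1\tensor\psi_2$, which is exactly what makes the Schmidt data factor and lets the orthogonality argument conclude $\chi_{1k}\tensor\chi_{2l}\in S$. For an initial state entangled across $V_1\mid V_2$ this factorization fails and there is no reason $\sigma'$ should satisfy $R$; this is precisely the breakdown of the frame rule for the non-separable \autoref{def:qrhl.first} discussed in \autoref{sec:alt.def}. A secondary, purely bookkeeping, difficulty is carrying the cq-structure carefully enough to pass from the two one-sided marginal bounds on $\qu{Y_1'}$- and $\qu{Y_2'}$-support to the joint bound $T_1\tensor T_2$ inside each classical branch of $\sigma'$.
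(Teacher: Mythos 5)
Your proposal is correct and follows essentially the same route as the paper's proof: reduce to pure product initial states via \autoref{lemma:pure}, reuse the witness $\rho'$ supplied by the premise $\rhl A\bc\bd B$, Schmidt-decompose $\psi_1,\psi_2$ across the $Y$-registers, show the tensor of the resulting $Y$-supports lies in the local component of $R$, bound the final state's $Y$-marginal supports using $X_i$-locality of the programs, and pin the classical $\cl{Y_i'}$-values via the readonly/locality hypotheses. The only cosmetic differences are that the paper establishes the inclusion $R_1\tensor R_2\subseteq R'$ by a partial-trace/support computation rather than the orthogonality argument you borrow from \autoref{lemma:quanteq}, and that it derives the marginal-support bounds by writing $\denotc{\idx1\bc}=\calE_\bc\tensor\idv{}$ directly rather than via Kraus operators; both variants are sound.
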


\begin{proof}\stepcounter{claimstep}
  Without loss of generality, we assume that $\qu{X_1}\qu{Y_1}=\qu{V}$
  and $\qu{X_2}\qu{Y_2}=\qu{V}$.
  (Otherwise, we replace $Y_i$ by $Y_i\cup\qu{V}\setminus\qu{X_i}$ without making any 
  of the premises false.
  Note that $\qu{X_1},\qu{Y_1}$
  as well as $\qu{X_2},\qu{Y_2}$
  are disjoint by assumption.)
  Thus $\qu{{X_1'}}\qu{{Y_1'}}=\qu{V_1}$ and 
  $\qu{{X_2'}}\qu{{Y_2'}}=\qu{V_2}$.
  
  Fix $m_1\in\types{\cl{V_1}}$
  and $m_2\in\types{\cl{V_2}}$
  and normalized $\psi_1\in\elltwov{\qu{V_1}}$
  and $\psi_2\in\elltwov{\qu{V_2}}$
  with $\psi_1\tensor \psi_2\in\denotee{A\cap R}{\memuni{m_1m_2}}$.

  Let
  $\rho:=\pb\proj{\basis{\cl{V_1}\cl{V_2}}{\memuni{m_1m_2}}\tensor \psi_1\tensor \psi_2}$.
  Since $\psi_1\tensor\psi_2\in\denotee{A\cap R}{\memuni{m_1m_2}}\subseteq\denotee A{\memuni{m_1m_2}}$,
  we have that $\rho$ satisfies $A$.

  Since $\rhl A\bc\bd B$, this implies that there exists a separable
  $\rho'\in\traceposcq{V_1V_2}$ such that
  \begin{gather}
    \rho'\text{ satisfies }B, \label{eq:rho'.sat.B} \\
    \partr{V_1}{V_2}\rho'=\denotc{\idx1\bc}\pb\paren{\pointstate{\cl{V_1}}{m_1}\tensor\proj{\psi_1}}, \label{eq:rho'.a1.x1} \\
    \partr{V_2}{V_1}\rho'=\denotc{\idx2\bd}\pb\paren{\pointstate{\cl{V_2}}{m_2}\tensor\proj{\psi_2}}. \label{eq:rho'.a2.x2}
  \end{gather}
  For applying \autoref{lemma:pure}, it remains to show that $\rho'$
  satisfies $B\cap R$.

  By \autoref{lemma:schmidt}, we can decompose
  $\psi_1=\sum_i \lambda_{1i} \psi_{1i}^X\tensor \psi_{1i}^Y$
  for some $\lambda_{1i}>0$,
  and orthonormal $\psi_{1i}^X\in\elltwov{\qu{{X_1'}}}$,
  $\psi_{1i}^Y\in\elltwov{\qu{{Y_1'}}}$.
  And 
  $\psi_2=\sum_i \lambda_{2i} \psi_{2i}^X\tensor \psi_{2i}^Y$
  for some $\lambda_{2i}>0$,
  and orthonormal $\psi_{2i}^X\in\elltwov{\qu{{X'_2}}}$,
  $\psi_{2i}^Y\in\elltwov{\qu{{Y_2'}}}$.

  Let $R_1:=\sum_i \SPANO{\psi_{1i}^Y}\subseteq\elltwov{\qu{{Y_1'}}}$
  and $R_2:=\sum_i \SPANO{\psi_{2i}^Y}\subseteq\elltwov{\qu{{Y_2'}}}$.
  Both $R_1$ and $R_2$ are subspaces.

  Let $R'\subseteq\elltwov{\qu{{Y_1'}}\qu{{Y_2'}}}$
  be the subspace such that
  $\denotee R{\memuni{m_1m_2}}=R'\tensor \elltwov{\qu{{X_1'}}\qu{{X_2'}}}$
  ($R'$ exists since $R$ is is $Y_1'Y_2'$-local and $\qu{{X_1'}}\qu{{X_2'}}\cap \qu{{Y_1'}}\qu{{Y_2'}}=\varnothing$).

  \begin{claim}\label{claim:R1R2.R'}
    $R_1\tensor R_2\subseteq R'$.  
  \end{claim}

  \begin{claimproof}
  Since $\psi_1\tensor\psi_2\in\denotee{A\cap R}{\memuni{m_1m_2}}\subseteq \denotee R{\memuni{m_1m_2}}=R'\tensor \elltwov{\qu{{X_1'}}\qu{{X_2'}}}$, we have 
  $\suppo\proj{\psi_1\otimes\psi_2}\subseteq R'\otimes\elltwov{\qu{{X_1'}}\qu{{X_2'}}}$
  and thus for all $i,j$:
  \begin{align*}
    \psi_{1i}^Y\tensor\psi_{2j}^Y 
    &
      \in
      \suppo\sum_{i,j}\lambda_{1i}\lambda_{2j}\proj{\psi_{1i}^Y\otimes\psi_{2j}^Y}
      =
      \suppo\partr{\qu{{Y_1'}}\qu{{Y_2'}}}{\qu{{X_1'}}\qu{{X_2'}}}
      \proj{\psi_1\otimes\psi_2}
      \subseteq R'.
  \end{align*}

  Since $R_1\tensor R_2$ is the span of 
  $\{\psi_{1i}^Y\tensor \psi_{2j}^Y\}_{ij}$, it follows that $R_1\tensor R_2\subseteq R'$.  
\end{claimproof}

Since $\rho'\in\traceposcq{V_1V_2}$, we can decompose it as $\rho'=\sum_{\bar m_1\bar m_2}\pb\proj{\basis{\cl{V_1}\cl{V_2}}{\bar m_1\bar m_2}}\tensor\rho'_{\bar m_1\bar m_2}$ with $\rho'_{\bar m_1\bar m_2}\in\traceposcq{\qu{V_1}\qu{V_2}}$.

\begin{claim}\label{claim:rhobarm.R1R2}
  For all $\bar m_1,\bar m_2$,
  $\suppo\rho'_{\bar m_1\bar m_2}\subseteq
  R_1\tensor R_2
  \tensor\elltwov{\qu{{X_1'}}\qu{{X_2'}}}$.
\end{claim}

\begin{claimproof}    
  Since $\idx1\bc$ is $X_1'$-local, and $\qu{{X_1'}}\cap\qu{{Y_1'}}=\varnothing$, we can
  write $\denotc{\idx1\bc}=\calE_{\bc}\tensor\idv{\qu{{Y_1'}}}$ for some cq-superoperator
  $\calE_{\bc}$ on $V_1\setminus\qu{{Y_1'}}$. Then
  \begin{align*}
    \partr{\qu{{Y_1'}}}{V_1V_2\setminus\qu{{Y_1'}}}\,\rho'
    &=
      \partr{\qu{{Y_1'}}}{V_1\setminus\qu{{Y_1'}}}\, \partr{V_1}{V_2}\rho'
      \eqrefrel{eq:rho'.a1.x1}=
      \partr{\qu{{Y_1'}}}{V_1\setminus\qu{{Y_1'}}}
      \denotc{\idx1\bc}\pb\paren{\pointstate{\cl{V_1}}{m_1}\tensor\proj{\psi_1}}
    \\&
      =
      \partr{\qu{{Y_1'}}}{V_1\setminus\qu{{Y_1'}}}
        (\calE_{\bc}\tensor\idv{\qu{{Y_1'}}})\pb\paren{\pointstate{\cl{V_1}}{m_1}\tensor\proj{\psi_1}}
      =
      \partr{\qu{{Y_1'}}}{V_1\setminus\qu{{Y_1'}}}
        \pb\paren{\pointstate{\cl{V_1}}{m_1}\tensor\proj{\psi_1}}
    \\
    &=
      \partr{\qu{{Y_1'}}}{\qu{V_1}\setminus \qu{{Y_1'}}}\, \proj{\psi_1}
      = \sum_i\lambda_{1i}^2\, \pb\proj{\psi_{1i}^Y} =: \rho''.
  \end{align*}
  Since $\psi_{1i}^R\in R_1$
  for all $i$
  by definition of $R_1$,
  we have that $\suppo\rho''\subseteq R_1$.
  Thus 
  $\suppo\partr{\qu{{Y_1'}}}{V_1V_2\setminus\qu{{Y_1'}}}\rho'\subseteq R_1$ as well.
  Hence
  $\suppo\rho'\subseteq R_1\tensor\elltwov{{V_1}{V_2}\setminus\qu{{Y_1'}}}$.
  Hence
  $\suppo(\sum_{\bar m_1,\bar m_2}\rho'_{\bar m_1,\bar m_2})=\suppo\partr{\qu{V_1}\qu{V_2}}{\cl{V_1}\cl{V_2}}\rho'\subseteq
  R_1\tensor\elltwov{\qu{V_1}\qu{V_2}\setminus\qu{{Y_1'}}}$.
  Hence $\suppo\rho'_{\bar m_1\bar m_2}\subseteq R_1\tensor \elltwov{\qu{V_1}\qu{V_2}\setminus\qu{{Y_1'}}}$.

  Analogously, we show
  $\suppo\rho'_{\bar m_1\bar m_2}\subseteq R_2\tensor
  \elltwov{\qu{V_1}\qu{V_2}\setminus\qu{{Y_2'}}}$.

  Since
  \begin{multline*}
    (R_1\tensor \elltwov{\qu{V_1}\qu{V_2}\setminus\qu{{Y_1'}}}) \cap
    (R_2\tensor \elltwov{\qu{V_2}\qu{V_1}\setminus\qu{{Y_2'}}})
\\
  = R_1\tensor
  R_2\tensor\elltwov{\qu{V_1}\qu{V_2}\setminus\qu{{Y_1'}}\qu{{Y_2'}}}
  = R_1\tensor
  R_2\tensor\elltwov{\qu{{X_1'}}\qu{{X_2'}}}
,
\end{multline*}
 the claim
  follows.
\end{claimproof}

\begin{claim}\label{claim:R'indep}
  For all $\bar m_1,\bar m_2$
  with $\rho_{\bar m_1\bar m_2}\neq 0$,
  we have $\denotee R{\bar m_1\bar m_2}=R'\tensor\elltwov{\qu{{X_1'}}\qu{{X_2'}}}$.
\end{claim}

\begin{claimproof}
  Fix $\bar m_1,\bar m_2$ with $\rho_{\bar m_1\bar m_2}\neq 0$.

  We first show that $\bar m_1=\bar m_2$ on $\cl{{Y_1'}}$.
  Fix $\yy_1\in \cl{{Y_1'}}$.
  Let $z:=m_1(\yy_1)$.
  Then we have that $\pb\pointstate{\cl{V_1}}{m_1}\tensor\proj{\psi_1}$
  satisfies $\CL{\yy_1=z}$.
  If $\yy_1\in \cl{{X_1'}}\cap \cl{{Y_1'}}$,
  the fact that $\idx1\bc$
  is $(X_1'\cap Y_1')$-readonly (from the premises of \ruleref{Frame})
  implies that
  $\partr{V_1}{V_2}\rho'\eqrefrel{eq:rho'.a1.x1}=\denotc{\idx1\bc}\pb\paren{\pointstate{\cl{V_1}}{m_1}\tensor\proj{\psi_1}}$
  satisfies $\CL{\yy_1=z}$.
  If $\yy_1\in \cl{{Y_1'}}\setminus \cl{{X_1'}}$,
  the fact that $\idx1\bc$
  is $\cl{{X_1'}}$-local (from the premises)
  implies that
  $\partr{V_1}{V_2}\rho'\eqrefrel{eq:rho'.a1.x1}=\denotc{\idx1\bc}\pb\paren{\pointstate{\cl{V_1}}{m_1}\tensor\proj{\psi_1}}$
  satisfies $\CL{\yy_1=z}$.
  So $\partr{V_1}{V_2}\rho'$
  satisfies $\CL{\yy_1=z}$ in both cases.
  Hence
  $\rho'$
  satisfies $\CL{\yy_1=z}$.
  Since  $\rho'=\sum_{\bar m_1\bar m_2}\pb\proj{\basis{\cl{V_1}\cl{V_2}}{\bar m_1\bar m_2}}\tensor\rho'_{\bar m_1\bar m_2}$,
  this implies $\suppo\rho_{\bar m_1\bar m_2}\subseteq\denotee{\CL{\yy_1=z}}{\memuni{m_1m_2}}$.
  Since $\rho_{\bar m_1\bar m_2}\neq0$,
  this implies
  $\denotee{\CL{\yy_1=x}}{\memuni{m_1m_2}}\neq 0$, hence
  $(\bar m_1(\yy_1)=z)=\denotee{\yy_1=z}{\bar m_1\bar m_2}=\true$,
  thus $\bar m_1(\yy_1)=z=m_1(\yy_1)$.
  Since this holds for every $\yy_1\in \cl{{Y_1'}}$,
  we have that $\bar m_1=m_1$ on $\cl{{Y_1'}}$.

  Analogously we show that $\bar m_2=m_2$ on $\cl{{Y_2'}}$.

  Since $R$ is $Y_1'Y_2'$-local,
  we have $\fv(R)\subseteq\cl{{Y_1'}}\cl{{Y_2'}}$.
 And since $\bar m_1\bar m_2=\memuni{m_1m_2}$ on $\cl{{Y_1'}}\cl{{Y_2'}}$, we have
  $\denotee R{\bar m_1\bar m_2}=\denotee R{\memuni{m_1m_2}}$.
  And $\denotee R{\memuni{m_1m_2}}=R'\tensor\elltwov{\qu{{{{X_1'}}}}\qu{{X_2'}}}$ by definition of $R'$.
Thus $\denotee R{\bar m_1\bar m_2}=R'\tensor\elltwov{\qu{{{{X_1'}}}}\qu{{X_2'}}}$.
\end{claimproof}

\begin{claim}
  $\rho$ satisfies $B\cap R$.
\end{claim}

\begin{claimproof}
  By \autoref{claim:rhobarm.R1R2},
  $\suppo\rho'_{\bar m_1\bar m_2}\subseteq R_1\tensor R_2
  \tensor\elltwov{\qu{{{{X_1'}}}}\qu{{X_2'}}}$ for all $\bar m_1,\bar m_2$.  By
  \autoref{claim:R1R2.R'}, it follows that
  $\suppo\rho'_{\bar m_1\bar m_2}\subseteq R'
  \tensor\elltwov{\qu{{{{X_1'}}}}\qu{{X_2'}}}$.  If
  $\rho'_{\bar m_1\bar m_2}\neq 0$,
  this implies
  $\suppo\rho'_{\bar m_1\bar m_2}\subseteq \denotee{R}{\bar m_1\bar
    m_2}$ by \autoref{claim:R'indep}.  If
  $\rho'_{\bar m_1\bar m_2}= 0$, we have
  $\suppo\rho'_{\bar m_1\bar m_2}=0\subseteq \denotee{R}{\bar m_1\bar
    m_2}$.
  Thus $\rho'=\sum_{\bar m_1\bar m_2}\pb\proj{\basis{\cl{V_1}\cl{V_2}}{\bar m_1\bar m_2}}\tensor \rho'_{\bar m_1\bar m_2}$ satisfies $R$.

  By \eqref{eq:rho'.sat.B}, $\rho'$ satisfies $B$. Thus $\rho'$ satisfies $B\cap R$.
\end{claimproof}

Summarizing, for all $m_1\in\types{\cl{V_1}}$
and $m_2\in\types{\cl{V_2}}$
and normalized $\psi_1\in\elltwov{\qu{V_1}}$
and $\psi_2\in\elltwov{\qu{V_2}}$
with $\psi_1\tensor \psi_2\in\denotee{A\cap R}{\memuni{m_1m_2}}$,
there is a separable $\rho'\in\traceposcq{V_1V_2}$
that satisfies $B\cap R$,
and such that \eqref{eq:rho'.a1.x1} and \eqref{eq:rho'.a2.x2} hold. By
\autoref{lemma:pure}, this implies $\rhl{A\cap R}\bc\bd{B\cap R}$.
\end{proof}

\begin{lemma}[Weakening]\label{rule-lemma:Conseq}
  \Ruleref{Conseq} holds.
\end{lemma}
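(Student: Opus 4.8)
The plan is to unfold \autoref{def:rhl} and apply the monotonicity of satisfaction (\autoref{lemma:leq.sat}) twice, once on the precondition side and once on the postcondition side. Concretely, I assume the three premises of the rule, namely $A\subseteq A'$, $B'\subseteq B$, and $\rhl{A'}\bc\bd{B'}$, and aim to establish $\rhl A\bc\bd B$ directly from the semantic definition. The key observation is that the witness state produced by the stronger judgment can be reused verbatim for the weaker one, since only the predicates it must satisfy change, not the marginal constraints.

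First I would fix an arbitrary $(V_1,V_2)$-separable $\rho\in\traceposcq{V_1V_2}$ that satisfies $A$. Since $A\subseteq A'$, \autoref{lemma:leq.sat} immediately gives that $\rho$ also satisfies $A'$. Because $\rho$ is separable and satisfies $A'$, the hypothesis $\rhl{A'}\bc\bd{B'}$ (via \autoref{def:rhl}) yields a $(V_1,V_2)$-separable $\rho'\in\traceposcq{V_1V_2}$ satisfying $B'$ such that $\partr{V_1}{V_2}\rho' = \denotc{\idx1\bc}\pb\paren{\partr{V_1}{V_2}\rho}$ and $\partr{V_2}{V_1}\rho' = \denotc{\idx2\bd}\pb\paren{\partr{V_2}{V_1}\rho}$.

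Finally, since $B'\subseteq B$, a second application of \autoref{lemma:leq.sat} shows that $\rho'$ also satisfies $B$. Hence the very same $\rho'$ serves as a witness for the conclusion: it is separable, it satisfies $B$, and it has exactly the required marginals. As $\rho$ was an arbitrary separable state satisfying $A$, this establishes $\rhl A\bc\bd B$ by \autoref{def:rhl}. I do not expect any genuine obstacle in this argument; the rule is essentially just the closure of the satisfaction relation under subspace inclusion, and the only point worth noting explicitly is that the marginal equations transfer unchanged from the premise judgment to the conclusion, so no reconstruction of the witness is needed.
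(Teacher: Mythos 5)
Your proof is correct and follows exactly the same route as the paper's: fix a separable $\rho$ satisfying $A$, upgrade it to $A'$ via \autoref{lemma:leq.sat}, reuse the witness $\rho'$ from $\rhl{A'}\bc\bd{B'}$ unchanged, and downgrade $B'$ to $B$ with a second application of \autoref{lemma:leq.sat}. Nothing is missing.
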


\begin{proof}
  Fix a separable $\rho\in\traceposcq{V_1V_2}$
  that satisfies $A$.
  By \autoref{lemma:leq.sat}, $\rho$
  satisfies $A'$.
  Then $\rhl{A'}\bc\bd {B'}$
  implies that there is a separable $\rho'\in\traceclcq{V_1V_2}$
  that satisfies $B'$
  and such that
  $\partr{V_1}{V_2}\rho' =
  \denotc{\idx1\bc_1}\pb\paren{\partr{V_1}{V_2}\rho}$ and
  $\partr{V_2}{V_1}\rho' =
  \denotc{\idx2\bc_2}\pb\paren{\partr{V_2}{V_1}\rho}$. By
  \autoref{lemma:leq.sat}, $\rho'$
  satisfies $B$. Thus $\rhl{A}\bc\bd B$ holds.
\end{proof}

\begin{lemma}[Program composition]\label{rule-lemma:Seq}
  \Ruleref{Seq} holds.
\end{lemma}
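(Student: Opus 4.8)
The plan is to prove the rule directly from \autoref{def:rhl} by threading the intermediate predicate $B$: the separable witness whose existence is guaranteed by the first premise for the postcondition $B$ will itself serve as the separable input state for the second premise. The whole argument is a chaining of the two judgments, with the composition of semantics handled by the sequential-composition clause of $\denotc\cdot$.

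First I would fix an arbitrary $(V_1,V_2)$-separable $\rho\in\traceposcq{V_1V_2}$ satisfying $A$, and abbreviate $\rho_1:=\partr{V_1}{V_2}\rho$ and $\rho_2:=\partr{V_2}{V_1}\rho$. Applying the first premise $\rhl A{\bc_1}{\bc_2}B$ to $\rho$ yields a separable $\rho''\in\traceposcq{V_1V_2}$ that satisfies $B$ with $\partr{V_1}{V_2}\rho''=\denotc{\idx1\bc_1}(\rho_1)$ and $\partr{V_2}{V_1}\rho''=\denotc{\idx2\bc_2}(\rho_2)$. Since $\rho''$ is itself separable and satisfies $B$, it is a legal input for the second premise $\rhl B{\bd_1}{\bd_2}C$, which I would then apply to $\rho''$ to obtain a separable $\rho'\in\traceposcq{V_1V_2}$ satisfying $C$ with $\partr{V_1}{V_2}\rho'=\denotc{\idx1\bd_1}(\partr{V_1}{V_2}\rho'')$ and $\partr{V_2}{V_1}\rho'=\denotc{\idx2\bd_2}(\partr{V_2}{V_1}\rho'')$.

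It then remains to identify the marginals of $\rho'$ with those demanded by the conclusion. Using the semantics of sequential composition, $\denotc{\seq\bc\bd}=\denotc\bd\circ\denotc\bc$, together with the fact that $\idx i$ distributes over $\seq{}{}$ (so that $\idx i(\seq{\bc_1}{\bd_1})=\seq{\idx i\bc_1}{\idx i\bd_1}$), I get $\denotc{\idx1(\seq{\bc_1}{\bd_1})}=\denotc{\idx1\bd_1}\circ\denotc{\idx1\bc_1}$ and likewise on the right. Substituting the marginal identities for $\rho''$ gives $\partr{V_1}{V_2}\rho'=\denotc{\idx1\bd_1}\pb\paren{\denotc{\idx1\bc_1}(\rho_1)}=\denotc{\idx1(\seq{\bc_1}{\bd_1})}\pb\paren{\partr{V_1}{V_2}\rho}$, and symmetrically for the second marginal. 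Thus $\rho'$ is a separable witness satisfying $C$ with the correct marginals, which is exactly what \autoref{def:rhl} requires for $\rhl A{\seq{\bc_1}{\bd_1}}{\seq{\bc_2}{\bd_2}}C$.

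I do not expect a serious obstacle: the rule is essentially definitional once one observes that the separability constraint is preserved across the composition. The only point worth care is that the intermediate state $\rho''$ produced by the first premise is guaranteed to be \emph{separable} (not merely a positive cq-operator), since it is precisely this guarantee that lets it be fed back into the second premise. The same argument would go through verbatim for \autoref{def:qrhl.first} (dropping separability everywhere), whereas for \autoref{def:qrhl.uniform} one would instead compose the two witness superoperators.
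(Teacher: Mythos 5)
Your proposal is correct and follows essentially the same argument as the paper's proof: fix a separable $\rho$ satisfying $A$, chain the two premises by feeding the separable intermediate witness satisfying $B$ into the second judgment, and identify the marginals via $\denotc{\idx i(\seq{\bc_i}{\bd_i})}=\denotc{\idx i\bd_i}\circ\denotc{\idx i\bc_i}$. The only difference is the naming of the intermediate and final states, which is immaterial.
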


\begin{proof}
  Fix a separable $\rho\in\traceposcq{V_1V_2}$
  that satisfies $A$.
  Then by $\rhl{A}{\bc_1}{\bc_2}{B}$,
  there is a separable $\rho'\in\traceposcq{V_1V_2}$
  that satisfies $B$ such that
  \begin{equation}\label{eq:rho.rho'}
    \partr{V_1}{V_2}\rho' = \denotc{\idx1\bc_1}\pb\paren{\partr{V_1}{V_2}\rho}
    \quad\text{and}\quad
    \partr{V_2}{V_1}\rho' = \denotc{\idx2\bc_2}\pb\paren{\partr{V_2}{V_1}\rho}.
  \end{equation}
  Then by $\rhl{B}{\bd_1}{\bd_2}{C}$,
  there is a separable $\rho''\in\traceposcq{V_1V_2}$ that satisfies $C$ such that
  \begin{equation}\label{eq:rho'.rho''}
    \partr{V_1}{V_2}\rho'' = \denotc{\idx1\bd_1}\pb\paren{\partr{V_1}{V_2}\rho'}
    \quad\text{and}\quad
    \partr{V_2}{V_1}\rho'' = \denotc{\idx2\bd_2}\pb\paren{\partr{V_2}{V_1}\rho'}.
  \end{equation}
  By definition, $\denotc{\idx1(\seq{\bc_1}{\bd_1})}=\denotc{\seq{\idx1\bc_1}{\idx1\bd_1}} = \denotc{\idx1\bd_1}\circ\denotc{\idx1\bc_1}$ and thus
  \begin{equation*}
    \partr{V_1}{V_2}\rho''
    \eqrefrel{eq:rho'.rho''}= \denotc{\idx1\bd_1}\pb\paren{\partr{V_1}{V_2}\rho'}
    \eqrefrel{eq:rho.rho'}= \denotc{\idx1\bd_1}\pb\paren{\denotc{\idx1\bc_1}(\partr{V_1}{V_2}\rho)}
    =
    \pb\denotc{\idx1(\seq{\bc_1}{\bd_1})}\pb\paren{\partr{V_1}{V_2}\rho}.
  \end{equation*}
  Similarly,
  \begin{equation*}
    \partr{V_2}{V_1}\rho''
    =
    \pb\denotc{\idx2(\seq{\bc_2}{\bd_2})}\pb\paren{\partr{V_2}{V_1}\rho}.
  \end{equation*}
  Thus $\rhl{A}{\seq{\bc_1}{\bd_1}}{\seq{\bc_2}{\bd_2}}{C}$.
\end{proof}

\begin{lemma}[Case distinction]\label{rule-lemma:Case}
  \Ruleref{Case} holds.
\end{lemma}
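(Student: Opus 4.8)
The plan is to route everything through \autoref{lemma:pure}, exactly as in the proofs of the surrounding structural rules. By that lemma it suffices to fix classical memories $m_1\in\types{\cl{V_1}}$, $m_2\in\types{\cl{V_2}}$ and normalized $\psi_1\in\elltwov{\qu{V_1}}$, $\psi_2\in\elltwov{\qu{V_2}}$ with $\psi_1\tensor\psi_2\in\denotee A{\memuni{m_1m_2}}$, and then to exhibit a $(V_1,V_2)$-separable $\rho'\in\traceposcq{V_1V_2}$ that satisfies $B$ and whose marginals are $\denotc{\idx1\bc}\pb\paren{\pointstate{\cl{V_1}}{m_1}\tensor\proj{\psi_1}}$ and $\denotc{\idx2\bd}\pb\paren{\pointstate{\cl{V_2}}{m_2}\tensor\proj{\psi_2}}$.

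First I would set $z:=\denotee e{\memuni{m_1m_2}}$; by the typing convention for expressions (namely $\denotee e m\in\typee e$ for all $m$) we have $z\in\typee e$, so the hypothesis of \ruleref{Case} supplies $\rhl{\CL{e=z}\cap A}\bc\bd B$ for this particular $z$. The key (and essentially only) computation is to observe that on the memory $\memuni{m_1m_2}$ the classical guard $\CL{e=z}$ denotes the whole space: since $\denotee{e=z}{\memuni{m_1m_2}}=\true$ by the choice of $z$, \autoref{def:cla} gives $\denotee{\CL{e=z}}{\memuni{m_1m_2}}=\elltwov{\qu{V_1V_2}}$, whence $\denotee{\CL{e=z}\cap A}{\memuni{m_1m_2}}=\denotee A{\memuni{m_1m_2}}$. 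Thus the hypothesis $\psi_1\tensor\psi_2\in\denotee A{\memuni{m_1m_2}}$ immediately yields $\psi_1\tensor\psi_2\in\denotee{\CL{e=z}\cap A}{\memuni{m_1m_2}}$.

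Next I would feed this back into \autoref{lemma:pure}, now applied in the forward direction to the premise $\rhl{\CL{e=z}\cap A}\bc\bd B$: the very same tuple $m_1,m_2,\psi_1,\psi_2$ meets the precondition there, so the premise hands us a separable $\rho'$ satisfying $B$ with precisely the two marginals required above. Since $\rho'$ already has all the properties demanded by the backward direction of \autoref{lemma:pure}, that lemma lets us conclude $\rhl A\bc\bd B$, completing the proof.

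I expect no real obstacle; the only point to get right is the bookkeeping that $\CL{e=z}$ trivializes on a memory where $e$ actually equals $z$, which is exactly what makes the single ``sliced'' precondition coincide with $A$ on that slice. An alternative, equally routine route avoids \autoref{lemma:pure}: decompose a separable $\rho$ satisfying $A$ as $\rho=\sum_z\rho^{(z)}$, where $\rho^{(z)}$ collects the classical basis terms on which $e=z$; each $\rho^{(z)}$ is separable and satisfies $\CL{e=z}\cap A$, one applies the corresponding premise to obtain $\rho'^{(z)}$ satisfying $B$ with the correct marginals, and then sets $\rho':=\sum_z\rho'^{(z)}$, using linearity of $\denotc\cdot$ and of the partial trace together with closure of ``satisfies $B$'' under sums. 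I would present the \autoref{lemma:pure} version as the main proof, since it is shorter and matches the style of the other rule proofs.
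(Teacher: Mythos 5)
Your proposal is correct and follows essentially the same route as the paper: fix $m_1,m_2,\psi_1,\psi_2$ via \autoref{lemma:pure}, set $z:=\denotee e{\memuni{m_1m_2}}$, observe that $\CL{e=z}$ trivializes on that memory so the premise for that $z$ applies, and pull back the resulting $\rho'$. The only cosmetic difference is that the paper instantiates \autoref{def:rhl} directly on the product state $\pointstate{\cl{V_1}\cl{V_2}}{\memuni{m_1m_2}}\tensor\proj{\psi_1\tensor\psi_2}$ rather than invoking the forward direction of \autoref{lemma:pure}, which amounts to the same thing.
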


\begin{proof}
  Fix $m_1\in\types{\cl{V_1}}$,
  $m_2\in\types{\cl{V_2}}$, and normalized
  $\psi_1\in\elltwov{\qu{V_1}}$,
  $\psi_2\in\elltwov{\qu{V_2}}$
  with $\psi_1\tensor \psi_2\in\denotee{A}{\memuni{m_1m_2}}$.

  Let
  $\rho:=\pointstate{\cl{V_1}\cl{V_2}}{\memuni{m_1m_2}}\tensor \proj{\psi_1\tensor
  \psi_2}\in\traceposcq{V_1V_2}$.
  The operator $\rho$ is separable.
Since $\psi_1\tensor \psi_2\in \denotee{A}{\memuni{m_1m_2}}$,
  we have that $\rho$
  satisfies $ A$.
  Let $z:=\denotee e{\memuni{m_1m_2}}$.
  Then $\psi_1\tensor \psi_2\in\elltwov{\qu{V_1}\qu{V_2}}=\denotee{\CL{e=z}}{\memuni{m_1m_2}}$.
  Thus $\rho$ satisfies $\CL{e=z}$.
  Since $\rho$ satisfies both $\CL{e=z}$ and $A$, $\rho$
  satisfies $\CL{e=z}\cap A$.
  By assumption, we have $\rhl{\CL{e=z}\cap A}\bc\bd B$.
  Thus there exists a separable $\rho'\in\traceposcq{V_1V_2}$
  that satisfies~$B$
  and such that
  $\partr{V_1}{V_2}\rho' =
  \denotc{\idx1\bc}(\partr{V_1}{V_2}\rho)$ and
  $\partr{V_2}{V_1}\rho' =
  \denotc{\idx2\bd}(\partr{V_2}{V_1}\rho)$.  Since
  $\partr{V_1}{V_2}\rho=
  \pointstate{\cl{V_1}}{m_1}\tensor\proj{\psi_1}$ and
  $\partr{V_2}{V_1}\rho=
  \pointstate{\cl{V_2}}{m_2}\tensor\proj{\psi_2}$, it follows that 
  $\partr{V_1}{V_2}\rho'=\denotc{\idx1\bc}\pb\paren{\pointstate{\cl{V_1}}{m_1}\tensor\proj{\psi_1}}$,
  and
  $\partr{V_2}{V_1}\rho'=\denotc{\idx2\bd}\pb\paren{\pointstate{\cl{V_2}}{m_2}\tensor\proj{\psi_2}}$.

  By \autoref{lemma:pure}, this implies $\rhl A\bc\bd B$.
\end{proof}

\begin{lemma}[Equality]\label{rule-lemma:Equal}
  \Ruleref{Equal} holds.
\end{lemma}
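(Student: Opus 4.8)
The plan is to prove the rule via \autoref{lemma:pure}, by exhibiting for each pure input an explicit separable coupling $\rho'$ of the two final states that is ``diagonal'': it records the fact that the left and right copies of $\bc$ perform identical computations on inputs that agree on $X$ (classically) and $Y$ (quantumly).

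First I would fix $m_1\in\types{\cl{V_1}}$, $m_2\in\types{\cl{V_2}}$ and normalized $\psi_1\in\elltwov{\qu{V_1}}$, $\psi_2\in\elltwov{\qu{V_2}}$ with $\psi_1\tensor\psi_2\in\denotee{\CL{X_1=X_2}\cap(Y_1\quanteq Y_2)}{\memuni{m_1m_2}}$, as demanded by \autoref{lemma:pure}. Write $\tau:=\idx2\circ\idx1^{-1}\colon V_1\to V_2$ for the renaming of index-$1$ to index-$2$ variables. Since this predicate-space contains a nonzero vector, its $\CL{X_1=X_2}$-factor forces $\denotee{X_1=X_2}{\memuni{m_1m_2}}=\true$, i.e.\ $m_2\circ\restrict\tau{X_1}=\restrict{m_1}{X_1}$, and then $\denotee A{\memuni{m_1m_2}}=(Y_1\quanteq Y_2)$, so $\psi_1\tensor\psi_2\in(Y_1\quanteq Y_2)$. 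Applying \autoref{coro:quanteq} yields normalized $\psi_i^Q\in\elltwov{Y_i}$, $\psi_i^Y\in\elltwov{\qu{V_i}\setminus Y_i}$ with $\psi_i=\psi_i^Q\tensor\psi_i^Y$ and $\psi_1^Q=\Urename\sigma\psi_2^Q$, where $\sigma\colon Y_2\to Y_1$ is the canonical renaming (and $\sigma=(\restrict\tau{Y_1})^{-1}$). I also split $m_i=\memuni{m_i^X n_i}$ with $m_i^X:=\restrict{m_i}{X_i}$ and $n_i:=\restrict{m_i}{\cl{V_i}\setminus X_i}$.

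Next, since $\bc$ is $XY$-local, the locality definition (cf.\ \autoref{lemma:fv.local}) gives $\denotc{\idx1\bc}=\calE_1\tensor\idv{V_1\setminus X_1Y_1}$ and $\denotc{\idx2\bc}=\calE_2\tensor\idv{V_2\setminus X_2Y_2}$ for cq-superoperators $\calE_i$ on $X_iY_i$. Hence the required marginals are
\[
  \denotc{\idx1\bc}\pb\paren{\pointstate{\cl{V_1}}{m_1}\tensor\proj{\psi_1}}=\sigma_1\tensor\pointstate{\cl{V_1}\setminus X_1}{n_1}\tensor\proj{\psi_1^Y},\quad \sigma_1:=\calE_1\pb\paren{\pointstate{X_1}{m_1^X}\tensor\proj{\psi_1^Q}},
\]
and analogously $\sigma_2$ on the right. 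The key step is that both sides run the \emph{same} program up to renaming: I would establish, by induction over the structure of $\bc$, that $\denotc{\idx2\bc}=\Erename\tau\circ\denotc{\idx1\bc}\circ\Erename{\tau^{-1}}$, whence $\calE_2=\Erename{\restrict\tau{X_1Y_1}}\circ\calE_1\circ\Erename{\restrict{\tau^{-1}}{X_2Y_2}}$. Combining this with $m_2^X\circ\restrict\tau{X_1}=m_1^X$ and $\Urename{\tau^{-1}}\psi_2^Q=\psi_1^Q$ from the previous paragraph gives $\sigma_2=\Erename\tau(\sigma_1)$.

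Finally I would build the coupling. Decomposing $\sigma_1=\sum_j\lambda_j\,\proj{\basis{X_1}{a_j}\tensor\phi_j}$ with $a_j\in\types{X_1}$ and normalized $\phi_j\in\elltwov{Y_1}$, I set
\[
  \sigma' := \sum_j\lambda_j\,\pB\proj{\basis{X_1}{a_j}\tensor\Urename{\restrict\tau{X_1}}\basis{X_1}{a_j}\tensor\phi_j\tensor\Urename{\restrict\tau{Y_1}}\phi_j}
\]
and $\rho':=\sigma'\tensor\pointstate{\cl{V_1}\setminus X_1}{n_1}\tensor\proj{\psi_1^Y}\tensor\pointstate{\cl{V_2}\setminus X_2}{n_2}\tensor\proj{\psi_2^Y}$. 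This $\rho'$ is manifestly separable, and its $V_1$- and $V_2$-marginals are exactly the two final states above (using $\sigma_2=\Erename\tau(\sigma_1)$ for the right one). It satisfies $\CL{X_1=X_2}$ by \autoref{lemma:cl.rho}, since every nonzero classical block assigns $a_j$ to $X_1$ and its $\tau$-image to $X_2$, which are equal values; and it satisfies $(Y_1\quanteq Y_2)$ because each summand's $Y_1Y_2$-part $\phi_j\tensor\Urename{\restrict\tau{Y_1}}\phi_j$ is fixed by the swap (by \autoref{coro:quanteq}, as $\Urename\sigma\Urename{\restrict\tau{Y_1}}=\id$), while all other tensor factors lie outside $Y_1Y_2$ and $(Y_1\quanteq Y_2)$ is $Y$-local. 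By \autoref{lemma:and.sat}, $\rho'$ satisfies the postcondition, so \autoref{lemma:pure} yields \ruleref{Equal}. The main obstacle is the renaming-equivariance $\denotc{\idx2\bc}=\Erename\tau\circ\denotc{\idx1\bc}\circ\Erename{\tau^{-1}}$; once it is in hand, the rest is bookkeeping around the diagonal coupling.
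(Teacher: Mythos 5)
Your proposal is correct and follows essentially the same route as the paper's proof: reduce to pure inputs via \autoref{lemma:pure}, use \autoref{coro:quanteq} to factor the quantum state across $Y$, exploit $XY$-locality together with the renaming-equivariance $\denotc{\idx2\bc}=\Erename\tau\circ\denotc{\idx1\bc}\circ\Erename{\tau^{-1}}$ to conclude that the two final states on $X_iY_i$ are related by the renaming, and then construct the same diagonal separable coupling. The only cosmetic difference is that you flag the renaming-equivariance as needing a structural induction, whereas the paper asserts it without proof inside its Claim~\ref{claim:rho'12.decomp}.
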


\begin{proof}\stepcounter{claimstep}
  Let $A:={\CL{X_1=X_2}\cap (Y_1\quanteq Y_2)}$. We need to show $\rhl A\bc\bc A$.
  Let $Z_1:={V_1}\setminus X_1Y_1$ and
  $Z_2:={V_2}\setminus X_2 Y_2$.

  Fix   $m_1\in\types{\cl{V_1}}$,
  $m_2\in\types{\cl{V_2}}$ and normalized $\psi_1\in\elltwov{\qu{V_1}}$,
  $\psi_2\in\elltwov{\qu{V_2}}$ such that $\psi_1\tensor \psi_2\in\denotee A{\memuni{m_1m_2}}$.

  Let $\rho_1':=\denotc{\idx1\bc}\pb\paren{\pointstate{\cl{V_1}}{m_1}\tensor\proj{\psi_1}}$
  and $\rho_2':=\denotc{\idx2\bc}\pb\paren{\pointstate{\cl{V_2}}{m_2}\tensor\proj{\psi_2}}$.

  Our goal is to show that there exists a separable
  $\rho'\in\traceclcq{V_1V_2}$
  such that: $\rho'$
  satisfies $A$,
  and $\partr{V_1}{V_2}\rho'=\rho'_1$,
  and $\partr{V_1}{V_2}\rho'=\rho'_2$.
  This will imply $\rhl A\bc\bc A$ by \autoref{lemma:pure}.

  \begin{claim}\label{claim:m1m2.eq}
    $m_1(\xx_1)=m_2(\xx_2)$ for all $\xx\in X$.
  \end{claim}
  
  \begin{claimproof}
    If
    $m_1(\xx_1)\neq m_2(\xx_2)$
    for some $\xx\in X$, we have $\denotee{X_1}{\memuni{m_1m_2}}\neq\denotee{X_2}{\memuni{m_1m_2}}$.
    Thus $\denotee{X_1=X_2}{\memuni{m_1m_2}}=\false$, and thus $\denotee{\CL{X_1=X_2}}{\memuni{m_1m_2}}=0$.
    Since $A\subseteq \CL{X_1=X_2}$
    this implies $\denotee{A}{\memuni{m_1m_2}}=0$
    in contradiction to $\psi_1\tensor\psi_2\in\denotee{A}{\memuni{m_1m_2}}$.
  \end{claimproof}

  Let $\sigma:V_1\to V_2$
  denote the variable renaming with $\sigma(\xx_1)=\xx_2$
  for all $\xx\in V$.

  \begin{claim}\label{claim:rho'12.decomp}
    We can write $\rho_1'$
    and $\rho_2'$
    as $\rho_1'=\rho'_{1,XY}\tensor\rho'_{1,Z}$
    and $\rho_2'=\rho'_{2,XY}\tensor\rho'_{2,Z}$
    for some $\rho'_{1,XY}\in\traceposcq{X_1Y_1}$,
    $\rho'_{2,XY}\in\traceposcq{X_2Y_2}$,
    $\rho'_{1,Z}\in\traceposcq{Z_1}$,
    and $\rho'_{2,Z}\in\traceposcq{Z_2}$
    such that $\tr\rho'_{1,Z}=\tr\rho'_{2,Z}=1$
    and $\Erename{\restrict\sigma{X_1Y_1}}(\rho'_{1,XY}) =\rho'_{2,XY}$.
  \end{claim}

  \begin{claimproof}
    Since $\idx1\bc$
    is $X_1Y_1$-local,
    we can write $\denotc{\idx1\bc}$
    as $\denotc{\idx1\bc}=\calE_{\bc}\tensor\idv{Z_1}$
    for some cq-superoperator $\calE_{\bc}$
    on $X_1Y_1$.
    Furthermore,
    $\denotc{\idx2\bc}=\Erename{\sigma}\circ\denotc{\idx1\bc}\circ\Erename{\sigma^{-1}}$.
    Hence $\denotc{\idx2\bc}=\calE_{\bc}'\otimes\idv{Z_2}$ for
    $\calE_{\bc}':=\Erename{\restrict\sigma{X_1Y_1}}\circ\calE_{\bc}\circ\Erename{(\restrict\sigma{X_1Y_1})^{-1}}$.
    % And hence
    % $\calE_{\bc}=\Erename{\restrict\sigma{X_2Y_2}}\circ\calE_{\bc}'\circ\Erename{\restrict\sigma{X_1Y_1}}$.

    Note that $\restrict\sigma{Y_1}$
    is the variable renaming with $\sigma(\qq_i)=\qq_i'$
    when $\qq_i$
    and $\qq_i'$
    are the $i$-th
    variable in $Y_1$
    and $Y_2$,
    respectively. And $\psi_1\otimes\psi_2\in\denotee A{\memuni{m_1m_2}}\subseteq(Y_1\quanteq Y_2)$. Thus by \autoref{coro:quanteq}, we can write
    $\psi_1,\psi_2$
    as $\psi_1=\psi^Y_1\tensor\psi^Z_1$
    and $\psi_2=\psi^Y_2\tensor\psi^Z_2$
    for some normalized $\psi_1^Y\in\elltwov{Y_1}$,
    $\psi_1^Z\in\elltwov{\qu{Z_1}}$,
    $\psi_2^X\in\elltwov{Y_2}$,
    $\psi_2^Y\in\elltwov{\qu{Z_2}}$
    with $\psi^Y_1=\Urename{(\restrict\sigma{Y_1})^{-1}}\psi^Y_2$.
    (Note that in \autoref{coro:quanteq}, the variable renaming $\sigma:V_2\to V_1$ is the inverse 
    of the variable renaming $\sigma$ defined here.)

    We then have
    \begin{align*}
      \rho_1'
      &=
        \denotc{\idx1\bc}\pb\paren{\pointstate{\cl{V_1}}{m_1}\tensor\proj{\psi_1}}
        =
        \underbrace{\calE_{\bc}\plr\paren{\pb\proj{\pb\basis{X_1}{\restrict{m_1}{X_1}}\tensor\psi_1^Y}}}_{{}=:\rho'_{1,XY}}
        \tensor\underbrace{\pb\proj{\pb\basis{\cl{Z_1}}{\restrict{m_1}{\cl{Z_1}}}\otimes\psi_1^Z}}_{{}=:\rho'_{1,Z}}
   \end{align*}
    and
    \begin{align*}
      \rho_2'
      &=
        \denotc{\idx2\bc}\pb\paren{\pointstate{\cl{V_2}}{m_2}\tensor\proj{\psi_2}}
        =
        \calE_{\bc}'\plr\paren{\pb\proj{\pb\basis{X_2}{\restrict{m_2}{X_2}}\tensor\psi_2^Y}}\tensor\pb\proj{\pb\basis{\cl{Z_2}}{\restrict{m_2}{\cl{Z_2}}}\tensor\psi_2^Z}
      \\&=
          \Erename{\restrict\sigma{X_1Y_1}}\circ \calE_{\bc}\circ\Erename{(\restrict\sigma{X_1Y_1})^{-1}}\pB\paren{\pb\proj{\pb\basis{X_2}{\restrict{m_2}{X_2}}\tensor\psi_2^Y}}
          \tensor\pb\proj{\pb\basis{\cl{Z_2}}{\restrict{m_2}{\cl{Z_2}}}\tensor\psi_2^Z}
      \\&=
          \Erename{\restrict\sigma{X_1Y_1}}\circ\calE_{\bc}\pB\paren{\pb\proj{\pb\basis{X_1}{(\restrict{m_2}{X_2}\circ\restrict\sigma{X_1})}}
          \tensor\pb\proj{\Urename{(\restrict\sigma{Y_1})^{-1}}\psi_2^Y}}
          \tensor\pb\proj{\pb\basis{\cl{Z_2}}{\restrict{m_2}{\cl{Z_2}}}\tensor\psi_2^Z}
      \\&\starrel=
          \Erename{\restrict\sigma{X_1Y_1}}\circ\calE_{\bc}\pB\paren{\pb\proj{\pb\basis{X_1}{\restrict{m_1}{X_1}}\tensor\pb\psi_1^Y}}
          \tensor\pb\proj{\pb\basis{\cl{Z_2}}{\restrict{m_2}{\cl{Z_2}}}\tensor\psi_2^Z}
      \\&=
          \underbrace{\Erename{\restrict\sigma{X_1Y_1}}\pb\paren{\rho'_{1,XY}}}_{{}=:\rho'_{2,XY}}\tensor
          \underbrace{\pb\proj{\pb\basis{\cl{Z_2}}{\restrict{m_2}{\cl{Z_2}}}\tensor\psi_2^Z}}_{{}=:\rho'_{2,Z}}.
    \end{align*}
    Here $(*)$ uses that
    $\restrict{m_2}{X_2}\circ\restrict\sigma{X_1}=\restrict{m_1}{X_1}$
    by \autoref{claim:m1m2.eq} and
    $\Urename{(\restrict\sigma{Y_2})^{-1}}\psi_2^Y=\psi_1^Y$ by definition of~$\psi_1^Y$ and~$\psi_2^Y$.

    And $\rho'_{1,Z},\rho'_{2,Z}$ have trace $1$ since $\psi_1^Z$ and $\psi_2^Z$ are normalized.
  \end{claimproof}

  Since $\rho'_{1,XY}$
  is a cq-density operator, we can decompose $\rho'_{1,XY}$
  as
  $\rho'_{1,XY}=\sum_{m_1^X,i} \lambda_{m_1^X,i} \pb\pointstate{X_1}{m_1^X}
  \tensor \proj{\phi_{m_1^Xi}}$ for some normalized
  $\phi_{m_1^Xi}\in\elltwov{Y_1}$ and some $\lambda_{m_1^X,i}>0$. Here $m_1^X$ ranges over $\types{\qu{X_1}}$. Note that $\sum_{m_1^X,i}\lambda_{m_1^Xi}=\tr\rho'_{1,XY}<\infty$.

  Let
  \begin{equation}\label{eq:rho'.def}
    \rho':= \sum_{m_1^X,i} \lambda_{m_1^X,i}\,
    \pB\proj{\pb\basis{X_1X_2}{(m_1^X\ m_1^X\circ(\restrict\sigma{X_2}))}}
    \tensor \proj{\phi_{m_1^Xi}}
    \tensor \pb\proj{\Urename{\restrict\sigma{Y_1}}\phi_{m_1^Xi}}
    \tensor \rho'_{1,Z}
    \tensor \rho'_{2,Z}.
  \end{equation}

  \begin{claim}\label{claim:rho'.sep}
    $\rho'\in\traceposcq{V_1V_2}$
    exists (i.e., the sum defining $\rho'$
    converges) and $\rho'$
    is separable. 
  \end{claim}

  \begin{claimproof}
    Since $\lambda_{m_1^X,i}>0$
    and $\sum_{m_1^X,i}\lambda_{m_1^Xi}=\tr\rho'_{1,XY}<\infty$,
and the tensor product in \eqref{eq:rho'.def} has trace $1$,
    we have $\rho'\in\traceposcq{V_1V_2}$.
    And by construction, $\rho'$ is separable.
  \end{claimproof}
  
  \begin{claim}\label{claim:rho'.proj}
    $\partr{V_1}{V_2}\rho'=\rho'_1$
    and $\partr{V_2}{V_1}\rho'=\rho'_2$.
  \end{claim}

  \begin{claimproof}
    We compute $\partr{V_1}{V_2}\rho'$:
    \begin{align*}
      \partr{V_1}{V_2}\rho'
      &=
        \sum_{m_1^X,i} \lambda_{m_1^X,i}\,
        \pointstate{X_1}{m_1^X}
        \tensor \proj{\phi_{m_1^Xi}}
        \tensor \rho'_{1,Z}
        \cdot
        \underbrace{\tr\proj{\Urename{\restrict\sigma{Y_1}}\phi_{m_1^Xi}}}_{{}=\norm{\phi_{m_1^Xi}}^2=1}
        \cdot
        \underbrace{\tr \rho'_{2,Z}}_{{}=1}
      \\[-7pt]& =
          \rho'_{1,XY}\tensor \rho'_{1,Z}
          =
          \rho'_1.
    \end{align*}
    We now compute $\partr{V_2}{V_1}\rho'$:
    \begin{align*}
      \partr{V_2}{V_1}\rho'
      &=
        \sum_{m_1^X,i} \lambda_{m_1^X,i}\,
        \pB\proj{\pb\basis{X_2}{(m_1^X\circ\restrict\sigma{X_2})}}
        \tensor \pb\proj{\Urename{\restrict\sigma{Y_1}}\phi_{m_1^Xi}}
        \tensor \rho'_{2,Z}
        \cdot
        \underbrace{\tr \pb\proj{\phi_{m_1^Xi}}}_{{}=\norm{ \phi_{m_1^Xi}}^2=1}
        \cdot
        \underbrace{\tr  \rho'_{1,Z}}_{{}=1} \\
      &=
        \sum_{m_1^X,i} \lambda_{m_1^X,i}\,
        \Erename{\restrict\sigma{X_1Y_1}}\pB\paren{
        \pb\proj{\basis{X_1}{m_1^X}}
        \tensor \pb\proj{\phi_{m_1^Xi}}}
        \tensor \rho'_{2,Z}
      \\&=
          \Erename{\restrict\sigma{X_1Y_1}}(\rho'_{1,XY})
          \tensor \rho'_{2,Z}
          =
          \rho'_{2,XY} \tensor \rho'_{2,Z}
          =
          \rho'_2.
          \mathQED
    \end{align*}
  \end{claimproof}

  \begin{claim}\label{claim:rho'.A}
    $\rho'$ satisfies $A$.
  \end{claim}

  \begin{claimproof}
    For any $m_1,m_2$ such that $\restrict{m_1}{X_1}=m_1^X$ and
    $\restrict{m_2}{X_2}=m_1^X\circ(\restrict\sigma{X_2})$ for some
    $m_1^X\in\types{X_1}$, we have
    $\denotee{X_1=X_2}{\memuni{m_1m_2}}=\true$, thus
    $\denotee{\CL{X_1=X_2}}{\memuni{m_1m_2}}=\elltwov{\cl{V_1}\cl{V_2}}$.
    Hence any cq-operator of the form
    $ \sum_{m_1^X,i} \lambda_{m_1^X,i}\,
    \pB\proj{\pb\basis{X_1X_2}{(m_1^X\
        m_1^X\circ(\restrict\sigma{X_2}))}} \tensor \dots $
    satisfies $\CL{X_1=X_2}$. Thus $\rho'$ as defined in \eqref{eq:rho'.def} satisfies  $\CL{X_1=X_2}$.

    From \eqref{eq:rho'.def} it follows that $\rho'$
    can be decomposed as
    $\rho'=\sum_j\lambda_j\pb\proj{\phi_j\tensor\Urename{\restrict\sigma{Y_1}}{\phi_j}\tensor
      {\phi'_{1j}}\tensor\phi'_{2j}}$ for some normalized $\phi_j\in\elltwov{Y_2}$,
    $\phi'_{1j}\in\elltwov{X_1\qu{Z_1}}$,
     $\phi'_{2j}\in\elltwov{X_2\qu{Z_2}}$,
    and $\lambda_i>0$.
    (To arrive at this decomposition, we start with
    \eqref{eq:rho'.def}, and decompose $\rho'_{1,Z}$
    and $\rho'_{2,Z}$ as mixtures of pure states.)

    Fix some $j$.
    By~\autoref{coro:quanteq} (with $\psi_1:=\phi_j\tensor\phi'_{1j}$
    and $\psi_2:=\Urename{\restrict\sigma{Y_1}}\phi_{j}\tensor\phi'_{2j}$
    and $X_1:=Y_1$
    and $X_2:=Y_2$
    and $V_1:={V_1}$
    and $V_2:={V_2}$
    and $\sigma:=(\restrict\sigma{Y_1})^{-1}$),
    we have that
    $(\phi_j\tensor\phi'_{1j})\tensor
    (\Urename{\restrict\sigma{Y_1}}\phi_j\tensor\phi'_{2j})\in(Y_1\quanteq
    Y_2)$, since
    $\phi_j =
    \Urename{(\restrict\sigma{Y_1})^{-1}}\Urename{\restrict\sigma{Y_1}}\phi_j$.

    Since for all $j$,
    $(\phi_j\tensor\phi'_{1j})\tensor
    (\Urename{\restrict\sigma{Y_1}}\phi_j\tensor\phi'_{2j})\in(Y_1\quanteq
    Y_2)$, it follows that $\rho'$ satisfies $(Y_1\quanteq Y_2)$.

    Thus $\rho$'
    satisfies both $\CL{X_1=X_2}$
    and $Y_1\quanteq Y_2$,
    hence $\rho'$
    satisfies $A = \CL{X_1=X_2} \cap (Y_1\quanteq Y_2)$.
  \end{claimproof}

  Summarizing, for any $\psi_1,\psi_2,m_1,m_2$
  as in \autoref{lemma:pure}  with $\psi_1\otimes\psi_2\in\denotee{A}{\memuni{m_1}{m_2}}$, there exists a separable
  $\rho'\in\traceclcq{V_1V_2}$
  (\autoref{claim:rho'.sep}) such that
  $\partr{V_1}{V_2}\rho'=\rho'_1=\denotc{\idx1\bc}(\pointstate{\cl{V_1}}{m_1}\tensor\proj{\psi_1})$
  and
  $\partr{V_2}{V_1}\rho'=\rho'_2=\denotc{\idx2\bc}(\pointstate{\cl{V_2}}{m_2}\tensor\proj{\psi_2})$
  (\autoref{claim:rho'.proj}) and such that $\rho'$
  satisfies $A$ (\autoref{claim:rho'.A}).

  By \autoref{lemma:pure}, this implies $\rhl A\bc\bc A$.
  Since $A$
  was defined as $A:=\CL{X_1=X_2}\cap(Y_1\quanteq Y_2)$,
  the lemma follows.
\end{proof}

\begin{lemma}[Elimination rule for qRHL]\label{rule-lemma:QrhlElim}
  \Ruleref{QrhlElim} holds.
\end{lemma}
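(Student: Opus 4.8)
The plan is to instantiate the qRHL judgment on the state $\rho$ to produce a single joint final state $\rho'$, and then read the probability inequality directly off the classical postcondition $\CL{\idx1e\Rightarrow\idx2f}$. Since $\rho$ is separable and satisfies $A$, applying \autoref{def:rhl} to the premise $\rhl A\bc\bd{\CL{\idx1e\Rightarrow\idx2f}}$ yields a separable $\rho'\in\traceposcq{V_1V_2}$ that satisfies $\CL{\idx1e\Rightarrow\idx2f}$ with
\[
  \partr{V_1}{V_2}\rho' = \denotc{\idx1\bc}\pb\paren{\partr{V_1}{V_2}\rho}, \qquad
  \partr{V_2}{V_1}\rho' = \denotc{\idx2\bd}\pb\paren{\partr{V_2}{V_1}\rho}.
\]

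Next I would rewrite these marginals using the two hypotheses $\Erename{\idx1}(\rho_1)=\partr{V_1}{V_2}\rho$ and $\Erename{\idx2}(\rho_2)=\partr{V_2}{V_1}\rho$, together with the (routine) fact that the program semantics commutes with variable renaming, $\denotc{\idx1\bc}\circ\Erename{\idx1}=\Erename{\idx1}\circ\denotc\bc$ and likewise for $\idx2$. This gives $\partr{V_1}{V_2}\rho' = \Erename{\idx1}\pb\paren{\denotc\bc(\rho_1)}$ and $\partr{V_2}{V_1}\rho' = \Erename{\idx2}\pb\paren{\denotc\bd(\rho_2)}$. Writing $\rho'=\sum_m\proj{\basis{\cl{V_1V_2}}m}\tensor\rho'_m$ and observing that $\idx1e$ only constrains classical variables of $V_1$ (so summing out $V_2$ is harmless) and that renaming does not change probabilities, I then obtain
\[
  \sum_{m:\,\denotee{\idx1e}m=\true}\tr\rho'_m = \prafter e\bc{\rho_1},
  \qquad
  \sum_{m:\,\denotee{\idx2f}m=\true}\tr\rho'_m = \prafter f\bd{\rho_2}.
\]

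The conclusion then follows from the postcondition. Because $\rho'$ satisfies $\CL{\idx1e\Rightarrow\idx2f}$, \autoref{lemma:cl.rho} guarantees that $\denotee{\idx1e}m=\true$ implies $\denotee{\idx2f}m=\true$ for every $m$ with $\rho'_m\neq0$. Since each $\rho'_m$ is positive and hence $\tr\rho'_m\geq0$, the nonvanishing terms of the first sum above are a subset of the terms of the second, so $\prafter e\bc{\rho_1}\leq\prafter f\bd{\rho_2}$, as claimed. The stated variants are immediate: for $\Leftrightarrow$ the two index sets coincide on the support of $\rho'$, yielding $=$, and $\geq,\Leftarrow$ follow by exchanging the roles of the two sides.

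The main obstacle is purely the bookkeeping of the renamings $\idx1,\idx2,\Erename{\cdot}$ and the identities I glossed above — namely renaming-invariance of $\prafter{\cdot}{\cdot}{\cdot}$ and the commutation of $\denotc{\idx i\bc}$ with $\Erename{\idx i}$ — which must be stated precisely enough to keep the spaces $\elltwov V$ and $\elltwov{V_i}$ apart while remaining faithful to the semantics of \autoref{sec:qsemantics}. As suggested in the preliminaries, the cleanest route is to first check the whole argument with these isomorphisms treated as the identity, where it reduces to a one-line counting argument over the classical memories, and then reinsert the renamings mechanically.
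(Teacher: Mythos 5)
Your proposal is correct and follows essentially the same route as the paper's proof: instantiate the judgment on $\rho$ to obtain a joint final state $\rho'$, decompose it over classical memories, identify $\prafter e\bc{\rho_1}$ and $\prafter f\bd{\rho_2}$ with sums of $\tr\rho'_m$ over the memories where $\idx1e$ resp.\ $\idx2f$ hold, and conclude by the support condition from $\CL{\idx1e\Rightarrow\idx2f}$ together with positivity. The only cosmetic difference is that the paper derives the $=,\Leftrightarrow$ and $\geq,\Leftarrow$ variants by appealing to the \rulerefx{Sym} and \rulerefx{Conseq} rules rather than by a direct symmetry of the counting argument, but both are fine.
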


\begin{proof}
  Let $\rho_1':=\Erename{\idx1}(\rho_1)$ and  $\rho_2':=\Erename{\idx2}(\rho_2)$.
  
  Since $\rho$
  is separable, and $\rho$
  satisfies $A$,
  and since $\rhl A\bc\bd {\CL{\idx1 e \Rightarrow \idx2f}}$,
  there exists a state $\rho'$ such that $\rho'$ satisfies $\CL{\idx1e \Rightarrow \idx2f}$, and
  \begin{align*}
    \partr{V_1}{V_2}\rho' &= \denotc{\idx1\bc}\pb\paren{\partr{V_1}{V_2} \rho} = \denotc{\idx1\bc}(\rho_1'), \\
    \partr{V_2}{V_1}\rho' &= \denotc{\idx2\bd}\pb\paren{\partr{V_2}{V_1} \rho} = \denotc{\idx2\bd}(\rho_2').
  \end{align*}
  Since $\rho'$ is a separable and $\rho'\in\traceposcq{V_1V_2}$, we can write $\rho'$ as
  \begin{equation*}
    \rho' = \sum_{\memuni{m_1m_2}}\pb\proj{\basis{\cl{V_1}\cl{V_2}}{\memuni{m_1m_2}}}\tensor\rho'_{\memuni{m_1m_2}}
  \end{equation*}
  for some $\rho'_{\memuni{m_1m_2}}\in\traceposv{\qu{V_1}\qu{V_2}}$.
  Then
  \begin{align*}
    \denotc{\idx1\bc}(\rho_1') &=
                                 \partr{V_1}{V_2}\rho'=
                                 \sum_{m_1} \pb\proj{\basis{\cl{V_1}}{m_1}} \tensor
    \underbrace{\sum_{m_2}\partr{\qu{V_1}}{\qu{V_2}}\rho'_{\memuni{m_1m_2}}}_{{}=:\rho''_{m_1}}, \\
    \denotc{\idx2\bd}(\rho_2') &= 
                                 \partr{V_2}{V_1}\rho'=
                                 \sum_{m_2} \pb\proj{\basis{\cl{V_2}}{m_2}} \tensor
                         \underbrace{\sum_{m_1}\partr{\qu{V_2}}{\qu{V_1}}\rho'_{\memuni{m_1m_2}}}_{{}=:\rho''_{m_2}}.
  \end{align*}
  Let $p_{\memuni{m_1m_2}}:=\tr\rho'_{\memuni{m_1m_2}}$.
  Let $E:=\{m_1:\denotee{\idx1e}{m_1}=\true\}$
  and $F:=\{m_2:\denotee{\idx2f}{m_2}=\true\}$.
  By definition of 
  $ \prafter e\bc{\rho_1}$, we have
  \begin{align*}
    \pb\prafter e\bc{\rho_1}
    =
    \pb\prafter{\idx1e}{(\idx1\bc)}{\rho_1'}
    = \sum_{m_1\in E} \tr \rho''_{m_1} =
    \sum_{\substack{m_1,m_2\\m_1\in E}} \tr \partr{\qu{V_1}}{\qu{V_2}}\rho'_{\memuni{m_1m_2}}
    =
    \sum_{\substack{m_1,m_2\\m_1\in E}} p_{\memuni{m_1m_2}}.
  \end{align*}
  And analogously
  \begin{equation*}
    \pb\prafter f\bd{\rho_2} =
    \sum_{\substack{m_1,m_2\\m_2\in F}} p_{\memuni{m_1m_2}}.
  \end{equation*}
  
  Since $\rho'$
  satisfies $\CL{\idx1e\Rightarrow \idx2f}$,
  we have that $\rho'_{\memuni{m_1m_2}}= 0$
  whenever $\denotee{\idx1e\Rightarrow \idx2f}{\memuni{m_1m_2}}=\false$.
  In other words, $p_{\memuni{m_1m_2}}=0$ whenever $m_1\in E\land m_2\notin F$.

  Thus
  \vspace{-11pt}
  \begin{align*}
    \pb\prafter f\bd{\rho_2} &=
    \sum_{\substack{m_1,m_2\\m_2\in F}} p_{\memuni{m_1m_2}}
    =
    \sum_{\substack{m_1,m_2\\m_2\in F}} p_{\memuni{m_1m_2}} + 
    \overbrace{\sum_{\substack{m_1,m_2\\m_1\in E \land m_2\notin F}} p_{\memuni{m_1m_2}}}^{{}=0}
    =
    \sum_{\substack{m_1,m_2\\m_1\in E\vee m_2\in F}} p_{\memuni{m_1m_2}} \\ 
    &\geq
    \sum_{\substack{m_1,m_2\\m_1\in E}} p_{\memuni{m_1m_2}} 
    =
    \pb\prafter e\bc{\rho_1}.
    %\mathQED
  \end{align*}
  This shows  \ruleref{QrhlElim}.

  \medskip

  The variants of \ruleref{QrhlElim} with $=,\Leftrightarrow$
  and $\geq,\Leftarrow$
  instead of $\leq,\Rightarrow$
  follow from the $\leq,\Rightarrow$-case using \ruleref{Sym} and \ruleref{Conseq}.
\end{proof}

\begin{lemma}[Elimination rule for qRHL, for quantum equality]\label{rule-lemma:QrhlElimEq}
\Ruleref{QrhlElimEq} holds.
\end{lemma}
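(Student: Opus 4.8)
\Ruleref{QrhlElimEq} holds. This rule says: given a separable state $\rho\in\traceposcq V$ satisfying a predicate $A$ (which is $\cl VY$-local), with $\bc,\bd$ being $XY$-local, and given the qRHL judgment
\[
\pb\rhl {\CL{X_1=X_2}\cap (Y_1\quanteq Y_2)\cap A_1\cap A_2}{\bc}{\bd} {\CL{\idx1e \Rightarrow \idx2f}},
\]
we may conclude $\prafter e\bc{\rho}\leq\prafter f\bd{\rho}$ (and analogously for $=,\Leftrightarrow$ and $\geq,\Leftarrow$).

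The plan is to reduce \Ruleref{QrhlElimEq} to the already-established \Ruleref{QrhlElim}. The key observation is that \Ruleref{QrhlElim} compares $\prafter e\bc{\rho_1}$ and $\prafter f\bd{\rho_2}$ where $\rho_1,\rho_2$ are two (possibly different) initial states, provided one can exhibit a separable coupling $\rho^*\in\traceposcq{V_1V_2}$ satisfying the precondition $A'$ of the judgment, with the correct marginals $\Erename{\idx1}(\rho_1)=\partr{V_1}{V_2}\rho^*$ and $\Erename{\idx2}(\rho_2)=\partr{V_2}{V_1}\rho^*$. In our situation both initial states are the \emph{same} $\rho$, so the natural candidate coupling is the ``diagonal'' state that places $\rho$ on both sides in a perfectly correlated way. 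First I would construct this diagonal coupling explicitly: writing $\rho=\sum_m\pb\proj{\basis{\cl V}m}\tensor\rho_m$, and Schmidt/pure-decomposing each $\rho_m$ so as to separate the $Y$-register from the rest (using that $\rho$ is separable and that $A$ is $\cl VY$-local), I would build $\rho^*$ on $V_1V_2$ by tagging the classical part identically on both sides (giving $\CL{X_1=X_2}$ on the relevant variables) and placing the same quantum $Y$-state on $Y_1$ and $Y_2$ via $\Urename{}$ (giving $Y_1\quanteq Y_2$, cf.~\autoref{coro:quanteq}). This is essentially the same diagonalization carried out in the proof of \Ruleref{Equal}.

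Next I would verify the three marginal/precondition conditions required to invoke \Ruleref{QrhlElim} with $\rho_1:=\rho_2:=\rho$. The marginal conditions $\partr{V_1}{V_2}\rho^*=\Erename{\idx1}(\rho)$ and $\partr{V_2}{V_1}\rho^*=\Erename{\idx2}(\rho)$ should follow by the same trace-out computation as in \autoref{claim:rho'.proj} of the \Ruleref{Equal} proof: tracing out one side leaves exactly $\rho$ with its variables indexed by the other tag. For the precondition, I must check that $\rho^*$ satisfies $\CL{X_1=X_2}\cap(Y_1\quanteq Y_2)\cap A_1\cap A_2$. The first two conjuncts hold by construction of the diagonal coupling (as just sketched). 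For $A_1\cap A_2$, I would use that $\rho$ satisfies $A$ and that $A$ is $\cl VY$-local: since $A_i=\Urename{\restrictno{\idx i\!}{\qu V}}(\idx iA)\tensor\elltwov{\qu{V_{3-i}}}$ is just $A$ re-tagged with index $i$ and extended trivially to the other side, the fact that the $Y_i$-register of $\rho^*$ carries the same quantum content as $\rho$'s $Y$-register (and the classical variables are faithfully copied to side $i$) gives that $\rho^*$ satisfies both $A_1$ and $A_2$. Here $Y$-locality of $A$ is essential so that only the $Y$-register's state (which is identical across the two sides of the coupling) matters.

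With $\rho^*$ shown to be a separable state satisfying the full precondition and having the right marginals, \Ruleref{QrhlElim} applied to the hypothesized judgment yields directly $\prafter e\bc{\rho}\leq\prafter f\bd{\rho}$. The variants with $=,\Leftrightarrow$ and $\geq,\Leftarrow$ then follow from the $\leq,\Rightarrow$ case exactly as in the proof of \Ruleref{QrhlElim} (via \Ruleref{Sym} and \Ruleref{Conseq}, or by symmetry of the argument). The main obstacle I anticipate is the verification that $\rho^*$ satisfies $A_1\cap A_2$: this requires carefully matching the re-tagging isomorphisms $\Urename{\restrictno{\idx i\!}{\qu V}}$ against the way the diagonal coupling places the $Y$-state onto $Y_1$ and $Y_2$, and crucially invoking $\cl VY$-locality of $A$ so that the non-$Y$ quantum registers (which need not be correlated between the two sides) do not interfere with satisfaction of the re-tagged predicate. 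The rest is bookkeeping with partial traces and the characterization of quantum equality on separable states from \autoref{coro:quanteq}.
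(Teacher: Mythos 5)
Your overall strategy---reduce to \Ruleref{QrhlElim} with $\rho_1=\rho_2$ by exhibiting a diagonal coupling, in the style of the proof of \Ruleref{Equal}---is the right one, and it is indeed how the paper proceeds. But there is a genuine gap at the heart of your construction. You propose to ``Schmidt/pure-decompose each $\rho_m$ so as to separate the $Y$-register from the rest,'' citing separability of $\rho$. First, the rule only assumes $\rho\in\traceposcq V$; no separability of $\rho$ across the $Y$ versus $\qu V\setminus Y$ cut is among the premises, and the $\cl VY$-locality of $A$ does not supply it (take $A=\CL\true$). In general $\rho_m\in\traceposv{\qu V}$ can be entangled between $Y$ and $\qu V\setminus Y$, and then no decomposition of the kind you need exists. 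This is fatal, not cosmetic: \Ruleref{QrhlElim} requires the coupling $\rho^*$ to be $(V_1,V_2)$-separable, and by \autoref{lemma:quanteq} every product vector $\psi_1\otimes\psi_2$ in a separable state satisfying $Y_1\quanteq Y_2$ must factor as $\psi_1=\psi_1^Q\otimes\psi_1^Y$ with $Y_1$ unentangled from $\qu{V_1}\setminus Y_1$. Hence $\partr{V_1}{V_2}\rho^*$ is necessarily separable across that cut, and it cannot equal $\Erename{\idx1}(\rho)$ when $\rho$ is entangled there. So for such $\rho$ the coupling you want simply does not exist.

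The missing idea in the paper's proof is a pre-processing step: fix a normalized $\psi_0\in\elltwov{\qu V\setminus Y}$ and apply the $(\qu V\setminus Y)$-local cq-superoperator $\calE(\hat\rho):=\partr{\cl VY}{\qu V\setminus Y}(\hat\rho)\tensor\proj{\psi_0}$, setting $\rho^*:=\calE(\rho)$. Because $\bc,\bd$ are $XY$-local and $\calE$ acts only on $\qu V\setminus Y$ (and is trace-preserving there), $\denotc\bc$ and $\calE$ commute and one gets $\prafter e\bc\rho=\prafter e\bc{\rho^*}$ and likewise for $\bd,f$. The state $\rho^*$ now decomposes as $\sum_{m,i}\alpha_{m,i}\proj{\basis{\cl V}m\tensor\psi_{m,i}\tensor\psi_0}$ with $\psi_{m,i}\in\elltwov Y$, and \emph{this} is the state for which your diagonal coupling works (the verification of $\CL{X_1=X_2}$, $Y_1\quanteq Y_2$ via \autoref{coro:quanteq}, and $A_1\cap A_2$ via $\cl VY$-locality of $A$ then goes through essentially as you sketch). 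Without this step your argument fails on any input whose $Y$-register is entangled with the remaining quantum variables.
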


\begin{proof}\stepcounter{claimstep}
  Fix some arbitrary normalized $\psi_0\in\elltwov{\qu{V}\setminus Y}$.
  Let
  $\calE(\hat\rho):=\partr{\cl{V}Y}{\qu{V}\setminus
    Y}(\hat\rho)\tensor\proj {\psi_0}$
  for all $\hat\rho\in\traceclcq V$.  Then $\calE$
  is an $(\qu{V}\setminus Y)$-local cq-superoperator on $V$.

  Let $\rho^*:=\calE(\rho)$.
  Since $\calE$
  is $(\qu{V}\setminus Y)$-local,
  and $\bc$
  is $XY$-local,
  and $(\qu{V}\setminus Y)\cap XY=\varnothing$
  (since $X_1$
  are classical), we have that $\denotc\bc$
  and $\calE$
  commute.  Furthermore, since $\calE$
  is $(\qu{V}\setminus Y)$-local,
  it operates only on quantum variables.
  Thus $\calE=\idv V\otimes\calE'$ for some trace-preserving $\calE'$ on $\qu V$.
  Let $\denotc\bc(\rho)=:\sum_m\pb\proj{\basis{\cl V}{m}}\otimes\rho_m$ for some $\rho_m\in\traceclcq{\qu V}$.
  Then 
  \[
  \denotc\bc\paren{\rho^*}
  =
  \denotc\bc\pb\paren{\calE(\rho)}
  =
  \sum_m\calE\pb\paren{\proj{\basis{\cl V}{m}}\otimes\rho_m}
  =
  \sum_m{\proj{\basis{\cl V}{m}}\otimes\calE'(\rho_m)}.
  \]
  Hence
  \begin{equation}\label{eq:ebcrho1}
    \pb\prafter e\bc{\rho}
    =\!\!\!\sum_{m:\denotee em=\true}\!\!\!\!\tr\rho_m
    =\!\!\!\sum_{m:\denotee em=\true}\!\!\!\!\tr\calE'(\rho_m)
    =\pb\prafter e\bc{\rho^*}.
\end{equation}
  Analogously, we get
  \begin{equation}
   \pb\prafter f\bd{\rho}=\pb\prafter f\bd{\rho^*}.\label{eq:ebdrho1}
 \end{equation}
   Since $\calE$
   is a cq-superoperator, we have that
   $\rho^*=\calE(\rho)\in\traceposcq{V}$.
   Thus we can decompose $\rho^*$
   as
   $\rho^*=\sum_{m,i} \alpha_{m,i}\,\pb\proj{\basis{\cl{V}}m\tensor
     \psi_{m,i}\tensor \psi_0}$ for some $\alpha_{m,i}>0$
   and normalized $\psi_{m,i}\in\elltwov{Y}$.  Let
  \begin{align*}
    \hat\rho &:= \sum_{m,i}\alpha_{m,i}\pB\proj{
               \,
               \underbrace{
               \vphantom{\Bigl(\Bigr)}
               \Urename{\idx1}\pb\paren{\basis{\cl{V}}m \tensor \psi_{m,i}\tensor \psi_0}
               \tensor
               \Urename{\idx2}\pb\paren{\basis{\cl{V}}m \tensor \psi_{m,i}\tensor \psi_0}
               }_{{}=:\phi_{m,i}}
               \,
}.
  \end{align*}
  
  \begin{claim}\label{claim:hatrho.sep}
    $\hat \rho$ is separable.
  \end{claim}

  \begin{claimproof}
    Immediate from the definition of $\hat\rho$.
  \end{claimproof}

  \begin{claim}
    $\hat\rho$ satisfies $\CL{X_1=X_2}$.
  \end{claim}

  \begin{claimproof}
  We have
  $\denotee {X_1}{m\,\circ\,{\idx1^{-1}}}=\denotee{X}m=\denotee
  {X_2}{m\,\circ\,{\idx2^{-1}}}$ for all $m\in\types{\cl V}$. Thus
  $\denotee {X_1=X_2}{m\,\circ\,{\idx1^{-1}}\, m\,\circ\,{\idx2^{-1}}}=\true$.
  Furthermore
  \begin{align*}
    \phi_{m,i}&=\Urename{\restrictno{\idx1\!}{\cl V}}\basis{\cl V}m \otimes
    \Urename{\restrictno{\idx2\!}{\cl V}}\basis{\cl V}m \otimes \cdots
    =\pb\basis{\cl{V_1}}{m\circ\idx1^{-1}} \otimes
                \pb\basis{\cl{V_2}}{m\circ\idx2^{-1}} \otimes\cdots
            \\&
    = \pb\basis{\cl{V_1}\cl{V_2}}{m\circ\idx1^{-1}\, m\circ\idx2^{-1}} \otimes \cdots
  \end{align*}
  where $\dots$ stands for the other terms from $\phi_{m,i}$ which are irrelevant here.
  And $\hat\rho=\sum_{m,i}\alpha_{m,i}\,\phi_{m,i}$. 
  Thus $\proj{\phi_{m,i}}$ satisfies $\CL{X_1=X_2}$.
  Thus $\hat\rho$ satisfies $\CL{X_1=X_2}$.
\end{claimproof}

\begin{claim}
  $\hat\rho$ satisfies $Y_1\quanteq Y_2$.
\end{claim}

\begin{claimproof}
  Let $\sigma:Y_2\to Y_1$ be defined by $\sigma(\qq_2):=\qq_1$ for all $\qq\in Y$.
  Note that
  \[
    \Urename{\restrictno{\idx1\!}{Y}}\psi_{m,i}
    =
    \Urename{\idx1\!\circ{\idx2\!}^{-1}}\, \Urename{\restrictno{\idx2\!}{Y}}\psi_{m,i}
    =
    \Urename\sigma\, \Urename{\restrictno{\idx2\!}{Y}}\psi_{m,i}.
  \]
  Thus
  by \autoref{coro:quanteq} (with $X_1:=Y_1$,
  $X_2:=Y_2$,
  $V_1:={V_1}$,
  $V_2:={V_2}$,
  $\psi_1^X:=\Urename{\restrictno{\idx1\!}{Y}}\psi_{m,i}$,
  $\psi_2^X:=\Urename{\restrictno{\idx2\!}{Y}}\psi_{m,i}$,
  $\psi_1^Y:=\Urename{\restrictno{\idx1\!}{\qu{V}\setminus Y}}\psi_0$,
  $\psi_2^Y:= \Urename{\restrictno{\idx2\!}{\qu{V}\setminus Y}}\psi_0$,
  $\sigma:=\sigma$)
  we have
  \begin{multline*}
    \phi'_{m,i}:=\Urename{\restrictno{\idx1\!}{Y}}\psi_{m,i}\tensor
    \Urename{\restrictno{\idx1\!}{\qu{V}\setminus Y}}\psi_0\tensor
    \Urename{\restrictno{\idx2\!}{Y}}\psi_{m,i}\tensor
    \Urename{\restrictno{\idx2\!}{\qu{V}\setminus Y}}\psi_0
    \\
    \in
    (Y_1\quanteq Y_2).
  \end{multline*}
  Since
  $\phi_{m,i}=\pb\basis{\cl{V_1}\cl{V_2}}{m\circ(\idx1)^{-1}\ 
    m\circ(\idx2)^{-1}} \otimes\phi'_{m,i}$, it
  follows that $\proj{\phi_{m,i}}$
  satisfies $Y_1\quanteq Y_2$,
  and thus $\hat\rho=\sum_{m,i}\alpha_{m,i}\,\proj{\phi_{m,i}}$
  satisfies $Y_1\quanteq Y_2$.
\end{claimproof}

\begin{claim}\label{claim:A1}
  $\hat\rho$ satisfies $A_1$.
\end{claim}

\begin{claimproof}
  Since $\rho$
  satisfies $A$ by assumption,
  and $\calE$
  is $(\qu V\setminus Y)$-local,
  and $A$
  is $\cl VY$-local by assumption,
  we have that $\rho^*=\calE(\rho)$ satisfies $A$.  Since 
   $\rho^*=\sum_{m,i} \alpha_{m,i}\,\pb\proj{\basis{\cl{V}}m}\tensor\proj{
     \psi_{m,i}\tensor \psi_0}$,
   it follows that $     \psi_{m,i}\tensor \psi_0\in\denotee A{m}=
\denotee{\idx1A}{m\circ(\idx1)^{-1}\,
    m\circ(\idx2)^{-1}}$.
  Hence
$\Urename{\restrictno{\idx1}{\qu V}}
  (\psi_{m,i}\tensor \psi_0)\in 
\pb\denotee{\Urename{\restrictno{\idx1}{\qu V}}\idx1A}{m\circ(\idx1)^{-1}\,
    m\circ(\idx2)^{-1}}$.
Thus
\begin{multline*}
  \phi'_{m,i} = 
\Urename{\restrictno{\idx1}{\qu V}}
  (\psi_{m,i}\tensor \psi_0)
\tensor
\Urename{\restrictno{\idx2}{\qu V}}
  (\psi_{m,i}\tensor \psi_0) \\
\in
\pb\denotee{\Urename{\restrictno{\idx1}{\qu V}}\idx1A}{m\circ(\idx1)^{-1}\,
    m\circ(\idx2)^{-1}}
\otimes\elltwov{\qu{V_2}}
=
\denotee{A_1}{m\circ(\idx1)^{-1}\,
    m\circ(\idx2)^{-1}}.
\end{multline*}
Thus
\begin{equation*}
  \proj{\phi_{m,i}} =
\pB\proj{\pb\basis{\cl{V_1}\cl{V_2}}{m\circ(\idx1)^{-1}\,
    m\circ(\idx2)^{-1}} \otimes\phi'_{m,i}}
\text{ satisfies }A_1.
  \end{equation*}
  Thus  $\hat\rho=\sum_{m,i}\alpha_{m,i}\,\proj{\phi_{m,i}}$ satisfies $A_1$.
\end{claimproof}

\begin{claim}\label{claim:A2}
  $\hat\rho$ satisfies $A_2$
\end{claim}

\begin{claimproof}
  Analogous to \autoref{claim:A1}.
\end{claimproof}

\begin{claim}\label{claim:ef}
  $\pb\prafter {e}{\bc}{\rho^*}\leq\pb\prafter {f}{\bd}{\rho^*}$.
\end{claim}
  
\begin{claimproof}
  Since $\hat\rho$ satisfies $\CL{X_1=X_2}$ and $Y_1\quanteq Y_1$ and $A_1$ and $A_2$, we have
  that $\hat\rho$ satisfies
  $\Hat A:={\CL{X_1=X_2}\cap (Y_1\quanteq Y_1)}\cap A_1\cap A_2$.

  We have
  \begin{align*}
 \partr{V_1}{V_2}\hat\rho
    &= 
      \sum_{m,i}\alpha_{m,i}\,\pB\proj{
      \Urename{\idx1}\pb\paren{
      \basis{\cl{V}}m\tensor
      \psi_{m,i}\tensor\psi_0
      }}
             =
                 \Erename{\idx1}(\rho^*),
    \\
 \partr{V_2}{V_1}\hat\rho
    &= 
      \sum_{m,i}\alpha_{m,i}\,\pB\proj{
      \Urename{\idx2}\pb\paren{
      \basis{\cl{V}}m\tensor
      \psi_{m,i}\tensor\psi_0
      }}
             =
                 \Erename{\idx2}(\rho^*).
  \end{align*}

  Then we can apply \ruleref{QrhlElim}, with $e:=e$,
  $f:=f$,
  $\bc:=\bc$,
  $\bd:=\bd$.
  $\rho:=\hat\rho$,
  $\rho_1:=\rho^*$,
  $\rho_2:=\rho^*$,
  $A:=\Hat A$.
  (The premises of \ruleref{QrhlElim} are satisfied by Claims \ref{claim:hatrho.sep}--\ref{claim:A2}.)
  Thus we get
  $\pb\prafter {e}{\bc}{\rho^*}\leq\pb\prafter {f}{\bd}{\rho^*}$.
\end{claimproof}

  We conclude
  \begin{align*}
    \pb\prafter e\bc{\rho}
    &  \eqrefrel{eq:ebcrho1}=
      \pb\prafter e\bc{\rho^*}
%    =
%      \prafter{\idx1e}{(\idx1\bc)}{\Erename{\idx1}(\rho^*)}
%    \eqrefrel{eq:hatrho1.rho*}=
%      \prafter {\idx1e}{\idx1\bc}{\hat\rho_1} \\
%    &\eqrefrel{eq:prafter.ineq}\leq
%      \prafter {\idx2f}{\idx2\bd}{\hat\rho_2} 
%    \eqrefrel{eq:hatrho2.rename}=
%      \prafter {\idx2f}{(\idx2\bd)}{\Erename{\idx2}(\rho^*)} \\
    \quad
    \txtrel{\autoref{claim:ef}}\leq
    \quad
      \pb\prafter {f}{\bd}{\rho^*}
    \eqrefrel{eq:ebdrho1}=
      \pb\prafter {f}{\bd}{\rho_1}.
  \end{align*}
  This shows \ruleref{QrhlElimEq}.

  \medskip

  The variants of \ruleref{QrhlElimEq} with $=,\Leftrightarrow$
  and $\geq,\Leftarrow$
  instead of $\leq,\Rightarrow$
  are shown analogously using the corresponding variants of
  \ruleref{QrhlElim} in \autoref{claim:ef}.
\end{proof}

\begin{lemma}[Transitivity]\label{rule-lemma:Trans}
  \Ruleref{Trans} holds.
\end{lemma}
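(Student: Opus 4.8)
The plan is to imitate the classical proof of transitivity of relational Hoare logic — glue the two witness couplings along their common middle marginal — but to run the whole argument on pure states so that \autoref{lemma:quanteq} is available. Throughout, read $V_1$ as the variables of $\bc$ and $V_2$ as the variables of $\be$. By \autoref{lemma:pure} it suffices to fix classical memories $m_1\in\types{\cl{V_1}}$, $m_2\in\types{\cl{V_2}}$ and normalized $\psi_1\in\elltwov{\qu{V_1}}$, $\psi_2\in\elltwov{\qu{V_2}}$ with $\psi_1\tensor\psi_2$ in the precondition of the conclusion, and to exhibit a separable $\rho'$ satisfying the conclusion's postcondition whose marginals are $\denotc{\idx1\bc}(\pointstate{\cl{V_1}}{m_1}\tensor\proj{\psi_1})$ and $\denotc{\idx2\be}(\pointstate{\cl{V_2}}{m_2}\tensor\proj{\psi_2})$.

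First I would manufacture an intermediate pure input for $\bd$. The classical half of the precondition is $\CL{a_{\bc\bd}\circexp a_{\bd\be}}$, so by the definition of $\circexp$ there is a middle memory $m_\bd\in\types{\cl V}$ with $a_{\bc\bd}$ relating $m_1$ to $m_\bd$ and $a_{\bd\be}$ relating $m_\bd$ to $m_2$. The quantum half is $(u_{\bc1}Q_{\bc1}\quanteq u_{\be2}Q_{\be2})$; since $u_\bc,u_\be$ are isometries, \autoref{lemma:quanteq} splits $\psi_1=\psi_1^Q\tensor\psi_1^Y$ and $\psi_2=\psi_2^Q\tensor\psi_2^Y$ with a common image vector $\chi:=u_\bc\adj{\Uvarnames{Q_\bc}}\psi_1^Q=u_\be\adj{\Uvarnames{Q_\be}}\psi_2^Q$. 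Because $u_\bd$ is \emph{unitary} I can set $\psi_\bd^Q:=\Uvarnames{Q_\bd}\adj{u_\bd}\chi$ and choose any normalized $\psi_\bd^Y$, so that $u_\bd\adj{\Uvarnames{Q_\bd}}\psi_\bd^Q=u_\bd\adj{u_\bd}\chi=\chi$. The converse direction of \autoref{lemma:quanteq} (applicable since $u_\bc,u_\bd,u_\be$ are all isometries) then shows that $(m_1,\psi_1)$ paired with $(m_\bd,\psi_\bd)$ satisfies the precondition of the first premise, and $(m_\bd,\psi_\bd)$ paired with $(m_2,\psi_2)$ the precondition of the second. Feeding these into the two premises (through \autoref{lemma:pure}) produces separable witnesses $\rho'_{\bc\bd}$ and $\rho'_{\bd\be}$ whose middle marginals both equal, up to the index renamings, $\sigma_\bd:=\denotc{\bd}(\pointstate{}{m_\bd}\tensor\proj{\psi_\bd})$.

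The gluing step is where the side-conditions earn their keep. The postcondition of the first premise makes $\bc$'s output a \emph{function} of $\bd$'s output: on $\cl{\fv(\bc)}$ this is exactly the hypothesis $b_{\bc\bd}\implies(\xx^{(1)}_1,\dots)=\idx2 e_\bc$; on the register $Q_\bc$ it is forced by $(v_{\bc1}Q_{\bc1}\quanteq v_{\bd2}R_{\bd2})$ with $v_\bc$ an isometry; and on all other variables $\bc$ is inert since it is $\fv(\bc)$-local with $\qu{\fv(\bc)}\subseteq Q_\bc$ (\autoref{lemma:fv.local}). Hence there is a cq-superoperator $T$ — conjugation by $\Uvarnames{Q_\bc}\adj{v_\bc}v_\bd\adj{\Uvarnames{R_\bd}}$ on $R_\bd$ together with the classical relabelling $e_\bc$ — with $T(\sigma_\bd)=\denotc{\bc}(\pointstate{}{m_1}\tensor\proj{\psi_1})$, and symmetrically the second premise together with $b_{\bd\be}\implies(\yy^{(1)}_2,\dots)=\idx1 e_\be$ makes $\be$'s output functional in $\bd$'s. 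I would then define the glued witness by transporting the middle factor of $\rho'_{\bd\be}$ into $\bc$'s variables, $\rho':=(T\tensor\idv{V_2})(\rho'_{\bd\be})$. Separability is preserved since $T$ touches only the middle factor; the $\be$-marginal is unchanged and hence correct; and the $\bc$-marginal is $T(\sigma_\bd)=\denotc{\bc}(\dots)$ as required. For the postcondition, the classical part $\CL{b_{\bc\bd}\circexp b_{\bd\be}}$ holds because the very same $\bd$-output witnesses $b_{\bc\bd}$ on the left pair and $b_{\bd\be}$ on the right pair, while the quantum equality $(v_{\bc1}Q_{\bc1}\quanteq v_{\be2}Q_{\be2})$ follows by chaining image vectors on each pure component: the $v_\bc$-image of $\bc$ equals the $v_\bd$-image of the common $\bd$-output, which equals the $v_\be$-image of $\be$, whence \autoref{lemma:quanteq} (converse, $v_\bc,v_\be$ isometries) applies.

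The main obstacle is exactly this gluing. Classically it is the routine disintegration of a coupling along a shared marginal, but quantumly one cannot condition on or copy the middle state, so the argument \emph{must} route through the functional dependence of the two outer outputs on the middle output — this is what forces the determinism hypotheses on $b_{\bc\bd},b_{\bd\be}$, the requirement that $u_\bc,u_\be,v_\bc,v_\be$ be isometries, and that $u_\bd$ be unitary. The delicate bookkeeping is to verify rigorously that $\rho'_{\bc\bd}$ really is the graph of $T$, so that $T(\sigma_\bd)$ reproduces its $\bc$-marginal exactly: this means tracking the inert carried-along factors and, crucially, handling the gap between the isometry $v_\bc$ and the mere contraction $v_\bd$ by noting that every $\bd$-component in the shared support of $\sigma_\bd$ already has its $v_\bd$-image inside $\im v_\bc$ (else the premise-$1$ witness could not exist). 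Confirming that all traces, supports and normalizations line up so that both the composed classical relation and the composed quantum equality hold on every pure component is the technically heaviest part. The variants with $=,\Leftrightarrow$ and $\geq,\Leftarrow$ should then follow by symmetry via \ruleref{Sym} and \ruleref{Conseq}, as for the earlier elimination rules.
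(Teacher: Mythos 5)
Your proposal is correct in outline and shares the paper's skeleton — reduction to pure inputs via \autoref{lemma:pure}, construction of the intermediate memory $m_{\bd}$ and pure state $\psi_{\bd}$ from the unitarity of $u_{\bd}$ together with the converse direction of \autoref{lemma:quanteq}, invocation of both premises, and then a gluing step — but your gluing mechanism is genuinely different from the paper's. The paper assembles the $\bc$--$\be$ coupling from scratch: classically it disintegrates, setting $\mu_{\bc\bd\be}(m_{\bc},m_{\bd},m_{\be}):=\mu_{\bc\bd}(m_{\bc},m_{\bd})\,\mu_{\bd\be}(m_{\bd},m_{\be})/\mu_{\bd}(m_{\bd})$ after checking the middle marginals agree; quantumly it shows that both outer marginals, transported by $V_{\bc}\adj{\Uvarnames{Q_{\bc1}}}$ resp.\ $V_{\be}\adj{\Uvarnames{Q_{\be2}}}$, coincide as a single operator $\sigma\in\tracepos{Z}$, decomposes $\sigma=\sum_i\gamma(i)\proj{\psi_i}$, and defines the coupling as $\sum_i\gamma(i)\pb\proj{\Uvarnames{Q_{\bc1}}\adj{V_{\bc}}\psi_i\otimes\Uvarnames{Q_{\be2}}\adj{V_{\be}}\psi_i}$. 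You instead apply a transport superoperator $T$ to the middle register of the second witness, exploiting the fact that the determinism hypotheses plus $(v_{\bc1}Q_{\bc1}\quanteq v_{\bd2}R_{\bd2})$ with isometric $v_{\bc}$ make $\bc$'s output a deterministic image of $\bd$'s output. This works, and it arguably makes the purpose of the side conditions more transparent, but it shifts the verification burden onto the claim that $T$ reproduces the first witness's outer marginal exactly: you need (i) that the contraction $\adj{v_{\bc}}v_{\bd}$ is isometric on the support of the middle marginal (which does follow, since a contraction preserving norms on a spanning family of a subspace is isometric on that subspace), (ii) that the $v_{\bd}$-image of that support lies in $\im v_{\bc}$ (you correctly flag this), and (iii) that the spectator factor of $\bc$'s output is the fixed vector $\psi_1^Y$, forced by $\qu{\fv(\bc)}\subseteq Q_{\bc}$ and \autoref{lemma:fv.local}. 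The paper still needs the functions $f_{\bc},f_{\be}$ and a similar $\im V_{\bc}$ argument, so neither route avoids the determinism hypotheses; the trade-off is your single-superoperator bookkeeping versus the paper's explicit three-way reconstruction. One small slip: your closing remark about the $=,\Leftrightarrow$ and $\geq,\Leftarrow$ variants does not apply here — \rulerefx{Trans} has no such variants; that boilerplate belongs to \rulerefx{QrhlElim} and \rulerefx{QrhlElimEq}.
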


% The following counterexample is not correct. There is no reason to assume that
% $\rhl\dots\bc\bd{\qq_1\quanteq\qq_2}$. In fact,
% ~/svn/research-notes/shared/qrhl/basic-qrhl/qotp.pdf would contradict this, I think.
% 
% {Counterexample for unrestricted relations $b_{\bc\bd}$,
%   $b_{\bd\be}$:
%   $\bd$
%   outputs fully mixed state on $\qq$.
%   $\bc$
%   picks $\xx$
%   uniform, sets $\qq\leftarrow\basis{}\xx$.
%   $\bd$
%   picks $\xx$
%   uniform, sets $\qq\leftarrow H\basis{}\xx$.
%   Then $\rhl\dots\bc\bd{\qq_1\quanteq\qq_2}$
%   and $\rhl\dots\bd\be{\qq_1\quanteq\qq_2}$,
%   but not $\rhl\dots\bc\be{\qq_1\quanteq\qq_2}$.
%   % 
%   Need additional condition, e.g., $m_{\bc}$ restricted to $\fv(\bc)$ is uniquely determined by $m_{\bd}$ (assuming relation $b_{\bc\bd}$ holds). And same for $m_{\be}$.
%   % 
%   Or formally: $\exists f$ s.t. $b_{\bc\bd}\implies X_1=f(X_2)$ holds where $X_1:=\idx1\fv(\bc)$ and $X_2:=\fv(b_{\bc\be})\cap V_2$.
% }

\begin{proof}\stepcounter{claimstep}
  Let
  \begin{align*}
    A_{\bc\bd} &:= {\CL{a_{\bc\bd}} \cap \paren{u_{\bc1}Q_{\bc1} \quanteq u_{\bd2}Q_{\bd2}}} \\
    A_{\bd\be} &:= {\CL{a_{\bd\be}} \cap \paren{u_{\bd1}Q_{\bd1} \quanteq u_{\be2}Q_{\be2}}} \\
    A_{\bc\be} &:= {\CL{a_{\bc\bd}\circexp a_{\bd\be}} \cap \paren{u_{\bc1}Q_{\bc1} \quanteq u_{\be2}Q_{\be2}}} \\
    B_{\bc\bd} &:= {\CL{b_{\bc\bd}} \cap \paren{v_{\bc1}Q_{\bc1} \quanteq v_{\bd2}R_{\bd2}}} \\
    B_{\bd\be} &:= {\CL{b_{\bd\be}} \cap \paren{v_{\bd1}R_{\bd1} \quanteq v_{\be2}Q_{\be2}}} \\
    B_{\bc\be} &:= \CL{b_{\bc\bd}\circexp b_{\bd\be}} \cap \paren{v_{\bc1}Q_{\bc1} \quanteq v_{\be2}Q_{\be2}}
  \end{align*}
  With this notation, the qRHL judgments in the premise of
  \ruleref{Trans} are $\rhl{A_{\bc\bd}}\bc\bd{B_{\bc\bd}}$
  and $\rhl{A_{\bd\be}}\bd\be{B_{\bd\be}}$, and we need to show the conclusion
  $\rhl{A_{\bc\be}}\bc\be{B_{\bc\be}}$.

  \medskip
  
  Furthermore, for $i=1,2$ we abbreviate
  $U_{\idx i}:= \Urename{\restrictno{\idx i}{\qu V}}$,
  i.e., $U_{\idx i}$
  is the canonical isomorphism between $\elltwov{\qu V}$
  and $\elltwov{\qu{V_i}}$ (where $V$ is clear from the context).
  Let $\calE_{\idx i}:=\Erename{\idx i}$,
  i.e., the cq-superoperator $\calE_{\idx i}$
  is the canonical isomorphism between $\traceposcq{V}$
  and $\traceposcq{V_i}$.

\medskip

Furthermore, let 
\begin{align*}
  B_{\bc\bd}^Q &:= {\CL{b_{\bc\bd}} \cap \paren{v_{\bc1}Q_{\bc1} \quanteq v_{\bd2}R_{\bd2}}} \subseteq \elltwov{\cl{V_1}\cl{V_2}Q_{\bc1}R_{\bd2}}  \\
  B_{\bd\be}^Q &:= {\CL{b_{\bd\be}} \cap \paren{v_{\bd1}R_{\bd1} \quanteq v_{\be2}Q_{\be2}}} \subseteq \elltwov{\cl{V_1}\cl{V_2}R_{\bd1}Q_{\be2}}
\end{align*}
That is,
$B_{\bc\bd}=B_{\bc\bd}^Q\otimes\elltwov{\qu{V_1}\qu{V_2}\setminus Q_{\bc1}R_{\bd2}}$
and $B_{\bd\be}=B_{\bd\be}^Q\otimes\elltwov{\qu{V_1}\qu{V_2}\setminus R_{\bd1}Q_{\be2}}$
(In other words, $B_{\bc\bd}$
and $B_{\bc\bd}^Q$
are basically the same predicate, but the latter is defined on a
subset of variables.)

Define $R_{\bc\bd}^A, R_{\bd\be}^A, R_{\bc\bd}^B,  R_{\bd\be}^B\subseteq \types{\cl V}\times\types{\cl V}$ to be the
relations on memories encoded by the expressions $a_{\bc\bd},a_{\bd\be},b_{\bc\bd},b_{\bd\be}$. Formally:
\begin{align*}
  (m_{\bc},m_{\bd})\in R^A_{\bc\bd}
  &:\!\!\iff
    \denotee{a_{\bc\bd}}{m_{\bc}\circ\idx1^{-1}\,m_{\bd}\circ\idx2^{-1}}=\true
  \\
  (m_{\bd},m_{\be})\in R^A_{\bd\be}
  &:\!\!\iff
  \denotee{a_{\bd\be}}{m_{\bd}\circ\idx1^{-1}\,m_{\be}\circ\idx2^{-1}}=\true
  \\
  (m_{\bc},m_{\bd})\in R^B_{\bc\bd}
  &:\!\!\iff
    \denotee{b_{\bc\bd}}{m_{\bc}\circ\idx1^{-1}\,m_{\bd}\circ\idx2^{-1}}=\true
  \\
  (m_{\bd},m_{\be})\in R^B_{\bd\be}
  &:\!\!\iff
  \denotee{b_{\bd\be}}{m_{\bd}\circ\idx1^{-1}\,m_{\be}\circ\idx2^{-1}}=\true
\end{align*}

% Let $R^B_{\bc\bd}\subseteq\types{\cl V}\times\types{\cl V}$
% be the set of all $(m_{\bc},m_{\bd})$
% such that
% $\denotee{b_{\bc\bd}}{m_{\bc}\circ\idx1^{-1}\,m_{\bd}\circ\idx2^{-1}}=\true$.

% And let $R^B_{\bd\be}\subseteq\types{\cl V}\times\types{\cl V}$
% be the set of all $(m_{\bd},m_{\be})$
% such that
% $\denotee{b_{\bd\be}}{m_{\bd}\circ\idx1^{-1}\,m_{\be}\circ\idx2^{-1}}=\true$.

Then in particular, $R^A_{\bc\bd}\circrel R^A_{\bd\be}$
and $R^B_{\bc\bd}\circrel R^B_{\bd\be}$
are the relations encoded by the expressions $a_{\bc\bd}\circexp a_{\bd\be}$
and $b_{\bc\bd}\circexp b_{\bd\be}$. Formally:
\begin{align}
  (m_{\bc},m_{\be}) \in  R^A_{\bc\bd}\circrel R^A_{\bd\be}
  &\iff
    \denotee{a_{\bc\bd}\circexp
    a_{\bd\be}}{m_{\bc}\circ\idx1^{-1}\,m_{\be}\circ\idx2^{-1}}=\true
    \label{eq:RAce}
  \\
  (m_{\bc},m_{\be}) \in  R^B_{\bc\bd}\circrel R^B_{\bd\be}
  &\iff
    \denotee{b_{\bc\bd}\circexp
    b_{\bd\be}}{m_{\bc}\circ\idx1^{-1}\,m_{\be}\circ\idx2^{-1}}=\true
    \notag
    \label{eq:RBce}
\end{align}
(This is immediate from the definition of $\circexp$, \autopageref{page:def:circexp}.)

% In particular, $R^B_{\bc\bd}\circrel R^B_{\bd\be}$
% is the set of all $(m_{\bc},m_{\be})$
% such that
% $\denotee{b_{\bc\bd}\circexp
%   b_{\bd\be}}{m_{\bc}\circ\idx1^{-1}\,m_{\be}\circ\idx2^{-1}}=\true$
% by definition of $\circexp$.

% Let $R^A_{\bc\bd}\subseteq\types{\cl V}\times\types{\cl V}$
% be the set of all $(m_{\bc},m_{\bd})$
% such that
% $\denotee{a_{\bc\bd}}{m_{\bc}\circ\idx1^{-1}\,m_{\bd}\circ\idx2^{-1}}=\true$.

% And let $R^A_{\bd\be}\subseteq\types{\cl V}\times\types{\cl V}$
% be the set of all $(m_{\bd},m_{\be})$
% such that
% $\denotee{a_{\bd\be}}{m_{\bd}\circ\idx1^{-1}\,m_{\be}\circ\idx2^{-1}}=\true$.

% In particular, $R^A_{\bc\bd}\circrel R^A_{\bd\be}$
% is the set of all $(m_{\bc},m_{\be})$
% such that
% $\denotee{a_{\bc\bd}\circexp
%   a_{\bd\be}}{m_{\bc}\circ\idx1^{-1}\,m_{\be}\circ\idx2^{-1}}=\true$
% by definition of $\circexp$.

\medskip

   \begin{claim}\label{claim:pure.index}
     To show $\rhl{A_{\bc\be}}\bc\bd{B_{\bc\be}}$,
     it is sufficient to show that for all $m_{\bc}^0\in\types{\cl{V}}$,
     $m_{\be}^0\in\types{\cl{V}}$
     and all normalized $\psi_{\bc},\psi_{\be}\in\elltwov{\qu{V}}$,
     such that
     $U_{\idx1}\psi_{\bc}\tensor U_{\idx2}\psi_{\be}\in \denotee
     {A_{\bc\be}}{\memuni{m_{\bc}^0\circ\idx1^{-1}\, m_{\be}^0\circ\idx2^{-1}}}$,
     there is a
     separable
     $\rho'_{\bc\be}\in\traceposcq{V_1V_2}$ such that:
     \begin{gather}
       \rho'_{\bc\be}\text{   satisfies }B_{\bc\be}, \notag\\
       \partr{V_1}{V_2}\rho'_{\bc\be}=\calE_{\idx 1}(\rho'_{\bc})\text{ for }\rho'_{\bc}
       :=\denotc{\bc}\pb\paren{\pointstate{\cl{V}}{m_{\bc}^0}\tensor\proj{\psi_{\bc}}}, \label{def:rho'c} \\
       \partr{V_2}{V_1}\rho'_{\bc\be}=\calE_{\idx 2}(\rho'_{\be})\text{ for }\rho'_{\be}:=\denotc{\be}\pb\paren{\pointstate{\cl{V}}{m_{\be}^0}\tensor\proj{\psi_{\be}}}. \label{def:rho'e}
     \end{gather}
   \end{claim}

   \begin{claimproof}
     This is a simple corollary of \autoref{lemma:pure}, with $A:=A_{\bc\be}$ and $B:=B_{\bc\be}$:

     In
     \autoref{lemma:pure} we quantify over all
     $m_1\in\types{\cl{V_1}}$,
     $m_2\in\types{\cl{V_2}}$,
     here we quantify over all $m_{\bc}^0,m_{\be}^0\in\types{\cl{V}}$
     instead and use the fact that
     $m_{\bc}^0\mapsto m_{\bc}^0\circ\idx1^{-1}$
     and $m_{\bd}^0\mapsto m_{\bd}^0\circ\idx2^{-1}$
     are bijections from $\types{\cl V}$
     to $\types{\cl{V_1}}$
     and $\types{\cl{V_2}}$, respectively.
     We thus replace occurences of $m_1,m_2$ in \autoref{lemma:pure}
     by $ m_{\bc}^0\circ\idx1^{-1}$ and $ m_{\bd}^0\circ\idx2^{-1}$.

     Similarly, we quantify over $\psi_{\bc},\psi_{\bd}\in\elltwo{\qu V}$
     instead of $\psi_1\in\elltwov{\qu{V_1}}$,
     $\psi_2\in\elltwov{\qu{V_2}}$,
     and replace $\psi_1,\psi_2$
     by $U_{\idx1}\psi_{\bc},U_{\idx2}\psi_{\bd}$
     (because $U_{\idx1},U_{\idx2}$
     is unitary and thus an isomorphism).

     And we rename $\rho'$ as $\rho'_{\bc\bd}$.
     
     These substitution in \autoref{lemma:pure} already almost give us
     the present claim, except that the last two lines become:
     \begin{align}
       &\partr{V_1}{V_2}\rho'_{\bc\be}=\denotc{\idx 1\bc}\pb\paren{\pointstate{\cl{V_1}}{m_{\bc}^0\circ\idx1^{-1}}\tensor\proj{U_{\idx1}\psi_{\bc}}}, \label{eq:rho'ce.1}
         \\
       &\partr{V_2}{V_1}\rho'_{\bc\be}=\denotc{\idx 2\be}\pb\paren{\pointstate{\cl{V_2}}{m_{\be}^0\circ\idx2^{-1}}\tensor\proj{U_{\idx2}\psi_{\be}}}. \label{eq:rho'ce.2}
     \end{align}
     Since $\basis{\cl{V_1}}{m_{\bc}^0\circ\idx1^{-1}}=U_{\idx1}\basis{\cl V}{m_{\bc}^0}$, we have
     \begin{align*}
       &\denotc{\idx 1\bc}\pb\paren{\pointstate{\cl{V_1}}{m_{\bc}^0\circ\idx1^{-1}}\tensor\proj{U_{\idx1}\psi_{\bc}}}
         \\
       &=
         \denotc{\idx 1\bc}\pb\paren{\proj{U_{\idx1}\basis{\cl{V}}{m_{\bc}^0}}\tensor\proj{U_{\idx1}\psi_{\bc}}}
         \\
       &=
         \denotc{\idx 1\bc}\pb\paren{\calE_{\idx1}\paren{\proj{\basis{\cl{V}}{m_{\bc}^0}}\tensor \psi_{\bc}}}
         \\
       &=
       \calE_{\idx1}\paren{\denotc{\bc}\pb\paren{\proj{\basis{\cl{V}}{m_{\bc}^0}}\tensor \psi_{\bc}}}.
     \end{align*}
     Thus \eqref{eq:rho'ce.1} becomes \eqref{def:rho'c}. Analogously,
     \eqref{eq:rho'ce.2} becomes \eqref{def:rho'e}.
   \end{claimproof}

  Thus, fix some $m_{\bc}^0,m_{\be}^0,\psi_{\bc},\psi_{\be}$
  with the properties from \autoref{claim:pure.index}.  We need to
  show the existence of $\rho'_{\bc\be}$
  with the properties from \autoref{claim:pure.index}.

  Let $U_{\bc}:=\denotee{u_{\bc}}{m_{\bc}^0}=\denotee{u_{\bc1}}{m_{\bc}^0\circ\idx1^{-1}}\in\iso{\typel{Q_{\bc}},Z}$
  and $U_{\be}:=\denotee{u_{\be}}{m_{\be}^0}=\denotee{u_{\be2}}{m_{\be}^0\circ\idx2^{-1}}\in\iso{\typel{Q_{\be}},Z}$.

  \begin{claim}\label{claim:psi.Q}
    $(m_{\bc}^0,m_{\be}^0)\in R_{\bc\bd}^A\circrel R_{\bd\be}^A$.
    And $\psi_{\bc}$
    and $\psi_{\be}$
    are of the form $\psi_{\bc}=\psi_{\bc Q}\otimes\psi_{\bc\bullet}$
    and $\psi_{\be}=\psi_{\be Q}\otimes\psi_{\be\bullet}$
    for some normalized $\psi_{\bc Q}\in\elltwov{Q_{\bc}}$,
    $\psi_{\be Q}\in\elltwov{Q_{\be}}$,
    $\psi_{\bc\bullet}\in\elltwov{\qu V\setminus Q_{\bc}}$,
    $\psi_{\be\bullet}\in\elltwov{\qu V\setminus Q_{\be}}$.
    And
    $ U_{\bc}\adj{\Uvarnames{Q_{\bc}}}\psi_{\bc Q} =
    U_{\be}\adj{\Uvarnames{Q_{\be}}}\psi_{\be Q} $.
    \end{claim}

  \begin{claimproof}
    By choice of $m_{\bc}^0,m_{\be}^0,\psi_{\bc},\psi_{\be}$
    (after \autoref{claim:pure.index}) we have that
    \begin{equation}
      \label{eq:psic.psie.Ace}
      U_{\idx1}\psi_{\bc}\tensor U_{\idx2}\psi_{\be}\in \denotee
      {A_{\bc\be}}{\memuni{m_{\bc}^0\circ\idx1^{-1}\, m_{\be}^0\circ\idx2^{-1}}},
    \end{equation}
    and $\psi_{\bc},\psi_{\be}$ are normalized. This implies 
    $\denotee{A_{\bc\be}}{\memuni{m_{\bc}^0\circ\idx1^{-1}\, m_{\be}^0\circ\idx2^{-1}}}\neq0$, and
    by definition of $A_{\bc\be}$,
    $\denotee{\CL{a_{\bc\bd}\circexp a_{\bd\be}}}{\memuni{m_{\bc}^0\circ\idx1^{-1}\, m_{\be}^0\circ\idx2^{-1}}}=\true$.
    Hence $(m_{\bc}^0,m_{\be}^0)\in R^A_{\bc\bd}\circrel R^A_{\bd\be}$ by \eqref{eq:RAce}.

    By definition of $A_{\bc\be}$, \eqref{eq:psic.psie.Ace} also implies
    \begin{equation*}
      U_{\idx1}\psi_{\bc}\tensor U_{\idx2}\psi_{\be}
      \in \denotee
      {u_{\bc1}Q_{\bc1}\quanteq u_{\be2}Q_{\be2}}
      {\memuni{m_{\bc}^0\circ\idx1^{-1}\, m_{\be}^0\circ\idx2^{-1}}}
      =
      (U_{\bc}Q_{\bc1}\quanteq U_{\be}Q_{\be2}).
    \end{equation*}
    By \autoref{lemma:quanteq} this implies that there exist
    normalized $\psi_1^Q\in\elltwov{Q_{\bc1}}$,
    $\psi_1^Y\in\elltwov{\qu{V_1}\setminus Q_{\bc1}}$,
    $\psi_2^Q\in\elltwov{Q_{\be2}}$,
    $\psi_2^Y\in\elltwov{\qu{V_2}\setminus Q_{\be2}}$
    such that
    $U_{\bc}\adj{\Uvarnames{Q_{\bc1}}}\psi^Q_1=
    U_{\be}\adj{\Uvarnames{Q_{\be2}}}\psi^Q_2$ and
    $U_{\idx1}\psi_{\bc}=\psi^Q_1\tensor\psi^Y_1$
    and $U_{\idx2}\psi_{\be}=\psi^Q_2\tensor\psi^Y_2$.
    
    Let $\psi_{\bc Q}:=U_{\idx1}^{-1}\psi_1^Q$,
    $\psi_{\bc\bullet}:=U_{\idx1}^{-1}\psi_1^Y$,
    $\psi_{\be Q}:=U_{\idx2}^{-1}\psi_2^Q$,
    $\psi_{\be\bullet}:=U_{\idx2}^{-1}\psi_2^Y$.
    Then
    $\psi_{\bc}=U_{\idx1}^{-1}\paren{\psi^Q_1\tensor\psi^Y_1}=\psi_{\bc Q}
    \otimes\psi_{\bc\bullet}$. And analogously
    $\psi_{\be}= \psi_{\be Q} \otimes\psi_{\be\bullet}$.

    And finally,
    \begin{multline*}
      U_{\bc}\adj{\Uvarnames{Q_{\bc}}}\psi_{\bc Q}
      =
      U_{\bc}\adj{\Uvarnames{Q_{\bc}}}U_{\idx1}^{-1}\psi_1^Q
      =
      U_{\bc}\adj{\Uvarnames{Q_{\bc1}}}\psi_1^Q
      \\
      =
      U_{\be}\adj{\Uvarnames{Q_{\be2}}}\psi_2^Q
      =
      U_{\be}\adj{\Uvarnames{Q_{\be}}}U_{\idx2}^{-1}\psi_2^Q
      =
      U_{\be}\adj{\Uvarnames{Q_{\be}}}\psi_{\be Q}
      \mathQED
    \end{multline*}
  \end{claimproof}

  Let
  \begin{gather}
    \label{eq:def:rho''c}
    \rho''_{\bc}:=\denotc\bc\pb\paren{\proj{\basis{\cl V}{m_{\bc}^0}}\otimes\proj{\psi_{\bc Q}}}
    \in\traceposcq{\cl VQ_{\bc}} \\
    \llap{\text{and}\quad}
    \rho''_{\be}:=\denotc\be\pb\paren{\proj{\basis{\cl V}{m_{\be}^0}}\otimes\proj{\psi_{\be Q}}}
    \in\traceposcq{\cl VQ_{\be}}
    \notag
  \end{gather}
  (It is well-defined to apply $\bc$ and $\be$ to
  the smaller sets of quantum variables $Q_{\bc}$ and $Q_{\be}$ since
  $\qu{\fv(\bc)}\subseteq Q_{\bc}$ and
  $\qu{\fv(\be)}\subseteq Q_{\be}$ by assumption.)
  
  Then
  \begin{equation}
    \rho'_{\bc} = \rho''_{\bc}\otimes\proj{\psi_{\bc\bullet}}
    \quad\text{and}\quad
    \rho'_{\be} = \rho''_{\be}\otimes\proj{\psi_{\be\bullet}}
    \label{eq:rho'''}
  \end{equation}
  by \autoref{claim:psi.Q} and \eqref{def:rho'c},\eqref{def:rho'e}.

  (In the following, operators $\rho''_{\dots}$
  will always be variants of operators $\rho'_{\dots}$
  that live on a smaller set of quantum variables.)
  
  \begin{claim}\label{claim:ex.rho.cd.de}
    There are
    \begin{align*}
      &\text{separable operators }
      \rho_{\bc\bd}''\in\traceposcq{\cl{V_1}\cl{V_2}Q_{\bc1}R_{\bd2}}
      \text{ and }
      \rho_{\bd\be}''\in\traceposcq{\cl{V_1}\cl{V_2}R_{\bd1}Q_{\be2}}, \\
      &\text{and an operator }\rho_{\bd}''\in\traceposcq{\cl VR_{\bd}},
%      &\text{vectors }\psi''_{\bc\bd}\in\elltwov{\qu{V_1}\qu{V_2}V_1V_2\setminus Q_{\bc1}R_{\bd2}}
%      \text{ and }
%      \psi''_{\bd\be}\in\elltwov{\qu{V_1}\qu{V_2}\setminus Q_{\bd1}Q_{\be2}}
    \end{align*}
    such that 
    \begin{align*}
      &\rho''_{\bc\bd} \text{ satisfies }B^Q_{\bc\bd}
      &&\partr{\cl{V_1}Q_{\bc1}}{\cl{V_2}R_{\bd2}}\rho''_{\bc\bd} = \calE_{\idx1}(\rho''_{\bc}) 
      &&\partr{\cl{V_1}R_{\bd1}}{\cl{V_2}Q_{\be2}}\rho''_{\bd\be} = \calE_{\idx1}(\rho''_{\bd}) 
      \\
      &\rho''_{\bd\be} \text{ satisfies }B^Q_{\bd\be} 
      &&\partr{\cl{V_2}R_{\bd2}}{\cl{V_1}Q_{\bc1}}\rho''_{\bc\bd} = \calE_{\idx2}(\rho''_{\bd}) 
      &&\partr{\cl{V_2}Q_{\be2}}{\cl{V_1}R_{\bd1}}\rho''_{\bd\be} = \calE_{\idx2}(\rho''_{\be})
    \end{align*}
  \end{claim}

  \begin{claimproof}
    Since $(m_{\bc}^0,m_{\be}^0)\in R_{\bc\bd}^A\circrel R_{\bd\be}^A$
    by \autoref{claim:psi.Q}, we can choose an
    $m_{\bd}^0\in\types{\cl{V}}$
    such that $(m_{\bc}^0,m_{\bd}^0)\in R^A_{\bc\bd}$
    and $(m_{\bd}^0,m_{\be}^0)\in R^A_{\bd\be}$.

    Also by \autoref{claim:psi.Q}, we have
    \begin{equation*}
      \psi := U_{\bc}\adj{\Uvarnames{Q_{\bc}}}\psi_{\bc Q} = U_{\be}\adj{\Uvarnames{Q_{\be}}}\psi_{\be Q}
      \in \elltwo Z.
    \end{equation*}

    Let  $U_{\bd}:=\denotee{u_{\bd}}{m_{\bd}^0}=\denotee{u_{\bd1}}{m_{\bd}^0\circ\idx1^{-1}}=\denotee{u_{\bd2}}{m_{\bd}^0\circ\idx2^{-1}}\in\uni{\typel{Q_{\bd}},Z}$.
    Fix an arbitrary normalized $\psi_{\bd\bullet}\in\elltwov{\qu V\setminus Q_{\bd}}$.
    Let $\psi_{\bd Q}:=\Uvarnames{Q_{\bd}}\adj{U_{\bd}}\psi\in\elltwov{Q_{\bd}}$.
    Let $\psi_{\bd} := \psi_{\bd Q}\otimes\psi_{\bd\bullet}\in\elltwov{\qu V}$.
    Note that $\psi_{\bd Q}$ is normalized since $\psi$ is and $U_{\bd}$
    is unitary.
    Then
    \begin{equation*}
      U_{\bc}\adj{\Uvarnames{Q_{\bc1}}}U_{\idx1}\psi_{\bc Q}
      =
      U_{\bc}\adj{\Uvarnames{Q_{\bc}}}\psi_{\bc Q}
      =
      \psi
      =
      U_{\bd}\adj{\Uvarnames{Q_{\bd}}}\psi_{\bd Q}
      =
      U_{\bd}\adj{\Uvarnames{Q_{\bd2}}}U_{\idx2}\psi_{\bd Q}
    \end{equation*}
    and hence by \autoref{lemma:quanteq} (converse direction),
    \begin{multline*}
      U_{\idx1}\psi_{\bc}\otimes U_{\idx2}\psi_{\bd}
      =
      U_{\idx1}\psi_{\bc Q}\otimes U_{\idx2}\psi_{\bd Q}
      \otimes U_{\idx1}\psi_{\bc\bullet}\otimes U_{\idx2}\psi_{\bd\bullet}
      \\
      \in (U_{\bc}Q_{\bc1}\quanteq U_{\bd}Q_{\bd2})
      =
      \denotee{u_{\bc1}Q_{\bc1}\quanteq u_{\bd2}Q_{\bd2}}
      {\memuni{m_{\bc}^0\circ\idx1^{-1}\, m_{\bd}^0\circ\idx2^{-1}}}.
%      \label{eq:psic.psid}
    \end{multline*}
    Analogously,
    \begin{equation*}
      U_{\idx1}\psi_{\bd}\otimes U_{\idx2}\psi_{\be}
      =
      \denotee{u_{\bd1}Q_{\bd1}\quanteq u_{\be2}Q_{\be2}}
      {\memuni{m_{\bd}^0\circ\idx1^{-1}\, m_{\be}^0\circ\idx2^{-1}}}.
    \end{equation*}
    This implies
    \begin{align}
      &\rho_{\bc\bd}^0 :=
      \proj{U_{\idx1}\basis{\cl V}{m_{\bc}^0}}\otimes
      \proj{U_{\idx2}\basis{\cl V}{m_{\bd}^0}}\otimes
      U_{\idx1}\psi_{\bc}\otimes
      U_{\idx2}\psi_{\bd}
      \notag\\
      &\hskip2.7in\text{satisfies }
      (u_{\bc1}Q_{\bc1}\quanteq u_{\bd2}Q_{\bd2}) \label{def:rho.cd0}\\
      &\rho_{\bd\be}^0 :=
      \proj{U_{\idx1}\basis{\cl V}{m_{\bd}^0}}\otimes
      \proj{U_{\idx2}\basis{\cl V}{m_{\be}^0}}\otimes
      U_{\idx1}\psi_{\bd}\otimes
      U_{\idx2}\psi_{\be} \notag\\
      &\hskip2.7in\text{satisfies }
        \paren{u_{\bd1}Q_{\bd1}\quanteq u_{\be2}Q_{\be2}}
        \notag
    \end{align}

    Since $(m^0_{\bc},m^0_{\bd})\in R_{\bc\bd}^A$, we also have that $\rho_{\bc\bd}^0$ satisfies $\CL{a_{\bc\bd}}$. Thus $\rho_{\bc\bd}^0$ satisfies $A_{\bc\bd}$.
    Analogously $\rho_{\bd\be}^0$ satisfies $A_{\bd\be}$.

    Let $\rho_{\bd}':=\denotc\bd\pb\paren{\proj{\basis{\cl V}{m_{\bd}^0}\otimes\psi_{\bd}}}$.

    Since $\rho_{\bd\be}^0$
    satisfies $A_{\bc\bd}$,
    and since $\rhl{A_{\bc\bd}}\bc\bd{B_{\bc\bd}}$,
    there exists a separable $\rho_{\bc\bd}'\in\traceposcq{V_1V_2}$
    satisfying $B_{\bc\bd}$
    such that
    $\partr{V_1}{V_2}\rho'_{\bc\bd}=\denotc{\idx1\bc}\paren{\partr{V_1}{V_2}\rho_{\bc\bd}^0}$
    and
    $\partr{V_2}{V_1}\rho'_{\bc\bd}=\denotc{\idx2\bd}\paren{\partr{V_2}{V_1}\rho_{\bc\bd}^0}$.
    Furthermore, we have
    \begin{equation*}
      \calE_{\idx1}(\rho'_{\bc})
      \eqrefrel{def:rho'c}=
      \denotc{\idx1\bc}\paren{
        \proj{U_{\idx1}\basis{\cl V}{m_{\bc}^0}\otimes U_{\idx1}\psi_{\bc}}
      }
      \eqrefrel{def:rho.cd0}=
      \denotc{\idx1\bc}\paren{
        \partr{V_1}{V_2}\rho_{\bc\bd}^0
      }
      =
      \partr{V_1}{V_2}\rho'_{\bc\bd}
    \end{equation*}
    and
    \begin{equation*}
      \calE_{\idx2}(\rho'_{\bd})
      =
      \denotc{\idx2\bd}\paren{
        \proj{U_{\idx2}\basis{\cl V}{m_{\bd}^0}\otimes U_{\idx2}\psi_{\bd}}
      }
      \eqrefrel{def:rho.cd0}=
      \denotc{\idx2\bd}\paren{
        \partr{V_2}{V_1}\rho_{\bc\bd}^0
      }
      =
      \partr{V_2}{V_1}\rho'_{\bc\bd}.
    \end{equation*}
    That is,
    \begin{equation}\label{tr.rho'cd}
      \partr{V_1}{V_2}\rho'_{\bc\bd} = \calE_{\idx1}(\rho'_{\bc})
      \quad\text{and}\quad
      \partr{V_2}{V_1}\rho'_{\bc\bd} = \calE_{\idx2}(\rho'_{\bd}).
    \end{equation}
    Analogously,
    \begin{equation*}
      \partr{V_1}{V_2}\rho'_{\bd\be} = \calE_{\idx1}(\rho'_{\bd})
      \quad\text{and}\quad
      \partr{V_2}{V_1}\rho'_{\bd\be} = \calE_{\idx2}(\rho'_{\be}).
    \end{equation*}
    % Since $Q_{\bc}\supseteq\qu{\fv(\bc)}$, $\bc$ is $\cl VQ_{\bc}$-local. Since $\psi_{\bc}=\psi_{\bc Q}\otimes\psi_{\bc\bullet}$ with $\psi_{\bc\bullet}\in$

    Let
    $\rho''_{\bc\bd}:=\partr{\cl{V_1}\cl{V_2}Q_{\bc1}R_{\bd2}}{\qu{V_1}\qu{V_2}\setminus Q_{\bc1}R_{\bd2}}\rho'_{\bc\bd}$
    and
    $\rho''_{\bd\be}:=\partr{\cl{V_1}\cl{V_2}R_{\bd1}Q_{\be2}}{\qu{V_1}\qu{V_2}\setminus R_{\bd1}Q_{\be2}}\rho'_{\bd\be}$.
    These are separable because $\rho'_{cd}$ and $\rho'_{de}$ are separable.
    Since
    $B_{\bc\bd}=B_{\bc\bd}^Q\otimes\elltwov{\qu{V_1}\qu{V_2}\setminus
      Q_{\bc1}R_{\bd2}}$ and $\rho'_{\bc\bd}$
    satisfies $B_{\bc\bd}$,
    we have that $\rho''_{\bc\bd}$
    satisfies $B_{\bc\bd}^Q$.
    Analogously, $\rho''_{\bd\be}$ satisfies $B_{\bd\be}^Q$.
    Furthermore, 
    \begin{multline*}
      \partr{\cl{V_1}Q_{\bc1}}{\cl{V_2}R_{\bd2}}\rho''_{\bc\bd} =
      \partr{\cl{V_1}Q_{\bc1}}{\cl{V_2}R_{\bd2}} \partr{\cl{V_1}\cl{V_2}Q_{\bc1}R_{\bd2}}{\qu{V_1}\qu{V_2}\setminus Q_{\bc1}R_{\bd2}}\rho'_{\bc\bd}
      =
      \partr{\cl{V_1}Q_{\bc1}}{\qu{V_1}\setminus Q_{\bc1}}
      \partr{V_1}{V_2}\rho'_{\bc\bd}
      \\
      \eqrefrel{tr.rho'cd}=
      \partr{\cl{V_1}Q_{\bc1}}{\qu{V_1}\setminus Q_{\bc1}}
      \calE_{\idx1}(\rho'_{\bc})
      =
      \calE_{\idx1}\pb\paren{\partr{\cl{V}Q_{\bc}}{\qu{V}\setminus Q_{\bc}} \rho'_{\bc}}
      \eqrefrel{eq:rho'''}=
      \calE_{\idx1}\pb\paren{\rho''_{\bc}}.
    \end{multline*}
    and analogously,
    $\partr{\cl{V_2}Q_{\be2}}{\cl{V_1}R_{\bd1}}\rho''_{\bd\be} =
    \calE_{\idx2}(\rho''_{\be})$.  And if we define
    $\rho''_{\bd}:=\partr{\cl{V}R_{\bd}}{\qu{V}\setminus R_{\bd}} \rho'_{\bd}$,
    we also get analogously
    $\partr{\cl{V_2}R_{\bd2}}{\cl{V_1}Q_{\bc1}}\rho''_{\bc\bd} =
    \calE_{\idx2}(\rho''_{\bd}) $ and
    $\partr{\cl{V_1}R_{\bd1}}{\cl{V_2}Q_{\be2}}\rho''_{\bd\be} =
    \calE_{\idx1}(\rho''_{\bd})$.
  \end{claimproof}

  Let $\rho''_{\bc\bd}$
  and $\rho''_{\bd\be}$ be as in \autoref{claim:ex.rho.cd.de}. Then we can write
  \begin{equation}
    \rho_{\bc\bd}''=\sum_{m_{\bc}m_{\bd}}\mu_{\bc\bd}(m_{\bc},m_{\bd})\,
    \proj{U_{\idx 1}\basis{\cl{V}}{m_{\bc}}\otimes U_{\idx 2}\basis{\cl V}{m_{\bd}}}
    \otimes\hat\rho_{\bc\bd,m_{\bc}m_{\bd}}
    \label{eq:rho.cd.sum}
  \end{equation}  
  for some $\mu_{\bc\bd}\in\ellone{\typel{\cl{V}} \times \typel{\cl{V}}}$
  and some separable operators $\hat\rho_{\bc\bd,m_{\bc}m_{\bd}}\in\traceposcq{Q_{\bc1}R_{\bd2}}$
  with $\tr\hat\rho_{\bc\bd,m_{\bc}m_{\bd}}=1$.
  (Here $m_{\bc},m_{\bd}\in \types{\cl{V}}$.)
  And since $\rho_{\bc\bd}''$
  satisfies $B_{\bc\bd}^Q\subseteq (v_{\bc1}Q_{\bc1}\quanteq v_{\bd2}R_{\bd2}) $,
  it follows that
  \begin{equation}
    \suppo\hat\rho_{\bc\bd,m_{\bc}m_{\bd}} \subseteq(V_{\bc,m_{\bc}}Q_{\bc1}\quanteq
    V_{\bd,m_{\bd}}R_{\bd2})
    \label{eq:supp.hat.rho.cd}
  \end{equation}
  for all $(m_{\bc},m_{\bd})\in\suppd\mu_{\bc\bd}$
  where $V_{\bc,m_{\bc}}:=\denotee{v_{\bc}}{m_{\bc}}$
  and $V_{\bd,m_{\bd}}:=\denotee{v_{\bd}}{m_{\bd}}$.\pagelabel{page:vd.def}
  (Recall that $v_{\bc1}=\idx1 v_{\bc}$
  and $v_{\bd2}=\idx2v_{\bd}$.)
  And since  $\rho_{\bc\bd}''$
  satisfies $B_{\bc\bd}^Q\subseteq\CL{b_{\bc\bd}} $,
  we have
  $\mu_{\bc\bd}(m_{\bc},m_{\bd})\neq0$ only if
  $\denotee{b_{\bc\bd}}{m_{\bc}\circ\idx1^{-1}\,m_{\bd}\circ\idx2^{-1}}=\true$,
  i.e., we have $\suppd\mu_{\bc\bd}\subseteq  R^B_{\bc\bd}$.
  
  Analogously, we can write 
  \begin{equation}\label{eq:rho.de.sum}
    \rho_{\bd\be}''=\sum_{m_{\bd}m_{\be}}\mu_{\bd\be}(m_{\bd},m_{\be})
    \proj{U_{\idx 1}\,\
      basis{\cl{V}}{m_{\bd}}\otimes U_{\idx 2}\basis{\cl V}{m_{\be}}}\otimes\hat\rho_{\bd\be,m_{\bd}m_{\be}}
  \end{equation}
  for some $\mu_{\bd\be}\in\ellone{\typel{\cl{V}} \times \typel{\cl{V}}}$
  and some separable operators $\hat\rho_{\bd\be,m_{\bd}m_{\be}}\in\traceposcq{R_{\bd1}Q_{\be2}}$
  with $\tr\hat\rho_{\bd\be,m_{\bd}m_{\be}}=1$
  and
  \begin{equation}
    \suppo\hat\rho_{\bd\be,m_{\bd}m_{\be}} \subseteq(V_{\bd,m_{\bd}}R_{\bd1}\quanteq
    V_{\be,m_{\be}}Q_{\be2})
    \label{eq:supp.hat.rho.de}
  \end{equation}
  for all $(m_{\bd},m_{\be})\in\suppd\mu_{\bd\be}$.
  Here  $V_{\be,m_{\be}}:=\denotee{v_{\be}}{m_{\be}}$ and $V_{\bd,m_{\be}}$ as above.
  And $\suppd\mu_{\bd\be}\subseteq  R^B_{\bd\be}$.

  \begin{claim}\label{claim:mu.cde}
    There is a distribution $\mu_{\bc\bd\be}\in\ellone{\typel{\cl{V}}\times\typel{\cl{V}}\times\typel{\cl{V}}}$ such that
    $\marginal{1,2}{\mu_{\bc\bd\be}}=\mu_{\bc\bd}$
    and $\marginal{2,3}{\mu_{\bc\bd\be}}=\mu_{\bd\be}$.

    Here $\marginal{1,2}{\mu_{\bc\bd\be}}$
    is the marginal distribution that projects onto the first two
    components, i.e., the result of summing $\mu_{\bc\bd\be}$ over $m_{\be}$.
    $\marginal{2,3}{\mu_{\bc\bd\be}}$ analogously.
  \end{claim}

  \begin{claimproof}
    We first show
    \begin{equation}
      \label{eq:same.marg}
      \marginal2{\mu_{\bc\bd}} = \marginal1{\mu_{\bd\be}} =: \mu_{\bd}.
    \end{equation}
    To show this, we calculate:
    \begin{align*}
      \hskip2cm & \hskip-2cm
                  \calE_{\idx2}\pB\paren{\pb\qlift{\marginal2{\mu_{\bc\bd}}}} \\
      &\starrel=
        \calE_{\idx2}\pB\paren{
        \sum_{m_{\bc},m_{\bd}}
        \mu_{\bc\bd}(m_{\bc},m_{\bd})\,
        \proj{\basis{\cl V}{m_{\bd}}}
        } \\
      &=
        \sum_{m_{\bc},m_{\bd}}
        \mu_{\bc\bd}(m_{\bc},m_{\bd})\,
        \pb\proj{U_{\idx2}\basis{\cl V}{m_{\bd}}} \\
      &\starstarrel=
        \partr{\cl{V_2}}{\cl{V_1}Q_{\bc1}R_{\bd2}}
        \sum_{m_{\bc},m_{\bd}}
        \mu_{\bc\bd}(m_{\bc},m_{\bd})\,
        \pb\proj{U_{\idx1}\basis{\cl V}{m_{\bc}}\otimes U_{\idx2}\basis{\cl V}{m_{\bd}}}
        \otimes
        \hat\rho_{\bc\bd,m_{\bc}m_{\bd}} \\
      &\eqrefrel{eq:rho.cd.sum}=
        \partr{\cl{V_2}}{\cl{V_1}Q_{\bc1}R_{\bd2}} \
        \rho_{\bc\bd}''
      =
        \partr{\cl{V_2}}{R_{\bd2}} \
        \partr{\cl{V_2}R_{\bd2}}{\cl{V_1}Q_{\bc1}} \
        \rho_{\bc\bd}'' 
      \tristarrel=
        \partr{\cl{V_2}}{R_{\bd2}} \
        \calE_{\idx2} ( \rho''_{\bd} )
      =
      \calE_{\idx2} \pb\paren{
        \partr{\cl{V}}{R_{\bd}}\,
      \rho''_{\bd} }.
    \end{align*}
    Here $(*)$
    is by definition of $\marginal1{\cdot}$
    and $\qlift{\cdot}$.
    (Recall: $\qlift{\cdot}$
    maps a distribution to a diagonal operator and was defined on
    \autopageref{page:def:qlift}.)  And $(**)$
    follows since $\tr\hat\rho_{\bc\bd,m_{\bc}m_{\bd}}=1$.
    And $(*\mathord**)$
    follows from the properties of $\rho''_{\bc\bd}$
    stated in \autoref{claim:ex.rho.cd.de}.

    Since $\calE_{\idx2}$ is injective, this implies
    $\pb\qlift{\marginal2{\mu_{\bc\bd}}} =
    \partr{\cl{V}}{R_{\bd}}\, \rho''_{\bd}$.
    Analogously, 
    we show
    $\pb\qlift{\marginal1{\mu_{\bd\be}}} =
    \partr{\cl{V}}{R_{\bd}}\, \rho''_{\bd}$.
    Thus $\pb\qlift{\marginal2{\mu_{\bc\bd}}} =
    \pb\qlift{\marginal1{\mu_{\bd\be}}}$. Since $\qlift{\cdot}$ is injective,
    \eqref{eq:same.marg} follows.

    \medskip

    Let
    \begin{equation*}
      \mu_{\bc\bd\be}(m_{\bc},m_{\bd},m_{\be}) :=\mu_{\bc\bd}(m_{\bc},m_{\bd})\mu_{\bd\be}(m_{\bd},m_{\be})/\mu_{\bd}(m_{\bd})
%      \mu_{\bc\be} &:= \marginal{1,3}{\mu_{\bc\bd\be}}.
    \end{equation*}
    
    It is then easy to see (using $\eqref{eq:same.marg}$)
    that $\marginal{1,2}{\mu_{\bc\bd\be}}=\mu_{\bc\bd}$
    and $\marginal{2,3}{\mu_{\bc\bd\be}}=\mu_{\bd\be}$.
  \end{claimproof}

  \begin{claim}\label{claim:fc.fe}
    There are functions $f_{\bc},f_{\be}$ such that $(m_{\bc},m_{\bd})\in\suppd\mu_{\bc\bd}\implies m_{\bc}=f_{\bc}(m_{\bd})$
    and  $(m_{\bd},m_{\be})\in\suppd\mu_{\bd\be}\implies m_{\be}=f_{\be}(m_{\bd})$.
  \end{claim}

  \begin{claimproof}
    By assumption (of \ruleref{Trans}), we have
    that $b_{\bc\bd} \implies (\xx^{(1)}_1,\dots,\xx^{(n)}_1)=\idx2 e_{\bc}$ holds.
    for some expression $e_{\bc}$ and $\{\xx^{(1)},\dots,\xx^{(n)}\}=\cl{\fv(\bc)}$.
    For any $(m_{\bc},m_{\bd})\in R_{\bc\bd}^B$ we have
    $\denotee{b_{\bc\bd}}{m_{\bc}\circ\idx1^{-1}\,m_{\bd}\circ\idx2^{-1}}=\true$ (by definition of $R_{\bc\bd}^B$).
    Thus for  $(m_{\bc},m_{\bd})\in R_{\bc\bd}^B$, we have
    \begin{multline*}
      \pb\paren{m_{\bc}(\xx^{(1)}),\dots,m_{\bc}(\xx^{(1)})}
      =
      \denotee{(\xx^{(1)}_1,\dots,\xx^{(n)}_1)}{m_{\bc}\circ\idx1^{-1}\,m_{\bd}\circ\idx2^{-1}}
      \\
      \starrel=
      \denotee{\idx2e_{\bc}}{m_{\bc}\circ\idx1^{-1}\,m_{\bd}\circ\idx2^{-1}}
      =
      \denotee{e_{\bc}}{m_{\bd}} =: f_{\bc}'(m_{\bd})
    \end{multline*}
    Here $(*)$
    follows since
    $b_{\bc\bd} \implies (\xx^{(1)}_1,\dots,\xx^{(n)}_1)=\idx2 e_{\bc}$
    holds.  Since $\{\xx^{(1)},\dots,\xx^{(n)}\}=\cl{\fv(\bc)}$,
    there is a function $f_{\bc}''$
    such that
    $f_{\bc}''\pb\paren{m_{\bc}(\xx^{(1)}),\dots,m_{\bc}(\xx^{(1)})}=\restrict{m_{\bc}}{\cl{\fv(\bc)}}$.
    Then $f_{\bc}''\circ f_{\bc}'(m_{\bd})=\restrict{m_{\bc}}{\cl{\fv(\bc)}}$ for any $(m_{\bc},m_{\bd})\in R^B_{\bc\bd}$.
    Since $\suppd\mu_{\bc\bd}\subseteq R_{\bc\bd}^B$ (discussion after \autoref{claim:ex.rho.cd.de}),
    we get:
    \begin{equation}
      \label{eq:fc''fc'}
      (m_{\bc},m_{\bd})\in \suppd\mu_{\bc\bd}
      \quad\text{implies}\quad
      f_{\bc}''\circ f_{\bc}'(m_{\bd})=\restrict{m_{\bc}}{\cl{\fv(\bc)}}.
    \end{equation}

    Let $m':=\restrict{m_{\bc}^0}{\cl V\setminus \cl{\fv(\bc)}}$.
    By definition of $\rho''_{\bc}$
    in \eqref{eq:def:rho''c} and the fact that $\bc$
    is $\cl{\fv(\bc)}$-local,
    we have that $\rho''_{\bc}$
    is of the form
    $\sum_{m_{\bc}}\lambda_{m_{\bc}}\proj{\basis{V}{m_{\bc}}}\cdot\rho_{m_{\bc}}$
    for some $\lambda_{m_{\bc}}\geq0$ and positive operators
    $\rho_{m_{\bc}}$
    where the sum ranges only over $m_{\bc}$
    with $\restrict{m_{\bc}}{\cl V\setminus \cl{\fv(\bc)}}=m'$.
    Since
    $\partr{\cl{V_1}Q_{\bc1}}{\cl{V_2}R_{\bd2}}\rho''_{\bc\bd} =
    \calE_{\idx1}(\rho''_{\bc})$ (\autoref{claim:ex.rho.cd.de}), this
    implies that the sum in \eqref{eq:rho.cd.sum} ranges only over
    $m_{\bc}$ with
    $\restrict{m_{\bc}}{\cl V\setminus \cl{\fv(\bc)}}=m'$. In other words,
    $\mu_{\bc\bd}(m_{\bc},m_{\bd})=0$ unless 
    $\restrict{m_{\bc}}{\cl V\setminus \cl{\fv(\bc)}}=m'$.
    Thus
    \begin{equation}
      \label{eq:m'}
      (m_{\bc},m_{\bd})\in\suppd\mu_{\bc\bd}
      \quad\text{implies}\quad
      \restrict{m_{\bc}}{\cl V\setminus \cl{\fv(\bc)}}=m'.
    \end{equation}
    
    Hence $(m_{\bc},m_{\bd})\in\suppd\mu_{\bc\bd}$ implies
    \begin{equation*}
      f_{\bc}(m_{\bd}) := m' \cup \pb\paren{f''_{\bc}\circ f'_{\bc}(m_{\bd})}
      \ \ \txtrel{\eqref{eq:fc''fc'},\eqref{eq:m'}}=\ \ 
      \restrict{m_{\bc}}{\cl{\fv(\bc)}} \cup
      \restrict{m_{\bc}}{\cl V\setminus \cl{\fv(\bc)}}
      =
      m_{\bc}.
    \end{equation*}
    This shows the existence of $f_{\bc}$ as required by the current claim.

    The existence of $f_{\be}$ is shown analogously.
  \end{claimproof}
  
  \begin{claim}\label{claim:hat.rho.d}
    Let $\mu_{\bc\bd\be}$
    be as in \autoref{claim:mu.cde}.  For every
    $(m_{\bc},m_{\bd},m_{\be})\in\suppd\mu_{\bc\bd\be}$,
    there is a $\hat\rho_{\bd,m_{\bc}m_{\bd}m_{\be}}\in\traceposcq{R_{\bd}}$
    such that
    $\partr{R_{\bd2}}{Q_{\bc1}}\hat\rho_{\bc\bd,m_{\bc}m_{\bd}}=\calE_{\idx2}(\hat\rho_{\bd,m_{\bc}m_{\bd}m_{\be}})$
    and
    $\partr{R_{\bd1}}{Q_{\be2}}\hat\rho_{\bd\be,m_{\bd}m_{\be}}=\calE_{\idx1}(\hat\rho_{\bd,m_{\bc}m_{\bd}m_{\be}})$.
  \end{claim}

  \begin{claimproof}
    Let $f_{\bc},f_{\be}$ be as in \autoref{claim:fc.fe}.
    Then
    \begin{multline*}
      \calE_{\idx2}(\rho''_{\bd})
      \ \ \txtrel{\autoref{claim:ex.rho.cd.de}}=\ \
      \partr{\cl{V_2}R_{\bd2}}{\cl{V_1}Q_{\bc1}} \rho_{\bc\bd}''
      \eqrefrel{eq:rho.cd.sum}=
        \sum_{m_{\bc}m_{\bd}}\mu_{\bc\bd}(m_{\bc},m_{\bd})\,
        \proj{U_{\idx 2}\basis{\cl V}{m_{\bd}}}
        \otimes \partr{R_{\bd2}}{Q_{\bc1}}\hat\rho_{\bc\bd,m_{\bc}m_{\bd}}
        \\
      \txtrel{\autoref{claim:fc.fe}}=\ \ 
        \sum_{m_{\bd}}\mu_{\bc\bd}(f_{\bc}(m_{\bd}),m_{\bd})\,
        \proj{U_{\idx 2}\basis{\cl V}{m_{\bd}}}
        \otimes \partr{R_{\bd2}}{Q_{\bc1}}\hat\rho_{\bc\bd,f_{\bc}(m_{\bd})m_{\bd}}
    \end{multline*}
    and analogously (using \eqref{eq:rho.de.sum} instead of \eqref{eq:rho.cd.sum}),
    \begin{align*}
      \calE_{\idx1}(\rho''_{\bd})
      =
      \sum_{m_{\bd}}\mu_{\bd\be}(m_{\bd},f_{\be}(m_{\bd}))\,
      \proj{U_{\idx 1}\basis{\cl V}{m_{\bd}}}
      \otimes \partr{R_{\bd1}}{Q_{\be2}}\hat\rho_{\bd\be,m_{\bd}f_{\be}(m_{\bd})}.
    \end{align*}
    Hence
    \begin{multline*}
      \sum_{m_{\bd}}\mu_{\bc\bd}(f_{\bc}(m_{\bd}),m_{\bd})\,
      \proj{\basis{\cl V}{m_{\bd}}}
      \otimes \calE_{\idx2}^{-1}(\partr{R_{\bd2}}{Q_{\bc1}}\hat\rho_{\bc\bd,f_{\bc}(m_{\bd})m_{\bd}})
      =
      \rho_{\bd}'' \\
      =
      \sum_{m_{\bd}}\mu_{\bd\be}(m_{\bd},f_{\be}(m_{\bd}))\,
      \proj{\basis{\cl V}{m_{\bd}}}
      \otimes \calE_{\idx1}^{-1}(\partr{R_{\bd1}}{Q_{\be2}}\hat\rho_{\bd\be,m_{\bd}f_{\be}(m_{\bd})}).
    \end{multline*}
    It follows that, whenever $\mu_{\bc\bd}(f_{\bc}(m_{\bd}),m_{\bd})\neq0$, we have 
    \begin{equation}\label{eq:rhocd=rhode}
      \hat\rho_{\bd,m_{\bc}m_{\bd}m_{\be}}
      :=
      \calE_{\idx2}^{-1}(\partr{R_{\bd2}}{Q_{\bc1}}\hat\rho_{\bc\bd,f_{\bc}(m_{\bd})m_{\bd}})
      =
      \calE_{\idx1}^{-1}(\partr{R_{\bd1}}{Q_{\be2}}\hat\rho_{\bd\be,m_{\bd}f_{\be}(m_{\bd})}).
    \end{equation}
    If $(m_{\bc},m_{\bd},m_{\be})\in\suppd\mu_{\bc\bd\be}$,
    by \autoref{claim:mu.cde}, $(m_{\bc},m_{\bd})\in\suppd\mu_{\bc\bd}$,
    and thus, by \autoref{claim:fc.fe},
    $(f_{\bc}(m_{\bd}),m_{\bd})\in\suppd\mu_{\bc\bd}$.
    Thus \eqref{eq:rhocd=rhode} holds whenever
    $(m_{\bc},m_{\bd},m_{\be})\in\suppd\mu_{\bc\bd\be}$. 

    Hence for every $(m_{\bc},m_{\bd},m_{\be})\in\suppd\mu_{\bc\bd\be}$,
    by \eqref{eq:rhocd=rhode}, we have that
    \begin{equation}
      \partr{R_{\bd2}}{Q_{\bc1}}\hat\rho_{\bc\bd,f_{\bc}(m_{\bd})m_{\bd}} = \calE_{\idx2}(\hat\rho_{\bd,m_{\bc}m_{\bd}m_{\be}})
      \quad\text{and}\quad
      \partr{R_{\bd1}}{Q_{\be2}}\hat\rho_{\bd\be,m_{\bd}f_{\be}(m_{\bd})} = \calE_{\idx1}(\hat\rho_{\bd,m_{\bc}m_{\bd}m_{\be}})
    \end{equation}
    and $m_{\bc}=f_{\bc}(m_{\bd})$,
    $m_{\be}=f_{\be}(m_{\be})$,
    which implies the properties required in the current claim.
  \end{claimproof}

  \begin{claim}\label{claim:join.rho.quant}
    Fix $V_{\bc}\in\iso{\typel{Q_{\bc}},Z}$, $V_{\bd}\in\boundedleq{\typel{R_{\bd}},Z}$,
    $V_{\be}\in\iso{\typel{Q_{\be}},Z}$.
    Fix a separable $\hat\rho_{\bc\bd}\in\traceposcq{Q_{\bc1}R_{\bd2}}$ and a separable
    $\hat\rho_{\bd\be}\in\traceposcq{R_{\bd1}Q_{\be2}}$ and $\hat\rho_{\bd}\in\traceposcq{R_{\bd}}$ such that
    $\suppo\hat\rho_{\bc\bd}\subseteq (V_{\bc}Q_{\bc1}\quanteq V_{\bd}R_{\bd2})$ and
    $\suppo\hat\rho_{\bd\be}\subseteq (V_{\bd}R_{\bd1}\quanteq V_{\be}Q_{\be2})$ and
    $\partr{R_{\bd2}}{Q_{\bc1}}\hat\rho_{\bc\bd}=
    \calE_{\idx2}(\hat\rho_{\bd})$ and
    $\partr{R_{\bd1}}{Q_{\be2}}\hat\rho_{\bd\be}=
    \calE_{\idx1}(\hat\rho_{\bd})$.
    
    Then there exists a separable $\hat\rho_{\bc\be}\in\traceposcq{Q_{\bc1}Q_{\be2}}$ such that
    \begin{equation*}
      \suppo\hat\rho_{\bc\be}\subseteq (V_{\bc}Q_{\bc1}\quanteq V_{\be}Q_{\be2}), \quad
      \partr{Q_{\bc1}}{Q_{\be2}}\hat\rho_{\bc\be}=\partr{Q_{\bc1}}{R_{\bd2}}\hat\rho_{\bc\bd}, \quad
      \partr{Q_{\be2}}{Q_{\bc1}}\hat\rho_{\bc\be}=\partr{Q_{\be2}}{R_{\bd1}}\hat\rho_{\bd\be}.
    \end{equation*}
  \end{claim}

  \begin{claimproof}
    Since $\hat\rho_{\bc\bd}$ is separable, we can write it as:
    \begin{equation}\label{eq:hat.rho.ce}
      \hat\rho_{\bc\bd} = \sum_i \lambda_i\, \proj{\psi_{\bc i}\otimes\psi_{\bd i}}
    \end{equation}
    with $\lambda_i\geq0$,
    $\sum\lambda_i<\infty$,
    and unit vectors $\psi_{\bc i}\in \elltwov{Q_{\bc1}}$
    and $\psi_{\bd i}\in\elltwov{R_{\bd2}}$.
    Since
    $\suppo\hat\rho_{\bc\bd}\subseteq (V_{\bc}Q_{\bc1}\quanteq V_{\bd}R_{\bd2})$,
    it follows that
    $\psi_{\bc i}\otimes\psi_{\bd i}\in (V_{\bc}Q_{\bc1}\quanteq V_{\bd}R_{\bd2})$.
    By \autoref{lemma:quanteq} (in the special case
    $\qu{V_1},Q_1:=Q_{\bc1}$
    and $\qu{V_2},Q_2:=R_{\bd2}$),
    this implies that
    \begin{equation}
      V_{\bc}\adj{\Uvarnames{Q_{\bc1}}}\psi_{\bc i} = V_{\bd}\adj{\Uvarnames{R_{\bd2}}}\psi_{\bd i}.
      \label{eq:vc.vd}
    \end{equation}
    Then
    \begin{multline}
      V_{\bc}\adj{\Uvarnames{Q_{\bc1}}}\,
      (\partr{Q_{\bc1}}{R_{\bd2}}\hat\rho_{\bc\bd})\,
      {\Uvarnames{Q_{\bc1}}}\adj{V_{\bc}}
      \eqrefrel{eq:hat.rho.ce}=
      \sum_i\lambda_i\,\proj{V_{\bc}\adj{\Uvarnames{Q_{\bc1}}}\psi_{\bc i}}
      \eqrefrel{eq:vc.vd}=
      \sum_i\lambda_i\,\proj{V_{\bd}\adj{\Uvarnames{R_{\bd2}}}\psi_{\bd i}}
      \\
      \eqrefrel{eq:hat.rho.ce}=
      V_{\bd}\adj{\Uvarnames{R_{\bd2}}}\,
      (\partr{R_{\bd2}}{Q_{\bc1}}\hat\rho_{\bc\bd})\,
      {\Uvarnames{R_{\bd2}}}\adj{V_{\bd}}
      =
      V_{\bd}\adj{\Uvarnames{R_{\bd2}}}\,
      \calE_{\idx2}(\hat\rho_{\bd})\,
      {\Uvarnames{R_{\bd2}}}\adj{V_{\bd}}.
      \label{eq:VcVd}
    \end{multline}
    Analogously,
    \begin{equation}
      V_{\be}\adj{\Uvarnames{Q_{\be2}}}\,
      (\partr{Q_{\be2}}{R_{\bd1}}\hat\rho_{\bd\be})\,
      {\Uvarnames{Q_{\be2}}}\adj{V_{\be}}
      =
      V_{\bd}\adj{\Uvarnames{R_{\bd1}}}\,
      \calE_{\idx1}(\hat\rho_{\bd})\,
      {\Uvarnames{R_{\bd1}}}\adj{V_{\bd}}.
      \label{eq:VeVd}
    \end{equation}
    % By definition of
    % $\calE_{\idx1},\calE_{\idx2},\Uvarnames{\dots},R_{\bd1},R_{\bd2}$,
    We have that
    \begin{equation*}
      \adj{\Uvarnames{R_{\bd2}}}\, \calE_{\idx2}(\hat\rho_{\bd})\,
      {\Uvarnames{R_{\bd2}}}
      =
       \adj{\Uvarnames{R_{\bd}}}\, \hat\rho_{\bd}\,
      {\Uvarnames{R_{\bd}}}
      =
      \adj{\Uvarnames{R_{\bd1}}}\,
      \calE_{\idx1}(\hat\rho_{\bd})\, {\Uvarnames{R_{\bd1}}}.
    \end{equation*}
  Thus the rhs of
    \eqref{eq:VcVd} and \eqref{eq:VeVd} are equal, hence
    \begin{equation}
      \label{eq:sigma}
      V_{\bc}\adj{\Uvarnames{Q_{\bc1}}}\,
      (\partr{Q_{\bc1}}{R_{\bd2}}\hat\rho_{\bc\bd})\,
      {\Uvarnames{Q_{\bc1}}}\adj{V_{\bc}}
      =
      V_{\be}\adj{\Uvarnames{Q_{\be2}}}\,
      (\partr{Q_{\be2}}{R_{\bd1}}\hat\rho_{\bd\be})\,
      {\Uvarnames{Q_{\be2}}}\adj{V_{\be}}
      =:
      \sigma \in \tracepos{Z}.
    \end{equation}
    Since $\sigma\in\tracepos Z$,
    we can write $\sigma=\sum_{i\in I}\gamma(i)\,\proj{\psi_i}$
    for some set $I$,
    some $\gamma\in\ellone I$
    and normalized $\psi_i\in\elltwo Z$.
    From \eqref{eq:sigma}, we get that $\suppo\sigma\subseteq\im V_{\bc}$,
    hence $\psi_i\in\im V_{\bc}$. Analogously $\psi_i\in\im V_{\be}$.

    Let
    \begin{equation}
      \hat\rho_{\bc\be} := \sum_i\gamma(i)\,\proj{
        \underbrace{
          \Uvarnames{Q_{\bc1}}\adj{V_{\bc}}\psi_i\otimes
          \Uvarnames{Q_{\be2}}\adj{V_{\be}}\psi_i
        }_{{}=:\ \phi_i \ 
        \in \ \elltwov{Q_{\bc1}Q_{\be2}}}
      }.
      \label{eq:def:rho.ce}
    \end{equation}
    Since $\gamma\in\ellone I$, and
    $\psi_i$ are
    normalized, and $V_{\bc},V_{\be}$ are isometries,
    this sum converges and $\hat\rho_{\bc\be}$
    exists. $\hat\rho_{\bc\be}$ is separable by construction.

    We now show that $\phi_i\in(V_{\bc}Q_{\bc1}\quanteq V_{\be}Q_{\be2})$.
    By \autoref{def:quanteq}, this means that $\phi_i$
    is fixed by
    $U:=\Uvarnames{Q_{\be2}}\adj{V_{\be}}V_{\bc}\adj{\Uvarnames{Q_{\bc1}}}\otimes
    \Uvarnames{Q_{\bc1}}\adj{V_{\bc}}V_{\be}\adj{\Uvarnames{Q_{\be2}}}$.
    (The third factor from \autoref{def:quanteq} vanishes because
    $\phi_i\in\elltwov{Q_{\bc1}Q_{\be2}}$,
    and hence the third factor is the identity on
    $\elltwov{Q_{\bc1}Q_{\be2}\setminus
      Q_{\bc1}Q_{\be2}}=\elltwov\varnothing=\setC$.) We calculate:
    \begin{align*}
      U\phi_i
      &=
        \Uvarnames{Q_{\be2}}\adj{V_{\be}}V_{\bc}\adj{\Uvarnames{Q_{\bc1}}}
        \Uvarnames{Q_{\bc1}}\adj{V_{\bc}}\psi_i
        \otimes
        \Uvarnames{Q_{\bc1}}\adj{V_{\bc}}V_{\be}\adj{\Uvarnames{Q_{\be2}}}
        \Uvarnames{Q_{\be2}}\adj{V_{\be}}\psi_i
      \\
      &=
        \Uvarnames{Q_{\be2}}\adj{V_{\be}}V_{\bc}\adj{V_{\bc}}\psi_i
        \otimes
        \Uvarnames{Q_{\bc1}}\adj{V_{\bc}}V_{\be}\adj{V_{\be}}\psi_i
      \\
      &\starrel=
        \Uvarnames{Q_{\be2}}\adj{V_{\be}}\psi_i
        \otimes
        \Uvarnames{Q_{\bc1}}\adj{V_{\bc}}\psi_i = \phi_i.
    \end{align*}
    Here $(*)$
    uses that $\psi_i\in\im V_{\bc}$
    and thus $V_{\bc}\adj{V_{\bc}}\psi_i=\psi_i$
    (since $V_{\bc}$
    is an isometry), and analogously $V_{\be}\adj{V_{\be}}\psi_i=\psi_i$.

    Thus $\phi_i$
    is invariant under $U$,
    thus $\phi_i\in(V_{\bc}Q_{\bc1}\quanteq V_{\be}Q_{\be2})$,
    and thus
    $\suppo\hat\rho_{\bc\be}\subseteq(V_{\bc}Q_{\bc1}\quanteq V_{\be}Q_{\be2})$.
    This is the first property that we needed to prove about
    $\hat\rho_{\bc\be}$.

    The second required property ($\partr{Q_{\bc1}}{Q_{\be2}}\hat\rho_{\bc\be}=\partr{Q_{\bc1}}{R_{\bd2}}\hat\rho_{\bc\bd}$) is shown as follows:
    \begin{align*}
      \partr{Q_{\bc1}}{Q_{\be2}}\hat\rho_{\bc\be}
      &\starrel=
        \sum\nolimits_i\gamma(i)\,\proj{
        \Uvarnames{Q_{\bc1}}\adj{V_{\bc}}\psi_i
        }
        = \Uvarnames{Q_{\bc1}}\adj{V_{\bc}}\sigma V_{\bc}\adj{\Uvarnames{Q_{\bc1}}}
      \\
      &\eqrefrel{eq:sigma}=
        \Uvarnames{Q_{\bc1}}\adj{V_{\bc}}V_{\bc}\adj{\Uvarnames{Q_{\bc1}}}\,
        (\partr{Q_{\bc1}}{R_{\bd2}}\hat\rho_{\bc\bd})\,
        {\Uvarnames{Q_{\bc1}}}\adj{V_{\bc}}V_{\bc}\adj{\Uvarnames{Q_{\bc1}}}
      \starstarrel=
        \partr{Q_{\bc1}}{R_{\bd2}}\hat\rho_{\bc\bd}
    \end{align*}
    Here $(*)$ uses  the definition \eqref{eq:def:rho.ce} of $\hat\rho_{\bc\be}$, and the fact that 
    $\norm{\Uvarnames{Q_{\be2}}\adj{V_{\be}}\psi_i}=1$ since $V_{\be}$ is an isometry and $\psi_i\in\im V_{\be}$.
    And $(**)$ uses that $V_{\bc}$ is an isometry and thus $\adj{V_{\bc}}V_{\bc}=\id$.

    The last required property, $\partr{Q_{\be2}}{Q_{\bc1}}\hat\rho_{\bc\be}=\partr{Q_{\be2}}{R_{\bd1}}\hat\rho_{\bd\be}$, is shown analogously.
  \end{claimproof}

  \begin{claim}\label{claim:rho.ce''}
    There is a separable $\rho_{\bc\be}''\in \traceposcq{\cl{V_1}\cl{V_2}Q_{\bc1}Q_{\be2}}$
    such that $\rho''_{\bc\be}$ satisfies $B_{\bc\be}^Q$ and
    $\partr{\cl{V_1}{Q_{\bc1}}}{\cl{V_2}{Q_{\be2}}}\rho_{\bc\be}''=
    \partr{\cl{V_1}{Q_{\bc1}}}{\cl{V_2}{R_{\bd2}}}\rho_{\bc\bd}''$
    and
    $\partr{\cl{V_2}{Q_{\be2}}}{\cl{V_1}{Q_{\bc1}}}\rho_{\bc\be}''=
    \partr{\cl{V_2}{Q_{\be2}}}{\cl{V_1}{R_{\bd1}}}\rho_{\bd\be}''$.
  \end{claim}

\begin{claimproof}
    % From \autoref{claim:ex.rho.cd.de}, we get
    % $\rho''_{\bc\bd}\in\traceposcq{\cl{V_1}\cl{V_2}Q_{\bc1}R_{\bd2}}$
    % and $\rho''_{\bd\be}\in\traceposcq{\cl{V_1}\cl{V_2}R_{\bd1}Q_{\be2}}$
    % with the properties listed in \autoref{claim:ex.rho.cd.de}.
%
    % Let $T:=\tr\rho_{\bc\bd}''=\tr\rho_{\bd\be}''$.
    % (The traces are equal since
    % $\tr\rho_{\bc\bd}''=\tr\partr{V_2}{V_1}\rho_{\bc\bd}''$
    % $\tr\rho_{\bd\be}''=\tr\partr{V_1}{V_2}\rho_{\bd\be}''$
    % are equal from the properties listed in
    % \autoref{claim:ex.rho.cd.de}.)
%
    %
    % And furthermore, if $(m_{\bc},m_{\bd},m_{\be})\in\suppd\mu_{\bc\bd\be}$,
    % then $(m_{\bc},m_{\bd})\in\suppd\mu_{\bc\bd}\subseteq R_{\bc\bd}^B$
    % and $(m_{\bd},m_{\be})\in\suppd\mu_{\bd\be}\subseteq R_{\bd\be}^B$,
    % hence $(m_{\bc},m_{\bd})\in R_{\bc\bd}^B\circrel R_{\bd\be}^B$.
    % Thus $\suppd\mu_{\bc\be}\subseteq R_{\bc\bd}^B\circrel R_{\bd\be}^B$.
%
 %   
  %  
    For each $m_{\bc},m_{\bd},m_{\be}$,
    we apply \autoref{claim:join.rho.quant} to $\hat\rho_{\bc\bd,m_{\bc}m_{\bd}}$
    and $\hat\rho_{\bd\be,m_{\bd}m_{\be}}$.
    That is, we invoke \autoref{claim:join.rho.quant} with $V_{\bc}:=V_{\bc,m_{\bc}}$,
    $V_{\bd}:=V_{\bd,m_{\bd}}$,
    $V_{\be}:=V_{\be,m_{\be}}$
    (as defined on \autopageref{page:vd.def}),
    $\hat\rho_{\bc\bd}:=\hat\rho_{\bc\bd,m_{\bc}m_{\bd}}$,
    $\hat\rho_{\bd\be}:=\hat\rho_{\bd\be,m_{\bd}m_{\be}}$
    (defined in \eqref{eq:rho.cd.sum},\eqref{eq:rho.de.sum}),
    $\hat\rho_{\bd}:=\hat\rho_{\bd,m_{\bc}m_{\bd}m_{\be}}$ (from \autoref{claim:hat.rho.d}).
    The assumptions of  \autoref{claim:join.rho.quant}
    where shown in \eqref{eq:supp.hat.rho.cd}, \eqref{eq:supp.hat.rho.de}, and \autoref{claim:hat.rho.d}.
    Then by  \autoref{claim:join.rho.quant}, there exists a separable
    $\hat\rho_{\bc\bd,m_{\bc}m_{\bd}m_{\be}}\in\traceposcq{Q_{\bc1}Q_{\be2}}$ such that
    \begin{gather}
      \suppo\hat\rho_{\bc\be,m_{\bc}m_{\bd}m_{\be}}\subseteq (V_{\bc,m_{\bc}}Q_{\bc1}\quanteq V_{\be,m_{\be}}Q_{\be2}), \label{eq:rho.ce.qeq} \\
      \partr{Q_{\bc1}}{Q_{\be2}}\suppo\hat\rho_{\bc\be,m_{\bc}m_{\bd}m_{\be}}=\partr{Q_{\bc1}}{R_{\bd2}}\suppo\hat\rho_{\bc\bd,m_{\bc}m_{\bd}},
      \label{eq:rho.ce.c}
      \\
      \partr{Q_{\be2}}{Q_{\bc1}}\suppo\hat\rho_{\bc\be,m_{\bc}m_{\bd}m_{\be}}=\partr{Q_{\be2}}{R_{\bd1}}\suppo\hat\rho_{\bd\be,m_{\bd}m_{\be}}.
      \notag % \label{eq:rho.ce.e}
    \end{gather}
    (For all $m_{\bc},m_{\bd},m_{\be}$.)

    Since $\hat\rho_{\bc\bd,m_{\bc}m_{\bd}}$ has trace $1$, it follows that
    $\tr\hat\rho_{\bc\be,m_{\bc}m_{\bd}m_{\be}}=1$ for all $m_{\bc},m_{\bd},m_{\be}$.

    Let $\mu_{\bc\bd\be}$ be as in \autoref{claim:mu.cde}.
    Since  $\sum\mu_{\bc\bd\be}(m_{\bc},m_{\bd},m_{\be})$ is finite,
    \begin{equation}\label{eq:rho.ce''}
      \rho_{\bc\be}'' := \sum_{m_{\bc},m_{\bd},m_{\be}} \mu_{\bc\bd\be}(m_{\bc},m_{\bd},m_{\be})\,
      \proj{U_{\idx1}\basis {\cl V}{m_{\bc}}\otimes U_{\idx2}\basis{\cl V}{m_{\be}}}
      \otimes
      \hat\rho_{\bc\be,m_{\bc}m_{\bd}m_{\be}}
    \end{equation}
    exists. $\rho''_{\bc\be}$ is separable since $\hat\rho_{\bc\be,m_{\bc}m_{\bd}m_{\be}}$ are.

    If $(m_{\bc},m_{\bd},m_{\be})\in\suppd\mu_{\bc\bd\be}$, then
    $(m_{\bc},m_{\bd})\in\suppd\mu_{\bc\bd}\subseteq R_{\bc\bd}^B$
    and $(m_{\bd},m_{\be})\in\suppd\mu_{\bd\be}\subseteq R_{\bd\be}^B$,
    thus
    $(m_{\bc},m_{\be})\in R_{\bc\bd}^B\circrel R_{\bd\be}^B$, hence
    $\denotee{b_{\bc\bd}\circexp b_{\bd\be}}{m_{\bc}\circ\idx1^{-1}\, m_{\be}\circ\idx2^{-1}}=\true$
    by \eqref{eq:RBce}.
    This implies that $\rho''_{\bc\be}$ satisfies $\CL{b_{\bc\bd}\circexp b_{\bd\be}}$.

    By \eqref{eq:rho.ce.qeq}, $\suppd\rho_{\bc\be,m_{\bc}m_{\bd}m_{\be}}\subseteq
    (V_{\bc,m_{\bc}}Q_{\bc1}\quanteq V_{\be,m_{\be}}Q_{\be2})
    =
    \denotee{
      v_{\bc}Q_{\bc1}\quanteq v_{\be}Q_{\be2}
    }{m_{\bc}\circ\idx1^{-1}\, m_{\be}\circ\idx2^{-1}}$.
    Thus $\rho_{\bc\be}''$ satisfies $v_{\bc}Q_{\bc1}\quanteq v_{\be}Q_{\be2}$.
    Together, we have that $\rho_{\bc\be}''$ satisfies $B_{\bc\bd}^Q=
    \CL{b_{\bc\bd}\circexp b_{\bd\be}}\cap
    (v_{\bc}Q_{\bc1}\quanteq v_{\be}Q_{\be2})$.
    We have shown the first required property of $\rho''_{\bc\be}$.

    Furthermore, we have
    \begin{align*}
      \partr{\cl{V_1}{Q_{\bc1}}}{\cl{V_2}{Q_{\be2}}}\rho_{\bc\be}'' 
      &\eqrefrel{eq:rho.ce''}=
        \sum_{m_{\bc},m_{\bd},m_{\be}} \mu_{\bc\bd\be}(m_{\bc},m_{\bd},m_{\be})\,
        \proj{U_{\idx1}\basis{\cl V}{m_{\bc}}}
        \otimes
        \partr{Q_{\bc1}}{Q_{\be2}}
        \hat\rho_{\bc\be,m_{\bc}m_{\bd}m_{\be}} \\
      &\eqrefrel{eq:rho.ce.c}=
        \sum_{m_{\bc},m_{\bd},m_{\be}} \mu_{\bc\bd\be}(m_{\bc},m_{\bd},m_{\be})\,
        \proj{U_{\idx1}\basis{\cl V}{m_{\bc}}}
        \otimes
        \partr{Q_{\bc1}}{R_{\bd2}}
        \hat\rho_{\bc\bd,m_{\bc}m_{\bd}} \\
      &\starrel=
        \sum_{m_{\bc},m_{\bd}} \mu_{\bc\bd}(m_{\bc},m_{\bd})\,
        \proj{U_{\idx1}\basis{\cl V}{m_{\bc}}}
        \otimes
        \partr{Q_{\bc1}}{R_{\bd2}}
        \hat\rho_{\bc\bd,m_{\bc}m_{\bd}} \\
      &\eqrefrel{eq:rho.cd.sum}=
        \partr{\cl{V_1}Q_{\bc1}}{\cl{V_2}R_{\bd2}}
        \rho_{\bc\bd}''.
    \end{align*}
    Here $(*)$ uses that $\mu_{\bc\bd}=\marginal{1,2}{\mu_{\bc\bd\be}}$ (\autoref{claim:mu.cde}).

    Thus $\partr{\cl{V_1}{Q_{\bc1}}}{\cl{V_2}{Q_{\be2}}}\rho_{\bc\be}''  = 
    \partr{\cl{V_1}Q_{\bc1}}{\cl{V_2}R_{\bd2}}
    \rho_{\bc\bd}''$, this shows the second required property of $\rho_{\bc\be}''$.

    Analogously, we get 
    $\partr{\cl{V_2}{Q_{\be2}}}{\cl{V_1}{Q_{\bc1}}}\rho_{\bc\be}''=
    \partr{\cl{V_2}{Q_{\be2}}}{\cl{V_1}{R_{\bd1}}}\rho_{\bd\be}''$ which is the third (and last) required property of $\rho''_{\bc\be}$.
  \end{claimproof}

  By \autoref{claim:pure.index}, all we need to do is to find a
  $\rho'_{\bc\be}$
  satisfying the conditions given in \autoref{claim:pure.index}.
  Let
  \[
    \rho_{\bc\be}' := \rho_{\bc\be}'' \otimes \proj{U_{\idx1}\psi_{\bc\bullet}\otimes U_{\idx2}\psi_{\be\bullet}}
  \]
  where $ \rho_{\bc\be}''$
  is the operator from \autoref{claim:rho.ce''}.  Since $\rho''_{\bc\be}$
  satisfies $B_{\bc\be}^Q$,
  $\rho'_{\bc\be}$ satisfies $B_{\bc\be}=B_{\bc\be}^Q\otimes\elltwov{\qu{V_1}\qu{V_2}\setminus Q_{\bc1}R_{\bd2}}$.
  Furthermore,
  \begin{multline*}
    \partr{V_1}{V_2}\rho'_{\bc\be}
    =
    \pb  \paren{ \partr{\cl{V_1}Q_{\bc1}}{\cl{V_2}Q_{\be2}}\rho''_{\bc\be} }
    \otimes
    U_{\idx1}\psi_{\bc\bullet}
    \ \txtrel{\autoref{claim:rho.ce''}}= \ \
    \pb \paren{ \partr{\cl{V_1}{Q_{\bc1}}}{\cl{V_2}{R_{\bd2}}}\rho_{\bc\bd}'' }
    \otimes
    U_{\idx1}\psi_{\bc\bullet}
    \\
    \txtrel{\autoref{claim:ex.rho.cd.de}}= \ \
    \calE_{\idx1}(\rho''_{\bc})
    \otimes
    U_{\idx1}\psi_{\bc\bullet}
    =
    \calE_{\idx1}\pb\paren{\rho''_{\bc}\otimes\proj{\psi_{\bc\bullet}}}
    \eqrefrel{eq:rho'''}=
    \calE_{\idx1}\pb\paren{\rho'_{\bc}}.
  \end{multline*}
  Analogously, 
  $\partr{V_2}{V_1}\rho'_{\bc\be} = 
  \calE_{\idx2}\pb\paren{\rho'_{\be}}$.
  Thus the conditions required by \autoref{claim:pure.index} are satisfied, and 
  $\rhl{A_{\bc\be}}\bc\bd{B_{\bc\be}}$ follows.
\end{proof}

\begin{lemma}[Transitivity, simple]\label{rule-lemma:TransSimple}
  \Ruleref{TransSimple} holds.
\end{lemma}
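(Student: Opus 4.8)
The plan is to obtain \ruleref{TransSimple} as a direct instance of the general transitivity rule \ruleref{Trans} (\autoref{rule-lemma:Trans}); all of the coupling-gluing work has already been carried out in the proof of \ruleref{Trans}, so the task reduces to choosing parameters and checking that the premises and the conclusion line up. First I would instantiate the parameters of \ruleref{Trans} by taking $Q_{\bc}:=\qu{\fv(\bc)}$, $Q_{\bd}:=R_{\bd}:=\qu{\fv(\bd)}$, $Q_{\be}:=\qu{\fv(\be)}$ (so that $Q_{\bc1}$ in the sense of \ruleref{Trans} is exactly $Q_{c1}$ in the sense of \ruleref{TransSimple}, and similarly for the other lists), setting the common target space $Z':=\typel{Q_{\bc}}=\typel{Q_{\bd}}=\typel{Q_{\be}}$ (these types agree since the quantum equalities $Q_{c1}\quanteq Q_{d2}$ etc.\ are only well-typed when the lists share a type), letting every operator $u_{\bc},u_{\bd},u_{\be},v_{\bc},v_{\bd},v_{\be}$ be the identity $\id$ on $Z'$, and taking the classical relations $a_{\bc\bd}:=b_{\bc\bd}:=(X_{c1}=X_{d2})$ and $a_{\bd\be}:=b_{\bd\be}:=(X_{d1}=X_{e2})$. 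Because $u_{\bc1}=\id$, $v_{\bd2}=\id$ and $R_{\bd2}=Q_{d2}$, the two premises of \ruleref{Trans} under this instantiation become syntactically identical to the two premises of \ruleref{TransSimple}.

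Next I would discharge the side conditions of \ruleref{Trans}. The type conditions ($u_{\bd},v_{\bd}$ unitary resp.\ of operator norm $\leq1$, and $u_{\bc},u_{\be},v_{\bc},v_{\be}$ isometries into $Z'$) are immediate for $\id$, and the containments $\qu{\fv(\bc)}\subseteq Q_{\bc}$, $\qu{\fv(\be)}\subseteq Q_{\be}$ hold with equality. The one genuinely fiddly point is the determinability requirement, which asks for expressions $e_{\bc},e_{\be}$ making the implications $b_{\bc\bd}\implies\bigl((\ldots)_1=\idx2 e_{\bc}\bigr)$ and its counterpart hold. Here I would use the equality form of the predicate: under $b_{\bc\bd}=(X_{c1}=X_{d2})$ the $1$-indexed classical variables of $\bc$ are forced to coincide with the correspondingly-indexed classical variables of the neighbouring program, so one can take $e_{\bc}$ (and $e_{\be}$) to be the tuple of classical free variables of that neighbour. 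Checking this is purely bookkeeping, but it relies on the standing well-typedness assumption that the tuples $\cl{\fv(\bc)},\cl{\fv(\bd)},\cl{\fv(\be)}$ are aligned (equal length, matching types) — the same assumption that already makes $X_{c1}=X_{d2}$ meaningful — and this is where I expect the main (if modest) difficulty to lie.

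Finally I would verify that the conclusion of \ruleref{Trans} simplifies to that of \ruleref{TransSimple}. For the quantum component, $v_{\bc1}Q_{\bc1}\quanteq v_{\be2}Q_{\be2}=\id\,Q_{c1}\quanteq\id\,Q_{e2}=(Q_{c1}\quanteq Q_{e2})$ by \autoref{def:quanteq.simple}. For the classical component I would unfold the definition of $\circexp$ (\autopageref{page:def:circexp}): a memory pair satisfies $a_{\bc\bd}\circexp a_{\bd\be}$ iff there is an intermediate memory agreeing with the first on $\cl{\fv(\bc)}$ and with the second on $\cl{\fv(\be)}$; since such an intermediate memory can always be produced by copying the first memory's values, this holds iff the two memories already agree, i.e.\ iff $X_{c1}=X_{e2}$. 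As $\CL\cdot$ depends only on the truth value of its argument (\autoref{def:cla}), this gives $\CL{a_{\bc\bd}\circexp a_{\bd\be}}=\CL{X_{c1}=X_{e2}}$, and likewise $\CL{b_{\bc\bd}\circexp b_{\bd\be}}=\CL{X_{c1}=X_{e2}}$. Thus \ruleref{Trans} delivers exactly $\rhl{\CL{X_{c1}=X_{e2}}\cap(Q_{c1}\quanteq Q_{e2})}\bc\be{\CL{X_{c1}=X_{e2}}\cap(Q_{c1}\quanteq Q_{e2})}$, which is the statement of \ruleref{TransSimple}, so no further appeal to \ruleref{Conseq} is even needed.
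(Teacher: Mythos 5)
Your proposal is correct and takes essentially the same route as the paper: the paper's proof of this lemma is precisely the instantiation of \ruleref{Trans} you describe ($Q_{\bc}:=\qu{\fv(\bc)}$, $Q_{\bd}:=R_{\bd}:=\qu{\fv(\bd)}$, $Q_{\be}:=\qu{\fv(\be)}$, all $u_p,v_p:=\id$, $a_{\bc\bd}:=b_{\bc\bd}:=(X_{c1}=X_{d2})$, $a_{\bd\be}:=b_{\bd\be}:=(X_{d1}=X_{e2})$), together with the observation that $a_{\bc\bd}\circexp a_{\bd\be}=b_{\bc\bd}\circexp b_{\bd\be}=(X_{c1}=X_{e2})$. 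The only cosmetic divergence is in discharging the determinability premise, where the paper sets $e_{\bc}:=\cl{\fv(\bc)}$ and $e_{\be}:=\cl{\fv(\be)}$ while you take the neighbouring program's variable tuple; both choices work under the alignment assumption you correctly identify, and yours makes the implication hold trivially.
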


\begin{proof}
  Follows as a special case from \ruleref{Trans}, taking
  $Q_{\bc}:=\qu{\fv(\bc)}$,
  $Q_{\bd}:=R_{\bd}:=\qu{\fv(\bd)}$,
  $Q_{\be}:=\qu{\fv(\be)}$,
  $a_{\bc\bd}:=b_{\bc\bd}:=(X_{\bc1}=X_{\bd2})$,
  $a_{\bd\be}:=b_{\bd\be}:=(X_{\bd1}=X_{\be2})$,
  $u_p,v_p:=\id$
  for $p=\bc,\bd,\be$, $e_{\bc}:=\cl{\fv(\bc)}$, $e_{\be}:=\cl{\fv(\be)}$.
  Note that then $a_{\bc\bd}\circexp a_{\bd\be} = b_{\bc\bd} \circexp b_{\bd\be} = (X_{\bc1}=X_{\be2})$.
\end{proof}

\subsection{Rules for classical statements}
\label{sec:proofs-classical}

\begin{lemma}[Empty program]\label{rule-lemma:Skip}
  \Ruleref{Skip} holds.
\end{lemma}

\begin{proof}
  Immediate from \autoref{def:rhl} and the semantics of $\Skip$.
\end{proof}

\begin{lemma}[Assignment]\label{rule-lemma:Assign1}
  \Ruleref{Assign1} holds.
\end{lemma}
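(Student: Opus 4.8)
The plan is to invoke \autoref{lemma:pure}, which reduces the judgment to deterministic classical parts together with pure quantum states. So I would fix $m_1\in\types{\cl{V_1}}$, $m_2\in\types{\cl{V_2}}$ and normalized $\psi_1\in\elltwov{\qu{V_1}}$, $\psi_2\in\elltwov{\qu{V_2}}$ with $\psi_1\tensor\psi_2\in\denotee{\substi B{\idx1 e/\xx_1}}{\memuni{m_1m_2}}$, and exhibit a separable $\rho'$ satisfying $B$ with the correct marginals. Since the assignment is deterministic and the right program is $\Skip$, the natural candidate is a pure product state: setting $m_1':=\upd{m_1}{\xx_1}{\denotee{\idx1 e}{m_1}}$, I would take $\rho':=\pb\proj{\basis{\cl{V_1}\cl{V_2}}{\memuni{m_1'm_2}}\tensor\psi_1\tensor\psi_2}$, which lies in $\traceposcq{V_1V_2}$ and is manifestly a product across $V_1,V_2$, hence separable.

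First I would check the marginals. By the semantics of assignment (\autoref{sec:qsemantics}), $\denotc{\idx1(\assign\xx e)}\pb\paren{\pointstate{\cl{V_1}}{m_1}\tensor\proj{\psi_1}}=\pointstate{\cl{V_1}}{m_1'}\tensor\proj{\psi_1}$, which is exactly $\partr{V_1}{V_2}\rho'$; and since $\denotc{\idx2\Skip}$ is the identity, $\partr{V_2}{V_1}\rho'=\pointstate{\cl{V_2}}{m_2}\tensor\proj{\psi_2}=\denotc{\idx2\Skip}\pb\paren{\pointstate{\cl{V_2}}{m_2}\tensor\proj{\psi_2}}$. Both required marginal equations thus hold by construction.

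The crux is to verify that $\rho'$ satisfies $B$, i.e.\ that $\psi_1\tensor\psi_2\in\denotee B{\memuni{m_1'm_2}}$. Here I would unfold the definition of substitution from \autoref{sec:expr}: $\denotee{\substi B{\idx1 e/\xx_1}}{\memuni{m_1m_2}}=\denotee B{\upd{\memuni{m_1m_2}}{\xx_1}{\denotee{\idx1 e}{\memuni{m_1m_2}}}}$. Since $\fv(\idx1 e)\subseteq\cl{V_1}$, we have $\denotee{\idx1 e}{\memuni{m_1m_2}}=\denotee{\idx1 e}{m_1}$, so the updated memory is precisely $\memuni{m_1'm_2}$. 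Hence $\denotee{\substi B{\idx1 e/\xx_1}}{\memuni{m_1m_2}}=\denotee B{\memuni{m_1'm_2}}$, and the assumption $\psi_1\tensor\psi_2\in\denotee{\substi B{\idx1 e/\xx_1}}{\memuni{m_1m_2}}$ immediately gives $\psi_1\tensor\psi_2\in\denotee B{\memuni{m_1'm_2}}$, so that $\suppo\proj{\psi_1\tensor\psi_2}\subseteq\denotee B{\memuni{m_1'm_2}}$, i.e.\ $\rho'$ satisfies $B$. With the marginals and separability already in hand, \autoref{lemma:pure} then yields the rule.

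I expect no genuine obstacle: the whole argument is the construction above. The only care needed is the index bookkeeping---keeping $\idx1$, the tagging of $\xx$ as $\xx_1$, and the restriction of the substitution to the classical part of the memory straight---so that the substitution semantics for the predicate $B$ and the assignment semantics land on the same updated memory $\memuni{m_1'm_2}$. (Well-typedness of $\assign\xx e$ ensures $\typee{e}\subseteq\typev\xx$, so the substitution is type-correct, but this is not needed for the denotational identity itself.)
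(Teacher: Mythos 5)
Your proposal is correct and follows essentially the same route as the paper's proof: both invoke \autoref{lemma:pure}, take the same pure product witness $\rho'=\pb\proj{\basis{\cl{V_1}\cl{V_2}}{\upd{m_1}{\xx_1}{\denotee{\idx1e}{m_1}}\,m_2}\tensor\psi_1\tensor\psi_2}$, check the marginals via the semantics of assignment and $\Skip$, and identify $\denotee{\substi B{\idx1e/\xx_1}}{\memuni{m_1m_2}}$ with $\denotee B{\upd{m_1}{\xx_1}{\denotee{\idx1e}{m_1}}\,m_2}$ to conclude that $\rho'$ satisfies $B$. No gaps.
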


\begin{proof}
Fix $m_1\in\types{\cl{V_1}}$,
$m_2\in\types{\cl{V_2}}$
and normalized $\psi_1\in\elltwov{\qu{V_1}}$,
$\psi_2\in\elltwov{\qu{V_2}}$
such that
$\psi_1\tensor \psi_2\in\pb\denotee{\substi B{\idx1e/\xx_1}}{\memuni{m_1m_2}}$.

  Let
  \begin{equation*}
    \rho' := \pB\proj{\pb\basis{\cl{V_1}\cl{V_2}}{\upd{m_1}{\xx_1}{\denotee{\idx1 e}{m_1}}\,m_2}
    \tensor\psi_1\tensor\psi_2}.
\end{equation*}
Then
  \begin{align}
    &\pb\denotc{\idx1(\assign{\xx} e)}\pB\paren{\pb\pointstate{\cl{V_1}}{m_1}\tensor\proj{\psi_1}}
    =\denotc{\assign{\xx_1}{\idx1e}}\pB\paren{\pb\proj{\basis{\cl{V_1}}{m_1}\tensor\psi_1}}
      \notag\\&
                \hskip1in\starrel=
             \pB\proj{\pb\basis{\cl{V_1}}{\upd{m_1}{\xx_1}{\denotee{\idx1e}{m_1}}} \tensor\psi_1}
             =
             \partr{V_1}{V_2}\rho', \label{eq:assign.rho'} \\
    & \denotc{\idx1\Skip}\pB\paren{\pb\pointstate{\cl{V_2}}{m_2}\tensor\proj{\psi_2}}
    = \denotc{\Skip}\pB\paren{\pb\proj{\basis{\cl{V_2}}{m_2}\tensor\psi_2}}
              \starrel=
                  \pb\proj{\basis{\cl{V_2}}{m_2}\tensor\psi_2} = \partr{V_2}{V_1}\rho'.
      \label{eq:skip.rho'}
  \end{align}
  Here $(*)$ follows from the semantics of assignments and $\Skip$.

  We have
  \begin{equation*}
    \psi_1\tensor\psi_2\in
    \pb\denotee{\substi B{\idx1e/\xx_1}}{\memuni{m_1m_2}}
%    = \denotee B{\upd{(\memuni{m_1m_2})}{\xx_1}{\denotee{\idx1e}{\memuni{m_1m_2}}}}
    = \denotee B{\upd{m_1}{\xx_1}{\denotee{\idx1e}{m_1}}\,m_2}.
  \end{equation*}
  Hence
  $\rho' = \pB\proj{\pb\basis{\cl{V_1}\cl{V_2}}{\upd{m_1}{\xx_1}{\denotee{\idx1e}{m_1}}\,m_2}
    \tensor\psi_1\tensor\psi_2}$ satisfies $B$.
  
  Furthermore, $\rho'$ is separable.

  Thus for all $m_1,m_2,\psi_1,\psi_2$
  with $\psi_1\tensor \psi_2\in\denotee{\substi B{\idx1e/\xx_1}}{\memuni{m_1m_2}}$,
  there is a separable $\rho'$
  that satisfies~$B$
  and such that \eqref{eq:assign.rho'} and \eqref{eq:skip.rho'} holds.
  By \autoref{lemma:pure}, this implies
  $\pb\rhl{\substi B{\idx1e/\xx_1}}{\assign {\xx}e}\Skip B$.
\end{proof}

\begin{lemma}[Sampling (one-sided)]\label{rule-lemma:Sample1}
  \Ruleref{Sample1} holds.
\end{lemma}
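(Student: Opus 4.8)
The plan is to apply \autoref{lemma:pure}, following the same template as the proof of \ruleref{Assign1} that precedes this lemma. So I would fix $m_1\in\types{\cl{V_1}}$, $m_2\in\types{\cl{V_2}}$ and normalized $\psi_1\in\elltwov{\qu{V_1}}$, $\psi_2\in\elltwov{\qu{V_2}}$ with $\psi_1\tensor\psi_2\in\denotee A{\memuni{m_1m_2}}$, and construct a separable witness $\rho'$. Since $\idx1(\sample\xx e)=\sample{\xx_1}{e'}$ with $e'=\idx1e$, the sampling semantics give
\[
\denotc{\idx1(\sample\xx e)}\pb\paren{\pointstate{\cl{V_1}}{m_1}\tensor\proj{\psi_1}}
=\sum_{z\in\typev{\xx_1}}\denotee{e'}{m_1}(z)\cdot\pb\pointstate{\cl{V_1}}{\upd{m_1}{\xx_1}z}\tensor\proj{\psi_1},
\]
while the $\Skip$ side simply returns $\pointstate{\cl{V_2}}{m_2}\tensor\proj{\psi_2}$. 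The natural witness, since only the left program acts, is the separable mixture
\[
\rho':=\sum_{z\in\typev{\xx_1}}\denotee{e'}{m_1}(z)\cdot\pB\proj{\pb\basis{\cl{V_1}\cl{V_2}}{\upd{m_1}{\xx_1}z\,m_2}\tensor\psi_1\tensor\psi_2}.
\]

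Next I would verify the two marginal conditions. Tracing out $V_2$ kills the factor $\proj{\basis{\cl{V_2}}{m_2}\tensor\psi_2}$ (which has trace $1$, as $\psi_2$ is normalized), so $\partr{V_1}{V_2}\rho'$ reproduces exactly the left-hand denotation above. Tracing out $V_1$ instead yields $\pb\paren{\sum_z\denotee{e'}{m_1}(z)}\cdot\pointstate{\cl{V_2}}{m_2}\tensor\proj{\psi_2}$, which matches $\denotc{\idx2\Skip}\pb\paren{\pointstate{\cl{V_2}}{m_2}\tensor\proj{\psi_2}}$ precisely when $\sum_z\denotee{e'}{m_1}(z)=1$. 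This is where totality enters: since $A\subseteq\CL{e'\text{ is total}}$ and $\psi_1\tensor\psi_2\neq0$ lies in $\denotee A{\memuni{m_1m_2}}$, the space $\denotee{\CL{e'\text{ is total}}}{\memuni{m_1m_2}}$ is nonzero, so by the definition of $\CL\cdot$ the expression ``$e'$ is total'' evaluates to $\true$ on $m_1$, giving $\sum_z\denotee{e'}{m_1}(z)=1$.

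It remains to check that $\rho'$ satisfies $B$. Decomposing $\rho'$ over classical memories, the only memories carrying nonzero weight are $\upd{m_1}{\xx_1}z\,m_2$ for $z\in\suppd(\denotee{e'}{m_1})$, each with quantum support $\SPAN\{\psi_1\tensor\psi_2\}$. Now $\denotee{\substi B{z/\xx_1}}{\memuni{m_1m_2}}=\denotee B{\upd{m_1}{\xx_1}z\,m_2}$ by the substitution semantics, and $A\subseteq\bigcap_{z\in\suppd e'}\substi B{z/\xx_1}$ means $\psi_1\tensor\psi_2\in\denotee{\substi B{z/\xx_1}}{\memuni{m_1m_2}}$ for every $z$ in that support (the index set $\suppd e'$ evaluates to $\suppd(\denotee{e'}{m_1})$ since $e'$ only reads $\cl{V_1}$). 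Hence $\psi_1\tensor\psi_2\in\denotee B{\upd{m_1}{\xx_1}z\,m_2}$ for each relevant $z$, so $\suppo\rho'_{\bar m}\subseteq\denotee B{\bar m}$ for all $\bar m$, i.e.\ $\rho'$ satisfies $B$. With all three conditions of \autoref{lemma:pure} established, $\rhl A{\sample\xx e}\Skip B$ follows.

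The only genuine subtlety is the totality argument: without the $\CL{e'\text{ is total}}$ conjunct the left program could lose probability mass while $\Skip$ preserves trace, and the second marginal would fail to match. Everything else is the routine bookkeeping of marginals of a product mixture, exactly parallel to \ruleref{Assign1}, so I expect the write-up to be short.
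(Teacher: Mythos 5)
Your proposal is correct and follows essentially the same route as the paper's proof: the same witness $\rho'$ (a mixture over sampled values $z$ of point states on the updated classical memory tensored with $\proj{\psi_1\tensor\psi_2}$), the same use of the $\CL{e'\text{ is total}}$ conjunct to make the $\Skip$-side marginal come out with the right trace, and the same appeal to $\bigcap_{z\in\suppd e'}\substi B{z/\xx_1}$ to verify the postcondition before concluding via \autoref{lemma:pure}. No gaps.
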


\begin{proof}
  Fix $m_1\in\types{\cl{V_1}}$,
  $m_2\in\types{\cl{V_2}}$ and normalized $\psi_1\in\elltwov{\qu{V_1}}$,
  $\psi_2\in\elltwov{\qu{V_2}}$ such that $\psi_1\tensor \psi_2\in \denotee A{\memuni{m_1m_2}}$.

  Let $\mu:=\denotee{e'}{\memuni{m_1m_2}}=\denotee{e'}{m_1}$.
  Since $\psi_1\tensor \psi_2\in \denotee A{\memuni{m_1m_2}}
  \subseteq\pb\denotee{\CL{e'\text{ is total}}}{\memuni{m_1m_2}}$
  we have
  $\denotee{e'\text{ is total}}{\memuni{m_1m_2}}=\true$. Thus $\mu$ is total.

  For $z\in\suppd\mu$,
  let
  $\phi'_z:= \pb\basis{\cl{V_1}\cl{V_2}}{\upd{m_1}{\xx_1}z\,m_2}
  \tensor\psi_1\tensor\psi_2$
  and let
  \begin{equation*}
    \rho' := \sum_z \mu(z)\, \proj{\phi_z'}.
  \end{equation*}

  Then
  \begin{align}
    &\pb\denotc{\idx1(\sample\xx e)}\pB\paren{\pb\pointstate{X_1}{m_1}\tensor\proj{\psi_1}}
      =
      \pb\denotc{\sample{\xx_1} e'}\pB\paren{\pb\pointstate{X_1}{m_1}\tensor\proj{\psi_1}}
    \notag\\&\hskip.6in
              \starrel= \sum_z \denotee{e'}{m_1}(z)\,
                        \pB\proj{\pb\basis{\cl{V_1}}{\upd{m_1}{\xx_1}z}} \tensor\proj{\psi_1}
              = \sum_z\mu(z)\, \partr{V_1}{V_2}\proj{\phi'_z}
             =
             \partr{V_1}{V_2}\rho', \label{eq:sample.rho'} \\
    &
      \pb\denotc{\idx2\Skip}\pB\paren{\pb\pointstate{X_2}{m_2}\tensor\proj{\psi_2}}
      =
      \denotc\Skip\pB\paren{\pb\proj{\basis{X_2}{m_2}\tensor\psi_2}}
              \starrel=
      \pb\proj{\basis{\cl{V_2}}{m_2}\tensor\psi_2}
      \notag\\&
                \hskip.6in
      \starstarrel= \sum_z \mu(z)\,   \pb\proj{\basis{\cl{V_2}}{m_2}\tensor\psi_2}
      = \sum_z \mu(z)\,  \partr{V_2}{V_1}\proj{\phi'_i}
       = \partr{V_2}{V_1}\rho'
      .
      \label{eq:2.skip.rho'}
  \end{align}
  Here $(*)$ follows from the semantics of the sampling and the $\Skip$ statement.
  And $(**)$ follows since $\mu$ is total.

  For any $z\in\suppd\mu=\denotee{\suppd e'}{\memuni{m_1m_2}}$, we have
  \begin{align*}
    \psi_1\tensor\psi_2 &\in
                          \denotee A{\memuni{m_1m_2}}
                          \subseteq \pB\denotee{\bigcap\nolimits_{z\in\suppd e'}
                                    \substi B{z/\xx_1}}{\memuni{m_1m_2}}
    \subseteq
                          \pb\denotee{\substi B{z/\xx_1}}{\memuni{m_1m_2}}
    = \denotee B{\upd{m_1}{\xx_1}z\,m_2}.
  \end{align*}
  Hence $\proj{\phi'_z}$
  satifies $B$
  for $z\in\suppd\mu$.
  Hence $\rho'=\sum_z\mu(z)\,\proj{\phi'_z}$ satisfies $B$.

  Furthermore, $\rho'$ is separable.

  Thus for all $m_1,m_2,\psi_1,\psi_2$
  with $\psi_1\tensor \psi_2\in\denotee{A}{\memuni{m_1m_2}}$,
  there is a separable $\rho'$
  that satisfies~$B$
  and such that \eqref{eq:sample.rho'} and \eqref{eq:2.skip.rho'} holds.
  By \autoref{lemma:pure}, this implies
  $\rhl A{\sample {\xx}e}\Skip B$.
\end{proof}

\begin{lemma}[Sampling (joint)]\label{rule-lemma:JointSample}
  \Ruleref{JointSample} holds.
\end{lemma}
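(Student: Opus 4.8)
The plan is to prove this exactly as \ruleref{Sample1} was proved: invoke \autoref{lemma:pure} and build the witness final state directly from the joint distribution $f$. First I would fix $m_1\in\types{\cl{V_1}}$, $m_2\in\types{\cl{V_2}}$ and normalized $\psi_1\in\elltwov{\qu{V_1}}$, $\psi_2\in\elltwov{\qu{V_2}}$ with $\psi_1\tensor\psi_2\in\denotee A{\memuni{m_1m_2}}$, and set $\nu:=\denotee f{\memuni{m_1m_2}}$, a subprobability distribution on $\typev\xx\times\typev\yy$. By the convention $\denotee{\marginal1 f}m=\marginal1(\denotee f m)$ (and analogously for the second marginal), the two marginals of $\nu$ are $\denotee{\marginal1 f}{\memuni{m_1m_2}}$ and $\denotee{\marginal2 f}{\memuni{m_1m_2}}$.

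Next I would define the candidate final state
\[
  \rho' := \sum_{(z_1,z_2)\in\suppd\nu}\nu(z_1,z_2)\,
  \pb\proj{\basis{\cl{V_1}\cl{V_2}}{\upd{m_1}{\xx_1}{z_1}\,\upd{m_2}{\yy_2}{z_2}}\tensor\psi_1\tensor\psi_2}.
\]
This $\rho'$ is separable by construction, and it lies in $\traceposcq{V_1V_2}$ since $\sum\nu(z_1,z_2)\le1$ and each summand has trace $1$. For the postcondition, note that for every $(z_1,z_2)\in\suppd\nu=\suppd f$ we have $\psi_1\tensor\psi_2\in\denotee A{\memuni{m_1m_2}}\subseteq\denotee{\substi B{z_1/\xx_1,z_2/\yy_2}}{\memuni{m_1m_2}}=\denotee B{\upd{m_1}{\xx_1}{z_1}\,\upd{m_2}{\yy_2}{z_2}}$, so each summand of $\rho'$ satisfies $B$ and hence $\rho'$ does.

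The remaining (and main) work is checking the two marginal conditions of \autoref{lemma:pure}. For $\partr{V_1}{V_2}\rho'$, the $V_2$-factor $\pointstate{\cl{V_2}}{\upd{m_2}{\yy_2}{z_2}}\tensor\proj{\psi_2}$ has trace $1$, so tracing it out yields $\sum_{z_1}\bigl(\sum_{z_2}\nu(z_1,z_2)\bigr)\,\pointstate{\cl{V_1}}{\upd{m_1}{\xx_1}{z_1}}\tensor\proj{\psi_1}$, which equals $\sum_{z_1}\marginal1\nu(z_1)\,\pointstate{\cl{V_1}}{\upd{m_1}{\xx_1}{z_1}}\tensor\proj{\psi_1}$; by the sampling semantics this is precisely $\denotc{\idx1(\sample\xx{e_1})}(\pointstate{\cl{V_1}}{m_1}\tensor\proj{\psi_1})$ as soon as $\marginal1\nu=\denotee{\idx1 e_1}{m_1}$. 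This is where the precondition enters: since $\psi_1\tensor\psi_2\ne0$ lies in $\denotee{\CL{\marginal1 f=\idx1 e_1\land\marginal2 f=\idx2 e_2}}{\memuni{m_1m_2}}$, that space cannot be $0$, so $\denotee{\marginal1 f}{\memuni{m_1m_2}}=\denotee{\idx1 e_1}{\memuni{m_1m_2}}=\denotee{\idx1 e_1}{m_1}$ (the last step because $\idx1 e_1$ depends only on $\cl{V_1}$), giving $\marginal1\nu=\denotee{\idx1 e_1}{m_1}$ as needed. The symmetric computation handles $\partr{V_2}{V_1}\rho'$ using $\marginal2 f=\idx2 e_2$. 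Applying \autoref{lemma:pure} then yields $\rhl A{\sample\xx{e_1}}{\sample\yy{e_2}}B$. The main obstacle is purely bookkeeping: threading the marginals of $\nu$ through the $\idx1,\idx2$ renamings and matching the substitution identity $\denotee{\substi B{z_1/\xx_1,z_2/\yy_2}}{\memuni{m_1m_2}}=\denotee B{\upd{m_1}{\xx_1}{z_1}\,\upd{m_2}{\yy_2}{z_2}}$ with the update form used by the sampling semantics; none of it is conceptually deep, but every index must line up.
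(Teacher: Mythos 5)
Your proposal is correct and follows essentially the same route as the paper's own proof: the same witness state $\rho'$ built from the joint distribution $\denotee f{\memuni{m_1m_2}}$, the same derivation of the marginal identities from the $\CL{\marginal1f=\idx1e_1\land\marginal2f=\idx2e_2}$ part of the precondition (via non-emptiness of $\denotee A{\memuni{m_1m_2}}$), and the same conclusion through \autoref{lemma:pure}. No gaps.
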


\begin{proof}
  Fix   $m_1\in\types{\cl{V_1}}$,
  $m_2\in\types{\cl{V_2}}$ and normalized $\psi_1\in\elltwov{\qu{V_1}}$,
  $\psi_2\in\elltwov{\qu{V_2}}$ such that $\psi_1\tensor \psi_2\in \denotee A{\memuni{m_1m_2}}$.
  
  Let $\mu:=\denotee f{\memuni{m_1m_2}}$.
%  Since $\psi_1\tensor \psi_2\in \denotee A{\memuni{m_1m_2}}$,
%  we have
%  $\basis{\cl{V_1}\cl{V_2}}{\memuni{m_1m_2}}\tensor\psi_1\tensor\psi_2\in
%  A\subseteq \CL{e\text{ is total}}$. Thus
%  $\denotee{e\text{ is total}}{m_1}=\denotee{e\text{ is total}}{\memuni{m_1m_2}}=\true$. Thus $\mu$ is total.
%  
  For $(z_1,z_2)\in\suppd\mu$,
  let
  $\phi'_{z_1z_2} := \pb\basis{\cl{V_1}\cl{V_2}}{\upd{m_1}{\xx_1}{z_1}\,\upd{m_2}{\xx_2}{z_2}}
  \tensor\psi_1\tensor\psi_2$
  and let
  \begin{equation*}
    \rho' := \sum_{z_1,z_2} \mu(z_1,z_2)\, \proj{\phi'_{z_1z_2}}.
  \end{equation*}

  Since $\psi_1\tensor\psi_2\in\denotee A{\memuni{m_1m_2}}\subseteq \pb\denotee{\pb\CL{\marginal1f=\idx1e_1}}{\memuni{m_1m_2}}$,
  we have $\pb\denotee{\marginal1f=\idx1e_1}{\memuni{m_1m_2}}=\true$,
  thus
  \begin{equation}\label{eq:e1-marginal}
    \marginal1\mu=\pb\marginal1{\denotee f{\memuni{m_1m_2}}}
    =\denotee{\idx1e_1}{\memuni{m_1m_2}}=\denotee{\idx1e_1}{m_1}.
  \end{equation}
  
  Analogously, we show $\denotee{\idx2e_2}{m_2}=\marginal2\mu$.

  Then
  \begin{align}
    &\pb\denotc{\idx1(\sample\xx{e_1})}\pB\paren{\pb\pointstate{\cl{V_1}}{m_1}\tensor\proj{\psi_1}}
      =
      \pb\denotc{\sample{\xx_1}{\idx1e_1}}\pB\paren{\pb\pointstate{\cl{V_1}}{m_1}\tensor\proj{\psi_1}}
      \notag\\&\qquad
      \starrel=
      \sum_{z_1} \denotee {\idx1e_1}{m_1}(z_1)\,
      \pb\proj{\basis{\cl{V_1}}{\upd{m_1}{\xx_1}{z_1}} \tensor\psi_1}
      \notag\\&\qquad
                \eqrefrel{eq:e1-marginal}=\sum_{z_1} \pb\paren{\marginal1\mu}(z_1)\,
                \pB\proj{\pb\basis{\cl{V_1}}{\upd{m_1}{\xx_1}{z_1}} \tensor\psi_1}
                \notag\\&\qquad
                =\sum_{z_1,z_2} \mu(z_1,z_2)\,
                \pB\proj{\pb\basis{\cl{V_1}}{\upd{m_1}{\xx_1}{z_1}} \tensor\psi_1}
                \notag\\&\qquad
                =\sum_{z_1,z_2} \mu(z_1,z_2)\,
                          \partr{V_1}{V_2} \proj{\phi'_{z_1,z_2}}
                          =\partr{V_1}{V_2}\rho'. \label{eq:sample.rho'-1}
  \end{align}
  Here $(*)$ follows from the semantics of the sampling statement.

  Analogously, we get
  \begin{equation}
    \label{eq:sample.rho'-2}
    \pb\denotc{\idx2(\sample{\yy}{e_2})}\pB\paren{\pb\pointstate{\cl{V_1}}{m_2}\tensor\proj{\psi_2}}
    =
    \partr{V_2}{V_1}\rho'.
  \end{equation}

  For any $(z_1,z_2)\in\suppd\mu$, we have that $\denotee{(z_1,z_2)\in\suppd f}{\memuni{m_1m_2}}=\true$. Thus
  \begin{align*}
    \psi_1\tensor\psi_2
    &\in
      \denotee A{\memuni{m_1m_2}}
      \subseteq \pB\denotee{\bigcap\nolimits_{(z_1,z_2)\in\suppd f} \substi B{z_1/\xx_1,z_2/\xx_2}}{\memuni{m_1m_2}}
      \subseteq \pb\denotee{\substi B{z_1/\xx_1,z_2/\xx_2}}{\memuni{m_1m_2}} \\
    &
    = \denotee B{\upd{m_1}{\xx_1}{z_1}\,\upd{m_2}{\xx_2}{z_2}}.
  \end{align*}
  Hence $\proj{\phi'_{z_1z_2}}$ satisfies $B$ for all
   $(z_1,z_2)\in\suppd\mu$.
  Since $\rho'=\sum_{z_1,z_2}\mu(z_1,z_2)\,\proj{\phi'_{z_1,z_2}}$, $\rho'$ satisfies~$B$.
  
  Furthermore, $\rho'$ is separable.

  Thus for all $m_1,m_2,\psi_1,\psi_2$
  with $\psi_1\tensor \psi_2\in\denotee{A}{\memuni{m_1m_2}}$,
  there is a separable $\rho'$
  that satisfies~$B$
  and such that \eqref{eq:sample.rho'-1} and \eqref{eq:sample.rho'-2} hold.
  By \autoref{lemma:pure}, this implies
  $\rhl A{\sample {\xx}{e_2}}{\sample {\yy}{e_2}} B$.
\end{proof}

\begin{lemma}[Conditionals (one-sided)]\label{rule-lemma:If1}
  \Ruleref{If1} holds.
\end{lemma}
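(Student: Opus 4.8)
The plan is to reduce to the two branches by a case distinction on the value of the (indexed) guard, and then, in each branch, observe that the conditional semantically collapses to the relevant subprogram on the restricted states, so that the witness supplied by the corresponding premise carries over directly.

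First I would apply \ruleref{Case} to the goal $\rhl{A}{\langif e\bc\bd}\Skip B$, taking the case-distinction expression to be $\idx1 e$ (note this is the predicate-level indexed version of the program guard $e$). Since $e$ is Boolean we have $\typee{\idx1 e}\subseteq\bool$, so \ruleref{Case} leaves exactly the two subgoals $\rhl{\CL{\idx1 e=z}\cap A}{\langif e\bc\bd}\Skip B$ for $z\in\{\true,\false\}$. Using $\CL{\idx1 e=\true}=\CL{\idx1 e}$ and $\CL{\idx1 e=\false}=\CL{\lnot\idx1 e}$, these two subgoals have exactly the preconditions appearing in the two premises, the only difference being that $\bc$ (resp.\ $\bd$) is replaced by the whole conditional on the left.

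For the $z=\true$ subgoal I would invoke \autoref{lemma:pure}: fix $m_1,m_2$ and normalized $\psi_1,\psi_2$ with $\psi_1\tensor\psi_2\in\denotee{\CL{\idx1 e}\cap A}{\memuni{m_1m_2}}$. The factor $\CL{\idx1 e}$ forces $\denotee{e}{m_1}=\true$ (otherwise that predicate evaluates to the zero subspace by \autoref{lemma:cl.rho}, and no such $\psi_1\tensor\psi_2$ exists), and the remaining factor gives $\psi_1\tensor\psi_2\in\denotee{\CL{\idx1 e}\cap A}{\memuni{m_1m_2}}$. Hence the first premise, via \autoref{lemma:pure}, yields a separable $\rho'$ satisfying $B$ whose marginals equal $\denotc{\idx1\bc}\paren{\pointstate{\cl{V_1}}{m_1}\tensor\proj{\psi_1}}$ and $\denotc{\idx2\Skip}\paren{\pointstate{\cl{V_2}}{m_2}\tensor\proj{\psi_2}}$. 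The key step is to check that this same $\rho'$ is a valid witness for the conditional. Because $\denotee{e}{m_1}=\true$, the definition of $\restricte\cdot$ gives $\restricte{\idx1 e}\paren{\pointstate{\cl{V_1}}{m_1}\tensor\proj{\psi_1}}=\pointstate{\cl{V_1}}{m_1}\tensor\proj{\psi_1}$ and $\restricte{\lnot\idx1 e}(\cdots)=0$, so the if-semantics collapses to $\denotc{\idx1(\langif e\bc\bd)}\paren{\pointstate{\cl{V_1}}{m_1}\tensor\proj{\psi_1}}=\denotc{\idx1\bc}\paren{\pointstate{\cl{V_1}}{m_1}\tensor\proj{\psi_1}}$; thus the left-marginal condition of \autoref{lemma:pure} holds with the same $\rho'$. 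The $z=\false$ subgoal is entirely symmetric, using the second premise and the fact that $\denotee{e}{m_1}=\false$ makes the conditional collapse to $\bd$.

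The only genuine obstacle is the bookkeeping around the restriction operator $\restricte\cdot$ in the if-semantics, but this is mild: on a classically deterministic initial state the operators $\restricte{\idx1 e}$ and $\restricte{\lnot\idx1 e}$ act as either the identity or the zero map, so exactly one branch survives and the marginals produced by the premise coincide with those demanded for the conditional. Once this observation is in place, \autoref{lemma:pure} closes both branches, and \ruleref{Case} then delivers $\rhl{A}{\langif e\bc\bd}\Skip B$.
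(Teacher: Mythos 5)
Your proposal is correct and follows essentially the same route as the paper's proof: case-split on the value of $\idx1 e$ via \ruleref{Case}, then observe that on states where the guard predicate is classically determined the operators $\restricte{\idx1 e}$ and $\restricte{\lnot\idx1 e}$ act as identity or zero, so the conditional's semantics collapses to the relevant branch and the witness from the premise carries over unchanged. The only (cosmetic) differences are that you apply \ruleref{Case} first and then discharge the subgoals via \autoref{lemma:pure} on deterministic classical memories, whereas the paper proves the two restricted judgments first for arbitrary separable $\rho$ satisfying $\CL{\idx1 e=b}\cap A$ and invokes \ruleref{Case} at the end.
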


\begin{proof}\stepcounter{claimstep}
  Let $e':=\idx1e$ and $\bc':=\idx1\bc$ and $\bd':=\idx1\bd$.
  Let $A_b:=\CL{e'=b}\cap A$ for $b=\true,\false$.
  \begin{claim}\label{claim:atrue}
    $\rhl{A_{\true}}{\langif e\bc\bd}\Skip B$.
  \end{claim}

  \begin{claimproof}
    Fix a separable $\rho$
    that satisfies $A_{\true}$.
    Since $\rhl{A_{\true}}\bc\Skip B$
    by assumption, there exists a separable $\rho'$
    that satisfies $B$ and such that
    \begin{align}
      \denotc{\bc'}\pb\paren{\partr{V_1}{V_2}\rho} = \partr{V_1}{V_2}\rho',
      \label{eq:bc'}
      \\
      \denotc{\idx1\Skip}\pb\paren{\partr{V_1}{V_2}\rho} = \partr{V_1}{V_2}\rho'.
      \label{eq:skip'}
    \end{align}
    Since $\rho$ satisfies $A_{\true}\subseteq\CL{e'=\true}$, and $\fv(e')\subseteq V_1$, we have that
    $\partr{V_1}{V_2}\rho$ satisfies $\CL{e'=\true}$, 
    and thus
    $\restricte {e'}(\partr{V_1}{V_2}\rho)=\partr{V_1}{V_2}\rho$ and $\restricte {\lnot e'}(\partr{V_1}{V_2}\rho)=0$.
    Thus
    \begin{equation}
      \pb\denotc{\langif{e'}{\bc'}{\bd'}}\pb\paren{\partr{V_1}{V_2}\rho} \starrel=
      \denotc{\bc'}\pb\paren{\restricte {e'}(\partr{V_1}{V_2}\rho)} + 
      \denotc{\bd'}\pb\paren{\restricte {\lnot e'}(\partr{V_1}{V_2}\rho)}
      =
      \denotc{\bc'}\pb\paren{\partr{V_1}{V_2}\rho}.
      \label{eq:ifsem.restr}
    \end{equation}
    Here $(*)$ follows from the semantics of the if statement.

    Thus
    \begin{equation}
      \pb\denotc{\idx1(\langif{e}{\bc}{\bd})}\pb\paren{\partr{V_1}{V_2}\rho}
      =
      \pb\denotc{\langif{e'}{\bc'}{\bd'}}\pb\paren{\partr{V_1}{V_2}\rho}
      \eqrefrel{eq:ifsem.restr}=
      \denotc{\bc'}\pb\paren{\partr{V_1}{V_2}\rho}
      \eqrefrel{eq:bc'}=
      \partr{V_1}{V_2}\rho'.
      \label{eq:idx1langif}
    \end{equation}
    
    Thus $\rho'$
    is separable and satisfies $B$,
    and \eqref{eq:idx1langif} and \eqref{eq:skip'} hold. Thus
    $\rhl{A_{\true}}{\langif e\bc\bd}\Skip B$ holds.
  \end{claimproof}

  \begin{claim}\label{claim:afalse}
    $\rhl{A_{\false}}{\langif e\bc\bd}\Skip B$.
  \end{claim}

  \begin{claimproof}
    Analogous to \autoref{claim:atrue}.
  \end{claimproof}

  Note that $\typee{e'}=\bool$.
  From \autoref{claim:atrue} and \autoref{claim:afalse} we thus have
  \begin{equation*}
    \forall z\in\typee{e'}.\quad
    \pb\rhl{\CL{e'=z}\cap A}{\langif e\bc\bd}\Skip B.
  \end{equation*}
  By \ruleref{Case}, this implies
  $\rhl{A}{\langif e\bc\bd}\Skip B$.
\end{proof}

\begin{lemma}[Conditionals (joint)]\label{rule-lemma:JointIf}
  \Ruleref{JointIf} holds.
\end{lemma}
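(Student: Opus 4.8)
The plan is to mirror the structure of the \ruleref{If1} proof (\autoref{rule-lemma:If1}): reduce the claim to two case analyses via \ruleref{Case}, one for each branch of the conditionals, and in each case exploit the precondition $A\subseteq\CL{\idx1e_1=\idx2e_2}$ to guarantee that \emph{both} programs take the same branch. Abbreviate $e_1':=\idx1e_1$ and $e_2':=\idx2e_2$. I would apply \ruleref{Case} with the Boolean expression $e_1'$, reducing the goal to proving, for each $z\in\{\true,\false\}$, the judgment $\rhl{\CL{e_1'=z}\cap A}{\langif{e_1}{\bc_1}{\bd_1}}{\langif{e_2}{\bc_2}{\bd_2}}{B}$.

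For the case $z=\true$: fix a separable $\rho$ satisfying $\CL{e_1'=\true}\cap A$. Since $\rho$ satisfies $A\subseteq\CL{e_1'=e_2'}$ (by \autoref{lemma:leq.sat}) and also $\CL{e_1'=\true}$, by \autoref{lemma:and.sat} and \autoref{lemma:cl.simps} it satisfies $\CL{e_1'\land e_2'}$ as well, hence it satisfies $\CL{\idx1e_1\land\idx2e_2}\cap A$. The premise $\rhl{\CL{\idx1e_1\land\idx2e_2}\cap A}{\bc_1}{\bc_2}B$ then yields a separable $\rho'$ that satisfies $B$ with marginals $\partr{V_1}{V_2}\rho'=\denotc{\idx1\bc_1}(\partr{V_1}{V_2}\rho)$ and $\partr{V_2}{V_1}\rho'=\denotc{\idx2\bc_2}(\partr{V_2}{V_1}\rho)$. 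I claim this same $\rho'$ witnesses the conclusion. Since $\partr{V_1}{V_2}\rho$ satisfies $\CL{e_1'=\true}$ (as $\fv(e_1')\subseteq\cl{V_1}$), the if-semantics collapse exactly as in \eqref{eq:ifsem.restr}, giving $\denotc{\idx1(\langif{e_1}{\bc_1}{\bd_1})}(\partr{V_1}{V_2}\rho)=\denotc{\idx1\bc_1}(\partr{V_1}{V_2}\rho)=\partr{V_1}{V_2}\rho'$; and since $\rho$ also satisfies $\CL{e_2'=\true}$ with $\fv(e_2')\subseteq\cl{V_2}$, the analogous collapse on the right gives $\denotc{\idx2(\langif{e_2}{\bc_2}{\bd_2})}(\partr{V_2}{V_1}\rho)=\partr{V_2}{V_1}\rho'$. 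Hence the true-case judgment holds.

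The case $z=\false$ is entirely symmetric, using the premise $\rhl{\CL{\lnot\idx1e_1\land\lnot\idx2e_2}\cap A}{\bd_1}{\bd_2}B$ and the else-branch collapse (where $\restricte{e_i'}$ vanishes while $\restricte{\lnot e_i'}$ acts as the identity). Having established both cases, \ruleref{Case} (applicable since $\typee{e_1'}=\bool$) delivers $\rhl{A}{\langif{e_1}{\bc_1}{\bd_1}}{\langif{e_2}{\bc_2}{\bd_2}}{B}$. I do not expect a genuine obstacle here: the argument is essentially a two-sided repackaging of \ruleref{If1}, and the only step requiring care is the derivation of $\CL{e_1'\land e_2'}$ from $\CL{e_1'=\true}$ together with $A\subseteq\CL{e_1'=e_2'}$. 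This is precisely where the hypothesis $A\subseteq\CL{\idx1e_1=\idx2e_2}$ is indispensable, since without it the left program could take the then-branch while the right takes the else-branch, mismatching $\bc_1$ with $\bd_2$ and invalidating the reduction to the two given premises.
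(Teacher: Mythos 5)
Your proposal is correct and follows essentially the same route as the paper's proof: a case split on $\idx1e_1$ via \ruleref{Case}, using $A\subseteq\CL{\idx1e_1=\idx2e_2}$ to show that $\CL{\idx1e_1=z}\cap A$ coincides with $\CL{\idx1e_1=z\land\idx2e_2=z}\cap A$ (the paper states this as a predicate identity, you argue it at the level of states — both are fine), followed by the two-sided collapse of the if-semantics. No gaps.
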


\begin{proof}\stepcounter{claimstep}
  Let $e_i':=\idx ie_i$ and $\bc'_i:=\idx i\bc_i$ and $\bd'_i:=\idx i\bd_i$.
  Let $A_b:=\CL{e'_1=b\land e'_2=b}\cap A$ for $b=\true,\false$.

  \begin{claim}\label{claim:ab.e1}
    $A_b=\CL{e_1'=b}\cap A$.
  \end{claim}

  \begin{claimproof} We have
    \begin{align*}
      A_b&=\pb\CL{e'_1=b\land e'_2=b}\cap A \starrel\subseteq
      \pb\CL{e'_1=b\land e'_2=b}\cap \pb\CL{e'_1=e'_2}\cap A
      \\&
      =
      \pb\CL{e'_1=b\land e'_2=b\land e'_1=e'_2}\cap A
          =
      \pb\CL{e'_1=b\land e'_1=e'_2}\cap A
      \subseteq
          \pb\CL{e'_1=b}\cap A
    \end{align*}
    and
    \begin{align*}
      \pb\CL{e'_1=b}\cap A
      &\starrel\subseteq
        \pb\CL{e'_1=b}\cap \pb\CL{e'_1=e'_2}\cap A
        =
        \pb\CL{e'_1=b\land e'_1=e'_2}\cap A
      \\&
          =
        \pb\CL{e'_1=b\land e_2'=b\land e'_1=e'_2}\cap A
        \subseteq
        \pb\CL{e'_1=b\land e_2'=b}\cap A
        =
        A_b.
    \end{align*}
    In both calculations, $(*)$
    holds since we have $A\subseteq\CL{e'_1=e'_2}$
    by assumption. 
  \end{claimproof}

  \begin{claim}\label{claim:atrue.joint}
    $\pb\rhl{\CL{e_1'}\cap A}{\langif{e_1}{\bc_1}{\bd_1}}{\langif{e_2}{\bc_2}{\bd_2}} B$.
  \end{claim}

  \begin{claimproof}
    Fix a separable $\rho$
    that satisfies ${\CL{e_1'}\cap A}$.
    By \autoref{claim:ab.e1}, $\rho$ satiesfies $A_{\true}$.
    Since $\rhl{A_{\true}}{\bc_1}{\bc_2} B$
    by assumption, there exists a separable $\rho'$
    that satisfies $B$ and such that
    \begin{align}
      \denotc{\bc_1'}\pb\paren{\partr{V_1}{V_2}\rho} = \partr{V_1}{V_2}\rho',
      \label{eq:bc'.joint}
      \\
      \denotc{\bc_2'}\pb\paren{\partr{V_2}{V_1}\rho} = \partr{V_2}{V_1}\rho',
      \label{eq:bc'2.joint}
    \end{align}
    Since $\rho$ satisfies $A_{\true}\subseteq\CL{e'_1}$, and $\fv(e_1')\subseteq V_1$, we have that
    $\partr{V_1}{V_2}\rho$ satisfies $\CL{e'_1}$, 
    and thus
    $\restricte {e'_1}(\partr{V_1}{V_2}\rho)=\partr{V_1}{V_2}\rho$ and $\restricte {\lnot e'_1}(\partr{V_1}{V_2}\rho)=0$.
    Thus
    \begin{equation}
      \pb\denotc{\langif{e'_1}{\bc'_1}{\bd'_1}}\pb\paren{\partr{V_1}{V_2}\rho} \starrel=
      \pb\denotc{\bc'_1}\pb\paren{\restricte {e'_1}(\partr{V_1}{V_2}\rho)} + 
      \denotc{\bd'_1}\pb\paren{\restricte {\lnot e'_1}(\partr{V_1}{V_2}\rho)}
      =
      \denotc{\bc'_1}(\partr{V_1}{V_2}\rho).
      \label{eq:ifsem.restr1}
    \end{equation}
    Here $(*)$ follows from the semantics of the if statement.
    Analogously,
    \begin{equation}\label{eq:ifsem.restr2}
      \pb\denotc{\langif{e'_2}{\bc'_2}{\bd'_2}}\pb\paren{\partr{V_2}{V_1}\rho}
      =
      \denotc{\bc'_2}\pb\paren{\partr{V_2}{V_1}\rho}.
    \end{equation}
    Thus
    \begin{gather}
      \pb\denotc{\idx1(\langif{e_1}{\bc_1}{\bd_1})}\pb\paren{\partr{V_1}{V_2}\rho}
      =
      \pb\denotc{\langif{e'_1}{\bc'_1}{\bd'_1}}\pb\paren{\partr{V_1}{V_2}\rho}
      \eqrefrel{eq:ifsem.restr1}=
      \denotc{\bc'_1}\pb\paren{\partr{V_1}{V_2}\rho}
      \eqrefrel{eq:bc'.joint}=
      \partr{V_1}{V_2}\rho',
      \label{eq:idx1langif1}
      \\
      \pb\denotc{\idx2(\langif{e_2}{\bc_2}{\bd_2})}\pb\paren{\partr{V_2}{V_1}\rho}
      =
      \pb\denotc{\langif{e'_2}{\bc'_2}{\bd'_2}}\pb\paren{\partr{V_2}{V_1}\rho}
      \eqrefrel{eq:ifsem.restr2}=
      \denotc{\bc'_2}\pb\paren{\partr{V_2}{V_1}\rho}
      \eqrefrel{eq:bc'2.joint}=
      \partr{V_2}{V_1}\rho'.
      \label{eq:idx1langif2}
    \end{gather}
    So $\rho'$
    is separable and satisfies $B$,
    and \eqref{eq:idx1langif1} and \eqref{eq:idx1langif2} hold. Thus
    \[
      \pb\rhl{A_{\true}}{\langif{e_1}{\bc_1}{\bd_1}}{\langif{e_2}{\bc_2}{\bd_2}} B
    \] holds.
  \end{claimproof}

  \begin{claim}\label{claim:afalse.joint}
    $\pb\rhl{\CL{\lnot e_1'}\cap A}{\langif{e_1}{\bc_1}{\bd_1}}{\langif{e_2}{\bc_2}{\bd_2}} B$.
  \end{claim}

  \begin{claimproof}
    Analogous to \autoref{claim:atrue.joint}.
  \end{claimproof}

  Note that $\typee{e_1'}=\bool$.
  From \autoref{claim:atrue.joint} and \autoref{claim:afalse.joint} we thus have
  \[
  \forall z\in\typee{e_1'}.\ \pb\rhl{\CL{e_1'=z}\cap A}{\langif{e_1}{\bc_1}{\bd_1}}{\langif{e_2}{\bc_2}{\bd_2}} B.
  \]
  By \ruleref{Case}, this implies
  $\rhl{A}{\langif{e_1}{\bc_1}{\bd_1}}{\langif{e_2}{\bc_2}{\bd_2}}B$.
\end{proof}

%\newpage

\begin{lemma}[While loops (one-sided)]\label{rule-lemma:While1}
  \Ruleref{While1} holds.
\end{lemma}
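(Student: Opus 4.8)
The plan is to work directly from \autoref{def:rhl}. I would fix a $(V_1,V_2)$-separable $\rho\in\traceposcq{V_1V_2}$ satisfying $A$ and exhibit a separable witness $\rho'$ satisfying $\CL{\lnot\idx1e}\cap A$ with the correct marginals. Writing $e':=\idx1e$ and $\bc':=\idx1\bc$ and $\zeta_i:=(\denotc{\bc'}\circ\restricte{e'})^i\pb\paren{\partr{V_1}{V_2}\rho}$, the while-semantics (\autoref{sec:qsemantics}) gives $\denotc{\idx1(\while e\bc)}\pb\paren{\partr{V_1}{V_2}\rho}=\sum_{i=0}^\infty\restricte{\lnot e'}(\zeta_i)$, so the target left marginal decomposes into the ``exit at round $i$'' contributions $\restricte{\lnot e'}(\zeta_i)$. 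The idea is to lift each round of the loop to a joint state that still satisfies $A$, and then sum the exit contributions.

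Concretely, I would construct separable states $\tau_i\in\traceposcq{V_1V_2}$, each satisfying $A$, by induction: set $\tau_0:=\rho$; given $\tau_i$, observe that $\restricte{e'}(\tau_i)$ is separable and, by \autoref{lemma:cl.rho} and \autoref{lemma:and.sat}, satisfies $\CL{e'}\cap A$ (restriction only deletes classical branches, so satisfaction of $A$ is preserved and only $e'$-true branches survive). Applying the premise $\rhl{\CL{\idx1e}\cap A}\bc\Skip A$ then yields a separable $\tau_{i+1}$ satisfying $A$ with $\partr{V_1}{V_2}\tau_{i+1}=\denotc{\bc'}\pb\paren{\restricte{e'}(\zeta_i)}=\zeta_{i+1}$ (using that $e'$ depends only on $\cl{V_1}$, so $\restricte{e'}$ commutes with $\partr{V_1}{V_2}$) and $\partr{V_2}{V_1}\tau_{i+1}=\denotc{\idx2\Skip}\pb\paren{\partr{V_2}{V_1}\restricte{e'}(\tau_i)}=\partr{V_2}{V_1}\restricte{e'}(\tau_i)$. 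I would then set $\rho':=\sum_{i=0}^\infty\restricte{\lnot e'}(\tau_i)$; each summand is separable and satisfies $\CL{\lnot e'}\cap A$, and the sum converges since its trace equals $\sum_i\tr\restricte{\lnot e'}(\zeta_i)<\infty$.

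The left marginal of $\rho'$ is immediately $\sum_i\restricte{\lnot e'}(\zeta_i)=\denotc{\idx1(\while e\bc)}\pb\paren{\partr{V_1}{V_2}\rho}$, and $\rho'$ satisfies the postcondition, so the only substantive point is the right marginal. Writing $\mu_i:=\partr{V_2}{V_1}\tau_i$ and using $\tau_i=\restricte{e'}(\tau_i)+\restricte{\lnot e'}(\tau_i)$ together with the marginal identities above, I obtain the telescoping relation $\mu_i=\mu_{i+1}+\partr{V_2}{V_1}\restricte{\lnot e'}(\tau_i)$, whence $\partr{V_2}{V_1}\rho'=\sum_i\pb\paren{\mu_i-\mu_{i+1}}=\mu_0-\lim_N\mu_{N+1}$. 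To finish I must show $\lim_N\mu_{N+1}=0$, and this is where totality enters: since $A\subseteq A_1\otimes\elltwov{\qu{V_2}}$ (via \autoref{lemma:leq.sat} and tracing out $V_2$), the state $\zeta_0=\partr{V_1}{V_2}\rho$ satisfies $A_1$, so totality of $\while{e'}{\bc'}$ on $A_1$ (\autoref{def:prog.total}) gives $\sum_i\tr\restricte{\lnot e'}(\zeta_i)=\tr\zeta_0$. Combined with the trace-decreasing estimate $\tr\zeta_{i+1}\le\tr\zeta_i-\tr\restricte{\lnot e'}(\zeta_i)$, this forces $\tr\zeta_{N+1}\to0$, hence $\tr\mu_{N+1}=\tr\zeta_{N+1}\to0$ and $\mu_{N+1}\to0$. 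Thus $\partr{V_2}{V_1}\rho'=\mu_0=\denotc{\idx2\Skip}\pb\paren{\partr{V_2}{V_1}\rho}$, completing the verification through \autoref{def:rhl}.

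The main obstacle I anticipate is exactly this totality argument: showing that the residual mass $\mu_{N+1}$ vanishes in the limit (so that no probability weight is lost in the nonterminating branch), carried out rigorously for possibly non-total loop bodies, where the inequality chain must be shown to become tight. A secondary technical care point is the bookkeeping of infinite sums of positive trace-class operators and the commutation of $\restricte{\lnot e'}$ with $\partr{V_2}{V_1}$; both are routine given the conventions of the preliminaries but need to be stated cleanly.
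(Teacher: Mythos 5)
Your proposal is correct and follows essentially the same route as the paper's proof: iterate the premise to build joint states $\tau_i$ satisfying $A$, take $\rho':=\sum_i\restricte{\lnot e'}(\tau_i)$, and use totality to control the right marginal. The only (inessential) difference is in the last step: you show the residual $\partr{V_2}{V_1}\tau_{N+1}$ tends to $0$ directly from $\tr\zeta_{N+1}\to0$, whereas the paper establishes $\partr{V_2}{V_1}\rho'\leq\partr{V_2}{V_1}\rho$ via the same telescoping identity and then concludes equality from $\tr\rho'=\tr\rho$.
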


\begin{proof}\stepcounter{claimstep}
  Let $e':=\idx1e$ and $\bc':=\idx1\bc$.

  Since $\pb\rhl{\CL{e'}\cap A}\bc\Skip A$,
  for any separable $\rho\in\traceposcq{V_1V_2}$
  that satisfies $\CL{e'}\cap A$,
  there is a separable $\rho'$ that satisfies $A$ and such that
  \begin{align*}
    \denotc{\bc'}\pb\paren{\partr{V_1}{V_2}\rho} = \partr{V_1}{V_2}\rho'
                                          \qquad\text{and}\qquad
    \partr{V_2}{V_1}\rho = \pb\denotc{\idx2\Skip}\pb\paren{\partr{V_2}{V_1}\rho} = \partr{V_2}{V_1}\rho'.
  \end{align*}
  Thus there exists a function (not a superoperator!)
  $E:\traceposcq{V_1V_2}\to\traceposcq{V_1V_2}$ such that for all separable $\rho$ that satisfy
  $\CL{e'}\cap A$, we have that $E(\rho)$ is separable and satisfies $A$, and
  \begin{equation}
    \denotc{\bc'}\pb\paren{\partr{V_1}{V_2}\rho} = \partr{V_1}{V_2} E(\rho)
    \qquad\text{and}\qquad
    \partr{V_2}{V_1}\rho = \partr{V_2}{V_1} E(\rho).
    \label{eq:E.def}
  \end{equation}

  Fix some separable $\rho\in\traceposcq{V_1V_2}$
  that satisfies $A$.
  To prove the rule, we need to show that there exists a separable
  $\rho'$ such that $\rho'$ satisfies $\CL{\lnot e'}\cap A$, and
  \begin{align*}
    \pb\denotc{\idx1(\while e\bc)}(\rho_1) = \partr{V_1}{V_2}\rho'
                                          \qquad\text{and}\qquad
     \denotc{\idx2\Skip}(\rho_2) = \partr{V_1}{V_2}\rho'
  \end{align*}
  where
  \[
    \rho_1:=\partr{V_1}{V_2}\rho
    \qquad\text{and}\qquad
    \rho_2:=\partr{V_2}{V_1}\rho.
  \]
  
  Let $\hat\rho_0:=\rho$,
  and $\hat\rho_{i+1}:=E\pb\paren{\restricte{e'}(\hat\rho_i)}$.
  Let $\rho'_i:=\restricte {\lnot e'}(\hat\rho_i)$.
  Let $\rho':=\sum_{i=0}^\infty\rho'_i$. (Existence of $\rho'$ is shown below in \autoref{claim:rho'.exi}.)

  \begin{claim}\label{claim:rhoi.A}
    $\hat\rho_i$ is separable and satisfies $A$.
    $\rho'_i$ is separable and satisfies $\CL{\lnot e'}\cap A$.
  \end{claim}

  \begin{claimproof}
    We prove that     $\hat\rho_i$ is separable and satisfies $A$ by induction over $i$.
    For $i=0$,
    this follows since $\hat\rho_0=\rho$
    is separable and satisfies $A$
    by assumption. For $i>0$,
    by induction hypothesis, $\rho_{i-1}$
    is separable and satisfies $A$.
    Thus $\restricte {e'}(\rho_{i-1})$
    is separable and satisfies $\CL{e'}\cap A$.
    Thus $\rho_i=E\pb\paren{\restricte {e'}(\rho_{i-1})}$
    is separable and satisfies $A$.

    To show that $\rho'_i$
    is separable and satisfies $\CL{\lnot e'}A$,
    note that $\rho'_i=\restricte{\lnot e}'(\hat\rho_i)$.
    Since $\hat\rho_i$
    is separable and satisfies $A$,
    $\rho'_i$ is separable and satisfies $\CL{\lnot e'}\cap A$,
  \end{claimproof}

  \begin{claim}\label{claim:rhoi'.formula}
    $\partr{V_1}{V_2}\rho_i' =
      \restricte {\lnot e'}\bigl((\denotc{\bc'}\circ\restricte {e'})^i(\rho_1)\bigr)$.
  \end{claim}

  \begin{claimproof}
    By \autoref{claim:rhoi.A}, and by definition of $E$, we have
    \[
    \partr{V_1}{V_2}\hat\rho_i=
    \partr{V_1}{V_2}E\pb\paren{\restricte{e'}(\hat\rho_{i-1})}
    =
    \denotc{\bc'}\pb\paren{\partr{V_1}{V_2}\restricte {e'}(\hat\rho_{i-1})}
    =
    \pb\paren{\denotc{\bc'}\circ\restricte {e'}}\pb\paren{\partr{V_1}{V_2}\hat\rho_{i-1}}
    \qquad
    \text{for }i\geq1.
    \]
    Since $\partr{V_1}{V_2}\hat\rho_0=\partr{V_1}{V_2}\rho=
    \pb\paren{\denotc{\bc'}\circ\restricte {e'}}^0(\rho_1)$, it follows by induction that
    $\partr{V_1}{V_2}\hat\rho_i =
    \pb\paren{\denotc{\bc'}\circ\restricte {e'}}^i(\rho_1)$ for $i\geq0$.
    And thus 
    $\partr{V_1}{V_2}\rho_i' =
    \partr{V_1}{V_2}\,\restricte {\lnot e'}(\hat\rho_i)=
    \restricte {\lnot e'}\pb\paren{\partr{V_1}{V_2}\hat\rho_i}=
    \restricte {\lnot e'}\bigl((\denotc{\bc'}\circ\restricte {e'})^i(\rho_1)\bigr)$.
  \end{claimproof}

  \begin{claim}\label{claim:rho'.exi}
    $\rho'\in\traceposcq{V_1V_2}$ exists, and $\tr\rho' = \tr\rho$.
  \end{claim}

  \begin{claimproof}
    Since $\rho$
    satisfies $A\subseteq A_1\otimes\elltwov{\qu{V_2}}$,
    we have that $\rho_1=\partr{V_1}{V_2}\rho$
    satisfies $A_1$.
    Since $\while{e'}{\bc'}$
    is total on $A_1$
    by assumption, we have
    $\tr\denotc{\while{e'}{\bc'}}(\rho_1)=\tr\rho_1$ .

    We have
    \begin{align*}
      \sum_{i=0}^\infty \tr \rho'_i
      &=
        \sum_{i=0}^\infty \tr \partr{V_1}{V_2} \rho'_i
        \starrel=
        \sum_{i=0}^\infty \tr
        \restricte {\lnot e'}\bigl((\denotc{\bc'}\circ\restricte {e'})^i(\rho_1)\bigr)
        =
        \tr      \sum_{i=0}^\infty 
        \restricte {\lnot e'}\bigl((\denotc{\bc'}\circ\restricte {e'})^i(\rho_1)\bigr)
      \\&
          \starstarrel=
          \tr \, \pb\denotc{\while{e'}{\bc'}}(\rho_1)
          =
          \tr \rho_1
          = \tr \rho.
    \end{align*}
    Here $(*)$
    follows from \autoref{claim:rhoi'.formula}.
    And $(**)$ follows from the semantics of while statements.

    Since $ \sum_{i=0}^\infty \tr \rho'_i = \tr\rho <\infty$,
    we have that $\rho'=\sum_{i=0}^\infty\rho'_i$ exists and $\tr\rho'=\tr\rho$.
  \end{claimproof}
  
  \begin{claim}\label{claim:rho'.CLnot.A}
    $\rho'$ is separable and satisfies $\CL{\lnot e'}\cap A$.
  \end{claim}

  \begin{claimproof}
    Since $\rho'_i$
    is separable and satisfies $\CL{\lnot e'}\cap A$
    by \autoref{claim:rhoi.A}, we immediately have that
    $\rho'=\sum_i\rho'_i$
    is separable and satisfies $\CL{\lnot e'}\cap A$.
  \end{claimproof}
  
  \begin{claim}\label{claim:rho'.while}
    $\partr{V_1}{V_2}\rho' = \pb\denotc{\idx1(\while{e}{\bc})}(\rho_1)$.
  \end{claim}

  \begin{claimproof}
    We have
    \begin{align*}
      \partr{V_1}{V_2}\rho' &=
        \sum_{i=0}^\infty \partr{V_1}{V_2} \rho'_i
        \starrel=
        \sum_{i=0}^\infty
        \restricte {\lnot e'}\bigl((\denotc{\bc'}\circ\restricte {e'})^i(\rho_1)\bigr)
                              \starstarrel=
         \pb\denotc{\while{e'}{\bc'}}(\rho_1)
                              =
                              \pb\denotc{\idx1(\while{e}{\bc})}(\rho_1).
    \end{align*}
    Here $(*)$ follows from \autoref{claim:rhoi'.formula}.
    And $(**)$ follows from the semantics of while statements.
  \end{claimproof}

  \begin{claim}
    $\partr{V_2}{V_1}\rho' \leq \tr\rho_2$.
  \end{claim}

  \begin{claimproof}
    We first show for all $n\geq 0$,
    \begin{equation}
      \partr{V_2}{V_1}\pB\paren{\hat\rho_n+\sum_{i=0}^{n-1}\rho'_i} = \rho_2.
      \label{eq:rho2-sum}
    \end{equation}
    We show this by induction over $n$.
    For $n=0$, this follows since $\hat\rho_0=\rho$ and $\rho_2=\partr{V_2}{V_1}\rho$ by definition.
    Assume \eqref{eq:rho2-sum} holds for $n$, then
    \begin{align*}
      \partr{V_2}{V_1}\pB\paren{\hat\rho_{n+1}+\sum_{i=0}^{n}\rho'_i}
      &=
      \partr{V_2}{V_1}E\pb\paren{\restricte{e'}(\hat\rho_{n})}+\partr{V_2}{V_1}\pB\paren{\sum_{i=0}^{n}\rho'_i}
      \eqrefrel{eq:E.def}=
      \partr{V_2}{V_1}\,\restricte{e'}(\hat\rho_{n})
        +\partr{V_2}{V_1}\pB\paren{\sum_{i=0}^{n}\rho'_i} \\
      &=
        \partr{V_2}{V_1}\pb\paren{\restricte{e'}(\hat\rho_{n})
        +\underbrace{\rho'_n}_{\hskip-1in{}=\restricte{\lnot e'}(\hat\rho_n)\hskip-1in}\,}
        +\partr{V_2}{V_1}\pB\paren{\sum_{i=0}^{n-1}\rho'_i}
        =
        \partr{V_2}{V_1}\hat\rho_n
        +\partr{V_2}{V_1}\pB\paren{\sum_{i=0}^{n-1}\rho'_i}
        \eqrefrel{eq:rho2-sum}=
        \rho_2.
    \end{align*}
    Here \eqref{eq:E.def} can be applied since $\hat\rho_n$
    is separable and satisfies $A$
    by \autoref{claim:rhoi.A} and thus $\restricte{e'}(\hat\rho_n)$
    is separable and satisfies $\CL{e'}\cap A$.
    Thus we have shown \eqref{eq:rho2-sum} for $n+1$,
    and thus \eqref{eq:rho2-sum} holds for all $n$ by induction.

    Then for $n\geq 0$,
    \begin{equation*}
      \sum_{i=0}^n\rho'_i
      =
      \rho'_n + 
      \sum_{i=0}^{n-1}\rho'_i
      \starrel=
      \restricte{\lnot e'}(\hat\rho_n)
      +
      \sum_{i=0}^{n-1}\restricte{\lnot e'}(\rho'_i)
      =
      \restricte{\lnot e'}\pB\paren{\hat\rho_n
      +
      \sum_{i=0}^{n-1}\rho'_i}
      \starstarrel\leq
      \hat\rho_n
      +
      \sum_{i=0}^{n-1}\rho'_i.
    \end{equation*}
    Here $(*)$
    holds since $\rho'_n=\restricte{\lnot e'}(\hat\rho_n)$
    by definition, and
    $\restricte{\lnot e'}(\rho'_i)=\restricte{\lnot
      e'}\pb\paren{\restricte{\lnot e'}(\hat\rho_i)}= \restricte{\lnot
      e'}(\hat\rho_i)=\rho'_i$.  And $(**)$
    holds since for any $\rho$,
    $\restricte{e'}(\rho)=\rho-\restricte{\lnot e'}(\rho)$
    and $\restricte{\lnot e'}(\rho)\geq 0$,
    and thus $\restricte{e'}(\rho)\leq\rho$.

    Since superoperators are monotonous and $\partr{V_2}{V_1}$
    is a superoperator, it follows
    \begin{equation}\label{eq:rho2bound}
      \sum_{i=0}^n       \partr{V_2}{V_1} \rho'_i
      =
      \partr{V_2}{V_1}\pB\paren{\sum_{i=0}^n\rho'_i}
      \leq \partr{V_2}{V_1}\pB\paren{
            \hat\rho_n
      +
      \sum_{i=0}^{n-1}\rho'_i.
      }
      \eqrefrel{eq:rho2-sum}=
      \rho_2.
    \end{equation}
    Thus also
    \begin{align*}
      \partr{V_2}{V_1}\rho'
      =
      \sum_{i=0}^\infty \partr{V_2}{V_1}\rho'_i
      =
      \sup_n
      \sum_{i=0}^n \partr{V_2}{V_1}\rho'_i
      \eqrefrel{eq:rho2bound}\leq
      \rho_2.
      \mathQED
    \end{align*}
  \end{claimproof}

  \begin{claim}\label{claim:rho'.skip}
    $\partr{V_2}{V_1}\rho'=\pb\denotc{\idx2\Skip}(\rho_2)$.
  \end{claim}

  \begin{claimproof}
    We have
    \[
      \tr \partr{V_2}{V_1}\rho'
      = \tr\rho'
      \starrel= \tr\rho
      = \tr\rho_2
    \]
    where $(*)$
    is by \autoref{claim:rho'.exi}. And we have
    $\partr{V_2}{V_1}\rho' \leq \rho_2$.
    Together, this implies $\partr{V_2}{V_1}\rho' =
    \rho_2$. (Otherwise $\rho_2-\partr{V_2}{V_1}\rho'$
    would be positive, nonzero, and have zero trace.)
    Hence
    \begin{gather*}
      \partr{V_2}{V_1}\rho'=
      \rho_2
      =
      \denotc{\Skip}(\rho_2)
      =
      \pb\denotc{\idx2\Skip}(\rho_2).
      \mathQED
    \end{gather*}
  \end{claimproof}

  To summarize, for any separable $\rho\in\traceposcq{V_1V_2}$
  that satisfies $A$,
  there is a $\rho'\in\traceposcq{V_1}$
  that is separable and satisfies $\CL{\lnot \idx1e}\cap A$
  (\autoref{claim:rho'.CLnot.A}), and such that
  $\pb\denotc{\idx1(\while
    e\bc)}\pb\paren{\partr{V_1}{V_2}\rho}=\partr{V_1}{V_2}\rho'$
  (\autoref{claim:rho'.while}) and
  $\pb\denotc{\idx2\Skip}\pb\paren{\partr{V_2}{V_1}\rho}=\partr{V_2}{V_1}\rho'$
  (\autoref{claim:rho'.skip}). This implies 
  $\pb\rhl{A}{\while e\bc}\Skip {\CL{\lnot \idx1e}\cap A}$.
\end{proof}

\begin{lemma}[While loops (joint)]\label{rule-lemma:JointWhile}
  \Ruleref{JointWhile} holds.
\end{lemma}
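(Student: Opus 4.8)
The plan is to mirror the structure of the proof of \ruleref{While1} (\autoref{rule-lemma:While1}), but to iterate \emph{both} loops simultaneously; the extra ingredient that makes this work is the premise $A\subseteq\CL{\idx1e_1=\idx2e_2}$, which forces the two loops to run in lockstep. Write $e_1':=\idx1e_1$, $e_2':=\idx2e_2$, $\bc':=\idx1\bc$, $\bd':=\idx2\bd$. The first thing I would establish is a \emph{synchronization lemma}: whenever $\sigma$ satisfies $A$, the fact that $A\subseteq\CL{e_1'=e_2'}$ means that on the classical support of $\sigma$ the guards $e_1'$ and $e_2'$ have equal truth values, so $\restricte{e_1'}(\sigma)=\restricte{e_2'}(\sigma)=\restricte{e_1'\land e_2'}(\sigma)$ and likewise $\restricte{\lnot e_1'}(\sigma)=\restricte{\lnot e_2'}(\sigma)$. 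This identity is what will let me push a guard-restriction through $\partr{V_1}{V_2}$ on one side and through $\partr{V_2}{V_1}$ on the other, even though each guard only mentions the variables of one side.

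Next, from the premise $\rhl{\CL{e_1'\land e_2'}\cap A}\bc\bd A$ I would extract a one-step map $E:\traceposcq{V_1V_2}\to\traceposcq{V_1V_2}$ (as in \autoref{rule-lemma:While1}) such that, for every separable $\sigma$ satisfying $\CL{e_1'\land e_2'}\cap A$, the operator $E(\sigma)$ is separable, satisfies $A$, and has marginals $\partr{V_1}{V_2}E(\sigma)=\denotc{\bc'}(\partr{V_1}{V_2}\sigma)$ and $\partr{V_2}{V_1}E(\sigma)=\denotc{\bd'}(\partr{V_2}{V_1}\sigma)$. Fixing an arbitrary separable $\rho$ satisfying $A$ and setting $\rho_1:=\partr{V_1}{V_2}\rho$, $\rho_2:=\partr{V_2}{V_1}\rho$, I would define the iterates $\hat\rho_0:=\rho$, $\hat\rho_{i+1}:=E\bigl(\restricte{e_1'\land e_2'}(\hat\rho_i)\bigr)$, the exit pieces $\rho_i':=\restricte{\lnot e_1'}(\hat\rho_i)$, and finally $\rho':=\sum_{i\ge0}\rho_i'$. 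A routine induction (using that restriction preserves separability and $A$, and that $\restricte{e_1'\land e_2'}(\hat\rho_i)$ satisfies $\CL{e_1'\land e_2'}\cap A$ so that $E$ applies) shows each $\hat\rho_i$ is separable and satisfies $A$, hence each $\rho_i'$ is separable and, invoking the synchronization lemma to rewrite $\restricte{\lnot e_1'}$ as $\restricte{\lnot e_1'\land\lnot e_2'}$, satisfies $\CL{\lnot e_1'\land\lnot e_2'}\cap A$.

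The heart of the argument is the pair of marginal formulas. Using the synchronization lemma to replace $\restricte{e_1'\land e_2'}$ by $\restricte{e_1'}$ before tracing out $V_2$ (and by $\restricte{e_2'}$ before tracing out $V_1$), together with the marginal behaviour of $E$, an induction on $i$ gives $\partr{V_1}{V_2}\hat\rho_i=(\denotc{\bc'}\circ\restricte{e_1'})^i(\rho_1)$ and $\partr{V_2}{V_1}\hat\rho_i=(\denotc{\bd'}\circ\restricte{e_2'})^i(\rho_2)$. Consequently $\partr{V_1}{V_2}\rho_i'=\restricte{\lnot e_1'}\bigl((\denotc{\bc'}\circ\restricte{e_1'})^i(\rho_1)\bigr)$ and, again via synchronization, $\partr{V_2}{V_1}\rho_i'=\restricte{\lnot e_2'}\bigl((\denotc{\bd'}\circ\restricte{e_2'})^i(\rho_2)\bigr)$. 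Summing over $i$ and using the denotational definition of the while loop yields $\partr{V_1}{V_2}\rho'=\denotc{\idx1(\while{e_1}\bc)}(\rho_1)$ and $\partr{V_2}{V_1}\rho'=\denotc{\idx2(\while{e_2}\bd)}(\rho_2)$, exactly the marginals required by \autoref{def:rhl}.

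Two points deserve care. First, convergence of $\rho'$: since $\tr\rho_i'=\tr\partr{V_1}{V_2}\rho_i'$, summing the first marginal formula gives $\sum_i\tr\rho_i'=\tr\denotc{\while{e_1'}{\bc'}}(\rho_1)\le\tr\rho_1<\infty$ because $\denotc{\while{e_1'}{\bc'}}$ is a trace-non-increasing superoperator, so $\rho'\in\traceposcq{V_1V_2}$ exists. Notably this needs \emph{no} totality hypothesis; the contrast with \ruleref{While1} is precisely that there the companion program is $\Skip$ (which never loses trace), so matching its full-trace output forced a totality assumption, whereas here both sides lose trace in lockstep and the traces match automatically. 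Second, once the two marginal formulas and separability plus satisfaction of $\CL{\lnot e_1'\land\lnot e_2'}\cap A$ are in hand, \autoref{def:rhl} immediately delivers $\rhl{A}{\while{e_1}\bc}{\while{e_2}{\bd}}{\CL{\lnot\idx1e_1\land\lnot\idx2e_2}\cap A}$. I expect the main obstacle to be the bookkeeping in the synchronization lemma and its repeated use to commute guard-restrictions past the two partial traces, which is exactly where the premise $A\subseteq\CL{\idx1e_1=\idx2e_2}$ is genuinely indispensable.
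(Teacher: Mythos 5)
Your proposal is correct and follows essentially the same route as the paper's proof: the same one-step map $E$ extracted from the loop-body judgment, the same iterates $\hat\rho_i$ and exit pieces summed to form $\rho'$, the same synchronization claim derived from $A\subseteq\CL{\idx1e_1=\idx2e_2}$ to commute guard-restrictions past the partial traces, and the same trace-decreasing argument for convergence without any totality hypothesis. The only cosmetic difference is that you define $\rho_i'$ via $\restricte{\lnot e_1'}$ and invoke synchronization afterwards, whereas the paper defines it directly via $\restricte{\lnot e_1'\land\lnot e_2'}$; these coincide on states satisfying $A$.
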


\begin{proof}\stepcounter{claimstep}
  Let $e'_i:=\idx ie_i$ for $i=1,2$ and $\bc':=\idx1\bc$
  and $\bd':=\idx2\bd$.

  Since $\pb\rhl{\CL{e'_1\land e'_2}\cap A}\bc\bd A$ by assumption,
  for any separable $\rho\in\traceposcq{V_1V_2}$
  that satisfies $\CL{e'_1\land e'_2}\cap A$,
  there is a separable $\rho'$ that satisfies $A$ and such that
  \begin{align*}
    \denotc{\bc'}\pb\paren{\partr{V_1}{V_2}\rho} = \partr{V_1}{V_2}\rho'
                                          \qquad\text{and}\qquad
    \denotc{\bd'}\pb\paren{\partr{V_2}{V_1}\rho} = \partr{V_2}{V_1}\rho'.
  \end{align*}
  Thus there exists a function (not a superoperator!)
  $E:\traceposcq{V_1V_2}\to\traceposcq{V_1V_2}$ such for separable $\rho$ that satisfies 
  $\CL{e_1'\cap e_2'}\cap A$, we have that $E(\rho)$ is separable and satisfies $A$, and
  \begin{equation}
    \denotc{\bc'}\pb\paren{\partr{V_1}{V_2}\rho} = \partr{V_1}{V_2} E(\rho)
    \qquad\text{and}\qquad
    \denotc{\bd'}\pb\paren{\partr{V_2}{V_1}\rho} = \partr{V_2}{V_1} E(\rho).
    \label{eq:E.def-joint}
  \end{equation}

  Fix some separable $\rho\in\traceposcq{V_1V_2}$
  that satisfies $A$.
  To prove the lemma, we need to show that there exists a separable
  $\rho'$ such that $\rho'$ satisfies $\CL{\lnot e'_1\land\lnot e'_2}\cap A$, and
  \begin{align*}
    \pb\denotc{\idx1(\while{e_1}\bc)}(\rho_1) = \partr{V_1}{V_2}\rho'
                                          \qquad\text{and}\qquad
    \pb\denotc{\idx2(\while{e_2}\bd)}(\rho_2) = \partr{V_2}{V_1}\rho'
  \end{align*}
  where
  \[
    \rho_1:=\partr{V_1}{V_2}\rho
    \qquad\text{and}\qquad
    \rho_2:=\partr{V_2}{V_1}\rho.
  \]
  
  Let $\hat\rho_0:=\rho$,
  and $\hat\rho_{i+1}:=E\pb\paren{\restricte{e_1'\land e_2'}(\hat\rho_i)}$.
  Let $\rho'_i:=\restricte {\lnot e'_1\land \lnot e'_2}(\hat\rho_i)$.
  Let $\rho':=\sum_{i=0}^\infty\rho'_i$. (Existence of $\rho'$ is shown below in \autoref{claim:rho'.exi-joint}.)

  \begin{claim}\label{claim:rhoi.A-joint}
    $\hat\rho_i$ is separable and satisfies $A$.
    $\rho'_i$ is separable and satisfies $\CL{\lnot e'_1\land\lnot e'_2}\cap A$.
  \end{claim}

  \begin{claimproof}
    We prove that     $\hat\rho_i$ is separable and satisfies $A$ by induction over $i$.
    For $i=0$,
    this follows since $\hat\rho_0=\rho$
    is separable and satisfies $A$
    by assumption. For $i>0$,
    by induction hypothesis, $\hat\rho_{i-1}$
    is separable and satisfies $A$.
    Thus $\restricte {e'_1\land e'_2}(\hat\rho_{i-1})$
    is separable and satisfies $\CL{e'_1\land e'_2}\cap A$.
    Thus $\hat\rho_i=E(\restricte {e'_1\land e'_2}(\hat\rho_{i-1}))$
    is separable and satisfies $A$ by definition of $E$.

    To show that $\rho'_i$
    is separable and satisfies $\CL{\lnot e'_1\land\lnot e'_2}\cap A$,
    note that $\rho'_i=\restricte{\lnot e'_1\land\lnot e'_2}(\hat\rho_i)$.
    Since $\hat\rho_i$
    is separable and satisfies $A$,
    $\rho'_i$ is separable and satisfies $\CL{\lnot e'_1\land\lnot e'_2}\cap A$.
  \end{claimproof}

  \begin{claim}\label{claim:restrict.same}
    For any $\sigma\in\traceposcq{V_1V_2}$
    that satisfies $A$,
    it holds that
    $\restricte{e'_1\land
      e'_2}(\sigma)=\restricte{e'_1}(\sigma)=\restricte{e'_2}(\sigma)$
    and
    $\restricte{\lnot e'_1\land \lnot e'_2}(\sigma)=\restricte{\lnot
      e'_1}(\sigma)=\restricte{\lnot e'_2}(\sigma)$.
  \end{claim}

  \begin{claimproof}
    Since $\sigma\in\traceposcq{V_1V_2}$,
    we can decompose $\sigma$
    as
    $\sigma=\sum_m\pb\proj{\basis{m}{\cl{V_1}\cl{V_2}}}\otimes\sigma_m=:\sum_m\sigma'_m$.
    Let $S:=\{m:\sigma_m\neq 0\}$.
    Let $S_1:=\pb\braces{m\in S:\denotee{e_1'}{m}}$ and
    $S_2:=\pb\braces{m\in S:\denotee{e_2'}{m}}$.
    Then 
    \begin{equation}\label{eq:sigma.decomps}
      \restricte{e'_1\land e'_2}(\sigma)=\sum_{m\in S_1\cap S_2}\sigma'_m,
                                          \qquad\qquad
      \restricte{e'_1}(\sigma)=\sum_{m\in S_1}\sigma'_m,
                                          \qquad\qquad
      \restricte{e'_2}(\sigma)=\sum_{m\in S_2}\sigma'_m.
    \end{equation}
    Since $\sigma$
    satisfies $A\subseteq\CL{e_1'=e_2'}$,
    we have $\sigma_m=0$
    for all $m$
    with $\denotee{e'_1}{m}\neq\denotee{e'_2}{m}$.
    Thus for all $m\in S$,
    $\denotee{e'_1}{m}=\denotee{e'_2}{m}$.
    Hence $S_1=S_2=S_1\cap S_2$.
    With \eqref{eq:sigma.decomps}, this implies
    $\restricte{e'_1\land
      e'_2}(\sigma)=\restricte{e'_1}(\sigma)=\restricte{e'_2}(\sigma)$.

    The fact
    $\restricte{\lnot e'_1\land \lnot e'_2}(\sigma)=\restricte{\lnot
      e'_1}(\sigma)=\restricte{\lnot e'_2}(\sigma)$ is shown
    analogously.
  \end{claimproof}

  \begin{claim}\label{claim:rhoi'.formula-joint}
    $\partr{V_1}{V_2}\rho_i' =
      \restricte {\lnot e_1'}\bigl((\denotc{\bc'}\circ\restricte {e_1'})^i(\rho_1)\bigr)$ for all $i\geq 0$.
  \end{claim}

  \begin{claimproof}
    For $i\geq1$, we have
    \begin{align*}
      \partr{V_1}{V_2}\hat\rho_i
      &=
        \partr{V_1}{V_2}E\pb\paren{\restricte{e_1'\land e_2'}(\hat\rho_{i-1})}
        \eqrefrel{eq:E.def-joint}=
        \denotc{\bc'}\pb\paren{\partr{V_1}{V_2}\restricte{e_1'\land e_2'}(\hat\rho_{i-1})}
        \starrel=
        \denotc{\bc'}\pb\paren{\partr{V_1}{V_2}\restricte{e_1'}(\hat\rho_{i-1})} 
\\&
        \starstarrel=
        \denotc{\bc'}\pb\paren{\restricte{e_1'}(\partr{V_1}{V_2}\hat\rho_{i-1})}
        =\pb\paren{\denotc{\bc'}\circ\restricte{e_1'}}(\partr{V_1}{V_2}\hat\rho_{i-1})
        .
    \end{align*}
    Here \eqref{eq:E.def-joint} can be applied because
    $\hat\rho_{i-1}$
    satisfies $A$ by \autoref{claim:rhoi.A-joint}
    and thus $\restricte{e_1'\land e_2'}(\hat\rho_{i-1})$
    satisfies $\CL{e_1'\land e_2'}\cap A$.
    $(*)$ follows from \autoref{claim:restrict.same} and the fact that
    $\hat\rho_{i-1}$
    satisfies $A$
    by \autoref{claim:rhoi.A-joint}.  And $(**)$
    uses the fact that $\fv(e_1')=\fv(\idx1e_1)\subseteq \cl{V_1}$.

    Since $\partr{V_1}{V_2}\hat\rho_0=\rho_1$
    by definition of $\hat\rho_0$
    and $\rho_1$,
    we get
    $\partr{V_1}{V_2}\hat\rho_i= \pb\paren{\denotc{\bc'}\circ\restricte {e_1'}}^i(\rho_1)$ for all $i\geq0$.

    And thus 
    \begin{align*}
      \partr{V_1}{V_2}\rho_i' =
      \partr{V_1}{V_2}\restricte {\lnot e'_1\land\lnot e'_2}(\hat\rho_i)
      \starrel=
      \partr{V_1}{V_2}\restricte {\lnot e'_1}(\hat\rho_i)=
      \restricte {\lnot e'_1}\pb\paren{\partr{V_1}{V_2}\hat\rho_i}=
      \restricte {\lnot e'_1}\bigl((\denotc{\bc'}\circ\restricte {e'_1})^i(\rho_1)\bigr).
    \end{align*}
    Here $(*)$
    uses \autoref{claim:restrict.same} and the fact that $\hat\rho_i$
    satisfies $A$ by \autoref{claim:rhoi.A-joint}.
  \end{claimproof}

  \begin{claim}\label{claim:rhoi'.formula-joint2}
    $\partr{V_2}{V_1}\rho_i' =
    \restricte {\lnot e_2'}\bigl((\denotc{\bd'}\circ\restricte {e_2'})^i(\rho_2)\bigr)$.
  \end{claim}
  
  \begin{claimproof}
    Analogous to \autoref{claim:rhoi'.formula-joint}.
  \end{claimproof}

  \begin{claim}\label{claim:rho'.exi-joint}
    $\rho'\in\traceposcq{V_1V_2}$ exists.
  \end{claim}

  \begin{claimproof}
    We have
    \begin{align*}
      \sum_{i=0}^\infty \tr \rho'_i
      &=
        \sum_{i=0}^\infty \tr \partr{V_1}{V_2} \rho'_i
        \starrel=
        \sum_{i=0}^\infty \tr
        \restricte {\lnot e_1'}\bigl((\denotc{\bc'}\circ\restricte {e'_1})^i(\rho_1)\bigr)
        =
        \tr      \sum_{i=0}^\infty 
        \restricte {\lnot e'_1}\bigl((\denotc{\bc'}\circ\restricte {e'_1})^i(\rho_1)\bigr)
      \\&
          \starstarrel=
          \tr\,\pb \denotc{\while{e'_1}{\bc'}}(\rho_1)
          \tristarrel\leq
          \tr \rho_1
          = \tr \rho.
    \end{align*}
    Here $(*)$
    follows from \autoref{claim:rhoi'.formula-joint}.  
    And $(**)$ follows from the semantics of while statements.
    And $(*{*}*)$
    follows since $ \denotc{\while{e'_1}{\bc'}}$ is a superoperator and thus trace-decreasing.

    Since $ \sum_{i=0}^\infty \tr \rho'_i \leq \tr\rho <\infty$,
    we have that $\rho'=\sum_{i=0}^\infty\rho_i'$ exists.
  \end{claimproof}

  \begin{claim}\label{claim:rho'.CLnot.A-joint}
    $\rho'$ is separable and satisfies $\CL{\lnot e'_1\land \lnot e'_2}\cap A$.
  \end{claim}

  \begin{claimproof}
    Since $\rho'_i$
    is separable and satisfies $\CL{\lnot e'_1\land \lnot e'_2}\cap A$
    by \autoref{claim:rhoi.A-joint}, we immediately have that
    $\rho'=\sum_i\rho'_i$
    is separable and satisfies $\CL{\lnot e'_1\land\lnot e'_2}\cap A$.
  \end{claimproof}

  \begin{claim}\label{claim:rho'.while-joint}
    $\partr{V_1}{V_2}\rho' = \pb\denotc{\idx1(\while{e_1}{\bc})}(\rho_1)$
    and
    $\partr{V_2}{V_1}\rho' = \pb\denotc{\idx2(\while{e_2}{\bd})}(\rho_2)$
  \end{claim}

  \begin{claimproof}
    We have
    \begin{align*}
      \partr{V_1}{V_2}\rho' &=
        \sum_{i=0}^\infty \partr{V_1}{V_2} \rho'_i
        \starrel=
        \sum_{i=0}^\infty
        \restricte {\lnot e'_1}\bigl((\denotc{\bc'}\circ\restricte {e'_1})^i(\rho_1)\bigr)
                              \starstarrel=
         \pb\denotc{\while{e'_1}{\bc'}}(\rho_1)
                              =
                              \pb\denotc{\idx1(\while{e_1}{\bc})}(\rho_1).
    \end{align*}
    Here $(*)$
    follows from \autoref{claim:rhoi'.formula-joint}.
    And $(**)$ follows from the semantics of while statements.

    The fact $\partr{V_2}{V_1}\rho' = \pb\denotc{\idx2(\while{e_2}{\bd})}(\rho_2)$
    is shown analogously, using \autoref{claim:rhoi'.formula-joint2}.
  \end{claimproof}

  To summarize, for any separable $\rho\in\traceposcq{V_1V_2}$
  that satisfies $A$,
  there is a $\rho'\in\traceposcq{V_1}$
  that is separable and satisfies $\CL{\lnot e'_1\land\lnot e'_2}\cap A$
  (\autoref{claim:rho'.CLnot.A-joint}), and such that
  $\pb\denotc{\idx1(\while
   {e_1}\bc)}\pb\paren{\partr{V_1}{V_2}\rho}=\partr{V_1}{V_2}\rho'$
  (\autoref{claim:rho'.while-joint}) and
  $\pb\denotc{\idx2(\while
   {e_2}\bd)}\pb\paren{\partr{V_2}{V_1}\rho}=\partr{V_2}{V_1}\rho'$
 (\autoref{claim:rho'.while-joint}). This implies
 \begin{gather*}
   \pb\rhl{A}{\while{e_1}\bc}{\while{e_2}{\bd}}{\CL{\lnot e'_1\land\lnot e'_2}\cap A}.
   \mathQED
 \end{gather*}
\end{proof}

\subsection{Rules for quantum statements}
\label{sec:proofs-quantum}

\begin{lemma}[Measurements (one-sided)]\label{rule-lemma:Measure1}
  \Ruleref{Measure1} holds.
\end{lemma}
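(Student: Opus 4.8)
The plan is to reduce to pure initial states via \autoref{lemma:pure} and then hand over an explicit witness. Concretely, I would fix $m_1\in\types{\cl{V_1}}$, $m_2\in\types{\cl{V_2}}$ and normalized $\psi_1\in\elltwov{\qu{V_1}}$, $\psi_2\in\elltwov{\qu{V_2}}$ with $\psi_1\tensor\psi_2\in\denotee A{\memuni{m_1m_2}}$, and abbreviate $P_z:=\denotee{e'_z}{m_1}$, i.e.\ the projector $\denotee{\idx1e}{m_1}(z)$ lifted to $\idx1Q$ (an operator on $\elltwov{\qu{V_1}\qu{V_2}}$ acting as the identity on $\qu{V_2}$, by \autoref{def:lift}). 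The candidate witness is
\[
  \rho':=\sum_{z\in\typev{\xx_1}}\pB\proj{\basis{\cl{V_1}\cl{V_2}}{\upd{m_1}{\xx_1}z\,m_2}\tensor P_z\psi_1\tensor\psi_2},
\]
which is manifestly separable and lies in $\traceposcq{V_1V_2}$ (the sum converges since $\tr\rho'=\sum_z\norm{P_z\psi_1}^2\leq\norm{\psi_1}^2=1$).

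Next I would verify the three conditions of \autoref{lemma:pure}. For the left marginal, using $P_z=\adj{P_z}$ gives $\proj{P_z\psi_1}=P_z\proj{\psi_1}P_z$, so $\partr{V_1}{V_2}\rho'=\sum_z\pointstate{\cl{V_1}}{\upd{m_1}{\xx_1}z}\tensor P_z\proj{\psi_1}P_z$, which is exactly $\denotc{\idx1(\Qmeasure\xx Qe)}\pb\paren{\pointstate{\cl{V_1}}{m_1}\tensor\proj{\psi_1}}$ by the measurement semantics of \autoref{sec:qsemantics}. For the right marginal, $\partr{V_2}{V_1}\rho'=\pb\paren{\sum_z\norm{P_z\psi_1}^2}\cdot\pointstate{\cl{V_2}}{m_2}\tensor\proj{\psi_2}$; here I invoke the first conjunct $A\subseteq\CL{\idx1e\text{ is a total measurement}}$. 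Since $\psi_1\tensor\psi_2\neq0$ lies in $\denotee A{\memuni{m_1m_2}}\subseteq\denotee{\CL{\dots}}{\memuni{m_1m_2}}$, the latter subspace is nonzero, hence (by \autoref{def:cla}) equals $\elltwov{\qu V}$, so the classical predicate holds and $\sum_zP_z=\id$. Therefore $\sum_z\norm{P_z\psi_1}^2=\norm{\psi_1}^2=1$, and $\partr{V_2}{V_1}\rho'=\pointstate{\cl{V_2}}{m_2}\tensor\proj{\psi_2}=\denotc{\idx2\Skip}\pb\paren{\pointstate{\cl{V_2}}{m_2}\tensor\proj{\psi_2}}$.

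The main step — and the one that justifies the peculiar shape of the precondition — is showing that $\rho'$ satisfies $B$. It suffices, for each $z$, to prove $P_z\psi_1\tensor\psi_2\in\denotee{\substi B{z/\xx_1}}{\memuni{m_1m_2}}=\denotee B{\upd{m_1}{\xx_1}z\,m_2}$, since that is the classical index carried by the $z$-th summand. From the second conjunct of $A$ I obtain $\psi_1\tensor\psi_2\in(\denotee{\substi B{z/\xx_1}}{\memuni{m_1m_2}}\cap\im P_z)+\orth{(\im P_z)}$, so I can write $\psi_1\tensor\psi_2=v+w$ with $v\in\denotee{\substi B{z/\xx_1}}{\memuni{m_1m_2}}\cap\im P_z$ and $w\in\orth{(\im P_z)}$. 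Applying the orthogonal projector $P_z$ fixes $v$ and annihilates $w$, whence $P_z\psi_1\tensor\psi_2=P_z(v+w)=v\in\denotee{\substi B{z/\xx_1}}{\memuni{m_1m_2}}$, as desired. I expect this projector-decomposition argument to be the crux; the totality bookkeeping and the marginal computations are routine once the witness is written down. Assembling the three verified properties and applying \autoref{lemma:pure} then yields $\rhl A{\Qmeasure\xx Qe}\Skip B$.
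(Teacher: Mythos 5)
Your proof is correct and follows essentially the same route as the paper's: reduce to pure states via \autoref{lemma:pure}, take the witness $\rho'=\sum_z\proj{\basis{\cl{V_1}\cl{V_2}}{\upd{m_1}{\xx_1}z\,m_2}\tensor P_z\psi_1\tensor\psi_2}$, use totality of the measurement for the right marginal, and argue that the $\orth{(\im P_z)}$-component is annihilated while the $B\cap\im P_z$-component is fixed by $P_z$. The only differences are notational (the paper writes the lifted projector as $P_z'\otimes\idv{\qu{V_2}}$), so there is nothing to add.
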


\begin{proof}
  Fix   $m_1\in\types{\cl{V_1}}$,
  $m_2\in\types{\cl{V_2}}$ and normalized $\psi_1\in\elltwov{\qu{V_1}}$,
  $\psi_2\in\elltwov{\qu{V_2}}$ such that $\psi_1\tensor \psi_2\in \denotee A{\memuni{m_1m_2}}$.

  Let $M:=\denotee{\idx1e}{\memuni{m_1m_2}}$.
  Let $P_z:=M(z)=\denotee{(\idx1e)(z)}{\memuni{m_1m_2}}$.
  From the well-typedness of $\Qmeasure\xx eQ$,
  it follows that
  $M\in\Meas{\typev\xx}{\typel Q}=\Meas{\typev{\xx_1}}{\typel{\idx1Q}}$.
  Thus  $P_z$
  are mutually orthogonal projectors on $\elltwo{\typel{\idx1\!Q}}$   for $z\in\typev{\xx_1}$.
  Let
  $P_z':=\Uvarnames{Q'}
  P_z\adj{\Uvarnames{Q'}}\otimes\idv{\qu{V_1}\setminus Q'}$.  Then 
  $P_z'$
  are mutually orthogonal projectors on $\elltwov{\qu{V_1}}$
  for $z\in\typev{\xx_1}$,
  and $P_z'\otimes\idv{\qu{V_2}}=\denotee{e_z'}{\memuni{m_1m_2}}$.

  Since $\psi_1\tensor \psi_2\in \denotee A{\memuni{m_1m_2}}\subseteq
  \pb\denotee{\CL{e\text{ is a total measurement}}}{\memuni{m_1m_2}}$,
  we have
  $\denotee{e\text{ is a total measurement}}{\memuni{m_1m_2}}=\true$.
  Thus $M$ is a total measurement, i.e., $\sum_zP_z=\id$. Thus $\sum_z P_z'=\idv{\qu{V_1}}$.
  This implies
  \begin{equation}
    \label{eq:sumpz.psi1}
    \sum_{z\in\typev{\xx_1}} \!\! \tr \proj{P'_z\psi_1}
    = 
    \!\! \sum_{z\in\typev{\xx_1}} \!\! \pb\norm{P'_z\psi_1}^2
    \ \starrel=\ 
    \pB\norm{\, \overbrace{\sum_{z\in\typev{\xx_1}} P'_z}^{{}=\idv{\qu{V_1}}}\psi_1}^2
    =
    \norm{\psi_1}^2 = 1.
  \end{equation}
  Here $(*)$ uses that all $P'_z\psi_1$ are orthogonal (since the $P'_z$ are orthogonal).

  For all $z\in\typev{\xx_1}$,
  let
  $\phi_z:=\pb\basis{\cl{V_1}\cl{V_2}}{\upd{m_1}{\xx_1}{z}\, m_2} \tensor
  P'_z\psi_1 \tensor \psi_2$. Let
  $\rho':=\sum_{z\in\typev{\xx_1}}\proj{\phi_z}$.

  We have
  \begin{align}
    &\pb\denotc{\idx1(\Qmeasure{\xx}Qe)}\pB\paren{\pb\proj{\basis{\cl{V_1}}{m_1}\tensor\psi_1}}
      \notag\\&\hskip.6in
      =\pb\denotc{\Qmeasure{\xx_1}{Q'}{\idx1e}}\pB\paren{\pb\proj{\basis{\cl{V_1}}{m_1}\tensor\psi_1}}
      \notag\\&\hskip.6in
    \starrel=
                \!\!\sum_{z\in\typev{\xx_1}}\!\!\! \pB\proj{\pb\basis{\cl{V_1}}{\upd {m_1}{\xx_1}z} \tensor P'_z\psi_1}
%    \notag\\&
        =
               \partr{V_1}{V_2}\rho',
    \label{eq:rmeas.1}           \\
    &\pb\denotc{\idx2\Skip}\pB\paren{\pb\proj{\basis{\cl{V_2}}{m_2}\tensor\psi_2}}
      =\denotc{\Skip}\pB\paren{\pb\proj{\basis{\cl{V_2}}{m_2}\tensor\psi_2}}
      \starrel=
      \pb\proj{\basis{\cl{V_2}}{m_2}\tensor\psi_2}
      \notag\\&\hskip.6in
      \eqrefrel{eq:sumpz.psi1}=
                \!\!\sum_{z\in\typev{\xx_1}}\!\!\! \pb\proj{\basis{\cl{V_2}}{m_2}\tensor\psi_2} \cdot \tr\proj{P'_z\psi_1}
        =
      \partr{V_2}{V_1}\rho'.
      \label{eq:rmeas.2}
  \end{align}
  Here $(*)$ is by definition of the semantics of the measurement statement and the $\Skip$ statement.

  For all $z\in\typev{\xx_1}$, we have
  \begin{align*}
    \psi_1\tensor\psi_2 &\in\denotee A{\memuni{m_1m_2}}
                          \subseteq
                          \pB\denotee{{\substi B{z/\xx_1}\cap \im e'_z + \orth{(\im e'_z)} }}{\memuni{m_1m_2}} \\
    &=
      \denotee{B}{\upd{m_1}{\xx_1}z\,m_2}
      \cap\im(P'_z\otimes\idv{\qu{V_2}})
      +
      \orth{\pb\paren{\im(P'_z\otimes\idv{\qu{V_2}})}}.
  \end{align*}
  Since the $\orth{\pb\paren{\im(P'_z\otimes\idv{\qu{V_2}})}}$-part of $\psi_1\tensor\psi_2$ vanishes
  under $P'_z\tensor \idv{\qu{V_2}}$, and since the $\pb\paren{\denotee{B}{\upd{m_1}{\xx_1}z\,m_2}
  \cap\im(P'_z\otimes\idv{\qu{V_2}})}$-part is invariant under  $P'_z\tensor \idv{\qu{V_2}}$ and lies in $\denotee{B}{\upd{m_1}{\xx_1}z\,m_2}$, we have
  \begin{equation*}
    P'_z\psi_1\tensor\psi_2
    =
    (P'_z\tensor \idv{\qu{V_2}})(\psi_1\tensor\psi_2)
    \in       \denotee B{\upd{m_1}{\xx_1}z\, m_2}.
  \end{equation*}
  Thus
  \begin{equation*}
    \proj{\phi_z}=\pB\proj{\pb\basis{\cl{V_1}\cl{V_2}}{\upd{m_1}{\xx_1}{z}\, m_2} \tensor
     P'_z\psi_1 \tensor \psi_2}
    \text{ satisfies } B.
  \end{equation*}
  Thus $\rho':=\sum_{z\in\typev{\xx_1}}\proj{\phi_z}$ satisfies $B$.
  Furthermore, $\rho'$ is separable.

  Thus for all $m_1,m_2,\psi_1,\psi_2$
  with $\psi_1\tensor \psi_2\in\denotee{A}{\memuni{m_1m_2}}$,
  there is a separable $\rho'$
  that satisfies~$B$
  and such that \eqref{eq:rmeas.1} and \eqref{eq:rmeas.2} hold.
  By \autoref{lemma:pure}, this implies
  $\pb\rhl A{\Qmeasure{\xx}Qe}\Skip B$.
\end{proof}

\begin{lemma}[Measurements (joint)]\label{rule-lemma:JointMeasure}
  \Ruleref{JointMeasure} holds.
\end{lemma}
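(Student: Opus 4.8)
The plan is to follow the template of the proofs of \Ruleref{Measure1} and \Ruleref{JointMeasureSimple}: reduce to pure initial states via \autoref{lemma:pure} and then exhibit an explicit separable witness $\rho'$ for the final state. So first I would fix $m_1\in\types{\cl{V_1}}$, $m_2\in\types{\cl{V_2}}$ and normalized $\psi_1\in\elltwov{\qu{V_1}}$, $\psi_2\in\elltwov{\qu{V_2}}$ with $\psi_1\tensor\psi_2$ lying in the precondition evaluated at $\memuni{m_1m_2}$, and abbreviate the evaluated data: the projectors $P_{1x}:=\denotee{\idx1e_1(x)}{\memuni{m_1m_2}}$ and $P_{2y}:=\denotee{\idx2e_2(y)}{\memuni{m_1m_2}}$, their liftings $P'_{1x},P'_{2y}$ to $Q_1',Q_2'$, the isometries $U_1,U_2$, and the relation $F:=\denotee{f}{\memuni{m_1m_2}}$. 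Well-typedness of the measurement statements guarantees that the $P_{1x}$ (resp.\ $P_{2y}$) are mutually orthogonal projectors.

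Next I would exploit the quantum equality in the precondition. Since $\psi_1\tensor\psi_2\in(U_1Q_1'\quanteq U_2Q_2')$, \autoref{lemma:quanteq} factors $\psi_1=\psi_1^Q\tensor\psi_1^Y$ and $\psi_2=\psi_2^Q\tensor\psi_2^Y$ with all factors normalized and $U_1\adj{\Uvarnames{Q_1'}}\psi_1^Q=U_2\adj{\Uvarnames{Q_2'}}\psi_2^Q=:\psi\in\elltwo Z$. Using that $U_1,U_2$ are isometries (so $\adj{U_1}U_1=\adj{U_2}U_2=\id$) together with the matching condition $C_e$, which via \autoref{lemma:cl.rho} gives $U_1P_{1x}\adj{U_1}=U_2P_{2y}\adj{U_2}$ for $(x,y)\in F$, I would establish the crucial probability-matching identity
\[
  \norm{P'_{1x}\psi_1}^2=\langle\psi,U_1P_{1x}\adj{U_1}\psi\rangle=\langle\psi,U_2P_{2y}\adj{U_2}\psi\rangle=\norm{P'_{2y}\psi_2}^2
\]
for every $(x,y)\in F$. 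Combined with $C_f$, which makes the assignment of outcomes a partial bijection among the nonzero projectors, this shows that the \emph{active} outcomes (those of positive probability) on the two sides correspond bijectively and carry equal probabilities.

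I would then set $p_{xy}:=\norm{P'_{1x}\psi_1}^2$ and define the witness
\[
  \rho' := \!\!\!\sum_{\substack{(x,y)\in F\\ p_{xy}>0}}\!\!\! \pb\proj{\basis{\cl{V_1}\cl{V_2}}{\upd{m_1}{\xx_1}{x}\,\upd{m_2}{\yy_2}{y}}} \tensor \frac{\proj{P'_{1x}\psi_1 \tensor P'_{2y}\psi_2}}{p_{xy}},
\]
which is manifestly separable, positive, and of finite trace ($\leq\norm{\psi_1}^2=1$, using that the measurements need \emph{not} be total here). Because $p_{xy}=\norm{P'_{1x}\psi_1}^2=\norm{P'_{2y}\psi_2}^2$ on matched pairs, tracing out $V_2$ (resp.\ $V_1$) collapses each summand to $\pb\proj{\basis{\cl{V_1}}{\upd{m_1}{\xx_1}{x}}}\tensor\proj{P'_{1x}\psi_1}$ (resp.\ the symmetric term), and the bijection from $C_f$ ensures exactly one summand survives per active outcome, so the two marginals equal $\denotc{\idx1(\Qmeasure\xx{Q_1}{e_1})}$ and $\denotc{\idx2(\Qmeasure\yy{Q_2}{e_2})}$ applied to the respective pure inputs. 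To see that $\rho'$ satisfies $B$, I would reuse the \Ruleref{Measure1} argument with two projectors simultaneously: the $(x,y)$-summand of $A$ places $\psi_1\tensor\psi_2$ into $(\substi B{x/\xx_1,y/\yy_2}\cap\im e'_{1x}\cap\im e'_{2y}) + \orth{(\im e'_{1x})} + \orth{(\im e'_{2y})}$, and applying $P'_{1x}\tensor P'_{2y}$ annihilates the two orthogonal-complement parts while fixing the first, so $P'_{1x}\psi_1\tensor P'_{2y}\psi_2\in\denotee B{\upd{m_1}{\xx_1}x\,\upd{m_2}{\yy_2}y}$, making each summand satisfy $B$. \autoref{lemma:pure} then delivers the rule.

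The main obstacle is getting the marginals \emph{exactly} right. The normalizing division by $p_{xy}$ is essential and is legitimate only because \autoref{def:rhl}'s separable formulation lets me choose $\rho'$ afresh for each pure initial state — this is precisely the division that fails to be a superoperator in the uniform definition (cf.\ \autoref{sec:alt.def}). Making it rigorous requires the three side conditions to lock together: $C_e$ and the $\quanteq$-decomposition deliver the equal outcome probabilities $p_{xy}$, while $C_f$ guarantees the matching is a genuine bijection on positive-probability outcomes so that no outcome is double-counted or dropped. I expect the careful bookkeeping of which pairs are active — in particular verifying that an active outcome on one side forces its unique partner to be active on the other — to be the most delicate part.
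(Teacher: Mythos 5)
Your proposal is correct and follows essentially the same route as the paper's proof: the same reduction via \autoref{lemma:pure}, the same use of \autoref{lemma:quanteq} plus $C_e$ to establish $\norm{P'_{1x}\psi_1}=\norm{P'_{2y}\psi_2}$ on matched pairs (the paper's Claim~1), the same use of $C_f$ to turn the sum over outcomes into a sum over $f$, and the same witness $\rho'$ — your normalization by $p_{xy}$ coincides with the paper's $1/\norm{P'_{2y}\psi_2}^2$ once the norm equality is in hand. The "delicate part" you flag (an active outcome forcing its partner to be active) is exactly what that norm-equality claim settles, so no gap remains.
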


\begin{proof}\stepcounter{claimstep}
  Let $A^*:=C_f\cap C_e\cap A\cap(u_1Q_1'\quanteq u_2Q_2')$.

  Fix   $m_1\in\types{\cl{V_1}}$,
  $m_2\in\types{\cl{V_2}}$ and normalized $\psi_1\in\elltwov{\qu{V_1}}$,
  $\psi_2\in\elltwov{\qu{V_2}}$ such that $\psi_1\tensor \psi_2\in \denotee{A^*}{\memuni{m_1m_2}}$.

  Let $M_i:=\denotee{\idx ie_i}{\memuni{m_1m_2}}$.
  Let $P_{iz}:=M_i(z)=\denotee{\idx ie_i(z)}{\memuni{m_1m_2}}$.
  From the well-typedness of $\Qmeasure\xx{Q_1}{e_1}$
  and $\Qmeasure\yy{Q_2}{e_2}$,
  it follows that
  $M_1\in\Meas{\typev\xx}{\typel {Q_1}}=\Meas{\typev{\xx_1}}{\typel{Q_1'}}$
  and  $M_2\in\Meas{\typev\yy}{\typel {Q_2}}=\Meas{\typev{\yy_2}}{\typel{Q_2'}}$.
  Thus $P_{iz}$ 
  for $z\in\typev{\xx_1}$ or $z\in\typev{\yy_2}$, respectively,
  are mutually orthogonal projectors on $\elltwov{Q_i}$.
  Let
  \begin{equation}
    P_{iz}':=\Uvarnames{Q'_i}
    P_{iz}\adj{\Uvarnames{Q'_i}}\otimes\idv{\qu{V_i}\setminus Q'_i}.
    \label{eq:Piz'.def}
  \end{equation}
  Then the
  $P_{iz}'$
  with $z\in\typev{\xx_1}$ or  $z\in\typev{\yy_2}$, respectively,
  are mutually orthogonal projectors on $\elltwov{\qu{V_2}}$,
  and $P_{iz}'\otimes\idv{\qu{V_{3-i}}}=\denotee{e_{iz}'}{\memuni{m_1m_2}}$.

  Let $U_i:=\denotee{u_i}{\memuni{m_1m_2}}\in\iso{\typel{Q_i},Z}$.
  Let
  $U_i':=U_i\adj{\Uvarnames{Q_i'}}
  \in\iso{\types{Q_i'},Z}$.

  Let $R:=\denotee{f}{\memuni{m_1m_2}}\subseteq\typev\xx\times\typev\yy$. Since $\psi_1\tensor\psi_2\in\denotee{A^*}{\memuni{m_1m_2}}\subseteq C_f$, we have
  \begin{align*}
    &\forall x.\
    \pb\denotee{\idx1 e_1(x)}{\memuni{m_1m_2}}\neq0 \implies \pb\abs{\{y:(x,y)\in R\}}=1
    \\\text{and}\qquad
    &\forall y.\
    \pb\denotee{\idx2 e_2(y)}{\memuni{m_1m_2}}\neq0 \implies \pb\abs{\{x:(x,y)\in R\}}=1.
  \end{align*}
  Since $P_{iz}=\denotee{\idx i e_i(z)}{\memuni{m_1m_2}}$, this implies
  \begin{gather}
    \forall x.\
    P_{1x}\neq0 \implies \pb\abs{\{y:(x,y)\in R\}}=1
    \qquad\text{and}\qquad
    \forall y.\
    P_{2y}\neq0 \implies  \pb\abs{\{x:(x,y)\in R\}}=1.
    \label{eq:P.R}
  \end{gather}

  \begin{claim}\label{claim:samelen} For all $(x,y)\in R$,
    $\norm{P'_{1x} \psi_1}=\norm{P'_{2y}\psi_2}$.
  \end{claim}

  \begin{claimproof}
    Since
    $\psi_1\otimes\psi_2\in\denotee{A^*}{\memuni{m_1m_2}}\subseteq\denotee{u_1Q_1'\quanteq
      u_2Q_2'}{\memuni{m_1m_2}}=(U_1Q_1'\quanteq U_2Q_2')$, by
    \autoref{lemma:quanteq}, there are normalized
    $\psi^X_1\in\elltwov{Q_1'}$,
    $\psi^Y_1\in\elltwov{\qu{V_1}\setminus Q_1'}$,
    $\psi_2^X\in\elltwov{Q_2'}$,
    $\psi_2^Y\in\elltwov{\qu{V_2}\setminus Q_2'}$ such that
    \begin{equation}
      U_1'\psi_1^X=U_1\adj{\Uvarnames{Q_1'}}\psi_1^X=U_2\adj{\Uvarnames{Q_2'}}\psi_2^X
      =      U_2'\psi_2^X
      \text{ and }
      \psi_1=\psi_1^X\otimes\psi_1^Y
      \text{ and }
      \psi_2=\psi_2^X\otimes\psi_2^Y.
      \label{eq:psi1psi2}
    \end{equation}
    
    Since
    \begin{equation*}
      \psi_1\otimes\psi_2\in\denotee{A^*}{\memuni{m_1m_2}}\subseteq\denotee{C_e}{\memuni{m_1m_2}}
      = \pb\CL{\forall(x,y)\in R.\ U_1P_{1x}\adj{U_1}=U_2P_{2y}\adj{U_2}},
    \end{equation*}
    it holds that
    \begin{equation}
      U_1P_{1x}\adj{U_1}=U_2P_{2y}\adj{U_2}
      \label{eq:U1P1.U2P2}
    \end{equation}
    (note that $(x,y)\in R$ by assumption of the claim).

    We have\vspace{-12pt}
    \begin{align*}
      \pb\norm{P_{1x}'\psi_1}
      \ \ &\txtrel{\eqref{eq:Piz'.def},\eqref{eq:psi1psi2}}=\ \ 
        \pb\norm{(\Uvarnames{Q_1'}P_{1x}\adj{\Uvarnames{Q_1'}} \otimes \idv{\qu{V_1}\setminus Q_1'})(\psi_1^X\otimes\psi_1^Y)}
      =
      \pb\norm{\Uvarnames{Q_1'}P_{1x}\adj{\Uvarnames{Q_1'}}\psi_1^X} \cdot \overbrace{\pb\norm{ \idv{\qu{V_1}\setminus Q_1'}\psi_1^Y}}^{{}=1} \\
      &\starrel=
        \pb\norm{U_1P_{1x}\adj{\Uvarnames{Q_1'}}\psi_1^X}
        \starstarrel=
        \pb\norm{U_1P_{1x}\adj{U_1}U_1\adj{\Uvarnames{Q_1'}}\psi_1^X}
        \eqrefrel{eq:U1P1.U2P2}=
        \pb\norm{U_2P_{2y}\adj{U_2}U_1\adj{\Uvarnames{Q_1'}}\psi_1^X}
        \\&
            \tristarrel=
            \pb\norm{U_2P_{2y}\adj{U_2}U_1'\psi_1^X}
      \eqrefrel{eq:psi1psi2}=
            \pb\norm{U_2P_{2y}\adj{U_2}U_2'\psi_2^X}
            \tristarrel=
            \pb\norm{U_2P_{2y}\adj{U_2}U_2\adj{\Uvarnames{Q_2'}}\psi_2^X}
            \starstarrel=
            \pb\norm{U_2P_{2y}\adj{\Uvarnames{Q_2'}}\psi_2^X}
      \\
      &\starrel=
        \pb\norm{\Uvarnames{Q_2'}P_{2y}\adj{\Uvarnames{Q_2'}}\psi_2^X}
        =
        \pb\norm{\Uvarnames{Q_2'}P_{2y}\adj{\Uvarnames{Q_2'}}\psi_2^X} \cdot \underbrace{\pb\norm{ \idv{\qu{V_2}\setminus Q_2'}\psi_2^Y}}_{{}=1}
        \\&
        =
        \pb\norm{(\Uvarnames{Q_2'}P_{2y}\adj{\Uvarnames{Q_2'}} \otimes \idv{\qu{V_2}\setminus Q_2'})(\psi_2^X\otimes\psi_2^Y)} 
      \ \ \txtrel{\eqref{eq:Piz'.def},\eqref{eq:psi1psi2}}=\ \ 
      \pb\norm{P_{2y}'\psi_2}.
    \end{align*}
    Here $(*)$ follows since $\Uvarnames{Q_i'}$ and $U_i$ and are isometries, so neither change the norm.
    $(**)$ follows since $U_1$ and $U_2$ are isometries, and thus $\adj{U_1}U_1=\id$ and $\adj{U_2}U_2=\id$.
    And $(*{*}*)$ holds since $U_1'=\adj{U_1\Uvarnames{Q_1'}}$ and $U_2'=\adj{U_2\Uvarnames{Q_2'}}$ .
  \end{claimproof}

  For all $(x,y)\in R$,
  let
  \[
    \phi_{xy} := 
    \frac1{\norm{P'_{2y}\psi_2}}\cdot \pb\basis{\cl{V_1}\cl{V_2}}{\upd{m_1}{\xx_1}{x}\, \upd{m_2}{\yy_2}{y}} \tensor
    P'_{1x}\psi_1 \tensor P'_{2y}\psi_2.
  \]
  (When $P'_{2y}\psi_2=0$, let $\phi_{xy}:=0$.)
  Let
  $\rho':=\sum_{(x,y)\in R}\proj{\phi_{xy}}$.

  We have
  \begin{align}
    \partr{V_1}{V_2} \, \proj{\phi_{xy}} &=
    \frac1{\norm{P'_{2y}\psi_2}^2}
    \cdot
    \pB\proj{\pb\basis{\cl{V_1}}{\upd{m_1}{\xx_1}{x}}
    \tensor
    P'_{1x}\psi_1}\cdot \underbrace{\tr\proj{P'_{2y}\psi_2}}_{{}=\norm{P'_{2y}\psi_2}^2}
    \notag\\
    &=
    \pB\proj{\pb\basis{\cl{V_1}}{\upd{m_1}{\xx_1}{x}}
    \tensor
      P'_{1x}\psi_1}.
      \label{eq:tr1.phixy}
  \end{align}
  And we have
  \begin{align}
    \partr{V_2}{V_1} \, \proj{\phi_{xy}} &=
    \underbrace{ \frac{\norm{P'_{1x}\psi_1}^2}{\norm{P'_{2y}\psi_2}^2}}
      _{{}=1\text{ by \autoref{claim:samelen}}}
    \cdot
    \pB\proj{\pb\basis{\cl{V_2}}{\upd{m_2}{\yy_2}{y}}
    \tensor    P'_{2y}\psi_2}
                                         =
    \pB\proj{\pb\basis{\cl{V_2}}{\upd{m_2}{\yy_2}{y}}
    \tensor    P'_{2y}\psi_2}.
                                         \label{eq:tr2.phixy}
  \end{align}

  We have
  \begin{align}
    &\pb\denotc{\idx1(\Qmeasure{\xx}{Q_1}{e_1})}\pB\paren{\pb\proj{\basis{\cl{V_1}}{m_1}\tensor\psi_1}}
      \notag\\&\qquad
      =\denotc{\Qmeasure{\xx_1}{Q'_1}{\idx1e_1}}\pB\paren{\pb\proj{\basis{\cl{V_1}}{m_1}\tensor\psi_1}}
      \notag\\&\qquad
                \starrel=
                \sum_{x\in\typev{\xx_1}} \pB\proj{\pb\basis{\cl{V_1}}{\upd {m_1}{\xx_1}x} \tensor P'_{1x}\psi_1}
                =
                \sum_{x\in\typev{\xx_1}} \pB\proj{\pb\basis{\cl{V_1}}{\upd {m_1}{\xx_1}x} \tensor P'_{1x}\psi_1} \cdot
                \underbrace{\pb\abs{\{y:(x,y)\in R\}}}_{{}=1\text{ when }P'_{1x}\neq 0\text{ by }\eqref{eq:P.R}}
                \notag
    \\
    &\qquad
      =
    \sum_{(x,y)\in R} \pB\proj{\pb\basis{\cl{V_1}}{\upd {m_1}{\xx_1}x} \tensor P'_{1x}\psi_1}
      \eqrefrel{eq:tr1.phixy}=
      \sum_{(x,y)\in R}  \partr{V_1}{V_2} \proj{\phi_{xy}}
        =
        \partr{V_1}{V_2}\rho'.
    \label{eq:rmeas.1-joint}           
  \end{align}
  Here $(*)$ is by definition of the semantics of the measurement statement.

  And analogously (using \eqref{eq:tr2.phixy} instead of
  \eqref{eq:tr1.phixy}), we get
  \begin{equation}
    \pb\denotc{\idx2(\Qmeasure{\yy}{Q_2}{e_2})}\pB\paren{\pb\proj{\basis{\cl{V_2}}{m_2}\tensor\psi_2}}
    =
    \partr{V_2}{V_1}\rho'.
    \label{eq:rmeas.2-joint}          
  \end{equation}

  For any $(x,y)\in R=\denotee f{\memuni{m_1m_2}}$, we have
  \begin{align*}
    \psi_1\tensor\psi_2 &\in\denotee{A^*}{\memuni{m_1m_2}}
                          \subseteq
                          \denotee{A}{\memuni{m_1m_2}}
                          \subseteq
                          \pB\denotee{\substi B{x/\xx_1,y/\yy_2}\cap \im e'_{1x} \cap \im e'_{2y} + \orth{(\im e'_{1x})}  \orth{(\im e'_{2y})} }{\memuni{m_1m_2}} \\
    &=
      \denotee{B}{\upd{m_1}{\xx_1}x\,\upd{m_2}{\yy_2}y}
      \cap\im(P'_{1x}\otimes\idv{\qu{V_2}})
      \cap\im(P'_{2y}\otimes\idv{\qu{V_1}})
      +
      \orth{\pb\paren{\im(P'_{1x}\otimes\idv{\qu{V_2}})}}
      +
      \orth{\pb\paren{\im(P'_{2y}\otimes\idv{\qu{V_1}})}}.
  \end{align*}
  Since the $\orth{\pb\paren{\im(P'_{1x}\otimes\idv{\qu{V_2}})}}$-part
  and the $\orth{\pb\paren{\im(P'_{2y}\otimes\idv{\qu{V_1}})}}$-part of $\psi_1\tensor\psi_2$ both vanish
  under $P'_{1x}\tensor P'_{2y}$, and since the $\pb\paren{\denotee{B}{\upd{m_1}{\xx_1}x\,\upd{m_2}{\yy_2}y}
      \cap\im(P'_{1x}\otimes\idv{\qu{V_2}})
      \cap\im(P'_{2y}\otimes\idv{\qu{V_1}})}$-part is invariant under  $P'_{1x}\tensor P'_{2y}$ and lies
      in $\denotee{B}{\upd{m_1}{\xx_1}x\,\upd{m_2}{\yy_2}y}$, we have for all $(x,y)\in R$,
  \begin{equation*}
    P'_{1x}\psi_1\tensor P_{2y}\psi_2
    =
    (P'_{1x}\tensor P'_{2y})(\psi_1\tensor\psi_2)
    \in  \denotee{B}{\upd{m_1}{\xx_1}x\,\upd{m_2}{\yy_2}y}.
  \end{equation*}
  Thus for all $(x,y)\in R$, 
  \begin{equation*}
    \proj{\phi_{xy}}=\pB\proj{\pb\basis{\cl{V_1}\cl{V_2}}{\upd{m_1}{\xx_1}{x}\, \upd{m_2}{\yy_2}{y}} \tensor
     P'_{1x}\psi_1 \tensor P'_{2y} \psi_2}
    \text{ satisfies } B.
  \end{equation*}
  Thus $\rho'=\sum_{(x,y)\in R}\proj{\phi_{xy}}$ satisfies $B$.

  Furthermore, $\rho'$ is separable.

  Thus for all $m_1,m_2,\psi_1,\psi_2$
  with $\psi_1\tensor \psi_2\in\denotee{A^*}{\memuni{m_1m_2}}$,
  there is a separable $\rho'$
  that satisfies~$B$
  and such that \eqref{eq:rmeas.1-joint} and \eqref{eq:rmeas.2-joint} hold.
  By \autoref{lemma:pure}, this implies
  $\rhl{ A^*}{\Qmeasure{\xx}{Q_1}{e_1}}{\Qmeasure{\yy}{Q_2}{e_2}}B$.
\end{proof}

\begin{lemma}[Measurement (joint, same measurement)]\label{rule-lemma:JointMeasureSimple}
  \Ruleref{JointMeasureSimple} holds.
\end{lemma}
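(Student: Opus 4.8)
The plan is to derive \Ruleref{JointMeasureSimple} as a special case of the already-established \Ruleref{JointMeasure} (\autoref{rule-lemma:JointMeasure}), followed by a single application of \ruleref{Conseq}. Concretely, I would instantiate \ruleref{JointMeasure} with $Z := \typel{Q_1} = \typel{Q_2}$, with the trivial isometries $u_1 := \id$ and $u_2 := \id$, and with $f$ the constant expression denoting the diagonal relation $\{(z,z) : z \in \typev\xx\}$ (well-typed since $\typev\xx = \typev\yy$). Under this choice the derived quantities $e'_{1z} = \lift{\idx1e_1(z)}{Q_1'}$ and $e'_{2z} = \lift{\idx2e_2(z)}{Q_2'}$ coincide with those in \ruleref{JointMeasureSimple}, and the quantum equality $(u_1Q_1'\quanteq u_2Q_2')$ becomes $(Q_1'\quanteq Q_2')$ by \autoref{def:quanteq.simple}.

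Next I would simplify the three predicates $C_f$, $C_e$, and the intersection $A_{\mathrm{JM}}$ produced by \ruleref{JointMeasure}. For $C_f$: since $f$ is the diagonal, $\{y : (x,y)\in f\} = \{x\}$ and $\{x : (x,y)\in f\} = \{y\}$ have cardinality $1$ for every $x,y$, so the classical condition inside $C_f$ is unconditionally $\true$, whence $C_f = \CL{\true} = \elltwov{\qu V}$ by \autoref{def:cla}. For $C_e$: with $u_1 = u_2 = \id$ and $f$ the diagonal, the condition inside $C_e$ reduces to $\forall z.\ \idx1e_1(z) = \idx2e_2(z)$; since two projective measurements coincide exactly when all their outcome projectors coincide, this is implied by (in fact equal to) $\idx1e_1 = \idx2e_2$, giving $\CL{\idx1e_1 = \idx2e_2} \subseteq C_e$. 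Finally, restricting the intersection in the $A$ of \ruleref{JointMeasure} to the diagonal pairs $(z,z)$ yields exactly $\bigcap_{z\in\typev\xx} (\substi B{z/\xx_1,z/\yy_2} \cap \im e'_{1z} \cap \im e'_{2z}) + \orth{(\im e'_{1z})} + \orth{(\im e'_{2z})}$, which is precisely the intersection factor $A_{\mathrm{JM}}$ appearing in \ruleref{JointMeasureSimple}.

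With these simplifications, \ruleref{JointMeasure} delivers $\rhl{C_e \cap A_{\mathrm{JM}} \cap (Q_1'\quanteq Q_2')}{\Qmeasure{\xx}{Q_1}{e_1}}{\Qmeasure{\yy}{Q_2}{e_2}}{B}$, the $C_f$ factor having disappeared. The precondition $A$ of \ruleref{JointMeasureSimple} is $\CL{\idx1e_1=\idx2e_2} \cap (Q_1'\quanteq Q_2') \cap A_{\mathrm{JM}}$, and the inclusion $\CL{\idx1e_1=\idx2e_2} \subseteq C_e$ established above gives $A \subseteq C_e \cap (Q_1'\quanteq Q_2') \cap A_{\mathrm{JM}}$. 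Applying \ruleref{Conseq} (with the postcondition unchanged) then yields $\rhl{A}{\Qmeasure{\xx}{Q_1}{e_1}}{\Qmeasure{\yy}{Q_2}{e_2}}{B}$, as required.

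The routine bookkeeping — matching the definitions of $e'_{1z}, e'_{2z}, Q_1', Q_2'$ across the two rules, and checking well-typedness of the instantiation ($\id \in \iso{\typel{Q_1},Z}$ and $\id \in \iso{\typel{Q_2},Z}$) — is straightforward. The only genuinely substantive step, and thus the main thing to get right, is the reduction of $C_f$ and $C_e$: I must confirm that the diagonal choice of $f$ makes $C_f$ the full space and that the measurement-equality predicate $\CL{\idx1e_1 = \idx2e_2}$ indeed entails the projector-wise equality demanded by $C_e$. Both hinge only on elementary properties of \autoref{def:cla} and of projective measurements as functions on outcomes, so I expect no real difficulty, merely care.
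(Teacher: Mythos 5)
Your proposal is correct and follows essentially the same route as the paper: instantiate \ruleref{JointMeasure} with the diagonal $f$, $u_1=u_2=\id$, and $Z:=\typel{Q_1}$, then simplify $C_f$ to $\CL{\true}$ and $C_e$ to $\CL{\idx1e_1=\idx2e_2}$. The only (immaterial) difference is that the paper observes the resulting precondition equals $A$ outright, so it does not need the final \rulerefx{Conseq} step you invoke.
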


\begin{proof}
  Define the (constant) expression $f$ as
  $f:=\pb\braces{(x,x):x\in\typev\xx}$.  Since $\typev\xx=\typev\yy$,
  we have that $\typee f\subseteq\powerset{\typev\xx\times\typev\yy}$.
  Let $u_1,u_2:=\id$ be constant expressions where $\id$ is the
  identity on $\elltwo{\typel{Q_1}}$.  Let $Z:=\typel{Q_1}$.  Then
  $\typee{u_i}\subseteq\iso{\typel{Q_i},Z}$. (Since
  $\typel{Q_1}=\typel{Q_2}$.)  Let $Q_1',Q_2',e_{1z}',e_{2z}'$ be
  defined as in \ruleref{JointMeasureSimple}.  (Note that
  $Q_1',Q_2',e_{1z}',e_{2z}'$ are define in
  \ruleref{JointMeasureSimple} in the same way as in
  \ruleref{JointMeasure}, except that the index $z$ is called $x$ or
  $y$ there.)

  Define $C_f$
  and $C_e$
  as in \ruleref{JointMeasure}.  Then all conditions for
  \ruleref{JointMeasure} are satisfied, and we have
  \begin{equation}\label{eq:meas.simp.final}
   \pb \rhl{A^*}{\Qmeasure{\xx}{Q_1}{e_1}}{\Qmeasure{\yy}{Q_2}{e_2}}{B}
  \end{equation}
  with
  \begin{align*}
    A^* &:= C_f\cap C_e\cap B^*\cap (u_1Q_1'\quanteq u_2Q_2'), \\
    B^* &:=  \bigcap_{(x,y)\in f}
        \substi B{x/\xx_1,y/\yy_2}
          \cap \im e'_{1x} \cap \im e'_{2y}
          + \orth{(\im e'_{1x})} + \orth{(\im e'_{2y})}.
  \end{align*}
  (Note, $B^*$
  is called $A$
  in \ruleref{JointMeasure}. We cannot call it $A$
  here because in \ruleref{JointMeasureSimple} there is already a different
  predicate called $A$.)

  Since
  $\pb\abs{\{y:(x,y)\in f\}}=1$
  and $\pb\abs{\{x:(x,y)\in f\}}=1$
  for all $x,y\in\typev\xx=\typev\yy$, we have that $C_f=\CL{\true}$.
  And
  $C_e$ simplifies to:
  \begin{align*}
    C_e &= \pB\CL{\forall(x,y)\in f.\ u_1\pb\paren{\idx1e_1(x)}\adj{u_1}=u_2\pb\paren{\idx2e_2(y)}\adj{u_2}} \\
    &= \pB\CL{\forall x\in\typev\xx.\ \idx1e_1(x)=\idx2e_2(y)}
    = \pb\CL{\idx1e_1=\idx2e_2}.
  \end{align*}
  And
  \begin{equation*}
    \pb\paren{u_1Q_1'\quanteq u_2Q_2'} =
    \pb\paren{\id \, Q_1'\quanteq \id \, Q_2'}
    \starrel=
    \pb\paren{Q_1'\quanteq Q_2'}.
  \end{equation*}
  Here $(*)$ follows from \autoref{def:quanteq.simple}.
And 
\begin{equation*}
\bigcap_{z\in\typev{\xx}}
        \substi B{z/\xx_1,z/\yy_2}
          \cap \im e'_{1z} \cap \im e'_{2z}
          + \orth{(\im e'_{1z})} + \orth{(\im e'_{2z})}.
\end{equation*}

  Thus
  \begin{equation*}
    A^* = \CL{\true}\cap \CL{\idx1e_1=\idx2e_2}\cap (Q_1'\quanteq Q_2')\cap B^*
    = A.
  \end{equation*}
  Thus \eqref{eq:meas.simp.final} is the conclusion of
  \ruleref{JointMeasureSimple}.
\end{proof}

\begin{lemma}[Unitary quantum operation]\label{rule-lemma:QApply1}
  \Ruleref{QApply1} holds.
\end{lemma}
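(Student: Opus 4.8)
The plan is to follow the same template as the other one-sided quantum rules (e.g.\ \ruleref{Measure1}): reduce to pure initial states via \autoref{lemma:pure}, construct an explicit separable product witness $\rho'$, and verify the three required properties (correct marginals, separability, satisfaction of $B$). Concretely, I would fix $m_1\in\types{\cl{V_1}}$, $m_2\in\types{\cl{V_2}}$ and normalized $\psi_1\in\elltwov{\qu{V_1}}$, $\psi_2\in\elltwov{\qu{V_2}}$ with $\psi_1\tensor\psi_2\in\denotee{\adj{e'}\cdot(B\cap\im e')}{\memuni{m_1m_2}}$, and exhibit a suitable $\rho'$.

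First I would unpack the semantics. Writing $U:=\denotee{\idx1 e}{m_1}$, well-typedness of $\Qapply eQ$ gives $U\in\iso{\typel Q}$, and the semantics of $\idx1(\Qapply eQ)$ map $\pointstate{\cl{V_1}}{m_1}\tensor\proj{\psi_1}$ to $\pointstate{\cl{V_1}}{m_1}\tensor\proj{\hat U\psi_1}$, where $\hat U:=\lift U{\idx1 Q}$ is an isometry on $\elltwov{\qu{V_1}}$. The key observation, following \autoref{def:lift}, is that $\denotee{e'}{\memuni{m_1m_2}}=\hat U\tensor\idv{\qu{V_2}}$ as an operator on $\elltwov{\qu{V_1}\qu{V_2}}$, so in particular $\denotee{e'}{\memuni{m_1m_2}}(\psi_1\tensor\psi_2)=\hat U\psi_1\tensor\psi_2$.

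Next I would read off the precondition at $\memuni{m_1m_2}$: since $\denotee{e'}{\memuni{m_1m_2}}=\hat U\tensor\idv{\qu{V_2}}$ is an isometry, membership in the image of $\denotee B{\memuni{m_1m_2}}\cap\im(\hat U\tensor\idv{\qu{V_2}})$ under $\adj{\hat U}\tensor\idv{\qu{V_2}}$ means there is a $\phi\in\denotee B{\memuni{m_1m_2}}\cap\im(\hat U\tensor\idv{\qu{V_2}})$ with $\psi_1\tensor\psi_2=(\adj{\hat U}\tensor\idv{\qu{V_2}})\phi$. Applying $\hat U\tensor\idv{\qu{V_2}}$ to both sides, and using that $(\hat U\tensor\idv{\qu{V_2}})(\adj{\hat U}\tensor\idv{\qu{V_2}})$ is the orthogonal projector onto $\im(\hat U\tensor\idv{\qu{V_2}})$ (because $\hat U$ is an isometry) together with $\phi\in\im(\hat U\tensor\idv{\qu{V_2}})$, I get $\hat U\psi_1\tensor\psi_2=\phi\in\denotee B{\memuni{m_1m_2}}$. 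I would then set $\rho':=\proj{\basis{\cl{V_1}\cl{V_2}}{\memuni{m_1m_2}}\tensor\hat U\psi_1\tensor\psi_2}$. This is a pure product state, hence separable; its left marginal is $\pointstate{\cl{V_1}}{m_1}\tensor\proj{\hat U\psi_1}=\denotc{\idx1(\Qapply eQ)}\paren{\pointstate{\cl{V_1}}{m_1}\tensor\proj{\psi_1}}$; its right marginal is $\pointstate{\cl{V_2}}{m_2}\tensor\proj{\psi_2}=\denotc{\idx2\Skip}\paren{\pointstate{\cl{V_2}}{m_2}\tensor\proj{\psi_2}}$; and $\suppo\rho'_{\memuni{m_1m_2}}=\SPAN\{\hat U\psi_1\tensor\psi_2\}=\SPAN\{\phi\}\subseteq\denotee B{\memuni{m_1m_2}}$, so $\rho'$ satisfies $B$. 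Invoking \autoref{lemma:pure} then yields the rule.

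The only real obstacle is notational bookkeeping: keeping straight the three incarnations of the isometry — the abstract $U$ on $\elltwo{\typel Q}$, its lift $\hat U$ to $\elltwov{\qu{V_1}}$, and the predicate-level operator $e'=\lift{\idx1 e}{\idx1 Q}$ on $\elltwov{\qu{V_1}\qu{V_2}}$ — and checking that the $\Uvarnames{\idx1 Q}$ conjugations appearing in the \Qapply semantics coincide with the lifting of \autoref{def:lift}. Once the identity $\denotee{e'}{\memuni{m_1m_2}}=\hat U\tensor\idv{\qu{V_2}}$ and the fact that $\denotee{e'}{\memuni{m_1m_2}}\adj{\denotee{e'}{\memuni{m_1m_2}}}$ projects onto $\im\denotee{e'}{\memuni{m_1m_2}}$ are in place, everything else is immediate. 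The isometry hypothesis on $e$ enters precisely here: for a genuinely non-unitary isometry $\denotee{e'}{}\adj{\denotee{e'}{}}\neq\id$, which is exactly why the precondition must intersect $B$ with $\im e'$ rather than use all of $B$.
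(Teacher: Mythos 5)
Your proposal is correct and follows essentially the same route as the paper's proof: reduce to pure states via \autoref{lemma:pure}, take the product witness $\rho'=\proj{\basis{\cl{V_1}\cl{V_2}}{\memuni{m_1m_2}}\tensor \hat U\psi_1\tensor\psi_2}$, and use that $U'\adj{U'}$ is the projector onto $\im U'$ to conclude $\hat U\psi_1\tensor\psi_2\in\denotee{B}{\memuni{m_1m_2}}$. The differences are purely notational (your $\hat U$ and $\phi$ are the paper's $\Uvarnames{Q'}U\adj{\Uvarnames{Q'}}\tensor\idv{\qu{V_1}\setminus Q'}$ and $\psi_0$).
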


\begin{proof}
  Fix   $m_1\in\types{\cl{V_1}}$,
  $m_2\in\types{\cl{V_2}}$ and normalized $\psi_1\in\elltwov{\qu{V_1}}$,
  $\psi_2\in\elltwov{\qu{V_2}}$ such that $\psi_1\tensor \psi_2\in \denotee {\adj{e'}\cdot(B\cap \im e')}{\memuni{m_1m_2}}$.

  Let $\bc:=(\Qapply eQ)$, $\bd:=\Skip$, 
  
  Let $U:=\denotee{\idx1e}{\memuni{m_1m_2}}$.
  From the well-typedness conditions for programs, it follows that
  $\typee e\subseteq\iso{\typel Q}$,
  and thus $\denotee{e'}{\memuni{m_1m_2}}\in \isov{\qu{V_1}\qu{V_2}}$.
  Let $U':=\denotee{e'}{\memuni{m_1m_2}} = \Uvarnames{Q'} U\adj{\Uvarnames{Q'}} \tensor  \idv{\qu{V_1}\qu{V_1}\setminus Q'}$. (This uses \autoref{def:lift}.)
  % and thus $U':=\Urename{\idx1} e'\adj{\Urename{\idx1}}\in\isov{Q'}$.
  % Thus $U'':=U'\tensor\elltwov{\qu{V_1}\qu{V_2}\setminus Q'}\in\isov{\qu{V_1}\qu{V_2}}$.

  Let
  \begin{align*}
    \phi_1 &:= \pb\paren{\Uvarnames{Q'} U \adj{\Uvarnames{Q'}} \tensor \idv{\qu{V_1}\setminus Q'}}\psi_1,  \qquad
    \rho' := \pb\proj{\basis{\cl{V_1}\cl{V_2}}{\memuni{m_1m_2}}\tensor\phi_1\tensor\psi_2}.
  \end{align*}
  Then $\rho'$ is separable and
  \begin{align}
    \pb\denotc{\idx1\bc}\pB\paren{\pb\proj{\basis{\cl{V_1}}{m_1}\tensor\psi_1}}
    &=
      \pb\denotc{\Qapply{\idx1e}{Q'}}\pB\paren{\pb\proj{\basis{\cl{V_1}}{m_1}\tensor\psi_1}}
      \notag\\&
    \starrel= \pb\proj{\basis{\cl{V_1}}{m_1}\tensor\phi_1} =
      \partr{V_1}{V_2}\rho',
        \label{eq:qapply1.1}
    \\
    \pb\denotc{\idx1\bd}\pB\paren{\pb\proj{\basis{\cl{V_2}}{m_2}\tensor\psi_2}}
    &= \denotc\Skip\pB\paren{\pb\proj{\basis{\cl{V_2}}{m_2}\tensor\psi_2}}
    \starrel= \pb\proj{\basis{\cl{V_2}}{m_2}\tensor\psi_2} =
      \partr{V_2}{V_1}\rho'.
    \label{eq:qapply1.2}
  \end{align}
  Here $(*)$ follows from the semantics of the respective statements.

  Since
  $\psi_1\tensor\psi_2\in \denotee{{\adj{e'}\cdot(B\cap \im e')}}{\memuni{m_1m_2}}= \adj{U'}\cdot\pb\paren{\denotee B{\memuni{m_1m_2}}\cap \im U'} $,
  there exists a $\psi_0\in \denotee B{\memuni{m_1m_2}}\cap \im U'$ with $\psi_1\tensor\psi_2 = \adj {U'}\psi_0$.
  Thus
  \begin{align*}
    \phi_1\tensor\psi_2
    &=
    \pb\paren{\Uvarnames{Q'} U \adj{\Uvarnames{Q'}} \tensor \idv{\qu{V_1}\setminus Q'}}\psi_1 \tensor \psi_2
    =
      U'(\psi_1\tensor\psi_2)
      \\&
    =
    U'\adj{U'}\psi_0 
    \starrel=
    \psi_0
    \in
    \denotee{B}{\memuni{m_1m_2}}\cap \im U'
    \subseteq
    \denotee{B}{\memuni{m_1m_2}}.
  \end{align*}
  Here $(*)$
  follows since $U'\adj{U'}$
  is the projector onto $\im U'$
  (this holds for any isometry $U'$)
  and $\psi_0\in\denotee{ B}{\memuni{m_1m_2}}\cap\im U'\subseteq\im U'$.

  Since $\phi_1\tensor\psi_2\in\denotee{B}{\memuni{m_1m_2}}$,
  it follows that $\rho'$ satisfies $B$ by definition of $\rho'$.

  Thus for all $m_1,m_2,\psi_1,\psi_2$
  with $\psi_1\tensor \psi_2\in\denotee{{\adj{e'}\cdot(B\cap \im e')}}{\memuni{m_1m_2}}$,
  there is a separable $\rho'$
  that satisfies~$B$
  and such that \eqref{eq:qapply1.1} and \eqref{eq:qapply1.2} hold.
  By \autoref{lemma:pure}, this implies
  $\rhl {\adj{e'}\cdot(B\cap \im e')}\bc\bd B$.
  This proves the rule by definition of~$\bc,\bd$.
\end{proof}

\begin{lemma}[Quantum initialization]\label{rule-lemma:QInit1}
  \Ruleref{QInit1} holds.
\end{lemma}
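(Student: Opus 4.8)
The plan is to apply \autoref{lemma:pure}, exactly as in the proofs of the other one-sided rules. So I fix $m_1\in\types{\cl{V_1}}$, $m_2\in\types{\cl{V_2}}$ and normalized $\psi_1\in\elltwov{\qu{V_1}}$, $\psi_2\in\elltwov{\qu{V_2}}$ with $\psi_1\tensor\psi_2\in\denotee{(\spaceat A{e'})\otimes\elltwov{Q'}}{\memuni{m_1m_2}}$, and must exhibit a separable $\rho'\in\traceposcq{V_1V_2}$ that satisfies $A$, whose first marginal equals $\denotc{\idx1(\text{init})}\pb\paren{\pointstate{\cl{V_1}}{m_1}\tensor\proj{\psi_1}}$ and whose second marginal equals $\pointstate{\cl{V_2}}{m_2}\tensor\proj{\psi_2}$. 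Writing $\psi_{e'}:=\denotee{e'}{m_1}\in\elltwov{Q'}$ (note $\Uvarnames{Q'}\denotee{\idx1e}{m_1}=\psi_{e'}$) and unfolding the operation convention for predicates together with \autoref{def:spaceat}, the precondition becomes $\psi_1\tensor\psi_2\in S\tensor\elltwov{Q'}$, where $S:=\spaceat{\denotee A{\memuni{m_1m_2}}}{\psi_{e'}}\subseteq\elltwov{\qu{V_1}\qu{V_2}\setminus Q'}$.

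Next I unfold the semantics of quantum initialization (\autoref{sec:qsemantics}): the first marginal of the output is $\pointstate{\cl{V_1}}{m_1}\tensor\pb\paren{\partr{\qu{V_1}\setminus Q'}{Q'}\proj{\psi_1}}\tensor\proj{\psi_{e'}}$. To construct $\rho'$ I take a Schmidt decomposition (\autoref{lemma:schmidt}) of $\psi_1$ across the cut $Q'$ versus $\qu{V_1}\setminus Q'$, say $\psi_1=\sum_i\lambda_i\,\phi_i\tensor\chi_i$ with $\lambda_i>0$ and orthonormal $\phi_i\in\elltwov{Q'}$, $\chi_i\in\elltwov{\qu{V_1}\setminus Q'}$, so that $\partr{\qu{V_1}\setminus Q'}{Q'}\proj{\psi_1}=\sum_i\lambda_i^2\proj{\chi_i}$. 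Then I set
\[
\rho':=\sum_i\lambda_i^2\,\pb\proj{\basis{\cl{V_1}\cl{V_2}}{\memuni{m_1m_2}}\tensor\chi_i\tensor\psi_{e'}\tensor\psi_2}.
\]

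Checking the marginals is routine: tracing out $V_2$ contributes a unit factor from $\proj{\psi_2}$ and reassembles $\pointstate{\cl{V_1}}{m_1}\tensor\pb\paren{\sum_i\lambda_i^2\proj{\chi_i}}\tensor\proj{\psi_{e'}}$, matching the program output; tracing out $V_1$ yields $\pb\paren{\sum_i\lambda_i^2}\pointstate{\cl{V_2}}{m_2}\tensor\proj{\psi_2}=\pointstate{\cl{V_2}}{m_2}\tensor\proj{\psi_2}$ since $\norm{\psi_1}^2=\sum_i\lambda_i^2=1$. Each summand factors as a state on $\qu{V_1}$ tensored with $\psi_2$ on $\qu{V_2}$, so $\rho'$ is separable.

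The one step needing care — the crux of the proof — is that $\rho'$ satisfies $A$, i.e.\ that every $\chi_i\tensor\psi_{e'}\tensor\psi_2\in\denotee A{\memuni{m_1m_2}}$. Here I use the orthogonality built into the Schmidt decomposition, exactly as in the proofs of \autoref{lemma:quanteq} and \autoref{rule-lemma:Frame}: rewriting $\psi_1\tensor\psi_2=\sum_i\lambda_i\,(\chi_i\tensor\psi_2)\tensor\phi_i$ with the $\phi_i\in\elltwov{Q'}$ orthonormal, membership in $S\tensor\elltwov{Q'}$ forces each component $\lambda_i(\chi_i\tensor\psi_2)$ into $S$, hence $\chi_i\tensor\psi_2\in S=\spaceat{\denotee A{\memuni{m_1m_2}}}{\psi_{e'}}$; by \autoref{def:spaceat} this is precisely $\chi_i\tensor\psi_2\tensor\psi_{e'}\in\denotee A{\memuni{m_1m_2}}$. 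Thus the support of the quantum part of $\rho'$ at memory $\memuni{m_1m_2}$ lies in $\denotee A{\memuni{m_1m_2}}$, so $\rho'$ satisfies $A$ by \autoref{def:satisfy}, and \autoref{lemma:pure} then gives the rule. I expect the only real friction to be the bookkeeping of the canonical isomorphisms $\Uvarnames{Q'}$ and $\idx1$ relating $e$, $\idx1e$, and $\psi_{e'}$, which (as the paper advises) carries no mathematical content and can be ignored on a first pass.
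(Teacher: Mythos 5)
Your proposal is correct and follows essentially the same route as the paper's proof: the same Schmidt decomposition of $\psi_1$ across the cut $Q'$ versus $\qu{V_1}\setminus Q'$, the same candidate $\rho'$, the same orthogonality argument to push each component $\chi_i\tensor\psi_2$ into $\spaceat{\denotee A{\memuni{m_1m_2}}}{\psi_{e'}}$, and the same conclusion via \autoref{lemma:pure}. The only detail the paper makes explicit that you leave implicit is that $\norm{\psi_{e'}}=1$ (from the well-typedness condition on $\Qinit Qe$), which is what makes the trace of $\proj{\psi_{e'}}$ drop out in the second-marginal computation.
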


\begin{proof}
  Let $A^*:=(\spaceat A{e'})\otimes\elltwov{Q'}$.
  Fix $m_1\in\types{\cl{V_1}}$,
  $m_2\in\types{\cl{V_2}}$
  and normalized $\psi_1\in\elltwov{\qu{V_1}}$,
  $\psi_2\in\elltwov{\qu{V_2}}$
  such that $\psi:=\psi_1\tensor \psi_2\in \denotee{A^*}{\memuni{m_1m_2}}$.

  Let $\phi:=\denotee{e'}{\memuni{m_1m_2}}\in \elltwov{Q'}$
  and $S:=\denotee{A}{\memuni{m_1m_2}}\subseteq\elltwov{\qu{V_1}\qu{V_2}}$.
  Then
  $\psi\in\denotee{A^*}{\memuni{m_1m_2}}=(\spaceat S\phi)\otimes\elltwov{ Q'}$.
  And from the well-typedness of $\Qinit eQ$,
  it follows that all $\typee e$
  contains only normalized vectors, thus $\typee{e'}=\typee{e}$
  contains only normalized vectors, thus $\norm\phi=1$.

  Let $\psi_1=\sum_i\lambda_i\psi_i^S\otimes\psi_i^Q$
  with $\lambda_i>0$ 
  and orthonormal $\psi_i^S\in\elltwov{\qu{V_1}\setminus Q'}$
  and $\psi_i^Q\in\elltwov{Q'}$
  be the Schmidt decomposition of $\psi_1$
  (\autoref{lemma:schmidt}).
  And $\sum_i\lambda_i^2=1$ since $\psi_1$ is normalized.
  Since
  $\psi=\sum_i\lambda_i\psi_i^S\otimes\psi_i^Q\otimes \psi_2 \in
  (\spaceat S\phi)\otimes\elltwov{ Q'}$, and since the $\psi_i^Q$
  are orthogonal, it follows that all
  $\psi_i^S\otimes\psi_2\in \spaceat S\phi$.
  By definition of $\spaceat{}{}$ (\autoref{def:spaceat}),
  this implies
  $\psi_i^S\tensor\phi\tensor\psi_2=(\psi_i^S\tensor\psi_2)\tensor\phi \in S=\denotee{A}{\memuni{m_1m_2}}$.
  Thus $\pb\proj{\basis{\cl{V_1}\cl{V_2}}{\memuni{m_1m_2}}\tensor\psi_i^S\tensor\phi\tensor\psi_2}$ satisfies $A$, and thus
  \begin{equation*}
    \rho' := \sum_i\lambda_i^2\pb\proj{\basis{\cl{V_1}\cl{V_2}}{\memuni{m_1m_2}}\tensor\psi_i^S\tensor\phi\tensor\psi_2}
  \end{equation*}
  satisfies $A$. And $\rho'$ is separable.

  We have
  \begin{align}
    &\pb\denotc{\idx1(\Qinit{Q}{e})}\pB\paren{\pb\pointstate{\cl{V_1}}{m_1}\tensor\proj{\psi_1}}\notag\\
    &\qquad\starrel=
      \pb\pointstate{\cl{V_1}}{m_1}\tensor
      \underbrace{\partr{\qu{V_1}\setminus Q'}{Q'}\proj{\psi_1}}_{{}\starstarrel=\ \sum_i\lambda_i^2\proj{\psi_i^S}}
      \tensor \pb\proj{\,\underbrace{\Uvarnames {Q'}\denotee{\idx1 e}{m_1}}_{{}=\denotee{e'}{m_1}=\phi}\,}
    \notag\\
    &\qquad=
      \sum_i\lambda_i^2 \, 
      \pb\proj{\basis{\cl{V_1}}{m_1}\tensor
      \psi_i^S
      \tensor\phi} = \partr{V_1}{V_2}\rho'.
\label{eq:qinit.1}     
  \end{align}
  Here $(*)$
  follows from the semantics of quantum assignments. And $(**)$
  follows since $\psi_1=\sum_i\lambda_i\psi_i^S\otimes\psi_i^Q$
  with orthonormal $\psi_i^Q\in\elltwov{Q'}$.

  Furthermore,
  \begin{multline}
    \pb\denotc{\idx2\Skip}\pB\paren{\pb\pointstate{\cl{V_1}}{m_1}\tensor\proj{\psi_1}}
    \starrel=\pb\pointstate{\cl{V_1}}{m_1}\tensor\proj{\psi_1} \\
    \starstarrel=\sum_i\lambda_i\,\pb\proj{\basis{\cl{V_1}}{m_1}\tensor\psi_1}\cdot \tr\proj{\phi}
      =\partr{V_2}{V_1}\rho'.
      \label{eq:qinit.2}     
    \end{multline}
    Here $(*)$
    follows from the semantics of $\Skip$,
    and $(**)$
    from $\sum_i\lambda_i=1$
    and $\norm\phi=1$.

  Thus for any separable $\rho$
  that satisfies $A^*$, there is a $\rho'$
  that is separable and satisfies $A$
  such that and \eqref{eq:qinit.1} and \eqref{eq:qinit.2} hold. By
  \autoref{lemma:pure}, this implies $\rhl{A^*}{\Qinit Qe}\Skip A$.
\end{proof}

\subsection{Proof of \ruleref{Adversary}}

\begin{lemma}\label{rule-lemma:Adversary}
  \Ruleref{Adversary} is sound.
\end{lemma}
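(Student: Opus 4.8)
The plan is to prove \ruleref{Adversary} by structural induction on the context $C$, following the grammar in \autoref{footnote:context}, and staying entirely within the derived-rule calculus: I will only ever invoke \ruleref{Equal}, \ruleref{Seq}, \ruleref{Conseq}, \ruleref{JointIf}, \ruleref{JointWhile}, and \ruleref{Frame}, never \autoref{def:rhl}. Throughout, the predicate $A := \CL{X_1=X_2} \cap (Y_1W_1 \quanteq Y_2W_2) \cap R$ and all the side conditions relating $X,\Tilde X,Y,Z,W,R$ remain fixed; only the locality and readonly conditions on $C$ change as we descend into subcontexts. The key observation that makes the induction go through is that ``$C$ is $XY$-local'' and ``$C$ is $(X\cap\Tilde X)$-readonly'' are, by definition, conditions on \emph{every} program and expression syntactically occurring in $C$, so every immediate subcontext inherits them and the induction hypothesis is always applicable.

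For the hole $C=\Box_i$ we have $C[\bc_1,\dots,\bc_n]=\bc_i$ and $C[\bc'_1,\dots,\bc'_n]=\bc'_i$, so the goal is exactly the premise $\rhl A{\bc_i}{\bc'_i}A$. For $C=\seq{C_1}{C_2}$, applying the induction hypothesis to $C_1$ and $C_2$ and combining via \ruleref{Seq} (keeping $A$ as the intermediate assertion) gives the claim. For the conditional $C=\langif e{C_1}{C_2}$ and the loop $C=\while e{C_1}$, I would first note that $A\subseteq\CL{\idx1 e=\idx2 e}$: since $e$ occurs in $C$ and $C$ is $XY$-local, $\fv(e)\subseteq\cl{XY}=X$, and $A\subseteq\CL{X_1=X_2}$ forces $\idx1 e$ and $\idx2 e$ to agree. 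The induction hypothesis yields $\rhl A{C_1[\bc_1,\dots,\bc_n]}{C_1[\bc'_1,\dots,\bc'_n]}A$ (and likewise for $C_2$), and by \ruleref{Conseq} the precondition may be strengthened to $\CL{\idx1 e\land\idx2 e}\cap A$ (resp.\ its negation). Feeding these into \ruleref{JointIf} (resp.\ \ruleref{JointWhile}) produces the judgment for $C$, with a final \ruleref{Conseq} in the loop case to weaken the postcondition $\CL{\lnot\idx1 e\land\lnot\idx2 e}\cap A$ back to $A$.

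The only holeless case, $C=\bc$ for a fixed program $\bc$, is where the two sides coincide and the real work lies; here I would derive $\rhl A\bc\bc A$ from \ruleref{Equal} and \ruleref{Frame}. Since $\bc$ is $XY$-local and $W$ is a quantum list disjoint from $XY$ (as $Y\cap W=\varnothing$, while $X,\Tilde X$ are classical and $W$ quantum), $\bc$ is also $X(YW)$-local by monotonicity of locality. Applying \ruleref{Equal} with classical list $X$ and quantum list $YW$ gives $\rhl{\CL{X_1=X_2}\cap(Y_1W_1\quanteq Y_2W_2)}\bc\bc{\CL{X_1=X_2}\cap(Y_1W_1\quanteq Y_2W_2)}$. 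To frame in $R$, write $Z_i:=Z\cap V_i$; then $R$ is $(\Tilde X_1 Z_1)(\Tilde X_2 Z_2)$-local, so I instantiate \ruleref{Frame} with both programs equal to $\bc$ (local to $XY$) and with $R$'s base sets $\Tilde X\cup\idx i^{-1}(Z_i)$. Its premises check out directly: the quantum variables of $R$ lie in $Z$, disjoint from the quantum footprint $Y_1Y_2$ of $\bc$ by hypothesis; the overlap of $\bc$'s variables with $R$'s is purely classical and equals $X\cap\Tilde X$ (using $Y_1\cap Z=\varnothing$, which follows from $Y_1Y_2\cap Z=\varnothing$, to exclude a quantum overlap); and $\bc$ is $(X\cap\Tilde X)$-readonly. \Ruleref{Frame} then adds $R$ to both pre- and postcondition, giving $\rhl A\bc\bc A$.

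The main obstacle is precisely this last case: matching the intricate locality/readonly/disjointness bookkeeping of \ruleref{Frame} against the hypotheses of \ruleref{Adversary}. In particular, one must verify that the classical-variable exception of \ruleref{Frame} — read-only classical variables of $\bc$ may appear in $R$, but quantum variables of $\bc$ may not — is exactly what permits $R$ to share the classical variables $X\cap\Tilde X$ with $\bc$ while keeping the quantum registers $Y$ and $Z$ separated. Everything else is a routine rule application, so the argument amounts to carefully discharging the side conditions in this one case and otherwise threading the invariant $A$ through the structural recursion.
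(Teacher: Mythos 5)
Your proof is correct and follows essentially the same route as the paper's: structural induction on $C$ with the hole, sequencing, conditional, loop, and base-program cases handled by the premise, \rulerefx{Seq}, \rulerefx{Conseq}+\rulerefx{JointIf}, \rulerefx{Conseq}+\rulerefx{JointWhile}, and \rulerefx{Equal}+\rulerefx{Frame} respectively, with the same observation that $XY$-locality forces $A\subseteq\CL{\idx1e=\idx2e}$. Your more explicit discharge of the \rulerefx{Frame} side conditions in the $C=\bc$ case is a welcome elaboration of what the paper leaves as a one-line remark.
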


\begin{proof}
  We show \ruleref{Adversary} by induction over the structure of $C$.
  That is, we distinguish five cases: 
  $C=\Box_i$, $C=\seq{C'}{C''}$, $C=\langif e{C'}{C''}$, $C=\while e{C'}$, and $C=\bc$. 
  And we assume that \rulerefx{Adversary} holds for $C',C''$.

  \medskip

  \noindent\textbf{Case $C=\Box_i$:}
  In this case, we need to prove $\rhl A{\bc_i}{\bc'_i}A$.
  But that is already a premise of the rule, so there is nothing to
  prove.

  \medskip

  \noindent\textbf{Case $C=\seq{C'}{C''}$:}
  Note that since the premises of the rule hold for $C$, they also hold for the subterms $C',C''$. 
  Thus we have 
  $\rhl A{C'[\bc_1,\dots,\bc_n]}{C'[\bc_1,\dots,\bc_n]}A$
  and $\rhl A{C''[\bc_1,\dots,\bc_n]}{C''[\bc_1,\dots,\bc_n]}A$
  by \ruleref{Adversary} (which holds for $C',C''$ by induction hypothesis).
  By \ruleref{Seq}, we get
  $\rhl A{C'[\bc_1,\dots,\bc_n];C''[\bc_1,\dots,\bc_n]}{C'[\bc_1,\dots,\bc_n];C''[\bc_1,\dots,\bc_n]}A$
  which is the same as
  $\rhl A{(C';C'')[\bc_1,\dots,\bc_n]}{(C';C'')[\bc_1,\dots,\bc_n]}A$ 
  and thus finishes the proof in this case.
  
  \medskip

  \noindent\textbf{Case $C=\langif e{C'}{C''}$:}
  Since $C$ is $XY$-local, we have that $\fv(e)\subseteq X$. Thus 
  $A\subseteq\CL{X_1=X_2}\subseteq\CL{\idx1e=\idx2e}$.
  Since the premises of the rule hold for $C$, they also hold for the subterms $C',C''$. 
  Thus we have 
  $\rhl A{C'[\bc_1,\dots,\bc_n]}{C'[\bc_1,\dots,\bc_n]}A$
  and $\rhl A{C''[\bc_1,\dots,\bc_n]}{C''[\bc_1,\dots,\bc_n]}A$
  by \ruleref{Adversary} (which holds for $C',C''$ by induction hypothesis).
  By \ruleref{Conseq}, we get
  $\rhl {\CL{\idx 1e \land \idx 2e}\cap A}{C'[\bc_1,\dots,\bc_n]}{C'[\bc_1,\dots,\bc_n]}A$
  and $\rhl{\CL{\lnot\idx 1e \land \lnot\idx 2e}\cap A}{C''[\bc_1,\dots,\bc_n]}{C''[\bc_1,\dots,\bc_n]}A$.
  Then, by \ruleref{JointIf}, we get 
  \[
    \pb\rhl{A}{\langif e{C'[\bc_1,\dots,\bc_n]}{C''[\bc_1,\dots,\bc_n]}}
    {\langif e{C'[\bc'_1,\dots,\bc'_n]}{C''[\bc'_1,\dots,\bc'_n]}}A.
  \]
  This is the same as 
  $ \rhl{A}{C[\bc_1,\dots,\bc_n]}{C[\bc'_1,\dots,\bc'_n]} A$
  and thus proves the rule in this case.

  \medskip

  \noindent\textbf{Case $C=\while e{C'}$:}
  Since $C$ is $XY$-local, we have that $\fv(e)\subseteq X$. Thus 
  $A\subseteq\CL{X_1=X_2}\subseteq\CL{\idx1e=\idx2e}$.
  Since the premises of the rule hold for $C$, they also hold for the subterm $C'$. 
  Thus we have 
  $\rhl A{C'[\bc_1,\dots,\bc_n]}{C'[\bc_1,\dots,\bc_n]}A$
  by \ruleref{Adversary} (which holds for $C'$ by induction hypothesis).
  By \ruleref{Conseq}, we get
  $\rhl {\CL{\idx 1e \land \idx 2e}\cap A}{C'[\bc_1,\dots,\bc_n]}{C'[\bc_1,\dots,\bc_n]}A$.
  Then, by \ruleref{JointWhile}, we get 
  \[
    \pb\rhl{A}{\while e{C'[\bc_1,\dots,\bc_n]}}
    {\while e{C'[\bc'_1,\dots,\bc'_n]}}{\CL{\lnot\idx 1e \land \lnot\idx 2e} \cap A}.
  \]
  This is the same as 
  $ \rhl{A}{C[\bc_1,\dots,\bc_n]}{C[\bc'_1,\dots,\bc'_n]}{\CL{\lnot\idx 1e \land \lnot\idx 2e} \cap A}$.
  By \ruleref{Conseq}, we get 
  $ \rhl{A}{C[\bc_1,\dots,\bc_n]}{C[\bc'_1,\dots,\bc'_n]} A$
  and thus proved the rule in this case.

  \medskip

  \noindent\textbf{Case $C=\bc$:}
  Since $C$ is $XY$-local, $\bc$ is $XY$-local and thus $XYW$-local.
  By \ruleref{Equal} (with $X:=X$, $Y:=YW$, $\bc:=\bc$), we get
  \[
    \pb\rhl{\CL{X_1=X_2}\cap \paren{Y_1W_1\quanteq Y_2W_2}}\bc\bc{\CL{X_1=X_2}\cap \paren{Y_1W_1\quanteq Y_2W_2}}.
  \]
  Then, by \ruleref{Frame} (with $X_1,X_2:=XY$, $Y_1,Y_2:=\Tilde XZ$ with indices removed from $Z$, $\bc,\bd:=\bc$,
  $A,B:={\CL{X_1=X_2}\cap \paren{Y_1W_1\quanteq Y_2W_2}}$, and $R:=R$), we get:
  \[
    \pb\rhl{\CL{X_1=X_2}\cap \paren{Y_1W_1\quanteq Y_2W_2}\cap R}\bc\bc{\CL{X_1=X_2}\cap \paren{Y_1W_1\quanteq Y_2W_2\cap R}}.
  \]
  (The premises for \rulerefx{Frame} all follow from the premises of \rulerefx{Adversary} with $C=\bc$.)
  This is the same as 
  $\rhl{A}\bc\bc A$ and thus proves the rule in this case.
\end{proof}

\section{Supporting theorems}
\label{sec:supporting}

In this appendix, we prove several theorems that are not required for
the use of qRHL but support various motivational claims made in the
paper.

\subsection{Uniqueness of the quantum equality}
\label{sec:qeq.unique}

In \autoref{sec:intro.qrhl} and \autoref{sec:quantum.eq}, we mentioned that we conjecture
that the quantum equality defined there (\autoref{def:quanteq.simple})
is the only notion of quantum equality (represented as a subspace)
that has the following property:
\[
  \pb\rhl
  {\CL{X_1=X_2}\cap (Q_1\quanteq Q_2)}
  \bc\bc
  {\CL{X_1=X_2}\cap (Q_1\quanteq Q_2)}
\]
(If $X,Q$
are all the variables of $\bc$.)
In this section, we give formal evidence for it. (Namely, while we
cannot show it for our definition of qRHL, we can show it for the
variant from \autoref{def:qrhl.first}.)

We present this claim in two variants: \autoref{lemma:unique.qeq}
shows that our quantum equality is the only predicate that satisfies
this property with respect to \autoref{def:qrhl.first}. (Except for the predicate $0$,
which is always false and satisfies the property trivially.)  This
lemma is specific to the definition of qRHL from \autoref{def:qrhl.first}.

\autoref{lemma:unique.qeq1} shows the claim in a more abstract form
that is independent of the definition of qRHL judgments. It states
that our quantum equality is the only predicate such that:
\begin{compactenum}[(a)]
\item  for all vectors $\psi$, $\proj{\psi\otimes\psi}$
  satisfies this predicate, and
\item if $\rho:=\proj\phi$
  satisfies the predicate, then the two marginals $\rho_1,\rho_2$
  of $\rho$ are equal.
\end{compactenum} (We omitted some variable renamings in this
description.) This lemma may be useful when analyzing quantum equality
for other variants of qRHL, or in the context of completely different
logics. (And this lemma also constitutes the central argument used in the proof of
\autoref{lemma:unique.qeq1}.)

Note that for simplicity, we have derived those facts only for the
finite dimensional case.

\begin{lemma}[Symmetric and conjugate-normal matrices \cite{mo-conj-normal}]\label{lemma:conj-normal}
  Let $M$
  denote the space of all complex $(n\times n)$-matrices.
  ($M$ is a complex vector space.)
  Let $C\subseteq M$
  be the set of all conjugate-normal\index{conjugate-normal} matrices (i.e., matrices $A$
  with $A\adj A=\overline{\adj AA}$
  where $\overline{\vphantom A\,\cdot\,}$ denotes the element-wise complex conjugate).
  Let $S\subseteq C$ denote all symmetric matrices. (Then $S$ is a subspace of $M$.)
  Assume that $X$ is a subspace of $M$ such that $S\subseteq X\subseteq C$.
  Then $X=S$.  
\end{lemma}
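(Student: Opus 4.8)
The plan is to show that $S$ is a \emph{maximal} complex subspace of $M$ contained in $C$; since $S\subseteq X\subseteq C$, maximality of $S$ then forces $X=S$. First I would rewrite the conjugate-normality condition in a workable form. Writing $\overline A$ for the entrywise conjugate and $A^T$ for the transpose, we have $\adj A=\overline A^T$ and $\overline{\adj AA}=A^T\overline A$, so $A\in C$ precisely when $f(A):=A\overline A^T-A^T\overline A=0$. Here $f$ is real-quadratic (it couples $A$ with $\overline A$). A one-line check gives $S\subseteq C$: for symmetric $A$ one has $A^T=A$ and $\overline A^T=\overline A$, so both terms of $f(A)$ equal $A\overline A$.

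The key step is a polarization argument that linearizes the condition on the subspace $X$. For any $A,B\in X$, complex-linearity of $X$ gives $A,B,A+B,A+\mathrm iB\in X\subseteq C$, so $f$ vanishes on all four. Expanding $f(A+B)=0$ and cancelling $f(A)=f(B)=0$ yields
\[
A\overline B^T+B\overline A^T-A^T\overline B-B^T\overline A=0,
\]
while expanding $f(A+\mathrm iB)=0$, using $\overline{\mathrm iB}=-\mathrm i\overline B$ and $f(\mathrm iB)=f(B)=0$, yields
\[
-A\overline B^T+B\overline A^T+A^T\overline B-B^T\overline A=0.
\]
Adding these two identities isolates the linearized relation
\[
B\overline A^T=B^T\overline A\qquad\text{for all }A,B\in X.
\]
To finish, suppose for contradiction $X\neq S$ and pick $A\in X\setminus S$. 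Its antisymmetric part $\Lambda:=\tfrac12(A-A^T)$ is nonzero (since $A\neq A^T$) and lies in $X$, because the symmetric part $\tfrac12(A+A^T)\in S\subseteq X$. Applying the linearized relation with $\Lambda$ in the role of $A$ and an arbitrary symmetric $S_1\in S\subseteq X$ in the role of $B$, and using $S_1^T=S_1$ together with $\overline\Lambda^T=\overline{\Lambda^T}=-\overline\Lambda$, I get $-S_1\overline\Lambda=S_1\overline\Lambda$, that is $S_1\overline\Lambda=0$ for every symmetric $S_1$. Choosing $S_1=E_{ii}$ forces the $i$-th row of $\overline\Lambda$ to vanish for each $i$, so $\overline\Lambda=0$ and hence $\Lambda=0$, contradicting $\Lambda\neq0$. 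Therefore $X=S$.

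The main obstacle is the polarization computation itself: since $f$ is not holomorphic, the terms of type $A\overline B^T$ and of type $A^T\overline B$ must be tracked separately, and it is exactly the interaction $\overline{\mathrm iB}=-\mathrm i\overline B$ that turns $f(A+B)=0$ and $f(A+\mathrm iB)=0$ into two \emph{independent} linear identities whose sum produces $B\overline A^T=B^T\overline A$. Once that relation is in hand, the reduction to $S_1\overline\Lambda=0$ and the choice $S_1=E_{ii}$ are routine, so no further delicate estimates are needed.
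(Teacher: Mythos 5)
Your proof is correct. Note that the paper does not actually prove this lemma itself: it is stated with a citation to an external reference (\cite{mo-conj-normal}), so your argument supplies a self-contained elementary proof of a fact the paper outsources. The structure is sound: writing the conjugate-normality condition as $f(A):=A\overline{A}^T-A^T\overline{A}=0$ and polarizing over the \emph{complex} subspace $X$ (using both $A+B\in X$ and $A+\mathrm iB\in X$) correctly produces the two independent identities, and their sum does give the bilinear relation $B\overline{A}^T=B^T\overline{A}$ for all $A,B\in X$. I verified the expansions: $f(A+B)=0$ gives $A\overline{B}^T+B\overline{A}^T-A^T\overline{B}-B^T\overline{A}=0$ and $f(A+\mathrm iB)=0$, after dividing by $\mathrm i$, gives $-A\overline{B}^T+B\overline{A}^T+A^T\overline{B}-B^T\overline{A}=0$; adding yields the claim. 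The endgame is also right: for $A\in X\setminus S$ the antisymmetric part $\Lambda$ lies in $X$ (since the symmetric part is in $S\subseteq X$) and is nonzero, and substituting $B:=E_{ii}$, $A:=\Lambda$ into the bilinear relation together with $\overline{\Lambda}^T=-\overline{\Lambda}$ forces $E_{ii}\overline{\Lambda}=0$ for every $i$, hence $\Lambda=0$, a contradiction. The one hypothesis you rely on that is worth flagging explicitly is that $X$ is a subspace over $\mathbb C$ (needed for $A+\mathrm iB\in X$); the lemma does grant this, since $M$ is declared a complex vector space. This is exactly the kind of maximality-of-$S$ argument one would expect the cited reference to contain, so there is nothing to fix.
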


\begin{lemma}\label{lemma:unique.qeq1}
  Let $Q_1$
  and $Q_2$
  be disjoint lists of distinct quantum variables with finite
  $\typel{Q_1}=\typel{Q_2}=:T$.

  Let $E$ be a subspace of $\elltwov{Q_1Q_2}$ such that:
  \begin{compactitem}
  \item 
    For any $\psi\in\elltwo T$,
    we have that $\Uvarnames{Q_1}\psi\otimes\Uvarnames{Q_2}\psi\in E$.
  \item For any $\phi\in E$,
    let $\rho_1:=\partr{Q_1}{Q_2}\proj{\phi}$
    and $\rho_2:=\partr{Q_2}{Q_1}\proj{\phi}$.
    Then $\adj{\Uvarnames{Q_1}}\rho_1\Uvarnames{Q_1}
    =
    \adj{\Uvarnames{Q_2}}\rho_2\Uvarnames{Q_2}
    $.
  \end{compactitem}
      Then $E=(Q_1\quanteq Q_2)$.
\end{lemma}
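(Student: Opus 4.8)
The plan is to transport the whole problem into the space $M$ of complex $(n\times n)$-matrices, where $n:=\abs T$, via the standard vectorization isomorphism, and then to recognize the two hypotheses on $E$ as precisely the inclusions $S\subseteq X$ and $X\subseteq C$ required by \autoref{lemma:conj-normal}. First I would fix the linear isomorphism $\Phi\colon\elltwov{Q_1Q_2}\to M$ determined by $\Phi(\Uvarnames{Q_1}e_i\tensor\Uvarnames{Q_2}e_j)=E_{ij}$, where $\{e_i\}$ is the computational basis of $\elltwo T$ and $E_{ij}$ are the matrix units; concretely $\Phi$ sends a vector to the matrix of its coefficients after stripping off $\Uvarnames{Q_1}$ and $\Uvarnames{Q_2}$. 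Set $X:=\Phi(E)$, which is a subspace of $M$ since $\Phi$ is a linear isomorphism.

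Next I would check that $\Phi$ carries $(Q_1\quanteq Q_2)$ onto the subspace $S$ of symmetric matrices. By \autoref{def:quanteq.simple}, $(Q_1\quanteq Q_2)$ is the subspace fixed by the swap $U=\Uvarnames{Q_2}\adj{\Uvarnames{Q_1}}\tensor\Uvarnames{Q_1}\adj{\Uvarnames{Q_2}}$ (the third tensor factor being trivial here, as $\qu V=Q_1Q_2$). Since $U$ interchanges $\Uvarnames{Q_1}e_i\tensor\Uvarnames{Q_2}e_j$ with $\Uvarnames{Q_1}e_j\tensor\Uvarnames{Q_2}e_i$, the map $\Phi$ conjugates $U$ into matrix transposition; hence $U\phi=\phi$ iff $\Phi(\phi)$ is symmetric, giving $\Phi\bigl((Q_1\quanteq Q_2)\bigr)=S$.

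Then I would translate the two hypotheses on $E$. For the first, the vector $\Uvarnames{Q_1}\psi\tensor\Uvarnames{Q_2}\psi$ maps under $\Phi$ to the rank-one symmetric matrix with entries $\psi_i\psi_j$. As such matrices span $S$ by polarization, and $E$ (hence $X$) is a subspace, the first hypothesis gives $S\subseteq X$. For the second, a short index computation of the two partial traces of $\proj\phi=\phi\adj\phi$ shows that, writing $M:=\Phi(\phi)$, one has $\adj{\Uvarnames{Q_1}}\rho_1\Uvarnames{Q_1}=M\adj M$ and $\adj{\Uvarnames{Q_2}}\rho_2\Uvarnames{Q_2}=\overline{\adj M M}$; the marginal-equality hypothesis therefore reads $M\adj M=\overline{\adj M M}$, that is, $M$ is conjugate-normal. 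Hence $X\subseteq C$.

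With $S\subseteq X\subseteq C$ established, \autoref{lemma:conj-normal} yields $X=S$, and applying $\Phi^{-1}$ gives $E=(Q_1\quanteq Q_2)$, as claimed. The main obstacle is the partial-trace computation in the previous step: one must carry the indices carefully through $\phi\adj\phi$ and the two partial traces to land exactly on $M\adj M$ and $\overline{\adj M M}$, since it is precisely the asymmetric appearance of the complex conjugate between the two marginals that makes \emph{conjugate}-normality --- rather than ordinary normality --- the operative notion, and hence makes \autoref{lemma:conj-normal} the right tool.
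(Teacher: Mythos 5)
Your proposal is correct and follows essentially the same route as the paper's proof: the coefficient-matrix isomorphism $\phi\mapsto A_\phi$, the identification of $(Q_1\quanteq Q_2)$ with the symmetric matrices, the two partial-trace computations yielding $A_\phi\adj{A_\phi}$ and $\overline{\adj{A_\phi}A_\phi}$ (so that the marginal condition is exactly conjugate-normality), and the final appeal to \autoref{lemma:conj-normal}. The only cosmetic difference is that the paper spans the symmetric matrices via the explicit choices $\psi=\basis{}{t_i}$ and $\psi=\basis{}{t_i}+\basis{}{t_j}$ rather than invoking polarization.
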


\begin{proof}\stepcounter{claimstep}%
  \newcommand\ket{\basis{}}%
  \newcommand\bra[1]{\parenthesis\langle\rvert{#1}}%
  \newcommand\inner[2]{\parenthesis\langle\rangle{#1\vert#2}}%
  Let $n:=\abs T$ and $T=:\{t_1,\dots,t_n\}$.
  We have that $\Uvarnames{Q_1}\basis{}{t_i}$ form a basis for $\elltwov{Q_1}$
  and $\Uvarnames{Q_2}\basis{}{t_i}$ form a basis for $\elltwov{Q_2}$.
  Thus 
  any $\phi\in\elltwov{Q_1Q_2}$
  can be written as
  \[
    \phi = \sum_{ij} a_{ij} \Uvarnames{Q_1}\basis{}{t_i}\otimes\Uvarnames{Q_2}\basis{}{t_j}
    \qquad\text{with }a_{ij}\in \setC
    .
  \]
  Then let $A_\phi$ denote the $(n\times n)$-matrix with entries $a_{ij}$.
  That is,
  \[
    A_\phi := \sum_{ij} a_{ij} \ket i\bra j
    =
    \sum_{ij} \pB\paren{\bra{t_i}\adj{\Uvarnames{Q_1}}\tensor\bra{t_j}\adj{\Uvarnames{Q_2}}}\phi
    \cdot\ket i\bra j
  \]
  where $\bra j$ denotes $\adj{\ket j}$.

  Let $X:=\{A_\phi:\phi\in E\}$. Let $S:=\{A_\phi:\phi\in(Q_1\quanteq Q_2)\}$.

  \begin{claim}\label{claim:X.is.ss}
    $X$ is a subspace of the set of complex $(n\times n)$-matrices.
  \end{claim}

  \begin{claimproof}
    Immediate from the facts that $E$ is a  subspace, and that $\phi\mapsto A_\phi$ is linear.
  \end{claimproof}

  \begin{claim}\label{claim:S.sym}
    $S$ is the set of symmetric matrices.
  \end{claim}

  \begin{claimproof}
    Since $\phi\mapsto A_\phi$
    is bijective, it suffices to show  for all $\phi\in\elltwov{Q_1Q_2}$ that $A_\phi$
    is symmetric iff $\phi\in(Q_1\quanteq Q_2)$.
    Fix some $\phi\in\elltwov{Q_1Q_2}$, and let
    $\phi =: \sum_{ij} a_{ij} \Uvarnames{Q_1}\basis{}{t_i}\otimes\Uvarnames{Q_2}\basis{}{t_j}$.
    Then $a_{ij}$ are the entries of $A_\phi$.
    We have
    \begin{align*}
      &\phi\in (Q_1\quanteq Q_2) 
      \\ \Longleftrightarrow\qquad&
                                    \phi
                                    = \paren{\Uvarnames{Q_2}\adj{\Uvarnames{Q_1}} \otimes
                                    \Uvarnames{Q_1}\adj{\Uvarnames{Q_2}}}\phi
      \\ \Longleftrightarrow\qquad&
                                    \sum_{ij} a_{ij} \Uvarnames{Q_1}\basis{}{t_i}\otimes\Uvarnames{Q_2}\basis{}{t_j}
                                    =
                                    \sum_{ij} a_{ij} \Uvarnames{Q_2}\basis{}{t_i}\otimes\Uvarnames{Q_1}\basis{}{t_j}
      \\ \Longleftrightarrow\qquad&
                                    \sum_{ij} a_{ij} \Uvarnames{Q_1}\basis{}{t_i}\otimes\Uvarnames{Q_2}\basis{}{t_j}
                                    =
                                    \sum_{ij} a_{ji} \Uvarnames{Q_1}\basis{}{t_j}\otimes\Uvarnames{Q_2}\basis{}{t_i}
      \\ \Longleftrightarrow\qquad&
                                    \forall i,j.\ a_{ij}=a_{ji}
      \\ \Longleftrightarrow\qquad&
                                    A_\phi\text{ symmetric}.
                                    \mathQED
    \end{align*}
  \end{claimproof}
  
  \begin{claim}\label{claim:X.subspace}
    $S\subseteq X$.
  \end{claim}
  
  \begin{claimproof}
    By assumption, $\phi:=\Uvarnames{Q_1}\psi\otimes\Uvarnames{Q_2}\psi\in E$ for any $\psi\in\elltwo T$.
    Thus
    \begin{align*}
      \sum_{ij}
      \bra{t_i}\psi\cdot \bra{t_j}\psi
      \cdot\ket i\bra j
      =
      \sum_{ij}
      \pB\paren{\bra{t_i}\adj{\Uvarnames{Q_1}}\tensor\bra{t_j}\adj{\Uvarnames{Q_2}}}
      (\Uvarnames{Q_1}\psi\otimes\Uvarnames{Q_2}\psi)
      \cdot\ket i\bra j
      =
      A_\phi
      \in
      X.
    \end{align*}
    In particular, with $\psi:=\ket{t_i}$, and with $\psi:=\ket{t_i}+\ket{t_j}$, we get
    \[
      \ket{i}\bra{i}\in X
      \qquad\text{and}\qquad
      \ket{i}\bra{i}+\ket{j}\bra{j}+
      \ket{i}\bra{j}+\ket{j}\bra{i}
      \in X
      \qquad\text{for all }i,j.
    \]
    And since $X$ is a subspace (\autoref{claim:X.is.ss}), also $\ket{i}\bra{j}+\ket{j}\bra{i}\in X$.

    Since the matrices $\ket{i}\bra{i}$
    and $\ket{i}\bra{j}+\ket{j}\bra{i}$
    form a basis for the space of symmetric matrices,
    it follows that all symmetric matrices are contained in $X$.
    By \autoref{claim:S.sym}, $S$ is the set of all symmetric matrices.
    Thus $S\subseteq X$.
  \end{claimproof}

  Let $U_T\ket i:=\ket{t_i}$ for $i=1,\dots,n$. Then $U_T$ is unitary.

  \begin{claim}\label{claim:tr1.phi}
    For any $\phi\in\elltwov{Q_1Q_2}$, we have
    $\adj{\Uvarnames{Q_1}} \pb\paren{ \partr{Q_1}{Q_2}\proj\phi } \Uvarnames{Q_1}=
    U_TA_\phi\adj {A_\phi}\adj{U_T}$
  \end{claim}

  \begin{claimproof}
    Let $a_{ij}$ be the entries of $A_\phi$. Then
    \[
      \phi = \sum_{ij} a_{ij} \Uvarnames{Q_1}\basis{}{t_i}\otimes\Uvarnames{Q_2}\basis{}{t_j}
    \]
    and thus
    \begin{align*}
      &\adj{\Uvarnames{Q_1}} \pb\paren{ \partr{Q_1}{Q_2}\proj\phi } \Uvarnames{Q_1} \\[-5pt]
      &=
        \adj{\Uvarnames{Q_1}}
        \pB\paren{
        \partr{Q_1}{Q_2}\sum_{ijkl} a_{ij} \adj{a_{kl}}\
        \Uvarnames{Q_1}\ket{t_i}\bra{t_k}\adj{\Uvarnames{Q_1}}
        \otimes
        \overbrace{\Uvarnames{Q_2}\ket{t_j}\bra{t_l}\adj{\Uvarnames{Q_2}}}
        ^{\tr(\cdot) = 1\text{ iff }j=l,\ \tr(\cdot)=0\text{ else}}
        }
        \Uvarnames{Q_1}
      \\
      &=
        \sum_{ijk} a_{ij} \adj{a_{kj}}\
        \ket{t_i}\bra{t_k}
      =
        U_T
        \pB\paren{\sum_{ijk} a_{ij} \adj{a_{kj}}\
        \ket{i}\bra{k}}\adj{U_T}
=     
    U_T A_\phi\adj{A_\phi}
    \adj{U_T}.
    \mathQED
    \end{align*}
  \end{claimproof}

  \begin{claim}\label{claim:tr2.phi}
    For any $\phi\in\elltwov{Q_1Q_2}$, we have
    $\adj{\Uvarnames{Q_2}} \pb\paren{ \partr{Q_2}{Q_1}\proj\phi } \Uvarnames{Q_2}=
    U_T \overline{\adj{A_\phi}A_\phi} \adj{U_T}$.
    (Where $\overline{\vphantom A\,\cdot\,}$ denotes the element-wise complex conjugate.)
  \end{claim}

  \begin{claimproof}
    Let $a_{ij}$ be the entries of $A_\phi$. Then
    \[
      \phi = \sum_{ij} a_{ij} \Uvarnames{Q_1}\basis{}{t_i}\otimes\Uvarnames{Q_2}\basis{}{t_j}
    \]
    and thus
    \begin{align*}
      \hskip1cm &\hskip-1cm \adj{\Uvarnames{Q_2}} \pb\paren{ \partr{Q_2}{Q_1}\proj\phi } \Uvarnames{Q_2} \\[-5pt]
      &=
        \adj{\Uvarnames{Q_2}}
        \pB\paren{
        \partr{Q_2}{Q_1}\sum_{ijkl} a_{ij} \adj{a_{kl}}\
        \overbrace{\Uvarnames{Q_1}\ket{t_i}\bra{t_k}\adj{\Uvarnames{Q_1}}}
        ^{\tr(\cdot) = 1\text{ iff }i=k,\ \tr(\cdot)=0\text{ else}}
        \otimes
        \Uvarnames{Q_2}\ket{t_j}\bra{t_l}\adj{\Uvarnames{Q_2}}
        }
        \Uvarnames{Q_2}
      \\
      &=
        \sum_{jil} a_{ij} \adj{a_{il}}\
        \ket{t_j}\bra{t_l}
      =
        U_T
        \pB\paren{\sum_{jil} a_{ij} \adj{a_{il}}\
        \ket{j}\bra{l}}\adj{U_T}
=     
    U_T \overline{\adj{A_\phi}}\ \overline{ A_\phi}
    \adj{U_T}
=     
    U_T \overline{\adj{A_\phi}{A_\phi}} \adj{U_T}.
    \mathQED
    \end{align*}
  \end{claimproof}

  \begin{claim}\label{claim.X.cn}
    Any $A\in X$ is conjugate-normal. (Cf.~\autoref{lemma:conj-normal}.)
  \end{claim}

  \begin{claimproof}
    If $A\in X$,
    then $A=A_\phi$
    for some $\phi\in E$ (by definition of $E$).

    Then
    $\adj{\Uvarnames{Q_1}}\paren{\partr{Q_1}{Q_2}\proj{\phi}}\Uvarnames{Q_1} =
    \adj{\Uvarnames{Q_2}}\paren{\partr{Q_2}{Q_1}\proj{\phi}}\Uvarnames{Q_2} $.
    (This is one of the  assumptions of the lemma.)
    By \autoref{claim:tr1.phi} and \autoref{claim:tr2.phi}, this implies
    $U_TA_\phi\adj{A_\phi}\adj{U_T}=U_T\overline{\adj{A_\phi}A_\phi}\adj{U_T}$. Since $U_T$ is unitary, this implies
    $A_\phi\adj {A_\phi}=\overline{\adj{A_\phi}{A_\phi}}$. Thus $A=A_\phi$ is conjugate-normal by definition.
  \end{claimproof}

  \begin{claim}\label{claim:X=S}
    $X=S$.
  \end{claim}

  \begin{claimproof}
    Let $M$
    denote the space of all complex $(n\times n)$-matrices,
    and $C$
    the set of all conjugate-normal matrices. By
    \autoref{claim:S.sym}, $S$
    is the set of all symmetric matrices. By
    \autoref{claim:X.subspace}, we have $S\subseteq X$,
    and by \autoref{claim.X.cn}, we have $X\subseteq C$.
    $X$
    is a subspace by \autoref{claim:X.is.ss}. Thus by
    \autoref{lemma:conj-normal}, $X=S$.
  \end{claimproof}

  Since $X=S$ by \autoref{claim:X=S}, we have
  $\{A_\phi:\phi\in E\}=\{A_\phi:\phi\in(Q_1\quanteq Q_2)\}$ by definition of $X$ and~$S$.
  Since $\phi\mapsto A_\phi$ is injective, this implies
  $E=(Q_1\quanteq Q_2)$.
\end{proof}

\autoref{lemma:unique.qeq1} only gives an abstract characterization of
$\quanteq$.
It leaves open whether it is the only notion of quantum equality that
satisfies natural rules when used in qRHL judgments. While we were
unable to show this for our notion of qRHL (\autoref{def:rhl}), we
were able to show it for the variant of qRHL presented in
\autoref{def:qrhl.first} which does not require the initial state and
final state to be separable:

\begin{lemma}\label{lemma:unique.qeq}
  Let $Q\subseteq\qu V$
  be a list of distinct quantum variables with finite
  $\typel{Q}$. Let $Q_i:=\idx iQ$.
  
  Let $E_{Q_1,Q_2}$
  be a $Q_1Q_2$-local
  predicate that satisfies \ruleref{Equal} with respect to the qRHL definition from \autoref{def:qrhl.first}.  That is, for
  $E_{Q_1,Q_2}$ the following rule holds:
  \[
    \inferrule{
      \text{$X$ is a list of classical variables} \\
      \text{$Q$ is a list of distinct quantum variables} \\
      \bc\text{ is $XQ$-local} \\
      X_i:=\idx iX\\
      Q_i:=\idx iQ
    }{
      \pb\rhlfirst
      {\CL{X_1=X_2}\cap E_{Q_1,Q_2}}
      \bc\bc
      {\CL{X_1=X_2}\cap E_{Q_1,Q_2}}
    }
  \]
  Then $E_{Q_1,Q_2}=(Q_1\quanteq Q_2)$ or $E_{Q_1,Q_2}=0$.
\end{lemma}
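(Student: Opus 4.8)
The plan is to derive this from the abstract characterization in \autoref{lemma:unique.qeq1}. Since $Q$ consists of quantum variables and $E_{Q_1,Q_2}$ is $Q_1Q_2$-local, $E_{Q_1,Q_2}$ is the constant family given by a single subspace $E\subseteq\elltwov{Q_1Q_2}$; write $T:=\typel Q=\typel{Q_1}=\typel{Q_2}$. If $E=0$ we are in the second alternative and there is nothing to do, so I will assume $E\neq0$ and verify the two hypotheses of \autoref{lemma:unique.qeq1} for $E$; that lemma then gives $E=(Q_1\quanteq Q_2)$. I will instantiate the assumed rule only with programs $\bc$ that act on $Q$ (plus at most one fresh classical variable), so that the spectator variables $\qu{V}\setminus Q$ merely carry a fixed pure state that factors out, and $Q_1Q_2$-locality of $E$ lets me read membership in $E$ off the $Q_1Q_2$ block. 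Crucially, the initial states I use are of the form $\proj\phi$ with $\phi\in E$ possibly entangled between $Q_1$ and $Q_2$; this is exactly why the lemma is stated for the non-separable \autoref{def:qrhl.first}, whose guarantee applies to such non-separable inputs.

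For the first hypothesis (containment of the symmetric states), I would fix a normalized $\chi\in\elltwo T$ and apply the assumed rule, with $X:=\varnothing$, to the program $\bc:=\Qinit Q e$ where $e$ is the constant expression $\chi$. Taking as initial state $\rho:=\proj{\phi_0}$ for some normalized $\phi_0\in E$ (which exists as $E\neq0$), the marginals $\rho_1,\rho_2$ are normalized, and the reinitialization semantics discards $Q_i$ and inserts the fresh state, so $\denotc{\idx1\bc}(\rho_1)=\proj{\Uvarnames{Q_1}\chi}$ and $\denotc{\idx2\bc}(\rho_2)=\proj{\Uvarnames{Q_2}\chi}$ (each possibly tensored with the fixed spectator state). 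The rule then supplies a $\rho'$ satisfying $E_{Q_1,Q_2}$ both of whose $Q_i$-marginals are pure. A positive operator with a pure marginal factors through that marginal, so two applications of this fact force $\rho'=\proj{\Uvarnames{Q_1}\chi\otimes\Uvarnames{Q_2}\chi}$ on the $Q_1Q_2$ block; since $\rho'$ satisfies $E_{Q_1,Q_2}$, its support lies in $E$, giving $\Uvarnames{Q_1}\chi\otimes\Uvarnames{Q_2}\chi\in E$. Rescaling and using that $E$ is a subspace extends this to all $\chi\in\elltwo T$.

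For the second hypothesis (equality of the transported marginals), I would fix an arbitrary $\phi\in E$, set $\rho_{Q_1}:=\partr{Q_1}{Q_2}\proj\phi$ and $\rho_{Q_2}:=\partr{Q_2}{Q_1}\proj\phi$, and for each projector $P$ on $\elltwo T$ apply the assumed rule, with $X:=\{\xx\}$, to the binary total measurement $\bc:=\Qmeasure{\xx}{Q}{e}$ with $e$ the measurement $\{P,\id-P\}$. The initial state $\proj{\basis{\xx_1\xx_2}{00}\otimes\phi}$ satisfies $\CL{\xx_1=\xx_2}\cap E_{Q_1,Q_2}$, so the rule yields a $\rho'$ satisfying $\CL{\xx_1=\xx_2}$; by \autoref{lemma:cl.rho} this forces $\rho'=\sum_z\proj{\basis{\xx_1\xx_2}{zz}}\otimes\sigma_z$. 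Matching the $V_1$- and $V_2$-marginals of $\rho'$ against the two measurement outputs gives $\partr{Q_1}{Q_2}\sigma_0=P'\rho_{Q_1}P'$ and $\partr{Q_2}{Q_1}\sigma_0=P''\rho_{Q_2}P''$, where $P'=\Uvarnames{Q_1}P\adj{\Uvarnames{Q_1}}$ and $P''=\Uvarnames{Q_2}P\adj{\Uvarnames{Q_2}}$ are the liftings of $P$. Taking traces of both identities yields $\tr(P\,\adj{\Uvarnames{Q_1}}\rho_{Q_1}\Uvarnames{Q_1})=\tr\sigma_0=\tr(P\,\adj{\Uvarnames{Q_2}}\rho_{Q_2}\Uvarnames{Q_2})$. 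Letting $P$ range over all rank-one projectors and invoking polarization gives $\adj{\Uvarnames{Q_1}}\rho_{Q_1}\Uvarnames{Q_1}=\adj{\Uvarnames{Q_2}}\rho_{Q_2}\Uvarnames{Q_2}$, which is precisely the second hypothesis.

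With both hypotheses in hand, \autoref{lemma:unique.qeq1} yields $E=(Q_1\quanteq Q_2)$, completing the case $E\neq0$. I expect the main obstacle to be the first hypothesis: the decisive point is that $\Qinit$ produces pure marginals on \emph{both} sides, which via the product-state fact pins the witness $\rho'$ down to the single symmetric product vector and thereby forces that vector into $E$; without this rigidity one could not rule out $\rho'$ being spread over non-symmetric states. The measurement argument for the second hypothesis is more routine, the only care being the bookkeeping that reduces the $V_i$-marginals of $\rho'$ to their $Q_i$-blocks (with spectator traces cancelling) and the transport of $P$ through $\Uvarnames{Q_i}$.
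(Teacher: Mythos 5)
Your proposal is correct and follows essentially the same route as the paper's proof: both reduce to the abstract characterization in \autoref{lemma:unique.qeq1}, establish the first hypothesis via $\Qinit Q\psi$ and the rigidity of states with pure marginals, and establish the second via a binary measurement $\{P,\id-P\}$ whose outcome probabilities are forced to agree by the $\CL{\xx_1=\xx_2}$ postcondition. The only cosmetic difference is that you phrase the second step directly (traces agree for all projectors, hence the operators agree), whereas the paper argues by contradiction from a distinguishing projector; the content is identical.
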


\begin{proof}\stepcounter{claimstep}
  Assume that $E_{Q_1,Q_2}\neq 0$.
  Since $E_{Q_1,Q_2}$
  is $Q_1Q_2$-local,
  we have that $E_{Q_1,Q_2}=E'\otimes\elltwov{\qu{V_1}\qu{V_2}\setminus Q_1Q_2}$ for some $E'\neq 0$.

  \begin{claim}\label{claim:phiphiE}
    $\Uvarnames{Q_1}\psi\otimes\Uvarnames{Q_2}\psi\in E'$
    for all $\psi\in\elltwo{\typel Q}$.
  \end{claim}

  \begin{claimproof}
    Without loss of generality, we can assume that $\psi$ is a unit vector.
    From the rule in the lemma, we get
    $\rhl{E_{Q_1,Q_2}}{\Qinit{Q}{\psi}}{\Qinit{Q}{\psi}}{E_{Q_1,Q_2}}$.
    Fix some $\phi\in E_{Q_1,Q_2}$.
    From the semantics of $\Qinit Q\psi$, we get:
    \begin{align*}
    \rho_1'&:=\denotc{\idx1(\Qinit Q\psi)}(\partr{V_1}{V_2}\proj\phi)
             = \proj{\Uvarnames{Q_1}\psi} \otimes \rho_1''
    \\
    \rho_2'&:=\denotc{\idx2(\Qinit Q\psi)}(\partr{V_2}{V_1}\proj\phi)
             = \proj{\Uvarnames{Q_2}\psi} \otimes \rho_2''
  \end{align*}
    for some positive $\rho_1'',\rho_2''$
    with $\tr\rho_1''=\tr\rho_2''=1$.
  
    By definition of qRHL judgments, there is a $\rho'$
    such that $\partr{V_1}{V_2}\rho'=\rho_1'$
    and $\partr{V_2}{V_1}\rho'=\rho_2'$
    and
    $\suppo\rho'\subseteq E_{Q_1,Q_2}=E'\otimes\elltwov{\qu{V_1}\qu{V_2}\setminus Q_1Q_2}$.
    From
    $\partr{V_1}{V_2}\rho'=\rho_1'$
    and $\partr{V_2}{V_1}\rho'=\rho_2'$,
    it follows that
    $\rho'=\proj{\Uvarnames{Q_1}\psi\otimes\Uvarnames{Q_2}\psi}\otimes\rho''$
    for some positive $\rho''$ with $\tr\rho''=1$.
    So $\suppo\rho'=\suppo\pb\paren{\proj{\Uvarnames{Q_1}\psi\otimes\Uvarnames{Q_2}\psi}\otimes\rho''}
    \subseteq E'\otimes\elltwov{\qu{V_1}\qu{V_2}\setminus Q_1Q_2}$.
    Hence $\Uvarnames{Q_1}\psi\otimes\Uvarnames{Q_2}\psi\in E'$.
  \end{claimproof}

  \begin{claim}\label{claim:same.tr}
    For any $\phi\in E'$,
    let $\tilde\rho_1:=\partr{Q_1}{Q_2}\proj{\phi}$
    and $\tilde\rho_2:=\partr{Q_2}{Q_1}\proj{\phi}$.
    Then
    $\adj{\Uvarnames{Q_1}}\tilde\rho_1\Uvarnames{Q_1} =
    \adj{\Uvarnames{Q_2}}\tilde\rho_2\Uvarnames{Q_2} $.
  \end{claim}

  \begin{claimproof}
    Assume for contradiction that for some
    $\phi\in E'$ the claim does not hold.
    Then
    $\adj{\Uvarnames{Q_1}}\tilde\rho_1\Uvarnames{Q_1} \neq
    \adj{\Uvarnames{Q_2}}\tilde\rho_2\Uvarnames{Q_2}$. Hence there is a
    projector $P$ such that 
    \begin{equation}
      p_1:=\tr P\adj{\Uvarnames{Q_1}}\tilde\rho_1\Uvarnames{Q_1} \neq
      \tr P\adj{\Uvarnames{Q_2}}\tilde\rho_2\Uvarnames{Q_2}=:p_2.
      \label{eq:p1.neq.p2}
    \end{equation}
    From the rule in the lemma, we get
    \begin{equation}
      \rhlfirst
      {\CL{\xx_1=\xx_2}\cap{E_{Q_1,Q_2}}}
      \bc\bc
      {\CL{\xx_1=\xx_2}\cap{E_{Q_1,Q_2}}}
      \label{eq:pres.eq.P}
    \end{equation}
    where $\bc:=      {\Qmeasure{\xx}{Q}M}$ and
       $\typev\xx:=\bool$
    and $M\in\Meas\bool{\typel{Q}}$
    is the measurement with $M(\true)=P$
    and $M(\false)=1-P$.

    Fix some normalized basis state $\psi_1\in\elltwov{V_1\setminus Q_1\xx_1}$,
    $\psi_2\in\elltwov{V_2\setminus Q_2\xx_2}$. Let $\rho:=\proj{\phi\otimes\psi_1\otimes\psi_2\otimes\basis{\xx_1}{\false}\otimes\basis{\xx_2}{\false}}$. Note that $\rho$ satisfies $ {\CL{\xx_1=\xx_2}\cap{E_{Q_1,Q_2}}}$.
    \begin{align}
      \rho_1' &:=
      \denotc{\idx1\paren{\Qmeasure\xx Q M}}(\partr{V_1}{V_2}\rho) \notag\\
      &=
        \denotc{\Qmeasure{\xx_1} {Q_1} M}(\tilde\rho_1\otimes\proj{\basis{\xx_1}{\false}\otimes\psi_1})\notag\\
        &\starrel=
        \proj{\basis{\xx_1}\true}
          \otimes
          \Uvarnames{Q_1}P\adj{\Uvarnames{Q_1}}\tilde\rho_1
          \Uvarnames{Q_1}P\adj{\Uvarnames{Q_1}}
          \otimes\proj{\psi_1}\notag\\
&+        \proj{\basis{\xx_1}\false}
          \otimes
          \Uvarnames{Q_1}(1-P)\adj{\Uvarnames{Q_1}}
          \tilde\rho_1
                    \Uvarnames{Q_1}(1-P)\adj{\Uvarnames{Q_1}}
          \otimes\proj{\psi_1}.
          \label{eq:rho1P}
    \end{align}
    Here $(*)$ follows from the semantics of ${\Qmeasure{\xx_1} {Q_1} M}$.
    
    And analogously
\begin{align*}
      \rho_2' &:=
      \denotc{\idx2\paren{\Qmeasure\xx Q M}}(\partr{V_2}{V_1}\rho) \\
      &=
        \proj{\basis{\xx_2}\true}
          \otimes
        \Uvarnames{Q_2}P\adj{\Uvarnames{Q_2}}\tilde\rho_2
          \Uvarnames{Q_2}P\adj{\Uvarnames{Q_2}}
          \otimes\proj{\psi_2}\notag\\
&+        \proj{\basis{\xx_2}\false}
          \otimes
          \Uvarnames{Q_2}(1-P)\adj{\Uvarnames{Q_2}}
          \tilde\rho_2
                    \Uvarnames{Q_2}(1-P)\adj{\Uvarnames{Q_2}}
          \otimes\proj{\psi_2}
    \end{align*}
    Since $\rho$ satisfies $\CL{\xx_1=\xx_2}\cap E_{Q_1,Q_2}$,
    from \eqref{eq:pres.eq.P} it follows that there is a $\rho'$
    such that $\partr{V_1}{V_2}\rho'=\rho_1'$
    and $\partr{V_2}{V_1}\rho'=\rho_2'$
    and $\suppo\rho'\subseteq {\CL{\xx_1=\xx_2}\cap{E_{Q_1,Q_2}}}$.
    Since $\suppo\rho'\subseteq\CL{\xx_1=\xx_2}$,
    it follows that
    $\rho'=\proj{\basis{\xx_1\xx_2}{\true,\true}}\otimes
    \rho'_{\true}+
    \proj{\basis{\xx_1\xx_2}{\false,\false}}\otimes
    \rho'_{\false}$
    for some $\rho'_{\true},\rho'_{\false}$.
    Hence
    \begin{align*}
      \rho_1' = \partr{V_1}{V_2}\rho =
      \proj{\basis{\xx_1}\true}\otimes\partr{V_1\setminus\xx_1}{V_2\setminus\xx_2}\rho'_{\true}
      +
      \proj{\basis{\xx_1}\false}\otimes\partr{V_1\setminus\xx_1}{V_2\setminus\xx_2}\rho'_{\false}
    \end{align*}
    With \eqref{eq:rho1P}, it follows that
    \begin{equation}
      \Uvarnames{Q_1}P\adj{\Uvarnames{Q_1}}\tilde\rho_1
      \Uvarnames{Q_1}P\adj{\Uvarnames{Q_1}}
      \otimes \proj{\psi_1}
      =
      \partr{V_1\setminus\xx_1}{V_2\setminus\xx_2}\rho'_{\true}
      \label{eq:Prho2}
    \end{equation}
    and hence
    \begin{align*}
      p_1
      & \eqrefrel{eq:p1.neq.p2}=
      \tr P\adj{\Uvarnames{Q_1}}\tilde\rho_1\Uvarnames{Q_1}
      =
      \tr \pb\paren{  \Uvarnames{Q_1}P\adj{\Uvarnames{Q_1}}\tilde\rho_1
        \Uvarnames{Q_1}P\adj{\Uvarnames{Q_1}}
        \otimes\proj{\psi_1} } \\
      &\eqrefrel{eq:Prho2}=
      \tr \partr{V_1\setminus\xx_1}{V_2\setminus\xx_2}\rho'_{\true}
      =
      \tr\rho'_{\true}.
    \end{align*}
    Analogously,
    \[
      p_2=\tr\rho'_{\true}.
    \]
    Thus $p_1= p_2$ in contradiction to \eqref{eq:p1.neq.p2}.
  \end{claimproof}

  From \autoref{claim:phiphiE} and \autoref{claim:same.tr}, it follows
  that the preconditions of \autoref{lemma:unique.qeq1} (with
  $Q_1:=Q_1$,
  $Q_2:=Q_2$,
  $E:=E'$) are satisfied. Thus $E'=(Q_1\quanteq Q_2)$.

  Then
  $E=E'\otimes\elltwov{\qu{V_1}\qu{V_2}\setminus Q_1Q_2}= (Q_1\quanteq
  Q_2)\otimes\elltwov{\qu{V_1}\qu{V_2}\setminus Q_1Q_2} =(Q_1\quanteq
  Q_2)$.  (Here the first $Q_1\quanteq Q_2$
  is the equality predicate over $\qu{V_1}\qu{V_2}\setminus Q_1Q_2$,
  while the second $Q_1\quanteq Q_2$ is the equality predicate over
   $\qu{V_1}\qu{V_2}$.)
\end{proof}

\autoref{lemma:unique.qeq} does not, in this form, hold for the
definition of qRHL that we actually use (\autoref{def:rhl}). For
example
$E_{\qq,\qq'}:=\SPAN\{\basis{\qq\qq'}{01}+\basis{\qq\qq'}{01}\}$
satisfies the rule from \autoref{lemma:unique.qeq} for that
definition. Why? $E_{\qq,\qq'}$ contains no separable states, therefore the only separable $\rho$ that satisfies 
$E_{\qq,\qq'}$ is $\rho=0$. Thus, in terms of which states satisfy it, 
$E_{\qq,\qq'}$ is equivalent to $0$, and thus
\begin{equation*}
  \rhl
  {\CL{\xx_1=\xx_2}\cap{E_{Q_1,Q_2}}}
  \bc\bc
  {\CL{\xx_1=\xx_2}\cap{E_{Q_1,Q_2}}}
\end{equation*}
holds trivially (since the precondition is not satisfiable). However,
that does not violate the spirit of \autoref{lemma:unique.qeq} because
$E_{\qq,\qq'}$
is essentially $0$
as far as \autoref{def:rhl} is concerned, not a different predicate.

More formally, we say $A\approx B$
iff for all separable $\rho\in\traceposcq{V_1V_2}$,
we have that $\rho$
satisfies $A$
iff $\rho$
satisfies $B$.
Then $E_{\qq,\qq'}\approx 0$.
With this notation, we can formulate the conjecture that with respect
to \autoref{def:rhl}, $\quanteq$
is still essentially the only equality notion:

\begin{conjecture}\label{conj:unique.qeq}
  Let $Q\subseteq\qu V$
  be a list of distinct quantum variables with finite
  $\typel{Q}$. Let $Q_i:=\idx iQ$.
  
  Let $E_{Q_1,Q_2}$
  be a $Q_1Q_2$-local
  predicate that satisfies \ruleref{Equal}.  That is, for
  $E_{Q_1,Q_2}$ the following rule holds:
  \[
    \inferrule{
      \text{$X$ is a list of classical variables} \\
      \text{$Q$ is a list of distinct quantum variables} \\
      \bc\text{ is $XQ$-local} \\
      X_i:=\idx iX\\
      Q_i:=\idx iQ
    }{
      \pb\rhl
      {\CL{X_1=X_2}\cap E_{Q_1,Q_2}}
      \bc\bc
      {\CL{X_1=X_2}\cap E_{Q_1,Q_2}}
    }
  \]
  Then $E_{Q_1,Q_2}\approx (Q_1\quanteq Q_2)$ or $E_{Q_1,Q_2}\approx 0$.
\end{conjecture}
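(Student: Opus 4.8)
The plan is to reduce the conjecture to a statement purely about \emph{product vectors} and then to re-run the two key steps of the proof of \autoref{lemma:unique.qeq} with every state restricted to be separable. First I would record the elementary fact that $A\approx B$ holds if and only if $A$ and $B$ contain exactly the same product vectors $\psi_1\tensor\psi_2$ with $\psi_1\in\elltwov{\qu{V_1}}$, $\psi_2\in\elltwov{\qu{V_2}}$: a separable $\rho\in\traceposcq{V_1V_2}$ satisfies a classical-variable-free predicate $A$ iff every product vector in the support of each $\rho_m$ lies in $A$, so the satisfying separable states of $A$ and of $B$ coincide precisely when their product vectors do. This is the crucial difference from \autoref{lemma:unique.qeq}: there the non-separable \autoref{def:qrhl.first} forced control of $E$ on \emph{all} vectors and hence the conjugate-normal matrix machinery of \autoref{lemma:unique.qeq1}, whereas for $\approx$ I only need to match product vectors, which can be done directly.

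Since $E_{Q_1,Q_2}$ is $Q_1Q_2$-local it factors as $E_{Q_1,Q_2}=E'\tensor\elltwov{R}$ with $E'\subseteq\elltwov{Q_1Q_2}$ and $R:=\qu{V_1}\qu{V_2}\setminus Q_1Q_2$, and by \autoref{def:quanteq} we may write $(Q_1\quanteq Q_2)=W\tensor\elltwov R$ where, by \autoref{coro:quanteq}, the product vectors of $W\subseteq\elltwov{Q_1Q_2}$ are exactly the diagonal vectors $\Uvarnames{Q_1}\psi\tensor\Uvarnames{Q_2}\psi$. If $E'$ contains no nonzero product vector then $E_{Q_1,Q_2}\approx 0$ and we are done, so assume otherwise. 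I would then establish that the product vectors of $E'$ are exactly the diagonal ones by proving both inclusions. For $\supseteq$, I reuse the $\Qinit Q\psi$ argument of \autoref{claim:phiphiE}: running the hypothesised \ruleref{Equal} instance (with empty classical list) on the separable initial state $\proj{\phi}$ for a product vector $\phi\in E_{Q_1,Q_2}$, the forced purity of the $Q_1$- and $Q_2$-marginals of the witness makes it factor as $\proj{\Uvarnames{Q_1}\psi\tensor\Uvarnames{Q_2}\psi}\tensor\rho''$, whence $\Uvarnames{Q_1}\psi\tensor\Uvarnames{Q_2}\psi\in E'$ by \autoref{lemma:pure}; this step never used entanglement and transfers verbatim. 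For $\subseteq$, I reuse the measurement argument of \autoref{claim:same.tr}, but run it only on product vectors $\phi=\phi^{(1)}\tensor\phi^{(2)}\in E'$: such $\phi$ give a separable initial state $\proj{\phi\tensor\psi_1\tensor\psi_2\tensor\basis{\xx_1}{\false}\tensor\basis{\xx_2}{\false}}$, so the distinguishing-measurement contradiction goes through in \autoref{def:rhl} exactly as it did for \autoref{def:qrhl.first}, and for a product vector the conclusion $\adj{\Uvarnames{Q_1}}\tilde\rho_1\Uvarnames{Q_1}=\adj{\Uvarnames{Q_2}}\tilde\rho_2\Uvarnames{Q_2}$ collapses to the statement that $\phi^{(1)}$ and $\phi^{(2)}$ agree up to phase, i.e.\ $\phi$ is diagonal. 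Restricting to product $\phi$ is exactly what separability permits, and it is also exactly what the product-vector reduction of the first paragraph needs.

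The remaining step, which I expect to be the real obstacle, is a \emph{lifting lemma}: from the fact that $E'$ and $W$ share the same product vectors over the split $Q_1\mid Q_2$, conclude that $E'\tensor\elltwov R$ and $W\tensor\elltwov R$ share the same product vectors over $\qu{V_1}\mid\qu{V_2}$. For this I would Schmidt-decompose (\autoref{lemma:schmidt}) a candidate $\psi_1\tensor\psi_2$ along the splits $Q_1\mid(\qu{V_1}\setminus Q_1)$ and $Q_2\mid(\qu{V_2}\setminus Q_2)$; the $Q$-side ranks are finite because $\typel Q$ is finite, so $\psi_1\tensor\psi_2$ decomposes along an orthonormal family of $R$-vectors with finitely many $\elltwov{Q_1Q_2}$-coefficients $\alpha_i\tensor\gamma_j$, each itself a product vector over $Q_1\mid Q_2$. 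A slice characterisation of $A\tensor\elltwov R$ (membership iff every partial inner product against an orthonormal basis of $\elltwov R$ lies in $A$) then reduces membership in $E'\tensor\elltwov R$, respectively $W\tensor\elltwov R$, to membership of those finitely many product vectors in $E'$, respectively $W$, which agree by the previous paragraph. The delicate point is verifying this slice characterisation when a finite-dimensional closed subspace is tensored with the possibly infinite-dimensional space $\elltwov R$, and handling the interaction between $Q_i$ and the remaining registers $\qu{V_i}\setminus Q_i$; I suspect it is exactly this infinite-dimensional bookkeeping that the authors did not push to completion, which is why the statement is recorded as a conjecture rather than a lemma.
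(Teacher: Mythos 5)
The paper contains no proof of this statement: it is explicitly left as an open problem, followed only by a post-mortem of why the proof of \autoref{lemma:unique.qeq} does not transfer to \autoref{def:rhl}. So there is no ``paper's own proof'' to compare against; I can only assess your argument on its merits, and having checked each step I do not find a gap --- your sketch appears to resolve the conjecture affirmatively, essentially by completing the path the authors themselves outline after the conjecture. The three ingredients you supply all check out. (i) The characterization of $\approx$ as ``containing the same product vectors'' is correct: for a separable $\rho\in\traceposcq{V_1V_2}$ each quantum block $\rho_m$ is again separable, so $\suppo\rho_m$ is the closed span of product vectors, and containment of a closed span in a closed subspace is equivalent to containment of the spanning product vectors; conversely every product vector yields a separable test state. (ii) Your separable rerun of the two claims is exactly the ``variant of \autoref{lemma:conj-normal}'' the authors request, and that variant is in fact trivial: in their matrix language, $S\subseteq X$ together with $X\cap R\subseteq C$ already forces $X\cap R=S\cap R$, because a rank-one conjugate-normal matrix $x\adj{y}$ must have $x$ colinear with $\bar y$ and is therefore symmetric --- equivalently, as you observe, a product vector whose two $Q$-marginals agree under the canonical identification is forced to be diagonal. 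The paper appears not to have noticed this. (iii) The lifting lemma, which you flag as the real obstacle, also goes through: writing $\psi_1\tensor\psi_2=\sum_{ij}\lambda_i\mu_j(\alpha_i\tensor\gamma_j)\tensor(a_i\tensor b_j)$ via the two Schmidt decompositions (finitely many terms, since $\typel Q$ is finite), the identity $A\tensor\elltwov R=\{v:(\id\tensor\adj r)v\in A\text{ for all }r\}$ holds for any closed subspace $A$ regardless of the dimension of $\elltwov R$ (one direction by boundedness of the slice map and closedness of $A$, the other by summing the slices against an orthonormal product basis), and the only nonzero slices of $\psi_1\tensor\psi_2$ are the finitely many $Q_1{\mid}Q_2$-product vectors $\lambda_i\mu_j\,\alpha_i\tensor\gamma_j$. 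Hence membership of a $V_1{\mid}V_2$-product vector in $E'\tensor\elltwov R$ versus $W\tensor\elltwov R$ is decided by the same finite family of $Q_1{\mid}Q_2$-product vectors, which agree by (ii) together with the $\Qinit Qe$ argument.

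Two small points to tighten in a write-up. First, the measurement step presupposes a spare classical Boolean variable in $V$; this is the same tacit assumption already made in \autoref{lemma:unique.qeq}, but it should be stated. Second, your citation of \autoref{lemma:pure} in the $\supseteq$ step is misplaced: that lemma only sets up the pure-state form of the judgment, while the conclusion $\Uvarnames{Q_1}\psi\tensor\Uvarnames{Q_2}\psi\in E'$ comes from the purity of the $Q_1$- and $Q_2$-marginals of the witness (which forces it to factor) plus one application of the slice characterization. Subject to a careful write-up of these details, I believe you have a proof rather than a proposal, which would settle the paper's open question.
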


We do not know whether this conjecture holds. If we try to use the
same proof as for \autoref{lemma:unique.qeq}, we run into the problem
that \autoref{claim:same.tr} holds only for separable $\phi$.
This in turns means that we need a stronger version of
\autoref{lemma:unique.qeq1} where the second condition holds only for
separable $\phi\in E$.
And this in turn means that \autoref{claim.X.cn} does not hold,
instead we only have that every $A\in X$
\emph{of rank 1} is conjugate-normal.
(Because $A_\phi$ has rank 1 iff $\phi$ is separable.)
And then we cannot apply
\autoref{lemma:conj-normal} any more. (A variant of
\autoref{lemma:conj-normal} that requires $X\cap R\subseteq C$
instead of $X\subseteq C$,
and that shows $X\cap R=S\cap R$
instead of $X=S$
would do the trick (where $R$ is the set of rank 1 matrices).)

We leave a proof or counterexample for \autoref{conj:unique.qeq} as an open problem.

\subsection{On predicates as subspaces}
\label{sec:pred.subspace}

In this section, we formally prove the claim made in
\autoref{sec:intro.qrhl} that predicates described by subspaces are
the only predicates that are closed under linear combinations and
linear decompositions (at least in the finite dimensional case):

\begin{lemma}
  Let $X$
  be finite.  Let $A\subseteq\tracepos X$.
  Assume that for all $\rho_1,\rho_2\in\tracepos X$,
  and $p_1,p_2\geq 0$,
  we have
  \begin{equation}
    p_1\rho_1+p_2\rho_2\in A\iff \rho_1,\rho_2\in A.
    \label{eq:decomp.A}
  \end{equation}
  
  Then there exists a subspace $S\subseteq\elltwo X$
  such that $A=\{\rho\in\tracepos X:\suppo\rho\subseteq S\}$.
\end{lemma}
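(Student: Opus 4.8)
The plan is to let $S$ be the smallest subspace containing all supports of elements of $A$, i.e.\ $S:=\SPAN\bigcup_{\sigma\in A}\suppo\sigma$, and to show $A=\{\rho\in\tracepos X:\suppo\rho\subseteq S\}$. I would use the hypothesis in its two natural halves: closure, stating that $A$ is a convex cone ($\rho_1,\rho_2\in A$ and $p_1,p_2\ge0$ imply $p_1\rho_1+p_2\rho_2\in A$), and decomposition, stating that $p_1\rho_1+p_2\rho_2\in A$ with $p_1,p_2>0$ implies $\rho_1,\rho_2\in A$ (the strictly-positive reading of the nontrivial direction, as in the motivating discussion). We may assume $A\neq\varnothing$, the empty case being degenerate; then scaling any element by $0$ gives $0\in A$ by the cone property.

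The central step is a hereditary property: if $\sigma\in A$ and $\rho\in\tracepos X$ satisfies $\suppo\rho\subseteq\suppo\sigma$, then $\rho\in A$. To establish it I would invoke the standard finite-dimensional fact that $\suppo\rho\subseteq\suppo\sigma$ forces $\rho\le c\,\sigma$ for some $c>0$: on $\suppo\sigma$ the operator $\sigma$ is bounded below by its smallest nonzero eigenvalue $\lambda>0$, while $\rho\le\norm\rho\cdot\id$ and $\rho$ vanishes off $\suppo\sigma$, so $c:=\norm\rho/\lambda$ works. Then $c\,\sigma\in A$ by the cone property, and $c\,\sigma=1\cdot\rho+1\cdot(c\,\sigma-\rho)$ with both summands in $\tracepos X$, so the decomposition direction (with $p_1=p_2=1$) yields $\rho\in A$.

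Next I would exhibit a single element of maximal support. Since $X$ is finite, $\elltwo X$ is finite-dimensional, so the subspaces $\suppo\sigma$ for $\sigma\in A$ have bounded dimension; choosing finitely many $\sigma_1,\dots,\sigma_k\in A$ whose supports span $S$ and setting $\sigma^*:=\sigma_1+\dots+\sigma_k$, the cone property gives $\sigma^*\in A$. For positive operators the support of a sum equals the sum of supports, so $\suppo\sigma^*=\suppo\sigma_1+\dots+\suppo\sigma_k=S$. (The support-of-sum identity follows from $\ker(\sigma_1+\sigma_2)=\ker\sigma_1\cap\ker\sigma_2$ for positive operators, by positivity of each quadratic form, on taking orthogonal complements.)

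Finally I would combine these: if $\rho\in A$ then $\suppo\rho\subseteq S$ by definition of $S$; conversely, if $\suppo\rho\subseteq S=\suppo\sigma^*$, the hereditary property applied with $\sigma=\sigma^*$ gives $\rho\in A$. Hence $A=\{\rho\in\tracepos X:\suppo\rho\subseteq S\}$. I expect the main obstacle to be the hereditary step — specifically, recognizing that support inclusion among positive operators is precisely an operator domination $\rho\le c\,\sigma$, which is exactly what lets one present $c\,\sigma$ as a genuine decomposition with strictly positive coefficients and thereby apply the second half of the hypothesis; the remaining construction of $\sigma^*$ and the final identification are then routine finite-dimensional linear algebra.
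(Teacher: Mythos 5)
Your proof is correct, and it reaches the same $S$ (the span of all supports of elements of $A$) but by a different middle argument than the paper. The paper works entirely with rank-one operators: it shows $\proj\psi\in A$ for every $\psi\in S$ (peeling $\proj\psi$ off some $\rho\in A$ whose support contains $\psi$), proves that $S$ is closed under addition via the identity $\proj{\psi_1+\psi_2}+\proj{\psi_1-\psi_2}=2\proj{\psi_1}+2\proj{\psi_2}$ together with the decomposition half of the hypothesis, and then reassembles any $\rho$ with $\suppo\rho\subseteq S$ as a finite sum of rank-one projectors each lying in $A$. You instead work at the level of operator domination: your hereditary lemma ($\suppo\rho\subseteq\suppo\sigma$ and $\sigma\in A$ imply $\rho\in A$, via $\rho\le c\,\sigma$) and the single maximal-support element $\sigma^*$ let you skip both the parallelogram trick and the separate verification that $S$ is a subspace, since $S=\suppo{\sigma^*}$ by construction. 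The two routes rest on the same underlying finite-dimensional fact — support inclusion among positive operators is equivalent to domination up to a scalar — which the paper applies only in the rank-one case ($\psi\in\suppo\rho$ gives $\rho=\alpha\proj\psi+\rho'$) while you apply it in full generality; your version arguably makes the role of that fact more transparent, at the cost of invoking the support-of-sums identity that the paper's rank-one bookkeeping avoids. Your explicit restriction of the decomposition direction to $p_1,p_2>0$ is also the right reading: taken literally with $p_i=0$ the hypothesis degenerates, and the paper's own proof only ever uses strictly positive coefficients.
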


\begin{proof}\stepcounter{claimstep}
  Let $S:=\bigcup_{\rho\in A}\suppo\rho\subseteq\elltwo X$.

  \begin{claim}\label{claim:mult}
    If $\psi\in S$,
    then $\alpha\psi\in S$
    for all $\alpha\in \setC$.
  \end{claim}
  
  \begin{claimproof}
    Immediate since $\suppo\rho$
    is a subspace for all $\rho$.
  \end{claimproof}

  \begin{claim}\label{claim:add}
    If $\psi_1,\psi_2\in S$, then $\psi_1+\psi_2\in S$.
  \end{claim}
  
  \begin{claimproof}
    Since $\psi_1\in S$,
    we have $\psi_1\in\suppo\rho_1$
    for some $\rho_1\in A$.
    Hence $\rho_1=\alpha\,\proj{\psi_1}+\rho_1'$
    for some  $\alpha\in\setC$, $\rho_1'\in\tracepos X$.
    Thus $\proj{\psi_1}\in A$ by \eqref{eq:decomp.A}. Analogously $\proj{\psi_2}\in A$.
    Furthermore,
    \[
      \proj{\psi_1+\psi_2}+\proj{\psi_1-\psi_2}
      = 2\proj{\psi_1}+2\proj{\psi_2}
      \eqrefrel{eq:decomp.A}\in A.
    \]
    Thus (again with \eqref{eq:decomp.A}), we have
    $  \proj{\psi_1+\psi_2}\in A$. Thus $\psi_1+\psi_2\in\suppo\proj{\psi_1+\psi_2}\in S$.
  \end{claimproof}

  \begin{claim}\label{claim:subspace}
    $S$ is a subspace.
  \end{claim}
  
  \begin{claimproof}
    Since $X$ is finite, this follows from \autoref{claim:mult} and \autoref{claim:add}.
  \end{claimproof}

  \begin{claim}\label{claim:AB}
    $A\subseteq \{\rho\in\tracepos X:\suppo\rho\subseteq S\} =: B$.
  \end{claim}

  \begin{claimproof}
    For any $\rho\in A$,
    we have $\suppo\rho\subseteq S$ by definition of $S$. Thus $\rho\in B$.
  \end{claimproof}

  \begin{claim}\label{claim:BA}
    $B\subseteq A$.
  \end{claim}

  \begin{claimproof}
    Fix some $\rho\in B$.
    Then $\suppo\rho\subseteq S$.
    Thus we can write $\rho=\sum_i \proj{\psi_i}$
    for finitely many $\psi_i\in S$.
    (Finitely many, since $X$
    is finite.)  Since $\psi_i\in S$,
    we have that there are $\rho_i\in A$
    with $\psi_i\in\suppo\rho_i$.
    Thus $\rho_i=\proj{\psi_i}+\rho_i'$
    for some $\rho_i'\in\tracepos X$.
    Hence $\proj{\psi_i}\in A$
    by~\eqref{eq:decomp.A}.
    Hence $\rho=\sum_i \proj{\psi_i}\in A$
    by~\eqref{eq:decomp.A}.
  \end{claimproof}
  
  The lemma follows from \autoref{claim:subspace}, \autoref{claim:AB},
  and \autoref{claim:BA}.
\end{proof}

\subsection{Non-existence of quantum equality for uniform qRHL}
\label{sec:qeq.uniform}

\begin{lemma}
  Let $\qq\in V$
  be a quantum variable with $\abs{\typev\qq}=3$.
  Assume that there is a classical variable $\xx\in V$
  with $\abs{\typev\xx}=1$.
  Let $E\subseteq\elltwov{\qu{V_1}\qu{V_2}}$
  be a $\qq_1\qq_2$-local predicate. Assume that the following rule holds:
  \[
    \inferrule{
      \bc\text{ is $\qq$-local} \\
    }{
      \rhlunif
      {E}
      \bc\bc
      {E}
    }
  \]
  Then $E=0$.
\end{lemma}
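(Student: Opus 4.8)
The goal is to show that for the uniform definition of qRHL (\autoref{def:qrhl.uniform}), no nonzero notion of quantum equality $E$ can satisfy the equality-preservation rule when $\abs{\typev\qq}=3$. The intuition, already sketched in the discussion after \autoref{def:qrhl.uniform}, is that a witness $\calE$ must be a (linear) cq-superoperator, whereas preserving equality under a non-total measurement forces a renormalization $\rho'=\frac{\rho_1'\otimes\rho_2'}{\tr\rho_1'}$ that is manifestly nonlinear. The plan is to turn this tension into a contradiction by testing the single witness $\calE$ against several different initial states and showing the resulting linear constraints are inconsistent.

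First I would assume $E\neq0$ for contradiction. By $\qq_1\qq_2$-locality, write $E=E'\otimes\elltwov{\qu{V_1}\qu{V_2}\setminus\qq_1\qq_2}$ with $E'\neq0$ a subspace of $\elltwov{\qq_1\qq_2}$. As in \autoref{claim:phiphiE} of \autoref{lemma:unique.qeq}, applying the rule to the program $\bc:=\Qinit{\qq}{\psi}$ (which is $\qq$-local) shows that the ``diagonal'' vectors $\Uvarnames{\qq_1}\psi\otimes\Uvarnames{\qq_2}\psi$ lie in $E'$ for every unit $\psi\in\elltwo{\typev\qq}$; and as in \autoref{claim:same.tr}, testing with non-total measurements shows that any $\phi\in E'$ has equal (suitably transported) marginals, so by \autoref{lemma:unique.qeq1} we get $E'=(\qq_1\quanteq\qq_2)$. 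Thus $E=(\qq_1\quanteq\qq_2)$, and it remains to derive a contradiction from the \emph{existence of a single cq-superoperator} $\calE$ witnessing $\rhlunif{E}\bc\bc{E}$ for a well-chosen $\bc$.

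Next I would fix a rank-one projector $P$ on $\elltwo{\typev\qq}$ (using $\abs{\typev\qq}=3$ to have enough room for several distinct unit vectors) and take $\bc:=\Qapply{P}{\qq}$-style behaviour, i.e.\ a non-total projective measurement $M=\{P,\id-P\}$ stored in a fresh classical variable, restricted to the $P$-branch — concretely a $\qq$-local program whose denotation sends $\rho_\qq\mapsto P\rho_\qq P$ (up to the classical outcome bookkeeping). For each unit vector $\psi$ with $\proj\psi$ satisfying $E$ on the diagonal, the initial state $\rho^\psi:=\proj{\Uvarnames{\qq_1}\psi\otimes\Uvarnames{\qq_2}\psi}$ satisfies $E$, has marginals $\rho_1=\rho_2=\proj\psi$ (transported), and the programs produce $\rho_1'=\rho_2'=\proj{P\psi}$, with $\tr\rho_i'=\norm{P\psi}^2=:p_\psi$. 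The witness must output $\rho'=\calE(\rho^\psi)$ with marginals $\rho_1',\rho_2'$ and $\suppo\rho'\subseteq E$. Since $\rho'$ still lies in $(\qq_1\quanteq\qq_2)$ with both marginals proportional to $\proj{P\psi}$, the \emph{only} such state is $\rho'=p_\psi^{-1}\,\rho_1'\otimes\rho_2'=p_\psi\,\proj{\Uvarnames{\qq_1}P\psi\otimes\Uvarnames{\qq_2}P\psi}$ (here I would prove uniqueness directly using \autoref{lemma:quanteq} to pin down the form of states in $E$ with given marginals). So $\calE(\rho^\psi)=p_\psi\,\proj{(\ldots)P\psi\otimes(\ldots)P\psi}$.

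The contradiction then comes from linearity of $\calE$: I would choose two unit vectors $\psi_a,\psi_b$ and compare $\calE$ on $\rho^{\psi_a}$, $\rho^{\psi_b}$, and on an initial state built from $\psi_a+\psi_b$ (suitably normalized), all of which satisfy $E$. The outputs are forced to be $p_{\psi}$-weighted rank-one projectors whose dependence on $\psi$ is quadratic (through $p_\psi=\norm{P\psi}^2$ \emph{and} through the vector $P\psi\otimes P\psi$, which is quadratic in $\psi$), so the map $\proj{\psi\otimes\psi}\mapsto\calE(\rho^\psi)$ is genuinely nonlinear; evaluating the linear extension of $\calE$ on the mixture $\rho^{\psi_a}+\rho^{\psi_b}$ versus what equality-preservation demands on the corresponding diagonal state yields two different operators. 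The main obstacle — and where I would spend the most care — is the uniqueness step: showing that given the two marginals $\propto\proj{P\psi}$ there is a \emph{unique} separable-or-not state in $(\qq_1\quanteq\qq_2)$ with exactly those (non-normalized) marginals, so that $\calE$'s value is fully determined and the nonlinearity is forced rather than merely one admissible choice among many. Once uniqueness is in hand, the algebra comparing the three test states is routine, and the inconsistency gives $E=0$, contradicting $E\neq0$.
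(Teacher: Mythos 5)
Your high-level strategy matches the paper's: assume $E\neq0$, use $\Qinit{\qq}{\psi}$ to show every diagonal state $\Uvarnames{\qq_1}\psi\otimes\Uvarnames{\qq_2}\psi$ lies in $E'$, then exploit the fact that the forced action of the witness $\calE$ on diagonal states, $\proj{\idx1\psi}\otimes\proj{\idx2\psi}\mapsto\proj{\idx1 P\psi}\otimes\proj{\idx2 P\psi}/\norm{P\psi}^2$ (the normalization is forced by matching marginals, and uniqueness of the output given two rank-one marginals follows from positivity alone, so your worry about that step is unfounded), is incompatible with linearity. Your detour through $E'=(\qq_1\quanteq\qq_2)$ via \autoref{lemma:unique.qeq1} is unnecessary -- the paper only needs the diagonal states to lie in $E'$.

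However, the concrete contradiction mechanism you propose does not work, for two reasons. First, with a \emph{rank-one} projector $P=\proj\chi$ the forced map is $\proj{\idx1\psi}\otimes\proj{\idx2\psi}\mapsto\abs{\langle\chi|\psi\rangle}^2\,\proj{\idx1\chi}\otimes\proj{\idx2\chi}$, and whenever $\sum_i\alpha_i\proj{\idx1\psi_i}\otimes\proj{\idx2\psi_i}=0$ one gets (by tracing out one factor) $\sum_i\alpha_i\proj{\psi_i}=0$ and hence $\sum_i\alpha_i\abs{\langle\chi|\psi_i\rangle}^2=0$; so the image combination \emph{also} vanishes and linearity is never violated. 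A projector of rank $\geq2$ (the paper uses $\mathrm{diag}(1,1,0)$ on $\setC^3$) is essential. Second, linearity of $\calE$ only constrains its values on \emph{linearly dependent} inputs; your three test states $\rho^{\psi_a}$, $\rho^{\psi_b}$, $\rho^{\psi_a+\psi_b}$ are linearly independent rank-one projectors, so comparing $\calE(\rho^{\psi_a})+\calE(\rho^{\psi_b})$ with $\calE(\rho^{\psi_a+\psi_b})$ yields no contradiction (indeed, for any \emph{positive} mixture of diagonal states all constraints are mutually consistent). What is actually needed is a genuine vanishing linear combination $\sum_i\alpha_i\,\proj{\idx1\psi_i}\otimes\proj{\idx2\psi_i}=0$ with coefficients of mixed sign whose image under the forced map is nonzero; the paper exhibits such a combination explicitly with six vectors in $\setC^3$ and verifies the non-vanishing by direct computation. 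This existence-and-verification step is the real content of the proof and is missing from your proposal; "the algebra comparing the three test states is routine" is precisely the point at which the argument breaks down.
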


\begin{proof}\stepcounter{claimstep}
  Assume for contradiction that $E\neq0$.
  
  % Since $\abs{\typev\qq}=3$,
  % we can identify $\elltwov\qq$
  % with $\setC^3$,
  % and thus write elements of $\elltwov\qq$
  % as three-component vectors, and elements of $\boundedv\qq$
  % as $3\times 3$
  % matrices. Similarly, we identify $\elltwov{\qq_1\qq_2}$
  % with $\setC^{9}$. In particular, $E\subseteq\setC^9$.

  Given a vector $\psi\in\elltwov\qq$,
  we use the shorthand $\idx i\psi_i:=\Urename{\sigma_i}\psi$
  where $\sigma_i$
  maps $\qq$
  to $\qq_i$.
  That is, $\idx1\psi,\idx2\psi$
  are the same vector as $\psi$,
  except in $\elltwov{\qq_1},\elltwov{\qq_2}$, respectively.

  Since $E$
  is $\qq_1\qq_2$-local,
  there is an $E'\subseteq\elltwov{\qq_1\qq_2}$
  such that $E=E'\otimes\elltwov{V_1V_2\setminus\qq_1\qq_2}$.
  
  \begin{claim}\label{claim:psipsiE1}
    For all $\psi\in\elltwov\qq$,
    we have $\proj{\idx1\psi}\otimes\proj{\idx2\psi}\in E'$.
  \end{claim}

  \begin{claimproof}
    Without loss of generality, $\norm\psi=1$.
    
    Let $\bc_\psi:=(\Qinit\qq\psi)$.
    By assumption of the lemma, we have
    $\rhlunif E{\bc_\psi}{\bc_\psi} E$.
    Let $\calE$
    be a witness for this judgment.
    Since $E\neq 0$,
    there is a non-zero $\gamma\in E$.
    Let $\rho:=\proj\gamma$.
    Then $\rho$
    satisfies $E$.
    Let $\rho':=\calE(\rho)$.
    Then $\rho'$
    satisfies $E$
    and we have $\rho'_i=\denotc{\idx i\bc_\psi}(\rho_i)$
    where $\rho_1':=\partr{V_1}{V_2}\rho'$,
    $\rho_2':=\partr{V_2}{V_1}\rho'$,
    $\rho_1:=\partr{V_1}{V_2}\rho$,
    $\rho_2:=\partr{V_2}{V_1}\rho$.

    By definition of $\bc_\psi$,
    we have that
    $\rho_i'=\denotc{\idx i\bc_\psi}(\rho_i)=\proj{\idx
      i\psi}\otimes\rho_i^*$ for some non-zero $\rho_i^*$.
    Thus $\partr{V_1}{V_2}\rho'=\proj{\idx1\psi}\otimes\rho_1^*$
    and $\partr{V_2}{V_1}\rho'=\proj{\idx2\psi}\otimes\rho_2^*$.
    Hence
    $\rho'=\proj{\idx1\psi}\otimes\proj{\idx2\psi}\otimes\rho_{12}^*$
    for some nonzero positive $\rho_{12}^*$.
    Fix a non-zero $\phi\in\suppo(\rho_1^*\otimes\rho_2^*)$.
    Then
    $\idx1\psi\otimes\idx2\psi\otimes\phi\in\suppd\rho'\subseteq E$.
    Hence $\idx1\psi\otimes\idx2\psi\in E'$
  \end{claimproof}
  
  We define vectors $\tilde \psi_i\in \elltwov\qq$:
  \[
    \tilde \psi_1:={\footnotesize\begin{pmatrix}0 \\ 1 \\ 2\end{pmatrix}},\
    \tilde \psi_2:={\footnotesize\begin{pmatrix}2 \\ 1 \\ 0\end{pmatrix}},\
    \tilde \psi_3:={\footnotesize\begin{pmatrix}1 \\ 1 \\ 1\end{pmatrix}},\
    \tilde \psi_4:={\footnotesize\begin{pmatrix}1 \\ 2 \\ 3\end{pmatrix}},\
    \tilde \psi_5:={\footnotesize\begin{pmatrix}3 \\ 2 \\ 1\end{pmatrix}},\
    \tilde \psi_6:={\footnotesize\begin{pmatrix}3 \\ 4 \\ 5\end{pmatrix}}.
  \]
  (Here we identify $\elltwov\qq$ with $\setC^3$.)
  Let $\psi_i:=\tilde \psi_i/\norm{\psi_i}$.
  Let
  \[
    \alpha_{1} := 25,\
    \alpha_{2} := -15,\
    \alpha_{3} := -162,\
    \alpha_{4} := -147,\
    \alpha_{5} := 49,\
    \alpha_{6} := 250
    \qquad\text{and}\qquad
    P :=
    {\footnotesize\begin{pmatrix}1&0&0\\
      0&1&0\\
      0&0&0
    \end{pmatrix}}.
  \]

  \begin{claim}\label{claim:calc}
    $\sum_i\alpha_i\, \proj{\idx1 \psi_i}\otimes\proj{\idx2\psi_i} = 0$
    and
    $\sum_i\alpha_i\, \proj{\idx1 P\psi_i}\otimes\proj{\idx2 P\psi_i} / \norm{P\psi_i}^2 \neq 0$.
  \end{claim}

  \begin{claimproof}
    This is shown by explicit calculation. Sage \cite{sage} code for performing
    the calculation can be found here: \sagelink{no-equality}{\texttt\SAGECELLLINKTINY}.
  \end{claimproof}

  Let $M\in\Meas{\typev{\xx}}{\typev{\qq}}$
  be the measurement with $M(z)=P$ for $z\in\typev\xx$.
  (Recall that $\typev\xx$
  has only one element.) Let $\bc_M:=(\Qmeasure\xx\qq M)$.
  Note that $\bc_M$
  is $\qq$-local,
  even though $\xx$
  occurs in $\bc_M$.
  This is because $\xx$
  can take only one possible value, so $\xx$
  is not actually modified by $\bc_M$.
  By assumption of the lemma, $\rhlunif E{\bc_M}{\bc_M} E$. Let $\calE$ be a witness for 
  $\rhlunif E{\bc_M}{\bc_M} E$.

  Let $\phi\in\elltwov{V_1V_2\setminus\qq_1\qq_2}$ be an arbitrary vector with $\norm\phi=1$.

  \begin{claim}\label{claim:epsi}
    For any non-zero $\psi\in\elltwov\qq$, we have
    \[
      \partr{\qq_1\qq_2}{V_1V_2\setminus\qq_1\qq_2} \calE\pb\paren{\proj{\idx1 \psi}\otimes\proj{\idx2\psi}\otimes\proj{\phi}}
      = 
      \proj{\idx1 P\psi}\otimes\proj{\idx2 P\psi} / \norm{P\psi}.
    \]
  \end{claim}
  
  \begin{claimproof}
    Let $\rho^*$ denote the lhs of the claim.
    
    By \autoref{claim:psipsiE1},
    $\proj{\idx1 \psi}\otimes\proj{\idx2\psi}\in E'$,
    thus
    $\rho:=\proj{\idx1 \psi}\otimes\proj{\idx2\psi}\otimes\proj{\phi}\in
    E$.
    Since $\calE$
    is a witness for $\rhlunif E{\bc_M}{\bc_M} E$, this implies that
    $\partr{V_1}{V_2}\calE(\rho) = \denotc{\idx1{\bc_M}}(\partr{V_1}{V_2}\rho)$.
    Thus
    \begin{align*}
      \partr{\qq_1}{\qq_2}\rho^*
      &=
        \partr{\qq_1}{V_1\setminus \qq_1}\partr{V_1}{V_2}\calE(\rho)
        =
        \partr{\qq_1}{V_1\setminus \qq_1}  \denotc{\idx1{\bc_M}}(\partr{V_1}{V_2}\rho)
      \\&
          =
        \partr{\qq_1}{V_1\setminus \qq_1} \denotc{\idx1{\bc_M}}\pb\paren{\proj{\idx1\psi}\otimes\partr{V_1\setminus\qq_1}{V_2\setminus\qq_2}\proj\phi}
      \\&
          \starrel=
          \partr{\qq_1}{V_1\setminus \qq_1} \pb\paren{
            \proj{\idx1 P\psi}\otimes\partr{V_1\setminus\qq_1}{V_2\setminus\qq_2}\proj\phi
          }
%      \\&
          =
           \proj{\idx1 P\psi}.
    \end{align*}
    Here $(*)$ follows from the semantics of $\Qmeasure\xx\qq M$.
    
    Analogously, $\partr{\qq_2}{\qq_1}\rho^*=\proj{\idx 2 P\psi}$. Thus
    $\rho^*=\proj{\idx1 P\psi}\otimes \proj{\idx2 P\psi}/\norm{P\psi}^2$.
  \end{claimproof}

  Finally, we have
  \begin{align*}
    0 &= \partr{\qq_1\qq_2}{V_1V_2\setminus\qq_1\qq_2} \calE(0)
        \quad \txtrel{\autoref{claim:calc}}=\quad 
        \partr{\qq_1\qq_2}{V_1V_2\setminus\qq_1\qq_2}
        \calE\pB\paren{\sum\nolimits_i\alpha_i\, \proj{\idx1 \psi_i}\otimes\proj{\idx2\psi_i}\otimes\proj\phi}
    \\&
      = 
        \sum\nolimits_i\alpha_i\,
        \partr{\qq_1\qq_2}{V_1V_2\setminus\qq_1\qq_2}
        \calE\pB\paren{\proj{\idx1 \psi_i}\otimes\proj{\idx2\psi_i}\otimes\proj\phi}
    \\&
        \txtrel{\autoref{claim:epsi}}=\quad
        \sum\nolimits_i\alpha_i\, \proj{\idx1P\psi_i}\otimes\proj{\idx2P\psi_i} / \norm{P\psi_i}^2
        \quad \txtrel{\autoref{claim:calc}}\neq\quad  0.
  \end{align*}
  Thus we have $0\neq 0$, a contradiction. Hence our initial assumption $E\neq0$ was false.
\end{proof}

\makeatletter
\let\oldsection\section
\def\nonumsectoc#1{\oldsection*{#1}\addcontentsline{toc}{section}{#1}}
\def\section{\@ifstar\nonumsectoc\oldsection}

\renewcommand\symbolindexentry[4]{\noindent\hbox{\hbox to 2in{$#2$\hfill}\parbox[t]{3.5in}{#3}\hbox to 1cm{\hfill #4}}\\[2pt]}

 \printsymbolindex

 \printindex  

 \listofrule

 \printbibliography

\end{document}

% Local Variables:
% TeX-PDF-mode: t
% End:

%  LocalWords:  Readonly pRHL qRHL powerset postcondition
% (fset 'p [?\C-s ?\( left ?\{ ?\C-d left ?\C-  ?\C-\M-n left ?\C-d ?\} ?\C-x ?\C-x ?\\ ?p ?r backspace ?a ?r ?e ?n left left left left left left])